%% file: QPA.tex
\documentclass[aps,pra,nofootinbib,reprint,superscriptaddress]{revtex4-2}
\usepackage{amsmath,amssymb,mathtools,amsthm,mathrsfs,physics,amsfonts,dsfont}
\usepackage[pass]{geometry}
\usepackage{pdfrender} %

\usepackage{bibunits}

\usepackage{graphicx}
\usepackage{tikz}
\usetikzlibrary{quantikz2}

\usepackage{algorithm,algpseudocode}

\usepackage{tabularx,siunitx,booktabs}

\usepackage{ytableau}
\usepackage[enableskew]{youngtab}
\usepackage{xparse}
\usepackage{caption,subcaption}
\usepackage{hyperref}
\hypersetup{
    colorlinks=true,
    linkcolor=darklavender,
    citecolor=darklavender,
    urlcolor=darklavender,
}
\usepackage[nameinlink]{cleveref}

\usepackage{enumitem,comment,adjustbox,extarrows}
\usepackage{float} %
\usepackage[normalem]{ulem}

\usepackage{etoolbox}
\usepackage{footmisc}

\makeatletter
\renewcommand\listoffigures{\@starttoc{lof}}
\renewcommand\listoftables{\@starttoc{lot}}

\makeatother

\newtheorem{theorem}{Theorem}[section]
\theoremstyle{definition}
\newtheorem{definition}[theorem]{Definition}
\newtheorem{lemma}[theorem]{Lemma}
\newtheorem{corollary}[theorem]{Corollary}

\newtheorem{remark}[theorem]{Remark}

\newtheorem{example}[theorem]{Example}

\newcommand{\beginsupplement}{
  \clearpage
  \onecolumngrid
  \newgeometry{margin=1in}
  \setcounter{page}{1}
  \setcounter{section}{0}
\renewcommand{\thesection}{S\arabic{section}}
\setcounter{theorem}{0}
\renewcommand{\thetheorem}{\thesection.\arabic{theorem}}
\renewcommand{\theHtheorem}{\thesection.\arabic{theorem}}
  \setcounter{equation}{0}
  \renewcommand{\theequation}{S\arabic{equation}}
  \setcounter{table}{0}
  \renewcommand{\thetable}{S\arabic{table}}
  \setcounter{figure}{0}
  \renewcommand{\thefigure}{S\arabic{figure}}
 \renewcommand{\bibnumfmt}[1]{[S##1]}
\renewcommand{\citenumfont}[1]{S##1}
}

\definecolor{celeste}{rgb}{0.09, 0.81, 0.9}
\definecolor{emerald}{rgb}{0, 0.677778, 0.555556}

\crefname{lemma}{Lemma}{Lemmas}
\crefname{proposition}{Proposition}{Propositions}
\crefname{definition}{Definition}{Definitions}
\crefname{theorem}{Theorem}{Theorems}
\crefname{conjecture}{Conjecture}{Conjectures}
\crefname{corollary}{Corollary}{Corollaries}
\crefname{claim}{Claim}{Claims}
\crefname{section}{Section}{Sections}
\crefname{appendix}{Appendix}{Appendices}
\crefname{figure}{Fig.}{Figs.}
\crefname{equation}{Eq.}{Eqs.}
\crefname{table}{Table}{Tables}
\theoremstyle{definition}

\input{notations.tex}

\begin{document}

\title{Quantum Purity Amplification for Arbitrary Eigenstates and Multiple Outputs}

\author{Zhaoyi Li}
\thanks{These authors contributed equally.\\\hspace*{2.4em}Emails: \texttt{ladmon@mit.edu}, \texttt{edmt@math.ku.dk}.}
\affiliation{Department of Physics, Massachusetts Institute of Technology, Cambridge MA 02139, USA}

\author{Elias Theil}
\thanks{These authors contributed equally.\\\hspace*{2.4em}Emails: \texttt{ladmon@mit.edu}, \texttt{edmt@math.ku.dk}.}
\affiliation{Centre for the Mathematics of Quantum Theory, University of Copenhagen, 2100 Copenhagen, Denmark}

\author{Aram W. Harrow}
\affiliation{Department of Physics, Massachusetts Institute of Technology, Cambridge MA 02139, USA}

\author{Isaac Chuang}
\affiliation{Department of Physics, Massachusetts Institute of Technology, Cambridge MA 02139, USA}

\date{\today}

\begin{abstract}
Quantum purity amplification (QPA) is the task of coherently transforming $n$ copies of a mixed state into high-fidelity copies of a chosen eigenstate.
We solve QPA in the general setting of $n$ input copies, $m$ output copies, arbitrary target eigenstates, arbitrary local dimension $d$, and generic input spectra.
We characterize the optimal channel and derive its all-site and one-site performance laws across output regimes.
For the asymptotic analysis, we use a path-graph parametrization to show that, when the target eigenvalue has a constant spectral gap $D_{k,\mathrm{min}}$, achieving all-site error $\varepsilon$ requires a number of input copies independent of $d$ and scaling as $O(m/(\varepsilon D_{k,\mathrm{min}}^2))$. When $m/n$ approaches a constant, the performance exhibits phase-like regimes, which we characterize explicitly.
For the nonasymptotic analysis, we develop a theory of generalized Young diagrams that yields tight sample complexity bounds and provides the first dimension-uniform guarantee for optimal QPA.
We also provide asymptotically efficient implementations of the optimal protocol.
Together, these results establish QPA as a rigorous example of coherent quantum information processing with dimension-uniform sample complexity, supplying the technical foundation for the coherent-incoherent separation developed in the companion work.
\end{abstract}

\maketitle
\begin{bibunit}[apsrev4-2]
\section{Introduction}

Physical quantum information processing is inevitably noisy, since any quantum system can couple to its environment, and repeated preparations generally produce a mixed state $\rho$ rather than a pure eigenstate.
Over the past three decades, an approach to noise suppression has emerged in which coherent processing of \(n\) independent copies exploits the collective structure of \(\rho^{\otimes n}\), complementing existing quantum error correction frameworks~\cite{G98,CDT09,LKH23}.
Early proposals along these lines include error symmetrization~\cite{P99,B97}, and soon after, representation theoretic tools were introduced to clarify the symmetry structure underlying multi-copy processing~\cite{CEM99,KW99}. Subsequent developments evolved into a broader toolbox, including SWAP-test-based constructions and density matrix exponentiation type strategies~\cite{F16,CFLL+23,YCHC+24,DGHM+25,BHPM25,KM25}.
The first optimality results were obtained in the qubit setting, but several generalizations have since been explored, including extensions to multiple output qubits~\cite{KW99} and, more recently, optimal qudit ($d$-dimensional quantum state) strategies brought under the quantum purity amplification (QPA) framework~\cite{LFIC24} via symmetry reduction and asymptotic analysis.

Despite this progress, three central gaps remain, including the absence of a fully general QPA theory and a sharp dimension-uniform sample-complexity separation between coherent QPA and its best incoherent (measurement-mediated) counterparts, as developed in the companion paper \emph{An Exponential Sample-Complexity Advantage for Coherent Quantum Inference}~\cite{CI26}.

First, the QPA formulation of Ref.~\cite{LFIC24} is intrinsically a single-output primitive, whereas many resource-state preparation tasks instead call for an \(n\!\to\! m\) coherent channel that produces multiple purified outputs. This is especially relevant for states prepared by digital or adiabatic simulation for later reuse, problem-specific states, and non-Clifford resource states~\cite{SS06,BGBW18,HHL09,LMR14,R25,DGHM+25}, where one would often expect to need multiple such copies rather than just one. We also prove that a naive strategy that partitions the \(n\) inputs into \(m\) batches and applies a single-output protocol to each batch is in general suboptimal, incurring a factor of \(\bigO{m^2}\) in sample complexity, whereas the optimal scaling is only \(\bigO{m}\). This makes it essential to characterize the tradeoff between yield and accuracy.

Second, a central advantage of optimal QPA over both full-scale quantum error correction and suboptimal QPA protocols lies in its reduced sample complexity~\cite{LFIC24}. To make this advantage precise, it is essential to determine the achievable fidelities sharply. Existing analyses either yield asymptotics in big-$O$ form without dimension-uniform control~\cite{LFIC24}, or provide dimension-uniform guarantees that are too weak to match the leading constants of the asymptotic regime~\cite{DGHM+25,GLLP+25}.

Finally, most available guarantees for general system sizes focus on amplifying the principal eigenstate, while many algorithmic settings such as spectral feature extraction~\cite{HB26} and excited-state preparation~\cite{MIY24} require coherent extraction of a specified eigencomponent that need not be dominant. Overall, the landscape remains only partially mapped, largely due to the mathematical complexity introduced by mixed symmetry structure.

We develop a complete theory of general QPA that closes all three gaps above. Our main results are:

\begin{enumerate}[leftmargin=*, itemindent=0pt, labelsep=0.5em]
\item \emph{Asymptotic optimality of general QPA.}
We extend the QPA task to the general setting of extracting \(m\) copies of the \(k\)-th input eigenstate~(\cref{sec:general_QPA}). We construct the optimal channel for the task via the overhang removal rule and derive its exact asymptotic behavior for nondegenerate spectra~(\cref{equ:intensive_asymptote,equ:extensive_asymptote,equ:one_site_asymptote}). We also prove that this optimality extends across a broad class of figures of merit, including one-site and all-site risks under various loss functions~(\cref{equ:all_site_risk,equ:one_site_risk,equ:infidelity_loss,equ:purified_loss,equ:bures_loss,equ:trace_loss,equ:cross_entropy_loss}).

\item \emph{Sharp nonasymptotic bounds.} We provide bounds on the performance of optimal QPA that are independent of the local dimension $d$ for arbitrary input spectra. These bounds are tight up to constants and capture the correct asymptotic scaling. They also characterize a more precise $n$ regime in which our QPA protocol is already optimal on most symmetry sectors, as shown in \cref{equ:utility_bound_non_asymptotic,equ:one_site_non_asymptotic}.

\item \emph{Exponential separation between coherent and incoherent protocols.} We derive a sample-complexity lower bound for incoherent QPA protocols based on measurement-mediated strategies that scales with $d$, and is therefore exponential in the local system size. Taken together, these results exhibit a superexponential separation between incoherent and coherent protocols in their sample-complexity dependence on the local system size, as discussed in more detail in the companion paper~\cite{CI26}.

\item \emph{Mathematical advances.} To derive the above results, we make three main mathematical advances. Due to the inherent symmetry of the QPA optimization problem, we decompose it into independent optimization of each symmetry sector and use concentration bounds for the Schur--Weyl (SW) distribution~\cite{OW16} to obtain overall performance guarantees from sector-wise results~(\cref{equ:SW_decomposition_input_state,thm:main_part_concentration_YD,equ:main_part_proof_outline_f_all_as_average}). At the sector-wise level, we: (i) introduce a novel parametrization of Weyl tableaux (WTs) via path graphs, which lets us derive the asymptotics of sector-wise fidelities~(\cref{thm:main_part_limit_Weyl_average_geometric_average,fig:main_part_gt_path_param}); (ii) introduce generalized Young diagrams (gYDs) with constraints, which give sharp lower bounds on the fidelity on each sector and provide an alternative proof of Schur log-concavity~(\cref{thm:main_part_splitting_diagram_average_weight,thm:main_part_monotonicity_constrained_generalized_averages,thm:inequality_Schur_polynomials_products}); and (iii) employ $F$-symbols to relate one-site figures of merit to their all-site counterparts and characterize the majorization relations needed for optimality~(\cref{equ:main_part_proof_outline_one_fidelity_decomposition,equ:main_part_proof_outline_fusion_coefficient_1,subsec:props_F_symbols}).
\end{enumerate}

This paper complements Ref.~\cite{CI26}, which develops the broader theory of coherent quantum inference, with QPA as a central special case. Here we provide full derivations and extended discussion of the QPA-specific results stated there, together with additional developments tailored to general QPA.

\section{The task of general QPA}
\label{sec:general_QPA}
General QPA protocols coherently process multiple copies of an unknown quantum state to amplify its $k$-th eigenstate. Informally, given $n$ copies of the input state
$\rho = \sum_{i=1}^{d} p_{i} \psi_{i}$
where $\psi_{i} = \ketbra{\psi_{i}}{\psi_{i}}$ with spectrum in nonincreasing order $\boldsymbol{p}\dot{=}\mleft(p_1,\ldots,p_d\mright)$, a QPA protocol is a quantum channel $\mathcal{T}$ such that the output approximates the pure states
\begin{equation}
    \mathcal{T}\mleft(\rho^{\otimes n}\mright) \approx \psi_k^{\otimes m},
\end{equation}
as illustrated in~\cref{fig:qpa_illustration}. We quantify the performance of a QPA protocol using two risk functionals, each corresponding to a different operational use of the output. The all-site risk $^k\mathcal{L}_{\mathrm{all}}$ assesses the protocol when all \(m\) output copies are used jointly as inputs to a later quantum subroutine. The second is the marginal one-site risk $^k\mathcal{L}_{\mathrm{one}}$, which is relevant when only a single output copy is used and any correlations with the remaining outputs are disregarded. We also note that the case with $k=1, m=1$ reduces to one-output principal-eigenstate QPA, recovering the setting studied in Ref.~\cite{LFIC24}.
\begin{figure}[H]
    \centering
    \includegraphics[width=\columnwidth]{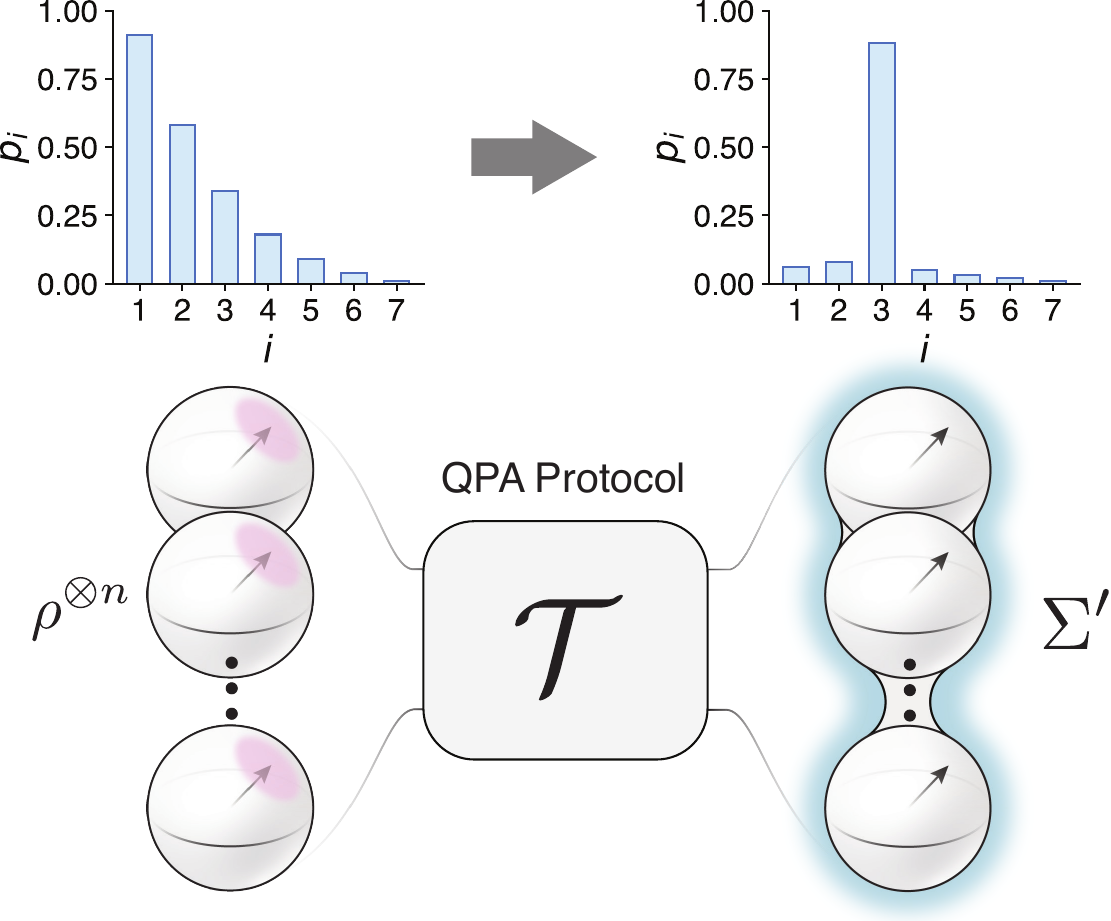}
    \caption{Illustration of the general QPA task for the $k$-th eigenstate, $m$ outputs, and $n$ inputs. Given $n$ copies of an unknown state $\rho$ with sorted spectrum $\boldsymbol{p}\dot{=}(p_1,\ldots,p_d)$, the QPA protocol $\mathcal{T}$ produces an $m$-partite output $\Sigma^\prime$ that approximates $m$ copies of the $k$-th eigenstate $\psi_k^{\otimes m}$.}
    \label{fig:qpa_illustration}
\end{figure}

We formalize QPA as a coherent quantum inference task~\cite{CI26} as follows.
\begin{definition}[General QPA]
Choose $1\leq k \leq d$ and $n,m$, and fix a spectrum $\boldsymbol{p}\dot{=}\mleft(p_1,\ldots,p_d\mright)$ with sorted entries and non-degenerate $k$-th eigenvalue. Let $\mathcal{X}\subseteq\mathcal{B}(\mathbb{H})$ be the set of input states given by the unitary orbit over this spectrum, where $\mathbb{H}=\mathbb{C}^d$. For a given $\rho\in \mathcal{X}$, let $\psi_k$ be the projection onto the eigenspace corresponding to $p_{k}$, and let
    \begin{equation}
        ^k\Gamma(\rho) \coloneqq \psi_k^{\otimes m}, \quad {^k}\gamma(\rho) \coloneqq \psi_k
    \end{equation}
    be the all-site and one-site target states. For a quantum channel $\mathcal{T}:\mathbb{H}^{\otimes n}\to \mathbb{H}^{\otimes m}$, its performance is defined by the worst-case all-site risk
    \begin{equation}
        \label{equ:all_site_risk}
        ^k\mathcal{L}_{\mathrm{all}}\mleft(\mathcal{T}\mright)
        \coloneqq \sup_{\rho\in\mathcal X}L\mleft(^k\Gamma(\rho),\Sigma^\prime\mright),
    \end{equation}
    and the worst-case one-site risk
    \begin{equation}
        \label{equ:one_site_risk}
        ^k\mathcal{L}_{\mathrm{one}}\mleft(\mathcal{T}\mright)\coloneqq \sup_{\rho\in\mathcal X}L\mleft(^k\gamma(\rho),\Tr_{2,\ldots, m}\Sigma^{\prime\,\yt{1,\ldots,m}}\mright) \, .
    \end{equation}
    Here, $\Sigma^\prime=\mathcal{T}(\rho^{\otimes n})$ is the output for input $\rho^{\otimes n}$, $\Sigma^{\prime\,\yt{1,\ldots,m}}=\Pi^{\yt{1,\ldots,m}}\mathcal T(\rho^{\otimes n})\Pi^{\yt{1,\ldots,m}}$ is the output projected onto the totally symmetric subspace of the $m$-partite output, and $L$ is the loss function. Equivalently, we quantify performance via the utility $\mathcal{U}_{(\cdot)} \coloneqq 1-\mathcal{L}_{(\cdot)}$, with $(\cdot)\in\{\mathrm{all},\mathrm{one}\}$.
\end{definition}

Throughout the paper, we take the loss, without loss of generality, to be the infidelity~\cite{BJ23},
\begin{equation}
    \label{equ:infidelity_loss}
    L\mleft(\sigma,\rho\mright)
    = 1 - F(\sigma,\rho)
    = 1 - \Tr\mleft(\sqrt{\sigma}\sqrt{\rho}\mright)^{2} \, .
\end{equation}
The corresponding utility is denoted by $\mathcal{F}_{(\cdot)}$.
The reason is that many other losses can be expressed as monotone functions of the infidelity loss, including those based on the purified and Bures distances:
\begin{subequations}
\begin{align}
L_\mathrm{Purified}
&=1-P = 1-\sqrt{L} \, ,\label{equ:purified_loss}\\
L_\mathrm{Bures}
&=1-\frac{D_B}{\sqrt{2}} = 1-\sqrt{1-\sqrt{1-L}} \, .\label{equ:bures_loss}
\end{align}
\end{subequations}
By a symmetry argument given later in the paper, the QPA output is diagonal in the eigenbasis of the pure target state. Consequently, trace-distance and cross-entropy losses reduce to
\begin{subequations}
\begin{align}
L_\mathrm{Trace}
&=D_{\mathrm{tr}} = L \, ,\label{equ:trace_loss}\\
L_\mathrm{Cross\,entropy}
&=-H_\times =- \ln (1-F)\,.\label{equ:cross_entropy_loss}
\end{align}
\end{subequations}

\subsection{Overhang removal rule and fidelity laws}
\label{subsec:overhang_removal}
The optimal protocol, namely the one that asymptotically attains the minimum risk, is the same for both the one-site and all-site risk functionals. We first present this optimal protocol and its achieved risks before turning to the optimality proofs.
The optimal QPA protocol has a universal three-step structure: (1) Schur sampling, which resolves the input into symmetry sectors; (2) isometric embedding, which reshapes the symmetry structure of the inputs; and (3) tracing out, which removes the excess degrees of freedom. See~\cref{fig:qpa_channel} for an illustration.

\begin{figure}[H]
    \centering
    \includegraphics[width=\columnwidth]{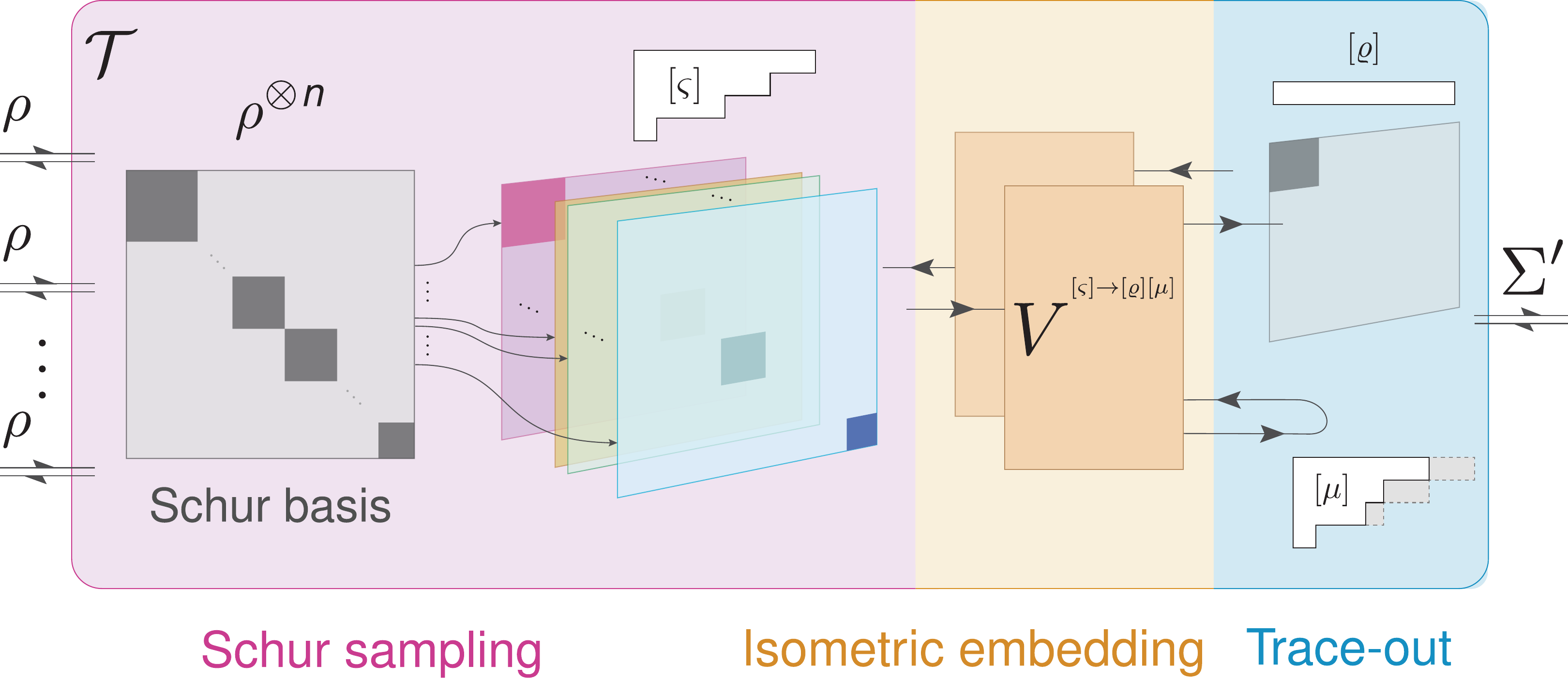}
    \caption{Schematic diagram of the general QPA protocol. On each symmetry sector $\yd{\varsigma}$, the optimal extremal channel maps the input irreducible representation (irrep) $\mathbb{W}^{\yd{\varsigma}}$ to the output irrep $\mathbb{W}^{\yd{m}}$ via the overhang removal intertwiner, tracing out the remainder irrep $\mathbb{W}^{\yd{\mu}}$.}
    \label{fig:qpa_channel}
\end{figure}

This universal structure is dictated by symmetry; see Ref.~\cite[Section~S1 B]{CI26}: QPA is exchange-invariant under permutations of the input and output copies, i.e., under $S_n\times S_m$, and covariant under joint $U(d)$ rotations of the inputs and outputs. Hence, without loss of optimality, we may restrict to symmetric channels. SW duality then decomposes the input and output spaces into symmetry sectors labeled by Young diagrams (YDs), respectively $\yd{\varsigma}\dot{=}\yd{\varsigma_{1},\ldots,\varsigma_{d}}$ and $\yd{\varrho}\dot{=}\yd{\varrho_{1},\ldots,\varrho_{d}}$. Each sector factors into a Specht module $\mathbb{V}^{\yd{\varsigma}}$, carrying the permutation-group action, and a Weyl module $\mathbb{W}^{\yd{\varsigma}}$, carrying the unitary-group action. We refer the reader to the companion paper~\cite{CI26} for a more general treatment of symmetries in coherent-inference protocols.

Permutation symmetry allows the optimal channel to be chosen to act trivially on the Specht modules. Thus, without loss of generality, step (1) may be taken to be Schur sampling. On each input symmetry sector, unitary covariance and linearity of the utility leave only the choice of isometric embedding in step (2). Specifically, each sector-wise channel is specified by
\label{equ:stinespring_dilation}
\begin{equation}
\begin{aligned}
\label{equ:main_part_proof_outline_irrep_channel_decomposition}
\extChannel{\mathcal{T}}{\yd{\varsigma}}{\yd{m}}{\yd{\lambda}}(\cdot)
= \Tr_{\yd{\lambda}}\mleft(
\intertwiner{W}{\yd{\varsigma}}{\yd{\lambda}}{\yd{m}}{}
\,\cdot\,
\intertwineradj{W}{\yd{\varsigma}}{\yd{\lambda}}{\yd{m}}{}
\mright),
\end{aligned}
\end{equation}
where
$\intertwiner{W}{\yd{\varsigma}}{\yd{\lambda}}{\yd{m}}{}:
\mathbb{W}^{\yd{\varsigma}}\to
\mathbb{W}^{\yd{\lambda}}\otimes\mathbb{W}^{\yd{m}}$
is the intertwining Stinespring isometry,
and it is given by the inverse Clebsch-Gordan transform restricted to \(\mathbb{W}^{\yd{\varsigma}}\). The adjoint is denoted by $\intertwineradj{W}{\yd{\varsigma}}{\yd{\lambda}}{\yd{m}}{}$. Here $\yd{\varsigma}$, $\yd{m}$, and $\yd{\lambda}$ label the input, output, and environment irreps, respectively; see Ref.~\cite[Theorem~19]{MT25}.

Since the risk is less than one only when the output irrep is the symmetric subspace, we can fix the output irrep to be $\yd{\varrho}=\yd{m}$. Therefore, the only remaining factor that determines the channel is the environment irrep \(\yd{\mu}\).

Its admissibility is governed by the Littlewood--Richardson (LR) rules, under which \(\yd{\mu}\) may be viewed as the YD obtained from \(\yd{\varsigma}\) by removing \(m\) boxes. We refer to the boxes in row $k$ of $\yd{\varsigma}$ that extend beyond row $k+1$, i.e.\ the $\Delta_{k,k+1} = \varsigma_k - \varsigma_{k+1}$ rightmost boxes, as the \emph{overhang} of row $k$. We find that the asymptotically optimal channels are determined by the overhang removal rule, which is as follows.

\begin{definition}[Overhang Removal Rule]
\label{def:overhang_removal_rule}
    Let $\Delta_{i,j} \coloneqq \varsigma_i - \varsigma_j$ for $1\leq i < j \leq d$ with $\Delta_{i,d+1} \coloneqq \infty$ denote the row gaps. The terminal row index is
    \begin{equation}\begin{aligned}
        i^\ast \coloneqq \min\{i: \Delta_{k,i+1}\geq m\} \, .
    \end{aligned}\end{equation}
    Then the environment YD $\yd{\mu}$ is chosen to be:
    \begin{equation}
    \mu_i \coloneqq
    \mleft\{\begin{aligned}
    &\varsigma_i, & 1 \leq i < k,\\[4pt]
    &\varsigma_{i+1}, & k \leq i < i^\ast,\\[4pt]
    &\varsigma_k - m, & i = i^\ast,\\[4pt]
    &\varsigma_i, & i^\ast < i \leq d.
    \end{aligned}\mright.
    \end{equation}
\end{definition}

The overhang removal rule corresponds to removing boxes in the overhang on each row, starting with the $k$-th row and moving downwards, as illustrated in \cref{fig:overhang_removal}. This is a generalization of the optimal $\yd{\mu}$ described in Ref.~\cite{LFIC24}, where only one box is removed from the $k$-th row.

\begin{figure}[H]
    \centering
    \includegraphics[width=0.95\columnwidth]{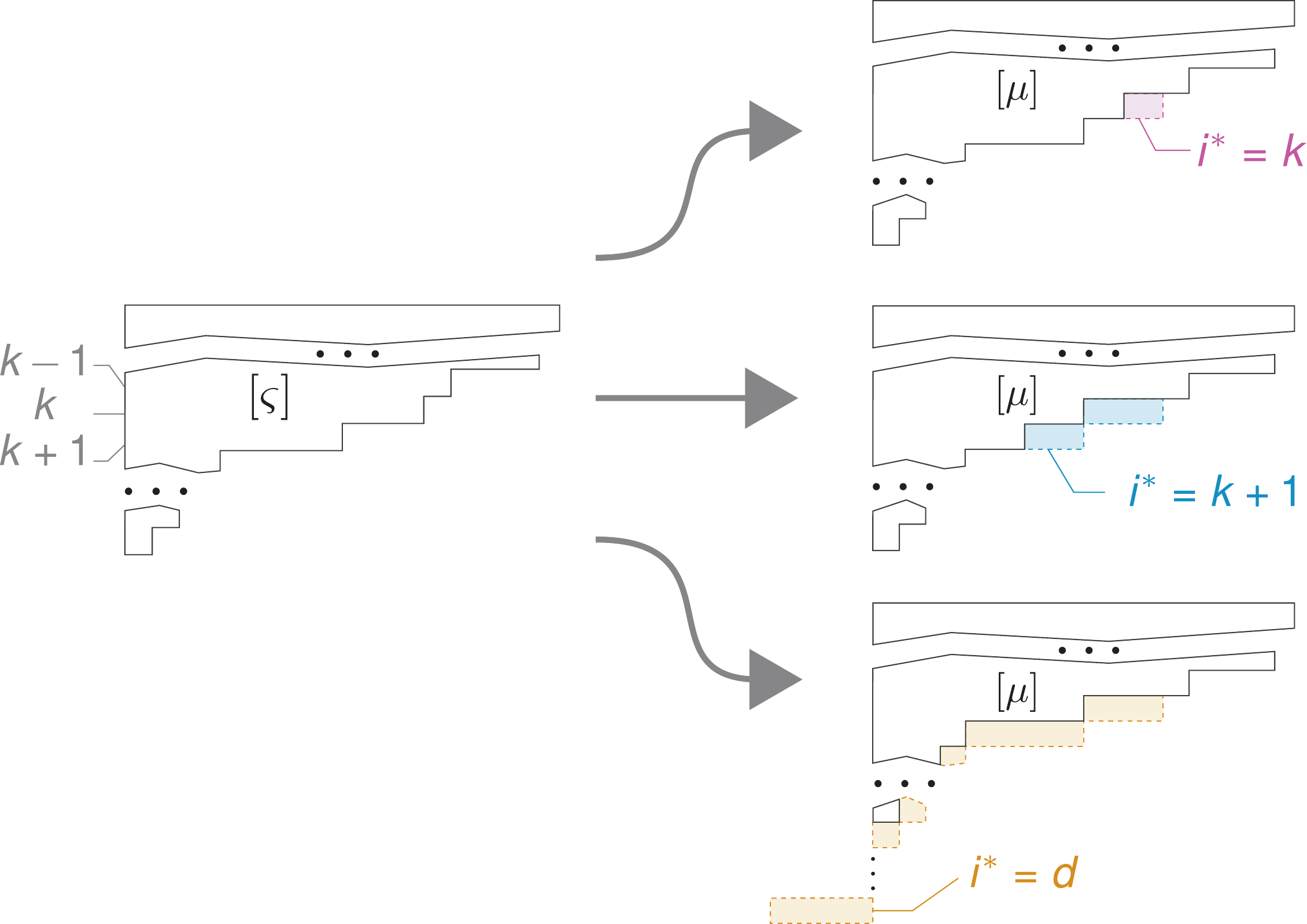}
    \caption{Illustration of the overhang removal rule. Three typical cases are shown: pink with the first overhang partially removed; blue with one full overhang removed, continuing into the next row; yellow with all overhangs removed and further into the $d$-th row, creating virtual boxes and turning the YD into a staircase shape. Starting from the input YD $\yd{\varsigma}$, boxes are removed sequentially beginning at row $k$ and proceeding downward until $m$ boxes have been removed; the remaining diagram is the environment YD $\yd{\mu}$.}
    \label{fig:overhang_removal}
\end{figure}

For details on the representation-theoretic tools, we refer the reader to the Supplemental Material (SM).
\subsection{Performance guarantee}
The performance of overhang removal for the all-site risk with infidelity loss, defined in~\cref{equ:infidelity_loss}, differs markedly between the intensive regime \(m=\lilo{n}\), where full amplification is possible, and the extensive regime \(m=\bigTheta{n}\), where only marginal amplification can be achieved.

For constant $m$, the risk admits a simple asymptotic expansion in the large-$n$ limit when evaluated on the overhang removal channel:
\begin{equation}
\label{equ:intensive_asymptote}
^k\mathcal{L}_{\mathrm{all}}=\frac{m}{n}\sum_{i\neq k}\frac{p_i}{D^2_{k,i}}
+\lilo[d,\boldsymbol{p}]{n^{-1}},
\end{equation}
where $D_{k,i}=p_k-p_i$ are the gap sizes, and the $o$-term can potentially depend on $d$, $\boldsymbol{p}$, and $m$. Intuitively, in the intensive regime QPA can amplify the state arbitrarily close to the target as \(n\) increases, provided the target state remains spectrally distinguishable from its neighbors.

When the number of outputs scales extensively with the number of inputs at a fixed rate $R$, that is $m=R n+\lilo{n}$, it is not possible to achieve a perfect asymptotic all-site utility. Nevertheless, one can still obtain asymptotic characterizations, proved in~\cref{thm:extensive_overhang_all_site_fidelity}:
\begin{equation}
\begin{aligned}
\label{equ:extensive_asymptote}
^k\mathcal{F}_\mathrm{all} = \prod_{i=k}^{I^\ast-1}\frac{D_{k,i+1}^2}{p_{k}R}
    \prod_{i\notin \{k,\ldots,I^\ast\}} \frac{D_{k,i}^2}{D_{k,i}^2 + p_{i}R} +\lilo[d,\boldsymbol{p}]{1}.
\end{aligned}
\end{equation}
Here we use the corresponding macroscopic terminal index
$I^\ast \coloneqq \min\{i:\, D_{k,i+1}\ge R\}$ and we assume the little-\(o\) term is taken along the fixed sequence \(m_n\). The utility has cusps when the removal reaches different overhangs, displaying phase-like behavior that will be exemplified later.

However, although the above characterization is asymptotically precise for large \(n\), it does not guarantee the finite-\(n\) convergence needed in practice. Indeed, the \(o\)-term could still hide a significant dependence on \(d\): for instance, it might remain of order one when \(n<d^2\) and only cross over to \(n^{-1}\) scaling for \(n\ge d^2\).
To obtain explicit, dimension-uniform nonasymptotic control, we obtain an almost matching upper bound that holds for all $k$, $m$, $d$, $\boldsymbol{p}$, and sufficiently large $n$, as proved in~\cref{thm:nonasymptotic_all_site_fidelity_bound}:
\begin{equation}
\label{equ:utility_bound_non_asymptotic}
^k\mathcal{L}_{\mathrm{all}} \leq \frac{m}{n}\sum_{i\neq k}\frac{p_i}{|D_{k,i}|D_{k,\mathrm{min}}} + R_{\mathrm{all}} \, ,
\end{equation}
where $D_{k,\mathrm{min}}=\min_{i\neq k} |D_{k,i}|$, and
\begin{equation}
\label{equ:utility_remainder_bound_non_asymptotic}
R_{\mathrm{all}}\le\frac{m}{n}\frac{4\sqrt{\ln n}}{n^{1/2}}
\mleft(
1+\sum_{i\neq k}\frac{6p_i}{|D_{k,i}|D_{k,\mathrm{min}}^2}
\mright) \, .
\end{equation}

On the other hand, the one-site risk \(^k\mathcal{L}_{\mathrm{one}}\) always converges to $0$ independent of whether we are in the intensive or extensive regime. Asymptotically, as proved in~\cref{thm:asymptotics_one_fidelity},
\begin{equation}\begin{aligned}
\label{equ:one_site_asymptote}
&^k\mathcal{L}_{\mathrm{one}} \!=\!\frac{1}{n}\mleft(\sum\limits_{i\neq k}\frac{p_{i}}{D_{k,i}^2} \!+\!\! \sum\limits_{i=k+1}^{I^\ast}\!\! \mleft(D_{k,i}^{-1} - R^{-1}\mright)\mright) + \lilo[d,\boldsymbol{p}]{n^{-1}}.
\end{aligned}\end{equation}

Under the additional assumption \(n\ge 2m/D_{k,k+1}\), the one-site loss obeys the nonasymptotic bound:
\begin{equation}\begin{aligned}
\label{equ:one_site_non_asymptotic}
&^k\mathcal{L}_{\mathrm{one}} \leq\frac{1}{n}\sum\limits_{i\neq k}\frac{p_i}{|D_{k,i}|D_{k,\mathrm{min}}} + R_{\mathrm{one}},
\end{aligned}\end{equation}
where $R_{\mathrm{one}}=R_{\mathrm{all}}/m$.

To illustrate the generic behavior of the utility, we choose an arbitrary spectrum parameterized by a single parameter $\lambda$, interpolating between a pure state and a maximally mixed state of $d=4$. We plot the corresponding all-site and one-site utilities for target eigenstates $k=1,2$ in \cref{fig:qpa_fidelity}(a)--(c), (e). To place the one-copy utility on the same footing as the \(m\)-copy utility, we consider \(m\) independent copies of the marginal output, namely \(\mathcal{F}^m_\mathrm{one}\). The spectrum of the inputs is shown in \cref{fig:qpa_fidelity}(d).
As in the intensive regime, increasing the dimension ($d\to\infty$) does not affect the qualitative behavior of QPA. This manifests as stable convergence to a non-zero constant of the utility when removals are restricted to the first overhang, as shown for depolarized inputs in~\cref{fig:qpa_fidelity}(f).

We compare our general setting with previous guarantees in~\cref{tab:qpa_existing_bounds}.

\begin{table}[H]
\caption{Comparison with existing performance guarantees.}
\label{tab:qpa_existing_bounds}
\scriptsize
\setlength{\tabcolsep}{2pt}
\renewcommand{\arraystretch}{1.18}
\begin{tabularx}{\columnwidth}{@{}p{0.18\columnwidth}p{0.49\columnwidth}X@{}}
\toprule
Source & Performance guarantee & Applicability domain \\
\midrule
\multicolumn{3}{@{}l}{\emph{Intensive risk asymptote}, $\mathcal{L}_{\mathrm{all}}$} \\
This work
& $\frac{m}{\varepsilon}\sum_{i\neq k}\frac{p_i}{D_{k,i}^2}+\lilo[d,\boldsymbol{p},m]{1}$
& arbitrary $k,m,d,\boldsymbol{p}$ \\
Ref.~\cite{LFIC24}
& $\varepsilon^{-1}\sum_{i=2}^{d}\frac{p_i}{D_{1,i}^2}+\lilo[d,\boldsymbol{p}]{1}$
& $k\!=\!m\!=\!1$ \\
\midrule
\multicolumn{3}{@{}l}{\emph{Extensive all-site utility}, $\mathcal{F}_{\mathrm{all}}$} \\
This work
&\cref{equ:extensive_asymptote}
& arbitrary $k,R,d,\boldsymbol{p}$ \\
Ref.~\cite{KW01}
& $\dfrac{8(1-p_1)^2}{8(1-p_1)^2+R(2p_1-1)}+\lilo[\boldsymbol{p}]{1}$
& $k\!=\!1,d\!=\!2$, $R\!\le\!2(1-p_1)$ \\
Ref.~\cite{KW01}
& $\dfrac{8(1-p_1)^2}{R(3-2p_1)}+\lilo[\boldsymbol{p}]{1}$
& $k\!=\!1,d\!=\!2$, $R\!\ge\!2(1-p_1)$ \\
Ref.~\cite{W98}
& $\multiset{d}{n}/\multiset{d}{m}$
& $k\!=\!1$, $\boldsymbol{p}\dot{=}(1,0,\ldots,0)$ \\
\midrule
\multicolumn{3}{@{}l}{\emph{Extensive one-site utility}, $\mathcal{F}_{\mathrm{one}}$} \\
This work
&\cref{equ:one_site_asymptote}
& arbitrary $k,m,d,\boldsymbol{p}$ \\
Ref.~\cite{KW01}
& $1-\dfrac{m(2p_1-1)}{8n(1-p_1)^2}+\lilo[\boldsymbol{p}]{n^{-1}}$
& $k\!=\!1,d\!=\!2$ \\
Ref.~\cite{KW99}
& $\dfrac{n(m+d)+m-n}{m(n+d)}$
& $k\!=\!1$, $\boldsymbol{p}\dot{=}(1,0,\ldots,0)$ \\
\midrule
\multicolumn{3}{@{}l}{\emph{Minimax risk upper bound for finite} $n$, $\mathcal{L}_{\mathrm{all}}^*\leq$} \\
This work
& $\frac{m}{n}\sum_{i\neq k}\frac{p_i}{|D_{k,i}|D_{k,\mathrm{min}}} + R_{\mathrm{all}}$
& arbitrary $k,m,d,\boldsymbol{p}$ \\
Ref.~\cite{DGHM+25}
& $\bigO{\frac{1-p_1}{D^2}n^{-1}}$
&  $k\!=\!m\!=\!1$ \\
Ref.~\cite{GLLP+25}
& $\bigO{4^{f(p_1,D)}n^{-1}}$
& $k\!=\!m\!=\!1$, $D\!<\!2/3$ \\
Ref.~\cite{GLLP+25}
& $\bigO{(1-p_1)n^{-1}}$
& $k\!=\!m\!=\!1$, $D\!\ge\!2/3$ \\
\bottomrule
\end{tabularx}
\end{table}
\noindent
For $k=m=1$, the intensive asymptote in~\cref{equ:intensive_asymptote} reduces to the principal-eigenstate expression of Ref.~\cite{LFIC24}. For $k=1$ and $d=2$, the extensive all-site and one-site formulas in~\cref{equ:extensive_asymptote,equ:one_site_asymptote} reduce to the qubit expressions of Ref.~\cite{KW01}. In the pure-state limit $k=1$ and $\boldsymbol{p}\dot{=}(1,0,\ldots,0)$, the same formulas recover the optimal pure-state cloning benchmarks of Refs.~\cite[Eq.~(3.5)]{W98} and~\cite[Theorem~1]{KW99}.

For nonasymptotic upper bounds, our estimate removes the additional potentially large constant appearing in Ref.~\cite[Prop.~10]{DGHM+25}. Compared with Ref.~\cite[Theorem~15]{GLLP+25}, it also avoids the exponential factor $4^{f(p_1,D_{1,2})}$ in the small-gap regime $D_{1,2}<2/3$, where $f$ is bounded by $O(D_{1,2}^{-1})$ in general and can be logarithmic in $D_{1,2}^{-1}$ when $p_2$ is bounded away from zero.

\begin{figure*}
  \centering
  \includegraphics[width=\textwidth]{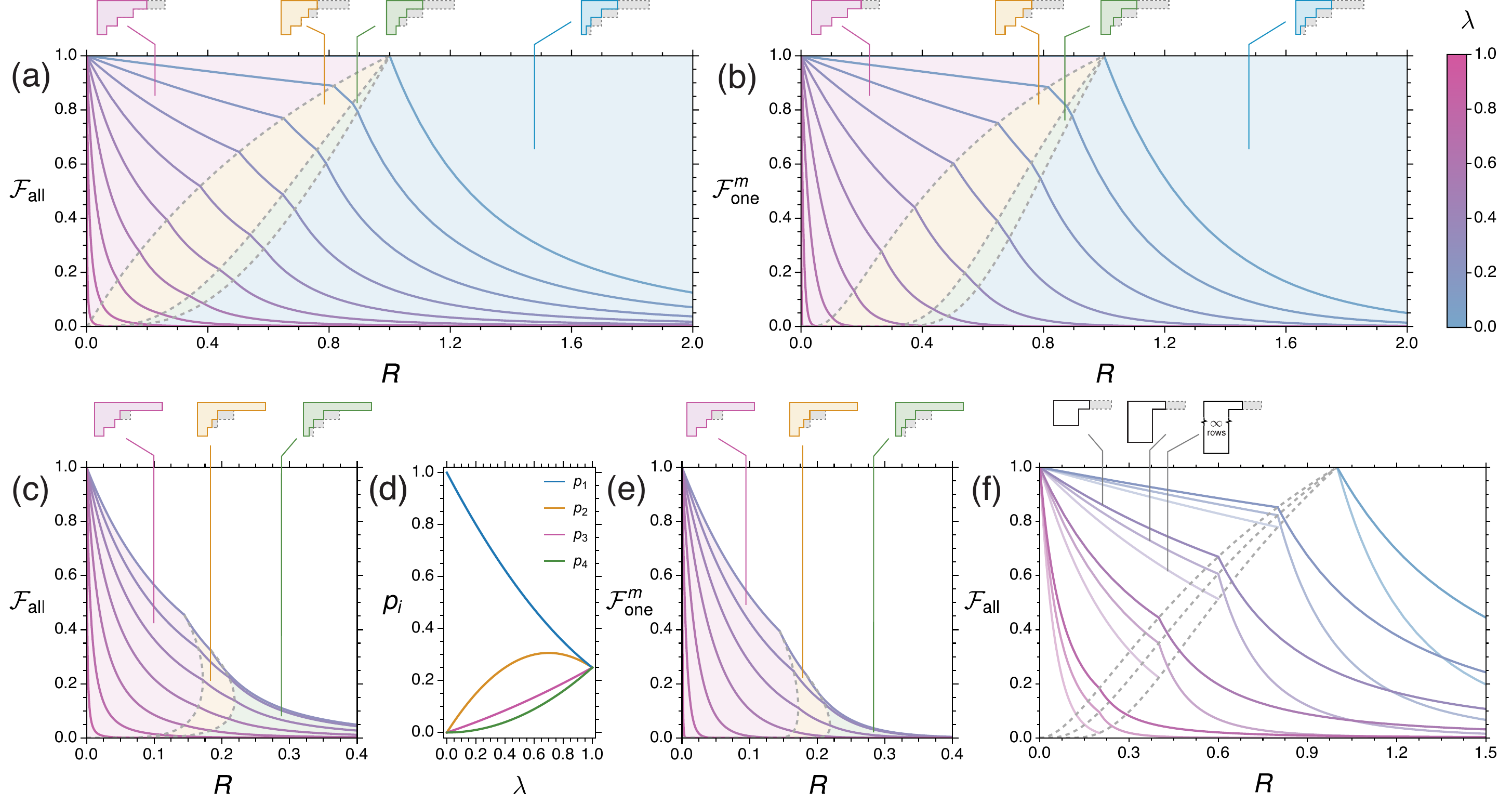}
    \caption{{(a)} All-site QPA fidelities (\cref{equ:extensive_asymptote}) for $k=1$; {(b)} One-site QPA fidelities (\cref{equ:one_site_asymptote}) for $k=1$; {(c)} All-site QPA phase diagrams (\cref{equ:extensive_asymptote}) for $k=2$; {(d)} input spectrum $\boldsymbol{p}\dot{=}(p_1,\ldots,p_4)$ used throughout panels (a)--(e); {(e)} one-site QPA phase diagram (\cref{equ:one_site_asymptote}) for $k=2$. Different colors correspond to different noise parameters $\lambda$.
For (a), (b), (c), and (e), dashed separators indicate QPA phase boundaries; the callouts illustrate representative removal rules, not to scale.
For (c) and (e), curves are plotted within the $\lambda$ range $[0.3,1]$ for clarity.
{(f)} QPA phase diagrams for depolarized inputs in dimensions $d=3,5,\infty$, respectively; the color legend in {(b)} maps $\lambda$ values to curve colors and is used across panels (a)--(c), (e), (f). In panel {(f)}, Phase II (where $\mathcal{F}=0$) is not shown for $d\to\infty$. All-site curves are evaluated from \cref{equ:extensive_asymptote} and one-site curves from \cref{equ:one_site_asymptote}.}
\label{fig:qpa_fidelity}
\end{figure*}

\subsection{Optimality and robustness}
The optimality of the overhang removal rule can be distinguished at two levels. First, the overhang removal rule attains the same asymptotic risk scaling as the minimal risk. Second, in a stronger sector-wise sense, the optimal channels can be characterized exactly by the overhang removal rule, and on most symmetry sectors any deviation from this rule leads to a strictly larger risk.
\paragraph{Asymptotically optimal scaling}
In this part, we assume the spectrum is non-degenerate. For overhang removal, \cref{thm:asymptotic_optimality_QPA} gives the all-site minimal risk ${}^k\mathcal{L}_{\mathrm{all}}^{\ast}$ for fixed $m$, shown explicitly in~\cref{equ:intensive_asymptote}.

For the one-site utility, \cref{thm:asymptotic_optimality_QPA} gives the asymptotic minimal risk ${}^k\mathcal{L}_{\mathrm{one}}^{\ast}$ for any parametrization satisfying $m=nR+\lilo{n}$, as shown explicitly in~\cref{equ:one_site_asymptote}.

\paragraph{Strict optimality} To sharpen the optimality statement, we provide three further characterizations. First, the above statements hold only for sufficiently large $n$, without specifying the threshold at which this regime is reached. Moreover, they do not cover the intermediate regime with $m=\lilo{n}$. Therefore, we give an explicit sector-wise threshold for $n$, depending only on $m$ and the spectral gap $D_{k,\mathrm{min}}$, through a nonasymptotic analysis. Second, in the extensive regime, we establish optimality against protocols with linear deviations from the overhang removal rule. Third, we prove a corresponding statement for depolarizing channels.

On every symmetry sector, as long as $\Delta_{k,\mathrm{min}}=\min_{i\neq k}|\Delta_{k,i}|$ satisfies
\begin{equation}
\Delta_{k,\mathrm{min}}>2m^2\mleft(2m-1+\frac{1}{D_{k,\mathrm{min}}}\mright),
\end{equation}
for the all-site utility and \begin{equation}
\Delta_{k,\mathrm{min}}>\max\mleft(m,\frac{2}{D_{k,\mathrm{min}}}\mright),
\end{equation}
for the one-site utility, the sector-wise channel is uniquely optimal; see~\cref{cor:non_asymptotic_optimality_irrep_level}.

In the extensive regime, \cref{equ:extensive_asymptote} asymptotically dominates removal rules parametrized by $m_\ell=\ceil{nR_\ell}$ where $\{R_\ell\}_{\ell=1}^d$ differ from that of the overhang removal rule, whose utilities decay exponentially as \(n\to\infty\).

Finally, for depolarized inputs with spectrum $p_i = \eta/d$ for $i < d$, $p_d = 1-\eta(d-1)/d$ and the one-site utility $^1\mathcal{F}_{\mathrm{one}}$, the overhang removal rule with $k\!=\!1$ is exactly optimal for all $n$, $m$, and $d$ on all symmetry sectors (\cref{cor:optimality_QPA_depolarizing_F_one}).

\paragraph{Edge cases}
Finally, we note that although overhang removal is guaranteed to be optimal in most practically relevant cases, there remain a few edge cases in which optimality need not hold.

First, the rule need not be exactly optimal at small \(n\), although we conjecture that it is optimal in the principal eigenvalue case. For example, for the spectrum $p_1=1-\lambda$, $p_2=4\lambda/5$, $p_3=\lambda/5$ with spectral parameter $\lambda\in(0,1)$, one finds, as shown in~\cref{fig:edge_cases}(a), that whenever \( \frac{5}{42}(7-\sqrt{7})<\lambda<\frac{5}{9} \), the optimal sector-wise choice is to remove a \(k=1\) box, whereas the overhang removal rule would instead prescribe removing a \(k=2\) box.

Secondly, although overhang removal achieves the optimal all-site utility, and it is also optimal for the one-site utility when the output is restricted to the totally symmetric subspace, it need not be optimal for one-site utility once we allow outputs that are not totally symmetric. A counterexample appears already for $d=3$, $n=2$ and $m=2$. For $\yd{\varsigma}=[1,1,0], \yd{\varrho}=[1,1,0]$, and $\yd{\mu}=[0,0,0]$, the one-site utility is $\res{f}{[1,1,0]}{[1,1,0]}{[0,0,0]}_{\mathrm{one}}=\frac{1}{27}\bigl(3\lambda-2\lambda^2\bigr)$, which is strictly larger than the utility for $\yd{\varsigma}=[1,1,0]$, $\yd{\varrho}=[2,0,0]$, and $\yd{\mu}=[1,-1,0]$, namely $\res{f}{[1,1,0]}{[2,0,0]}{[1,-1,0]}_{\mathrm{one}}=\frac{1}{216}\bigl(21\lambda-13\lambda^2\bigr)$, as shown in \cref{fig:edge_cases}(b). This contrasts sharply with the qubit case~\cite{KW01}, where the all-site and one-site utilities define the same optimal protocol, since for $d\!=\!2$ there is no nontrivial antisymmetric sector to exploit. For most practical purposes, however, such as establishing the entanglement-breaking limit~\cite{CI26}, it is sufficient to restrict attention to totally symmetric outputs, since entanglement-breaking channels always admit totally symmetric extensions.

\begin{figure}
    \centering
    \includegraphics[width=\columnwidth]{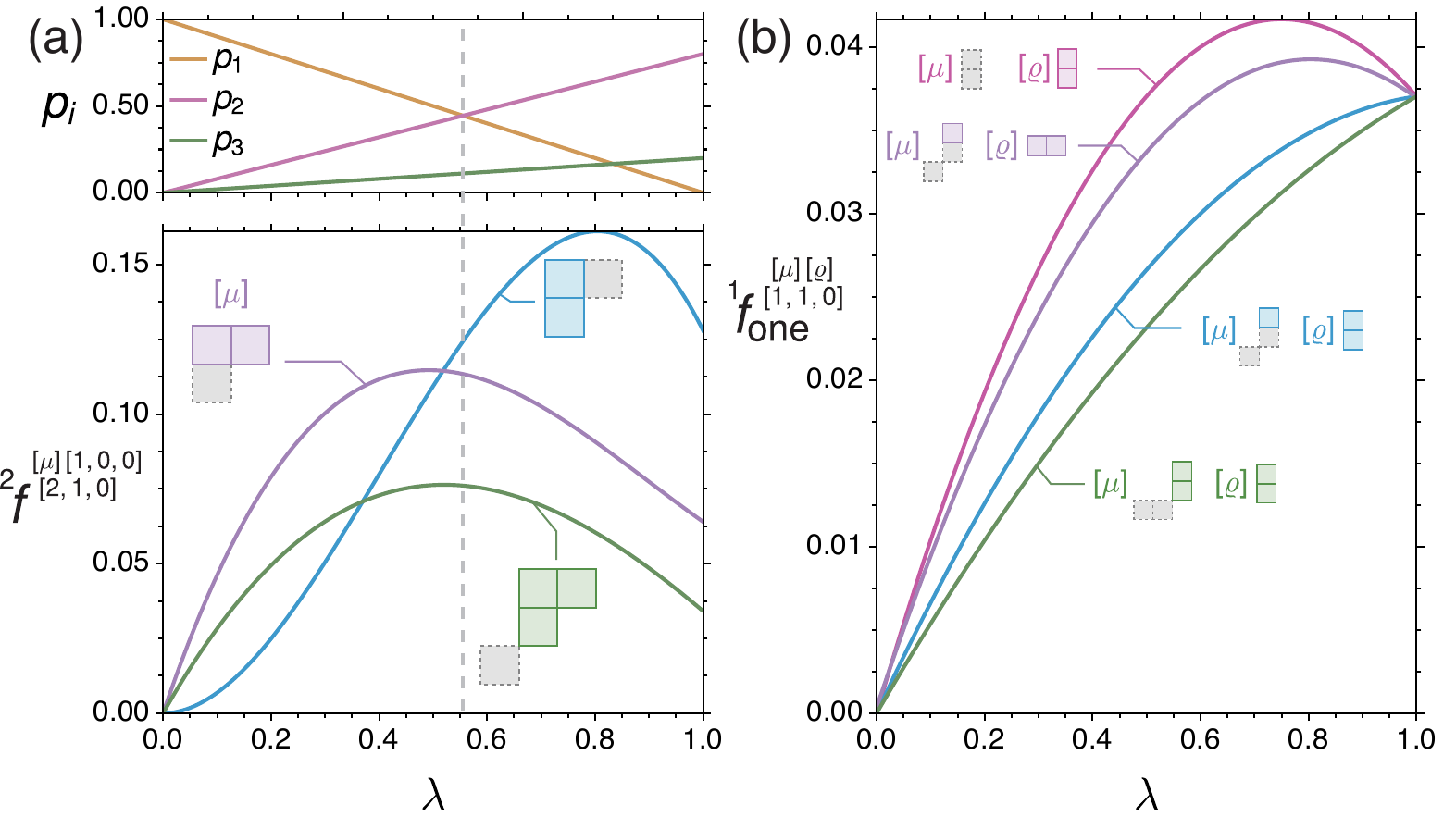}
    \caption{Edge cases for the overhang removal rule. (a) Suboptimality at small $n=3$ for $m=1$, $d=3$, and a specific spectrum $\boldsymbol{p}$: the overhang removal channel (purple) is strictly below the  optimal channel (blue) in all-site utility within a $\lambda$ interval, showing the large-$n$ optimality guarantee does not extend to finite $n$ in general. $\lambda$ parametrizes the input spectrum. (b) One-site utility for non-symmetric output $\yd{\varrho}=[1,1,0]$ with $n=m=2$, $d=3$: the antisymmetric-output channel (environment $\yd{\mu}=[0,0,0]$) achieves a strictly higher one-site utility than the symmetric overhang removal channel ($\yd{\mu}=[1,-1,0]$), demonstrating that overhang removal is not one-site optimal beyond the totally symmetric subspace for $d\geq 3$. }
\label{fig:edge_cases}
\end{figure}

\subsection{Implementations}

\begin{figure}
    \centering
    \includegraphics[width=\columnwidth]{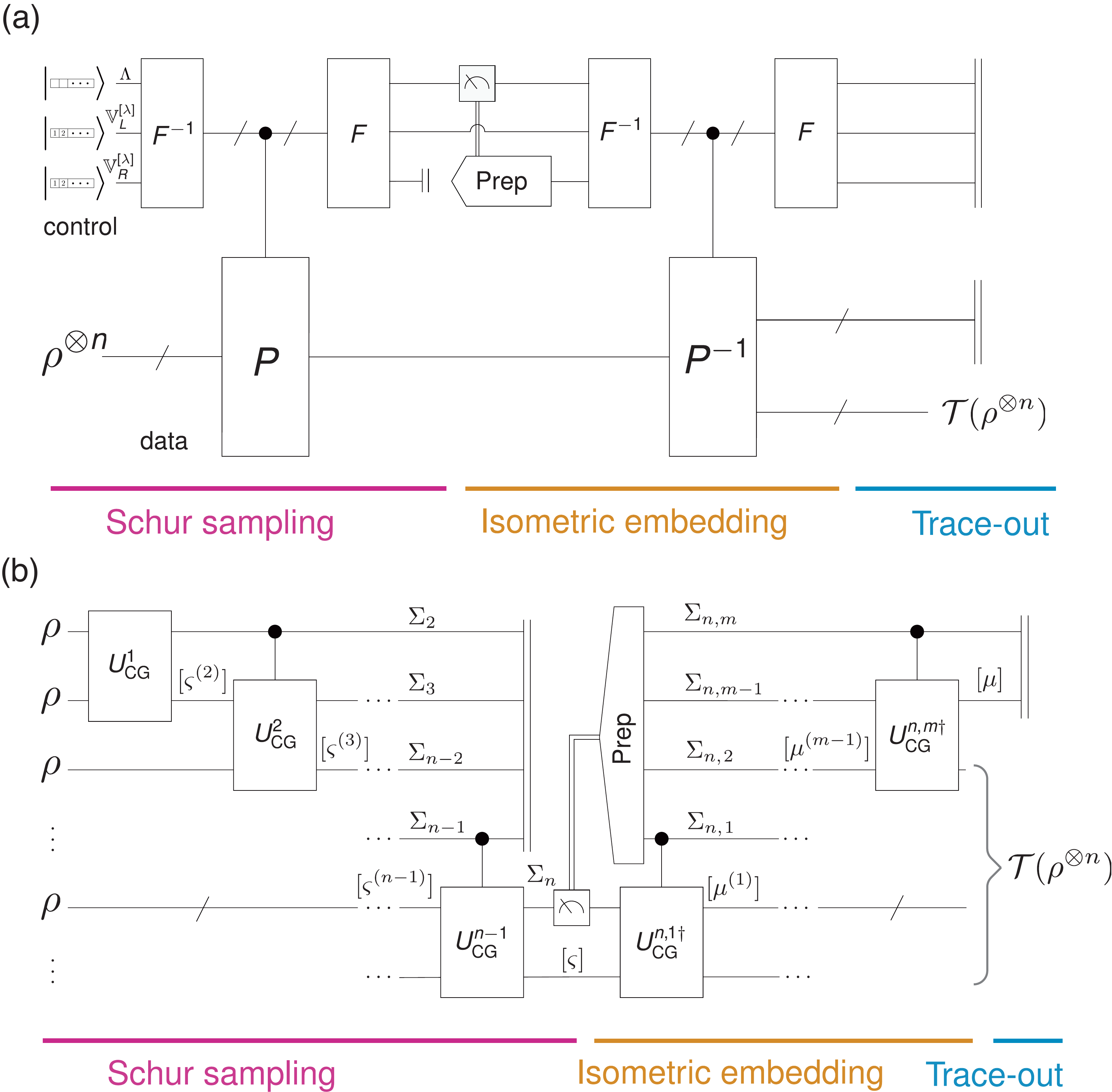}
    \caption{Circuit diagrams for the general QPA protocol. (a) The generalized quantum phase estimation (GQPE)-based overhang removal QPA (GQPE-OQPA) algorithm: 1. Schur sampling is performed via GQPE, followed by measurement of the YD. 2. The isometric embedding is implemented by preparing the Specht state and reinserting it into the system through inverse GQPE. 3. Extra registers are traced out. (b) The Clebsch--Gordan (CG)-based overhang removal QPA (CG-OQPA) algorithm: 1. Schur sampling is performed through a sequence of CG transforms, followed by measurement of the $\Sigma_n$ register storing the YD. 2. The isometric embedding is implemented by preparing the target state and applying the inverse sequence of CG transforms. 3. Extra registers are traced out.}
    \label{fig:qpa_circuit}
\end{figure}

We provide two circuit-level implementations of the optimal QPA channel: generalized quantum phase estimation-based optimal QPA (GQPE-OQPA) and Clebsch--Gordan-based optimal QPA (CG-OQPA). Their correctness follows from the more detailed constructions and explanations given in Refs.~\cite{LFIC24,MT25}.

GQPE-OQPA is closely related to the original algorithm of Ref.~\cite{LFIC24} and the high-dimensional Schur transform of Ref.~\cite{BFGL+25}. The algorithm acts on the data register through a control register with generalized quantum phase estimation (GQPE). In Step 1, Schur sampling is implemented by indirectly measuring the YD from the control register. If $np_{k} > m$,
then we measure with high probability (equivalently, up to exponentially small error, see~\cref{thm:upper_bound_probability_row_differences}) a YD \(\yd{\varsigma}\) with $\varsigma_{k} \geq m.$
This allows us to reset the control register to a state that corresponds to the first \(n-m\) inputs being in the irrep labeled by \(\yd{\mu}\) (see~\cref{def:overhang_removal_rule}), and the last \(m\) inputs being in the symmetric irrep labeled by \(\yd{m}\). We then uncompute the GQPE, thereby injecting the irrep decomposition back into the data register, and finally we trace out the first \(n-m\) registers. This algorithm implements the optimal QPA channel up to an exponentially vanishing error, which arises from the probability of initially measuring a label with $\varsigma_{k} < m$.
The gate complexity of this algorithm is
\begin{equation}
    T = O\mleft(n^{4}\ln^p \mleft(nd/\varepsilon\mright)\mright) \, ,
\end{equation}
where \(\varepsilon\) is the accuracy with which we want to implement the circuit and \(p\approx 1.44\), see~\cref{thm:implementation_GPE}.

CG-OQPA operates on the data register directly with an algorithm based on Ref.~\cite{MT25}. In Step 1, Schur sampling is implemented through a sequence of CG transforms, followed by measurement of the \(\Sigma_n\) register encoding the YD. In Step 2, the isometric embedding is realized by preparing the corresponding resource state, similar to GQPE-OQPA, and applying the inverse sequence of CG transforms. In Step 3, the auxiliary registers are traced out. Preparing the resource state can be done in a way similar to implementing the inverse isometry in Ref.~\cite{BFGL+25}, with gate complexity
\begin{equation}
    T = O\mleft(r^3d(n+m)\ln^p \mleft((n+m)d/\varepsilon\mright)\mright) \, ,
\end{equation}
where \(r\) is the rank of the input state, \(\varepsilon\) is the accuracy with which we want to implement the circuit and \(p\approx 1.44\). See~\cref{thm:implementation_CG} for details.

\section{Proof outline}
We sketch the main ideas behind the optimality proofs and risk scaling here, with full derivations in the SM.
\paragraph{Decomposition into symmetry sectors}
To start, we formalize the symmetry reduction introduced in~\cref{subsec:overhang_removal}. By symmetry, the input space decomposes as
\begin{equation}
\begin{aligned}
\mathbb{H}^{\otimes n}
\simeq
\bigoplus_{\yd{\varsigma}\vdash_{d}n}
\mathbb{V}^{\yd{\varsigma}}
\otimes
\mathbb{W}^{\yd{\varsigma}} .
\end{aligned}
\end{equation}
where \(\mathbb{V}^{\yd{\varsigma}}\) is the Specht module. Equivalently, the input \(\rho^{\otimes n}\) will be block-diagonal, i.e.
\begin{equation}
    \label{equ:SW_decomposition_input_state}
    \rho^{\otimes n} \simeq \bigoplus_{\yd{\varsigma}\vdash_{d}n}g^{\yd{\varsigma}}\schur{\yd{\varsigma}} \frac{I_{\mathbb{V}^{\yd{\varsigma}}}}{g^{\yd{\varsigma}}} \otimes \frac{\rho^{\yd{\varsigma}}}{\schur{\yd{\varsigma}}} \, ,
\end{equation}
where \(g^{\yd{\varsigma}}\) is the dimension of \(\mathbb{V}^{\yd{\varsigma}}\) and \(\schur{\yd{\varsigma}}\) is the Schur polynomial with arguments \(\boldsymbol{p}\dot{=}(p_{1},\ldots,p_{d})\), which is also equal to the trace of \(\rho^{\yd{\varsigma}}\).
Accordingly, the optimal channel $\mathcal{T}$ can be chosen to respect the same block structure, and the utility decomposes accordingly as
\begin{equation}
\begin{aligned}
    \mathcal{F}_\mathrm{(\cdot)}\mleft(\mathcal{T}\mright) = \sum_{\yd{\varsigma}\vdash_{d}n} g^{\yd{\varsigma}}\schur{\yd{\varsigma}} \, {^kf}_{(\cdot)}^{\yd{\varsigma}}(\mathcal{T}) =\mleft\langle {^kf}_{(\cdot)}^{\yd{\varsigma}}(\mathcal{T})\mright\rangle_{\yd{\varsigma}\sim\mathrm{SW}(\boldsymbol{p})}\,, \\
\end{aligned}
\end{equation}
Here, $g^{\yd{\varsigma}}\schur{\yd{\varsigma}}$ is the probability mass assigned to $\yd{\varsigma}$ by the SW distribution, equivalently the probability of observing the input YD $\yd{\varsigma}$ under Schur sampling, and ${^kf}_{(\cdot)}^{\yd{\varsigma}}(\mathcal{T})$ denotes the sector-wise utility. Thus, the analysis of optimal QPA reduces to understanding the behavior of the SW distribution and the sector-wise fidelities.\\

The SW distribution has been thoroughly analyzed, and in particular its connection to the input spectrum has proved useful for spectrum estimation and related tasks \cite{OW15,OW16,OW17}. The normalized row lengths of \(\yd{\varsigma}\) converge to the spectrum, more precisely:
\begin{theorem}[\cref{thm:upper_bound_probability_row_differences}, Informal]
    \label{thm:main_part_concentration_YD}
    For \(1\leq i \leq d-1\) and \(\alpha>4/\sqrt{n}\), we have
    \begin{equation}
    \begin{aligned}
        \Pr\big( \, |\overline{\Delta}_{i,i+1} - D_{i,i+1}| \ge \alpha \, \big) \leq 2e^{-\,\frac{\,(\sqrt{n}\alpha - 4)^2}{32}} \,,
    \end{aligned}
    \end{equation}

where the normalized row gaps are given by \(\overline{\Delta}_{i,j} = \Delta_{i,j}/n\).
\end{theorem}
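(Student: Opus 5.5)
We prove the bound by comparing the Schur--Weyl random Young diagram with a sorted classical multinomial sample, and then applying Hoeffding-type concentration to each cumulative row sum.

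\emph{Step 1: prefix sums.} Write the row gap through prefix sums. Let $S_i(\yd{\varsigma}) = \varsigma_1+\cdots+\varsigma_i$ and $P_i = p_1+\cdots+p_i$. Then
\[
\Delta_{i,i+1} = 2S_i - S_{i-1} - S_{i+1}, \qquad D_{i,i+1} = 2P_i - P_{i-1} - P_{i+1},
\]
so
\[
|\overline{\Delta}_{i,i+1} - D_{i,i+1}| \le 2|S_i/n - P_i| + |S_{i-1}/n - P_{i-1}| + |S_{i+1}/n - P_{i+1}|.
\]
It therefore suffices to bound each deviation $|S_j/n - P_j|$ by $\alpha/4$. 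A union bound over the deviating events then gives an exponent of the form $(\sqrt{n}\alpha - c)^2/32$. Alternatively, one can work directly with $\varsigma_i$ and $\varsigma_{i+1}$, splitting $\alpha$ in half; this matches the factor $2$ and the $32 = 2\cdot 16$ more naturally.

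\emph{Step 2: two-sided control of the prefix sums.} For the upper tail we use the majorization result of O'Donnell--Wright \cite{OW16}. The Schur--Weyl random diagram is stochastically dominated, in cumulative sums, by the sorted empirical histogram of $n$ i.i.d.\ samples from $\boldsymbol{p}$ (equivalently, by RSK longest increasing subsequences):
\[
S_j(\yd{\varsigma}) \le \max_{|A|=j} \#\{\text{samples in } A\}.
\]
For the matching lower bound we use the complementary relation. Since $S_j(\yd{\varsigma}) \ge \#\{\text{samples in } \{1,\ldots,j\}\}$ in distribution, the lower tail is controlled by a binomial variable with mean $nP_j$.

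\emph{Step 3: Hoeffding with centering cost.} Hoeffding gives $\Pr(|X/n - P_j| \ge t) \le 2e^{-2nt^2}$ for binomial counts. The upper tail must also absorb the expected bias $\mathbb{E}S_j - nP_j$, which is $O(\sqrt{n})$. The bound $\mathbb{E}[\varsigma_1] \le np_1 + 2\sqrt{n}$, known from \cite{OW16}, is exactly what produces the shift $\sqrt{n}\alpha - 4$.

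\emph{Step 4: concentration of the diagram itself.} Rather than union-bounding over subsets $A$, we use McDiarmid's inequality on the RSK map. Changing one letter changes each $S_j$ by at most $1$ (Greene's theorem), so we get $\Pr(|S_j - \mathbb{E}S_j| \ge s) \le 2e^{-2s^2/n}$. Combining this with the centering bound, and allocating the deviation $\alpha n/4$ to the two affected row lengths $\varsigma_i$ and $\varsigma_{i+1}$, the Lipschitz constant of $\Delta_{i,i+1}$ is $2$. This gives $\Pr \le 2\exp(-(\sqrt{n}\alpha - 4)^2/32)$ once $\sqrt{n}\alpha > 4$, where the $4$ is the total centering bias.

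\emph{Main obstacle.} The crux is the bias estimate $|\mathbb{E}\Delta_{i,i+1} - nD_{i,i+1}| \le 4\sqrt{n}$, uniformly in $d$. For the upper side I would use the O'Donnell--Wright bound $\mathbb{E}S_j \le nP_j + 2\sqrt{jn}$. That bound is $j$-dependent, however, so for dimension-free constants I would instead bound $\mathbb{E}\varsigma_i - np_i$ row-wise, combining the majorization from above with the lower bound from the dominance relation below. I expect the exact constants $4$ and $32$ to require careful bookkeeping here.
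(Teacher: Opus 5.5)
Your Step~4 has the same skeleton as the paper's proof. You realize $\mathrm{SW}(\boldsymbol p)$ as the RSK shape of an i.i.d.\ word, use Greene's theorem to get bounded differences under a single-letter change, apply Azuma/McDiarmid to the Doob martingale of $\Delta_{i,i+1}$, and pay a centering shift. The exponent is not where the difficulty lies. Greene's theorem moves each $S_j$ by at most $1$, so $\Delta_{i,i+1}=2S_i-S_{i-1}-S_{i+1}$ moves by at most $4$. Your Lipschitz constant $2$ is not justified by what you cite, but it is also not needed. The paper uses Azuma with $c_i=4$, which gives exactly $2e^{-T^2/(32n)}$; McDiarmid would even give $2e^{-T^2/(8n)}$. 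With $T=n\alpha-4\sqrt n>0$ either yields the claim. So everything beyond Azuma is the centering estimate $|\mathbb{E}\Delta_{i,i+1}-nD_{i,i+1}|\le 4\sqrt n$, uniformly in $i$ and $d$.

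That estimate is the genuine gap. You correctly flag it as the crux, but you do not establish it, and the route you sketch cannot give it.

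\emph{Step~2's upper relation is reversed.} Each letter class is a weakly increasing subsequence, so pointwise $S_j(\varsigma)$ is at least the sum of the $j$ largest letter counts: the shape majorizes the sorted histogram. Hence $S_j(\varsigma)\le\max_{|A|=j}\#A$ fails. Already the word $12$ (with $n=2$) has $\varsigma_1=2$ while each letter appears once. Because the inequality is strict with positive probability, no stochastic domination in your direction holds either.

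\emph{Prefix-sum estimates cannot produce a $d$-independent shift.} Even granting $nP_j\le\mathbb{E}S_j\le nP_j+2\sqrt{jn}$, differencing gives only
\[
-2\sqrt{(i-1)n}-2\sqrt{(i+1)n}\;\le\;\mathbb{E}\Delta_{i,i+1}-nD_{i,i+1}\;\le\;4\sqrt{in}.
\]
This error grows with the row index, up to order $\sqrt{dn}$, which is exactly the dimension dependence the statement is meant to avoid. Your ``row-wise'' plan in the Main obstacle paragraph is just this combination, so it fails for the same reason.

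The paper does not derive the dimension-free centering bound either. It imports $|\langle\overline{\Delta}_{i,i+1}\rangle-D_{i,i+1}|\le 4/\sqrt n$ from O'Donnell--Wright (their Theorem~4.5, cited in the proof of \cref{thm:upper_bound_probability_row_differences}). That is a deeper fact about RSK on random words than majorization plus prefix bounds. Once you cite it, you can drop Steps~1--3, and your Step~4 finishes the proof exactly as in the paper.
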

Thus, at large $n$, the irreps obtained from Schur sampling have the property

\begin{equation}
\label{equ:main_part_proof_outline_Delta_approximately_nD}
    \Delta_{k-1,k} \approx nD_{k-1,k}\, , \quad \Delta_{k,k+1} \approx nD_{k,k+1} \,
\end{equation}
with probability approaching unity.
As we will see in the analysis of \({^kf}_{(\cdot)}^{\yd{\varsigma}}(\mathcal{T})\), the row gaps \(\Delta_{i,j}\), in particular the target row gaps \(\Delta_{k-1,k}\) and \(\Delta_{k,k+1}\), will be central for calculating the fidelities. \newline

\paragraph{The sector-wise all-site utility}
As discussed, the optimal sector-wise channel can be chosen as~\cref{equ:stinespring_dilation}. Therefore, from~\cref{equ:SW_decomposition_input_state} we get that the sector-wise utility \({^kf}_{\rm all}^{\yd{\varsigma}}(\mathcal{T})\) is given by
\begin{equation}
\begin{aligned}
    \label{equ:main_part_proof_outline_irrep_level_fidelity}
    {^kf}_{\rm all}^{\yd{\varsigma}}(\mathcal{T}) &= \Tr\mleft(\psi_k^{\otimes m} \extChannel{\mathcal{T}}{\yd{\varsigma}}{\yd{m}}{\yd{\mu}} \mleft(\frac{\rho^{\yd{\varsigma}}}{\Tr(\rho^{\yd{\varsigma}})}\mright) \mright) \,,
\end{aligned}
\end{equation}
By unitary covariance, it suffices to consider a diagonal input state with eigenstates $\ket{\psi_i}$ and eigenvalues $\boldsymbol{p}$. For the proof, we relabel the eigenbasis so that the target eigenstate $\ket{\psi_k}$ is mapped to $\ket{d}$. Define the sorting permutation
\begin{equation}
\sigma(i) \coloneqq
\mleft\{\begin{aligned}
    &i, && i < k, \\
    &d, && i = k, \\
    &i-1, && i > k.
\end{aligned}\mright.
\end{equation}
We then set $
\ket{\psi_i} \coloneqq \ket{\sigma(i)},$
and define the relabeled probability vector $\boldsymbol{q}\dot{=}(q_1,\ldots,q_d)$ by $
    q_{\sigma(i)} \coloneqq p_i.$
We expand \(\rho^{\yd{\varsigma}}\) in the Gel'fand--Tsetlin (GT) basis, equivalently labeled by WTs, as
\begin{equation}
\begin{aligned}
    \rho^{\yd{\varsigma}} = \sum_{\wt{w}\vdash_{d}\yd{\varsigma}}\ketbra{\wt{w}}{\wt{w}} {\boldsymbol{q}}^{\#\mleft(\wt{w}\mright)} \, ,
\end{aligned}
\end{equation}
where the monomial weight is
\begin{equation}
\label{equ:main_part_proof_outline_monomial_weights}
{\boldsymbol{q}}^{\#\mleft(\wt{w}\mright)} = \prod_{i=1}^{d}q_{i}^{\#_{i}\mleft(\wt{w}\mright)},
\end{equation}
where \(\#_{i}\mleft(\wt{w}\mright)\) denotes the occupancy of \(i\), i.e., the number of \(i\)'s in the WT \(\wt{w}\).
Combining~\cref{equ:main_part_proof_outline_irrep_channel_decomposition} with~\cref{equ:main_part_proof_outline_irrep_level_fidelity}, and using that \(\intertwiner{W}{\yd{\varsigma}}{\yd{\mu}}{\yd{m}}{}\) is given by CG transforms, the sector-wise fidelity evaluates to
\begin{equation}
\begin{aligned}
\label{equ:main_part_proof_outline_irrep_fidelity}
    {^kf}_{\rm all}^{\yd{\varsigma}}(\mathcal{T}) = \sum_{\substack{\wt{v}\vdash_{d}\yd{\mu}, \\ \wt{w}\vdash_{d}\yd{\varsigma}}} \mleft|\bra{\wt{v}}_{\yd{\mu}} \bra{d}^{\otimes m}_{\yd{m}}\ket{\wt{w}}_{\yd{\varsigma}}\mright|^2 \frac{{\boldsymbol{q}}^{\#\mleft(\wt{w}\mright)}}{\Tr(\rho^{\yd{\varsigma}})} \, .
\end{aligned}
\end{equation}
This can be interpreted as the average of the fidelity component
\begin{equation}
\label{equ:main_part_proof_outline_f_all_as_average}
        {^kf}_{\rm all}^{\yd{\varsigma}}(\mathcal{T}) = \weylavg{\res{f}{\yd{\mu}}{\yd{m}}{\yd{\varsigma}}_\mathrm{all}\mleft(\wt{w}\mright)}{\boldsymbol{q}}{\yd{\varsigma}} \, ,
\end{equation}
with respect to the Weyl distribution \(\mathrm{W}(\boldsymbol{p},\yd{\varsigma})\), whose probability mass is defined by normalizing the monomial weights in~\cref{equ:main_part_proof_outline_monomial_weights}. Using the \(SU(d)\) Clebsch--Gordan coefficients (CGCs) of Ref.~\cite{KV95}, the higher-dimensional analogue of Wigner \(3j\)-symbols, we find:
\begin{theorem}[\cref{thm:utility_component}, informal]
    \begin{equation}
    \begin{aligned}
\label{equ:main_part_proof_outline_CG_coefficient_formula}
        &\res{f}{\yd{\mu}}{\yd{m}}{\yd{\varsigma}}_\mathrm{all}\mleft(\wt{w}\mright) \\
        =&\binom{m}{\mathbf m}\;
        \frac{
        \displaystyle
        \prod_{\substack{1\le i\le d\\ 1\le j\le d-1}}
        \rpoch{\Delta_{j,i} - \#_{d,j}\mleft(\wt{w}\mright) + (i - j)}{m_i}
        }{
        \displaystyle
        \prod_{\substack{1\le i,j\le d\\ i\neq j}}
        \rpoch{\Delta_{j,i} + (i - j)+1}{m_i}
        } \, .
    \end{aligned}
    \end{equation}
\end{theorem}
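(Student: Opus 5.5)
We compute the squared overlap $|\bra{\wt{v}}_{\yd{\mu}}\bra{d}^{\otimes m}_{\yd{m}}\ket{\wt{w}}_{\yd{\varsigma}}|^2$, summed over $\wt{v}$, by building the CG coupling $\yd{\mu}\otimes\yd{m}\to\yd{\varsigma}$ one level of the $U(d)\supset U(d-1)\supset\cdots\supset U(1)$ chain at a time. In the GT basis, an $SU(d)$ CGC for coupling with the symmetric irrep $\yd{m}$ factorizes (Biedenharn--Louck / Racah factorization lemma) into a product over levels $j=d,d-1,\ldots,1$ of reduced Wigner coefficients times a $U(j-1)$ CGC. Since the symmetric vector $\ket{d}^{\otimes m}$ has all its content in the letter $d$, at every level below $d$ its GT pattern is the trivial pattern $\yd{0}$. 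The $U(d-1)$ coupling therefore collapses to the identity. Only the top-level reduced Wigner coefficient survives, coupling pattern row $d$ of $\wt{v}$ and of $\wt{w}$ with the $(m,0,\ldots,0)$ pattern whose row $d-1$ is empty.

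\textbf{Step 1.} Write $\wt{w}$ as a GT pattern with top row $\yd{\varsigma}$ and second row $\yd{\varsigma}^{(d-1)}$. The conditions above force $\wt{v}$ to have top row $\yd{\mu}$ and the same lower rows as $\wt{w}$. So the sum over $\wt{v}$ has exactly one term, fixed by betweenness. The differences $\varsigma^{(d-1)}_j$ versus $\varsigma_j$ encode $\#_{d,j}(\wt{w})$, the number of letters $d$ in row $j$ of the WT.

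\textbf{Step 2.} Insert the closed-form reduced Wigner coefficient for $U(d)\otimes\yd{m}$ with the lower pattern trivial, taken from Ref.~\cite{KV95}. This coefficient is a ratio of products of factorials in the partial hook lengths $l_i=\varsigma_i-i$ and $l^{(d-1)}_j=\varsigma^{(d-1)}_j-j$, evaluated at shifts $m_i=\varsigma_i-\mu_i$. Here $\mathbf m=(m_1,\ldots,m_d)$ is the removal vector, with $\sum_i m_i=m$.

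\textbf{Step 3.} Square the coefficient and rewrite each factorial ratio as a Pochhammer symbol $\rpoch{x}{m_i}$. The numerator factors then take the form $l_j^{(d-1)}-l_i$-type expressions, namely $\Delta_{j,i}-\#_{d,j}(\wt{w})+(i-j)$. The denominator factors involve only top-row differences, namely $\Delta_{j,i}+(i-j)+1$. The multinomial $\binom{m}{\mathbf m}$ comes from normalizing the symmetric vector, i.e.\ from the dimension ratio of weight spaces in $\yd{m}$.

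\textbf{Step 4 (main obstacle).} Check signs and index ranges so that the product takes exactly the stated form, with $i$ over all of $1..d$ and $j$ over $1..d-1$ in the numerator, and $i\neq j$ in the denominator. Zero-length Pochhammers, from $m_i=0$, must drop out correctly. Negative arguments arising from virtual boxes (the staircase case where $\mu$ can go negative) must be handled by analytic continuation of the Pochhammer symbols. I would first sanity-check the result against $d=2$, where it must reduce to the familiar $SU(2)$ formula $\binom{\varsigma_1-\varsigma_2-\#}{m}/\binom{\varsigma_1-\varsigma_2}{m}$-like ratio. I would also check it against $m=1$, the single-box case of Ref.~\cite{LFIC24}. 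Both checks should be made before trusting the general bookkeeping.
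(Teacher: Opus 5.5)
Your route is essentially the paper's. The paper takes the Vilenkin--Klimyk coefficient for coupling with the pattern $(m,0,\ldots,0)$ over a trivial lower row, which is exactly your single surviving top-level reduced Wigner coefficient, after correcting typos in that reference (so your planned $d=2$ and $m=1$ checks are well placed); it then squares it and rewrites it as the stated Pochhammer ratio, with $\binom{m}{\mathbf m}$ arising simply as the prefactor $m!$ divided by the diagonal factorials $m_i!$. One point to tighten in your Step 1: the sum over $\wt{v}$ has \emph{at most} one term, and when betweenness fails (some $\#_{d,i}(\wt{w})<m_i$ with $i\le d-1$) you must check that the right-hand side vanishes, which it does through the $j=i$ numerator factor $\rpoch{-\#_{d,i}(\wt{w})}{m_i}$.
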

Here, \(\boldsymbol{m}\dot{=}(m_{1},\ldots,m_{d})\) with \(m_{i} = \varsigma_{i} - \mu_{i}\), and \(\#_{d,i}\mleft(\wt{w}\mright)\) denotes the occupancy of \(d\)'s on the $i$-th row. The main challenge now is to derive the asymptotic expansion and nonasymptotic estimates for the sector-wise fidelity.\newline

\paragraph{Asymptotic behavior of \({^kf}_{\rm all}^{\yd{\varsigma}}(\mathcal{T})\)}
For the asymptotic analysis, we generalize the approach taken in Ref.~\cite{LFIC24}. The concentration of row gaps from~\cref{equ:main_part_proof_outline_Delta_approximately_nD} leads us to introduce the \emph{typical set}
\begin{equation}
    \mathsf{R}_{\epsilon} \coloneqq \left\{\overline{\yd{\varsigma}} \, \middle| \, \yd{\varsigma}\vdash_d n,\ \overline{\Delta}_{i,j}\geq\epsilon \right\} \, .
\end{equation}
Equivalently, these are diagrams for which the row gaps grow sufficiently quickly. For a nondegenerate spectrum \(p_{1}>\ldots>p_{d}\), one may choose
\begin{equation}
  0<  \epsilon < \min_{i<j}D_{i,j} \, ,
\end{equation}
such that the typical set contains the point corresponding to \(D_{i,j}\) and remains separated from the degeneracies \(\overline{\Delta}_{i,j}=0\).

The role of the row gaps appears directly through the unnormalized monomial weights defining the Weyl distribution, given in~\cref{equ:main_part_proof_outline_monomial_weights}. The largest monomial weight is achieved by the unique WT \(\wt{\lw_{\sigma}}\) with
\begin{equation}
{\boldsymbol{q}}^{\#\mleft(\wt{\lw_{\sigma}}\mright)} = \prod_{i=1}^{d}p_{i}^{\varsigma_{i}} = \prod_{i=1}^{d}q_{\sigma(i)}^{\varsigma_{i}} \, .
\end{equation}

This WT is obtained by filling the rows \(1\leq i \leq k-1\) with \(\varsigma_{i}\) letters \(i\), filling the rows \(k\leq i \leq d-1\) with \(\varsigma_{i+1}\) letters \(i\), and filling the remaining boxes with \(d\)'s. For example, for \(d=5\), \(k=2\), and \(\yd{\varsigma}\dot{=}\yd{7,5,3,1,0}\), \cref{fig:wt_deviation_example} shows \(\wt{\lw_\sigma}\) in panel (a) and an arbitrary \(\wt{w}\) with the corresponding deviations in panel (b).
\begin{figure}[H]
    \centering
    \includegraphics[width=\columnwidth]{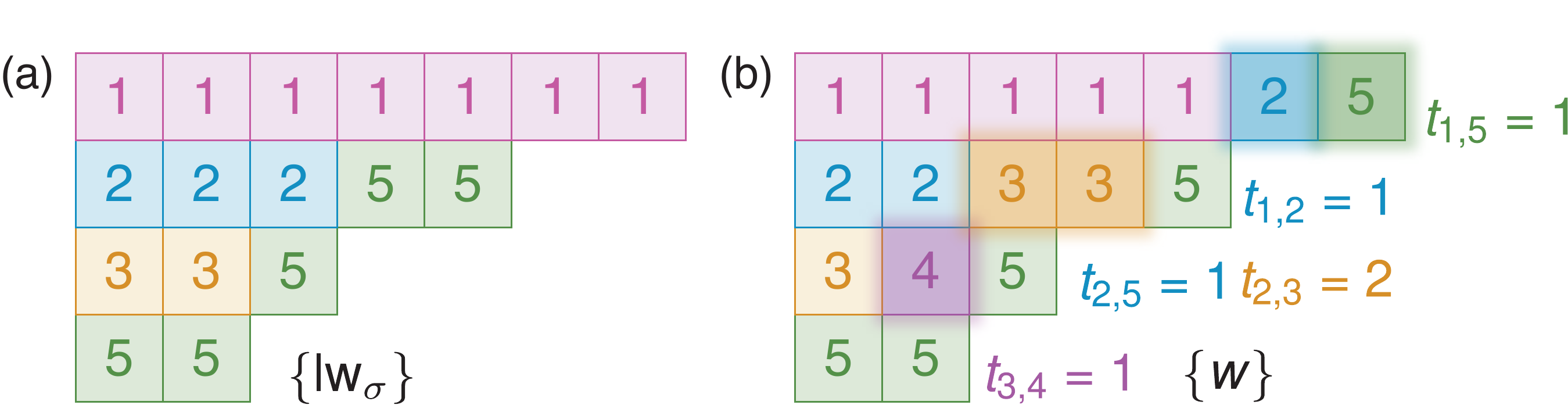}
    \caption{Example WT parametrization for \(d=5\), \(k=2\), and \(\yd{\varsigma}\dot{=}\yd{7,5,3,1,0}\). (a) The lowest-weight WT \(\wt{\lw_\sigma}\). (b) An arbitrary WT \(\wt{w}\) together with its deviation variables ${t_{i,j}}$.}
    \label{fig:wt_deviation_example}
\end{figure}

For an arbitrary WT \(\wt{w}\), we parametrize its deviation from \(\wt{\lw_{\sigma}}\). Changing an entry \(i\) to an entry \(j\) contributes the relative factor \(q_{j}/q_{i}\). For \(1\leq i<j \leq d-1\) or \(1\leq i<k, j=d\) we set
\begin{equation}
    t_{i,j} \coloneqq \#_{j,i}(\wt{w}) \, , \quad  r_{i,j}\coloneqq \frac{q_{j}}{q_{i}} \, .
\end{equation}
The deviation variables \(t_{i,j}\) count how many \(i\)'s in row \(i\) are replaced by \(j\)'s, contributing the factor \(r_{i,j}^{t_{i,j}}\). For \(k\leq i \leq d-1\), we similarly set
\begin{equation}
    t_{i,d} \coloneqq \Delta_{i,i+1} - \#_{d,i}(\wt{w}) \, , \quad  r_{i,d}\coloneqq \frac{q_{i}}{q_{d}} \, .
\end{equation}
Here \(t_{i,d}\) counts how many \(d\)'s in row \(i\) are replaced by \(i\)'s, again contributing the factor \(r_{i,d}^{t_{i,d}}\).
Thus \(\boldsymbol{t}\dot{=}\{t_{1,2},\ldots,t_{d-1,d}\}\) uniquely parametrizes \(\wt{w}\).
With the parametrization above, the fidelity component in~\cref{equ:main_part_proof_outline_CG_coefficient_formula} takes a factorized form after substituting
\begin{equation}
    \#_{d,i}(\wt{w}) =
    \left\{\begin{aligned}
        &t_{i,d}, & i<k, \\[4pt]
        &\Delta_{i,i+1}-t_{i,d}, & k\leq i < d, \\[4pt]
        &\varsigma_{d}, & i=d.
    \end{aligned}\right.
\end{equation}

The monomial weights are proportional to
\begin{equation}
\frac{{\boldsymbol{q}}^{\#\mleft(\wt{w}\mright)}}{{\boldsymbol{q}}^{\#\mleft(\wt{\lw_{\sigma}}\mright)}} = \prod_{i<j}r_{i,j}^{t_{i,j}}.
\end{equation}
Thus, before imposing admissibility, the probability mass factors as a product of geometric weights in the variables \(\boldsymbol{t}\), with parameters \(\boldsymbol{r}\dot{=}\{r_{1,2},\ldots,r_{d-1,d}\}\). The actual Weyl distribution is obtained by restricting this geometric distribution to the allowed values of \(\boldsymbol{t}\) by the WT admissibility constraints.
For large row gaps, these constraints become inactive: if \(\yd{\varsigma}\in \mathsf{R}_{\epsilon}\), then we choose $\epsilon$ small enough such that every \(\boldsymbol{t}\) satisfying $0\leq t_{i,j} \leq n\epsilon$
is admissible, and the remaining geometric tail is exponentially small. This gives the following theorem.
\begin{theorem}[\cref{thm:asymptotics_weyl_distribution}, informal]
\label{thm:main_part_limit_Weyl_average_geometric_average}
    For the all-site utility component,
    \begin{equation}
        \weylavg{\res{f}{\yd{\mu}}{\yd{m}}{\yd{\varsigma}}_\mathrm{all}\mleft(\wt{w}\mright)}{\boldsymbol{q}}{\yd{\varsigma}} \!\!\!\!= \mleft\langle \res{f}{\yd{\mu}}{\yd{m}}{\yd{\varsigma}}_\mathrm{all}\mleft(\boldsymbol{t}\mright) \mright\rangle_{\mathrm{G}(\boldsymbol{r})} \!\!\!\!+\,\, \bigO[d,\mathbf{r},\epsilon][n]{\exp(-n)} \, .
\end{equation}
\end{theorem}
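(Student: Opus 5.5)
The plan is to compare the truncated Weyl average with the full geometric average by controlling the mass that lies outside a common box of admissible deviations. Everything below is stated for $\yd{\varsigma}\in\mathsf{R}_\epsilon$; this is where the theorem is used, since atypical sectors are handled separately by \cref{thm:main_part_concentration_YD}.

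\textbf{Step 1: set up the geometric model.} Write the Weyl distribution as
\[
P_{\mathrm W}(\boldsymbol t)=\frac{\prod_{i<j} r_{i,j}^{t_{i,j}}\,\mathds{1}[\boldsymbol t\in\mathcal A]}{Z_{\mathcal A}},
\]
where $\mathcal A$ is the set of admissible deviation vectors, i.e.\ those coming from valid WTs of shape $\yd{\varsigma}$. Take $\mathrm G(\boldsymbol r)$ to be the product of independent geometric laws on $\mathbb Z_{\ge0}$. By construction of the relabeling $\boldsymbol q$ and the lowest-weight tableau $\wt{\lw_\sigma}$, every ratio satisfies $r_{i,j}<1$ under a nondegenerate spectrum, so $\mathrm G(\boldsymbol r)$ is normalizable. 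Let $\bar r=\max r_{i,j}<1$.

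\textbf{Step 2: a box of admissible deviations.} Show that for $\yd{\varsigma}\in\mathsf R_\epsilon$ and a suitable $\epsilon'=c_d\epsilon$, the box
\[
B=\{\,0\le t_{i,j}\le n\epsilon'\,\}
\]
is contained in $\mathcal A$. The reason is that each semistandard constraint on the tableau is a linear inequality of the form $\sum(\text{some } t\text{'s})\le \Delta_{a,b}$, or a row-length condition. All row gaps are at least $n\epsilon$, so these inequalities hold automatically once every $t_{i,j}\le n\epsilon/d^2$.

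Conversely, $\mathcal A$ is contained in $\mathbb Z_{\ge0}^{\binom d2}$. This gives two tail estimates:
\[
\Pr_{\mathrm G}(B^c)\le \tbinom d2\,\bar r^{\,n\epsilon'},
\qquad
Z_{\mathcal A}\ge Z_B=Z_{\mathrm G}\,(1-\Pr_{\mathrm G}(B^c)).
\]

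\textbf{Step 3: bound the fidelity component uniformly.} Show that $\res{f}{\yd{\mu}}{\yd{m}}{\yd{\varsigma}}_\mathrm{all}$ is bounded uniformly. It is a squared overlap summed against a probability, so it lies in $[0,1]$ on $\mathcal A$, since it is a fidelity component.

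The statement asks for the geometric average of $f(\boldsymbol t)$ as written in the factorized Pochhammer formula, and this expression is defined for all $\boldsymbol t$, including inadmissible ones. Off $\mathcal A$, each factor $\rpoch{x}{m_i}$ with $|x|\le O(n)$ is at most polynomial in $n$. This gives $|f(\boldsymbol t)|\le \mathrm{poly}(n,|\boldsymbol t|)$, which is all that is needed because the geometric tail beats any polynomial.

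\textbf{Step 4: compare the two averages.} Decompose
\[
\langle f\rangle_{\mathrm W}-\langle f\rangle_{\mathrm G}
=\sum_{B}f\Bigl(\tfrac{w}{Z_{\mathcal A}}-\tfrac{w}{Z_{\mathrm G}}\Bigr)
+\sum_{\mathcal A\setminus B}\tfrac{fw}{Z_{\mathcal A}}
-\sum_{B^c}\tfrac{fw}{Z_{\mathrm G}},
\]
and bound each term:
\begin{enumerate}[leftmargin=*]
\item The first term is at most $|1-Z_{\mathrm G}/Z_{\mathcal A}|\le \Pr_{\mathrm G}(B^c)/(1-\Pr_{\mathrm G}(B^c))$.
\item The second term is at most $\Pr_{\mathrm G}(B^c)/(1-\Pr_{\mathrm G}(B^c))$, using $f\le1$.
\item The third term is at most $\sum_{\boldsymbol t\notin B}\mathrm{poly}(n,|\boldsymbol t|)\,\bar r^{|\boldsymbol t|_\infty}$. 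This is $\mathrm{poly}(n)\,\bar r^{\,n\epsilon'}$, by a standard polylog-times-geometric tail sum.
\end{enumerate}
Altogether the difference is $O_{d,\boldsymbol r,\epsilon}(e^{-c n})$ with $c=\epsilon'\ln(1/\bar r)$. The polynomial prefactor and the constant $c$ are absorbed into the $O_{d,\mathbf r,\epsilon}$ notation in the stated $\exp(-n)$ form.

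\textbf{Main obstacle.} The delicate step is Step 2: writing down the WT admissibility constraints explicitly in the $\boldsymbol t$ variables. This is especially tricky for rows $k\le i\le d-1$, where $t_{i,d}$ counts $d$'s turned into $i$'s and these interact with the shifted filling $\varsigma_{i+1}$. The goal there is to verify that all column-strictness and row-weak-increase conditions reduce to inequalities bounded by row gaps $\Delta_{a,b}$ up to additive $O(d)$ constants. The first approach I would try is to check this through the GT-pattern interlacing inequalities: each interlacing condition is linear in $\boldsymbol t$ with coefficients in $\{0,\pm1\}$, and its slack at $\boldsymbol t=0$ is a row gap.
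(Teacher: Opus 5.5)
Your skeleton is the paper's. For $\overline{\yd{\varsigma}}\in\mathsf R_\epsilon$, the box $[0,L]^D$ with $L=\lfloor \epsilon n/(2D)\rfloor$ lies in the admissible domain. This is because each bound $\Delta_{\mathrm{eff}}(a,\ell)$ from the path-graph parametrization is a row gap $\ge n\epsilon$ plus a signed sum of at most $D$ other edge variables; that is exactly the interlacing check you flag as the main obstacle. The geometric mass outside the box is exponentially small, and the normalization is handled by rerunning the estimate with $f\equiv1$.

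The one real divergence is Step 3. The paper never evaluates anything off the admissible set. It proves the estimate for an arbitrary $f_n$ on $\mathbb Z_{\ge0}^D$ obeying the $n$-uniform growth bound $|f_n(\boldsymbol t)|\le C_{d,\epsilon}\,s^{\|\boldsymbol t\|_1}$ with $s\bar r<1$. You instead commit to the Pochhammer formula as the extension of $f$ and bound it by $\mathrm{poly}(n,|\boldsymbol t|)$. For fixed $m$ this is correct: the degree is $O(dm)$, independent of $n$, and the extra $\mathrm{poly}(n)$ is absorbed into the exponential. So in the intensive regime your proof is complete.

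The gap is the extensive regime $m=\Theta(n)$, for which the main text also invokes this theorem. There the number of Pochhammer factors grows with $n$ and $\binom{m}{\mathbf m}$ is exponentially large, so Step 3 gives no polynomial bound. Worse, with the Pochhammer extension the claimed identity itself fails. Take $d=2$, $k=1$, so $f(t)=\fpoch{\Delta_{1,2}-t}{m}/\fpoch{\Delta_{1,2}}{m}$ and $r=p_2/p_1$. For inadmissible $t=\Delta_{1,2}+x$ one has $f(t)=(-1)^m\rpoch{x}{m}/\fpoch{\Delta_{1,2}}{m}$, which has constant sign, so nothing cancels. With $\overline\Delta_{1,2}=0.05$, $m/n=0.049$, $r=0.9$ and $x=0.05n$, Stirling gives $|f(t)|\,r^{t}=e^{0.053n+O(\ln n)}$. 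The geometric average is then exponentially large, while the Weyl average lies in $[0,1]$.

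The missing idea is that the extension of $f$ off $\mathcal A$ is a free choice, invisible to the Weyl average, and must be made so that the growth hypothesis holds. In the extensive case the paper does this through the delta reduction and the cutoff convention $P_n^\pm(t)=0$ for $t\ge A_n$. These yield an extension of the form (bounded main term) $+\,O(n^{-1}s^{\|\boldsymbol t\|_1})$, which satisfies the growth hypothesis.
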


In the intensive case, where \(m\) is fixed, \cref{thm:main_part_limit_Weyl_average_geometric_average} allows us to evaluate the \(\boldsymbol{t}\)-average explicitly and expand the sector-wise utility as
\begin{equation}
    ^k\res{f_\mathrm{all}}{\yd{\varsigma-\mathbf{e}_k}}{\yd{m}}{\yd{\varsigma}} = 1 - \sum_{i\neq k}\frac{1}{\Delta_{k,i}}\frac{p_{i}}{D_{k,i}}
    + \bigO[d,\mathbf{r},\epsilon][n]{n^{-2}} \,.
\end{equation}
where the remainder is uniformly controlled for $\overline{\yd{\varsigma}}\in\mathsf{R}_\epsilon$. Applying~\cref{lem:sw_averaging_uniform_sector_expansion} recovers the intensive all-site expansion in~\cref{equ:intensive_asymptote}. In the extensive regime where \(m\) scales with \(n\), the same framework gives the leading \(\bigO{1}\) term in~\cref{equ:extensive_asymptote}.

In~\cref{subsec:gt_param_supp}, we develop the general path-graph parametrization of WTs for arbitrary orderings of \(\boldsymbol{q}\), with the parametrization above appearing as a special case.

\paragraph{Nonasymptotic bounds for \({^kf}_{\rm all}^{\yd{\varsigma}}(\mathcal{T})\)}
We outline the lower bounds on the fidelity when \(\yd{\mu}\) is given by the row-removal rule and \(m<\Delta_{k,k+1}\). This restriction is justified by~\cref{equ:main_part_proof_outline_Delta_approximately_nD}: after choosing \(m<nD_{k,k+1}\), the complementary case gives only trivial bounds. The upper bounds for the fidelity when \(\yd{\mu}\) is not given by the row-removal rule follow from very similar calculations. We first consider the simplest cases: either \(k=1\) and \(\yd{\varsigma'}\dot{=}\yd{\varsigma_{1},0,\ldots,0}\), or \(k=d\) and \(\yd{\varsigma''}\dot{=}\yd{\varsigma_{1},\ldots,\varsigma_{1},0}\). In both cases, our assumption gives \(m<\varsigma_{1}\). For the first case, with \(k=1\) and \(\yd{\varsigma'}\), the row-removal rule gives \(\yd{\mu'}\dot{=}\yd{\varsigma' - m\boldsymbol{e}_{1}}\), and~\cref{equ:main_part_proof_outline_CG_coefficient_formula} takes the simple form

\begin{equation}
\begin{aligned}
    \label{equ:main_part_proof_outline_f_all_symmetric_irrep}
    \res{f}{\yd{\mu'}}{\yd{m}}{\yd{\varsigma'}}_\mathrm{all}\mleft(\wt{w}\mright) = \frac{
    \fpoch{\#_{d}\mleft(\wt{w}\mright)}{m}
    }{
    \fpoch{\varsigma_{1}}{m}
    } \, .
\end{aligned}
\end{equation}
Since \(\yd{\varsigma'}\) labels the symmetric subspace, the WTs are ordered lists with entries \(1,\ldots,d\), so averages over \(\mathrm{W}(\boldsymbol{q},\yd{\varsigma'})\) are more tractable. A direct calculation gives
\begin{equation}
    \label{equ:main_part_proof_outline_symmetric_average}
        \weylavg{\fpoch{\#_{d}}{m}}{\boldsymbol{q}}{\yd{\varsigma'}} \geq \fpoch{\varsigma_{1} - \sum_{i=2}^{d}\frac{p_{i}}{p_{1} - p_{i}}}{m} \, .
\end{equation}
Similarly, for \(k=d\) and \(\yd{\varsigma''}\dot{=}\yd{\varsigma_{1},\ldots,\varsigma_{1},0}\), the row-removal rule gives \(\yd{\mu''}\dot{=}\yd{\varsigma'' - m\boldsymbol{e}_{d}}\), and~\cref{equ:main_part_proof_outline_CG_coefficient_formula} takes the form
\begin{equation}
\begin{aligned}
    \label{equ:main_part_proof_outline_f_all_dual_symmetric_irrep}
    \res{f}{\yd{\mu''}}{\yd{m}}{\yd{\varsigma''}}_\mathrm{all}\mleft(\wt{w}\mright) = \frac{
    \rpoch{\varsigma_{1} - \#_{d}\mleft(\wt{w}\mright) + 1}{m}
    }{
    \rpoch{\varsigma_{1}+d}{m}
    } \, .
\end{aligned}
\end{equation}
The irrep labeled by \(\yd{\varsigma''}\) is dual to the irrep denoted by \(\yd{\varsigma'}\). This again allows us to use the simple structure of symmetric subspaces to obtain
\begin{equation}
\begin{aligned}
\label{equ:main_part_proof_outline_dual_symmetric_average}
    \weylavg{\rpoch{\varsigma_{1} - \#_{d} + 1}{m}}{\boldsymbol{q}}{\yd{\varsigma''}}
    \geq \rpoch{\varsigma_{1} + d - \sum_{i=1}^{d-1}\frac{p_{i}}{p_{i} - p_{d}}}{m} \, .
\end{aligned}
\end{equation}
Putting~\cref{equ:main_part_proof_outline_f_all_as_average,equ:main_part_proof_outline_f_all_symmetric_irrep,equ:main_part_proof_outline_symmetric_average,equ:main_part_proof_outline_f_all_dual_symmetric_irrep,equ:main_part_proof_outline_dual_symmetric_average} together, we get
\begin{equation}
\begin{aligned}
    {^1f}_{\rm all}^{\mathcal{T}}(\yd{\varsigma'}) \geq & \frac{
    \fpoch{\varsigma_{1} - \sum_{i=2}^{d}\frac{p_{i}}{p_{1} - p_{i}}}{m}
    }{
    \fpoch{\varsigma_{1}}{m}
    } \, , \\
    {^df}_{\rm all}^{\mathcal{T}}(\yd{\varsigma''}) \geq &\frac{
    \rpoch{\varsigma_{1} + d - \sum_{i=1}^{d-1}\frac{p_{i}}{p_{i} - p_{d}}}{m}
    }{
    \rpoch{\varsigma_{1} + d}{m}
    } \, .
\end{aligned}
\end{equation}

We now reduce the more complicated expression \({^kf}_{\rm all}^{\yd{\varsigma}}(\mathcal{T})\), for general \(k\) and \(\yd{\varsigma}\), to a product of these two expressions. The assumption \(m<\Delta_{k,k+1}\) implies that the row-removal rule gives \(\yd{\mu} = \yd{\varsigma - m\boldsymbol{e}_{k}}\). Straightforward inequalities yield
\begin{equation}
\begin{aligned}
    \label{equ:main_part_proof_outline_f_all_WT_level_lower_bound}
    \res{f}{\yd{\mu}}{\yd{m}}{\yd{\varsigma}}_\mathrm{all}\mleft(\wt{w}\mright) &\geq \frac{\fpoch{\#_{d}|_{k}^{d}\mleft(\wt{w}\mright) -\varsigma_{k+1})}{m}}{\fpoch{\Delta_{k,k+1}}{m}} \\
    &\quad \frac{\rpoch{\Delta_{k-1,k} + 2 - (\#_{d}|_{1}^{k-1}\mleft(\wt{w}\mright) + k - 1)}{m}}{\rpoch{\Delta_{k-1,k} + 2}{m}} \, .
\end{aligned}
\end{equation}
Here
\begin{equation}
    \#_{d}|_{i}^{j}\mleft(\wt{w}\mright) \coloneqq \sum_{\ell = i}^{j} \#_{d,\ell}\mleft(\wt{w}\mright)
\end{equation}
denotes the number of \(d\)'s in rows \(i\) through \(j\). There are two obstructions to applying lower bounds analogous to~\cref{equ:main_part_proof_outline_symmetric_average,equ:main_part_proof_outline_dual_symmetric_average}. First, the two factors are not independent, so their joint average cannot be split into a product of averages. Second, the average is over an arbitrary irrep \(\yd{\varsigma}\), rather than one of the special cases \(\yd{\varsigma'},\yd{\varsigma''}\) discussed above. To overcome the first obstruction, we develop the theory of \emph{gYDs with constraints}. This theory yields the following result.
\begin{theorem}[\cref{cor:splitting_diagram_average_weight}, informal]
    \label{thm:main_part_splitting_diagram_average_weight}
    Let
    \begin{equation}
        D|_{1}^{k-1}:\mathbb{Z}_{\geq0} \rightarrow \mathbb{R}_{\geq 0} \, , \quad I|_{k}^{d}:\mathbb{Z}_{\geq0} \rightarrow \mathbb{R}_{\geq 0}
    \end{equation}
    be two functions such that \(D|_{1}^{k-1}\) is non-increasing and \(I|_{k}^{d}\) is non-decreasing. Then
    \begin{equation}
    \begin{aligned}
    \label{equ:main_part_proof_outline_split_average_YD}
        &\weylavg{D|_{1}^{k-1}\mleft(\#_{d}|_{1}^{k-1}\mleft(\wt{w}\mright)\mright) I|_{k}^{d}\mleft(\#_{d}|_{k}^{d}\mleft(\wt{w}\mright)\mright)}{\boldsymbol{q}}{\yd{\varsigma}} \\
        \geq \, &\weylavg{D|_{1}^{k-1}\mleft(\#_{k,1},\ldots,\#_{k,k-1}\mright)}{(\boldsymbol{q}|_{1}^{k-1},q_{d})}{\yd{\Delta|_{1}^{k}}} \\
        &\quad \weylavg{I|_{k}^{d}\mleft(\#_{d-k+1,1},\ldots,\#_{d-k+1,d-k+1}\mright)}{\boldsymbol{q}|_{k}^{d}}{\yd{\varsigma|_{k}^{d}}} \, .
    \end{aligned}
\end{equation}
Here \((\boldsymbol{q}|_{1}^{k-1},q_{d})\dot{=}(q_{1},\ldots,q_{k-1},q_{d})\), \(\boldsymbol{q}|_{k}^{d}\dot{=}(q_{k},\ldots,q_{d})\), \(\yd{\Delta|_{1}^{k}}\dot{=}\yd{\Delta_{1,k},\ldots,\Delta_{k-1,k},0}\), and \(\yd{\varsigma|_{k}^{d}}\dot{=}\yd{\varsigma_{k},\ldots,\varsigma_{d}}\).
\end{theorem}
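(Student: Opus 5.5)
Our plan is to prove the inequality by decomposing each Weyl tableau $\wt{w}\vdash_d\yd{\varsigma}$ along the occurrences of the letter $d$, and then removing the coupling between the two blocks with an FKG/Harris-type correlation inequality. Because $\boldsymbol{q}$ has the target letter relabelled to $d$, the entries $<d$ form a semistandard subtableau of shape $\yd{\nu}\subseteq\yd{\varsigma}$, and the $d$'s fill the horizontal strip $\yd{\varsigma}/\yd{\nu}$. The row counts $\#_{d,i}(\wt{w})=\varsigma_i-\nu_i$ are therefore encoded entirely by $\yd{\nu}$, with interlacing $\varsigma_{i+1}\le\nu_i\le\varsigma_i$. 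This gives the branching identity
\begin{equation}
s_{\yd{\varsigma}}(\boldsymbol{q})=\sum_{\yd{\nu}} q_d^{|\yd{\varsigma}|-|\yd{\nu}|}\,s_{\yd{\nu}}(q_1,\ldots,q_{d-1}).
\end{equation}
The Weyl distribution thus induces a law on $\yd{\nu}$, and the two arguments $X\coloneqq\#_d|_1^{k-1}$ and $Y\coloneqq\#_d|_k^d$ depend only on the top rows $(\nu_1,\ldots,\nu_{k-1})$ and the bottom rows $(\nu_k,\ldots,\nu_{d-1})$, respectively.

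\textbf{Step 1: constrained gYDs and the factorized reference measure.} We introduce gYDs with constraints: the top block $\nu|_1^{k-1}$ is allowed to violate the coupling $\nu_{k-1}\ge\nu_k$ with the bottom block, and only the interlacing bounds inside each block are imposed. Dropping this single cross constraint makes the weight $q_d^{\cdots}s_{\yd{\nu}}$ approximately factor into a top factor and a bottom factor. This identifies the two right-hand averages as follows.
\begin{itemize}
\item The top factor, after shifting by $\varsigma_k$, is the Weyl measure on $\yd{\Delta|_1^k}$ with letters $(q_1,\ldots,q_{k-1},q_d)$.
\item The bottom factor is the Weyl measure on $\yd{\varsigma|_k^d}$ with letters $\boldsymbol{q}|_k^d$.
\end{itemize}
Here we use the Jacobi--Trudi / Lindström--Gessel--Viennot description, in which $s_{\yd{\nu}}$ is a determinant of nonintersecting lattice paths. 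Cutting the family of paths between path $k-1$ and path $k$ gives a product of two determinants, and the error is exactly the term in which paths from the two blocks intersect.

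\textbf{Step 2: monotone coupling.} We then show that the true measure is obtained from the product measure by a tilt that is monotone in the right directions. Reinstating the constraint $\nu_{k-1}\ge\nu_k$ (equivalently, forbidding crossings) favors large $\nu_{k-1}$ and small $\nu_k$, that is, small $X$ and large $Y$.

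The target inequality is $\langle D(X)I(Y)\rangle\ge\langle D(X)\rangle_{\rm top}\langle I(Y)\rangle_{\rm bot}$ with $D$ non-increasing and $I$ non-decreasing. To obtain it, we prove two things.
\begin{itemize}
\item Under the true measure, the conditional law of $X$ given the bottom block is stochastically dominated (toward smaller $X$) by the top product marginal, uniformly over the conditioning.
\item The marginal law of $Y$ stochastically dominates the bottom product marginal.
\end{itemize}
Given these, $\langle D(X)\mid \text{bottom}\rangle\ge\langle D\rangle_{\rm top}$ pointwise, and then $\langle I(Y)\rangle\ge\langle I\rangle_{\rm bot}$ uses $I\ge0$ and monotonicity. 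This is where the nonnegativity hypotheses on $D$ and $I$ enter.

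\textbf{Step 3: main obstacle.} The main obstacle is establishing these stochastic dominations. They amount to log-concavity or ratio-monotonicity of Schur polynomials under shifting a row, which is exactly the Schur log-concavity the paper advertises as a by-product.

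Our first approach is a Lindström--Gessel--Viennot injection. Consider pairs of path families counted with the crossing constraint against those counted without it. We would construct a weight-preserving injection that swaps tails at the first intersection, in the style of the Lam--Postnikov--Pylyavskyy proof of Schur positivity of $s_\lambda s_\mu-s_{\lambda\vee\mu}s_{\lambda\wedge\mu}$. This would show that the likelihood ratio of the true measure to the product measure is monotone in $(X,Y)$. The FKG inequality on the distributive lattice of block shapes then finishes the argument.

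If this injection is hard to control, we would instead induct on $d$ via branching. At each step one removes the letter $d-1$ and applies a one-variable Chebyshev sum inequality, using monotonicity of $D$ and $I$ under the induced one-step kernels.
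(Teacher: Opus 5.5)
Your Step 2 skeleton is sound and is essentially the structure of the paper's proof. Conditional domination of $X=\#_d|_1^{k-1}$ given the bottom block, plus marginal domination of $Y=\#_d|_k^d$, together with $D,I\ge 0$, would give the claim. The paper swaps the roles: it conditions on the top rows to handle $I$, then bounds the $D$-average unconditionally. However, Step 1 supplies the wrong comparison measures, and Step 3 leaves both dominations, which are the entire content, unproven.

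\textbf{Step 1 does not produce the right-hand measures.}
\begin{itemize}
\item The cross constraint $\nu_{k-1}\ge\nu_k$ is already implied by interlacing, since $\nu_{k-1}\ge\varsigma_k\ge\nu_k$. Dropping it changes nothing; the coupling lives entirely in the weight $s_\nu(q_1,\ldots,q_{d-1})$.
\item Cutting the Jacobi--Trudi/LGV determinant between paths $k-1$ and $k$ replaces that weight by $s_{(\nu_1,\ldots,\nu_{k-1})}(q_1,\ldots,q_{d-1})\,s_{(\nu_k,\ldots,\nu_{d-1})}(q_1,\ldots,q_{d-1})$. Neither marginal is what you claim.
\item The top factor still involves $q_k,\ldots,q_{d-1}$ and the full widths $\nu_i$. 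Shifting by $\varsigma_k$ does not factor out, because there are $k-1$ rows but $d-1$ variables.
\item The bottom factor still involves $q_1,\ldots,q_{k-1}$. The right-hand sides, by contrast, live on the alphabets $\{1,\ldots,k-1,d\}$ and $\{k,\ldots,d\}$.
\item Already for $d=3$, $k=2$, the ratios of your product weights to the target weights are $h_{\nu_1}(q_1,q_2)/q_1^{\nu_1}$ and $h_{\nu_2}(q_1,q_2)/q_2^{\nu_2}$. The second is increasing in $\nu_2$, so under your product measure $Y$ is stochastically \emph{smaller} than under the target measure.
\item Hence even a successful LGV injection showing that the true law of $Y$ dominates your product law would not yield the stated bound.
\end{itemize}

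\textbf{The missing idea is extremal conditioning, not a decoupled product.} This is what the paper uses.
\begin{itemize}
\item For $I$: condition on the filling of rows $1,\ldots,k-1$. It is entrywise at least the lowest-weight filling (letter $i$ in row $i$). So the induced box constraints on rows $k,\ldots,d$ dominate the constraint confining them to letters $\{k,\ldots,d\}$.
\item The paper's monotonicity theorem for constrained Weyl averages then applies. It says that raising lower or upper box constraints can only raise the average of an entrywise non-decreasing function; it is proved by induction on the number of boxes, fixing one box and using a mediant inequality. This gives $\langle I\mid \text{rows }1,\ldots,k-1\rangle$ at least the $\boldsymbol q|_k^d$ average on $[\varsigma_k,\ldots,\varsigma_d]$.
\item For $D$: the $d$'s in rows $1,\ldots,k-1$ lie in the overhang region (columns $>\varsigma_k$). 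By your own branching identity, Weyl averages of functions of the $\#_{d,i}$ are symmetric in $q_1,\ldots,q_{d-1}$. So you may relabel the alphabet to make $1,\ldots,k-1,d$ the top $k$ letters.
\item Then condition on the complement of the overhang region. Its filling is entrywise at most the highest-weight one. The same theorem, applied to the non-increasing $D$, bounds the conditional $D$-average below by the average over $[\Delta_{1,k},\ldots,\Delta_{k-1,k},0]$ in $(q_1,\ldots,q_{k-1},q_d)$.
\item This bound holds uniformly in the conditioning, hence also conditionally on your bottom block.
\end{itemize}

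With these two dominations your Step 2 goes through verbatim. Neither an LGV tail swap nor FKG is needed; FKG controls correlations under a single log-supermodular measure and does not by itself compare a joint average with a product of averages under different measures. The Schur log-concavity you allude to is used in the paper for the subsequent shape-monotonicity step, not for this splitting.
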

The statement says that an average over a product of a decreasing function \(D|_{1}^{k-1}\) on the upper part of the YD and an increasing function \(I|_{k}^{d}\) on the lower part can be split into the product of the respective averages; see~\cref{fig:gYT}(g).
We now apply this theorem to the functions needed here. Set
\begin{equation}
\begin{aligned}
    &D|_{1}^{k-1}(a) \coloneqq \rpoch{\Delta_{k-1,k} + 2 - (a + k - 1)}{m} \, , \\
    &I|_{k}^{d}(a) \coloneqq \fpoch{a -\varsigma_{k+1}}{m} \, ,
\end{aligned}
\end{equation}
Using~\cref{equ:main_part_proof_outline_f_all_as_average,equ:main_part_proof_outline_f_all_WT_level_lower_bound,equ:main_part_proof_outline_split_average_YD}, we get
\begin{equation}
\begin{aligned}
    {^kf}_{\rm all}^{\yd{\varsigma}}(\mathcal{T}) \geq& \frac{\weylavg{D|_{1}^{k-1}\mleft(\#_{k}\mleft(\wt{w}\mright)\mright)}{(\boldsymbol{q}|_{1}^{k-1},q_{d})}{\yd{\Delta|_{1}^{k}}}}{\fpoch{\Delta_{k,k+1}}{m}} \\
    &\quad \frac{\weylavg{I|_{k}^{d}\mleft(\#_{d-k+1}\mleft(\wt{w}\mright)\mright)}{\boldsymbol{q}|_{k}^{d}}{\yd{\varsigma|_{k}^{d}}}}{\rpoch{\Delta_{k-1,k} + 2}{m}} \, .
\end{aligned}
\end{equation}
For the second obstruction, we replace \(\yd{\Delta|_{1}^{k}}\) by \(\yd{\Delta_{1,k},\ldots,\Delta_{1,k},0}\) and \(\yd{\varsigma|_{k}^{d}}\) by \(\yd{\varsigma_{k},0,\ldots,0}\), reducing to simple structures for which inequalities analogous to~\cref{equ:main_part_proof_outline_symmetric_average,equ:main_part_proof_outline_dual_symmetric_average} can be obtained. To this end, we prove the following result.
\begin{theorem}[\cref{cor:monotonicity_functions_young_diagrams}, informal]
\label{thm:main_part_monotonicity_functions_young_diagrams}
Let $F:\mathbb{Z}_{\geq0}\mapsto \mathbb{R}$ be a non-decreasing function. Then, for any \(\yd{\lambda} \leq \yd{\varrho}\) entrywise,
\begin{equation}
        \weylavg{F(\#_d\mleft(\wt{w}\mright))}{\boldsymbol{q}}{\yd{\lambda}} \le \weylavg{F(\#_d\mleft(\wt{w}\mright))}{\boldsymbol{q}}{\yd{\varrho}} \, .
\end{equation}
\end{theorem}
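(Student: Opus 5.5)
The plan is to reduce \cref{thm:main_part_monotonicity_functions_young_diagrams} to a stochastic-domination statement for the random variable \(\#_d(\wt{w})\). Once \(\#_d\) under \(\mathrm{W}(\boldsymbol{q},\yd{\varrho})\) is shown to stochastically dominate \(\#_d\) under \(\mathrm{W}(\boldsymbol{q},\yd{\lambda})\), the inequality holds for every non-decreasing \(F\). The argument has three steps: reduce to adding one box, rewrite the law of \(\#_d\) through the branching rule, and compare the two branching laws.

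\emph{Step 1 (one box at a time).} Any entrywise increase \(\yd{\lambda}\le\yd{\varrho}\) can be realized as a chain of valid YDs in which each step adds one box. To do this, raise \(\lambda_1\) up to \(\varrho_1\) first, then \(\lambda_2\) up to \(\varrho_2\), and so on; every intermediate shape is still a partition. It therefore suffices to treat \(\yd{\varrho}=\yd{\lambda+\boldsymbol{e}_i}\).

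\emph{Step 2 (branching).} Deleting all \(d\)'s from a WT of shape \(\yd{\lambda}\) leaves a WT of a shape \(\yd{\nu}\) with entries \(1,\ldots,d-1\). Here \(\yd{\nu}\) interlaces \(\yd{\lambda}\), meaning \(\lambda_{j+1}\le\nu_j\le\lambda_j\), and \(\#_d=|\yd{\lambda}|-|\yd{\nu}|\). Summing over the deleted part gives
\begin{equation}
\Pr_{\yd{\lambda}}(\yd{\nu}) \propto s_{\yd{\nu}}(q_1,\ldots,q_{d-1})\, q_d^{|\yd{\lambda}|-|\yd{\nu}|}
\end{equation}
on the box-shaped domain \(B_{\yd{\lambda}}=\prod_j[\lambda_{j+1},\lambda_j]\). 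Adding a box in row \(i\) changes this domain in two ways: it raises the upper endpoint for \(\nu_i\) and the lower endpoint for \(\nu_{i-1}\). It also adds \(1\) to the offset \(|\yd{\lambda}|\).

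\emph{Step 3 (domination).} I would exhibit a coupling \((\yd{\nu},\yd{\nu'})\), with \(\yd{\nu}\sim\Pr_{\yd{\lambda}}\) and \(\yd{\nu'}\sim\Pr_{\yd{\lambda+\boldsymbol{e}_i}}\), such that \(|\yd{\nu'}|\le|\yd{\nu}|+1\) almost surely. This is exactly the statement \(\#_d'\ge\#_d\).

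The first thing I would try is the Holley/FKG criterion on the distributive lattice of interlacing \(\yd{\nu}\), ordered so that coordinate \(i\) is reversed and the offset is tracked. The criterion requires the weight \(\yd{\nu}\mapsto s_{\yd{\nu}}q_d^{-|\yd{\nu}|}\) to satisfy a lattice log-supermodularity. That inequality is of the form
\begin{equation}
s_{\yd{\nu}\vee\yd{\nu'}}\,s_{\yd{\nu}\wedge\yd{\nu'}}\ \ge\ s_{\yd{\nu}}\,s_{\yd{\nu'}}
\end{equation}
in the positive-evaluation sense. It is a Schur log-concavity statement, which the paper obtains independently through gYDs (\cref{thm:inequality_Schur_polynomials_products}) and which also follows from Lam--Postnikov--Pylyavskyy cell transfer. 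If the lattice version proves awkward, I would fall back on a direct injection. For each \(\yd{\nu}\in B_{\yd{\lambda}}\), map \(\yd{\nu}\mapsto\yd{\nu}\) when \(\nu_{i-1}>\lambda_i\), and \(\yd{\nu}\mapsto\yd{\nu+\boldsymbol{e}_{i-1}}\) otherwise. In both cases the image lies in \(B_{\yd{\lambda+\boldsymbol{e}_i}}\) and \(\#_d\) does not decrease. I would then compare the tail sums \(\Pr(\#_d\ge a)\) using the ratio bounds \(s_{\yd{\nu+\boldsymbol{e}_{i-1}}}/s_{\yd{\nu}}\) that log-concavity provides.

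The main obstacle is Step 3. The difficulty is that the normalizations of the two Weyl distributions differ, since they involve \(s_{\yd{\lambda}}\) versus \(s_{\yd{\lambda+\boldsymbol{e}_i}}\), so a pointwise injection by itself is not enough. The proof needs a genuine monotone-likelihood or FKG-type inequality among Schur polynomials in \(d-1\) variables, and verifying the required Schur log-concavity on the interlacing lattice (rather than only on single shapes) is where I expect the real work to lie.
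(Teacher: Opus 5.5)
Your Steps 1 and 2 are correct. Step 3, however, is the entire content of the statement, and you leave it open; both routes you sketch also break down as stated.

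For Holley's criterion, the two laws must sit on one lattice on which $\#_d$ is monotone. Work in the product order on $\yd{\nu}$, reversed so that $\#_d=|\yd{\lambda}|-|\yd{\nu}|$ is increasing, and absorb the offset $|\yd{\lambda}|\to|\yd{\lambda}|+1$ by any fixed shift $\yd{\nu'}\mapsto\yd{\nu'-\boldsymbol{e}_j}$. Already at the level of supports, the criterion then requires that each coordinate window of the new law lie weakly below the old window at both endpoints. At coordinate $i-1$ the lower endpoint rises from $\lambda_i$ to $\lambda_i+1$, which forces $j=i-1$. At coordinate $i$ the upper endpoint rises from $\lambda_i$ to $\lambda_i+1$, which forces $j=i$. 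So for $2\le i\le d-1$ the criterion fails at the boundary. This has nothing to do with the log-supermodularity $s_{\nu\vee\nu'}s_{\nu\wedge\nu'}\ge s_{\nu}s_{\nu'}$, which is true (the straight-shape case of LPP) but is not the obstruction.

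Your fallback map is not injective: $\yd{\nu}$ with $\nu_{i-1}=\lambda_i$ and $\yd{\nu+\boldsymbol{e}_{i-1}}$ have the same image. The tail-sum comparison "via ratio bounds" is also never specified.

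The missing idea is to collapse the $(d-1)$-dimensional interlacing law to a law on a single chain. Sum your branching weights over $|\yd{\nu}|=|\yd{\lambda}|-a$ and use the symmetry of the Schur polynomial. Make the tracked letter the smallest, so that its occurrences are exactly the first $a$ cells of row one; the paper does this by reducing to $\#_1$. This gives $\Pr_{\yd{\lambda}}(\#_d=a)\propto q_d^{\,a}\,s^{\yd{\lambda}\setminus\yd{a}}(q_1,\ldots,q_{d-1})$.

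Now apply the LPP cell-transfer inequality (\cref{thm:inequality_Schur_polynomials_products}) with outer shapes $\yd{\lambda}\le\yd{\varrho}$ and one-row inner shapes $\yd{b}\le\yd{a}$. It gives $s^{\yd{\varrho}\setminus\yd{a}}s^{\yd{\lambda}\setminus\yd{b}}\ge s^{\yd{\varrho}\setminus\yd{b}}s^{\yd{\lambda}\setminus\yd{a}}$ for $a\ge b$, which is a monotone likelihood ratio on the chain.

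Multiply the difference of the two averages by $s^{\yd{\varrho}}s^{\yd{\lambda}}$ to get $\sum_{a,b}q_d^{a+b}(F(a)-F(b))\,s^{\yd{\varrho}\setminus\yd{a}}s^{\yd{\lambda}\setminus\yd{b}}$. Terms with $a>\lambda_1$ are nonnegative because $F$ is non-decreasing. Pairing $(a,b)$ with $(b,a)$ for $b<a\le\lambda_1$ makes the remaining terms nonnegative.

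This is the paper's argument. It handles arbitrary $\yd{\lambda}\le\yd{\varrho}$ at once, with no one-box reduction and no lattice FKG.
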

We have \(\yd{\Delta|_{1}^{k}} \leq \yd{\Delta_{1,k},\ldots,\Delta_{1,k},0}\) and \(\yd{\varsigma|_{k}^{d}} \geq \yd{\varsigma_{k},0,\ldots,0}\) entrywise, so we can apply this theorem to \(-D|_{1}^{k-1}\) and \(I|_{k}^{d}\) separately to get
\begin{equation}
\begin{aligned}
    {^kf}_{\rm all}^{\yd{\varsigma}}(\mathcal{T}) &\geq \frac{\weylavg{D|_{1}^{k-1}\mleft(\#_{k}\mleft(\wt{w}\mright)\mright)}{(\boldsymbol{q}|_{1}^{k-1},q_{d})}{\yd{\Delta_{1,k},\ldots,\Delta_{1,k},0}}}{\fpoch{\Delta_{k,k+1}}{m}} \\
    &\quad \frac{\weylavg{I|_{k}^{d}\mleft(\#_{d-k+1}\mleft(\wt{w}\mright)\mright)}{\boldsymbol{q}|_{k}^{d}}{\yd{\varsigma_{k},0,\ldots,0}}}{\rpoch{\Delta_{k-1,k} + 2}{m}} \, .
\end{aligned}
\end{equation}
These expressions are in a form amenable to explicit calculation on the (dual) symmetric subspace. The final lower bound follows by the same argument as in~\cref{equ:main_part_proof_outline_symmetric_average,equ:main_part_proof_outline_dual_symmetric_average}. As discussed above, the upper bounds for general \(\yd{\mu}\) not given by the row-removal rule follow from a very similar argument, with non-decreasing and non-increasing interchanged throughout and the inequality signs reversed. \newline

\paragraph{Reducing the one-site utility}
As in~\cref{equ:main_part_proof_outline_irrep_level_fidelity}, the sector-wise one-site utility \({^kf}_{\rm one}^{\yd{\varsigma}}(\mathcal{T})\) can be written as
\begin{equation}
\begin{aligned}
    {^kf}_{\rm one}^{\yd{\varsigma}}(\mathcal{T}) &= \Tr\mleft(\psi_k\Tr_{2,\ldots,m}\mleft(\mathcal{T}^{\yd{\varsigma}}\mleft(\frac{\rho^{\yd{\varsigma}}}{\Tr(\rho^{\yd{\varsigma}})}\mright)\mright) \mright) \, .
\end{aligned}
\end{equation}
As before, it suffices to consider the extremal cases for \(\mathcal{T}^{\yd{\varsigma}}\), so we set
\begin{equation}\mathcal{T}^{\yd{\varsigma}} = \extChannel{\mathcal{T}}{\yd{\varsigma}}{\yt{1\cdots m}}{\yd{\mu}} \, ,
\end{equation}
for some \(\mu\). The marginal channel \(\Tr_{2,\ldots,m}\mleft(\extChannel{\mathcal{T}}{\yd{\varsigma}}{\yt{1\cdots m}}{\yd{\mu}} (.) \mright)\) from \(\mathbb{W}^{\yd{\varsigma}}\) to \(\mathbb{W}\) is unitary-equivariant, but not necessarily extremal. Thus it can be written as a convex combination of extremal channels
\begin{equation}
    \Tr_{2,\ldots,m}\mleft(\extChannel{\mathcal{T}}{\yd{\varsigma}}{\yt{1\cdots m}}{\yd{\mu}}(\cdot)\mright) = \sum_{i=1}^{d}c_{i} \extChannel{\mathcal{T}}{\yd{\varsigma}}{\yd{1}}{\yd{\varsigma-\mathbf{e}_i}}(\cdot) \, ,
\end{equation}
where \(c_{i}>0\) only when \(\yd{\varsigma-\mathbf{e}_i}\) is a valid YD. This gives
\begin{equation}
\label{equ:main_part_proof_outline_one_fidelity_decomposition}
{^kf}_{\rm one}^{\yd{\varsigma}}(\mathcal{T}) = \sum_{i=1}^{d}c_{i} \, \weylavg{\res{f}{\yd{\varsigma - \boldsymbol{e}_{i}}}{\yd{1}}{\yd{\varsigma}}_\mathrm{all}\mleft(\wt{w}\mright)}{\boldsymbol{q}}{\yd{\varsigma}} \, .
\end{equation}
Therefore, we can use the asymptotics and bounds already obtained for \({^kf}_{\rm all}^{\yd{\varsigma}}(\mathcal{T})\) with \(m=1\).
To calculate the \(c_{i}\), we use the two equivariant chains of embeddings
\begin{equation}
\begin{aligned}
    \label{equ:embedding_chains}
    \mathbb{W}^{\yd{\varsigma}} \hookrightarrow \mathbb{W}^{\yd{\mu}} \otimes \mathbb{W}^{\yd{m}} \hookrightarrow \mathbb{W}^{\yd{\mu}} \otimes \mathbb{W}^{\yd{m-1}} \otimes \mathbb{W} \, , \\
    \mathbb{W}^{\yd{\varsigma}} \hookrightarrow \mathbb{W}^{\yd{\varsigma - \boldsymbol{e}_{i}}} \otimes \mathbb{W} \hookrightarrow \mathbb{W}^{\yd{\mu}} \otimes \mathbb{W}^{\yd{m-1}} \otimes \mathbb{W} \, .
\end{aligned}
\end{equation}
These chains are connected by the fusion symbols detailed, for example, in Ref.~\cite{BFGL+25}:
\begin{equation}
    \label{equ:main_part_proof_outline_fusion_coefficient_1}
    \resSix{F}{\yd{\mu}}{\yd{m-1}}{\yd{1}}{\yd{\varsigma-\mathbf{e}_i}}{\yt{1\cdots m}}{\yd{\varsigma}} = \sqrt{\dfrac{1}{m}\dfrac{\prod_{j=1}^{d}(\varsigma_{i} - \mu_{j} + (j-i))}{\prod_{j\neq i}(\varsigma_{i} - \varsigma_{j} + (j-i))}}
\end{equation}
if \(\yd{\varsigma-\mathbf{e}_i}\) is a valid YD and $\resSix{F}{\yd{\mu}}{\yd{m-1}}{\yd{1}}{\yd{\varsigma-\mathbf{e}_i}}{\yt{1\cdots m}}{\yd{\varsigma}} = 0$ otherwise.

\begin{figure}[H]
    \centering
    \includegraphics[width=0.6\columnwidth]{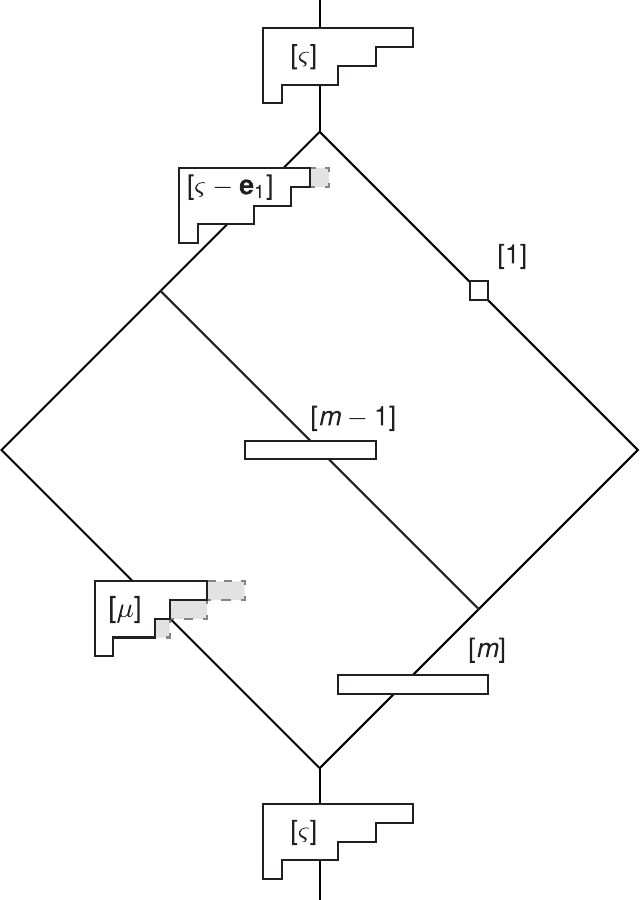}
    \caption{Tensor network diagram illustrating the intertwiner connecting two sides of~\cref{equ:embedding_chains}, $\resSix{F}{\yd{\mu}}{\yd{m-1}}{\yd{1}}{\yd{\varsigma-\mathbf{e}_i}}{\yt{1\cdots m}}{\yd{\varsigma}} I^{\yd{\varsigma}}$, where each split corresponds to an intertwining isometry.}
    \label{fig:fusion_diagram}
\end{figure}

By~\cref{equ:main_part_proof_outline_irrep_channel_decomposition}, the marginal channel is obtained by first applying the intertwiners and then tracing out the factor \(\mathbb{W}^{\yd{\mu}} \otimes \mathbb{W}^{\yd{m-1}}\). In this decomposition, that trace is equivalent to tracing out each \(\mathbb{W}^{\yd{\varsigma-\boldsymbol{e}_{i}}}\) component appearing in the convex combination.
Comparing with~\cref{equ:main_part_proof_outline_one_fidelity_decomposition}, we find $c_{i} = \resSix{F}{\yd{\mu}}{\yd{m-1}}{\yd{1}}{\yd{\varsigma-\mathbf{e}_i}}{\yt{1\cdots m}}{\yd{\varsigma}}^2$.
For the asymptotic and nonasymptotic bounds, we combine the \(m=1\) all-site results with corresponding bounds on the fusion coefficients, which are explicitly calculated using~\cref{equ:main_part_proof_outline_fusion_coefficient_1}.\\

\paragraph{Path-graph parametrization}
Here we give a short summary of the path-graph parametrization for WTs.
We define an ordering of the entries \(\{1,\ldots,d\}\) by choosing a permutation \(\pi\in S_d\) such that
\begin{equation}
    \pi(1) \succ \cdots \succ \pi(d) \, .
\end{equation}
Next, we take the unique WT that maximizes the number of \(\pi(1)\), then the number of \(\pi(2)\), and so on. We label this WT by \(\wt{\lw_{\pi}}\), and it has the property that
\begin{equation}
    \#_{j}(\wt{\lw_{\pi}}) = \varsigma_{\pi^{-1}(j)} \, .
\end{equation}
The simplest such case is when we choose
\begin{equation}
    1\succ 2 \succ \ldots \succ d \, ,
\end{equation}
and the resulting WT is the lowest weight tableau with \(1\)'s in the first row, \(2\)'s in the second row, and so on. We use \(\wt{\lw_{\pi}}\) as the starting point of the parametrization and label every WT by its deviation from \(\wt{\lw_{\pi}}\). To this end, we first reframe the WTs as Gel'fand--Tsetlin (GT) patterns. A GT pattern consists of \(d\) rows of increasing length, stacked from bottom to top, and the \(i\)-th row records the row lengths of the YD formed by all entries of \(\wt{w}\) that are \(\leq i\). For example, the WT
\begin{equation}
\wt{w}=\begin{tikzpicture}[baseline={([yshift=-0.3em]current bounding box.center)}, x=1.15em, y=0em, every node/.style={draw, minimum size=1.15em, inner sep=0pt, outer sep=0pt, font=\sffamily\fontsize{8.4}{9}\selectfont}]
    \node at (0,0) {2};
    \node at (1,0) {3};
    \node at (2,0) {3};
\end{tikzpicture}
\qquad\longleftrightarrow\qquad
\begin{tikzpicture}[baseline={(current bounding box.center)}, x=1.15em, y=1.15em, every node/.style={draw, minimum size=1.15em, inner sep=0pt, outer sep=0pt, font=\sffamily\fontsize{8.4}{9}\selectfont}]
    \node at (0,0) {3};
    \node at (1,0) {0};
    \node at (2,0) {0};
    \node at (0.5,-1) {1};
    \node at (1.5,-1) {0};
    \node at (1,-2) {0};
\end{tikzpicture}.
\end{equation}
We record two facts. First, the top row is always fixed to be \(\yd{\varsigma}\). Second, if we know \(\yd{\varsigma}\) and we know for each \((a,b)\) the difference
\begin{equation}
    w_{a^\prime(a,b),b+1} - w_{a,b}
\end{equation}
for some \(1\leq a^\prime(a,b) \leq b+1\), we can still reconstruct the whole diagram. In fact, it is possible to pick each \(a^\prime(a,b)\) such that
\begin{equation}
    a^\prime(x,b) \neq a^\prime(y,b) \quad \text{for } x\neq y
\end{equation}
and
\begin{equation}
    a^\prime(a,b) = a \quad \text{or} \quad a^\prime(a,b) = a + 1 \, .
\end{equation}
Following the path of differences from the top row to the bottom, we reach a terminal level \(b^\prime(i)\), and call this path \(\boldsymbol{\gamma}_{b^\prime(i)}\). A \emph{path graph} is a disjoint collection of these paths \(\boldsymbol{\gamma}_{j}\) for \(1\leq j \leq d\), and each path consists of the vertices \(\gamma_{j,i}\) for \(j\leq i \leq d\).
\begin{figure}[H]
    \centering
    \includegraphics[width=0.55\columnwidth]{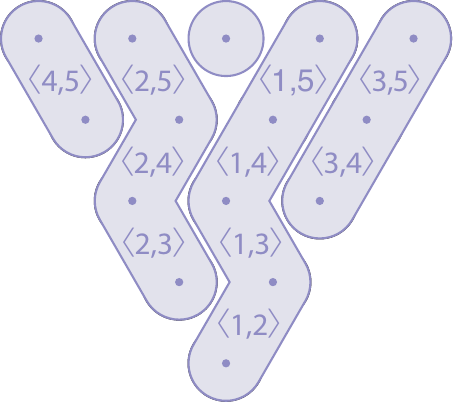}
    \caption{Path-graph illustration for the GT-pattern parametrization. Each dot is a GT-pattern vertex, and the circled dots indicate one path through the triangular lattice.}
\label{fig:main_part_gt_path_param}
\end{figure}
\noindent For a path graph, the map \(b^\prime\in S_{d}\) is a permutation, and conversely, there exists a corresponding path graph for each permutation \(\pi\in S_{d}\).

\begin{lemma}[\cref{lem:ordering_path_graph}, informal]
For every permutation $\pi\in S_d$ there exists a unique path graph \(\boldsymbol{\gamma}\) whose vertices satisfy
\begin{equation}
    \boldsymbol{\gamma}_{a,d} = (\pi^{-1}(a),d) \, .
\end{equation}
\end{lemma}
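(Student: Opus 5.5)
The plan is to build the path graph one level at a time, from the top row (level $d$) down to level $1$, and to show that at each level the permutation $\pi$ forces exactly one admissible choice. Existence and uniqueness then follow together by induction on the level.

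\emph{Step 1: the admissible maps between adjacent levels.} Fix a level $b<d$. The choices $a'(\cdot,b):\{1,\ldots,b\}\to\{1,\ldots,b+1\}$ are injective and satisfy $a'(a,b)\in\{a,a+1\}$. I will show that every such map omits exactly one value $s\in\{1,\ldots,b+1\}$, and is given by
\begin{equation*}
a'(a,b)=a \ \text{ for } a<s, \qquad a'(a,b)=a+1 \ \text{ for } a\ge s.
\end{equation*}
Conversely, each $s$ defines such a map. The argument is as follows. Suppose $a'(a,b)=a+1$ for some $a$. Then $a'(a+1,b)\neq a+1$ by injectivity, so $a'(a+1,b)=a+2$. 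By induction, once the map shifts it keeps shifting for all larger $a$. Hence the map is ``identity up to $s-1$, then shift by one,'' and $s$ is the unique value not hit. Two consequences follow. The maps are order preserving, and at each level exactly one vertex of level $b+1$, namely $(s,b+1)$, has no continuation. The path through that vertex therefore terminates at level $b+1$.

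\emph{Step 2: the inductive bookkeeping.} At level $b+1$ there are $b+1$ vertices, and the paths passing through them (the ``live'' paths) are exactly those terminating at levels $\le b+1$. Since the maps in Step 1 are order preserving, the left-to-right order of the live paths at level $b+1$ equals the order of their top positions at level $d$. The statement requires $\boldsymbol{\gamma}_{a,d}=(\pi^{-1}(a),d)$. Equivalently, the path terminating at level $a$ is the one starting at top position $\pi^{-1}(a)$.

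At level $d$, every top position is live. I take $s$ at level $d$ to be $\pi^{-1}(d)$, so that the path starting there stops at level $d$. At a general level $b+1$, the live paths are those with top positions in $\pi^{-1}(\{1,\ldots,b+1\})$. I take $s$ to be the rank of $\pi^{-1}(b+1)$ among these top positions. By Step 1, this $s$ determines $a'(\cdot,b)$ uniquely. The surviving paths are then exactly those with top positions in $\pi^{-1}(\{1,\ldots,b\})$, still in order, which reproduces the inductive hypothesis one level down. At level $1$, one path remains, namely the one starting at $\pi^{-1}(1)$, and it terminates there. This produces a path graph with the required property.

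\emph{Step 3: uniqueness.} Any path graph with $\boldsymbol{\gamma}_{b+1,d}=(\pi^{-1}(b+1),d)$ must make that path, and only that path, stop at level $b+1$. By Step 1, this fixes the omitted index $s$ at each level, and $s$ determines $a'(\cdot,b)$. The whole pattern of differences is therefore forced. As a byproduct, this also shows that $b'$ is a bijection $S_d\leftrightarrow$ path graphs.

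The main obstacle I expect is Step 1 together with the order-preservation claim, since the rest is bookkeeping. The characterization of the maps as the ``omit one value'' maps has to be stated carefully, including the edge cases $s=1$ and $s=b+1$, because it is what makes both the construction and the uniqueness argument work.
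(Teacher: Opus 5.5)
Your proof is correct and is essentially the paper's argument. The paper inducts on $d$: it fixes the edges from level $d-1$ to level $d$ so that $(\pi^{-1}(d),d)$ is the vertex left without a predecessor, then recurses on the smaller lattice with the relabeled permutation $\tilde\pi$; this is your top-down choice of the omitted index $s$ as the rank of $\pi^{-1}(b+1)$ among the live paths, and your ``omit one value'' characterization of the inter-level maps spells out what the paper leaves as ``one checks'' and encodes via the cyclic permutation $\sigma_{(\pi^{-1}(d)\cdots d)}$.
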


For a given path graph \(\boldsymbol{\gamma}\) coming from a permutation \(\pi\), we can define the \emph{edge variables}
\begin{equation}
    t_{a,b}^\prime \coloneqq w_{\gamma_{a,b}} - w_{\gamma_{a,b-1}} \, ,
\end{equation}
which corresponds precisely to the parametrization above. In addition, \(t_{a,b}^\prime = 0\) for all \(1\leq a<b\leq d\) precisely parametrizes the WT \(\wt{\lw_{\pi}}\). Thus, the \(t_{a,b}^\prime \) again parametrize deviations from the lowest weight \(\wt{\lw_{\pi}}\). It follows that for a given pair \((a,b)\), either
\begin{equation}
    t_{a,b}^\prime  \geq 0 \quad \text{or} \quad t_{a,b}^\prime  \leq 0 \, ,
\end{equation}
and to simplify the parameter ranges, we choose \(t_{a,b}\coloneqq |t^\prime _{a,b}|\). For some spectrum ordered by \(\pi\), i.e.
\begin{equation}
    q_{\pi(1)} > \ldots > q_{\pi(d)} \, ,
\end{equation}
this parametrization again yields the simple relative monomial weights
\begin{equation}
    \frac{{\boldsymbol{q}}^{\#\mleft(\wt{w}\mright)}}{{\boldsymbol{q}}^{\#\mleft(\wt{\lw_{\pi}}\mright)}} = \prod_{i<j}r_{i,j}^{t_{i,j}} \, ,
\end{equation}
which allows us to generalize~\cref{thm:main_part_limit_Weyl_average_geometric_average} to arbitrarily ordered spectra.
\newline

\paragraph{gYDs with constraints}
We give a brief overview of gYDs with constraints, which provide a new approach to Schur polynomials and averages over the Weyl distribution. While YDs are usually depicted as rows of boxes with decreasing lengths from top to bottom, gYDs allow any finite collection of such boxes, potentially with gaps between them. Formally, \(\yd{\lambda}_{\mathrm{g}} \subseteq \mathbb{Z}^{2}\). Generalized Weyl tableaux (gWTs) \(\wt{w}_{\mathrm{g}}\vdash_{d}\yd{\lambda}_{\mathrm{g}}\) are fillings with letters \(1,\ldots,d\), obeying the same rules as WTs: they are non-decreasing to the right and increasing downward. Formally,
\begin{equation}
\begin{aligned}
    &\wt{w}_{\mathrm{g}} : \yd{\lambda}_{\mathrm{g}} \rightarrow \{1,\ldots,d\} \, , \\
    &\wt{w}_{\mathrm{g}}(i,j) \leq \wt{w}_{\mathrm{g}}(i,j+1) \quad \text{for} \quad (i,j),(i,j+1) \in \yd{\lambda}_{\mathrm{g}} \, , \\
    &\wt{w}_{\mathrm{g}}(i,j) < \wt{w}_{\mathrm{g}}(i+1,j) \quad \text{for} \quad (i,j),(i+1,j) \in \yd{\lambda}_{\mathrm{g}} \, .
\end{aligned}
\end{equation}
The final ingredient is a set of constraints on \(\yd{\lambda}_{\mathrm{g}}\). A constraint \(\boldsymbol{x}\) restricts the possible gWTs by providing a minimum and maximum value for each box. Formally,
\begin{equation}
\begin{aligned}
    &\boldsymbol{x} : \yd{\lambda}_{\mathrm{g}} \rightarrow \{1,\ldots,d\}^{2} \, , \\
    &\boldsymbol{x}(i,j) = (x_{l}(i,j),x_{u}(i,j)) \, ,
\end{aligned}
\end{equation}
and we write \(\wt{w}_{\mathrm{g}}\vdash_{d}\yd{\lambda}_{\mathrm{g}}|\boldsymbol{x}\) if
\begin{equation}
    x_{l}(i,j) \leq \wt{w}_{\mathrm{g}}(i,j) \leq x_{u}(i,j) \quad \text{for all} \quad (i,j)\in \yd{\lambda}_{\mathrm{g}} \, .
\end{equation}
\begin{figure*}[t]
    \centering
    \includegraphics[width=\textwidth]{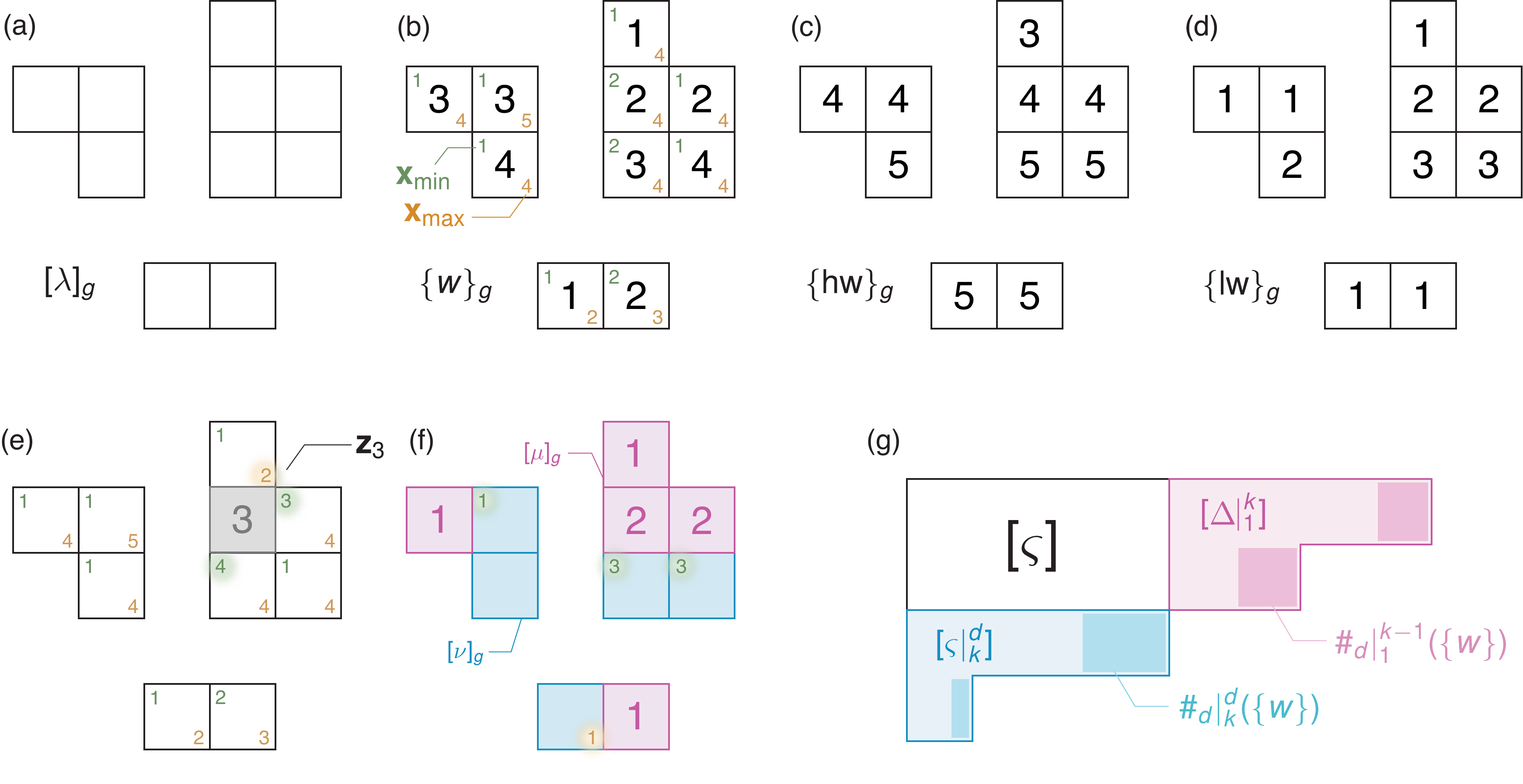}
    \caption{gYDs and gWTs with constraints in the proof outline. (a) A gYD \(\yd{\lambda}_{\mathrm{g}}\) with disjoint components. (b) A gWT with constraints, with lower and upper filling bounds. (c,d) Highest and lowest weight gWTs. (e) Decomposition of an average with constraints after fixing one box. (f) Restriction to complementary subdiagrams \(\yd{\mu}_{\mathrm{g}}\) and \(\yd{\nu}_{\mathrm{g}}\). (g) The gYD partition used to split Weyl averages of products of monotone functions.}
    \label{fig:gYT}
\end{figure*}
The Weyl distribution with constraints \(\mathrm{W}(\boldsymbol{p},\yd{\lambda}_{\mathrm{g}}|\boldsymbol{x})\) is defined analogously: the unnormalized probability mass for \(\wt{w}_{\mathrm{g}}\vdash_{d}\yd{\lambda}_{\mathrm{g}}|\boldsymbol{x}\) is
\begin{equation}
    P\mleft(\wt{w}_{\mathrm{g}}\mright) = {\boldsymbol{p}}^{\#\mleft(\wt{w}_{\mathrm{g}}\mright)} = \prod_{i=1}^{d}p_{i}^{\#_{i}\mleft(\wt{w}_{\mathrm{g}}\mright)} \, .
\end{equation}
Here, the \(\#_{i}\mleft(\wt{w}_{\mathrm{g}}\mright)\) denote the number of \(i\)'s in \(\wt{w}_{\mathrm{g}}\). With these definitions in place, the first result is
\begin{theorem}[\cref{thm:monotonicity_constrained_generalized_averages}, informal]
    \label{thm:main_part_monotonicity_constrained_generalized_averages}
    Let $\boldsymbol{x},\boldsymbol{x}'$ be two constraints such that
    \begin{equation}\begin{aligned}
        x_{l}(i,j) \leq x'_{l}(i,j) \, , \quad x_{u}(i,j) \leq x'_{u}(i,j) \, ,
    \end{aligned}\end{equation}
    and suppose there exist \(\wt{w}_{\mathrm{g}}\vdash_{d}\yd{\lambda}_{\mathrm{g}}|\boldsymbol{x}\) and \(\wt{w'}_{\mathrm{g}}\vdash_{d}\yd{\lambda}_{\mathrm{g}}|\boldsymbol{x'}\). Let $F:\mathcal{W}_{\yd{\lambda}_{\mathrm{g}}}^{d} \rightarrow \mathbb{R}$ be non-decreasing in every entry of the WTs, meaning that \(F\mleft(\wt{w_{1}}_{\mathrm{g}}\mright) \leq F\mleft(\wt{w_{2}}_{\mathrm{g}}\mright)\) whenever \(\wt{w_{1}}_{\mathrm{g}}\leq \wt{w_{2}}_{\mathrm{g}}\) pointwise. Then
    \begin{equation}
            \weylavg{F}{\boldsymbol{p}}{\yd{\lambda}_{\mathrm{g}} | \boldsymbol{x}} \leq \weylavg{F}{\boldsymbol{p}}{\yd{\lambda}_{\mathrm{g}} | \boldsymbol{x}'} \, .
\end{equation}
\end{theorem}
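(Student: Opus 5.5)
The plan is to deduce the statement from Holley's inequality, i.e.\ the two-measure generalization of the FKG inequality, on the distributive lattice of gWTs. Let $\mathcal{L}$ be the set of all gWTs $\wt{w}_{\mathrm{g}}\vdash_d\yd{\lambda}_{\mathrm{g}}$ with entries in $\{1,\ldots,d\}$, ordered pointwise. I would equip $\mathcal{L}$ with the pointwise operations $(\wt{w}\wedge\wt{w'})(i,j)=\min(\wt{w}(i,j),\wt{w'}(i,j))$ and $(\wt{w}\vee\wt{w'})(i,j)=\max(\wt{w}(i,j),\wt{w'}(i,j))$.

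\emph{Step 1: $\mathcal{L}$ is a distributive lattice.} I must check that $\mathcal{L}$ is closed under $\wedge$ and $\vee$. For rows: if $a\le c$ and $b\le e$, then $\min(a,b)\le\min(c,e)$ and $\max(a,b)\le\max(c,e)$. For columns: if $a<c$ and $b<e$, then $\min(a,b)<\min(c,e)$ and $\max(a,b)<\max(c,e)$. Since these operations are the restrictions of those of the product of chains $\{1,\ldots,d\}^{\yd{\lambda}_{\mathrm{g}}}$, $\mathcal{L}$ is a distributive sublattice. Gaps in $\yd{\lambda}_{\mathrm{g}}$ cause no trouble, because only adjacent pairs of boxes are constrained.

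\emph{Step 2: compatibility of the constraints.} Let $A=\{\wt{w}\in\mathcal{L}:\wt{w}\vdash\yd{\lambda}_{\mathrm{g}}|\boldsymbol{x}\}$ and $A'=\{\wt{w}\in\mathcal{L}:\wt{w}\vdash\yd{\lambda}_{\mathrm{g}}|\boldsymbol{x}'\}$. Take $\wt{w}\in A$ and $\wt{w'}\in A'$. For $\wt{w}\wedge\wt{w'}$: it is at least $\min(x_l,x'_l)=x_l$ and at most $\wt{w}\le x_u$, so it lies in $A$. For $\wt{w}\vee\wt{w'}$: it is at least $\wt{w'}\ge x'_l$ and at most $\max(x_u,x'_u)=x'_u$, so it lies in $A'$. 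These inequalities use exactly the hypotheses $x_l\le x'_l$ and $x_u\le x'_u$.

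\emph{Step 3: log-modularity of the weight.} The weight factorizes over boxes as $P(\wt{w})=\prod_{(i,j)}p_{\wt{w}(i,j)}$. At each box the pair $\{\min,\max\}$ equals the pair $\{\wt{w}(i,j),\wt{w'}(i,j)\}$, so
\begin{equation}
P(\wt{w}\wedge\wt{w'})\,P(\wt{w}\vee\wt{w'})=P(\wt{w})\,P(\wt{w'})
\end{equation}
holds exactly. Now define $\mu_1\propto P\,\mathds{1}_A$ and $\mu_2\propto P\,\mathds{1}_{A'}$; both are normalizable by the nonemptiness assumption. Steps 2 and 3 together give the Holley condition
\begin{equation}
\mu_1(\wt{w}\wedge\wt{w'})\,\mu_2(\wt{w}\vee\wt{w'})\ge\mu_1(\wt{w})\,\mu_2(\wt{w'})\quad\text{for all }\wt{w},\wt{w'}\in\mathcal{L}.
\end{equation}
If $\wt{w}\notin A$ or $\wt{w'}\notin A'$, the right side is zero. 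Otherwise the supports are preserved by Step 2 and equality holds by Step 3, since the normalizations cancel identically on both sides.

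\emph{Step 4: conclusion.} Holley's inequality then gives that $\mu_2$ stochastically dominates $\mu_1$. Hence $\langle F\rangle_{\mu_1}\le\langle F\rangle_{\mu_2}$ for every $F$ that is non-decreasing in each entry, which is the claim.

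The main obstacle is invoking Holley's theorem in a self-contained way. I would either cite it, or prove it by the standard monotone coupling of two heat-bath Markov chains on $\mathcal{L}$. Alternatively, I would give a direct induction on the number of boxes: fix one extremal box, decompose the average according to its value as in \cref{fig:gYT}(e), and show that the conditional law of that box's value is stochastically larger under $\boldsymbol{x}'$. I expect Step 1 to be the point that requires care, specifically the strict column inequality under $\min$ and $\max$ and the handling of disconnected components. The rest is bookkeeping.
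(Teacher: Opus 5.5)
Your argument is correct, and it takes a genuinely different route from the paper.

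The paper proves \cref{thm:monotonicity_constrained_generalized_averages} by a bare-hands induction on the number of boxes:
\begin{enumerate}
\item \cref{lem:constraints_incremental_sequence} reduces to two constraints that differ by one unit at a single box $(i^\ast,j^\ast)$.
\item The constrained Schur sums are decomposed according to the entry $m$ in that box. This induces constraints $\boldsymbol{z}_m$ on the remaining boxes, and these increase with $m$.
\item The induction hypothesis is applied to the smaller diagram.
\item \cref{lem:differences_inequalities} turns the comparison between the removed slice and the full average into the claim.
\end{enumerate}

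You instead make three observations and let Holley's inequality do all the work:
\begin{itemize}
\item the gWTs form a distributive sublattice of a product of chains;
\item the weight is exactly log-modular, because it factorizes over boxes;
\item the constraint sets satisfy $A\wedge A'\subseteq A$ and $A\vee A'\subseteq A'$.
\end{itemize}

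What each route buys:
\begin{itemize}
\item \emph{Your route} is shorter and more conceptual. It exhibits the statement as an instance of a standard correlation inequality, and it gives stochastic domination, hence the result for all increasing $F$ at once. It needs no incremental-path lemma. It extends verbatim to any log-supermodular weight, and to any pair of constraint sets with the meet/join property, not only interval constraints.
\item \emph{The paper's route} is elementary and self-contained. Its zero-weight cases are handled by explicit conventions.
\end{itemize}

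Step 1, which you flagged as delicate, is in fact routine, and your verification of it is complete.

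There is one point to get right when you invoke Holley. Your $\mu_1,\mu_2$ vanish off $A$ and $A'$, so the textbook statement for strictly positive measures on $\{0,1\}^n$ does not apply as written. The heat-bath coupling proof likewise would need irreducibility on the supports. Use instead the Ahlswede--Daykin four functions theorem. It holds for arbitrary nonnegative functions on a finite distributive lattice $\mathcal{L}$. Take $\alpha=\delta=\mu_1$ and $\beta=\gamma=\mu_2$; your Step~3 condition is exactly its hypothesis. Now choose $X=U$ an up-set and $Y=\mathcal{L}$. Since $U\vee\mathcal{L}=U$ and $U\wedge\mathcal{L}=\mathcal{L}$ (use the top element of $\mathcal{L}$), the theorem gives $\mu_1(U)\le\mu_2(U)$. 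This is the domination you need.

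Also, if $\boldsymbol{p}$ has zero entries, nonemptiness of $A$ and $A'$ does not make $\mu_1,\mu_2$ normalizable. You need the two constrained Schur sums to be positive, which is exactly the hypothesis in the formal version of the theorem.
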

Although the proof is nontrivial, the theorem has a simple interpretation. Increasing the lower and upper bounds for the fillings of individual boxes makes their entries larger on average. The result is also independent of \(\boldsymbol{p}\). It is therefore true in particular for \(\boldsymbol{p}\dot{=}(1/d,\ldots,1/d)\), showing that it is combinatorial in nature. To prove~\cref{thm:main_part_monotonicity_constrained_generalized_averages}, consider the simple example \(\yd{\lambda}\dot{=}\yd{n}\), for which we decompose the corresponding Schur polynomial
\begin{equation}
    \schur{\yd{\lambda}}[\boldsymbol{p}] = \sum_{\wt{w}\vdash_{d}\yd{\lambda}} \boldsymbol{p}^{\#\mleft(\wt{w}\mright)}
\end{equation}
in the following way
\begin{equation}
    \label{equ:main_part_proof_outline_decomposition_schur_polynomial}
    \schur{\yd{n}}[\boldsymbol{p}] = \sum_{i=1}^{d}p_{i}\schur{\yd{n-1}}[\boldsymbol{p}|_{i}^{d}] \, .
\end{equation}

Here, one fixes the entry in the first box, which constrains the following boxes to be equal or larger, and then sums over all possible entries in the first box. Importantly, this reduces the relevant YD by one. We can apply a similar decomposition to averages with constraints over gYDs and proceed by induction to obtain the theorem; see~\cref{fig:gYT}(e).
Despite its simple form,~\cref{thm:main_part_monotonicity_constrained_generalized_averages} is powerful enough to rederive the following major theorem, which we use to prove~\cref{thm:main_part_monotonicity_functions_young_diagrams} and which illustrates the strength of the formalism of gYDs with constraints.
\begin{theorem}[Theorem~5 of Ref.~\cite{LPP05}, reproved as~\cref{thm:inequality_Schur_polynomials_products}]
    Let \(\yd{\lambda_{1}},\yd{\lambda_{2}},\yd{\nu_{1}},\yd{\nu_{2}}\) be YDs with \(\yd{\nu_{1}} \leq \yd{\lambda_{1}}\) and \(\yd{\nu_{2}} \leq \yd{\lambda_{2}}\) entrywise. Let \(\yd{\lambda_{\min}},\yd{\lambda_{\max}}\) be given by
    \begin{equation}
    \begin{aligned}
        (\lambda_{\min})_{i} \coloneqq \min\mleft((\lambda_{1})_{i},(\lambda_{2})_{i}\mright) \, , \\
        (\lambda_{\max})_{i} \coloneqq \max\mleft((\lambda_{1})_{i},(\lambda_{2})_{i}\mright) \, ,
    \end{aligned}
    \end{equation}
    and let \(\yd{\nu_{\min}},\yd{\nu_{\max}}\) be defined similarly. Then we have
    \begin{equation}
        \schur{\yd{\lambda_{\max}}\setminus \yd{\nu_{\max}}}[\boldsymbol{p}] \, \schur{\yd{\lambda_{\min}}\setminus \yd{\nu_{\min}}}[\boldsymbol{p}] \geq \schur{\yd{\lambda_{1}}\setminus \yd{\nu_{1}}}[\boldsymbol{p}] \, \schur{\yd{\lambda_{2}}\setminus \yd{\nu_{2}}}[\boldsymbol{p}] \, .
    \end{equation}
    Here we interpret \(\yd{\lambda}\setminus\yd{\nu}\) as the skew YD given by removing the smaller diagram from the larger one.
\end{theorem}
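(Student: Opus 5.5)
The plan is to prove the inequality as a \emph{log-supermodularity} statement on a lattice of shape parameters. I would reduce it to a local two-coordinate inequality, and then prove that local inequality with \cref{thm:main_part_monotonicity_constrained_generalized_averages}.

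\textbf{Setup and reduction.} Encode a skew shape by the integer vector $x=(\lambda_1,\ldots,\lambda_d,\nu_1,\ldots,\nu_d)$. Set $f(x)\coloneqq \schur{\yd{\lambda}\setminus\yd{\nu}}[\boldsymbol{p}]$, with $f(x)=0$ if $x$ does not define a skew shape.
\begin{itemize}
\item The operations $\yd{\lambda_{\min}},\yd{\lambda_{\max}},\yd{\nu_{\min}},\yd{\nu_{\max}}$ in the statement are exactly the meet $x\wedge y$ and join $x\vee y$ in $\mathbb{Z}^{2d}$.
\item The claim is therefore $f(x\vee y)f(x\wedge y)\ge f(x)f(y)$.
\item I would first check that if $x,y$ are valid skew shapes, then so are $x\wedge y$, $x\vee y$, and every lattice point of the box $[x\wedge y,x\vee y]$ that lies on a monotone path used below. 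Rows of partitions are closed under entrywise min and max, and $\nu\le\lambda$ is preserved.
\item By the standard argument (Ahlswede--Daykin/FKG style, or just telescoping along chains), global log-supermodularity on a distributive lattice of positive values follows from the local inequality for distinct coordinates $a\neq b$:
\begin{equation*}
f(x+e_a+e_b)\,f(x)\ \ge\ f(x+e_a)\,f(x+e_b).
\end{equation*}
Equivalently, the ratio $f(x+e_a)/f(x)$ is non-decreasing when coordinate $b$ is increased.
\item Care is needed where intermediate points leave the set of valid shapes or $f$ vanishes. I would handle this by choosing the chain from $x\wedge y$ to $x$ to add boxes in a valid order (e.g.\ rows top to bottom for $\lambda$, bottom to top for $\nu$).
\end{itemize}

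\textbf{The local ratio as a Weyl average.} Take the case where coordinate $a$ is $\lambda_a$, so $e_a$ adds a box $\beta$ at the end of row $a$.
\begin{itemize}
\item Summing over the entry of $\beta$ in the generating function gives
\begin{equation*}
\frac{f(x+e_a)}{f(x)}=\Bigl\langle G(\wt{w})\Bigr\rangle_{\wt{w}\sim \mathrm{W}(\boldsymbol{p},\yd{\lambda}\setminus\yd{\nu})},\qquad G(\wt{w})=\sum_{c} p_c .
\end{equation*}
\item Here $c$ ranges over the entries allowed by $\beta$'s neighbours in $\wt{w}$: $c\ge$ the left neighbour, $c>$ the box above, and $c<$ the box below (if present).
\item Hence $G$ is non-increasing in the left and upper neighbour entries and non-decreasing in the lower neighbour entry.
\item The case $e_a=\nu_a$, which removes the first box of row $a$, is handled by the mirror formula, which flips these monotonicities.
\end{itemize}

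\textbf{Comparing the two Weyl distributions.} Next I compare $\mathrm{W}(\boldsymbol{p},x)$ with $\mathrm{W}(\boldsymbol{p},x+e_b)$.
\begin{itemize}
\item The larger shape is the gYD with one extra box $\beta'$. Conditioning on the entry $c'$ of $\beta'$, the law of the remaining boxes is the constrained Weyl distribution $\mathrm{W}(\boldsymbol{p},\yd{\lambda}\setminus\yd{\nu}\,|\,\boldsymbol{x}_{c'})$. In the adjacent boxes, the constraint $\boldsymbol{x}_{c'}$ is an upper bound $<c'$ (above) or $\le c'$ (left) for an added $\lambda$-box, and a lower bound for an added $\nu$-box.
\item By \cref{thm:main_part_monotonicity_constrained_generalized_averages}, each conditional law is stochastically below (resp.\ above) the unconstrained one for functions monotone in all entries. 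Note that removing constraints can be viewed as setting $x_u=d$ or $x_l=1$.
\item Averaging over $c'$ then gives the comparison of $\langle G\rangle$ on the two shapes. For instance, adding a box to $\lambda_b$ pushes entries of $\wt{w}$ down, which raises $G$ whenever $G$ is non-increasing in all relevant entries.
\end{itemize}

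\textbf{Main obstacle.} The main obstacle is exactly this monotonicity bookkeeping. $G$ is monotone only in the two or three neighbour entries of $\beta$, and in opposite directions for the upper/left versus lower neighbours. The stochastic domination from \cref{thm:main_part_monotonicity_constrained_generalized_averages} applies to functions monotone in all entries simultaneously.
\begin{itemize}
\item My first attempt is to treat the box below $\beta$ as absent, by choosing the chain order so that row $a+1$ is still short when $\beta$ is added; then $G$ is genuinely non-increasing.
\item Where that is impossible, I would restrict $\wt{w}$ to the subdiagram above/left of $\beta$ using the complementary-subdiagram decomposition of \cref{fig:gYT}(f), apply the theorem there, and check the remaining sign cases for ($\lambda_a$ or $\nu_a$) $\times$ ($\lambda_b$ or $\nu_b$) one by one.
\end{itemize}
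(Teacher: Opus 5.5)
Your argument works in outline, but it takes a genuinely different route from the paper.

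\textbf{How the two routes differ.} The paper does not reduce to single-box moves. It works on the one gYD $\gamma=\lambda_{\max}\setminus\nu_2$ over the extended alphabet $\{0,\ldots,d+1\}$ and compares two constraints: $\boldsymbol{x}$, which pins the boxes of $\lambda_1\setminus\lambda_2$ to $d+1$, and the smaller $\boldsymbol{x}'$, which pins $\nu_1\setminus\nu_2$ to $0$. Its test function is the $\boldsymbol{p}$-weighted count of fillings of $\nu_2\setminus\nu_1$ compatible with the filling of $\gamma$, times the indicator that $(\lambda_2\setminus\lambda_1)\cap\gamma$ carries $d+1$. A single application of \cref{thm:monotonicity_constrained_generalized_averages} then compares $s^{\lambda_{\min}\setminus\nu_{\min}}/s^{\lambda_2\setminus\nu_2}$ with $s^{\lambda_1\setminus\nu_1}/s^{\lambda_{\max}\setminus\nu_{\max}}$ directly. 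You instead prove log-supermodularity of $(\lambda,\nu)\mapsto s^{\lambda\setminus\nu}(\boldsymbol p)$ by telescoping over $\mathbb{Z}^{2d}$. You only ever apply the monotonicity theorem to a tail sum $\sum_{c\ge t}p_c$ or a head sum $\sum_{c\le t}p_c$, with one conditioned box. This gives a more elementary and transparent proof: no extended alphabet, and no need to check monotonicity of a test function that is itself a constrained Schur polynomial. It also makes the mechanism visible: adding a box lowers the conditional law of its neighbours, and removing one raises it. The price is lattice bookkeeping that the paper's global coupling avoids.

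\textbf{Three points to fix when you write it out.}

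First, your ``main obstacle'' does not arise. If $\beta=(a,\lambda_a+1)$ is appended to row $a$, the cell $(a+1,\lambda_a+1)$ is never in the shape, because $\lambda_{a+1}\le\lambda_a$. So $G=\sum_{c\ge\max(w_{\mathrm{left}},\,w_{\mathrm{above}}+1)}p_c$ is non-increasing in every entry. Dually, for a $\nu_a$-move the cell above $(a,\nu_a+1)$ is absent whenever $\nu+e_a$ is a partition. So the mirror function $\sum_{c\le\min(w_{\mathrm{right}},\,w_{\mathrm{below}}-1)}p_c$ is non-decreasing in every entry. Your first attempt therefore always applies, and the fallback via subdiagram restrictions is unnecessary.

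Second, the telescoping uses the local inequality at every point of the product grid of the chains from $x\wedge y$ to $x$ and to $y$. So all grid points, not just the two paths, must be skew shapes. This holds if each chain adds all its $\lambda$-boxes first and only then its $\nu$-boxes, each row by row from top to bottom. Adding $\nu$-boxes bottom to top, as you suggest, can break the partition property of $\nu$. With this order, $\lambda$ at any grid point is a union of two partitions. Each of the two pieces of $\nu$ lies either in $\nu_{\min}\subseteq\lambda_{\min}$, or in $\nu_1$ (resp.\ $\nu_2$) at a stage where $\lambda\supseteq\lambda_1$ (resp.\ $\lambda_2$). All these shapes have at most $d$ rows, so $s>0$ for $\boldsymbol p>0$, and the case $\boldsymbol p\ge0$ follows by continuity.

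Third, in a mixed square (a $\lambda_a$-move against a $\nu_b$-move), the removed box $(b,\nu_b+1)$ can be the left or upper neighbour of $\beta$, so $G$ differs between the two shapes. On the larger shape $G$ is pointwise smaller, which is exactly the direction you need.
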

We prove it by choosing a gYD \(\yd{\lambda}_{\mathrm{g}}\), a function \(F\), and constraints \(\boldsymbol{x}\geq \boldsymbol{x}'\) so that~\cref{thm:main_part_monotonicity_constrained_generalized_averages} gives
\begin{equation}
        \weylavg{F\mleft(\wt{w}_{\mathrm{g}}\mright)}{\boldsymbol{p}}{\yd{\lambda}_{\mathrm{g}} | \boldsymbol{x}} \geq \weylavg{F}{\boldsymbol{p}}{\yd{\lambda}_{\mathrm{g}} | \boldsymbol{x}'} \, ,
\end{equation}
but we also have
\begin{equation}
\begin{aligned}
\weylavg{F\mleft(\wt{w}_{\mathrm{g}}\mright)}{\boldsymbol{p}}{\yd{\lambda}_{\mathrm{g}} | \boldsymbol{x}} &= \frac{\schur[F]{\yd{\lambda}_{\mathrm{g}}| \boldsymbol{x}}[\boldsymbol{p}]}{\schur{\yd{\lambda}_{\mathrm{g}}| \boldsymbol{x}}[\boldsymbol{p}]} = \frac{\schur{\yd{\lambda_{\min}}\setminus \yd{\nu_{\min}}}[\boldsymbol{p}]}{\schur{\yd{\lambda_{2}}\setminus \yd{\nu_{2}}}[\boldsymbol{p}]} \, ,
    \\
    \weylavg{F\mleft(\wt{w}_{\mathrm{g}}\mright)}{\boldsymbol{p}}{\yd{\lambda}_{\mathrm{g}} | \boldsymbol{x}'}& = \frac{\schur[F]{\yd{\lambda}_{\mathrm{g}}| \boldsymbol{x}'}[\boldsymbol{p}]}{\schur{\yd{\lambda}_{\mathrm{g}}| \boldsymbol{x}'}[\boldsymbol{p}]} = \frac{\schur{\yd{\lambda_{1}}\setminus \yd{\nu_{1}}}[\boldsymbol{p}]}{\schur{\yd{\lambda_{\max}}\setminus \yd{\nu_{\max}}}[\boldsymbol{p}]} \, .
\end{aligned}
\end{equation}
The proof of~\cref{thm:main_part_monotonicity_functions_young_diagrams} then follows an argument similar to Ref.~\cite[Proposition~8.1]{KT21}. For the final result of the theory of gYDs, we introduce the following notions. The lowest weight \(\wt{\lw}_{\mathrm{g}}\vdash_{d}\yd{\lambda}_{\mathrm{g}}\) is the unique filling of \(\yd{\lambda}_{\mathrm{g}}\) in which every entry is as small as possible, and similarly for the highest weight \(\wt{\hw}_{\mathrm{g}}\vdash_{d}\yd{\lambda}_{\mathrm{g}}\). Formally, for every \(\wt{w}_{\mathrm{g}}\vdash_{d}\yd{\lambda}_{\mathrm{g}}\),
\begin{equation}
    \wt{\lw}_{\mathrm{g}}(i,j) \leq \wt{w}_{\mathrm{g}}(i,j) \leq \wt{\hw}_{\mathrm{g}}(i,j) \, .
\end{equation}
When \(\yd{\lambda}_{\mathrm{g}}\) is a regular YD, they correspond to the usual lowest / highest weight WTs, illustrated in~\cref{fig:gYT}(c,d).
Further, let \(\yd{\lambda}_{\mathrm{g}}\) comprise two disjoint gYDs \(\yd{\mu}_{\mathrm{g}}\) and \(\yd{\nu}_{\mathrm{g}}\). Formally,
\begin{equation}
    \yd{\mu}_{\mathrm{g}} \subseteq \yd{\lambda}_{\mathrm{g}} \, ,\quad \yd{\nu}_{\mathrm{g}} = \yd{\lambda}_{\mathrm{g}} \setminus \yd{\mu}_{\mathrm{g}} \, .
\end{equation}
For some \(\wt{w'}_{\mathrm{g}}\vdash_{d} \yd{\nu}_{\mathrm{g}}\), we define \(\boldsymbol{x}\mleft(\wt{w'}_{\mathrm{g}}\mright)\) as the constraint that forces
\begin{equation}
    \wt{w''}_{\mathrm{g}}\vdash_{d} \yd{\mu}_{\mathrm{g}}|\boldsymbol{x}\mleft(\wt{w}_{\mathrm{g}}\mright)
\end{equation}
to be a valid extension of \(\wt{w'}_{\mathrm{g}}\) onto \(\yd{\lambda}_{\mathrm{g}}\). More formally, we have
\begin{equation}
    \wt{w''}_{\mathrm{g}}\vdash_{d} \yd{\mu}_{\mathrm{g}}|\boldsymbol{x}\mleft(\wt{w'}_{\mathrm{g}}\mright) \quad \Leftrightarrow \quad \mleft(\wt{w'}_{\mathrm{g}} \cup \wt{w''}_{\mathrm{g}}\mright) \vdash_{d} \yd{\lambda}_{\mathrm{g}} \, .
\end{equation}
Finally, we introduce restrictions of gWTs. For \(\wt{w}_{\mathrm{g}} \vdash_{d} \yd{\lambda}_{\mathrm{g}}\), we define
\begin{equation}
    \wt{w|_{\yd{\nu}_{\mathrm{g}}}}_{\mathrm{g}} \vdash_{d} \yd{\nu}_{\mathrm{g}}
\end{equation}
to be the gWT restricted to the entries on \(\yd{\nu}_{\mathrm{g}} \subseteq \yd{\lambda}_{\mathrm{g}}\). For the special case of lowest and highest weight gWTs, we introduce

\begin{equation}
\begin{aligned}
    \boldsymbol{x}_{\min} \coloneqq \boldsymbol{x}\mleft(\wt{\lw|_{\yd{\nu}_{\mathrm{g}}}}_{\mathrm{g}}\mright) \, , \quad \boldsymbol{x}_{\max} \coloneqq \boldsymbol{x}\mleft(\wt{\hw|_{\yd{\nu}_{\mathrm{g}}}}_{\mathrm{g}}\mright) \, .
\end{aligned}
\end{equation}
The final main theorem is the following.
\begin{theorem}[\cref{thm:restriction_averages_generalized_YD_monotone_functions}, informal]
    \label{thm:main_part_restriction_averages_generalized_YD_monotone}
    Let
    \begin{equation}
        F\mleft(\wt{w}_{\mathrm{g}}\mright) = F|_{\yd{\mu}_{\mathrm{g}}}\mleft(\wt{w|_{\yd{\mu}_{\mathrm{g}}}}_{\mathrm{g}}\mright) F|_{\yd{\nu}_{\mathrm{g}}}\mleft(\wt{w|_{\yd{\nu}_{\mathrm{g}}}}_{\mathrm{g}}\mright) \, ,
    \end{equation}
    where all functions involved are non-negative. If \(F|_{\yd{\mu}_{\mathrm{g}}}\) is non-decreasing in every entry, then we have
    \begin{equation}
    \begin{aligned}       \weylavg{F\mleft(\wt{w}_{\mathrm{g}}\mright)}{\boldsymbol{p}}{\yd{\lambda}_{\mathrm{g}}} \geq \weylavg{F|_{\yd{\mu}_{\mathrm{g}}}\mleft(\wt{w'}_{\mathrm{g}}\mright)}{\boldsymbol{p}}{\yd{\mu}_{\mathrm{g}}|\boldsymbol{x}_{\min}} \\
        \cdot \, \weylavg{F|_{\yd{\nu}_{\mathrm{g}}}\mleft(\wt{w|_{\yd{\nu}_{\mathrm{g}}}}_{\mathrm{g}}\mright)}{\boldsymbol{p}}{\yd{\lambda}_{\mathrm{g}}}
    \end{aligned}
\end{equation}
and
    \begin{equation}
    \begin{aligned}   \weylavg{F\mleft(\wt{w}_{\mathrm{g}}\mright)}{\boldsymbol{p}}{\yd{\lambda}_{\mathrm{g}}} \leq \weylavg{F|_{\yd{\mu}_{\mathrm{g}}}\mleft(\wt{w'}_{\mathrm{g}}\mright)}{\boldsymbol{p}}{\yd{\mu}_{\mathrm{g}}|\boldsymbol{x}_{\max}} \\
        \cdot \, \weylavg{F|_{\yd{\nu}_{\mathrm{g}}}\mleft(\wt{w|_{\yd{\nu}_{\mathrm{g}}}}_{\mathrm{g}}\mright)}{\boldsymbol{p}}{\yd{\lambda}_{\mathrm{g}}} \, .
    \end{aligned}
\end{equation}
On the other hand, if \(F|_{\yd{\mu}_{\mathrm{g}}}\) is non-increasing in every entry, then the inequality signs are reversed.
\end{theorem}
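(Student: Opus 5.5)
Write the full average as an iterated average: first over the restriction to $\yd{\nu}_{\mathrm{g}}$, then over the extension to $\yd{\mu}_{\mathrm{g}}$. Concretely, the Weyl distribution on $\yd{\lambda}_{\mathrm{g}}$ factorizes. The marginal law of $\wt{w'}_{\mathrm{g}}=\wt{w|_{\yd{\nu}_{\mathrm{g}}}}_{\mathrm{g}}$ has mass proportional to
\begin{equation}
\boldsymbol{p}^{\#(\wt{w'}_{\mathrm{g}})}\,\schur{\yd{\mu}_{\mathrm{g}}|\boldsymbol{x}(\wt{w'}_{\mathrm{g}})}[\boldsymbol{p}] .
\end{equation}
Conditionally on $\wt{w'}_{\mathrm{g}}$, the restriction to $\yd{\mu}_{\mathrm{g}}$ is distributed as $\mathrm{W}(\boldsymbol{p},\yd{\mu}_{\mathrm{g}}|\boldsymbol{x}(\wt{w'}_{\mathrm{g}}))$. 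This is immediate from the defining equivalence between valid extensions and gWTs on $\yd{\lambda}_{\mathrm{g}}$, together with the multiplicativity of the monomial weight over disjoint box sets. It follows that
\begin{equation}
\weylavg{F}{\boldsymbol{p}}{\yd{\lambda}_{\mathrm{g}}}=\mleft\langle F|_{\yd{\nu}_{\mathrm{g}}}(\wt{w'}_{\mathrm{g}})\;\weylavg{F|_{\yd{\mu}_{\mathrm{g}}}}{\boldsymbol{p}}{\yd{\mu}_{\mathrm{g}}|\boldsymbol{x}(\wt{w'}_{\mathrm{g}})}\mright\rangle_{\wt{w'}_{\mathrm{g}}},
\end{equation}
where the outer average is over the marginal law above. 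By construction, this outer average of any function of $\wt{w'}_{\mathrm{g}}$ equals its Weyl average over $\yd{\lambda}_{\mathrm{g}}$.

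Next, bound the inner conditional average uniformly in $\wt{w'}_{\mathrm{g}}$. Every admissible restriction satisfies $\wt{\lw|_{\yd{\nu}_{\mathrm{g}}}}_{\mathrm{g}}\le\wt{w'}_{\mathrm{g}}\le\wt{\hw|_{\yd{\nu}_{\mathrm{g}}}}_{\mathrm{g}}$ pointwise. Each box of $\yd{\mu}_{\mathrm{g}}$ has its lower bound set by neighbours above or to the left and its upper bound set by neighbours below or to the right. Hence both bounds in $\boldsymbol{x}(\wt{w'}_{\mathrm{g}})$ are entrywise non-decreasing functions of $\wt{w'}_{\mathrm{g}}$, and we get
\begin{equation}
\boldsymbol{x}_{\min}\le\boldsymbol{x}(\wt{w'}_{\mathrm{g}})\le\boldsymbol{x}_{\max}
\end{equation}
in both components. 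I would verify this carefully, including that the unconstrained boxes default to the trivial bounds $1$ and $d$. Feasibility holds because the lowest and highest weight gWTs on $\yd{\lambda}_{\mathrm{g}}$ restrict to fillings in $\yd{\mu}_{\mathrm{g}}|\boldsymbol{x}_{\min}$ and $\yd{\mu}_{\mathrm{g}}|\boldsymbol{x}_{\max}$, and each $\wt{w'}_{\mathrm{g}}$ in the support has at least one extension. With feasibility in hand, \cref{thm:main_part_monotonicity_constrained_generalized_averages} applies to the non-decreasing $F|_{\yd{\mu}_{\mathrm{g}}}$ and gives
\begin{equation}
\weylavg{F|_{\yd{\mu}_{\mathrm{g}}}}{\boldsymbol{p}}{\yd{\mu}_{\mathrm{g}}|\boldsymbol{x}_{\min}}\le\weylavg{F|_{\yd{\mu}_{\mathrm{g}}}}{\boldsymbol{p}}{\yd{\mu}_{\mathrm{g}}|\boldsymbol{x}(\wt{w'}_{\mathrm{g}})}\le\weylavg{F|_{\yd{\mu}_{\mathrm{g}}}}{\boldsymbol{p}}{\yd{\mu}_{\mathrm{g}}|\boldsymbol{x}_{\max}}.
\end{equation}

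Finally, multiply by $F|_{\yd{\nu}_{\mathrm{g}}}(\wt{w'}_{\mathrm{g}})\ge0$ and take the outer average. Since the bounds are constants, they factor out, and the remaining outer average is exactly $\weylavg{F|_{\yd{\nu}_{\mathrm{g}}}}{\boldsymbol{p}}{\yd{\lambda}_{\mathrm{g}}}$, which gives both stated inequalities. If $F|_{\yd{\mu}_{\mathrm{g}}}$ is non-increasing, apply the same argument to $-F|_{\yd{\mu}_{\mathrm{g}}}$ in the monotonicity step; this swaps the roles of $\boldsymbol{x}_{\min}$ and $\boldsymbol{x}_{\max}$ and reverses both inequalities, and non-negativity of $F|_{\yd{\nu}_{\mathrm{g}}}$ still allows the multiplication.

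The main obstacle is the middle step: checking that the induced constraints are monotone in $\wt{w'}_{\mathrm{g}}$ in the precise componentwise sense required by \cref{thm:main_part_monotonicity_constrained_generalized_averages}, for gYDs that may have gaps. I would handle it by writing the constraint explicitly box by box. The lower bound of a box of $\yd{\mu}_{\mathrm{g}}$ is the maximum of its left $\nu$-neighbour's entry and its upper $\nu$-neighbour's entry plus one, defaulting to $1$. The upper bound is the minimum of its right $\nu$-neighbour's entry and its lower $\nu$-neighbour's entry minus one, defaulting to $d$. Both formulas are evidently monotone in the entries of $\wt{w'}_{\mathrm{g}}$, which gives the required ordering of constraints.
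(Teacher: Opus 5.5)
Your proposal is correct and is essentially the paper's proof. The paper expands $\schur[F]{\yd{\lambda}_{\mathrm{g}}}[\boldsymbol{p}]$ as a sum over fillings of the non-monotone part and writes each term as a constrained Schur polynomial times a constrained Weyl average. It then sandwiches the induced constraint between $\boldsymbol{x}_{\min}$ and $\boldsymbol{x}_{\max}$ via the lowest/highest-weight ordering lemma, applies the constrained monotonicity theorem, and recombines. This is your marginal/conditional argument written in unnormalized form.

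Your explicit check that $\schur{\yd{\mu}_{\mathrm{g}}|\boldsymbol{x}_{\min}}[\boldsymbol{p}]$ and $\schur{\yd{\mu}_{\mathrm{g}}|\boldsymbol{x}_{\max}}[\boldsymbol{p}]$ are positive is a hypothesis of the monotonicity theorem that the paper's proof leaves implicit. That check also tacitly needs the entries of $\boldsymbol{p}$ to be strictly positive.
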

Taking \(\yd{\lambda}_{\mathrm{g}}\) to be a YD and restricting to products of functions of the top and bottom rows gives~\cref{thm:main_part_splitting_diagram_average_weight}. A schematic partition of the gYD into the subdiagrams supporting the two averages with constraints is shown in~\cref{fig:gYT}(f,g).
To illustrate the proof idea of~\cref{thm:main_part_restriction_averages_generalized_YD_monotone}, consider again \(\yd{\lambda} = \yd{n}\). Let \(\yd{\mu} = \yd{n-1}\), and take a function \(F\) on \(\mathcal{W}_{\yd{\lambda}_{\mathrm{g}}}^{d}\) which does not depend on the first box, that is
\begin{equation}
    F\mleft(\wt{w}_{\mathrm{g}}\mright) = F|_{\yd{\mu}_{\mathrm{g}}}\mleft(\wt{w|_{\yd{\mu}_{\mathrm{g}}}}_{\mathrm{g}}\mright) \, .
\end{equation}
Restating~\cref{equ:main_part_proof_outline_decomposition_schur_polynomial} with this notation gives
\begin{equation}
    \label{equ:main_part_proof_outline_decomposition_schur_polynomial_constraints}
    \schur{\yd{\lambda}_{\mathrm{g}}}[\boldsymbol{p}] = \sum_{i=1}^{d}p_{i}\schur{\yd{\mu}_{\mathrm{g}}|\boldsymbol{x}(i)}[\boldsymbol{p}] \, ,
\end{equation}
where \(\boldsymbol{x}(i)\) restricts all the entries to \(\{i,\ldots,d\}\). Similarly, we obtain
\begin{equation}
    \label{equ:main_part_proof_outline_decomposition_weighted_schur_polynomial_constraints}
    \schur[F]{\yd{\lambda}_{\mathrm{g}}}[\boldsymbol{p}] = \sum_{i=1}^{d}p_{i}\schur[F|_{\yd{\mu}_{\mathrm{g}}}]{\yd{\mu}_{\mathrm{g}}|\boldsymbol{x}(i)}[\boldsymbol{p}] \, .
\end{equation}
Here, one fixes the first box to have filling \(i\) and then sums over the remaining possibilities. One checks that \(\boldsymbol{x}(i) \leq \boldsymbol{x}(j)\) entrywise for \(i\leq j\). If \(F|_{\yd{\mu}_{\mathrm{g}}}\) is additionally entrywise non-decreasing, then~\cref{thm:main_part_monotonicity_functions_young_diagrams} gives
\begin{equation}
\begin{aligned}
\label{equ:main_part_proof_outline_decomposition_weighted_schur_polynomial_constraints_inequality}
    \schur[F|_{\yd{\mu}_{\mathrm{g}}}]{\yd{\mu}_{\mathrm{g}}|\boldsymbol{x}(i)}[\boldsymbol{p}] =& \schur{\yd{\mu}_{\mathrm{g}}|\boldsymbol{x}(i)}[\boldsymbol{p}] \weylavg{F|_{\yd{\mu}_{\mathrm{g}}}\mleft(\wt{w'}_{\mathrm{g}}\mright)}{\boldsymbol{p}}{\yd{\mu}_{\mathrm{g}}|\boldsymbol{x}(i)} \\
    \geq& \schur{\yd{\mu}_{\mathrm{g}}|\boldsymbol{x}(i)}[\boldsymbol{p}] \weylavg{F|_{\yd{\mu}_{\mathrm{g}}}\mleft(\wt{w'}_{\mathrm{g}}\mright)}{\boldsymbol{p}}{\yd{\mu}_{\mathrm{g}}|\boldsymbol{x}(1)} \, .
\end{aligned}
\end{equation}
In addition, it follows that \(\wt{\lw}_{\mathrm{g}}\vdash_{d}\yd{\lambda}_{\mathrm{g}}\) is given precisely by the gWT with all \(1\)'s, and therefore
\begin{equation}
    \boldsymbol{x}_{\min} = \boldsymbol{x}(1) \, .
\end{equation}
In particular, \(\boldsymbol{x}(1)\) is equivalent to no constraint at all in this case, although this observation is not needed below. Combining~\cref{equ:main_part_proof_outline_decomposition_schur_polynomial_constraints,equ:main_part_proof_outline_decomposition_weighted_schur_polynomial_constraints,equ:main_part_proof_outline_decomposition_weighted_schur_polynomial_constraints_inequality} gives
\begin{equation}
\begin{aligned}
    &\weylavg{F\mleft(\wt{w}_{\mathrm{g}}\mright)}{\boldsymbol{p}}{\yd{\lambda}_{\mathrm{g}}} = \frac{\schur[F]{\yd{\lambda}_{\mathrm{g}}}[\boldsymbol{p}]}{\schur{\yd{\lambda}_{\mathrm{g}}}[\boldsymbol{p}]} \\
    \geq& \weylavg{F|_{\yd{\mu}_{\mathrm{g}}}\mleft(\wt{w'}_{\mathrm{g}}\mright)}{\boldsymbol{p}}{\yd{\mu}_{\mathrm{g}}|\boldsymbol{x}(1)} \, .
\end{aligned}
\end{equation}
The full proof of~\cref{thm:main_part_restriction_averages_generalized_YD_monotone} extends this idea to other gYDs and \(F|_{\yd{\nu}_{\mathrm{g}}} \neq 1\).

\section{Outlook}
This paper leaves several natural directions open. The asymptotic results assume a nondegenerate input spectrum, while the nonasymptotic optimality is currently established only under large row-gap assumptions. We also expect that the overhang removal rule may remain optimal beyond the regime proven here, possibly for all $n$ in the principal-eigenstate case $k\!=\!1$, and that tighter nonasymptotic bounds may be available. More broadly, the QPA task studied here should be viewed as one instance of a wider theory of coherent spectrum transformation; characterizing which such transformations are coherently achievable, and with what sample complexity, would extend the present framework. Finally, finding more efficient and practically relevant implementations for the optimal QPA protocol without relying on primitives such as GQPE or CG transforms remains a concrete algorithmic challenge.

\section{Acknowledgments}
We thank A. Gilyén and D. Grinko for helpful discussions.
Z.L., A.W.H. and I.L.C. acknowledge support from the U.S. Department of Energy, Office of Science, National Quantum Information Science Research Centers, Co-design Center for Quantum Advantage (C\(^2\)QA), under contract No. DE-SC0012704.
ET is supported by the ERC grant (QInteract, Grant No 101078107) and VILLUM FONDEN (Grant No 10059 and 37532).

\putbib[qpa_main.bib]
\end{bibunit}

\beginsupplement

\begin{bibunit}[apsrev4-2]
\begin{center}
  {\Large \textbf{Supplemental Material}}\\[0.25cm]
  {\normalsize for}\\[0.2cm]
  {\large \textbf{Quantum Purity Amplification}}\\[0.08cm]
  {\large \textbf{for Arbitrary Eigenstates and Multiple Outputs}}\\[0.35cm]

  {\normalsize Zhaoyi Li,\textsuperscript{1,*} Elias Theil,\textsuperscript{2,*} Aram W. Harrow,\textsuperscript{1} and Isaac Chuang\textsuperscript{1}}\\[0.25cm]

  {\footnotesize\itshape
  \textsuperscript{1}Department of Physics, Massachusetts Institute of Technology, Cambridge MA 02139, USA \\
  \textsuperscript{2}Centre for the Mathematics of Quantum Theory, University of Copenhagen, 2100 Copenhagen, Denmark \\
  \textsuperscript{*}These authors contributed equally. Emails: \texttt{ladmon@mit.edu}, \texttt{edmt@math.ku.dk}.}
\end{center}

\vspace{0.5cm}
\noindent In this Supplemental Material (SM), we provide definitions, complete proofs, and additional technical details supporting the quantum purity amplification (QPA) results in the main text.
We first introduce the mathematical background required for the technical arguments in \cref{sec:math_prelim_supp}: asymptotic notation conventions (\cref{subsec:math_conventions_supp}), representation-theoretic preliminaries (\cref{subsec:math_preliminaries}), extremal channels and the optimal Choi matrix (\cref{subsec:extremal_supp}), and isometric intertwiners realizing extremal channels (\cref{subsec:isometric_supp}).
The construction and sector-wise reduction of QPA are presented in \cref{sec:general_qpa_supp}: we set up the general task and notation (\cref{subsec:setup_supp}), characterize the protocols and their utilities (\cref{subsec:protocols_supp}), reduce the overall utility to the sector-wise utility via SW concentration (\cref{subsec:concentration_supp}), and give the explicit sector-wise utility expression (\cref{subsec:expression_supp}).
Optimality of overhang removal is treated in \cref{sec:optimality_supp}, beginning with structural properties of the \(F\)-symbols (\cref{subsec:props_F_symbols}) and followed by asymptotic and nonasymptotic optimality statements (\cref{subsec:asymptotic_optimality_supp,subsec:nonasymptotic_optimality_supp}).
Implementation details are collected in \cref{sec:implementation}, where we analyze the correctness and complexity of the GQPE-OQPA and CG-OQPA implementations (\cref{subsec:implementation_via_GPE,subsec:implementation_via_CG_transform}).
The asymptotic behavior of sector-wise utilities is analyzed in \cref{sec:asymptotic_supp}: after introducing the GT parametrization with path graphs (\cref{subsec:gt_param_supp}), we treat the all-site utility for intensive outputs (\cref{subsubsec:intensive_supp}) and extensive removal (\cref{subsubsec:extensive_supp}), followed by the one-site utility (\cref{subsec:asymptotic_one_site_supp}).
Finally, nonasymptotic, dimension-uniform bounds on the sector-wise utility are developed in \cref{sec:qpa_irrep_level_all_copy}, beginning with gYDs with constraints (\cref{subsec:constrained_yd_supp}), followed by the all-site analysis (\cref{subsec:nonasymptotic_allcopy_supp}), the one-site analysis (\cref{subsec:nonasymptotic_onecopy_supp}), and overall bounds (\cref{subsec:nonasymptotic_overall_bounds_supp}).
\newpage
\section*{Contents}
\vspace{0.3cm}
\begin{enumerate}[label=S\arabic*.]
    \item \textit{From coherent inference to QPA}\dotfill \pageref{sec:math_prelim_supp}
        \begin{enumerate}[label=\Alph*.]
            \item \textit{Math conventions}\dotfill \pageref{subsec:math_conventions_supp}

            \item \textit{Representation-theoretic preliminaries}\dotfill \pageref{subsec:math_preliminaries}
            \item \textit{Extremal channels and the optimal Choi matrix}\dotfill \pageref{subsec:extremal_supp}
            \item \textit{Isometric intertwiners realizing extremal channels}\dotfill \pageref{subsec:isometric_supp}
        \end{enumerate}
    \item \textit{General QPA}\dotfill \pageref{sec:general_qpa_supp}
        \begin{enumerate}[label=\Alph*.]
            \item \textit{Setup and notation}\dotfill \pageref{subsec:setup_supp}
            \item \textit{QPA protocols and their utilities}\dotfill \pageref{subsec:protocols_supp}
            \item \textit{From overall utility to sector-wise utility via SW concentration}\dotfill \pageref{subsec:concentration_supp}
            \item \textit{Expression of sector-wise utilities}\dotfill \pageref{subsec:expression_supp}
        \end{enumerate}
    \item \textit{Optimality of overhang removal}\dotfill \pageref{sec:optimality_supp}
        \begin{enumerate}[label=\Alph*.]
            \item \textit{Properties of the \(F\) symbols}\dotfill \pageref{subsec:props_F_symbols}
            \item \textit{Asymptotic optimality}\dotfill \pageref{subsec:asymptotic_optimality_supp}
            \item \textit{Nonasymptotic optimality}\dotfill \pageref{subsec:nonasymptotic_optimality_supp}
        \end{enumerate}
    \item \textit{Implementation}\dotfill \pageref{sec:implementation}
        \begin{enumerate}[label=\Alph*.]
            \item \textit{GQPE-OQPA correctness and complexity}\dotfill \pageref{subsec:implementation_via_GPE}
            \item \textit{CG-OQPA correctness and complexity}\dotfill \pageref{subsec:implementation_via_CG_transform}
        \end{enumerate}
    \item \textit{Asymptotic analysis of sector-wise utilities}\dotfill \pageref{sec:asymptotic_supp}
        \begin{enumerate}[label=\Alph*.]
            \item \textit{GT parametrization with path graphs}\dotfill \pageref{subsec:gt_param_supp}
            \item \textit{Intensive outputs}\dotfill \pageref{subsubsec:intensive_supp}
            \item \textit{Extensive removal}\dotfill \pageref{subsubsec:extensive_supp}
                \begin{enumerate}[label=\roman*.]
                    \item \textit{Overhang Removal Ray}\dotfill \pageref{subsubsec:extensive_overhang_ray_supp}
                    \item \textit{Distinct-ray competitors}\dotfill \pageref{subsubsec:extensive_distinct_ray_supp}
                \end{enumerate}
            \item \textit{Asymptotic analysis of the one-site utility}\dotfill \pageref{subsec:asymptotic_one_site_supp}
        \end{enumerate}
    \item \textit{Nonasymptotic analysis of the sector-wise utility}\dotfill \pageref{sec:qpa_irrep_level_all_copy}
        \begin{enumerate}[label=\Alph*.]
            \item \textit{gYDs with constraints}\dotfill \pageref{subsec:constrained_yd_supp}
                \begin{enumerate}[label=\roman*.]
                    \item \textit{Definitions and monotonicity of averages under constraints}\dotfill \pageref{subsubsec:definitions_monotonicity_averages}
                    \item \textit{Restrictions and Schur monotonicity}\dotfill \pageref{subsubsec:log_convexity_Schur_polynomials}
                    \item \textit{Splitting averages}\dotfill \pageref{subsubsec:splitting_averages}
                \end{enumerate}
            \item \textit{Nonasymptotic analysis of the all-site utility}\dotfill \pageref{subsec:nonasymptotic_allcopy_supp}
                \begin{enumerate}[label=\roman*.]
                    \item \textit{Miscellaneous lemmas}\dotfill \pageref{subsec:misc_lemmas_supp}
                    \item \textit{Sector-wise utility bounds}\dotfill \pageref{subsubsec:sector_utility_bounds_supp}
                \end{enumerate}
            \item \textit{Nonasymptotic analysis of the one-site utility}\dotfill \pageref{subsec:nonasymptotic_onecopy_supp}
            \item \textit{Nonasymptotic overall bounds}\dotfill \pageref{subsec:nonasymptotic_overall_bounds_supp}
        \end{enumerate}
\end{enumerate}
\clearpage

\section{From coherent inference to QPA}
\label{sec:math_prelim_supp}
\subsection{Math conventions}
\label{subsec:math_conventions_supp}
\paragraph{Asymptotic notation.}
We use standard asymptotic notation, with subscripts and superscripts recording uniformity and the asymptotic variable. The symbols \(\bigO{\cdot}\), \(\lilo{\cdot}\), \(\bigOm{\cdot}\), \(\lilom{\cdot}\), and \(\bigTheta{\cdot}\) denote big-\(O\), little-\(o\), big-\(\Omega\), little-\(\omega\), and big-\(\Theta\), respectively. Subscripts record the parameters on which the implicit constants and thresholds may depend, while superscripts record the variable with respect to which the asymptotic statement is taken. When the asymptotic variable is clear from context, for instance for a quantity written as \(g(n)\), we omit the superscript \((n)\).

Thus \(f(n,\boldsymbol{x})=\bigO[\mathcal{A}][n]{g(n,\boldsymbol{x})}\) means that there exist constants \(C_{\mathcal{A}}>0\) and \(N_{\mathcal{A}}\in\mathbb N\), depending only on \(\mathcal{A}\), such that \(|f(n,\boldsymbol{x})|\le C_{\mathcal{A}}g(n,\boldsymbol{x})\) for all \(n\ge N_{\mathcal{A}}\), uniformly over all variables and parameters not displayed in the subscript. Similarly, \(f(n,\boldsymbol{x})=\lilo[\mathcal{A}][n]{g(n,\boldsymbol{x})}\) means that there exists a function \(\eta_{\mathcal A}(n)\to0\), depending only on \(\mathcal A\), such that \(|f(n,\boldsymbol{x})|\le \eta_{\mathcal A}(n)g(n,\boldsymbol{x})\), uniformly over all variables and parameters not displayed in the subscript.

The lower-bound notation is defined analogously. The statement \(f(n,\boldsymbol{x})=\bigOm[\mathcal A][n]{g(n,\boldsymbol{x})}\) means that there exist constants \(c_{\mathcal A}>0\) and \(N_{\mathcal A}\in\mathbb N\), depending only on \(\mathcal A\), such that \(|f(n,\boldsymbol{x})|\ge c_{\mathcal A}g(n,\boldsymbol{x})\) for all \(n\ge N_{\mathcal A}\), uniformly over all variables and parameters not displayed in the subscript. The statement \(f(n,\boldsymbol{x})=\lilom[\mathcal A][n]{g(n,\boldsymbol{x})}\) means that there exists a function \(\eta_{\mathcal A}(n)\to\infty\), depending only on \(\mathcal A\), such that \(|f(n,\boldsymbol{x})|\ge \eta_{\mathcal A}(n)g(n,\boldsymbol{x})\), with the same uniformity convention. Finally, \(f(n,\boldsymbol{x})=\bigTheta[\mathcal A][n]{g(n,\boldsymbol{x})}\) means that both the corresponding \(\bigO[\mathcal A][n]{g(n,\boldsymbol{x})}\) and \(\bigOm[\mathcal A][n]{g(n,\boldsymbol{x})}\) bounds hold.

We also use \(\bigO[\mathcal A][n]{\exp(-n)}\) for exponentially small terms in \(n\): it means that there exist constants \(C_{\mathcal A}>0\), \(c_{\mathcal A}>0\), and \(N_{\mathcal A}\in\mathbb N\), depending only on \(\mathcal A\), such that \(|f(n,\boldsymbol{x})|\le C_{\mathcal A}\exp(-c_{\mathcal A}n)\) for all \(n\ge N_{\mathcal A}\), uniformly over all variables and parameters not displayed in the subscript.

\paragraph{Component expansion.}
We use \(\dot{=}\) for component expansion. For example, \(\boldsymbol{x}\dot{=}(x_1,\ldots,x_d)\) records the components of \(\boldsymbol{x}\). The same convention applies to Young diagrams (YDs), tuples, and other indexed objects.

\subsection{Representation-theoretic preliminaries}
\label{subsec:math_preliminaries}
YDs, Young tableaux (YTs), and Weyl tableaux (WTs) form the combinatorial backbone of the representation theory used here (symmetric groups and unitary groups via Schur--Weyl (SW) duality, Littlewood--Richardson (LR) coefficients, and multiplicity spaces).
For succinct definitions and properties we follow Ref.~\cite{LFIC24} (and its Supplement), to which we refer for details. In addition, we introduce the following concepts.

\paragraph{LR multiplicity space for a triple \(\triple{\yd{\mu}}{\yd{\varrho}}{\yd{\varsigma}}\).}
Given YDs \(\yd{\varrho}\vdash m\), \(\yd{\mu}\vdash n-m\), and \(\yd{\varsigma}\vdash n\), the LR coefficient
\(\res{c}{\yd{\mu}}{\yd{\varrho}}{\yd{\varsigma}}\) gives the multiplicity of the irrep \(\yd{\varsigma}\) appearing in
\(\mathbb{W}^{\yd{\mu}}\otimes \mathbb{W}^{\yd{\varrho}}\) under the Clebsch--Gordan decomposition. Equivalently,
\begin{equation}
\mathbb{W}^{\yd{\mu}}\otimes \mathbb{W}^{\yd{\varrho}}
\simeq
\bigoplus_{\yd{\varsigma}}
\mathbb{W}^{\yd{\varsigma}}
\otimes
\res{\mathbb{D}}{\yd{\mu}}{\yd{\varrho}}{\yd{\varsigma}},
\end{equation}
where the degeneracy space has dimension
\begin{equation}
\dim
\res{\mathbb{D}}{\yd{\mu}}{\yd{\varrho}}{\yd{\varsigma}}
=
\res{c}{\yd{\mu}}{\yd{\varrho}}{\yd{\varsigma}}.
\end{equation}
Equivalently, by tensor--hom adjunction, the irrep \(\yd{\mu}\) appears with the same multiplicity as a subrepresentation of
\(\mathbb{W}^{\yd{\varsigma}}\otimes \mathbb{W}^{\overline{\yd{\varrho}}}\).
\paragraph{Intertwiner notation.}
For each unit vector
\(\ket{\phi}\in\res{\mathbb{D}}{\yd{\mu}}{\yd{\varrho}}{\yd{\varsigma}}\),
there is a unitary-equivariant isometric embedding \begin{equation}
\intertwiner{W}{\yd{\varsigma}}{\yd{\mu}}{\yd{\varrho}}{\ket{\phi}}:\; \mathbb{W}^{\yd{\varsigma}}\;\longrightarrow\; \mathbb{W}^{\yd{\mu}}\otimes \mathbb{W}^{\yd{\varrho}},
\end{equation}
that satisfies ${\intertwineradj{W}{\yd{\varsigma}}{\yd{\mu}}{\yd{\varrho}}{\ket{\phi}}}\,\intertwiner{W}{\yd{\varsigma}}{\yd{\mu}}{\yd{\varrho}}{\ket{\phi}}=I_{\yd{\varsigma}}$.
Here $\intertwineradj{W}{\yd{\varsigma}}{\yd{\mu}}{\yd{\varrho}}{\ket{\phi}}={\intertwiner{W}{\yd{\varsigma}}{\yd{\mu}}{\yd{\varrho}}{\ket{\phi}}}^\dagger$ and we adopt the Condon-Shortley phase convention, so that this embedding is equivalent to the inverse CG transform restricted to the subspace $\mathbb{W}^{\yd{\varsigma}}$: $\intertwiner{W}{\yd{\varsigma}}{\yd{\mu}}{\yd{\varrho}}{\ket{\phi}}=\mleft.U^\dag_\mathrm{CG}\mright|_{\yd{\varsigma}}$.

\subsection{Extremal channels and the optimal Choi matrix}
\label{subsec:extremal_supp}
We follow the approach of Ref.~\cite{MT25S}, which characterizes the extremal points of the set of symmetric channels and their Choi matrices.

\begin{theorem}[Optimal Choi matrix]
\label{thm:SM_QPA_choi}
A generic element of a symmetric channel decomposes into irrep blocks associated with the joint symmetry $S_m \otimes S_n \otimes GL(d)$.
Its Choi matrix admits a block-diagonal decomposition, where each block acts on the input Specht module $\mathbb{V}^{\yd{\varsigma}}$ with irrep $\yd{\varsigma}$, the output Specht module $\mathbb{V}^{\yd{\varrho}}$ with irrep $\yd{\varrho}$, and the Weyl module $\mathbb{W}^{\yd{\mu}}$ with irrep $\yd{\mu}$. For brevity, we use the YD $\yd{\mu}$ as shorthand for its corresponding Weyl module $\mathbb{W}^{\yd{\mu}}$ when labeling registers.
The matrix $\res{M}{\yd{\mu}}{\yd{\varrho}}{\yd{\varsigma}}$ acts on the degeneracy register labeled $D$, namely $\res{\mathbb{D}}{\yd{\mu}}{\yd{\varrho}}{\yd{\varsigma}}$, which is associated with the LR coefficient $\res{c}{\yd{\mu}}{\yd{\varrho}}{\yd{\varsigma}}$.
\begin{equation}
U_\mathrm{dec}^{\dagger}\, T\, U_\mathrm{dec}
\;=\;
\bigoplus_{\yd{\varsigma}\vdash_d n,\yd{\varrho}\vdash_d m,\yd{\mu}_d\vdash (n,m)}
\;
I_{\mathbb{V}^{\yd{\varsigma}}}
\;\otimes\;
\frac{1}{g^{\yd{\varrho}}}I_{\mathbb{V}^{\yd{\varrho}}}
\;\otimes\;
\frac{d^{\yd{\varsigma}}}{d^{\yd{\mu}}}\,
I_{\yd{\mu}}
\;\otimes\;
\res{M}{\yd{\mu}}{\yd{\varrho}}{\yd{\varsigma}}_D.
\end{equation}
Here, $g^{\yd{\nu}}$ denotes the dimension of the Specht module $\mathbb{V}^{\yd{\nu}}$, and $d^{\yd{\nu}}$ denotes the dimension of the Weyl module $\mathbb{W}^\yd{\nu}$. The decomposition unitary $U_\mathrm{dec}$ is obtained by first applying the SW transforms to the input and output, then redistributing the resulting factors via $R$ to group the $\yd{\varsigma}\vdash_d n$ and $\yd{\varrho}\vdash_d m$ sectors, and finally applying the Clebsch--Gordan (CG) transform within each such sector. Concretely,
\begin{equation}
U_\mathrm{dec}
=
\mleft(
\bigoplus_{\yd{\varsigma}\vdash_d n,\ \yd{\varrho}\vdash_d m}
U_{\mathrm{CG}}^{\yd{\varsigma},\yd{\varrho}}
\mright)R
\Big( U_{\mathrm{Sch}_n} \otimes U_{\mathrm{Sch}_m} \Big).
\end{equation}
Moreover, the matrix $\res{M}{\yd{\mu}}{\yd{\varrho}}{\yd{\varsigma}}$ satisfies
\begin{subequations}
\begin{align}
    \res{M}{\yd{\mu}}{\yd{\varrho}}{\yd{\varsigma}} &\succeq 0 \, ,\\
    \sum_{\yd{\varrho},\yd{\mu}} \Tr\mleft(\res{M}{\yd{\mu}}{\yd{\varrho}}{\yd{\varsigma}}\mright) &= 1\, .
\end{align}
\end{subequations}
\end{theorem}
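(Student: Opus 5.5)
The plan is to derive the block structure of the Choi matrix $T$ of a symmetric channel $\mathcal{T}:\mathbb{H}^{\otimes n}\to\mathbb{H}^{\otimes m}$ from its invariance group, using Schur's lemma on each factor in turn, and then translate complete positivity and trace preservation into conditions on the remaining blocks. Let $T=\sum_{i,j}\ketbra{i}{j}\otimes\mathcal{T}(\ketbra{i}{j})$ act on $\mathbb{H}^{\otimes n}\otimes\mathbb{H}^{\otimes m}$, with the input factor complex conjugated in the usual way. Exchange invariance under $S_n\times S_m$ means $T$ commutes with $V_\pi\otimes V_\tau$. Joint unitary covariance $\mathcal{T}(U^{\otimes n}\cdot U^{\dagger\otimes n})=U^{\otimes m}\mathcal{T}(\cdot)U^{\dagger\otimes m}$ means $T$ commutes with $\bar U^{\otimes n}\otimes U^{\otimes m}$. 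By analytic continuation (Zariski density of $U(d)$ in $GL(d)$), this extends to $GL(d)$.

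\textbf{Step 1: apply Schur--Weyl duality separately.} Apply $U_{\mathrm{Sch}_n}\otimes U_{\mathrm{Sch}_m}$. The space becomes $\bigoplus_{\yd{\varsigma},\yd{\varrho}}\mathbb{V}^{\yd{\varsigma}}\otimes\overline{\mathbb{W}^{\yd{\varsigma}}}\otimes\mathbb{V}^{\yd{\varrho}}\otimes\mathbb{W}^{\yd{\varrho}}$. The regrouping permutation $R$ collects the two Specht modules and the two Weyl modules. Schur's lemma for $S_n\times S_m$ forces $T$ to be block diagonal in $(\yd{\varsigma},\yd{\varrho})$. On each block it acts as $I_{\mathbb{V}^{\yd{\varsigma}}}\otimes I_{\mathbb{V}^{\yd{\varrho}}}\otimes X^{\yd{\varsigma},\yd{\varrho}}$, where $X^{\yd{\varsigma},\yd{\varrho}}$ acts on $\overline{\mathbb{W}^{\yd{\varsigma}}}\otimes\mathbb{W}^{\yd{\varrho}}$.

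\textbf{Step 2: decompose the Weyl part with the CG transform.} Apply $U_{\mathrm{CG}}^{\yd{\varsigma},\yd{\varrho}}$ to decompose $\overline{\mathbb{W}^{\yd{\varsigma}}}\otimes\mathbb{W}^{\yd{\varrho}}$. By the tensor--hom adjunction recalled above, this gives $\bigoplus_{\yd{\mu}}\mathbb{W}^{\yd{\mu}}\otimes\res{\mathbb{D}}{\yd{\mu}}{\yd{\varrho}}{\yd{\varsigma}}$, up to duality conventions on the label $\yd{\mu}$. Here $\yd{\mu}$ ranges over (possibly signed) diagrams and the multiplicity is $\res{c}{\yd{\mu}}{\yd{\varrho}}{\yd{\varsigma}}$. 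Commutation with the $GL(d)$ action and Schur's lemma then give $X=\bigoplus_{\yd{\mu}}I_{\yd{\mu}}\otimes \tilde M_D$. Absorbing normalizations into the form $\frac{1}{g^{\yd{\varrho}}}\frac{d^{\yd{\varsigma}}}{d^{\yd{\mu}}}\,\res{M}{\yd{\mu}}{\yd{\varrho}}{\yd{\varsigma}}$ gives the stated block decomposition.

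\textbf{Step 3: translate positivity and trace preservation.} Complete positivity is $T\succeq0$, which is equivalent to each $\res{M}{\yd{\mu}}{\yd{\varrho}}{\yd{\varsigma}}\succeq0$ because the prefactors are positive. Trace preservation is $\Tr_{\mathrm{out}}T=I_{\mathrm{in}}$; this is the main computation. Tracing out the output Specht module contributes $g^{\yd{\varrho}}$, which cancels the $1/g^{\yd{\varrho}}$ factor. Tracing out the output Weyl module $\mathbb{W}^{\yd{\varrho}}$ from $I_{\yd{\mu}}\otimes\res{M}{}{}{}$ embedded in $\overline{\mathbb{W}^{\yd{\varsigma}}}\otimes\mathbb{W}^{\yd{\varrho}}$ gives an invariant operator on $\overline{\mathbb{W}^{\yd{\varsigma}}}$. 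By Schur's lemma this operator is a scalar times the identity, and comparing full traces fixes the scalar:
\begin{equation}
c\, d^{\yd{\varsigma}}=\frac{d^{\yd{\varsigma}}}{d^{\yd{\mu}}}\,d^{\yd{\mu}}\Tr \res{M}{\yd{\mu}}{\yd{\varrho}}{\yd{\varsigma}},
\end{equation}
so $c=\Tr M$. Summing over $\yd{\varrho},\yd{\mu}$ and requiring the identity on each $\mathbb{V}^{\yd{\varsigma}}\otimes\overline{\mathbb{W}^{\yd{\varsigma}}}$ yields $\sum_{\yd{\varrho},\yd{\mu}}\Tr M=1$. Conversely, any such family of $M$'s defines a valid symmetric channel.

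\textbf{Main obstacle.} The main difficulty is bookkeeping rather than concept. One must track the conjugation of the input irrep in the Choi convention, and whether $\yd{\mu}$ labels a subrepresentation of $\overline{\mathbb{W}^{\yd{\varsigma}}}\otimes\mathbb{W}^{\yd{\varrho}}$ or of its dual. This is where the identification with $\res{\mathbb{D}}{\yd{\mu}}{\yd{\varrho}}{\yd{\varsigma}}$, the adjunction, and the precise normalization $d^{\yd{\varsigma}}/d^{\yd{\mu}}$ must be checked carefully. I would follow Ref.~\cite{MT25S} for these conventions.
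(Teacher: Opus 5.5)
Your proposal is correct. It is the standard commutant argument: Schur--Weyl duality on input and output, then Schur's lemma for $S_n\times S_m$, then the CG decomposition of $\overline{\mathbb{W}^{\yd{\varsigma}}}\otimes\mathbb{W}^{\yd{\varrho}}$ with multiplicities given by the LR coefficients via adjunction, and finally $T\succeq 0\Leftrightarrow M\succeq 0$ together with the partial-trace/Schur computation giving $\sum_{\yd{\varrho},\yd{\mu}}\Tr M=1$ for each $\yd{\varsigma}$. The paper gives no proof of its own and simply imports the statement from the external reference it cites (MT25), which rests on this same argument.
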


When the objective function is linear, and since the set of symmetric channels is compact and convex, the maximum is attained at an extremal point of this set. The extremal points of the set of symmetric channels have been characterized in Ref.~\cite{MT25S}, and are presented as follows:

\begin{corollary}[Choi Matrices of Extremal Channels]
\label{cor:extremal_channels}
For an extremal channel, the matrix $\res{M}{\yd{\mu}}{\yd{\varrho}}{\yd{\varsigma}}$ reduces to a rank-1 projector supported on a particular choice of $\yd{\varrho}$ and $\yd{\mu}$ for each $\yd{\varsigma}$, such that the corresponding degeneracy register is nonzero. This allows us to write $\res{M}{\yd{\mu}}{\yd{\varrho}}{\yd{\varsigma}}=\ketbra{\phi}{\phi}.$

Equivalently, on each Schur-sampling branch $\yd{\varsigma}$, the Choi matrix of the extremal channel corresponds to a projector onto a single irrep $\triple{\yd{\mu}}{\yd{\varrho}}{\yd{\varsigma}}$, up to the choice of $\phi$ within the degeneracy register $D$. To see this, we define the following sector-wise Choi matrix for each branch, living in the space $\mathbb{W}^{\yd{\varsigma}}\otimes\mathbb{W}^{\overline{\yd{\varrho}}}$:

\begin{equation}
\res{T}{\yd{\mu}}{\yd{\varrho}}{\yd{\varsigma}}=U^\dag_\mathrm{CG}\mleft(O\oplus\cdots\oplus O\oplus\frac{d^\yd{\varsigma}}{d^\yd{\mu}}I_\yd{\mu}
\,\otimes\,
\ketbra{\phi}{\phi}_D\oplus O\oplus\cdots\oplus O\mright)U_{\mathrm{CG}},
\end{equation}
\end{corollary}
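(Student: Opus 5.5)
Since extremal points of the set of symmetric channels are already classified in \cref{thm:SM_QPA_choi} and Ref.~\cite{MT25S}, the corollary amounts to two things: identifying which Choi blocks are extremal, and rewriting such a block in a sector-wise form. We take the block decomposition of \cref{thm:SM_QPA_choi} as given. A symmetric channel is then parametrized by a family $\{\res{M}{\yd{\mu}}{\yd{\varrho}}{\yd{\varsigma}}\}$ of positive semidefinite operators on the degeneracy registers. Different input sectors $\yd{\varsigma}$ are independent, and within each sector $\yd{\varsigma}$ there is only the single trace constraint $\sum_{\yd{\varrho},\yd{\mu}}\Tr \res{M}{\yd{\mu}}{\yd{\varrho}}{\yd{\varsigma}}=1$.

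The feasible set is therefore a Cartesian product, over $\yd{\varsigma}$, of sets of block-diagonal density operators on $\bigoplus_{\yd{\varrho},\yd{\mu}}\res{\mathbb{D}}{\yd{\mu}}{\yd{\varrho}}{\yd{\varsigma}}$. An element of a product of convex sets is extremal if and only if each of its factors is extremal. So the first step is to reduce to a single $\yd{\varsigma}$. We also need the map from $\{M\}$ to Choi matrices to be affine and injective, which holds because $U_{\mathrm{dec}}$ is unitary and the prefactors are fixed positive constants; this ensures that extremality of the channel is equivalent to extremality of the tuple $\{M\}$.

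The second step is the standard fact that the extreme points of the set of block-diagonal density operators are exactly the rank-one projectors supported on a single block. To see this, suppose first that weight is spread over two blocks $(\yd{\mu},\yd{\varrho})$. Splitting off the normalized restriction to each block exhibits the operator as a nontrivial convex combination. Within one block, a mixed state is a convex combination of its spectral projectors. Hence an extremal point has the form $\res{M}{\yd{\mu}}{\yd{\varrho}}{\yd{\varsigma}}=\ketbra{\phi}{\phi}$ for one choice of $(\yd{\mu},\yd{\varrho})$ with nonzero LR coefficient, and all other blocks vanish. Conversely, every such pure state is extremal.

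The third step rewrites the resulting block. Restrict the Choi matrix to the $\yd{\varsigma}$ input and $\yd{\varrho}$ output sectors and strip off the Specht factors $I_{\mathbb{V}^{\yd{\varsigma}}}\otimes I_{\mathbb{V}^{\yd{\varrho}}}/g^{\yd{\varrho}}$, which the channel acts on trivially. What remains lives on $\mathbb{W}^{\yd{\varsigma}}\otimes\mathbb{W}^{\overline{\yd{\varrho}}}$, and undoing the CG transform there yields exactly the displayed $\res{T}{\yd{\mu}}{\yd{\varrho}}{\yd{\varsigma}}$, with zero blocks in every other $\yd{\mu}'$ summand. Here we use the tensor--hom adjunction noted in \cref{subsec:math_preliminaries}, which identifies the multiplicity of $\yd{\mu}$ in $\mathbb{W}^{\yd{\varsigma}}\otimes\mathbb{W}^{\overline{\yd{\varrho}}}$ with $\res{c}{\yd{\mu}}{\yd{\varrho}}{\yd{\varsigma}}$.

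The main obstacle is this bookkeeping. One must check that the normalization $d^{\yd{\varsigma}}/d^{\yd{\mu}}$ is exactly what trace preservation requires once it is combined with the Specht normalization, and that the conjugate-irrep identification is consistent with the Condon--Shortley convention. If the prefactor comes out differently, I would fix it by computing the partial trace over the output of $\res{T}{\yd{\mu}}{\yd{\varrho}}{\yd{\varsigma}}$ by Schur's lemma and requiring it to equal $I_{\yd{\varsigma}}$.
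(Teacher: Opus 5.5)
Your proposal is correct, but it argues differently from the paper, which gives no derivation here. The paper imports the classification of extremal symmetric channels from Ref.~\cite{MT25S}, and in the following subsection it also takes from there the identification of these Choi matrices with the isometric channels of \cref{thm:extremal_channel_form}.

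You instead derive the statement directly from the block form in \cref{thm:SM_QPA_choi}. The feasible blocks form a product over $\yd{\varsigma}$ of state spaces of the block-diagonal algebras $\bigoplus_{\yd{\varrho},\yd{\mu}}\mathcal{B}(\res{\mathbb{D}}{\yd{\mu}}{\yd{\varrho}}{\yd{\varsigma}})$. Extremality factorizes over such a product, and the extreme states of such an algebra are exactly the pure states supported in a single block. This shows that the result uses nothing beyond the Schur-lemma block structure and one trace constraint per input sector. The citation additionally supplies the operational (Stinespring) realization, which the corollary itself does not need.

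Two small points to tighten:
\begin{enumerate}
\item For the correspondence between channels and tuples of blocks, you need more than injectivity. You also need that every feasible tuple defines a symmetric channel. This is immediate: the resulting operator is positive, commutes with the group actions, and is trace preserving.
\item Your worry about the prefactor is settled by exactly the Schur-lemma check you propose. Write $M$ for the block. The partial trace over $\mathbb{W}^{\overline{\yd{\varrho}}}$ of $U_{\mathrm{CG}}^\dagger\bigl(\tfrac{d^{\yd{\varsigma}}}{d^{\yd{\mu}}}I_{\yd{\mu}}\otimes M\bigr)U_{\mathrm{CG}}$ is $U(d)$-invariant on $\mathbb{W}^{\yd{\varsigma}}$ with trace $d^{\yd{\varsigma}}\Tr(M)$, so it equals $\Tr(M)\,I_{\yd{\varsigma}}$. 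Combined with $\Tr(I_{\mathbb{V}^{\yd{\varrho}}}/g^{\yd{\varrho}})=1$, trace preservation is precisely $\sum_{\yd{\varrho},\yd{\mu}}\Tr(M)=1$, so no adjustment is needed.
\end{enumerate}
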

where we denote the zero matrix as $O$.
The overall Choi matrix is given by the direct sum over all sectors, with the unused output irreps (those not equal to $\yd{\varrho}$) padded by the zero matrix.
\begin{equation}
 T
\;=\; U_{\mathrm{Sch}_n}^{\dagger} \left(\bigoplus_{\yd{\varsigma}\vdash_d n}I_{\mathbb{V}^{\yd{\varsigma}}}\otimes U_{\mathrm{Sch}_m}^{\dagger}\mleft( O\oplus\cdots\oplus O\oplus
I_{\mathbb{V}^{\yd{\varrho}}}\otimes \res{T}{\yd{\mu}}{\yd{\varrho}}{\yd{\varsigma}}\oplus O\oplus\cdots\oplus O\mright) U_{\mathrm{Sch}_m} \right)U_{\mathrm{Sch}_n}\,.
\end{equation}

In the next section, we discuss the implementation of the channel corresponding to this Choi matrix.

\subsection{Isometric intertwiners realizing extremal channels}
\label{subsec:isometric_supp}
Block-projector Choi matrices of the form appearing in~\cref{cor:extremal_channels} admit a natural operational interpretation as quantum channels, which we characterize here using the intertwiner formalism introduced in~\cref{subsec:math_preliminaries}.

\begin{theorem}[Extremal Channel Form]
    \label{thm:extremal_channel_form}
  Given such an isometry, an extremal covariant channel on the irrep block is obtained by tracing out the irrep remainder,
\begin{equation}
\extChannel{\mathcal{T}}{\yd{\varsigma}}{\yd{\varrho}}{\yd{\mu}}[\ket{\phi}](\cdot)
\;=\;
\Tr_{\yd{\mu}}\mleft(\intertwiner{W}{\yd{\varsigma}}{\yd{\mu}}{\yd{\varrho}}{\ket{\phi}}\,\cdot\,\intertwineradj{W}{\yd{\varsigma}}{\yd{\mu}}{\yd{\varrho}}{\ket{\phi}}\mright).
\end{equation}
\end{theorem}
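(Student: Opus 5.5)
The plan is to prove \cref{thm:extremal_channel_form} by computing the Choi matrix of the map
\begin{equation*}
X\mapsto\Tr_{\yd{\mu}}\mleft(\intertwiner{W}{\yd{\varsigma}}{\yd{\mu}}{\yd{\varrho}}{\ket{\phi}}\,X\,\intertwineradj{W}{\yd{\varsigma}}{\yd{\mu}}{\yd{\varrho}}{\ket{\phi}}\mright)
\end{equation*}
and comparing it with the sector-wise Choi matrix \(\res{T}{\yd{\mu}}{\yd{\varrho}}{\yd{\varsigma}}\) of \cref{cor:extremal_channels}. Since Choi matrices determine channels uniquely, this identifies the extremal channel with the displayed Stinespring form.

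\textbf{Step 1: channel properties.} The map is completely positive, because it is a conjugation by a linear map followed by a partial trace. It is trace preserving, because the intertwiner is an isometry, \({\intertwineradj{W}{\yd{\varsigma}}{\yd{\mu}}{\yd{\varrho}}{\ket{\phi}}}\intertwiner{W}{\yd{\varsigma}}{\yd{\mu}}{\yd{\varrho}}{\ket{\phi}}=I_{\yd{\varsigma}}\). It is unitary covariant, because the intertwiner is equivariant and the partial trace over \(\mathbb{W}^{\yd{\mu}}\) commutes with the action \(U^{\yd{\mu}}\otimes U^{\yd{\varrho}}\).

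\textbf{Step 2: Choi matrix via the adjunction.} Write the Choi matrix on \(\mathbb{W}^{\yd{\varsigma}}\otimes\mathbb{W}^{\overline{\yd{\varrho}}}\), using the convention in which it is a positive operator invariant under \(U^{\yd{\varsigma}}\otimes\overline{U}^{\yd{\varrho}}\).
\begin{itemize}
\item By the tensor--hom adjunction recalled in \cref{subsec:math_preliminaries}, the isometry \(\intertwiner{W}{\yd{\varsigma}}{\yd{\mu}}{\yd{\varrho}}{\ket{\phi}}\) corresponds, after partially transposing the \(\yd{\varrho}\) leg, to an isometric embedding of \(\mathbb{W}^{\yd{\mu}}\) into \(\mathbb{W}^{\yd{\varsigma}}\otimes\mathbb{W}^{\overline{\yd{\varrho}}}\). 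Its image is the copy of \(\yd{\mu}\) tagged by \(\ket{\phi}\in\res{\mathbb{D}}{\yd{\mu}}{\yd{\varrho}}{\yd{\varsigma}}\).
\item Tracing out \(\mathbb{W}^{\yd{\mu}}\) in the Stinespring form corresponds, on the Choi side, to taking the sum over a basis of \(\mathbb{W}^{\yd{\mu}}\). This produces the projector onto that copy of \(\yd{\mu}\), which is \(U^\dagger_{\mathrm{CG}}\bigl(\cdots\oplus I_{\yd{\mu}}\otimes\ketbra{\phi}{\phi}_D\oplus\cdots\bigr)U_{\mathrm{CG}}\), multiplied by a scalar.
\item By Schur's lemma and invariance, the Choi matrix must be a multiple of this projector.
\item The scalar is fixed by trace preservation: the partial trace over the output must equal \(I_{\yd{\varsigma}}\). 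Comparing dimensions, \(\Tr = d^{\yd{\varsigma}}\) on one side and \(c\,d^{\yd{\mu}}\) on the other, gives \(c=d^{\yd{\varsigma}}/d^{\yd{\mu}}\).
\end{itemize}
This reproduces \(\res{T}{\yd{\mu}}{\yd{\varrho}}{\yd{\varsigma}}\) exactly.

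\textbf{Step 3: assemble the full channel.} Embed the sector-wise channel into the full channel: use Schur sampling on the input, act as identity on the Specht factor (up to the maximally mixed \(I_{\mathbb{V}^{\yd{\varrho}}}/g^{\yd{\varrho}}\) prepared on the output Specht module), and apply the inverse Schur transforms. This matches the block structure of \cref{thm:SM_QPA_choi} with \(M=\ketbra{\phi}{\phi}\).

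\textbf{Main obstacle.} The delicate point is Step 2. It requires careful bookkeeping of the partial transpose and the dual irrep \(\overline{\yd{\varrho}}\), so that the adjunction isomorphism \(\res{\mathbb{D}}{\yd{\mu}}{\yd{\varrho}}{\yd{\varsigma}}\cong\) (multiplicity of \(\yd{\mu}\) in \(\yd{\varsigma}\otimes\overline{\yd{\varrho}}\)) carries \(\ket{\phi}\) to the same \(\ket{\phi}\), up to a conjugation convention. It also requires checking that the normalization arising from the adjunction is indeed \(d^{\yd{\varsigma}}/d^{\yd{\mu}}\). I would first settle this by writing both sides in a CG basis and verifying the identity on highest-weight vectors.
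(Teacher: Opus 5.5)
Your proposal is correct and takes essentially the paper's route: the paper does not carry out the computation but defers to the cited work (MT25) for the fact that the Choi matrix of $X\mapsto \Tr_{\mu}(W X W^{\dagger})$ is the block projector $\frac{d^{\varsigma}}{d^{\mu}}\,U_{\mathrm{CG}}^{\dagger}\bigl(\cdots\oplus I_{\mu}\otimes\ketbra{\phi}{\phi}_D\oplus\cdots\bigr)U_{\mathrm{CG}}$ of \cref{cor:extremal_channels}, and your Step~2 (adjunction to an equivariant embedding of $\mathbb{W}^{\mu}$ into $\mathbb{W}^{\varsigma}\otimes\mathbb{W}^{\overline{\varrho}}$, Schur's lemma, and the factor $d^{\varsigma}/d^{\mu}$ from trace preservation) makes that computation explicit. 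One small correction, outside the sector-wise statement: in Step~3 the full channel does not act as the identity on the Specht factor (input and output Specht modules differ); it discards $\mathbb{V}^{\varsigma}$ and prepares $I_{\mathbb{V}^{\varrho}}/g^{\varrho}$, which is what the Choi factor $I_{\mathbb{V}^{\varsigma}}\otimes\frac{1}{g^{\varrho}}I_{\mathbb{V}^{\varrho}}$ encodes.
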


The Choi matrix of this channel is precisely $\res{T}{\yd{\mu}}{\yd{\varrho}}{\yd{\varsigma}}$ in~\cref{thm:SM_QPA_choi}, as proven in Ref.~\cite{MT25S}.

In the case where there is no multiplicity, i.e., $\res{c}{\yd{\mu}}{\yd{\varrho}}{\yd{\varsigma}}=0$, the isometry is uniquely determined by $\yd{\varsigma}$, $\yd{\varrho}$, and $\yd{\mu}$.

In the following discussion, we will restrict our consideration to totally symmetric outputs on the irrep $\yd{m}$ (a single row with $m$ boxes), and we denote the corresponding intertwiner by
\begin{equation}
\intertwiner{W}{\yd{\varsigma}}{\yd{m}}{\yd{\mu}}{}:\; \mathbb{W}^{\yd{\varsigma}}\;\longrightarrow\; \mathbb{W}^{\yd{m}}\otimes \mathbb{W}^{\yd{\mu}}.
\end{equation}
On each sector corresponding to $\yd{\varsigma}$, the optimal channel amounts to selecting a single output sector $\yd{\varrho}$ and a single feasible $\yd{\mu}$.

Furthermore, suppose we fix the output sector to be $\yd{m}$, and set $\yd{\varsigma}\dot{=} (\varsigma_1,\ldots,\varsigma_d)$. The outputs $\yd{\mu}$ are obtained by removing a total of $m$ boxes across the rows: $\mu_i = \varsigma_i - m_i$ with integers $m_i\ge 0$ and $\sum_i m_i = m$, such that $m_i < \varsigma_i - \varsigma_{i+1}$ for all $i<d$.
For an optimization problem with a linear objective, the optimum is achieved by an extremal channel as characterized earlier in~\cref{subsec:isometric_supp}.
\section{General QPA}
\label{sec:general_qpa_supp}
\subsection{Setup and notation}
\label{subsec:setup_supp}
Due to the symmetry reduction of Lemma~S1.4 in Ref.~\cite{CI26S}, we restrict to symmetric channels, on which both the worst-case utility
\begin{subequations}
\begin{align}
^k\mathcal{F}_{\mathrm{all}\,\mathrm{w}}\mleft(\mathcal{T}\mright)
&\coloneqq \inf_{\rho\in\mathcal X}\,F\mleft(^k\gamma(\rho)^{\otimes m},\ \mathcal T(\rho^{\otimes n})\mright),\\
^k\mathcal{F}_{\mathrm{one}\,\mathrm{w}}\mleft(\mathcal{T}\mright)
&\coloneqq \inf_{\rho\in\mathcal X}\,F\mleft(^k\gamma(\rho),\ \Pi^{\yt{1,\ldots,m}}\Tr_{2,\ldots,m}\mathcal T(\rho^{\otimes n})\Pi^{\yt{1,\ldots,m}}\mright)
\end{align}
\end{subequations}
and the average utilities
\begin{subequations}
\begin{align}
^k\mathcal{F}_{\mathrm{all}\,\mathrm{a}}\mleft(\mathcal{T}\mright)
&\coloneqq \int_{\mathcal X} F\mleft(^k\gamma(\rho)^{\otimes m},\ \mathcal T(\rho^{\otimes n})\mright)\,\mathrm{d}\mu(\rho),\\
^k\mathcal{F}_{\mathrm{one}\,\mathrm{a}}\mleft(\mathcal{T}\mright)
&\coloneqq \int_{\mathcal X} F\mleft(^k\gamma(\rho),\Pi^{\yt{1,\ldots,m}}\ \Tr_{2,\ldots,m}\mathcal T(\rho^{\otimes n})\Pi^{\yt{1,\ldots,m}}\mright)\,\mathrm{d}\mu(\rho)
\end{align}
\end{subequations}
with infidelity loss $L\mleft(\cdot,\cdot\mright)=1-F(\cdot,\cdot)$ reduce to the fidelity evaluated on a single input state, here taken to be diagonal with sorted spectrum in the basis $\ket{\psi_i}$.
The minimax fidelities, obtained by optimizing the worst-case fidelity over all valid QPA protocols \(\mathcal T\), and the Bayesian fidelities, obtained by optimizing the average fidelity over \(\mathcal T\), coincide in this setting.
We refer to these common values as the optimal fidelities \({}^k\mathcal F_{\mathrm{all}}^\ast\) and \({}^k\mathcal F_{\mathrm{one}}^\ast\).

\subsection{QPA protocols and their utilities}
\label{subsec:protocols_supp}
To determine the optimal channel, i.e., the symmetric channel that achieves the optimal utility, we analyze the all-site utility directly using representation-theoretic methods, then relate it to one-site quantities via recoupling theory. We begin with the all-site case.

Through a sequence of reductions, we express the resulting all-site utility as expectations over two probability distributions related to the spectrum: the SW distribution and the Weyl distribution.
\begin{equation}
    \label{equ:the_fidelity_is}
\begin{aligned}
    ^k\mathcal{F}_\mathrm{all}\mleft(\mathcal{T}\mright)
=&F\mleft(\,\psi_k^{\otimes m},\ \mathcal{T}\mleft(\rho^{\otimes n}\mright)\mright)\\
=&F\mleft(\psi_k^{\otimes m},\sum_{\yd{\varsigma}}g^{\yd{\varsigma}}s^{\yd{\varsigma}}\extChannel{\mathcal{T}}{\yd{\varsigma}}{\yt{1,\ldots,m}}{\yd{\mu}}\mleft(\frac{\rho^{\yd{\varsigma}}}{s^{\yd{\varsigma}}}\mright)\mright)\\
=&\swavg{\sum_{\wt{w},\wt{v}}\mleft|\bra{\wt{v}}_{\yd{\mu}} \bra{\psi_k}_{\mathrm{out}}^{\otimes m}\intertwiner{W}{\yd{\varsigma}}{\yd{\mu}}{\yt{1,\ldots,m}}{}\ket{\wt{w}}_{\yd{\varsigma}}\mright|^2\mleft\langle \wt{w} \mleft| \frac{\rho^{\yd{\varsigma}}}{s^{\yd{\varsigma}}} \mright| \wt{w} \mright\rangle}{\boldsymbol{p}}{n}\\
\end{aligned}
\end{equation}
We use the definition of $\mathcal{F}_\mathrm{all}$ and decomposition of $\mathcal{T}$ in the first two equalities. For the second equality, we normalize the sampling outcome $\rho^{\yd{\varsigma}}$ and factor out $s^{\yd{\varsigma}}$, the Schur polynomial with argument $\boldsymbol{p}$, which corresponds to the post-selected density matrix on the specific Schur sampling outcome. The resulting weight on the YDs $g^{\yd{\varsigma}}s^{\yd{\varsigma}}$ makes up the SW distribution~\cite{OW17S}. The SW distribution governs the probability of observing a particular symmetry sector $\yd{\varsigma}$ when performing Schur sampling on the input state $\rho^{\otimes n}$. The average over the SW distribution defined by spectrum $\boldsymbol{p}$ is denoted as $\swavg{\cdot}{\boldsymbol{p}}{n}$.

In the third step, we insert two resolutions of identities, namely, $\wt{w}$ (in $\mathbb{W}^{\yd{\varsigma}}$) and $\wt{v}$ (in $\mathbb{W}^{\yd{\mu}}$), which index the GT bases of the corresponding registers. For the fourth equality, we express the coupling $\mathbb{W}^{\yd{\mu}}\otimes\mathbb{W}^{\yd{m}}$ to $\mathbb{W}^{\yd{\varsigma}}$ in terms of the Clebsch--Gordan coefficient (CGC) $\mathrm{CGC}(\wt{v},\,\wt{k\cdots k}\,|\,\wt{w})$. Moreover, in the GT basis, the density matrix is diagonal and its eigenvalues simplify to the weights of the GT basis vectors (i.e., the number of occurrences of each symbol in the GT basis vector).

Finally, we write the expression in~\cref{equ:the_fidelity_is} more compactly as the expectation of the sector-wise utility \begin{equation}
     ^k\mathcal{F}_\mathrm{all}\mleft(\mathcal{T}\mright)=\swavg{\res{{^kf}}{\yd{\mu}}{\yd{m}}{\yd{\varsigma}}_\mathrm{all}}{\boldsymbol{p}}{n},\, \res{{^kf}}{\yd{\mu}}{\yd{m}}{\yd{\varsigma}}_\mathrm{all}=\sum_{\wt{w},\wt{v}}\mleft|\bra{\wt{v}}_{\yd{\mu}} \bra{\psi_k}_{\mathrm{out}}^{\otimes m}\intertwiner{W}{\yd{\varsigma}}{\yd{\mu}}{\yt{1,\ldots,m}}{}\ket{\wt{w}}_{\yd{\varsigma}}\mright|^2\mleft\langle \wt{w} \mleft| \frac{\rho^{\yd{\varsigma}}}{s^{\yd{\varsigma}}} \mright| \wt{w} \mright\rangle
\end{equation}

We turn to the one-site utility:
\begin{equation}
\label{equ:the_one_copy_fidelity_is}
\begin{aligned}
^k\mathcal{F}_{\mathrm{one}}\mleft(\mathcal{T}\mright)
=&\;
F\mleft(\psi_k,\ \Tr_{2,\ldots,m}\,\mathcal{T}\mleft(\rho^{\otimes n}\mright)\mright)\\
=&\;
F\mleft(\psi_k,\ \Tr_{2,\ldots,m}\sum_{\yd{\varsigma}} g^{\yd{\varsigma}}\,s^{\yd{\varsigma}}\,
\extChannel{\mathcal{T}}{\yd{\varsigma}}{\yt{1\cdots m}}{\yd{\mu}}\mleft(\frac{\rho^{\yd{\varsigma}}}{s^{\yd{\varsigma}}}\mright)\mright)\\
=&\;
\swavg{
F\mleft(\psi_k,\ \Tr_{2,\ldots,m}\,
\extChannel{\mathcal{T}}{\yd{\varsigma}}{\yt{1\cdots m}}{\yd{\mu}}\mleft(\frac{\rho^{\yd{\varsigma}}}{s^{\yd{\varsigma}}}\mright)\mright)
}{\boldsymbol{p}}{n}\\
\end{aligned}
\end{equation}
Consider the partially traced channel
\(\Tr_{2,\ldots,m}\mleft(\extChannel{\mathcal{T}}{\yd{\varsigma}}{\yt{1\cdots m}}{\yd{\varrho}}(\cdot)\mright)\).
It is itself a symmetric channel, and by~\cref{subsec:extremal_supp} it admits a decomposition into extremal single-output channels. This decomposition follows from an explicit analysis of the \(SU(d)\) recoupling \(F\)-symbols, which in turn allows us to relate the induced one-site channel to the fidelity of the $m=1$ case.

\begin{lemma}
    \label{lem:one_site_channel_decomposition}
        Let $\yd{\varsigma}\dot{=}[\varsigma_{1},\ldots,\varsigma_{d}]$. Let further $m\geq 1$ and $\yd{\lambda}$ such that $\res{c}{\yt{1\cdots m}}{\yd{\lambda}}{\yd{\varsigma}} = 1$.
        Then the channel decomposes as
        \begin{equation}\begin{aligned} \Tr_{2,\ldots,m}\mleft(\extChannel{\mathcal{T}}{\yd{\varsigma}}{\yt{1\cdots m}}{\yd{\lambda}}(\cdot)\mright) = \sum_{i=1}^{d}\resSix{F}{\yd{\lambda}}{\yd{m-1}}{\yd{1}}{\yd{\varsigma-\mathbf{e}_i}}{\yt{1\cdots m}}{\yd{\varsigma}}^2 \extChannel{\mathcal{T}}{\yd{\varsigma}}{\yd{1}}{\yd{\varsigma-\mathbf{e}_i}}(\cdot) \, ,
        \end{aligned}\end{equation}
        where the $F$-symbols of Ref.~\cite[Lemma~5]{BFGL+25S} are given by
        \begin{equation}
        \begin{aligned}
            \resSix{F}{\yd{\lambda}}{\yd{m-1}}{\yd{1}}{\yd{\varsigma-\mathbf{e}_i}}{\yt{1\cdots m}}{\yd{\varsigma}}
            &= \mleft\{\begin{aligned}
            &\sqrt{\dfrac{1}{m}\dfrac{\prod_{j=1}^{d}(\varsigma_{i} - \lambda_{j} + (j-i))}{\prod_{j\neq i}(\varsigma_{i} - \varsigma_{j} + (j-i))}}, && \yd{\varsigma-\mathbf{e}_i} \text{ is a valid YD} \, , \\[10pt]
            &0, && \text{otherwise}.
            \end{aligned}\mright.
        \end{aligned}
        \end{equation}
\end{lemma}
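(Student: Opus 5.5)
The idea is to factor the Stinespring isometry of the $m$-output channel through a one-box splitting and then recouple. By \cref{thm:extremal_channel_form}, the channel $\extChannel{\mathcal{T}}{\yd{\varsigma}}{\yt{1\cdots m}}{\yd{\lambda}}$ is conjugation by the intertwiner $\mathbb{W}^{\yd{\varsigma}}\to\mathbb{W}^{\yd{\lambda}}\otimes\mathbb{W}^{\yd{m}}$, followed by tracing out $\yd{\lambda}$. The multiplicity is one because $\res{c}{\yd{\lambda}}{\yd{m}}{\yd{\varsigma}}=1$ by the Pieri rule. We compose this with the unique intertwiner $\mathbb{W}^{\yd{m}}\hookrightarrow\mathbb{W}^{\yd{m-1}}\otimes\mathbb{W}$, which is simply the inclusion of the symmetric subspace $\mathrm{Sym}^m\subset\mathrm{Sym}^{m-1}\otimes\mathbb{C}^d$. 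Tracing out output sites $2,\ldots,m$ is then the same as tracing out the $\mathbb{W}^{\yd{\lambda}}\otimes\mathbb{W}^{\yd{m-1}}$ factor. This gives the first chain in \cref{equ:embedding_chains}.

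Next I recouple. The space of equivariant maps $\mathbb{W}^{\yd{\varsigma}}\to\mathbb{W}^{\yd{\lambda}}\otimes\mathbb{W}^{\yd{m-1}}\otimes\mathbb{W}$ has a second orthonormal basis, coming from the second chain in \cref{equ:embedding_chains}. First embed $\mathbb{W}^{\yd{\varsigma}}\hookrightarrow\mathbb{W}^{\yd{\varsigma-\mathbf{e}_i}}\otimes\mathbb{W}$; this exists and is multiplicity-free exactly when $\yd{\varsigma-\mathbf{e}_i}$ is a valid YD. Then embed $\mathbb{W}^{\yd{\varsigma-\mathbf{e}_i}}\hookrightarrow\mathbb{W}^{\yd{\lambda}}\otimes\mathbb{W}^{\yd{m-1}}$. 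The $F$-symbols are, by definition, the change-of-basis coefficients:
\[
V_{\mathrm{first}}=\sum_i \resSix{F}{\yd{\lambda}}{\yd{m-1}}{\yd{1}}{\yd{\varsigma-\mathbf{e}_i}}{\yt{1\cdots m}}{\yd{\varsigma}}\,V^{(i)}_{\mathrm{second}}.
\]
Here $V^{(i)}_{\mathrm{second}}=(\intertwiner{W}{\yd{\varsigma-\mathbf{e}_i}}{\yd{\lambda}}{\yd{m-1}}{}\otimes I)\,\intertwiner{W}{\yd{\varsigma}}{\yd{\varsigma-\mathbf{e}_i}}{\yd{1}}{}$. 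By Schur's lemma the coefficients are scalars, and by the Condon--Shortley convention they can be taken real.

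Now I substitute this into $\Tr_{\yd{\lambda},\yd{m-1}}(V_{\mathrm{first}}\,\cdot\,V_{\mathrm{first}}^\dagger)$. The images of different $V^{(i)}_{\mathrm{second}}$ lie, after the inner isometry, in orthogonal subspaces $\mathbb{W}^{\yd{\varsigma-\mathbf{e}_i}}\otimes\mathbb{W}$ of the environment-times-output space, so tracing the environment factor kills all cross terms $i\neq j$. In each diagonal term the inner isometry $\intertwiner{W}{\yd{\varsigma-\mathbf{e}_i}}{\yd{\lambda}}{\yd{m-1}}{}$ is traced out, leaving $\Tr_{\yd{\varsigma-\mathbf{e}_i}}$ of the single-box conjugation, which is $\extChannel{\mathcal{T}}{\yd{\varsigma}}{\yd{1}}{\yd{\varsigma-\mathbf{e}_i}}$. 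This gives the claimed convex decomposition with weights $F^2$. Since $V_{\mathrm{first}}$ is an isometry, the weights automatically sum to one.

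The remaining work, and the main obstacle, is the explicit value of the $F$-symbol. I would take it from Ref.~\cite[Lemma~5]{BFGL+25S}, after checking that its phase and normalization conventions match ours. As an independent check I would compute $F^2$ directly from the Pieri branching. Contract both sides of the recoupling identity with the highest-weight vector and take norms; equivalently, use the ratio-of-dimensions (Racah) formula for this multiplicity-free recoupling. The result is a product of hook-type content differences, which telescopes to
\[
\frac{1}{m}\,\frac{\prod_{j}(\varsigma_i-\lambda_j+j-i)}{\prod_{j\ne i}(\varsigma_i-\varsigma_j+j-i)}.
\]
Two sanity checks confirm the formula. First, the $j=i$ factor vanishes when $\lambda_i=\varsigma_i$, i.e. when no box of $\yd{\varsigma}/\yd{\lambda}$ lies in row $i$, as it must. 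Second, the weights sum to $1$ by a partial-fraction (Lagrange interpolation) identity in the variables $\varsigma_i-i$.
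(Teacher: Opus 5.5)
Your proposal is correct and follows essentially the same route as the paper. You compose the Stinespring isometry with the embedding $\mathbb{W}^{\yd{m}}\hookrightarrow\mathbb{W}^{\yd{m-1}}\otimes\mathbb{W}$, re-expand along the second coupling chain with the $F$-symbols as coefficients, and remove the cross terms under $\Tr_{\yd{\lambda},\yd{m-1}}$ by Schur's lemma, since the inner intertwiners for $i\neq j$ have orthogonal ranges in $\mathbb{W}^{\yd{\lambda}}\otimes\mathbb{W}^{\yd{m-1}}$; the paper does exactly this and takes the explicit formula from the same external lemma. Your extra checks of that formula go beyond what the paper verifies and are correct: the $j=i$ factor vanishes when $\lambda_i=\varsigma_i$, and $\sum_i F^2=1$ follows from the Lagrange-interpolation identity in the variables $\varsigma_i-i$ and $\lambda_j-j$.
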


\begin{remark}
    \label{rem:F_symbols_isometry}
    Since the $F$-symbols are the matrix elements of an isometry with indices $\yd{\varsigma-\mathbf{e}_i}$ and $\yt{1\cdots m}$, we have
    \begin{equation}\begin{aligned}
        \sum_{i=1}^{d} \resSix{F}{\yd{\lambda}}{\yd{m-1}}{\yd{1}}{\yd{\varsigma-\mathbf{e}_i}}{\yt{1\cdots m}}{\yd{\varsigma}}^2 = 1 \, .
    \end{aligned}\end{equation}
\end{remark}

\begin{proof}
    We know from~\cref{thm:extremal_channel_form} that
    \begin{equation}\begin{aligned}
        \label{equ:proof_one_site_channel_decomposition_isometry_form}
        \extChannel{\mathcal{T}}{\yd{\varsigma}}{\yt{1\cdots m}}{\yd{\lambda}}(\cdot)
        \;=\;
        \Tr_{\yd{\lambda}}\mleft(\intertwiner{W}{\yd{\varsigma}}{\yd{\lambda}}{\yt{1\cdots m}}{}\,\cdot\,\intertwineradj{W}{\yd{\varsigma}}{\yd{\lambda}}{\yt{1\cdots m}}{}\mright).
    \end{aligned}\end{equation}
    On the other hand, from Ref.~\cite[Section~2.8]{BFGL+25S}, we know that
    \begin{equation}\begin{aligned}
        \label{equ:proof_one_site_channel_decomposition_F_decomposition}
        \mleft(I_{\yd{\lambda}} \otimes \intertwiner{W}{\yt{1\cdots m}}{\yd{m-1}}{\yd{1}}{}\mright) \intertwiner{W}{\yd{\varsigma}}{\yd{\lambda}}{\yt{1\cdots m}}{} = \sum_{i:\yd{\varsigma-\mathbf{e}_i} \vdash_{d}(n,1)} \resSix{F}{\yd{\lambda}}{\yd{m-1}}{\yd{1}}{\yd{\varsigma-\mathbf{e}_i}}{\yt{1\cdots m}}{\yd{\varsigma}} W_{i} \, ,
    \end{aligned}\end{equation}
    where
    \begin{equation}\begin{aligned}
        W_{i} \coloneqq \mleft(\intertwiner{W}{\yd{\varsigma-\mathbf{e}_i}}{\yd{\lambda}}{\yd{m-1}}{} \otimes I_{d}\mright) \intertwiner{W}{\yd{\varsigma}}{\yd{\varsigma-\mathbf{e}_i}}{\yd{1}}{} \, .
    \end{aligned}\end{equation}
    Finally, we remark that
    \begin{equation}\begin{aligned}
        \Tr_{2,\ldots,m}\mleft(\rho^{\yt{1\cdots m}}\mright) = \Tr_{\yd{m-1}}\mleft(\intertwiner{W}{\yt{1\cdots m}}{\yd{m-1}}{\yd{1}}{}\,\rho^{\yt{1\cdots m}}\,\intertwineradj{W}{\yt{1\cdots m}}{\yd{m-1}}{\yd{1}}{}\mright) \, .
    \end{aligned}\end{equation}
    Using this fact, we can take $\Tr_{2,\ldots,m}$ on both sides of~\cref{equ:proof_one_site_channel_decomposition_isometry_form} and apply~\cref{equ:proof_one_site_channel_decomposition_F_decomposition} to get
    \begin{equation}\begin{aligned}
        \label{equ:proof_one_site_channel_decomposition_trace_out}
        \Tr_{2,\ldots,m}\mleft(\extChannel{\mathcal{T}}{\yd{\varsigma}}{\yt{1\cdots m}}{\yd{\lambda}}(\cdot)\mright) = \sum_{i,j} \resSix{F}{\yd{\lambda}}{\yd{m-1}}{\yd{1}}{\yd{\varsigma-\mathbf{e}_i}}{\yt{1\cdots m}}{\yd{\varsigma}} \overline{\resSix{F}{\yd{\lambda}}{\yd{m-1}}{\yd{1}}{\yd{\varsigma-\mathbf{e}_j}}{\yt{1\cdots m}}{\yd{\varsigma}}} \Tr_{\yd{\lambda},\yd{m-1}}\mleft(W_{i}\cdot{W_{j}}^{\dagger}\mright) \, .
    \end{aligned}\end{equation}
    By Schur's lemma, it follows that
    \begin{equation}\begin{aligned}
        \intertwineradj{W}{\yd{\varsigma-\mathbf{e}_j}}{\yd{\lambda}}{\yd{m-1}}{} \intertwiner{W}{\yd{\varsigma-\mathbf{e}_i}}{\yd{\lambda}}{\yd{m-1}}{} = 0
    \end{aligned}\end{equation}
    for any $i\neq j$. In particular, ${W_{j}}^{\dagger} W_{i} = 0$ for any $i\neq j$. Inserting into~\cref{equ:proof_one_site_channel_decomposition_trace_out}, together with the fact that
    \begin{equation}\begin{aligned}
        \extChannel{\mathcal{T}}{\yd{\varsigma}}{\yd{1}}{\yd{\varsigma-\mathbf{e}_i}}(\cdot)
        \;=\;
        \Tr_{\yd{\varsigma-\mathbf{e}_i}}\mleft(\intertwiner{W}{\yd{\varsigma}}{\yd{\varsigma-\mathbf{e}_i}}{\yd{1}}{}\,\cdot\,\intertwineradj{W}{\yd{\varsigma}}{\yd{\varsigma-\mathbf{e}_i}}{\yd{1}}{}\mright)
    \end{aligned}\end{equation}
    then completes the proof.
\end{proof}

With~\cref{lem:one_site_channel_decomposition}, we find
\begin{equation}
\begin{aligned}
^k\mathcal{F}_{\mathrm{one}}\mleft(\mathcal{T}\mright)=&\swavg{
\sum_i \resSix{F}{\yd{\mu}}{\yd{m-1}}{\yd{1}}{\yd{\varsigma-\mathbf{e}_i}}{\yt{1\cdots m}}{\yd{\varsigma}}^2\;
F\mleft(\psi_k,\
\extChannel{\mathcal{T}}{\yd{\varsigma}}{\yd{1}}{\yd{\varsigma-\mathbf{e}_i}}\mleft(\frac{\rho^{\yd{\varsigma}}}{s^{\yd{\varsigma}}}\mright)\mright)
}{\boldsymbol{p}}{n}\\
=&
\swavg{
\res{{^kf}}{\yd{\mu}}{\yd{m}}{\yd{\varsigma}}_\mathrm{one}
}{\boldsymbol{p}}{n},
\end{aligned}
\end{equation}
where we have similarly defined the sector-wise utility and related it to the all-site sector-wise utility as
\begin{equation}
\label{equ:one_site_decomposition}
\res{{^kf}}{\yd{\mu}}{\yd{m}}{\yd{\varsigma}}_\mathrm{one} =
\sum_i \resSix{F}{\yd{\mu}}{\yd{m-1}}{\yd{1}}{\yd{\varsigma-\mathbf{e}_i}}{\yt{1\cdots m}}{\yd{\varsigma}}^2\;
\res{{^kf}}{\yd{\varsigma-\mathbf{e}_i}}{\yd{1}}{\yd{\varsigma}}_\mathrm{all}.
\end{equation}

We will show that for the all-site utility, the optimal channel among all extremal channels is given by the overhang removal rule: the optimal $\yd{\mu}$ is obtained by removing boxes starting from the $k$-th row until $m$ boxes are removed. For the one-site utility, although channels that output into spaces other than the totally symmetric subspace can potentially achieve better performance, it suffices to require the output to be totally symmetric for purposes such as studying the entanglement-breaking limit, as explained in more detail in Ref.~\cite{CI26S}. Moreover, this choice is the most natural one, since when restricted to totally symmetric outputs, optimality is shown to also be achieved by the overhang removal rule.

\subsection{From overall utility to sector-wise utility via SW concentration}
\label{subsec:concentration_supp}

We denote the normalized Weyl chamber by
\begin{equation}
    \mathcal{C}
    \coloneqq
    \mleft\{
    \boldsymbol{x}\in\mathbb{R}^{d}:
    x_1\geq\cdots\geq x_d\geq0,\quad
    \sum_{i=1}^{d}x_i=1
    \mright\}.
\end{equation}
For a fixed nondegenerate spectrum $\boldsymbol{p}\in\mathcal{C}$, choose $\epsilon>0$ sufficiently small and define the gap-separated typical region
\begin{equation}
\label{eq:gap_separated_typical_region}
    \mathsf{R}_{\epsilon}
    \coloneqq
    \mleft\{
    \boldsymbol{x}\in\mathcal{C}:
    \min_{i<j}\Delta_{i,j}(\boldsymbol{x})\geq\epsilon,
    \mright\},
    \qquad
    \Delta_{i,j}(\boldsymbol{x})\coloneqq x_i-x_j \, .
\end{equation}
\begin{lemma}[SW averaging of uniform sector-wise expansions]
\label{lem:sw_averaging_uniform_sector_expansion}
Let $\overline{\yd{\varsigma}}=\yd{\varsigma}/n$ be sampled from the SW distribution $\mathrm{SW}(n,\boldsymbol{p})$. By the SW large-deviation bound~\cite[Theorem~3.1]{OW15S}, the atypical event satisfies
\begin{equation}
    \swprob{\overline{\yd{\varsigma}}\notin\mathsf{R}_{\epsilon}}{\boldsymbol{p}}{n}
    =
    \bigO[d,\boldsymbol{p},\epsilon][n]{\exp(-n)} \, .
\end{equation}
Let $0\leq f_n\leq 1$ be a sector-wise utility so that uniformly on $\mathsf{R}_{\epsilon}$ we have
\begin{equation}
    f_n(\overline{\yd{\varsigma}})
    =
    1-\frac{c(\overline{\yd{\varsigma}})}{n}
    +
    R_n(\overline{\yd{\varsigma}}),
    \qquad
    \sup_{\overline{\yd{\varsigma}}\in\mathsf{R}_{\epsilon}}
    |R_n(\overline{\yd{\varsigma}})|
    =
    \lilo[d,\boldsymbol{p},\epsilon]{n^{-1}} \, ,
\end{equation}
where $c$ is uniformly Lipschitz on $\mathsf{R}_{\epsilon}$ with constant $L_\epsilon$. Then
\begin{equation}
    \swavg{f_n(\overline{\yd{\varsigma}})}{\boldsymbol{p}}{n}
    =
    1-\frac{c(\boldsymbol{p})}{n}
    +
    \lilo[d,\boldsymbol{p},\epsilon]{n^{-1}} \, .
\end{equation}
Likewise, if $f_n(\overline{\yd{\varsigma}})=C(\overline{\yd{\varsigma}})+\lilo[d,\boldsymbol{p},\epsilon]{1}$ uniformly on $\mathsf{R}_{\epsilon}$ with $C$ Lipschitz, then
\begin{equation}
    \swavg{f_n(\overline{\yd{\varsigma}})}{\boldsymbol{p}}{n}=C(\boldsymbol{p})+\lilo[d,\boldsymbol{p},\epsilon]{1} \, .
\end{equation}
\end{lemma}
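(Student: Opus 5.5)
We split the SW average of \(f_n\) into the typical part \(\overline{\yd{\varsigma}}\in\mathsf{R}_{\epsilon}\) and the atypical complement, and handle each separately. Since \(0\le f_n\le 1\), the atypical part contributes at most \(\swprob{\overline{\yd{\varsigma}}\notin\mathsf{R}_{\epsilon}}{\boldsymbol{p}}{n}\). By the cited large-deviation bound this is \(\bigO[d,\boldsymbol{p},\epsilon][n]{\exp(-n)}=\lilo{n^{-1}}\). The constant \(1\) likewise integrates to \(1\) up to the same exponentially small error.

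On \(\mathsf{R}_{\epsilon}\) we insert the uniform expansion. The remainder \(R_n\) integrates to at most \(\sup_{\mathsf{R}_\epsilon}|R_n|=\lilo[d,\boldsymbol{p},\epsilon]{n^{-1}}\). It then remains to show
\[
\frac1n\Bigl\langle c(\overline{\yd{\varsigma}})\mathbb{1}[\overline{\yd{\varsigma}}\in\mathsf{R}_{\epsilon}]\Bigr\rangle
=\frac{c(\boldsymbol{p})}{n}+\lilo{n^{-1}} .
\]
Here we use that \(\mathsf{R}_\epsilon\) is compact and \(c\) is Lipschitz on it, hence bounded by some \(M_\epsilon\) depending on \(d,\boldsymbol{p},\epsilon\). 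We also take \(\epsilon\) small enough that \(\boldsymbol{p}\in\mathsf{R}_\epsilon\), which is possible because the spectrum is nondegenerate. Then
\[
\bigl|c(\overline{\yd{\varsigma}})-c(\boldsymbol{p})\bigr|\le L_\epsilon\,\|\overline{\yd{\varsigma}}-\boldsymbol{p}\| \quad\text{on } \mathsf{R}_\epsilon .
\]
Replacing the indicator by \(1\) costs only \(M_\epsilon\) times an exponentially small probability. So the whole claim reduces to showing \(\langle\|\overline{\yd{\varsigma}}-\boldsymbol{p}\|\rangle\to0\) under \(\mathrm{SW}(n,\boldsymbol{p})\).

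This convergence is the step that needs input. We would take it from the concentration of \cref{thm:main_part_concentration_YD}, which gives gap deviations of order \(\sqrt{\ln n/n}\) with high probability, together with the normalization \(\sum_i\overline{\varsigma}_i=1\), which converts gaps into coordinates with a \(d\)-dependent constant. Alternatively, the known \(\bigO{d^2/n}\) bound on the \(\ell_2^2\) error of the empirical Young diagram estimator (\cite{OW16}) gives \(\langle\|\overline{\yd{\varsigma}}-\boldsymbol{p}\|\rangle=\bigO[d]{n^{-1/2}}\) via Jensen. Either route yields \(o(1)\), uniformly in the sense required, since the constants depend only on \(d,\boldsymbol{p},\epsilon\). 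This completes the first statement.

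The second statement is the same argument with \(c/n\) replaced by \(C\) and the remainder replaced by \(o(1)\). The typical part gives \(C(\boldsymbol{p})+L_\epsilon\langle\|\overline{\yd{\varsigma}}-\boldsymbol{p}\|\rangle+o(1)\). The atypical part is bounded by its exponentially small probability, using \(0\le f_n\le1\) and boundedness of \(C\) on the compact \(\mathsf{R}_\epsilon\).

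The only subtlety is uniformity bookkeeping. All constants, namely \(L_\epsilon\), \(M_\epsilon\), the large-deviation constants and the concentration constants, depend only on \((d,\boldsymbol{p},\epsilon)\), which matches the subscript in the conclusion.
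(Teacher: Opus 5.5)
Your proposal is correct and is essentially the paper's proof: you discard the atypical sectors using $0\le f_n\le 1$ and the exponentially small SW large-deviation probability, integrate the uniform remainder, and control $\bigl\langle (c(\overline{\yd{\varsigma}})-c(\boldsymbol{p}))\mathbf{1}_{\mathsf{R}_\epsilon}\bigr\rangle$ by $L_\epsilon\langle\|\overline{\yd{\varsigma}}-\boldsymbol{p}\|_2\rangle$ together with the O'Donnell--Wright fluctuation bound. The paper uses $\langle\|\overline{\yd{\varsigma}}-\boldsymbol{p}\|_2^2\rangle\le d/n$ with Cauchy--Schwarz, so your $d^2/n$ version works just as well, and your alternative via the row-gap concentration of \cref{thm:main_part_concentration_YD} is also valid but not needed.
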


\begin{proof}
We first split the SW average into typical and atypical sectors.
Since $0\leq f_n\leq 1$, the atypical contribution is bounded by the atypical probability:
\begin{equation}
    \swavg{f_n}{\boldsymbol{p}}{n}
    =
    \swavg{f_n\mathbf{1}_{\mathsf{R}_{\epsilon}}}{\boldsymbol{p}}{n}
    +
    \bigO[d,\boldsymbol{p},\epsilon][n]{\exp(-n)} \, .
\end{equation}
On $\mathsf{R}_{\epsilon}$, we average each term of the expansion separately:
\begin{equation}
\begin{aligned}
    \swavg{f_n}{\boldsymbol{p}}{n}
    &=
    \swprob{\mathsf{R}_{\epsilon}}{\boldsymbol{p}}{n}
    -
    \frac{1}{n}
    \swavg{c(\overline{\yd{\varsigma}})\mathbf{1}_{\mathsf{R}_{\epsilon}}}{\boldsymbol{p}}{n} +
    \swavg{R_n(\overline{\yd{\varsigma}})\mathbf{1}_{\mathsf{R}_{\epsilon}}}{\boldsymbol{p}}{n}
    +
    \bigO[d,\boldsymbol{p},\epsilon][n]{\exp(-n)} \, .
\end{aligned}
\end{equation}
Here $\swprob{\mathsf{R}_{\epsilon}}{\boldsymbol{p}}{n}=1 + \bigO[d,\boldsymbol{p},\epsilon][n]{\exp(-n)}$, and the uniform remainder gives
\begin{equation}
    \swavg{R_n(\overline{\yd{\varsigma}})\mathbf{1}_{\mathsf{R}_{\epsilon}}}{\boldsymbol{p}}{n}
    =
    \lilo[d,\boldsymbol{p},\epsilon]{n^{-1}} \, .
\end{equation}
It remains to control the coefficient term.
We write
\begin{equation}
\begin{aligned}
    \swavg{c(\overline{\yd{\varsigma}})\mathbf{1}_{\mathsf{R}_{\epsilon}}}{\boldsymbol{p}}{n}
    &=
    c(\boldsymbol{p})\swprob{\mathsf{R}_{\epsilon}}{\boldsymbol{p}}{n} +
    \swavg{\mleft(c(\overline{\yd{\varsigma}})-c(\boldsymbol{p})\mright)\mathbf{1}_{\mathsf{R}_{\epsilon}}}{\boldsymbol{p}}{n} .
\end{aligned}
\end{equation}
The SW fluctuation bound of Ref.~\cite[Theorem~1.1]{OW16S} gives $\swavg{\|\overline{\yd{\varsigma}}-\boldsymbol{p}\|_2^2}{\boldsymbol{p}}{n}\leq d/n$.
By Cauchy--Schwarz, $\swavg{\|\overline{\yd{\varsigma}}-\boldsymbol{p}\|_2}{\boldsymbol{p}}{n}=\bigO[d][n]{n^{-1/2}}$.
Since $c$ is uniformly Lipschitz on $\mathsf{R}_{\epsilon}$, the difference term is bounded by the unrestricted fluctuation:
\begin{equation}
\begin{aligned}
    \mleft| \swavg{\mleft(c(\overline{\yd{\varsigma}})-c(\boldsymbol{p})\mright)\mathbf{1}_{\mathsf{R}_{\epsilon}}}{\boldsymbol{p}}{n}
    \mright|
    &\leq
    L_\epsilon  \swavg{\|\overline{\yd{\varsigma}}-\boldsymbol{p}\|_2\mathbf{1}_{\mathsf{R}_{\epsilon}}}{\boldsymbol{p}}{n} \\
    &\leq
    L_\epsilon
    \swavg{\|\overline{\yd{\varsigma}}-\boldsymbol{p}\|_2}{\boldsymbol{p}}{n}
    =
    \bigO[d,\boldsymbol{p},\epsilon][n]{n^{-1/2}} \, .
\end{aligned}
\end{equation}
Therefore
\begin{equation}
    \swavg{c(\overline{\yd{\varsigma}})\mathbf{1}_{\mathsf{R}_{\epsilon}}}{\boldsymbol{p}}{n}
    =
    c(\boldsymbol{p})
    +
    \bigO[d,\boldsymbol{p},\epsilon][n]{n^{-1/2}} \, .
\end{equation}
Multiplying this error by $1/n$ gives $\bigO[d,\boldsymbol{p},\epsilon][n]{n^{-3/2}}$, which is absorbed into $\lilo[d,\boldsymbol{p},\epsilon]{n^{-1}}$.
The extensive statement follows by the same argument without the prefactor $1/n$.
\end{proof}

For the non-degenerate asymptotics, we fix $\epsilon>0$ small enough that $\boldsymbol{p}$ lies in the interior of $\mathsf{R}_{\epsilon}$ and apply~\cref{lem:sw_averaging_uniform_sector_expansion}.
For the degenerate asymptotics considered later, any potential singular behavior in the optimal sector-wise fidelities $\res{f}{\yd{\varsigma}}{\yd{m}}{\yd{\mu}}_{(\cdot)}$ is confined to the corresponding atypical region.
Specifically, applying the lemma to the intensive-output expansions gives the all-site utility~\cref{equ:intensive_asymptote} and the one-site utility~\cref{equ:one_site_asymptote}.
For extensive outputs, applying the same lemma to the expansion $\res{f}{\yd{\varsigma}}{\yd{m}}{\yd{\mu}}_{(\cdot)}=C(\overline{\yd{\varsigma}})+\lilo[d,\boldsymbol{p},\epsilon]{1}$ gives~\cref{equ:extensive_asymptote}.

For the nonasymptotic analysis, in order to remove the $d$-dependence, we first establish a concentration bound for the row differences $\Delta_{k,k+1} = \varsigma_k - \varsigma_{k+1}$.
\begin{lemma}[Azuma's inequality for row differences]
\label{lem:azuma_row_difference}
Let each component of $w\dot{=}(w_1,\ldots,w_n)$ be drawn i.i.d.\ from a distribution with probabilities $\boldsymbol{p}$, and let $\varsigma_i=\varsigma_i(w)$ denote the $i$-th row length produced by the sampling procedure under consideration (e.g., $\yd{\varsigma}=\mathrm{shRSK}(w)$).
Let $\Delta_{k,k+1} = \varsigma_{k}-\varsigma_{k+1}$ denote the row difference. Then for any $t > 0$,
\begin{equation}
\Pr(|\langle\overline\Delta_{k,k+1}\rangle - \overline\Delta_{k,k+1}| \ge t)
\le
2e^{-\,\frac{\,n t^2}{32}}.
\end{equation}
\end{lemma}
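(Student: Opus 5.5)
Azuma's inequality for row differences follows from McDiarmid's bounded-differences inequality, applied to $\Delta_{k,k+1}$ viewed as a function of the i.i.d.\ letters $w_1,\ldots,w_n$. Write $X = \Delta_{k,k+1}(w) = \varsigma_k(w)-\varsigma_{k+1}(w)$ with $\yd{\varsigma}=\mathrm{shRSK}(w)$. Consider the Doob martingale $X_j = \mathbb{E}[X \mid w_1,\ldots,w_j]$, so that $X_0 = \langle \Delta_{k,k+1}\rangle$ and $X_n = X$. Since the letters are independent, Azuma--Hoeffding applies once we have a uniform bound $c$ on $|X - X'|$ for any two words $w, w'$ differing in a single position. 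Then $|X_j - X_{j-1}|\le c$ and
\[
\Pr\bigl(|X - \langle X\rangle|\ge s\bigr)\le 2\exp\!\Bigl(-\frac{s^2}{2nc^2}\Bigr).
\]

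The key combinatorial input is a bound on how much one letter change can move the row lengths. By Greene's theorem, $\varsigma_1+\cdots+\varsigma_i$ equals the maximal total length of a union of $i$ disjoint weakly increasing subsequences of $w$. Changing one letter changes each such maximum by at most $1$: delete that position from every optimal family in the old word to get a family of size at least the optimum minus one in the new word, and argue symmetrically. Hence each partial sum $\Lambda_i = \sum_{\ell\le i}\varsigma_\ell$ moves by at most $1$, so each $\varsigma_i = \Lambda_i - \Lambda_{i-1}$ moves by at most $2$.

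For the difference we use $\Delta_{k,k+1} = 2\Lambda_k - \Lambda_{k-1} - \Lambda_{k+1}$, which moves by at most $4$. Taking $c=4$ and $s = nt$ gives
\[
\Pr\bigl(|\langle\overline\Delta_{k,k+1}\rangle-\overline\Delta_{k,k+1}|\ge t\bigr)\le 2\exp\!\Bigl(-\frac{n^2t^2}{2n\cdot 16}\Bigr) = 2e^{-nt^2/32},
\]
which is exactly the claimed bound.

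The only step requiring care is the Greene-theorem stability argument, specifically checking that the bound is a single unit per partial sum under a letter substitution (not an insertion or deletion). This holds because substitution can be realized as a deletion followed by an insertion, and each of these operations changes the maximum in the same direction by at most $1$, never both. One also has to be careful that the combination $2\Lambda_k - \Lambda_{k-1} - \Lambda_{k+1}$ really yields $c=4$ and not something worse; the triangle inequality gives $2+1+1=4$, which matches the constant $32$. The same argument applies verbatim to any other sampling procedure whose row lengths satisfy the Greene-type characterization, which covers the parenthetical generality in the statement.
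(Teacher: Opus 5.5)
Your proposal is correct and follows the paper's proof: the same Doob martingale, the same bounded-difference constant $c=4$, and Azuma's inequality with $\sum_i c_i^2 = 16n$, followed by normalization by $n$. The only difference is that you derive the constant yourself from Greene's theorem via the partial sums ($2+1+1$ for $2\Lambda_k-\Lambda_{k-1}-\Lambda_{k+1}$), whereas the paper cites O'Donnell--Wright for the fact that each $\varsigma_\ell$ moves by at most $2$ and adds $2+2$ for $\varsigma_k-\varsigma_{k+1}$.
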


\begin{proof}
Following the proof of Ref.~\cite[Proposition~4.8]{OW17S}, consider the Doob martingale
\begin{equation}
X^{(i)} = \mathbb{E}\mleft(\Delta_{k,k+1} \mid w_1,\ldots,w_i\mright),
\quad i=0,1,\ldots,n,
\end{equation}
so that $X^{(0)} = \langle\Delta_{k,k+1}\rangle$ and $X^{(n)} = \Delta_{k,k+1}$. By the bounded difference property for single-symbol changes (each $\varsigma_\ell$ changes by at most $2$ when only one $w_i$ is modified), each martingale increment satisfies
\begin{equation}
|X^{(i)}-X^{(i-1)}|
\le|\varsigma_{k+1}-\varsigma_{k+1}'|+|\varsigma_k-\varsigma_k'|
\le 4,
\end{equation}
where the primed variables refer to changing only $w_i$.
Hence Azuma's inequality applies with $c_i=4$ for all $i$, yielding for any $\epsilon>0$,
\begin{equation}
\Pr(|\langle\Delta_{k,k+1}\rangle - \Delta_{k,k+1}| \ge T)
\le
2e^{-\,\frac{T^2}{2\sum_{i=1}^n c_i^2}}
=
2e^{-\,\frac{\,T^2}{32n}}.
\end{equation}
Normalizing by $n$ gives the stated result.
\end{proof}

\begin{lemma}[Upper bound on probability of row difference deviations]
    \label{thm:upper_bound_probability_row_differences}
    For $\alpha>4/\sqrt{n}$ and \(1\leq k \leq d-1\), we have
    \begin{equation}\begin{aligned}
        \Pr\mleft(\mleft|\overline{\Delta}_{k,k+1} - D_{k,k+1}\mright| \ge \alpha \mright) \leq 2e^{-\,\frac{\,(\sqrt{n}\alpha - 4)^2}{32}} \, .
    \end{aligned}\end{equation}
    In particular, for \(2\leq k \leq d-1\),
    \begin{equation}\begin{aligned}
        \Pr\mleft(\,|\overline{\Delta}_{k-1,k}-D_{k-1,k}|\ge \alpha\ \lor\ |\overline{\Delta}_{k,k+1}-D_{k,k+1}|\ge \alpha\,\mright)
        \ \le\ 4e^{-\,\frac{(\sqrt{n}\alpha-4)^2}{32}} \, .
    \end{aligned}\end{equation}
\end{lemma}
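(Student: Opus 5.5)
This bound follows from \cref{lem:azuma_row_difference} once we control the bias between the mean normalized gap $\langle\overline{\Delta}_{k,k+1}\rangle$ and the spectral gap $D_{k,k+1}$. We use the triangle inequality
\begin{equation}
|\overline{\Delta}_{k,k+1}-D_{k,k+1}| \le |\overline{\Delta}_{k,k+1}-\langle\overline{\Delta}_{k,k+1}\rangle| + |\langle\overline{\Delta}_{k,k+1}\rangle - D_{k,k+1}| .
\end{equation}
Suppose we can show the bias term is at most $4/\sqrt{n}$. Then the event $|\overline{\Delta}_{k,k+1}-D_{k,k+1}|\ge\alpha$ implies $|\overline{\Delta}_{k,k+1}-\langle\overline{\Delta}_{k,k+1}\rangle|\ge \alpha-4/\sqrt{n}=:t>0$. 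The hypothesis $\alpha>4/\sqrt{n}$ is exactly what makes $t$ positive. \cref{lem:azuma_row_difference} then gives the bound $2e^{-nt^2/32}=2e^{-(\sqrt{n}\alpha-4)^2/32}$.

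The main obstacle is the bias estimate $|\langle\overline{\Delta}_{k,k+1}\rangle - D_{k,k+1}|\le 4/\sqrt{n}$, which must hold uniformly in $d$. We will get it by bounding each row mean separately, via $|\langle\overline{\varsigma}_i\rangle - p_i|\le 2/\sqrt{n}$ for $i=k,k+1$.

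For this we use the known one-sided bounds on the SW (RSK) row means, in the dimension-free form of O'Donnell--Wright~\cite{OW16,OW17}:
\begin{itemize}
\item \emph{Lower bound on partial sums.} Greene's theorem gives $\langle \varsigma_1+\cdots+\varsigma_i\rangle \ge n(p_1+\cdots+p_i)$, because the longest $i$ disjoint weakly increasing subsequences dominate the count of the $i$ most likely letters.
\item \emph{Upper bound on individual rows.} We have $\langle\varsigma_i\rangle \le np_i + 2\sqrt{n}$, a row-wise statement that follows from the $\sqrt{n}$-type bound on the expected excess of the top rows.
\end{itemize}
Combining these gives $\langle\varsigma_i\rangle - np_i \ge -2\sqrt{n}$ as well, at least for the rows needed. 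The step we are least sure of is whether the literature bounds hold row-by-row with constant $2$ independent of $d$. If they do not, we will fall back on the weaker but dimension-free estimate $\mathbb{E}\,\|\overline{\yd{\varsigma}}-\boldsymbol{p}\|_1$-type bounds restricted to two rows, and adjust the constant $4$ accordingly.

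For the second claim, when $2\le k\le d-1$, we apply the first part to the index pairs $(k-1,k)$ and $(k,k+1)$ and take a union bound. This doubles the prefactor to $4$ and completes the argument.
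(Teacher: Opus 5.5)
Your architecture is the paper's: bound the bias $|\langle\overline{\Delta}_{k,k+1}\rangle-D_{k,k+1}|\le 4/\sqrt{n}$, apply \cref{lem:azuma_row_difference} with $t=\alpha-4/\sqrt{n}$, and take a union bound over the two gaps for the second claim. The gap is in how you propose to get the bias bound, which is the only nontrivial input. The paper does not derive it; it imports it directly from Theorem~4.5 of O'Donnell--Wright, \emph{Efficient quantum tomography II}.

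Your derivation of the lower half does not work. Subtracting the row upper bounds for rows $1,\dots,i-1$ from Greene's partial-sum bound gives
\begin{equation*}
\langle\varsigma_i\rangle\;\ge\;n(p_1+\cdots+p_i)-\sum_{j<i}\bigl(np_j+2\sqrt{n}\bigr)\;=\;np_i-2(i-1)\sqrt{n}.
\end{equation*}
This equals $np_i-2\sqrt{n}$ only for $i\le 2$. For $i=k,k+1$ you get only
\begin{equation*}
-2k\sqrt{n}\;\le\;\langle\Delta_{k,k+1}\rangle-nD_{k,k+1}\;\le\;(2k+2)\sqrt{n}.
\end{equation*}
So the constant $4$ is recovered for $k=1$ but not for $k\ge 2$. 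There it grows with $k$, and hence possibly with $d$, which is exactly the non-principal case the lemma is meant to cover.

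This loss comes from the route itself, not from loose bookkeeping. Greene-type partial-sum bounds are one-sided with slack of order $i\sqrt{n}$: for a uniform spectrum with $i\ll d$ and $n$ large, the GUE limit gives the top $i$ rows a total expected excess of order $i\sqrt{n}$. Differencing such bounds therefore cannot give an $i$-independent estimate on a single row. The same applies to your upper bullet: if $\langle\varsigma_i\rangle\le np_i+2\sqrt{n}$ is meant as a consequence of the excess of the top rows, it inherits the same $i$-dependence.

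Your fallback does not help either. The available bounds on $\|\overline{\varsigma}-\boldsymbol{p}\|$ are not dimension-free; for example, $\langle\|\overline{\varsigma}-\boldsymbol{p}\|_2^2\rangle\le d/n$ only yields $|\langle\overline{\varsigma}_i\rangle-p_i|\le\sqrt{d/n}$. In any case, changing the constant $4$ would not prove the inequality as stated.

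What you need is a genuinely row-wise, two-sided, dimension-free bound on $\langle\varsigma_i\rangle$, or directly on $\langle\Delta_{k,k+1}\rangle$. This must be cited from O'Donnell--Wright rather than assembled from Greene's theorem. If you replace your two bullets with that citation, as the paper does, the rest of your argument goes through unchanged.
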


\begin{proof}

    From Ref.~\cite[Theorem~4.5]{OW17S}, we have that
    \begin{equation}\begin{aligned}
        \mleft|\langle\overline{\Delta}_{k,k+1}\rangle - D_{k,k+1}\mright| \leq \frac{4}{\sqrt{n}} \, .
    \end{aligned}\end{equation}
    Combining with~\cref{lem:azuma_row_difference}, we get
    \begin{equation}\begin{aligned}
        \Pr\mleft(\mleft|\overline{\Delta}_{k,k+1} - D_{k,k+1}\mright| \ge t + \frac{4}{\sqrt{n}}\mright) \leq 2e^{-\,\frac{\,nt^2}{32}} \, .
    \end{aligned}\end{equation}
    We set $t= \alpha - \frac{4}{\sqrt{n}}$ to get the result.
\end{proof}

\begin{lemma}[Upper bound on probability of two row differences]
    For $0<k<d$, \label{lem:upper_bound_probability_two_row_differences}
    For $\alpha>4/\sqrt{n}$, we have
    \begin{equation}\begin{aligned}
        \Pr\mleft(\,|\overline{\Delta}_{k-1,k}-D_{k-1,k}|\ge \alpha\ \lor\ |\overline{\Delta}_{k,k+1}-D_{k,k+1}|\ge \alpha\,\mright)
        \ \le\ 4e^{-\,\frac{(\sqrt{n}\alpha-4)^2}{32}} \, .
    \end{aligned}\end{equation}
\end{lemma}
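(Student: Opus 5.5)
The plan is a union bound over the two row-gap deviation events, each controlled by \cref{thm:upper_bound_probability_row_differences}. Write
\begin{equation}
A_{k-1} \coloneqq \mleft\{|\overline{\Delta}_{k-1,k}-D_{k-1,k}|\ge \alpha\mright\},\qquad A_{k} \coloneqq \mleft\{|\overline{\Delta}_{k,k+1}-D_{k,k+1}|\ge \alpha\mright\}.
\end{equation}
The event in the statement is exactly $A_{k-1}\cup A_k$. Subadditivity of probability gives $\Pr(A_{k-1}\cup A_k)\le \Pr(A_{k-1})+\Pr(A_k)$, so the proof reduces to bounding each term separately.

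Each term is handled by \cref{thm:upper_bound_probability_row_differences}, which applies to any adjacent row gap $\Delta_{j,j+1}$ with $1\le j\le d-1$ once $\alpha>4/\sqrt{n}$. Its proof combines the bias bound $|\langle\overline{\Delta}_{j,j+1}\rangle-D_{j,j+1}|\le 4/\sqrt{n}$ from Ref.~\cite[Theorem~4.5]{OW17S} with the Azuma bound of \cref{lem:azuma_row_difference}, and neither step uses anything specific to $j$. Applying it with $j=k-1$ and $j=k$, and using the same threshold $\alpha$ for both, each probability is at most $2e^{-(\sqrt{n}\alpha-4)^2/32}$. Adding the two bounds gives the claimed $4e^{-(\sqrt{n}\alpha-4)^2/32}$. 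No independence between $\Delta_{k-1,k}$ and $\Delta_{k,k+1}$ is needed, which matters because they share the row $\varsigma_k$ and are correlated.

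The only delicate point is the index range. As written, the hypothesis $0<k<d$ allows $k=1$, where $\Delta_{0,1}$ is undefined. I would read the statement as intended for $2\le k\le d-1$, matching the ``in particular'' clause of \cref{thm:upper_bound_probability_row_differences}, so that both $k-1$ and $k$ lie in $\{1,\ldots,d-1\}$. For $k=1$, the natural convention is to drop the first event, or to set $\Delta_{0,1}=\infty$ in analogy with $\Delta_{i,d+1}=\infty$. The bound then holds trivially, and in fact with the sharper constant $2$. I would state this convention explicitly in the proof. Beyond that there is no real obstacle.
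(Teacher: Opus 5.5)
Your proof is correct and matches the paper's argument: a union bound over the two deviation events, each bounded by $2e^{-(\sqrt{n}\alpha-4)^2/32}$ via \cref{thm:upper_bound_probability_row_differences}, then summed to get the factor $4$. Your remark about $k=1$ is well taken, since $\Delta_{0,1}$ is undefined there; the paper later uses exactly your convention in \cref{cor:poly_tail_bound}, where for $k=1$ or $d$ only one gap enters and the constant becomes $2$.
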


\begin{proof}
    By the union bound,
    \begin{equation}\begin{aligned}
        &\Pr\mleft(\,|\overline{\Delta}_{k-1,k}-D_{k-1,k}|\ge \alpha\ \lor \ |\overline{\Delta}_{k,k+1}-D_{k,k+1}|\ge \alpha\,\mright)\\
        \le&
        \Pr\mleft(|\overline{\Delta}_{k-1,k}-D_{k-1,k}|\ge \alpha\mright)
        +
        \Pr\mleft(|\overline{\Delta}_{k,k+1}-D_{k,k+1}|\ge \alpha\mright).
    \end{aligned}\end{equation}
    Applying~\cref{thm:upper_bound_probability_row_differences} to each term yields
    \begin{equation}\begin{aligned}
        \Pr\mleft(|\overline{\Delta}_{k-1,k}-D_{k-1,k}|\ge \alpha\mright),
        \Pr\mleft(|\overline{\Delta}_{k,k+1}-D_{k,k+1}|\ge \alpha\mright)
        \le 2e^{-\,\frac{(\sqrt{n}\alpha-4)^2}{32}},
    \end{aligned}\end{equation}
    and summing the two bounds gives the claim.
\end{proof}

\subsection{Expression of sector-wise utilities}
\label{subsec:expression_supp}

To evaluate the sector-wise fidelities, it is convenient to reindex the target eigenstate as the $d$-th state using the permutation $\sigma$:
\begin{equation}
\label{equ:permutation}
\sigma(i) \coloneqq
\mleft\{\begin{aligned}
    &i, && i < k, \\
    &d, && i = k, \\
    &i-1, && i > k.
\end{aligned}\mright.
\end{equation}
We then set $\boldsymbol{q}\dot{=}(q_{1},\ldots,q_{d})$ by $q_{\sigma(i)}\coloneqq p_{i}$, and the corresponding eigenstates $\ket{\sigma(i)}=\ket{\psi_{i}}$. This reindexing allows us to express the utility components in a more compact form, as shown in the following theorem~\cref{thm:utility_component}.

On each symmetry sector $\yd{\varsigma}$, the sector-wise all-site utility evaluates to the expectation over the Weyl distribution of the utility component, a function of the GT basis $\wt{w}$:
 \begin{equation}
\begin{aligned}
^k\res{f}{\yd{\mu}}{\yd{m}}{\yd{\varsigma}}_\mathrm{all}=&\frac{1}{s^{\yd{\varsigma}}}\sum_{\wt{v}\vdash\yd{\mu},\wt{w}\vdash\yd{\varsigma}}\mleft|\mathrm{CGC}\mleft(
\wt{v}\,\wt{d\cdots d}|\wt{w}\mright)\mright|^2\prod_{i=1}^{d-1} {q_i}^{\#\mleft(\wt{w}\mright)}\\
=&\weylavg{\res{f}{\yd{\mu}}{\yd{m}}{\yd{\varsigma}}_\mathrm{all}\mleft(\wt{w}\mright)}{\boldsymbol{q}}{\yd{\varsigma}}\,,\quad
\end{aligned}
\end{equation}
Here, the Weyl distribution $W(\boldsymbol{q},\yd{\varsigma})$ is the distribution over Weyl tableaux $\wt{w}$ weighted by $\prod_{i=1}^{d-1} {q_i}^{\#_i\mleft(\wt{w}\mright)}$. $\#_i\mleft(\wt{w}\mright)$ counts the symbol $i$ in $\wt{w}$. More explicitly, the average over the Weyl distribution is given by
\begin{equation}
\langle \cdot\rangle_{\wt{w}\sim\mathrm{W}(\boldsymbol{q},\yd{\varsigma})} = \frac{1}{s^{\yd{\varsigma}}}\sum_{\wt{w}\vdash\yd{\varsigma}}(\cdot) \prod_{i=1}^{d-1} {q_i}^{\#\mleft(\wt{w}\mright)}.
\end{equation}

In the second line, we also define the sector-wise utility
\begin{equation}
    \label{equ:sector_wise_fidelity}
\res{f}{\yd{\mu}}{\yd{m}}{\yd{\varsigma}}_\mathrm{all}\mleft(\wt{w}\mright) =\sum_{\wt{v}\vdash\yd{\mu}}\mleft|\mathrm{CGC}\mleft(
\wt{v}\,\wt{d\cdots d}|\wt{w}\mright)\mright|^2
\end{equation}
Note that the utility components are independent of the choice of $k$, since the dependence on $k$ has been absorbed into the reindexed spectrum $\boldsymbol{q}$.

\begin{theorem}
The utility component can be written compactly as:
\label{thm:utility_component}
\begin{equation}
\res{f}{\yd{\mu}}{\yd{m}}{\yd{\varsigma}}_\mathrm{all}\mleft(\wt{w}\mright)=\binom{m}{\mathbf m}\;
\frac{
\displaystyle
\prod_{\substack{1\le i\le d\\ 1\le j\le d-1}}
\rpoch{\widetilde w_{j,d-1}-\widetilde w_{i,d}}{m_i}
}{
\displaystyle
\prod_{\substack{1\le i,j\le d\\ i\neq j}}
\rpoch{\widetilde w_{j,d}-\widetilde w_{i,d}+1}{m_i}
},
\end{equation}
where $\yd{\mu}$ is given by the removal rule
$\yd{\mu}=\yd{\varsigma-\mathbf{m}}\dot{=}\yd{\varsigma_1-m_1,\ldots,\varsigma_d-m_d}$.
\end{theorem}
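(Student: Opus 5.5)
We need to prove the formula for $\sum_{\wt{v}}|\mathrm{CGC}(\wt{v}\,\wt{d\cdots d}\,|\,\wt{w})|^2$ in terms of GT pattern entries $\widetilde w_{i,j}$. (We read $\widetilde w_{i,j}$ as the shifted GT entries $w_{i,j}-i$, which the main-text form $\Delta_{j,i}+(i-j)$ suggests.) The plan is to factor the $U(d)$ Clebsch--Gordan coefficient through the canonical chain $U(d)\supset U(d-1)$, using the Biedenharn--Louck / Vilenkin--Klimyk factorization. This expresses an $SU(d)$ CGC as a product of a reduced Wigner coefficient (isoscalar factor) at level $d$ and a CGC at level $d-1$. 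The crucial simplification is that the symmetric-irrep state $\wt{d\cdots d}$ has $U(d-1)$ content equal to the trivial irrep $\yd{0}$ of $U(d-1)$.

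\textbf{Step 1 (collapse at the top level).} At the top level, coupling $\yd{0}$ with the $U(d-1)$ pattern of $\wt{v}$ forces the $(d-1)$-row of $\wt{w}$ to equal the $(d-1)$-row of $\wt{v}$. Below that level the $U(d-1)$ CGC is trivial, equal to $\delta_{\wt{v}|_{d-1},\wt{w}|_{d-1}}$. Hence for fixed $\wt{w}$ the sum over $\wt{v}$ collapses to the single term whose lower rows coincide with those of $\wt{w}$. This gives
\begin{equation}
\res{f}{\yd{\mu}}{\yd{m}}{\yd{\varsigma}}_\mathrm{all}(\wt{w})=\mleft|\mleft\langle \yd{\mu};\yd{m}\,\mleft|\,\yd{\varsigma}\mright.\mright\rangle_{w_{\cdot,d-1};\,\yd{0};\,w_{\cdot,d-1}}\mright|^2,
\end{equation}
a single squared isoscalar factor, which depends only on the rows $d$ and $d-1$ of $\wt{w}$. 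This is consistent with the claimed formula, which involves only $\widetilde w_{\cdot,d}$ and $\widetilde w_{\cdot,d-1}$.

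\textbf{Step 2 (evaluate the isoscalar factor).} For coupling with a symmetric irrep $\yd{m}$ whose $U(d-1)$ pattern is $\yd{0}$, closed forms exist in Ref.~\cite{KV95} (Vilenkin--Klimyk, the multiplicity-free case $\yd{\varsigma}\subset\yd{\mu}\otimes\yd{m}$). In these, the squared factor is a ratio of products of shifted differences:
\begin{equation}
\binom{m}{\mathbf m}\prod_{i,j}\frac{(\widetilde w_{j,d-1}-\widetilde\mu_{i}\cdots)}{(\cdots)}.
\end{equation}
Rewriting factorials as rising Pochhammer symbols $\rpoch{x}{m_i}$ with $m_i=\varsigma_i-\mu_i$ should produce exactly the stated numerator over $1\le i\le d$, $1\le j\le d-1$, and denominator over $i\neq j$. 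If we prefer not to rely on the table, we would instead build the isometry by iterating the single-box ($m=1$) coupling $m$ times. The single-box isoscalar factors are the classical ones,
\begin{equation}
\frac{\prod_{j=1}^{d-1}(\widetilde w_{j,d-1}-\widetilde w_{i,d})}{\prod_{j\neq i}(\widetilde w_{j,d}-\widetilde w_{i,d})}
\end{equation}
up to shift conventions. We would then symmetrize the output using the projection onto $\yd{m}$. The sum over orderings of the removed boxes, with each ordering's product telescoping into Pochhammer symbols, yields the multinomial $\binom{m}{\mathbf m}$. Normalization can be cross-checked with the $F$-symbol identity of \cref{rem:F_symbols_isometry}, and against the special cases \cref{equ:main_part_proof_outline_f_all_symmetric_irrep,equ:main_part_proof_outline_f_all_dual_symmetric_irrep}.

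\textbf{Main obstacle.} The hardest step will be this iterative route: showing that the sum over sequences of single-box removals, each weighted by products of $m=1$ factors, collapses to the product of Pochhammer symbols times $\binom{m}{\mathbf m}$. We would first try induction on $m$, writing $\yd{m}\subset\yd{m-1}\otimes\yd{1}$ and using the recoupling of \cref{lem:one_site_channel_decomposition} in reverse. The resulting recursion would then be checked via a partial-fraction (Lagrange interpolation) identity in the variables $\widetilde w_{i,d}$.
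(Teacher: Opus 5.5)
Your main route is the paper's: the $U(d)\supset U(d-1)$ factorization collapses the sum over $\wt{v}$ because $\wt{d\cdots d}$ has trivial $U(d-1)$ content, and the remaining squared isoscalar factor is taken from the Vilenkin--Klimyk closed form for coupling with a symmetric irrep and rewritten in Pochhammer symbols, with the multinomial coming from the diagonal factorial terms; the iterative single-box route is not needed. There are two cautions. First, as the paper points out, the printed Vilenkin--Klimyk formula is inconsistent (the last factor of Eq.~(18.2.6)(9) and Eq.~(18.2.8)(6)), so the factor has to be rederived from the first equality of (18.2.6)(9) together with (18.2.8)(5) rather than read off. Second, you should treat the case $w_{\ell,d}-w_{\ell,d-1}<m_\ell$ for some $\ell\le d-1$, where no admissible $\wt{v}$ exists; the formula reproduces this because then $\rpoch{\widetilde w_{\ell,d-1}-\widetilde w_{\ell,d}}{m_\ell}=0$.
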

\begin{proof}
We first recall that the CGCs are given by:
\begin{equation}
\label{equ:CGC_pouts}
\scriptsize
\begin{aligned}
&\mathrm{CGC}\mleft(\wt{v}\,\wt{d,\ldots,d}|\wt{v'}\mright)\\
=&\mathrm{CGC}\mleft(
\begin{array}{cc|c}
\{v_{i,d}\}_{i=1}^d & (m,0) & \{v_{i,d}^\prime\}_{i=1}^d \\
\{v_{i,d-1}\}_{i=1}^{d-1} & (0,0) & \{v_{i,d-1}\}_{i=1}^{d-1}
\end{array}
\mright)
\\=&
\mleft(
\dfrac{m\,!
\displaystyle
\prod_{1\le s\le k\le d-1}\!\bigl(\widetilde v_{s,d}^{\prime}-\widetilde v_{k,d-1}\bigr)!\;
\prod_{1\le s<k\le d}\!\bigl(\widetilde v_{s,d-1}-\widetilde v_{k,d}-1\bigr)!\;
\prod_{1\le s\le k\le d}\!\bigl(\widetilde v_{s,d}^{\prime}-\widetilde v_{k,d}^{\prime}\bigr)!\;
\prod_{1\le s<k\le d}\!\bigl(\widetilde v_{s,d}-\widetilde v_{k,d}^{\prime}-1\bigr)!
}{
\displaystyle
\prod_{1\le s<k\le d}\!\bigl(\widetilde v_{s,d-1}-\widetilde v_{k,d}^{\prime}-1\bigr)!\;
\prod_{1\le s\le k\le d-1}\!\bigl(\widetilde v_{s,d}-\widetilde v_{k,d-1}\bigr)!\;
\prod_{1\le s<k\le d}\!\bigl(\widetilde v_{s,d}^{\prime}-\widetilde v_{k,d}^{\prime}-1\bigr)!\;
\prod_{1\le s\le k\le d}\!\bigl(\widetilde v_{s,d}^{\prime}-\widetilde v_{k,d}\bigr)!
}
\mright)^{\frac{1}{2}},
\end{aligned}\normalsize
\end{equation}
where
\(
m=\sum_{i=1}^d v_{i,d}^{\prime}-\sum_{i=1}^d v_{i,d},\) \(
\widetilde v_{i,b}=v_{i,b}-i,\) and \(
\widetilde v_{i,b}^{\prime}=v_{i,b}^{\prime}-i.\)
\begin{equation}
\scriptsize
\begin{aligned}
&\mathrm{CGC}\mleft(
\begin{array}{cc|c}
\{w_{i,d}\}_{i=1}^d & (0,-m) & \{w_{i,d}^\prime\}_{i=1}^d \\
\{w_{i,d-1}\}_{i=1}^{d-1} & (0,0) & \{w_{i,d-1}\}_{i=1}^{d-1}
\end{array}
\mright)
\\=&
\mleft(
\dfrac{m\,!\displaystyle
\prod_{1\le s<k\le d}\!\bigl(\widetilde w_{s,d-1}-\widetilde w_{k,d}^{\prime}-1\bigr)!\;
\prod_{1\le s\le k\le d-1}\!\bigl(\widetilde w_{s,d}-\widetilde w_{k,d-1}\bigr)!\;
\prod_{1\le s\le k\le d}\!\bigl(\widetilde w_{s,d}^{\prime}-\widetilde w_{k,d}^{\prime}\bigr)!\;
\prod_{1\le s<k\le d}\!\bigl(\widetilde w_{s,d}^{\prime}-\widetilde w_{k,d}-1\bigr)!
}
{\displaystyle
\prod_{1\le s\le k\le d-1}\!\bigl(\widetilde w_{s,d}^{\prime}-\widetilde w_{k,d-1}\bigr)!\;
\prod_{1\le s<k\le d}\!\bigl(\widetilde w_{s,d-1}-\widetilde w_{k,d}-1\bigr)!\;
\prod_{1\le s < k\le d}\!\bigl(\widetilde w_{s,d}^{\prime}-\widetilde w_{k,d}^{\prime}-1\bigr)!\;
\prod_{1\le s\le k\le d}\!\bigl(\widetilde w_{s,d}-\widetilde w_{k,d}^{\prime}\bigr)!
}
\mright)^{\frac{1}{2}},
\end{aligned}\normalsize
\end{equation}
where
\(
m=\sum_{k=1}^d w_{k,d}-\sum_{k=1}^d w_{k,d}^{\prime}.\)
We note that the formula in Ref.~\cite{KV95S} contains several inconsistencies, in particular in the last factor of Eq.(18.2.6)(9) and in Eq.(18.2.8)(6).
For clarity, we therefore present below the corrected (dual) $G_1$ factor governing the CG rules, derived directly from Ref.~\cite{KV95S} by starting from the right-hand side of the first equality in Eq.(18.2.6)(9) together with Eq.(18.2.8)(5).

To apply these formulae to calculate the fidelity, we will focus on the first case starting from here. Consider the box-addition rule  $\{v_{i,d}\}_{i=1}^d=\{v_{i,d}^\prime\}_{i=1}^d+\mathbf{m}$ where $\mathbf{m}\dot{=}(m_1,\ldots,m_d)$.
\cref{equ:CGC_pouts} squared reduces to
\begin{equation}
  \label{equ:parametrized_CGC}
\begin{aligned}
\mleft|\mathrm{CGC}\mleft(\wt{v},\wt{d\cdots d}|\wt{v'}\mright)\mright|^2=&
\frac{m!
\displaystyle
\prod_{1\le i\le j\le d-1}
\rpoch{\widetilde v_{i,d}-\widetilde v_{j,d-1}+1}{m_i}
\;
\prod_{1\le i<j\le d}
\fpoch{\widetilde v_{i,d-1}-\widetilde v_{j,d}-1}{m_j}
}{
\displaystyle
\prod_{1\le i\le j\le d}
\fpoch{\widetilde v'_{i,d}-\widetilde v_{j,d}}{m_j}
\;
\prod_{1\le i<j\le d}
\rpoch{\widetilde v_{i,d}-\widetilde v'_{j,d}}{m_i}
}\\
=&
\binom{m}{\mathbf m}\;
\frac{
\displaystyle
\prod_{1\le i\le j\le d-1}
\fpoch{\widetilde v_{j,d-1}-\widetilde v_{i,d}-1}{m_i}
\;\;
\prod_{1\le j< i\le d}
\fpoch{\widetilde v_{j d-1}-\widetilde v_{i,d}-1}{m_i}
}{
\displaystyle
\prod_{1\le i< j\le d}
\fpoch{\widetilde v'_{j,d}-\widetilde v_{i,d}}{m_i}
\;\;
\prod_{1\le j< i\le d}
\fpoch{\widetilde v'_{j,d}-\widetilde v_{i,d}}{m_i}
}
\\[8pt]
=&
\binom{m}{\mathbf m}\;
\frac{
\displaystyle
\prod_{\substack{1\le i\le d\\ 1\le j\le d-1}}
\fpoch{\widetilde v_{j,d-1}-\widetilde v_{i,d}-1}{m_i}
}{
\displaystyle
\prod_{\substack{1\le i,j\le d\\ i\neq j}}
\fpoch{\widetilde v'_{j,d}-\widetilde v_{i,d}}{m_i}
},
\\[10pt]
\end{aligned}
\end{equation}
where $\binom{m}{\mathbf m}$ denotes the multinomial coefficient.
With the totally symmetric representation $\yd{m}$ on the output, for each $\wt{w}$, the CGCs are either all zero, or there exists a single GT basis vector $\wt{v}$ that contributes. Therefore, we can reparameterize the CGC in terms of $\wt{w}$ by a change of variables. First consider those $\wt{w}$ that have a corresponding $\wt{v}$. In this case, only one $\wt{v}$ contributes to the sum, yielding:
\begin{equation}
\label{equ:general_CGC_parametrized}
\begin{aligned}
\mleft|\mathrm{CGC}\mleft(\wt{v},\wt{d\cdots d}|\wt{w}\mright)\mright|^2=&\binom{m}{\mathbf m}\;
\frac{
\displaystyle
\prod_{\substack{1\le i\le d\\ 1\le j\le d-1}}
\rpoch{\widetilde w_{j,d-1}-\widetilde w_{i,d}}{m_i}
}{
\displaystyle
\prod_{\substack{1\le i,j\le d\\ i\neq j}}
\rpoch{\widetilde w_{j,d}-\widetilde w_{i,d}+1}{m_i}
}.
\end{aligned}
\end{equation}
If a given $\wt{w}$ does not correspond to any valid $\wt{v}$, then the removal is excessive in the sense that $w_{\ell,d-1}-w_{\ell,d}<m_\ell$ for some $\ell$.
In this case, the right-hand side of~\cref{equ:general_CGC_parametrized} evaluates to $0$. This agrees with the utility component as no $\wt{v}$ contributes to the sum, therefore completing the proof.
\end{proof}

\begin{corollary}
\label{cor:utility_component_single_row}
    When the removal is restricted to a single row, i.e. $\mathbf{m}=\sum_\ell m_\ell \boldsymbol{e}_\ell$ with $\boldsymbol{e}_\ell$ being the unit vector, the utility component simplifies to:
    \begin{equation}
        \res{f}{\yd{\mu}}{\yd{m}}{\yd{\varsigma}}_\mathrm{all}\mleft(\wt{w}\mright)=
\frac{
\displaystyle
\prod_{1\le j\le d-1}
\rpoch{\widetilde w_{j,d-1}-\widetilde w_{\ell,d}}{m_\ell}
}{
\displaystyle
\prod_{\substack{1\le j\le d\\ j\neq \ell}}
\rpoch{\widetilde w_{j,d}-\widetilde w_{\ell,d}+1}{m_\ell}
},
    \end{equation}
\end{corollary}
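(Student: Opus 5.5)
The plan is a direct specialization of \cref{thm:utility_component} to the removal vector $\mathbf m = m_\ell\boldsymbol e_\ell$ with $m_\ell=m$, so that $m_i=0$ for every $i\neq\ell$. No new representation-theoretic input should be needed. The corollary is pure bookkeeping of which Pochhammer factors survive.

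First, the multinomial coefficient collapses: $\binom{m}{\mathbf m}=m!/(m_\ell!\prod_{i\neq\ell}0!)=1$. Second, in both the numerator and the denominator of \cref{thm:utility_component}, every factor is a rising Pochhammer symbol of length $m_i$ indexed by the removal row $i$. Using the convention $\rpoch{x}{0}=1$, all factors with $i\neq\ell$ are trivially $1$, for any value of the argument.

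What remains in the numerator is $\prod_{1\le j\le d-1}\rpoch{\widetilde w_{j,d-1}-\widetilde w_{\ell,d}}{m_\ell}$. In the denominator only the factors with $i=\ell$ and $j\neq\ell$ survive, namely $\prod_{j\neq \ell}\rpoch{\widetilde w_{j,d}-\widetilde w_{\ell,d}+1}{m_\ell}$. This is exactly the claimed expression.

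The only point needing care is the degenerate situation where $\wt w$ admits no compatible $\wt v$. This happens when $w_{\ell,d-1}-w_{\ell,d}<m_\ell$, and the component should then vanish. The proof of \cref{thm:utility_component} already shows the right-hand side evaluates to $0$ in this case, and that argument passes unchanged to the specialization. Concretely, the $j=\ell$ numerator factor $\rpoch{\widetilde w_{\ell,d-1}-\widetilde w_{\ell,d}}{m_\ell}$ is a rising product starting at a nonpositive integer (for $\ell\le d-1$) and reaching zero. For $\ell=d$, the interlacing condition forces the analogous vanishing through the $j=d-1$ factor.

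I should also check that no denominator factor vanishes, since that would spoil the cancellation argument. Each denominator argument $\widetilde w_{j,d}-\widetilde w_{\ell,d}+1=\varsigma_j-\varsigma_\ell+\ell-j+1$ is positive for $j<\ell$. For $j>\ell$ it is nonpositive, but the rising product of length $m_\ell$ still avoids zero whenever the removal is admissible ($m_\ell\le\Delta_{\ell,\ell+1}$-type conditions).

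This nonvanishing check is the only step I expect to require attention. It is handled by the same admissibility condition already assumed in the theorem.
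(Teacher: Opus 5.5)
Your proposal is correct and matches the paper, which gives no separate argument and treats this corollary as an immediate specialization of \cref{thm:utility_component}: with $m_i=0$ for $i\neq\ell$ the multinomial coefficient equals $1$ and every Pochhammer factor of length zero equals $1$, which leaves exactly the stated quotient.

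The degenerate-case and nonvanishing discussion is unnecessary, because the identity is inherited verbatim from the theorem, and it contains two small slips. First, the vanishing condition should read $w_{\ell,d}-w_{\ell,d-1}<m_\ell$; by interlacing, the version you wrote holds trivially. Second, for $\ell=d$ there is no degenerate case at all: $\mu_d=\varsigma_d-m_d\le w_{d-1,d-1}$ holds automatically, and the $j=d-1$ factor $\rpoch{w_{d-1,d-1}-\varsigma_d+1}{m_d}$ is strictly positive, so nothing vanishes.
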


\section{Optimality of overhang removal}
\label{sec:optimality_supp}

\subsection{Properties of the $F$ symbols}
\label{subsec:props_F_symbols}

To analyze the optimality of the one-site case better, we need a deeper understanding of the $F$ symbols.

\begin{lemma}[Ordering of $F$-symbols]
    \label{lem:ordering_of_F_symbols}

    Let \(\yd{\lambda}\) be a valid environment irrep for the input sector \(\yd{\varsigma}\), and fix a target index \(1 \leq k \leq d\). Consider the corresponding \(F\)-symbol $\resSix{F}{\yd{\lambda}}{\yd{m-1}}{\yd{1}}{\yd{\varsigma-\mathbf{e}_k}}{\yd{m}}{\yd{\varsigma}}.$
    We compare it with the $F$-symbol obtained from a one-step deformation of the environment YD,
    \begin{equation}
        \yd{\lambda^{\prime}}=\yd{\lambda-\mathbf{e}_s+\mathbf{e}_t},
    \end{equation}
    where \(1\le s,t\le d\) and \(s\neq t\). To ensure that \(\yd{\lambda^{\prime}}\) is again a well-defined environment YD compatible with \(\yd{\varsigma}\), we impose
    \begin{equation}
    \label{equ:s_t_conditions}
    \begin{aligned}
        &\lambda_s > \varsigma_{s+1} \quad \text{if} \quad s\le d-1 \, ,\\
        &\lambda_t < \varsigma_t \, .
    \end{aligned}
    \end{equation}
    Denote $F \coloneqq \resSix{F}{\yd{\lambda}}{\yd{m-1}}{\yd{1}}{\yd{\varsigma-\mathbf{e}_k}}{\yd{m}}{\yd{\varsigma}}$ and $F' \coloneqq \resSix{F}{\yd{\lambda^{\prime}}}{\yd{m-1}}{\yd{1}}{\yd{\varsigma-\mathbf{e}_k}}{\yd{m}}{\yd{\varsigma}}$.
    The possibilities for \(s\neq t\) are exhausted by
    \begin{equation}
    \begin{aligned}
    & s \ge k > t,\qquad
    k \le s < t,\qquad
    s < t < k,\\
    & s < k \le t,\qquad
    k > s > t,\qquad
    s > t \ge k.
    \end{aligned}
    \end{equation}
    Among these, one has
    \begin{equation}
    \begin{aligned}
        F' \ge F \quad \text{if } s \ge k > t,\ \ k \le s < t,\ \ \text{or } s < t < k, \\
        F' \le F \quad \text{if } s < k \le t,\ \ k > s > t,\ \ \text{or } s > t \ge k.
    \end{aligned}
    \end{equation}
    Moreover, the inequalities are strict, apart from the case where
    \begin{equation}
    \begin{aligned}
        F = F' = 0 \, .
    \end{aligned}
    \end{equation}
\end{lemma}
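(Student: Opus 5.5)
We work directly from the closed form of \cref{lem:one_site_channel_decomposition}. With \(i=k\), the squared \(F\)-symbol is
\begin{equation}
F^2=\frac{1}{m}\,\frac{\prod_{j=1}^{d}(\varsigma_k-\lambda_j+j-k)}{\prod_{j\neq k}(\varsigma_k-\varsigma_j+j-k)} .
\end{equation}
The denominator does not depend on the environment diagram. The deformation \(\yd{\lambda}\to\yd{\lambda-\mathbf{e}_s+\mathbf{e}_t}\) changes exactly two factors of the numerator. Write \(a_j\coloneqq\varsigma_k-\lambda_j+j-k\). The factor \(a_s\) becomes \(a_s+1\), the factor \(a_t\) becomes \(a_t-1\), and all other factors are unchanged. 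Whenever \(F\neq0\), this gives
\begin{equation}
\frac{F'^2}{F^2}=\frac{(a_s+1)(a_t-1)}{a_s a_t}=1+\frac{a_t-a_s-1}{a_s a_t}.
\end{equation}
The comparison therefore reduces to sign information about \(a_s\), \(a_t\), and \(a_t-a_s-1\).

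The first step is to fix the signs of the \(a_j\). Since \(\yd{\lambda}\) is obtained from \(\yd{\varsigma}\) by LR removal of a horizontal strip, we have \(\varsigma_{j+1}\le\lambda_j\le\varsigma_j\). For \(j<k\) this gives \(\lambda_j\ge\varsigma_{j+1}\ge\varsigma_k\), so \(a_j\le j-k<0\). For \(j>k\) it gives \(\lambda_j\le\varsigma_j\le\varsigma_k\), so \(a_j\ge j-k>0\). For \(j=k\) we get \(a_k=\varsigma_k-\lambda_k\ge0\), and \(a_k=0\) forces \(F=0\).

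So \(F\neq0\) requires \(a_k>0\). Because the \(a_j\) for \(j<k\) are all negative, the numerator then has sign \((-1)^{k-1}\). The denominator has the same sign, since \(\varsigma_k-\varsigma_j+j-k\) is negative exactly for \(j<k\). Positivity of \(F^2\) is therefore consistent.

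We also need the ordering of the \(a_j\). Since \(\lambda_j\ge\lambda_{j+1}\), we get \(a_{j+1}-a_j=\lambda_j-\lambda_{j+1}+1\ge1\). Hence \(a_t-a_s-1\ge t-s-1\ge0\) when \(s<t\), and \(a_t-a_s-1<0\) when \(s>t\).

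The six cases then follow from the signs of \(a_s a_t\) and of \(a_t-a_s-1\):
\begin{itemize}
\item \(s\ge k>t\): here \(a_s\ge0>a_t\), so \(a_s a_t\le0\), and \(a_t-a_s-1<0\). The ratio is at least \(1\).
\item \(k\le s<t\): here \(a_s a_t\ge0\) and \(a_t-a_s-1\ge0\), so the ratio is at least \(1\).
\item \(s<t<k\): both \(a_s\) and \(a_t\) are negative, so \(a_s a_t>0\), and \(a_t-a_s-1\ge0\). The ratio is at least \(1\).
\item The three mirrored cases flip exactly one of these two signs, so there \(F'\le F\).
\end{itemize}

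Boundary factors need care. When \(a_s=0\) (only possible if \(s=k\)), we have \(F=0\), and \(F'\ge F\) holds trivially. When \(a_t=1\) with \(t=k\), the deformation gives \(F'=0\), which is consistent with the claim \(F'\le F\) in the relevant cases. The conditions \eqref{equ:s_t_conditions} ensure that \(\yd{\lambda'}\) still satisfies interlacing, so that its \(a'_j\) obey the same sign pattern.

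Strictness is the remaining step, and I expect it to be the main obstacle. When \(s<t\), equality in \(a_t-a_s-1\ge0\) occurs exactly when \(t=s+1\) and \(\lambda_s=\lambda_{s+1}\). In that configuration the deformed diagram would have \(\lambda'_s<\lambda'_{s+1}\), which is invalid. I would therefore check that \eqref{equ:s_t_conditions}, together with validity of \(\yd{\lambda'}\), excludes this configuration. Once it is excluded, the ratio differs from \(1\) whenever \(F,F'\neq0\).

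The remaining degenerate situations are those with \(F=0\) but \(F'\neq0\), or the reverse. These need a short case check against the strict sign pattern. I would carry out that check by treating \(a_k=0\) and \(a'_k=0\) separately.
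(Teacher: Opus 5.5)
Your proposal is correct and follows essentially the same route as the paper. Both arguments write $F'^2/F^2=(1+1/a_s)(1-1/a_t)$ from the closed-form $F$-symbol, decide the six cases from the signs of $a_s a_t$ and $a_t-a_s-1$, and treat the configuration $a_k=0$ (i.e.\ $\lambda_k=\varsigma_k$) separately. The checks you defer are immediate: for instance, the equality configuration $t=s+1$, $\lambda_s=\lambda_{s+1}$ is ruled out directly by $\lambda_s>\varsigma_{s+1}>\lambda_{s+1}$, so the argument is complete.
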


\begin{proof}
    The proof follows from the explicit formula for the $F$-symbol (see~\cref{lem:one_site_channel_decomposition}).
    First, note that the numerator terms $\{\varsigma_j-\lambda_k+k-j\}_{j=1}^d$ contain $0$ if and only if $\varsigma_k=\lambda_k$. This is due to the fact that for $\varsigma_k\neq\lambda_k$, we have $\varsigma_j>\lambda_k$ for $j\leq k$ and $\varsigma_j\leq \lambda_k$ for $j>k$, so the numerator never evaluates to $0$.

    We prove the relations in two cases.
    (I) Consider the case $\varsigma_k\neq\lambda_k$. We define $A=\varsigma_k-\lambda_s+s-k$ and $B=\varsigma_k-\lambda_t+t-k$. Since $\{\varsigma_j-\lambda_k+k-j\}_{j=1}^d$ contains no zero, both $A$ and $B$ are nonzero. We further write
    \begin{equation}\begin{aligned}
        \label{equ:proof_optimality_of_F_symbol_definition_a}
        a_{-s}^{+t} \coloneqq \frac{F^{\prime2}}{F^2} = \mleft(1 + \frac{1}{A}\mright)\mleft(1 - \frac{1}{B}\mright) \, .
    \end{aligned}\end{equation}
We go on to consider two subcases, (i) $AB>0$ and (ii) $AB<0$.\newline

\noindent(i) Consider the case $AB>0$, which corresponds to $s,t\geq k$ or $s,t<k$.
It follows that
    \begin{equation}\begin{aligned}
        \label{equ:proof_optimality_of_F_symbol_inequality_product}
        a_{-s}^{+t} =\mleft(1+\frac{1}{A}\mright)\mleft(1-\frac{1}{B}\mright) \gtrless  1  \;\Leftrightarrow\;  \frac{1}{A} \gtrless  \frac{1}{B} + \frac{1}{AB} \;\Leftrightarrow\;  B \gtrless A+1.
    \end{aligned}\end{equation}
    For $s < t$, since $\yd{\lambda}$ and $\yd{\varsigma}$ are valid YDs, we have
    \begin{equation}\begin{aligned}
        \label{equ:proof_optimality_of_F_symbol_inequality_first_case_geq}
         B = \varsigma_{k} - \lambda_{t} + t - k  \ge \varsigma_{k} - \lambda_{s} + s - k + 1 = A +1,
    \end{aligned}\end{equation}
    which by~\cref{equ:proof_optimality_of_F_symbol_inequality_product} implies that $a_{-s}^{+t} \geq 1$. However, examining further, we have equality only in the case where
    \begin{equation}\begin{aligned}
        t = s+1 \, , \quad \lambda_{t} = \lambda_{s} \, .
    \end{aligned}\end{equation}
This implies that \(s\leq d-1\) and that \(\lambda_{s} = \varsigma_{s+1} = \lambda_{s+1}\), which is not compatible with the restrictions we impose. Altogether, this implies $a_{-s}^{+t} > 1$.
    For \(s>t\), we similarly get
    \begin{equation}\begin{aligned}
         B = \varsigma_{k} - \lambda_{t} + t - k  < \varsigma_{k} - \lambda_{s} + s - k + 1 = A +1,
    \end{aligned}\end{equation}
Thus \(a_{-s}^{+t} < 1\).

    \noindent(ii) Consider the case $AB<0$, which occurs when $s<k \leq t$ or $s \geq k>t$. In this case the inequality signs in~\cref{equ:proof_optimality_of_F_symbol_inequality_product} are flipped:
    \begin{equation}\begin{aligned}
        \label{equ:proof_optimality_of_F_symbol_inequality_product_flipped}
        a_{-s}^{+t} =\mleft(1+\frac{1}{A}\mright)\mleft(1-\frac{1}{B}\mright) \gtrless  1  \;\Leftrightarrow\;  \frac{1}{A} \gtrless  \frac{1}{B} + \frac{1}{AB} \;\Leftrightarrow\;  A+1 \gtrless B.
    \end{aligned}\end{equation}
    Therefore, for $s > t$, we have $a_{-s}^{+t} > 1$; for $s < t$, we have $a_{-s}^{+t} < 1$. The strict inequality for the second case is due to the fact that \(s<k\leq t\) and therefore also \(\lambda_{s}>\varsigma_{k}> \lambda_{k}\).

    (II) We then consider the case where \(\yd{\lambda}\) has no removal from the $k$-th row, i.e., $\varsigma_k=\lambda_k$. Condition~\cref{equ:s_t_conditions} then implies $t\neq k$. For the cases where $k\neq s, t$, we have $F^\prime=F=0$. For the case where $k=s$, we always have $F^\prime>F$.
\end{proof}

\begin{lemma}[Majorization of $F$-symbols]
    \label{lem:optimality_of_F_symbol}
    Let \(\yd{\lambda}\) be a valid environment irrep for the input sector \(\yd{\varsigma}\), that is $\res{c}{\yd{m}}{\yd{\lambda}}{\yd{\varsigma}} \geq 1$, and fix a target index \(1 \leq k \leq d\). Let further $\Delta_{k,k+1} \geq 1$, and let $\yd{\mu}$ be defined by the overhang removal rule of~\cref{def:overhang_removal_rule}. Then we have for $\yd{\lambda}\neq\yd{\mu}$:
    \begin{equation}\begin{aligned}
        \label{equ:lem_optimality_of_F_symbol}
        \mleft\{\resSix{F}{\yd{\lambda}}{\yd{m-1}}{\yd{1}}{\yd{\varsigma-\mathbf{e}_i}}{\yd{m}}{\yd{\varsigma}}^2\mright\}_{i=k}^{d} \prec \mleft\{ \resSix{F}{\yd{\mu}}{\yd{m-1}}{\yd{1}}{\yd{\varsigma-\mathbf{e}_i}}{\yd{m}}{\yd{\varsigma}}^2\mright\}_{i=k}^{d} \, ,
    \end{aligned}\end{equation}
    Moreover, at the index \(i=k\), the majorization relation holds with strict inequality.
\end{lemma}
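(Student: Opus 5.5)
The plan is to prove a stronger, ordered statement and deduce the majorization from it. Write $x_i(\lambda)\coloneqq \resSix{F}{\yd{\lambda}}{\yd{m-1}}{\yd{1}}{\yd{\varsigma-\mathbf{e}_i}}{\yd{m}}{\yd{\varsigma}}^2$ for $k\le i\le d$. I will establish three facts:
\begin{enumerate}[label=(\roman*)]
\item the prefix sums satisfy $\sum_{i=k}^{j}x_i(\lambda)\le\sum_{i=k}^{j}x_i(\mu)$ for every $k\le j\le d$;
\item this inequality is strict at $j=k$;
\item the vector $(x_i(\mu))_{i=k}^d$ is nonincreasing in $i$.
\end{enumerate}
Given (iii), the $\ell$ largest entries of the $\mu$-vector are exactly $x_k(\mu),\ldots,x_{k+\ell-1}(\mu)$. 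Any $\ell$ entries of the $\lambda$-vector can be compared with these, using (i) together with the sum constraint below, which yields the weak majorization in \cref{equ:lem_optimality_of_F_symbol}, with strictness at $i=k$ from (ii).

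\emph{Step 1: explicit form of the $\mu$-vector.} Using the closed form in \cref{lem:one_site_channel_decomposition}, the numerator $\prod_j(\varsigma_i-\mu_j+j-i)$ vanishes whenever $\mu_i=\varsigma_i$. This happens for all $i<k$ and all $i>i^\ast$. Hence $x_i(\mu)$ is supported on $k\le i\le i^\ast$. By \cref{rem:F_symbols_isometry}, its sum over $i=k,\ldots,d$ equals $1$, whereas the corresponding sum for $\lambda$ is at most $1$.

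For $k<i\le i^\ast$, the overhang structure $\mu_{i-1}=\varsigma_i$ makes the ratios $x_{i}(\mu)/x_{i-1}(\mu)$ telescope into explicit rational expressions in the row gaps $\Delta_{k,i}$ and $m$. I will check directly that these ratios are at most $1$, which gives (iii). This is a routine but index-heavy computation.

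\emph{Step 2: deformation path from $\lambda$ to $\mu$.} Every admissible $\lambda\neq\mu$ removes $m$ boxes in a pattern that differs from the overhang pattern. I will connect $\lambda$ to $\mu$ by single moves $\lambda\mapsto\lambda-\mathbf{e}_s+\mathbf{e}_t$ satisfying \cref{equ:s_t_conditions}. Each move either returns a box to a row $t$ outside $\{k,\ldots,i^\ast\}$, or to a row below the current removal frontier, while removing one from an earlier overhang row $s$. Equivalently, each move shifts a removal upward toward row $k$.

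For each move I will apply \cref{lem:ordering_of_F_symbols} with the target index replaced by an arbitrary $i$; its proof never uses that the target is $k$. The case list then shows that the move weakly increases $x_i$ for indices $i$ that are small relative to $s,t$, and weakly decreases $x_i$ for large $i$. Since the total $\sum_i x_i$ is preserved (both sides sum to $1$ over all $i$), a move of this type transfers mass from higher to lower indices. Therefore every prefix sum $\sum_{i=k}^{j}x_i$ is nondecreasing along the path, which gives (i).

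Strictness at $i=k$, claim (ii), follows from the strict part of \cref{lem:ordering_of_F_symbols}. At least one move along any path touches row $k$ with $F\neq0$, and case (II) of that proof covers $\lambda_k=\varsigma_k$, where the first move with $s=k$ strictly increases $x_k$.

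\emph{Main obstacle.} The hard step is Step 2. A single move changes several coordinates at once, and the case list in \cref{lem:ordering_of_F_symbols} gives signs but not magnitudes. I must verify that the sign pattern is indeed a threshold pattern (increase for $i\le$ some cutoff, decrease above it), so that prefix sums cannot drop.

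My first attempt will read the cutoff off the factor $a_{-s}^{+t}=(1+1/A)(1-1/B)$ as a function of the target index. The signs of $A$ and $B$ switch exactly at $i=\min(s,t)$ and at $i=\max(s,t)$, so the pattern is monotone in $i$ within each of the three ranges and can be stitched together. If this fails for some move type, I will instead choose the path so that every move has $t>s\ge k$. In that regime, case $k\le s<t$ together with $s>t\ge k$ reversed gives a clean single cutoff at $i=s$.
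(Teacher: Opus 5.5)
There is a genuine gap. It sits mainly in Step 2, and your step (iii) is false and must be dropped.

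\emph{Drop (iii).} In the paper, $\prec$ means exactly your fact (i): the prefix sums $\sum_{i=k}^{j}x_i$, taken in the fixed order $i=k,\ldots,d$, are dominated. This is what the later applications use, pairing it with ordered fidelities $C_k\ge\cdots\ge C_d$ by Abel summation. Both (iii) and the sorted reading of $\prec$ are false. Take $\varsigma=(12,11,10)$, $k=1$, $m=10$, so that $i^\ast=3$ and $\mu=(11,10,2)$. The closed form of \cref{lem:one_site_channel_decomposition} gives $(x_1,x_2,x_3)(\mu)=(\tfrac{9}{20},\tfrac14,\tfrac{3}{10})$, which is not monotone. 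For $\lambda=(12,11,0)$ it gives $(0,0,1)$, which is not majorized by the $\mu$-vector after sorting. Even granting (iii), prefix dominance would not yield sorted majorization: compare $(0,\tfrac12,\tfrac12)$ with $(\tfrac12,\tfrac{3}{10},\tfrac15)$.

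\emph{Step 2 fails as described.} Read \cref{lem:ordering_of_F_symbols} with target index $i$. A move $\lambda\mapsto\lambda-\mathbf{e}_s+\mathbf{e}_t$ changes $x_i$ with the following sign patterns:
\begin{itemize}
\item if $s<t$: $(+,-,+)$ on the windows $i\le s$, $s<i\le t$, $i>t$;
\item if $s>t$: $(-,+,-)$ on $i\le t$, $t<i\le s$, $i>s$.
\end{itemize}
These alternate, so they cannot be stitched into a single cutoff. A threshold appears only when the outer windows carry zero mass.

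Your prefix sums also start at $i=k$, so conservation of $\sum_{i=1}^{d}x_i$ does not control $\sum_{i=k}^{j}x_i$ once entries with $i<k$ move. Your fallback, using only moves with $t>s\ge k$, cannot undo removals from rows above $k$; that requires $t<k$. A move with $t<k$ and $s$ an overhang row near $k$ lowers every $x_i$ with $i>s$, and in general changes the rows $<k$ with mixed signs. Neither termwise comparison nor conservation then gives anything. Note also that such moves shift removals downward, not ``upward toward row $k$''.

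\emph{The missing idea is the paper's choice of moves, in two phases.}
\begin{enumerate}
\item While some $l<k$ has $\lambda_l<\varsigma_l$, take $t=l$ and $s=d$. This move is always admissible, since \cref{equ:s_t_conditions} imposes nothing on $s=d$. Every $i\ge k$ then lies in the window $t<i\le s$, so each $x_i$ with $i\ge k$ increases termwise, and no conservation argument is needed.
\item Once $\lambda_i=\varsigma_i$ for all $i<k$, so that $x_i=0$ there, let $s\ge k$ be the first row with $\lambda_s>\varsigma_{s+1}$ and $t$ the last row with $\lambda_t<\varsigma_t$. If $s\ge t$, then $\lambda=\mu$. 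Otherwise $x_i=0$ for $i>t$ both before and after the move, so the pattern is a genuine cutoff at $s$. Since $\sum_{i=k}^{d}x_i=1$ in this phase, the prefix sums with $j>s$ are controlled by the decreasing tail $\sum_{i>j}x_i$.
\end{enumerate}
Strictness at $i=k$ then follows. Along this path $x_k$ never decreases. It increases strictly at the first move that either starts from $x_k>0$ or has $s=k$. If $\lambda_k=\varsigma_k$ and $k<d$, the first second-phase move has $s=k$, because $\Delta_{k,k+1}\ge1$. If $\lambda_k=\varsigma_k$ and $k=d$, the first move already has $s=d=k$.
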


\begin{proof}
    We argue that for every $\yd{\lambda} \neq \yd{\mu}$, there exist $s,t$ with corresponding $\yd{\lambda^{\prime}}$ as in~\cref{lem:ordering_of_F_symbols} such that for all $k\leq j \leq d$,
        \begin{equation}\begin{aligned}
            \sum_{i=k}^{j} \resSix{F}{\yd{\lambda}}{\yd{m-1}}{\yd{1}}{\yd{\varsigma-\mathbf{e}_i}}{\yd{m}}{\yd{\varsigma}}^2 \leq \sum_{i=k}^{j} \resSix{F}{\yd{\lambda^{\prime}}}{\yd{m-1}}{\yd{1}}{\yd{\varsigma-\mathbf{e}_i}}{\yd{m}}{\yd{\varsigma}}^2 \, .
        \end{aligned}\end{equation}
        We proceed by considering two cases.

        \noindent(I) Suppose that for some $l<k$ we have $\lambda_l<\varsigma_l$, meaning boxes are removed from rows above the target row $k$. Choosing $t=l$ and $s=d$, we obtain from~\cref{lem:ordering_of_F_symbols} that
        \begin{equation}\begin{aligned}
            \resSix{F}{\yd{\lambda^{\prime}}}{\yd{m-1}}{\yd{1}}{\yd{\varsigma-\mathbf{e}_i}}{\yd{m}}{\yd{\varsigma}} \geq \resSix{F}{\yd{\lambda}}{\yd{m-1}}{\yd{1}}{\yd{\varsigma-\mathbf{e}_i}}{\yd{m}}{\yd{\varsigma}}
        \end{aligned}\end{equation}
        for all $k\leq i \leq d$, with equality only if both sides vanish. In particular, for all $k\leq j \leq d$,
        \begin{equation}\begin{aligned}
            \label{equ:proof_optimality_of_F_symbol_inequality_first_case}
            \sum_{i=k}^{j} \resSix{F}{\yd{\lambda^{\prime}}}{\yd{m-1}}{\yd{1}}{\yd{\varsigma-\mathbf{e}_i}}{\yd{m}}{\yd{\varsigma}}^2 \geq \sum_{i=k}^{j} \resSix{F}{\yd{\lambda}}{\yd{m-1}}{\yd{1}}{\yd{\varsigma-\mathbf{e}_i}}{\yd{m}}{\yd{\varsigma}}^2 \, ,
        \end{aligned}\end{equation}
        with equality only when both sides are zero.

        \noindent(II) Now suppose that
        \begin{equation}\begin{aligned}
            \lambda_{i} = \varsigma_{i} \quad \text{for all} \quad i<k \, ,
        \end{aligned}\end{equation}
        which implies
        \begin{equation}\begin{aligned}
            \label{equ:proof_optimality_of_F_symbol_F_symbol_zero}
            \resSix{F}{\yd{\lambda}}{\yd{m-1}}{\yd{1}}{\yd{\varsigma-\mathbf{e}_i}}{\yd{m}}{\yd{\varsigma}} = 0 \quad \text{for all} \quad i<k \, .
        \end{aligned}\end{equation}
        Let $s\geq k$ be the first row where $\lambda_{s} > \varsigma_{s+1}$ (using the convention $\lambda_{d}>\varsigma_{d+1}$), and let $t \geq k$ be the last row such that $\lambda_{t} < \varsigma_{t}$. If $s\geq t$, then $\yd{\lambda} = \yd{\mu}$ and there is nothing to prove. We therefore assume $s<t$.

        For $i>t$ we have $\lambda_{i} = \varsigma_{i}$, so
        \begin{equation}\begin{aligned}
            \label{equ:proof_optimality_of_F_symbol_F_symbol_zero_2}
            \resSix{F}{\yd{\lambda^{\prime}}}{\yd{m-1}}{\yd{1}}{\yd{\varsigma-\mathbf{e}_i}}{\yd{m}}{\yd{\varsigma}} = \resSix{F}{\yd{\lambda}}{\yd{m-1}}{\yd{1}}{\yd{\varsigma-\mathbf{e}_i}}{\yd{m}}{\yd{\varsigma}} = 0 \, .
        \end{aligned}\end{equation}
        For $s < i \leq t$,~\cref{lem:ordering_of_F_symbols} gives
        \begin{equation}\begin{aligned}
            \resSix{F}{\yd{\lambda^{\prime}}}{\yd{m-1}}{\yd{1}}{\yd{\varsigma-\mathbf{e}_i}}{\yd{m}}{\yd{\varsigma}} \leq \resSix{F}{\yd{\lambda}}{\yd{m-1}}{\yd{1}}{\yd{\varsigma-\mathbf{e}_i}}{\yd{m}}{\yd{\varsigma}} \, ,
        \end{aligned}\end{equation}
        with equality only if both sides vanish. Using~\cref{equ:proof_optimality_of_F_symbol_F_symbol_zero} and~\cref{equ:proof_optimality_of_F_symbol_F_symbol_zero_2} together with the fact that the $F$-symbols sum to $1$, we find for $s < j \leq d$ that
        \begin{equation}\begin{aligned}
            \label{equ:proof_optimality_of_F_symbol_inequality_second_case}
            \sum_{i=k}^{j} \resSix{F}{\yd{\lambda^{\prime}}}{\yd{m-1}}{\yd{1}}{\yd{\varsigma-\mathbf{e}_i}}{\yd{m}}{\yd{\varsigma}}^2 = 1 - \sum_{i=j+1}^{t} \resSix{F}{\yd{\lambda^{\prime}}}{\yd{m-1}}{\yd{1}}{\yd{\varsigma-\mathbf{e}_i}}{\yd{m}}{\yd{\varsigma}}^2 \\
            \geq 1 - \sum_{i=j+1}^{t} \resSix{F}{\yd{\lambda}}{\yd{m-1}}{\yd{1}}{\yd{\varsigma-\mathbf{e}_i}}{\yd{m}}{\yd{\varsigma}}^2 = \sum_{i=k}^{j} \resSix{F}{\yd{\lambda}}{\yd{m-1}}{\yd{1}}{\yd{\varsigma-\mathbf{e}_i}}{\yd{m}}{\yd{\varsigma}}^2 \, ,
        \end{aligned}\end{equation}
        with equality only if both sides equal $1$. For $j \leq s$, a symmetric argument yields
        \begin{equation}\begin{aligned}
            \label{equ:proof_optimality_of_F_symbol_inequality_third_case}
            \sum_{i=k}^{j} \resSix{F}{\yd{\lambda^{\prime}}}{\yd{m-1}}{\yd{1}}{\yd{\varsigma-\mathbf{e}_i}}{\yd{m}}{\yd{\varsigma}}^2 \geq \sum_{i=k}^{j} \resSix{F}{\yd{\lambda}}{\yd{m-1}}{\yd{1}}{\yd{\varsigma-\mathbf{e}_i}}{\yd{m}}{\yd{\varsigma}}^2 \, ,
        \end{aligned}\end{equation}
        with equality only if both sides vanish.

        Since the deformation rules are unidirectional, iterating from any $\yd{\lambda}$ eventually reaches the fixed point $\yd{\mu}$. This establishes that for all $\yd{\lambda}$ and for all $k\leq j \leq d$:
        \begin{equation}\begin{aligned}
            \label{equ:lem_optimality_of_F_symbol_partial_sums}
            \sum_{i=k}^{j} \resSix{F}{\yd{\lambda}}{\yd{m-1}}{\yd{1}}{\yd{\varsigma-\mathbf{e}_i}}{\yd{m}}{\yd{\varsigma}}^2 \leq \sum_{i=k}^{j} \resSix{F}{\yd{\mu}}{\yd{m-1}}{\yd{1}}{\yd{\varsigma-\mathbf{e}_i}}{\yd{m}}{\yd{\varsigma}}^2 .
        \end{aligned}\end{equation}

It remains to show that equality holds only for $\yd{\lambda} = \yd{\mu}$. Equality for all $j$ requires that each of the inequalities~\cref{equ:proof_optimality_of_F_symbol_inequality_first_case},~\cref{equ:proof_optimality_of_F_symbol_inequality_second_case}, and~\cref{equ:proof_optimality_of_F_symbol_inequality_third_case} is an equality, which forces both sides to be either $0$ or $1$ throughout the iterated deformations.

Consider the case $j=k$. Since $\Delta_{k,k+1}\geq 1$, we have
        \begin{equation}
            \resSix{F}{\yd{\mu}}{\yd{m-1}}{\yd{1}}{\yd{\varsigma-\mathbf{e}_k}}{\yd{m}}{\yd{\varsigma}} > 0 \, .
        \end{equation}
Therefore, the only way all partial sums can be equal is if they equal $1$, which forces $i^\ast = k$ and all removals to be concentrated at the target row for both $\yd{\lambda}$ and $\yd{\mu}$. This implies $\yd{\lambda} = \yd{\mu}$, completing the proof.
\end{proof}

\begin{remark}
    \label{rmk:F_symbol_row_removal}
    When $\yd{\mu}$ is given by the overhang removal rule, the $F$-symbol evaluates to
    \begin{equation}\begin{aligned}
        \resSix{F}{\yd{\mu}}{\yd{m-1}}{\yd{1}}{\yd{\varsigma-\mathbf{e}_i}}{\yd{m}}{\yd{\varsigma}}^2 = \mleft(\prod_{j=k+1, j \neq i}^{i^\ast} \frac{\Delta_{i,j}+(j-i)-1}{\Delta_{i,j}+(j-i)}\mright)\mleft(\frac{m - \Delta_{k,i} + (i^\ast-i)}{m}\mright)C_{i} \, ,
    \end{aligned}\end{equation}
    where
    \begin{equation}\begin{aligned}
        C_{i} \coloneqq \mleft\{\begin{aligned}
        &1, && i=k, \\
        &\dfrac{1}{\Delta_{k,i}+(i-k)}, && k<i \leq i^\ast, \\
        &0, && \text{otherwise}.
        \end{aligned}\mright.
    \end{aligned}\end{equation}
\end{remark}

We additionally investigate the asymptotics of the \(F\)-symbols. It follows that the limiting quantity for a normalized row vector \(\overline{\yd{\varsigma}}\dot{=}(\varsigma_{1}/n,\ldots,\varsigma_{d}/n)\) and removal rates \(\boldsymbol{R}\dot{=}(R_{1},\ldots,R_{d})\) with \(R=\sum_{i=1}^{d}R_{i}\) is approximately given by
\begin{equation}
A_i\mleft(\overline{\yd{\varsigma}},\boldsymbol{R}\mright) \coloneqq \dfrac{1}{R}\dfrac{\prod_{j=1}^{d}(\overline{\Delta}_{i,j} + R_{j})}{\prod_{j\neq i}\overline{\Delta}_{i,j}} \, .
\end{equation}
Here, we assume that \(0\leq R_{j} \leq \overline{\Delta}_{j,j+1}\) and \(0\leq R_{d}\). For the overhang removal rule, we get the rates \(\boldsymbol{R}_{\mu}\), which are given by
\begin{equation}\begin{aligned}
    i^\ast \coloneqq \min\{i: \overline{\Delta}_{k,i+1} \geq R\} \, ,
\end{aligned}\end{equation}
and
\begin{equation}
    (R_{\mu})_i \coloneqq
    \mleft\{\begin{aligned}
    &0, & 1 \leq i < k,\\[4pt]
    &\overline{\Delta}_{i,i+1}, & k \leq i < i^\ast,\\[4pt]
    &R-\overline{\Delta}_{k,i^\ast}, & i = i^\ast,\\[4pt]
    &0, & i^\ast < i \leq d.
    \end{aligned}\mright.
\end{equation}
In particular, one checks that $A_k\mleft(\overline{\yd{\varsigma}},\boldsymbol{R}_{\mu}\mright) = 1$.
The following lemma justifies the notion that \(A_i\mleft(\overline{\yd{\varsigma}},\boldsymbol{R}\mright)\) is the \(O(1)\) expansion of the \(F\)-symbols and tells us more about the optimality of the overhang removal rule.

\begin{lemma}[Asymptotic behavior of $F$-symbols]
    \label{lem:asymptotic_behavior_of_F_symbol}
    For all \(\overline{\yd{\varsigma}}\in \mathsf{R}_{\epsilon}\), $i\in\{1,\ldots,d\}$, and all environment irreps \(\yd{\lambda}\) we have
    \begin{equation}
        \label{equ:asymptotic_F_symbol_statement_bound}
        \left|\resSix{F}{\yd{\lambda}}{\yd{m-1}}{\yd{1}}{\yd{\varsigma-\mathbf{e}_i}}{\yd{m}}{\yd{\varsigma}}^2 - A_i\mleft(\overline{\yd{\varsigma}},\overline{\yd{\varsigma}}-\overline{\yd{\lambda}}\mright)\right| \leq \frac{4d^3}{n\epsilon} \, .
    \end{equation}
    Here, \(\overline{\yd{\varsigma}}\dot{=}(\varsigma_{1}/n,\ldots,\varsigma_{d}/n)\) denotes the normalized row-length vector and \(\yd{\varsigma}\vdash n\). In addition, let \(\boldsymbol{R},\boldsymbol{R}^\prime\) be valid removal rates for \(\overline{\yd{\varsigma}}\) with
    \begin{equation}
        \sum_{i=1}^{d}R_{i} = R = \sum_{i=1}^{d}R^\prime_{i} \, .
    \end{equation}
    Then we have
    \begin{equation}
        \left|A_i\mleft(\overline{\yd{\varsigma}},\boldsymbol{R}\mright) - A_i\mleft(\overline{\yd{\varsigma}},\boldsymbol{R}^\prime\mright)\right| \leq 2||\boldsymbol{R} - \boldsymbol{R}^\prime||_{1}\frac{R+\overline{\Delta}_{i,i+1}}{\epsilon R}  \, .
    \end{equation}
\end{lemma}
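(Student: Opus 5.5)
The plan is to compare the exact $F$-symbol formula of \cref{lem:one_site_channel_decomposition} with $A_i$ factor by factor, after dividing numerator and denominator by suitable powers of $n$. With $R_j\coloneqq(\varsigma_j-\lambda_j)/n$ we have $R=m/n$. Write $\varsigma_i-\lambda_j+(j-i)=n(\overline{\Delta}_{i,j}+R_j)+(j-i)$ and $\varsigma_i-\varsigma_j+(j-i)=n\overline{\Delta}_{i,j}+(j-i)$. Then
\begin{equation*}
F^2=\frac{1}{m}\,\frac{\prod_{j}\bigl(n(\overline{\Delta}_{i,j}+R_j)+(j-i)\bigr)}{\prod_{j\neq i}\bigl(n\overline{\Delta}_{i,j}+(j-i)\bigr)}
=\frac{1}{nR}\,\frac{\prod_{j}\bigl(n(\overline{\Delta}_{i,j}+R_j)+(j-i)\bigr)}{\prod_{j\neq i}\bigl(n\overline{\Delta}_{i,j}+(j-i)\bigr)} .
\end{equation*}
The $j=i$ numerator factor is $nR_i$. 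Cancelling one factor of $n$ against $1/(nR)$ leaves the ratio of products of $d-1$ factors of the form $n(\cdot)$. This becomes $A_i$ exactly once the shifts $(j-i)$ are dropped. So $F^2=A_i\prod_{j\neq i}(1+\delta_j)$, with
\begin{equation*}
1+\delta_j=\frac{1+\frac{j-i}{n(\overline{\Delta}_{i,j}+R_j)}}{1+\frac{j-i}{n\overline{\Delta}_{i,j}}} .
\end{equation*}

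The first step bounds each $\delta_j$. On $\mathsf{R}_\epsilon$ we have $|\overline{\Delta}_{i,j}|\ge\epsilon$ for $j\neq i$, and $|j-i|\le d$, so the denominator correction is at most $d/(n\epsilon)$. For the numerator we need $|\overline{\Delta}_{i,j}+R_j|$ bounded below. For $j>i$ both terms are nonnegative and $\overline{\Delta}_{i,j}\ge\epsilon$. For $j<i$, validity of $\yd{\lambda}$ gives $\lambda_j\ge\varsigma_{j+1}\ge\varsigma_i$. Hence $\overline{\Delta}_{i,j}+R_j=(\varsigma_i-\lambda_j)/n\le(\varsigma_i-\varsigma_{j+1})/n$. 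This is $\le-\epsilon$ if $j+1<i$. If $j+1=i$ it is only $\le0$, and may even vanish, in which case $F=A_i=0$ and there is nothing to prove.

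This boundary case is the main obstacle. When $\lambda_{i-1}$ is close to $\varsigma_i$ the relative error can blow up, so a purely multiplicative argument fails. I would instead use an additive bound there. Write the numerator factor $n(\overline{\Delta}_{i,i-1}+R_{i-1})-1$ as $x-1$ with $x\le0$. Pull this one factor out and compare $(x-1)\,B$ with $x\,B'$, where $B,B'$ are the remaining ratios. Since $F^2\le1$ and $A_i$ are controlled quantities, the difference is $B+x(B-B')$, and each term is $O(d/n\epsilon)$ times a bounded quantity. The last point uses $|A_i|$ bounded in terms of $d$ and $\epsilon$, which I would check via \cref{rem:F_symbols_isometry} in the limit.

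With all factors $|\delta_j|\lesssim 2d/(n\epsilon)$, expand $\prod(1+\delta_j)-1$. This gives $|F^2-A_i|\le|A_i|\cdot 2d\cdot d/(n\epsilon)\cdot e^{O(d^2/n\epsilon)}$. Bounding $|A_i|\le d$ (sum of the $A_i$ over $i$ is $1$ by the limiting isometry identity, and we argue the sign pattern) then yields the stated $4d^3/(n\epsilon)$, assuming $n\epsilon\ge d^2$. Otherwise the bound is trivial since both quantities are $O(1)$; I would check the constants only loosely.

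For the Lipschitz statement, write
\begin{equation*}
A_i(\boldsymbol{R})=\frac{R_i}{R}\prod_{j\neq i}\Bigl(1+\frac{R_j}{\overline{\Delta}_{i,j}}\Bigr).
\end{equation*}
Each factor is affine in $R_j$ with slope $1/\overline{\Delta}_{i,j}$, of magnitude at most $1/\epsilon$. Telescope, changing one coordinate at a time, and bound the partial products using validity. For $j>i$ the factor lies in $[1,2]$-type ranges relative to $\overline{\Delta}$. For $j<i$ the factor is in $[0,1]$ because $0\le R_j\le\overline{\Delta}_{j,j+1}\le|\overline{\Delta}_{i,j}|$. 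The product of the $j>i$ factors is bounded by $(R+\overline{\Delta}_{i,i+1})/\overline{\Delta}_{i,i+1}$-type expressions; combined with the prefactor $R_i/R$, this gives a total sensitivity $\le 2(R+\overline{\Delta}_{i,i+1})/(\epsilon R)$ per unit of $\|\boldsymbol{R}-\boldsymbol{R}'\|_1$.
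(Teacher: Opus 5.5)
Your factor-by-factor comparison works for every factor whose numerator $\overline{\Delta}_{i,j}+R_j$ stays at least $\epsilon$ away from $0$. It does not, however, control the one factor you correctly flag, $j=i-1$, and two claims you rely on there are false.

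First, the degenerate case is not trivial. If $\lambda_{i-1}=\varsigma_i$ then $A_i=0$, but the corresponding numerator factor of the $F$-symbol is $\varsigma_i-\lambda_{i-1}-1=-1$. For a valid $\yd{\lambda}$ the numerator $\prod_j(\varsigma_i-\lambda_j+j-i)$ vanishes only when $\lambda_i=\varsigma_i$. Already for $d=2$, $i=2$, $\yd{\lambda}=\yd{\varsigma_2,\varsigma_2-r}$ with $r\geq 1$, one gets $F^2=r/\bigl(m(\Delta_{1,2}+1)\bigr)>0=A_2$. So this case needs the estimate $F^2=O(1/(n\epsilon))$, which is exactly what your additive step has to supply.

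Second, that additive step needs $\frac{R_i}{R}\prod_{j\neq i,i-1}\frac{\overline{\Delta}_{i,j}+R_j}{\overline{\Delta}_{i,j}}$ (and its shifted analogue) to be $O(1)$. This cannot be read off from a bound on $|A_i|$, because $A_i$ is this quantity times the small, possibly zero, ratio $(\varsigma_i-\lambda_{i-1})/(\varsigma_i-\varsigma_{i-1})$. It also fails factorwise: for $j>i$ the factors $1+R_j/\overline{\Delta}_{i,j}$ do not lie in a bounded range such as $[1,2]$. For $i=d-1$, $R_d$ can be of order $R$ while $\overline{\Delta}_{d-1,d}$ is of order $\epsilon$; and $R_{i+1}\leq\overline{\Delta}_{i+1,i+2}$ can greatly exceed $\overline{\Delta}_{i,i+1}$. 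Only the product together with the prefactor $R_i/R$ is bounded. Your Lipschitz part rests on the same unproved estimate $\prod_{j>i}\bigl(1+R_j/\overline{\Delta}_{i,j}\bigr)\leq (R+\overline{\Delta}_{i,i+1})/\overline{\Delta}_{i,i+1}$.

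The missing ingredient is $0\leq A_i(\overline{\yd{\varsigma}},\boldsymbol{R})\leq 1$ for every valid rate vector. This follows from your sign pattern plus $\sum_i A_i=1$, and gives $A_i\leq 1$, not merely $\leq d$. The key is to apply it not to $\boldsymbol{R}$ itself but to auxiliary valid vectors:
\begin{itemize}
\item Setting $R_\ell=0$ for some $\ell\neq i$ gives $\frac{R_i}{R}\prod_{j\neq i,\ell}\frac{\overline{\Delta}_{i,j}+R_j}{\overline{\Delta}_{i,j}}\leq \frac{R-R_\ell}{R}\leq 1$.
\item Replacing $R_i$ by $\overline{\Delta}_{i,i+1}$ while zeroing $R_j$ for $j<i$ gives the bound on $\prod_{j>i}$ above.
\item Both bounds hold equally with $\overline{\yd{\varsigma}}$ replaced by the shifted vector $\overline{\yd{\varsigma}}-\boldsymbol{\delta}$, $\boldsymbol{\delta}=(1/n,\ldots,d/n)$.
\end{itemize}
With these, your additive treatment of the $(i-1)$-st factor and your coordinatewise telescoping in the second part can be completed. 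In the telescoping, the hybrid vectors are still valid and their sums exceed $R$ by at most $\frac12\|\boldsymbol{R}-\boldsymbol{R}'\|_1\leq R$, which the factor $2$ absorbs.

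The paper sidesteps the boundary case entirely. It notes that $F^2=A_i(\overline{\yd{\varsigma}}-\boldsymbol{\delta},\boldsymbol{R})$ exactly and interpolates along $\boldsymbol{s}(x)=\overline{\yd{\varsigma}}-x\boldsymbol{\delta}$. It then bounds $\partial_{s_\ell}A_i$ by two terms, each an $A_i$ at the current or an $R_\ell$-zeroed rate vector, divided by $|s_i-s_\ell|\geq\epsilon/2$. Since it never divides by a numerator factor $\overline{\Delta}_{i,j}+R_j$, nothing degenerates. For the second statement it interpolates linearly in $\boldsymbol{R}$, which keeps $\sum_j R_j=R$ fixed.
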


\begin{proof}
    (I) First, both functions are bounded by \(1\), so the bound becomes vacuous as long as \(4d^3>n\epsilon\), which in particular implies \(n>2d/\epsilon\). We have
    \begin{equation}
    \begin{aligned}
        \resSix{F}{\yd{\lambda}}{\yd{m-1}}{\yd{1}}{\yd{\varsigma-\mathbf{e}_i}}{\yd{m}}{\yd{\varsigma}}^2 = \dfrac{n}{m}\dfrac{\prod_{j=1}^{d}(\varsigma_{i}/n - \lambda_{j}/n + (j-i)/n)}{\prod_{j\neq i}(\varsigma_{i}/n - \varsigma_{j}/n + (j-i)/n)} = A_i\mleft(\overline{\yd{\varsigma}}-\boldsymbol{\delta},\overline{\yd{\varsigma}} - \overline{\yd{\lambda}}\mright) \, ,
    \end{aligned}
    \end{equation}
    where \(\boldsymbol{\delta}\dot{=}(1/n,\ldots,d/n)\). We set \(\boldsymbol{s}(x) \coloneqq \overline{\yd{\varsigma}}-x\boldsymbol{\delta}\), and for \(n>2d/\epsilon\) we have \(|s_{j}(x) - s_{\ell}(x)| > \epsilon/2\). Writing the difference as an integral gives
    \begin{equation}
        \label{equ:proof_asymptotic_optimality_of_F_symbol_derivative}
        \left|\resSix{F}{\yd{\lambda}}{\yd{m-1}}{\yd{1}}{\yd{\varsigma-\mathbf{e}_i}}{\yd{m}}{\yd{\varsigma}}^2 - A_i\mleft(\overline{\yd{\varsigma}},\overline{\yd{\varsigma}} - \overline{\yd{\lambda}}\mright)\right| \leq \frac{d^2}{n}\max_{0\leq x \leq 1}\left\|\nabla_{\boldsymbol{s}}A_i\mleft(\boldsymbol{s}(x),\boldsymbol R\mright)\right\|_\infty \, .
    \end{equation}
    For \(\ell\neq i\) we have
    \begin{equation}
        \frac{\partial}{\partial s_{\ell}} A_i\mleft(\boldsymbol{s}(x),\boldsymbol{R}\mright) = \frac{A_i\mleft(\boldsymbol{s}(x),\boldsymbol{R}\mright)}{s_{i}(x)-s_{\ell}(x)} - \dfrac{1}{R}\dfrac{\prod_{j\neq \ell}(s_{i}(x) - s_{j}(x) + R_{j})}{\prod_{j\neq i}(s_{i}(x) - s_{j}(x))} \, .
    \end{equation}
    For the first term, we have
    \begin{equation}
    \label{equ:proof_asymptotic_optimality_of_F_symbol_first_term}
        \left|\frac{A_i\mleft(\boldsymbol{s}(x),\boldsymbol{R}\mright)}{s_{i}(x)-s_{j}(x)}\right| \leq \frac{2}{\epsilon} \, ,
    \end{equation}
    since the numerator is bounded by \(1\) and the denominator by \(\epsilon/2\), as discussed before. For the second term, we have
    \begin{equation}
        \label{equ:proof_asymptotic_optimality_of_F_symbol_second_term}
         \left|\dfrac{1}{R}\dfrac{\prod_{j\neq \ell}(s_{i}(x) - s_{j}(x) + R_{j})}{\prod_{j\neq i}(s_{i}(x) - s_{j}(x))}\right| = \frac{1}{|s_{i}(x)-s_{\ell}(x)|} \frac{R^\prime}{R} A_i\mleft(\boldsymbol{s}(x),\boldsymbol{R}^\prime\mright) \leq \frac{2}{\epsilon} \, .
    \end{equation}
    Here, \(\boldsymbol{R}^\prime\) is given by
    \begin{equation}
        R^\prime_{j} =
        \mleft\{\begin{aligned}
            &R_{j}, && j\neq \ell, \\[4pt]
            &0, && j = \ell.
        \end{aligned}\mright.
    \end{equation}
    In the case where \(R=R_{\ell}\) and therefore \(R^\prime = 0\), we can take limits and see that the statement is still true. On the other hand, for \(\ell=i\), we have
    \begin{equation}
        \frac{\partial}{\partial s_{i}} A_i\mleft(\boldsymbol{s}(x),\boldsymbol{R}\mright) = -\sum_{j\neq i}\frac{\partial}{\partial s_{j}} A_i\mleft(\boldsymbol{s}(x),\boldsymbol{R}\mright) \, .
    \end{equation}
    Inserting this together with~\cref{equ:proof_asymptotic_optimality_of_F_symbol_first_term,equ:proof_asymptotic_optimality_of_F_symbol_second_term} into~\cref{equ:proof_asymptotic_optimality_of_F_symbol_derivative} gives exactly~\cref{equ:asymptotic_F_symbol_statement_bound}.
    For the second statement, we set
    \begin{equation}
        \boldsymbol{R}(x) \coloneqq \boldsymbol{R} + x(\boldsymbol{R}^\prime - \boldsymbol{R}) \, ,
    \end{equation}
    and we find
    \begin{equation}
        \left|A_i\mleft(\overline{\yd{\varsigma}},\boldsymbol{R}\mright) - A_i\mleft(\overline{\yd{\varsigma}},\boldsymbol{R}^\prime\mright)\right| \leq  \max_{0\leq x \leq 1}\left|\frac{d}{d x} A_i\mleft(\overline{\yd{\varsigma}},\boldsymbol{R}(x)\mright)\right|.
    \end{equation}
    We have
    \begin{equation}
        \frac{d}{d x} A_i\mleft(\overline{\yd{\varsigma}},\boldsymbol{R}(x)\mright) = \sum_{\ell=1}^{d}(R^\prime_{\ell}-R_{\ell}) \dfrac{1}{R}\dfrac{\prod_{j\neq \ell}(\overline{\Delta}_{i,j} + R_{j}(x))}{\prod_{j\neq i}\overline{\Delta}_{i,j}} \, .
    \end{equation}
    For \(\ell\neq i\), we can use the same arguments as above to show that
    \begin{equation}
        \left|\dfrac{1}{R}\dfrac{\prod_{j\neq \ell}(\overline{\Delta}_{i,j} + R_{j}(x))}{\prod_{j\neq i}\overline{\Delta}_{i,j}}\right| \leq \frac{2}{\epsilon} \, .
    \end{equation}
    For \(\ell = i\), we can similarly argue that
    \begin{equation}
        \dfrac{1}{R}\dfrac{\prod_{j\neq i}(\overline{\Delta}_{i,j} + R_{j}(x))}{\prod_{j\neq i}\overline{\Delta}_{i,j}} = \frac{R^\prime(x)}{R\overline{\Delta}_{i,i+1}} A_i\mleft(\overline{\yd{\varsigma}},\boldsymbol{R}^\prime(x)\mright) \leq \frac{R + \overline{\Delta}_{i,i+1}}{R\overline{\Delta}_{i,i+1}} \, ,
    \end{equation}
    where \(\boldsymbol{R}^\prime\) is given by
    \begin{equation}
        R^\prime_{j} =
        \mleft\{\begin{aligned}
            &R_{j}, && j\neq i, \\[4pt]
            &\overline{\Delta}_{i,i+1}, && j = i.
        \end{aligned}\mright.
    \end{equation}
    Altogether we get
    \begin{equation}
        \max_{0\leq x \leq 1}\left|\frac{d}{d x} A_i\mleft(\overline{\yd{\varsigma}},\boldsymbol{R}(x)\mright)\right| \leq 2||\boldsymbol{R} - \boldsymbol{R}^\prime||_{1}\frac{R+\overline{\Delta}_{i,i+1}}{\epsilon R} \, .
    \end{equation}
\end{proof}

\begin{lemma}[Deviations from overhang removal] \label{lem:asymptotic_deviations_from_overhang_removal}
    Let \(\overline{\yd{\varsigma}}\in \mathsf{R}_{\epsilon}\), and let \(\boldsymbol{R}_{\mu}\) be given as above. Let further \(\boldsymbol{R}\) be such that
    \begin{equation}
        \sum_{i=1}^{k-1}R_{i} = \delta >0 \, , \quad \sum_{i=1}^{d}R_{i} = \sum_{i=1}^{d}(R_{\mu})_{i} = R \, .
    \end{equation}
    Then we have
    \begin{equation}
        A_k\mleft(\overline{\yd{\varsigma}},\boldsymbol{R}_{\mu}\mright) - A_k\mleft(\overline{\yd{\varsigma}},\boldsymbol{R}\mright) \geq \frac{1}{8}\min\left(\delta,\frac{\epsilon R}{R+\overline{\Delta}_{k,k+1}}\right)\left(\frac{1}{\overline{\Delta}_{1,k}} + \frac{1}{R}\right) \, .
    \end{equation}
    On the other hand, for \(1\leq i \leq k-1\), we have
    \begin{equation}
        A_i\mleft(\overline{\yd{\varsigma}},\boldsymbol{R}\mright) \leq 4\delta\frac{R+\overline{\Delta}_{i,i+1}}{\epsilon R}
    \end{equation}
\end{lemma}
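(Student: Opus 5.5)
The plan is to work directly with the product formula
\[
A_k(\overline{\yd{\varsigma}},\boldsymbol R)=\frac{1}{R}\prod_{j}\frac{\overline\Delta_{k,j}+R_j}{\overline\Delta_{k,j}}\Big|_{j\neq k}\cdot R_k ,
\]
that is, the $j=k$ numerator factor is $R_k$ and the remaining factors are ratios. I will split them into an upper block $j<k$ and a lower block $j\ge k$.

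\textbf{Upper block.} For $j<k$ we have $\overline\Delta_{k,j}=-\overline\Delta_{j,k}<0$, so each ratio equals $1-R_j/\overline\Delta_{j,k}$. Since $\overline\Delta_{j,k}\le\overline\Delta_{1,k}$, this is at most $1-R_j/\overline\Delta_{1,k}$. Validity of $\boldsymbol R$ gives $R_j\le\overline\Delta_{j,j+1}$, hence $\sum_{j<k}R_j\le\overline\Delta_{1,k}$. Therefore
\[
\prod_{j<k}\Bigl(1-\frac{R_j}{\overline\Delta_{j,k}}\Bigr)\le e^{-\delta/\overline\Delta_{1,k}},\qquad \frac{\delta}{\overline\Delta_{1,k}}\le 1 .
\]

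\textbf{Lower block.} Write $R'=R-\delta$ for the mass on rows $\ge k$. The lower product times $R_k$, divided by $R'$, is exactly $A_k$ of a removal of total size $R'$ supported on rows $\ge k$. I claim this is at most $1$, with the maximum $1$ attained by overhang removal at total $R'$. This is the asymptotic shadow of \cref{lem:optimality_of_F_symbol}. I would prove it by the continuous analogue of the deformation argument in \cref{lem:ordering_of_F_symbols}: move mass $s\to t$ within rows $\ge k$ and check the sign of the derivative with respect to the transferred mass, case (i) with $AB>0$. Alternatively, it follows by taking limits of the discrete inequality $F^2\le1$ together with the monotone deformation toward $\yd{\mu}$, using \cref{lem:asymptotic_behavior_of_F_symbol} to pass to the limit.

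\textbf{Combining.} The two blocks give
\[
A_k(\overline{\yd{\varsigma}},\boldsymbol R)\le \frac{R-\delta}{R}\,e^{-\delta/\overline\Delta_{1,k}} .
\]
Since $A_k(\overline{\yd{\varsigma}},\boldsymbol R_\mu)=1$, and using $1-e^{-u}\ge u/2$ for $u\le1$, set $a=\tfrac12\delta/\overline\Delta_{1,k}$ and $b=\delta/R$, both in $[0,1]$. Then
\[
1-A_k\ge 1-(1-a)(1-b)=a+b-ab\ge\frac{a+b}{2}\ge\frac{\delta}{4}\Bigl(\frac{1}{\overline\Delta_{1,k}}+\frac1R\Bigr).
\]
This already dominates the stated bound, since $\tfrac18\min(\delta,\cdot)\le\tfrac14\delta$.

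\textbf{Main obstacle.} I expect the difficulty to be making the lower-block maximality rigorous, especially when $R'$ pushes the terminal index $i^\ast$ past several overhangs. If the continuous deformation argument becomes messy, my fallback is the Lipschitz estimate in the second part of \cref{lem:asymptotic_behavior_of_F_symbol}. That estimate explains the $\min(\delta,\epsilon R/(R+\overline\Delta_{k,k+1}))$ form: for small $\delta$ one compares with the competitor obtained by moving the upper mass into the overhang, and for large $\delta$ one caps by a fixed deviation.

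\textbf{Second claim.} Fix $i<k$ and let $\boldsymbol R''$ agree with $\boldsymbol R$ except $R''_i=0$ and $R''_d=R_d+R_i$. This keeps the total $R$ and is valid because $R_d$ is unconstrained above. Since the $j=i$ numerator factor of $A_i$ is $R_i$, we get $A_i(\overline{\yd{\varsigma}},\boldsymbol R'')=0$. The Lipschitz bound of \cref{lem:asymptotic_behavior_of_F_symbol} with $\|\boldsymbol R-\boldsymbol R''\|_1=2R_i\le2\delta$ then gives
\[
A_i(\overline{\yd{\varsigma}},\boldsymbol R)\le 4\delta\,\frac{R+\overline\Delta_{i,i+1}}{\epsilon R}.
\]
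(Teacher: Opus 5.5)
Your proof is correct. For the first inequality it takes a genuinely different, shorter route than the paper; for the second it shares the paper's key tool but uses a simpler comparison vector.

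\textbf{The paper's route.} The paper works with the logarithmic derivative $\partial_{R_j}A_k=A_k\bigl(\frac{1}{\overline\Delta_{k,j}+R_j}-\frac1R\bigr)$. It first shifts mass within rows $\ge k$ to reach an intermediate vector $\widetilde{\boldsymbol R}$. It then integrates this derivative along the segment from $\widetilde{\boldsymbol R}$ to $\boldsymbol R_\mu$. To keep $A_k\ge\frac12$ on the last stretch of that segment it invokes the Lipschitz estimate of \cref{lem:asymptotic_behavior_of_F_symbol}. That is where the factor $\min\bigl(\delta,\frac{\epsilon R}{R+\overline\Delta_{k,k+1}}\bigr)$ and the constant $\frac18$ come from.

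\textbf{Your route.} Let $\boldsymbol R^{\ge k}$ be $\boldsymbol R$ with rows $<k$ set to zero. You factor $A_k(\boldsymbol R)=\frac{R-\delta}{R}\,A_k(\boldsymbol R^{\ge k})\prod_{j<k}\bigl(1-\frac{R_j}{\overline\Delta_{j,k}}\bigr)$ and bound the two blocks separately. This gives the stronger, $\epsilon$-free estimate $\frac{\delta}{4}\bigl(\frac{1}{\overline\Delta_{1,k}}+\frac1R\bigr)$. It also makes the deduction $\delta=O(\min(1,R)/n)$ in \cref{thm:asymptotic_optimality_irrep_level} immediate. You need no intermediate vector at all. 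As literally stated, the paper's $\widetilde{\boldsymbol R}$ requires $R_j\le\widetilde R_j\le(R_\mu)_j$ for $j\ge k$, which cannot hold when $\boldsymbol R$ puts mass below the terminal row $i^\ast$.

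\textbf{Closing your flagged step.} The lower-block bound you flagged as the main obstacle needs neither a deformation nor a limiting argument. For every valid rate vector, each $A_j\ge0$:
\begin{itemize}
\item factors with $j'<j$ lie in $[0,1]$;
\item factors with $j'>j$ are $\ge1$;
\item the $j'=j$ factor is $R_j\ge0$.
\end{itemize}
Moreover $\sum_jA_j=1$. To see this, compare the $x^{d-1}$-coefficients in the Lagrange interpolation of $\prod_{j'}(x-\overline\varsigma_{j'}+R_{j'})-\prod_{j'}(x-\overline\varsigma_{j'})$ at the distinct nodes $\overline\varsigma_j$. This is the continuum counterpart of \cref{rem:F_symbols_isometry}. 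Hence $A_k(\boldsymbol R^{\ge k})\le1$. The case $R-\delta=0$ is trivial, since then $R_k=0$.

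\textbf{Second inequality.} Both proofs rest on the Lipschitz estimate. Your comparison vector $\boldsymbol R''$, which moves $R_i$ to row $d$, gives $A_i(\boldsymbol R'')=0$ outright. It bypasses the paper's mass-shifting step, which has the same feasibility issue as above. The same $A_j\le1$ fact also removes the need for the Lipschitz lemma. Apply it to $\boldsymbol R$ with $R_i$ replaced by $\overline\Delta_{i,i+1}$; this gives $A_i(\boldsymbol R)\le\frac{\delta(R+\overline\Delta_{i,i+1})}{\epsilon R}$ directly.
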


\begin{proof}
    We start with the first statement. We observe that
    \begin{equation}
        \frac{\partial}{\partial R_{j}}A_k\mleft(\overline{\yd{\varsigma}},\boldsymbol{R}\mright) = A_k\mleft(\overline{\yd{\varsigma}},\boldsymbol{R}\mright)\left(\frac{1}{p_{k}-p_{j}+R_{j}} - \frac{1}{R}\right) \, .
    \end{equation}
    Thus, for \(x\) sufficiently small, we have
    \begin{equation}
    \begin{aligned}
    \label{equ:proof_asymptotic_deviations_from_overhang_removal_integral}
    &A_k\mleft(\overline{\yd{\varsigma}},\boldsymbol{R}+x(\boldsymbol{e}_{s}-\boldsymbol{e}_{t})\mright) - A_k\mleft(\overline{\yd{\varsigma}},\boldsymbol{R}\mright) \\=&
        \int_{0}^{x}\mathrm{d}\alpha A_k\mleft(\overline{\yd{\varsigma}},\boldsymbol{R}+\alpha(\boldsymbol{e}_{s}-\boldsymbol{e}_{t})\mright)\left(\frac{1}{p_{k}-p_{s}+R_{s}+\alpha} - \frac{1}{p_{k}-p_{t}+R_{t}-\alpha}\right) \, .
    \end{aligned}
    \end{equation}
    Similarly to the proof of~\cref{lem:optimality_of_F_symbol} we can check that the above quantity is always \(\geq 0\) if \(k \leq s < t\), \(t< k \leq s\), or \(s<t<k\), and also that \(\boldsymbol{R}_{\mu}\) is maximal. This means we can find some \(\widetilde{\boldsymbol{R}}\) where
    \begin{equation}
    \begin{aligned}
        &\widetilde{R}_{j} = R_{j} \quad &\text{for } 1\leq j \leq k-1 \, , \\
        &R_{j} \leq \widetilde{R}_{j} \leq (R_{\mu})_{j} \quad &\text{for } k\leq j \leq d \, , \\
        &\sum_{j=1}^{d} \widetilde{R}_{j} = R \, ,
    \end{aligned}
    \end{equation}
    with
    \begin{equation}
        A_k\mleft(\overline{\yd{\varsigma}},\boldsymbol{R}\mright) \leq A_k\mleft(\overline{\yd{\varsigma}},\widetilde{\boldsymbol{R}}\mright) \leq A_k\mleft(\overline{\yd{\varsigma}},\boldsymbol{R}_{\mu}\mright) \, .
    \end{equation}
    The intuition behind this argument is that we can first shift mass on the entries \(k,\ldots,d\) to obtain \(\widetilde{\boldsymbol{R}}\), which differs from \(\boldsymbol{R}_{\mu}\) only by \(\delta\) on the entries \(k,\ldots,d\), and by the argument above, thi shifting of mass always increases the value of the function. We set
    \begin{equation}
        \boldsymbol{R}(x)\coloneqq x\boldsymbol{R}_{\mu} + (1-x)\widetilde{\boldsymbol{R}} \, ,
    \end{equation}
    and we find that \(A_k\mleft(\overline{\yd{\varsigma}},\boldsymbol{R}(x)\mright)\) is a monotonically increasing function and we can write
    \begin{equation}
    \begin{aligned}
        A_k\mleft(\overline{\yd{\varsigma}},\boldsymbol{R}_{\mu}\mright) - A_k\mleft(\overline{\yd{\varsigma}},\boldsymbol{R}(x)\mright) &= \int_{x}^{1}\mathrm{d}\alpha \, \sum_{j=1}^{d}((R_{\mu})_{j} - \widetilde{R}_{j})A_k\mleft(\overline{\yd{\varsigma}},\boldsymbol{R}(\alpha)\mright)\left(\frac{1}{p_{k}-p_{i}+R_{i}(\alpha)} - \frac{1}{R}\right) \\
        &=\int_{x}^{1}\mathrm{d}\alpha \, \left(\sum_{j=1}^{k-1} \frac{\widetilde{R}_{j} A_k\mleft(\overline{\yd{\varsigma}},\boldsymbol{R}(\alpha)\mright)}{p_{j}-p_{k}-(1-\alpha)\widetilde{R}_{j}} + \sum_{j=k}^{d}\frac{((R_{\mu})_{j} - \widetilde{R}_{j}) A_k\mleft(\overline{\yd{\varsigma}},\boldsymbol{R}(\alpha)\mright)}{p_{k}-p_{i}+(1-\alpha)\widetilde{R}_{j}+\alpha(R_{\mu})_{j}}\right) \, .
    \end{aligned}
    \end{equation}
    We see that for \(1\leq j \leq k-1\) we have
    \begin{equation}
        p_{j}-p_{k}-(1-\alpha)\widetilde{R}_{j} \leq \overline{\Delta}_{1,k} \, ,
    \end{equation}
    and therefore
    \begin{equation}
        \sum_{j=1}^{k-1} \frac{\widetilde{R}_{j}}{p_{j}-p_{k}+(1-\alpha)\widetilde{R}_{j}} \geq \frac{\delta}{\overline{\Delta}_{1,k}} \, .
    \end{equation}
    Further, we have for \(k \leq j \leq d\) that
    \begin{equation}
        p_{k}-p_{i}+(1-\alpha)\widetilde{R}_{j}+\alpha(R_{\mu})_{j} \leq p_{k}-p_{i} + R_{\mu})_{j} \leq R \, ,
    \end{equation}
    and so
    \begin{equation}
        \sum_{j=k}^{d}\frac{(R_{\mu})_{j} - \widetilde{R}_{j}}{p_{k}-p_{i}+(1-\alpha)\widetilde{R}_{j}+\alpha(R_{\mu})_{j}} \geq \frac{\delta}{R} \, .
    \end{equation}
    Finally, we define
    \begin{equation}
        \widetilde{\delta}\coloneqq \frac{1}{4\delta}\min\left(\delta,\frac{\epsilon R}{R+\overline{\Delta}_{k,k+1}}\right) \, ,
    \end{equation}
    and by~\cref{lem:asymptotic_behavior_of_F_symbol} we have for \(1-\widetilde{\delta} \leq \alpha \leq 1\) that
    \begin{equation}
        A_k\mleft(\overline{\yd{\varsigma}},\boldsymbol{R}(\alpha)\mright) \geq A_k\mleft(\overline{\yd{\varsigma}},\boldsymbol{R}_{\mu}\mright) - (A_k\mleft(\overline{\yd{\varsigma}},\boldsymbol{R}_{\mu}\mright) - A_k\mleft(\overline{\yd{\varsigma}},\boldsymbol{R}(\widetilde{\delta})\mright)) \geq 1 - 2\widetilde{\delta}\delta\frac{R+\overline{\Delta}_{i,i+1}}{\epsilon R} \geq \frac{1}{2}\, .
    \end{equation}
    Altogether we find that
    \begin{equation}
        A_k\mleft(\overline{\yd{\varsigma}},\boldsymbol{R}_{\mu}\mright) - A_k\mleft(\overline{\yd{\varsigma}},\boldsymbol{R}\mright) \geq A_k\mleft(\overline{\yd{\varsigma}},\boldsymbol{R}_{\mu}\mright) - A_k\mleft(\overline{\yd{\varsigma}},\boldsymbol{R}(\widetilde{\delta})\mright) \geq \frac{1}{2}\widetilde{\delta}\delta\left(\frac{1}{\overline{\Delta}_{1,k}}+\frac{1}{R}\right) \, .
    \end{equation}
    Inserting for \(\widetilde{\delta}\) gives the first statement.

    To prove the second statement, we again take some \(\boldsymbol{R}^\prime\) so that
    \begin{equation}
    \begin{aligned}
        &R^\prime_{j} = 0 \quad &\text{for } 1\leq j \leq i-1 \, , \\
        &R^\prime_{j} = R_{j} \quad &\text{for } i\leq j\leq k-1 \, , \\
        &R_{j} \leq R^\prime_{j} \leq (R_{\mu})_{j} \quad &\text{for } k\leq j \leq d \, , \\
        &\sum_{j=1}^{d}R^\prime_{j} = R \, , \\
        &||\boldsymbol{R}_{\mu} - \boldsymbol{R}^\prime||_{1} \leq 2\delta \, .
    \end{aligned}
    \end{equation}
    The intuition behind the construction above is that we shift only the necessary mass from the entries \(k,\ldots,d\) to the entries \(i,\ldots,k-1\) to go from \(\boldsymbol{R}_{\mu}\) to \(\boldsymbol{R}^\prime\), which means which shift at most \(\delta\) and the \(1\)-norm becomes bounded by \(2\delta\). Since this shift of mass always increases the value of the function, as discussed above, we get
    \begin{equation}
        0= A_i\mleft(\overline{\yd{\varsigma}},\boldsymbol{R}_{\mu}\mright) \leq A_i\mleft(\overline{\yd{\varsigma}},\boldsymbol{R}\mright) \leq A_i\mleft(\overline{\yd{\varsigma}},\boldsymbol{R}^\prime\mright) \, .
    \end{equation}
    Together with~\cref{lem:asymptotic_behavior_of_F_symbol} we see that
    \begin{equation}
        A_i\mleft(\overline{\yd{\varsigma}},\boldsymbol{R}\mright) \leq A_i\mleft(\overline{\yd{\varsigma}},\boldsymbol{R}^\prime\mright) \leq 2||\boldsymbol{R}_{\mu} - \boldsymbol{R}^\prime||_{1}\frac{R+\overline{\Delta}_{i,i+1}}{\epsilon R} \leq 4\delta\frac{R+\overline{\Delta}_{i,i+1}}{\epsilon R} \, .
    \end{equation}
\end{proof}

\subsection{Asymptotic optimality}
\label{subsec:asymptotic_optimality_supp}
We prove that the overhang removal channel achieves the optimal asymptotic scaling. We assume the spectrum is non-degenerate and $\yd{\varsigma}$ is typical, i.e. $\overline{\yd{\varsigma}}\in\mathsf{R}_{\epsilon}$.

\begin{theorem}[Sector-wise asymptotically optimal scaling]
\label{thm:asymptotic_optimality_irrep_level}
Let $\overline{\yd{\varsigma}}\in\mathsf{R}_{\epsilon}$. For intensive outputs with constant $m$ and the all-site loss function, we have
\begin{equation}
    \res{{^kf}}{\yd{\varsigma}}{\yd{m}}{\yd{\lambda}}_{\mathrm{all}} \leq 1 - \sum_{i\neq k}\frac{m}{\Delta_{k,i}}\frac{p_i}{D_{k,i}}+\lilo[d,\boldsymbol{p},\epsilon]{n^{-1}}
\end{equation}
uniformly, with equality if $\yd{\lambda}$ is given by the overhang removal rule~\cref{def:overhang_removal_rule}. For any $m$ and the one-site loss function, we have
\begin{equation}
    \res{{^kf}}{\yd{\varsigma}}{\yd{m}}{\yd{\lambda}}_{\mathrm{one}} \leq 1 - \sum\limits_{i\neq k}\frac{1}{\Delta_{i,k}}\frac{p_{i}}{D_{i,k}} \!+\!\! \sum\limits_{i=k+1}^{i^\ast}\!\! \mleft(\frac{1}{\Delta_{k,i}} - \frac{1}{m_n}\mright) + \lilo[d,\boldsymbol{p},\epsilon]{n^{-1}}
\end{equation}
uniformly in $m$, and with equality for $\yd{\lambda} = \yd{\mu}$ given by the overhang removal rule.
\end{theorem}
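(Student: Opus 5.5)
The plan has two parts, one per loss. The all-site bound is proved for each admissible environment diagram $\yd{\lambda}$; the one-site bound is then reduced to the $m=1$ all-site case through the $F$-symbol decomposition.

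\emph{All-site part.} Since $m$ is fixed, there are only finitely many removal vectors $\boldsymbol{m}$ with $\sum_i m_i = m$, so it suffices to expand $\res{{^kf}}{\yd{\varsigma}}{\yd{m}}{\yd{\lambda}}_{\mathrm{all}}$ for each one separately. First I invoke \cref{thm:main_part_limit_Weyl_average_geometric_average}. It replaces the Weyl average of the component in \cref{thm:utility_component} by an average over independent geometric variables $\boldsymbol{t}$ with parameters $\boldsymbol{r}$, up to an error $\bigO{\exp(-n)}$ that is uniform on $\mathsf{R}_\epsilon$. Next, in \cref{thm:utility_component} each rising-factorial ratio $\rpoch{\Delta_{j,i}-\#_{d,j}+(i-j)}{m_i}/\rpoch{\Delta_{j,i}+(i-j)+1}{m_i}$ involves arguments of order $n$ shifted by $O(t)$ deviations. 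I expand each such ratio to first order in $1/n$; the geometric variables have all moments bounded uniformly, which controls the error terms.

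Carrying out this expansion gives the following. If all $m_i$ sit in rows other than $k$, then $m_i>0$ for some $i\neq k$. In the relabeled frame this means the removed box is not the one matching the target letter $d$, and the leading term collapses to $\bigO{1/n}$ or to $0$. In that case the claimed upper bound holds trivially for large $n$, because its right-hand side is $1-O(1/n)$. For the overhang choice $\boldsymbol{m}=m\boldsymbol{e}_k$, which is the overhang rule on $\mathsf{R}_\epsilon$ once $n\epsilon>m$, I evaluate $\langle t_{i,j}\rangle_{\mathrm{G}(\boldsymbol{r})}=r/(1-r)$. The factors $p_i/D_{k,i}$ arise as $r_{k,i}/(1-r_{k,i})$, with the appropriate orientation for rows above and below $k$. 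Together with the multiplicity $m$ coming from the $m$ boxes, this gives $1-\sum_{i\neq k}\frac{m}{\Delta_{k,i}}\frac{p_i}{D_{k,i}}+O(n^{-2})$. Mixed choices, with $0<m_k<m$, carry a factor of size $O(\Delta^{-1})$ for every box outside row $k$; that factor arises from $\binom{m}{\boldsymbol{m}}$ together with Pochhammer ratios that vanish to first order. These fidelities are therefore $\le$ the overhang value for large $n$. To make this rigorous with uniform constants, I would use \cref{lem:optimality_of_F_symbol}-type monotonicity as a cross-check, but the direct expansion suffices.

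\emph{One-site part.} Here I use \cref{equ:one_site_decomposition}, which writes the one-site utility as $\sum_i F_i^2 f_i$ with $f_i$ the $m=1$ all-site utility for removing one box from row $i$. The all-site part with $m=1$ gives $f_k=1-\sum_{j\neq k}\frac{1}{\Delta_{k,j}}\frac{p_j}{D_{k,j}}+o(n^{-1})$ and $f_i=O(1/n)$ for $i\neq k$. In fact $f_i=\Theta(1/\Delta)$ only for specific neighbouring rows, and it needs to be kept. By \cref{lem:optimality_of_F_symbol}, the vector $\{F_i^2\}_{i\ge k}$ is majorized by the overhang vector. For $i<k$, the weights $F_i^2$ are additionally penalized, as quantified by \cref{lem:asymptotic_deviations_from_overhang_removal}. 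Since $f_k$ dominates all the other $f_i$, I would then show that $\sum_i F_i^2 f_i$ is maximized at the overhang rule up to $o(n^{-1})$.

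For the overhang rule itself I use \cref{rmk:F_symbol_row_removal}. This requires expanding $F_k^2$ to order $1/n$; the product $\prod_{j=k+1}^{i^\ast}\frac{\Delta_{k,j}+(j-k)-1}{\Delta_{k,j}+(j-k)}$ together with the factor $(m-\Delta_{k,k}+i^\ast-k)/m$ produces $1-\sum_{j=k+1}^{i^\ast}(1/\Delta_{k,j}-1/m)$. The terms $F_i^2 f_i$ for $i\neq k$ contribute only $o(n^{-1})$, because there $F_i^2=\Theta(1)$ while $f_i=O(1/n)$. Combining these gives the stated expansion. Uniformity in $m$ follows because the bound of \cref{lem:asymptotic_behavior_of_F_symbol} is independent of $m$.

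I expect the main obstacle to be the one-site optimality for competitors at $O(1/n)$ precision. Majorization alone only yields $F_k^2(\lambda)\le F_k^2(\mu)$, and for competitors close to the overhang rule this difference may itself be $O(1/n)$, comparable to the loss terms. I would handle this by combining the strict gap of \cref{lem:asymptotic_deviations_from_overhang_removal} (a $\Theta(\delta)$ loss when mass $\delta$ sits above row $k$) with the exact finite-$n$ ordering of \cref{lem:ordering_of_F_symbols}. That ordering shows that every elementary deformation toward $\yd{\mu}$ increases $F_k^2$ by at least order $1/\Delta$, which exceeds the possible $O(1/n)\cdot O(1)$ gain from the terms $f_i$ with $i\neq k$.
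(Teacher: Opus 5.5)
Your all-site part is sound and takes a more elementary route than the paper. For fixed $m$ and $d$ there are finitely many removal vectors $\mathbf m$, and you expand each one directly, finding $O(n^{-1})$ for every non-overhang choice. The paper instead obtains $f^{\lambda}_{\mathrm{all}}\le f^{\mu}_{\mathrm{all}}$ from the nonasymptotic \cref{cor:non_asymptotic_optimality_irrep_level}, which rests on the gYD bounds.

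The one-site part has a genuine gap, at exactly the step you flag. Two things go wrong.

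\emph{Off-target weights.} Under the overhang rule these are not $\Theta(1)$. By \cref{rmk:F_symbol_row_removal}, $F_i^2(\mu)$ vanishes for $i<k$ and $i>i^\ast$ and otherwise carries $C_i=1/(\Delta_{k,i}+i-k)$. From the $i=k$ entry, $1-F_k^2(\mu)\le\sum_{j=k+1}^{i^\ast}(\Delta_{k,j}+j-k)^{-1}\le d/(n\epsilon)$, uniformly in $m$. If $F_i^2$ were $\Theta(1)$, the terms $F_i^2 f_i$ would be $O(n^{-1})$ rather than $o(n^{-1})$, and the formula would change.

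\emph{Your proposed resolution.} It rests on a false claim: an elementary deformation toward $\mu$ need not raise $F_k^2$ by order $1/\Delta$.
\begin{itemize}
\item For $k<s<t$, the ratio in \cref{lem:ordering_of_F_symbols} is $(1+1/A)(1-1/B)=1+(B-A-1)/(AB)$ with $A,B\ge n\epsilon$. This is $1+O(n^{-2})$ whenever $\lambda_s-\lambda_t=O(1)$.
\item If $\lambda_k=\varsigma_k$, then $F_k=F_k'=0$ for all $s,t\neq k$.
\end{itemize}
In any case, comparing ``order $1/\Delta$'' with ``$O(1/n)\cdot O(1)$'' decides nothing without constants.

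\emph{The fix.} The obstacle is illusory, and closing it needs less than either your plan or the paper's argument. Put $\alpha_i=F_i^2$, so $\alpha_i\ge0$ and $\sum_i\alpha_i=1$ by \cref{rem:F_symbols_isometry}. Let $M=\max_{j\neq k}f_j$, which is $O_{d,\boldsymbol p,\epsilon}(n^{-1})$ by \cref{cor:sector_wise_utility_simplified}; hence $f_k\ge M$ for large $n$. Use only the $j=k$ partial sum of \cref{lem:optimality_of_F_symbol}, namely $\alpha_k(\lambda)\le\alpha_k(\mu)$. Writing $f^{\lambda}_{\mathrm{one}}$ for the sector-wise one-site utility with environment $\lambda$,
\[
f^{\lambda}_{\mathrm{one}}\le\alpha_k(\lambda)f_k+\bigl(1-\alpha_k(\lambda)\bigr)M\le\alpha_k(\mu)f_k+\bigl(1-\alpha_k(\mu)\bigr)M\le f^{\mu}_{\mathrm{one}}+\bigl(1-\alpha_k(\mu)\bigr)M .
\]
With the bound on $1-\alpha_k(\mu)$ above, this gives $f^{\lambda}_{\mathrm{one}}\le f^{\mu}_{\mathrm{one}}+O(n^{-2})$, uniformly in $m$ and $\lambda$. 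A competitor whose deficit $\alpha_k(\mu)-\alpha_k(\lambda)$ is only $O(1/n)$ gains nothing, because its total off-target weight is then also $O(1/n)$.

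\emph{Comparison with the paper.} The paper takes a maximizer $\lambda^\ast$ and uses the continuum approximation $A_i$ with \cref{lem:asymptotic_deviations_from_overhang_removal} to force $O(1/n)$ mass above row $k$. It then applies full majorization, which also requires the ordering $f_k\ge f_{k+1}\ge\dots\ge f_d$. Your observation that ``$f_k$ dominates'' would not suffice for that step, but it does suffice for the argument above.

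\emph{A sign.} Your expansion $1-\sum_{j=k+1}^{i^\ast}(1/\Delta_{k,j}-1/m)$ has the correct sign. The plus sign in the displayed statement is a typo: compare \cref{cor:sector_wise_one_site_fidelity}, and note that $f^{\mu}_{\mathrm{one}}=\sum_i\alpha_i f_i\le f_k$.
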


\begin{proof}
For the all-site loss function we know from~\cref{cor:non_asymptotic_optimality_irrep_level,thm:constant_output_intensive_sector_all_site_expansion} that for fixed $m$,
\begin{equation}
    \res{{^kf}}{\yd{\varsigma}}{\yd{m}}{\yd{\lambda}}_{\mathrm{all}} \leq \res{{^kf}}{\yd{\varsigma}}{\yd{m}}{\yd{\mu}}_{\mathrm{all}} = 1 - \frac{m}{n}\sum_{i\neq k}\frac{p_i}{D^2_{k,i}}+\lilo[d,\boldsymbol{p},\epsilon]{n^{-1}} \, .
\end{equation}
We now turn to the one-site case. We use the shorthand \(\alpha_i\mleft(\yd{\lambda},\yd{\varsigma}\mright) \coloneqq \resSix{F}{\yd{\lambda}}{\yd{m-1}}{\yd{1}}{\yd{\varsigma-\mathbf{e}_i}}{\yd{m}}{\yd{\varsigma}}^2\).
Then~\cref{lem:one_site_channel_decomposition} gives
\begin{equation}\begin{aligned}
    \res{{^kf}}{\yd{\varsigma}}{\yd{m}}{\yd{\lambda}}_{\mathrm{one}}
    = \sum_{i:\yd{\varsigma-\mathbf{e}_i} \vdash_{d}(n,1)}\alpha_i\mleft(\yd{\lambda},\yd{\varsigma}\mright)\res{{^kf}}{\yd{\varsigma}}{\yd{1}}{\yd{\varsigma-\mathbf{e}_i}}_{\mathrm{all}} \, .
\end{aligned}\end{equation}
Therefore
\begin{equation}
\label{equ:proof_asymptotic_optimality_irrep_level_decomposition}
    \begin{aligned}
        &\res{{^kf}}{\yd{\varsigma}}{\yd{m}}{\yd{\lambda}}_{\mathrm{one}} - \res{{^kf}}{\yd{\varsigma}}{\yd{m}}{\yd{\mu}}_{\mathrm{one}} \\
        =& \left(\alpha_k\mleft(\yd{\lambda},\yd{\varsigma}\mright) - \alpha_k\mleft(\yd{\mu},\yd{\varsigma}\mright)\right)\left(1 - \frac{1}{n}\sum_{j\neq k}\frac{p_j}{D^2_{k,j}}\right) +\frac{1}{n}\sum_{j\neq k}\left(\alpha_j\mleft(\yd{\lambda},\yd{\varsigma}\mright) - \alpha_j\mleft(\yd{\mu},\yd{\varsigma}\mright)\right)\frac{p_j}{D^2_{k,j}} + \lilo[d,\boldsymbol{p},\epsilon]{n^{-1}} \, ,
    \end{aligned}
\end{equation}
where we used~\cref{thm:constant_output_intensive_sector_all_site_expansion}. We further remark that the term \(\lilo[d,\boldsymbol{p},\epsilon]{n^{-1}}\) is uniformly bounded in \(m\), since only the \(\alpha_i\) depend on \(m\) and they are uniformly bounded by \(1\).

Now let \(\yd{\lambda^\ast}\) be the environment irrep maximizing $\res{{^kf}}{\yd{\varsigma}}{\yd{m}}{\yd{\lambda^\ast}}_{\mathrm{one}}$.
Such a \(\yd{\lambda^\ast}\) always exists, since there are only finitely many possibilities for fixed \(\yd{\varsigma}\) and \(m\).
In particular this means that
\begin{subequations}
\begin{align}
    \res{{^kf}}{\yd{\varsigma}}{\yd{m}}{\yd{\lambda^\ast}}_{\mathrm{one}} - \res{{^kf}}{\yd{\varsigma}}{\yd{m}}{\yd{\mu}}_{\mathrm{one}} &\geq \res{{^kf}}{\yd{\varsigma}}{\yd{m}}{\yd{\lambda}}_{\mathrm{one}} - \res{{^kf}}{\yd{\varsigma}}{\yd{m}}{\yd{\mu}}_{\mathrm{one}} \, , \\
    \res{{^kf}}{\yd{\varsigma}}{\yd{m}}{\yd{\lambda^\ast}}_{\mathrm{one}} - \res{{^kf}}{\yd{\varsigma}}{\yd{m}}{\yd{\mu}}_{\mathrm{one}} &\geq 0 \, .
\end{align}
\end{subequations}
Now let
\begin{equation}
    \delta \coloneqq \sum_{i=1}^{k-1} \frac{|\lambda^\ast_i - \mu_i|}{n} \, .
\end{equation}
We insert into~\cref{equ:proof_asymptotic_optimality_irrep_level_decomposition}, set \(R\coloneqq m/n\) and use~\cref{lem:asymptotic_behavior_of_F_symbol,lem:asymptotic_deviations_from_overhang_removal} to see that
\begin{equation}
\begin{aligned}
    \label{equ:proof_asymptotic_optimality_irrep_level_lower_bound}
    0 &\leq \res{{^kf}}{\yd{\varsigma}}{\yd{m}}{\yd{\lambda^\ast}}_{\mathrm{one}} - \res{{^kf}}{\yd{\varsigma}}{\yd{m}}{\yd{\mu}}_{\mathrm{one}} \\
    &\leq -\frac{1}{8}\min\left(\delta,\frac{\epsilon R}{R+\overline{\Delta}_{k,k+1}}\right)\left(\frac{1}{\overline{\Delta}_{1,k}} + \frac{1}{R}\right)\left(1 - \frac{1}{n}\sum_{j\neq k}\frac{p_j}{D^2_{k,j}}\right) + \frac{1}{n}\sum_{j\neq k}\frac{p_j}{D^2_{k,j}} + \frac{8d^3}{n\epsilon}+ \lilo[d,\boldsymbol{p},\epsilon]{n^{-1}} \, .
\end{aligned}
\end{equation}
Here we used that in general
\begin{equation}
    \alpha_i\mleft(\yd{\lambda},\yd{\varsigma}\mright) - \alpha_i\mleft(\yd{\mu},\yd{\varsigma}\mright) \leq 1
\end{equation}
and by~\cref{lem:asymptotic_behavior_of_F_symbol,lem:asymptotic_deviations_from_overhang_removal} that
\begin{equation}
\begin{aligned}
    \left(\alpha_k\mleft(\yd{\lambda},\yd{\varsigma}\mright) - \alpha_k\mleft(\yd{\mu},\yd{\varsigma}\mright)\right) &\leq \frac{8d^3}{n\epsilon} + A_k\mleft(\overline{\yd{\varsigma}},\overline{\yd{\varsigma}}-\overline{\yd{\lambda}}\mright) - A_k\mleft(\overline{\yd{\varsigma}},\overline{\yd{\varsigma}}-\overline{\yd{\mu}}\mright) \\
    &\leq -\frac{1}{8}\min\left(\delta,\frac{\epsilon R}{R+\overline{\Delta}_{k,k+1}}\right)\left(\frac{1}{\overline{\Delta}_{1,k}} + \frac{1}{R}\right) \, .
\end{aligned}
\end{equation}
For~\cref{equ:proof_asymptotic_optimality_irrep_level_lower_bound} to hold, we have to have
\begin{equation}
    \frac{1}{8}\min\left(\delta,\frac{\epsilon R}{R+\overline{\Delta}_{k,k+1}}\right)\left(\frac{1}{\overline{\Delta}_{1,k}} + \frac{1}{R}\right) = \bigO[d,\boldsymbol{p},\epsilon]{n^{-1}}
\end{equation}
uniformly for all choices of \(m,\yd{\varsigma}\). Since \(\overline{\Delta}_{k,k+1}\leq 1\) and \(\overline{\Delta}_{1,k}\leq 1\), the second factor is bounded below by \(1+1/R\), while the second entry in the minimum is bounded below by \(\epsilon R/(R+1)\). Hence, for the product to be \(\bigO[d,\boldsymbol p,\epsilon]{n^{-1}}\), the minimum must be attained by \(\delta\) for sufficiently large \(n\), giving
\begin{equation}
    \delta
    =
    \bigO[d,\boldsymbol p,\epsilon]{\frac{R}{n(R+1)}}
    =
    \bigO[d,\boldsymbol p,\epsilon]{\frac{\min(1,R)}{n}} \, .
\end{equation}
By the second statement of~\cref{lem:asymptotic_deviations_from_overhang_removal} this means that
\begin{equation}
    \frac{1}{n}\sum_{j=1}^{k-1}\left(\alpha_j\mleft(\yd{\lambda^\ast},\yd{\varsigma}\mright) - \alpha_j\mleft(\yd{\mu},\yd{\varsigma}\mright)\right)\frac{p_j}{D^2_{k,j}} \leq \frac{1}{n} 4\delta\frac{R+1}{\epsilon R} \sum_{j=1}^{k-1}\frac{p_j}{D^2_{k,j}} = \lilo[d,\boldsymbol{p},\epsilon]{n^{-1}} \, .
\end{equation}
We can again insert into~\cref{equ:proof_asymptotic_optimality_irrep_level_decomposition} and get
\begin{equation}
\begin{aligned}
    &\res{{^kf}}{\yd{\varsigma}}{\yd{m}}{\yd{\lambda}}_{\mathrm{one}} - \res{{^kf}}{\yd{\varsigma}}{\yd{m}}{\yd{\mu}}_{\mathrm{one}} \leq \res{{^kf}}{\yd{\varsigma}}{\yd{m}}{\yd{\lambda^\ast}}_{\mathrm{one}} - \res{{^kf}}{\yd{\varsigma}}{\yd{m}}{\yd{\mu}}_{\mathrm{one}} \\
    =& \left(\alpha_k\mleft(\yd{\lambda^\ast},\yd{\varsigma}\mright) - \alpha_k\mleft(\yd{\mu},\yd{\varsigma}\mright)\right)\left(1 - \frac{1}{n}\sum_{j\neq k}\frac{p_j}{D^2_{k,j}}\right)+\frac{1}{n}\sum_{j=k+1}^{d}\!\!\!\left(\alpha_j\mleft(\yd{\lambda^\ast},\yd{\varsigma}\mright) - \alpha_j\mleft(\yd{\mu},\yd{\varsigma}\mright)\right)\frac{p_j}{D^2_{k,j}} + \lilo[d,\boldsymbol{p},\epsilon]{n^{-1}} \\
    =&\sum_{j=k}^{d} \left(\alpha_j\mleft(\yd{\lambda^\ast},\yd{\varsigma}\mright) - \alpha_j\mleft(\yd{\mu},\yd{\varsigma}\mright)\right) C_{j}(n) + \lilo[d,\boldsymbol{p},\epsilon]{n^{-1}}  \leq \lilo[d,\boldsymbol{p},\epsilon]{n^{-1}} \, .
\end{aligned}\label{equ:proof_asymptotic_optimality_irrep_level_vector_product}
\end{equation}
For the last inequality we used that for some $N_{d,\epsilon},$ such that for all $n>N_{d,\epsilon}$, we have the ordering
$C_{k}(n) \geq \ldots \geq C_{d}(n)$, and by~\cref{lem:optimality_of_F_symbol}, the expression in~\cref{equ:proof_asymptotic_optimality_irrep_level_vector_product} is maximized for \(\yd{\lambda^\ast}=\yd{\mu}\). The term \(\lilo[d,\boldsymbol{p},\epsilon]{n^{-1}}\) only depends on \(d,\boldsymbol{p},\epsilon\) and is uniform for all choices of \(m,\yd{\varsigma}\) by our previous derivation. To get the result, we now insert for the asymptotics of \(\res{{^kf}}{\yd{\varsigma}}{\yd{m}}{\yd{\mu}}_{\mathrm{one}}\) from~\cref{thm:asymptotics_one_fidelity}.
\end{proof}

\begin{theorem}[Asymptotically optimal scaling]
    \label{thm:asymptotic_optimality_QPA}
The overhang removal protocol achieves the minimax utility up to some $\lilo[d,\boldsymbol{p}]{n^{-1}}$ for constant $m$ under the all-site loss and for any $m$ under the one-site loss, with the corresponding asymptotic formulas displayed in~\cref{equ:intensive_asymptote,equ:one_site_asymptote}.
\end{theorem}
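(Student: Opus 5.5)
The plan is to lift the sector-wise statement of \cref{thm:asymptotic_optimality_irrep_level} to the full protocol using the SW decomposition. I will then control the atypical sectors with the concentration bounds of \cref{subsec:concentration_supp}.

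First, by the symmetry reduction in \cref{subsec:setup_supp}, the minimax utility coincides with the optimum over symmetric channels. A symmetric channel is a convex combination of extremal ones, and the utility is linear. So the optimum is attained by choosing, independently on each input sector $\yd{\varsigma}$, an output irrep and an environment irrep $\yd{\lambda}(\yd{\varsigma})$. As discussed after \cref{cor:extremal_channels}, we may take the output to be $\yd{m}$: for the all-site risk any other output gives zero fidelity with $\psi_k^{\otimes m}$, and for the one-site risk the definition projects onto the totally symmetric subspace. Hence
\begin{equation}
{}^k\mathcal{F}^\ast_{(\cdot)} = \swavg{\max_{\yd{\lambda}} \res{{^kf}}{\yd{\varsigma}}{\yd{m}}{\yd{\lambda}}_{(\cdot)}}{\boldsymbol{p}}{n} .
\end{equation}
The overhang removal protocol attains the same average with $\yd{\lambda}=\yd{\mu}(\yd{\varsigma})$ on every sector.

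Second, I split this SW average into typical sectors $\overline{\yd{\varsigma}}\in\mathsf{R}_\epsilon$ and atypical ones. I fix $\epsilon$ so that $\boldsymbol{p}$ lies in the interior of $\mathsf{R}_\epsilon$, which is possible by nondegeneracy.
\begin{itemize}
\item On atypical sectors the fidelities lie in $[0,1]$. By \cref{lem:sw_averaging_uniform_sector_expansion} (the SW large-deviation bound), their total contribution to both the optimal and the overhang protocols is $\bigO[d,\boldsymbol{p},\epsilon][n]{\exp(-n)}=\lilo[d,\boldsymbol{p}]{n^{-1}}$.
\item On typical sectors, \cref{thm:asymptotic_optimality_irrep_level} gives $\max_{\yd{\lambda}} f \le f^{\yd{\mu}} + \lilo[d,\boldsymbol{p},\epsilon]{n^{-1}}$. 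This holds uniformly over $\yd{\varsigma}\in\mathsf{R}_\epsilon$, and in the one-site case also uniformly in $m$.
\end{itemize}
Averaging, I obtain ${}^k\mathcal{F}^\ast - {}^k\mathcal{F}(\mathcal{T}_{\rm OR}) = \lilo[d,\boldsymbol{p}]{n^{-1}}$. The reverse inequality is trivial.

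Third, I identify the asymptotic values. On $\mathsf{R}_\epsilon$ the sector-wise overhang fidelities have uniform expansions of the form $1-c(\overline{\yd{\varsigma}})/n+\lilo{n^{-1}}$.
\begin{itemize}
\item For the all-site risk, $c(\overline{\yd{\varsigma}}) = m\sum_{i\neq k} p_i/(\overline{\Delta}_{k,i}D_{k,i})$, taken from \cref{thm:constant_output_intensive_sector_all_site_expansion}.
\item For the one-site risk, $c$ is the analogous expression from \cref{thm:asymptotics_one_fidelity}, including the term $\sum_{i=k+1}^{i^\ast}(\overline{\Delta}_{k,i}^{-1}-R^{-1})$.
\end{itemize}
Applying \cref{lem:sw_averaging_uniform_sector_expansion} replaces $\overline{\yd{\varsigma}}$ by $\boldsymbol{p}$, which yields \cref{equ:intensive_asymptote,equ:one_site_asymptote}.

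The main obstacle is verifying that $c$ is uniformly Lipschitz on $\mathsf{R}_\epsilon$. For the all-site case this is immediate, because the gaps are bounded below by $\epsilon$. For the one-site case, $i^\ast$ depends on $\overline{\yd{\varsigma}}$ and on $R=m/n$. The piecewise expression is still continuous across the jumps of $i^\ast$: at a jump $\overline{\Delta}_{k,i}=R$, so the added term vanishes. It is also Lipschitz on each piece with constant depending on $\epsilon$. I would check this continuity explicitly and then use the Lipschitz bound together with the $\bigO{n^{-1/2}}$ SW fluctuation estimate, as in that lemma's proof. The sequence $m_n$ is fixed along $n$, so any $\lilo{\cdot}$ in $R$ is absorbed as well.
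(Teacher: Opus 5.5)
Your proposal is correct and follows the paper's own argument: reduce to symmetric, sector-wise extremal channels with output $\yd{m}$, apply the $\overline{\yd{\varsigma}}$-uniform bound of \cref{thm:asymptotic_optimality_irrep_level} on $\mathsf{R}_\epsilon$, discard the exponentially unlikely atypical sectors, and read off the asymptotic values via the SW averaging lemma as in \cref{thm:single_output_intensive_all_site_asymptotics,thm:asymptotics_one_fidelity}. Your explicit check that the one-site coefficient stays continuous when $i^\ast$ jumps (the new term $\overline{\Delta}_{k,i}^{-1}-R^{-1}$ vanishes at the breakpoint) is more apt than the paper's appeal to \cref{lem:overhang_leading_continuity,lem:overhang_leading_lipschitz}, which concern the all-site leading term; note that the sector-wise one-site expansion you need is \cref{cor:sector_wise_one_site_fidelity}, since \cref{thm:asymptotics_one_fidelity} is already its SW average.
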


\begin{proof}
This follows directly from the $\overline{\yd{\varsigma}}$-uniform guarantee~\cref{thm:asymptotic_optimality_irrep_level}, together with the fact that the probability of $\overline{\yd{\varsigma}}\in\mathsf{R}_\epsilon^\complement$ is $\bigO[][n]{\exp(-n)}$. By the symmetry reduction of Ref.~\cite{CI26S}, the fidelity of any QPA protocol $\mathcal{T}$ is upper bounded by that of a symmetric protocol.
\end{proof}

\subsection{Nonasymptotic optimality}
\label{subsec:nonasymptotic_optimality_supp}
With more refined analysis, we obtain nonasymptotic optimality in $n$ for intensive $m$ and extensive $m$ with $m=\lilo{n^{1/3}}$.
We first present sector-wise results showing that, when the YD $\yd{\varsigma}$ has sufficiently large target row gaps, the overhang removal rule is uniquely optimal.

\begin{corollary}[Sector-wise nonasymptotic optimality]
    \label{cor:non_asymptotic_optimality_irrep_level}
    Let \(\yd{\mu}\) be defined as in~\cref{def:overhang_removal_rule}, and let \(\yd{\lambda}\) be any valid environment irrep with \(\yd{\lambda}\neq\yd{\mu}\).
    For the all-site utility, we have \(\res{{^kf}}{\yd{\varsigma}}{\yd{m}}{\yd{\mu}}_{\mathrm{all}} > \res{{^kf}}{\yd{\varsigma}}{\yd{m}}{\yd{\lambda}}_{\mathrm{all}}\) whenever
    \begin{equation}
    \begin{aligned}
       \Delta_{k-1,k}> &2m^2\mleft(m-1+\sum_{i=1}^{k-1}\frac{p_{i}}{D_{i,k}}\mright) &\,, \\
       \Delta_{k,k+1}> &2m^2\mleft(2m - 1 + \sum_{i=k+1}^{d}\frac{p_{i}}{D_{k,i}}\mright) \, .
    \end{aligned}
    \end{equation}
    In particular, with \(\Delta_{k,\mathrm{min}}\coloneqq \min\{\Delta_{k-1,k},\Delta_{k,k+1}\}\), it is sufficient to require \(\Delta_{k,\mathrm{min}}>2m^2\mleft(2m-1+1/D_{k,\mathrm{min}}\mright)\).
    For the one-site utility, we have \(\res{{^kf}}{\yd{\varsigma}}{\yd{m}}{\yd{\mu}}_{\mathrm{one}} > \res{{^kf}}{\yd{\varsigma}}{\yd{m}}{\yd{\lambda}}_{\mathrm{one}}\) whenever
    \begin{equation}
    \begin{aligned}
    \Delta_{k-1,k} > &2\sum_{i=1}^{k-1}\frac{p_{i}}{D_{i,k}}\,, \\
    \Delta_{k,k+1} > &\max\mleft(m,2\sum_{i=k+1}^{d}\frac{p_{i}}{D_{k,i}}\mright) \, .
    \end{aligned}
    \end{equation}
    It is sufficient to require \(\Delta_{k,\mathrm{min}}>\max\mleft(m,2/D_{k,\mathrm{min}}\mright)\).
\end{corollary}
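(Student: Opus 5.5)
We treat the all-site and one-site statements separately. In both, we compare the overhang removal channel with an arbitrary competitor sector by sector, using the nonasymptotic machinery of gYDs with constraints. In the regime assumed here, $\Delta_{k,k+1}>m$, so the overhang removal rule reduces to pure row removal $\yd{\mu}=\yd{\varsigma-m\mathbf{e}_k}$.

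\textbf{All-site case, lower bound for $\yd{\mu}$.} Start from the utility component of \cref{thm:utility_component}. Bound it below at the WT level as in \cref{equ:main_part_proof_outline_f_all_WT_level_lower_bound}. Then split the Weyl average with \cref{thm:main_part_splitting_diagram_average_weight}, and enlarge or shrink the diagrams with \cref{thm:main_part_monotonicity_functions_young_diagrams}. This reduces everything to (dual) symmetric-subspace averages, which are bounded by \cref{equ:main_part_proof_outline_symmetric_average,equ:main_part_proof_outline_dual_symmetric_average}. Expanding the resulting Pochhammer ratios gives
\[
\res{{^kf}}{\yd{\varsigma}}{\yd{m}}{\yd{\mu}}_{\mathrm{all}}\ge \Bigl(1-\tfrac{m}{\Delta_{k,k+1}}\sum_{i>k}\tfrac{p_i}{D_{k,i}}\Bigr)\Bigl(1-\tfrac{m}{\Delta_{k-1,k}}\sum_{i<k}\tfrac{p_i}{D_{i,k}}\Bigr)-O\Bigl(\tfrac{m^2}{\Delta}\Bigr).
\]

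\textbf{All-site case, upper bound for $\yd{\lambda}\neq\yd{\mu}$.} At least one box of a competitor $\yd{\lambda}$ leaves row $k$, so some $m_\ell\ge1$ with $\ell\ne k$. By \cref{cor:utility_component_single_row} and the general formula, the factor associated with row $\ell$ then contains a ratio of the form $(\#_{d,\ell})_{m_\ell}/(\cdots)_{m_\ell}$. Here $\#_{d,\ell}$ is typically small (for $\ell<k$), or the ratio forces a factor $\le m/\Delta$ (for $\ell>k$ beyond the overhang). Using the monotone-reversed versions of the splitting and monotonicity theorems, I would bound
\[
\res{{^kf}}{\yd{\varsigma}}{\yd{m}}{\yd{\lambda}}_{\mathrm{all}}\le \frac{1}{m}\Bigl(\tfrac{m}{\Delta}\Bigr)\Bigl(m-1+\sum\tfrac{p_i}{D}\Bigr)
\]
or similar, i.e.\ something at most $\frac{1}{2m}$-ish times the deficit scale. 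Comparing this with the lower bound for $\yd{\mu}$ gives strict inequality exactly when $\Delta_{k-1,k}$ and $\Delta_{k,k+1}$ exceed $2m^2(\cdots)$. The constants $m-1$ and $2m-1$ come from the rising-factorial shifts $(i-j)$ and from the multinomial prefactor $\binom{m}{\mathbf m}\le m^{m-m_k}$. I expect this step to be the main obstacle. Competitors can spread boxes across many rows, so one needs a uniform upper bound that does not degrade with $d$. I would first reduce to the single worst deviation using the ordering monotonicity, \cref{lem:ordering_of_F_symbols}-style one-step deformations $\yd{\lambda}\to\yd{\lambda-\mathbf e_s+\mathbf e_t}$. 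I would then show that each deformation toward $\yd{\mu}$ increases the all-site utility under the gap assumptions, which makes an induction along deformation paths possible.

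\textbf{One-site case.} Use the decomposition \cref{equ:one_site_decomposition}:
\[
\res{{^kf}}{\yd{\varsigma}}{\yd{m}}{\yd{\lambda}}_{\mathrm{one}}=\sum_i\alpha_i(\yd{\lambda})\,\res{{^kf}}{\yd{\varsigma}}{\yd{1}}{\yd{\varsigma-\mathbf e_i}}_{\mathrm{all}}.
\]
The $m=1$ bounds from the all-site analysis give $C_k\ge 1-\frac{1}{\Delta_{k,k+1}}\sum_{i>k}\frac{p_i}{D_{k,i}}-\frac{1}{\Delta_{k-1,k}}\sum_{i<k}\frac{p_i}{D_{i,k}}>0$, which exceeds $\tfrac12$ under the stated hypotheses. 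For $i\ne k$, the factor $\yd{\varsigma-\mathbf e_i}$ removes a non-target box, and the corresponding upper bound gives $C_i<\tfrac12$. Hence $C_k>C_i$ for all $i\ne k$. Moreover, $\Delta_{k,k+1}>m$ forces $\alpha_k(\yd{\mu})=1$ by \cref{rmk:F_symbol_row_removal}, with $i^\ast=k$. By \cref{rem:F_symbols_isometry}, $\sum_i\alpha_i(\yd{\lambda})=1$, so $\res{{^kf}}{\yd{\varsigma}}{\yd{m}}{\yd{\lambda}}_{\mathrm{one}}$ is a convex combination of the $C_i$. By \cref{lem:optimality_of_F_symbol}, $\alpha_k(\yd{\lambda})<1$ strictly whenever $\yd{\lambda}\ne\yd{\mu}$. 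Together with $C_k>\max_{i\ne k}C_i$, this gives strict optimality. The sufficient conditions in terms of $\Delta_{k,\min}$ and $D_{k,\min}$ then follow from $\sum_i p_i/|D_{k,i}|\le 1/D_{k,\min}$.
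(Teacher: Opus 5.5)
The all-site half has a genuine gap at the step you flag yourself: you never obtain an upper bound on the all-site utility of an arbitrary competitor $\yd{\lambda}\neq\yd{\mu}$. Your displayed estimate is a guess. The proposed reduction through one-step deformations $\yd{\lambda}\to\yd{\lambda-\mathbf e_s+\mathbf e_t}$ has nothing behind it: \cref{lem:ordering_of_F_symbols} orders $F$-symbols, which is why it controls the one-site utility, but monotonicity of the all-site fidelity along such deformations is essentially the statement you are trying to prove.

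The paper needs no induction. \Cref{thm:sector_wise_utility} treats every removal vector with $m_k<m$ at once, in four moves:
\begin{enumerate}
\item All factors $f^{j,m_j}_i\le 1$ are dropped except the blocks $j\le i\le k-1$ (for $j<k$) and $k\le i\le j-1$ (for $j>k$).
\item Each block is collapsed by \cref{lem:bounds_telescopic_products} into a function of $\#_d|_1^{k-1}$, resp.\ $\#_d|_k^{d-1}$.
\item The Weyl average is split with the upper-bound version of \cref{cor:splitting_diagram_average_weight}, and the diagrams are pushed to (dual) symmetric ones by \cref{cor:monotonicity_functions_young_diagrams}.
\item The moments are bounded by $m!(A+m-1+C)^m$ from \cref{lem:upper_bounds_expectation_value_powers_of_weights_symmetric_subspace}, together with $\binom{m}{\mathbf m}\le m^{m_{<k}+m_{>k}}$.
\end{enumerate}
This gives
\[
f_{\mathrm{all}}(\yd{\lambda})\le\Bigl(\frac{m\,m_{<k}(m_{<k}-1+C_{<k})}{\Delta_{k-1,k}}\Bigr)^{m_{<k}}\Bigl(\frac{m\,m_{>k}(2m_{>k}-1+C_{>k})}{\Delta_{k,k+1}+1+m_{>k}}\Bigr)^{m_{>k}},\qquad C_{<k}=\sum_{i<k}\frac{p_i}{D_{i,k}},\quad C_{>k}=\sum_{i>k}\frac{p_i}{D_{k,i}}.
\]
The bound is uniform in $d$ and in how the displaced boxes are spread, and the stated all-site thresholds are exactly what make both bases smaller than $1/2$. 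On the other side, a lower bound carrying an unspecified $-O(m^2/\Delta)$ cannot yield explicit thresholds. \Cref{lem:taylor_pochhammer_symbol} linearizes the Pochhammer ratios with no remainder, giving \cref{cor:sector_wise_utility_simplified}. Under the thresholds this exceeds $1/2$ once $m\ge 2$, so the comparison closes there.

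In the one-site half your structure matches the paper's: \cref{equ:one_site_decomposition}, $\alpha_k(\yd{\mu})=1$, and strict $\alpha_k(\yd{\lambda})<1$ from \cref{lem:optimality_of_F_symbol}. But the claim $C_k>1/2$ is wrong. The hypotheses $\Delta_{k-1,k}>2C_{<k}$ and $\Delta_{k,k+1}>2C_{>k}$ only make each deficit in \cref{cor:sector_wise_utility_simplified} smaller than $1/2$. That gives $C_k>0$ (or $C_k>1/4$ from the product form in \cref{thm:sector_wise_utility}), against only $C_i<1/2$ for $i\neq k$.

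$C_k$ can genuinely sit below $1/2$. Take $d=3$, $k=2$, $p_2/p_1=p_3/p_2=1-\delta$ with $\delta\to 0$, and both gaps at threshold. My continuum estimate of the Weyl average gives $C_k\approx 0.44$ and $C_i\approx 0.28$. So $C_k>\max_{i\neq k}C_i$ needs a sharper comparison than these one-sided bounds, or larger constants: with $4$ in place of $2$, each deficit is below $1/4$ and the bounds separate.

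The same issue affects the all-site comparison at $m=1$, where the thresholds give only $f(\yd{\mu})>1/4$ versus $f(\yd{\lambda})<1/2$. The paper's one-line conclusion at this point rests on the same two bounds, so it does not supply the missing estimate either.
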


\begin{proof}
    The first statement is a direct consequence of~\cref{thm:sector_wise_utility} and~\cref{cor:sector_wise_utility_simplified}. For the second statement, we know from~\cref{lem:one_site_channel_decomposition} that
    \begin{equation}
        \res{{^kf}}{\yd{\varsigma}}{\yd{m}}{\yd{\mu}}_{\mathrm{one}} = \res{{^kf}}{\yd{\varsigma}}{\yd{1}}{\yd{\varsigma - \mathbf{e}_k}}_{\mathrm{all}} \, ,\quad \res{{^kf}}{\yd{\varsigma}}{\yd{m}}{\yd{\lambda}}_{\mathrm{one}} = \sum_{i=1}^{d}\resSix{F}{\yd{\lambda}}{\yd{m-1}}{\yd{1}}{\yd{\varsigma-\mathbf{e}_i}}{\yt{1\cdots m}}{\yd{\varsigma}}^2 \res{{^kf}}{\yd{\varsigma}}{\yd{1}}{\yd{\varsigma - \mathbf{e}_i}}_{\mathrm{all}}  \, .
    \end{equation}
    Here, the first equality is due to the fact that \(\resSix{F}{\yd{\lambda}}{\yd{m-1}}{\yd{1}}{\yd{\varsigma-\mathbf{e}_k}}{\yt{1\cdots m}}{\yd{\varsigma}}^2 = 1\) for \(\Delta_{k,k+1}\geq m\). The squares of the \(F\) symbols sum up to \(1\), and together with~\cref{cor:sector_wise_utility_simplified} and~\cref{lem:optimality_of_F_symbol} this gives us
    \begin{equation}
        \res{{^kf}}{\yd{\varsigma}}{\yd{m}}{\yd{\mu}}_{\mathrm{one}} - \res{{^kf}}{\yd{\varsigma}}{\yd{m}}{\yd{\lambda}}_{\mathrm{one}} \geq \mleft(1 - \resSix{F}{\yd{\lambda}}{\yd{m-1}}{\yd{1}}{\yd{\varsigma-\mathbf{e}_k}}{\yt{1\cdots m}}{\yd{\varsigma}}^2 \mright)\mleft(\res{{^kf}}{\yd{\varsigma}}{\yd{1}}{\yd{\varsigma - \mathbf{e}_k}}_{\mathrm{all}} - \max_{i\neq k} \res{{^kf}}{\yd{\varsigma}}{\yd{1}}{\yd{\varsigma - \mathbf{e}_i}}_{\mathrm{all}}\mright) > 0 \, .
    \end{equation}
\end{proof}

We finally show that the overhang removal rule is exactly optimal for depolarized inputs under \(^1\mathcal{L}_{\mathrm{one}}\).

\begin{corollary}[Optimal QPA for depolarizing noise and \({}^1\mathcal{L}_{\mathrm{one}}\)]
    \label{cor:optimality_QPA_depolarizing_F_one}
    Consider the depolarized inputs with spectrum $\boldsymbol{p}\dot{=}(1-\eta \frac{d-1}{d},\frac{\eta}{d},\ldots,\frac{\eta}{d})$ for \(0<\eta<1\). Let $\yd{\varsigma}\dot{=}[\varsigma_{1},\ldots,\varsigma_{d}]$, and let $m\geq 1$ and $\yd{\lambda}$ such that $\res{c}{\yd{m}}{\yd{\lambda}}{\yd{\varsigma}} \geq 1$. Let further $\yd{\mu}$ be given by the overhang removal rule of~\cref{def:overhang_removal_rule}, with $k\!=\!1$.
    Then we have
    \begin{equation}\begin{aligned}
       \res{{^1f}}{\yd{\varsigma}}{\yd{m}}{\yd{\mu}}_{\mathrm{one}} \geq  \res{{^1f}}{\yd{\varsigma}}{\yd{m}}{\yd{\lambda}}_{\mathrm{one}} \, ,
    \end{aligned}\end{equation}
    with equality only if \(\yd{\lambda} = \yd{\mu}\).
\end{corollary}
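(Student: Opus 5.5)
The argument combines the one-site decomposition of \cref{lem:one_site_channel_decomposition} with the majorization of $F$-symbols in \cref{lem:optimality_of_F_symbol}. The depolarizing spectrum is what makes the $m=1$ all-site components ordered in the right way.

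First, write both utilities through \cref{equ:one_site_decomposition}:
\[
\res{{^1f}}{\yd{\varsigma}}{\yd{m}}{\yd{\lambda}}_{\mathrm{one}}=\sum_{i=1}^{d}\alpha_i(\yd{\lambda})\,g_i,
\qquad
g_i\coloneqq \res{{^1f}}{\yd{\varsigma-\mathbf{e}_i}}{\yd{1}}{\yd{\varsigma}}_{\mathrm{all}},
\]
where $\alpha_i(\yd{\lambda})$ is the squared $F$-symbol of \cref{lem:one_site_channel_decomposition}. For $k=1$, \cref{lem:optimality_of_F_symbol} says that $\{\alpha_i(\yd{\lambda})\}_{i=1}^d\prec\{\alpha_i(\yd{\mu})\}_{i=1}^d$ in the ordered-partial-sum sense, with strict inequality at $i=1$. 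This lemma needs $\Delta_{1,2}\ge 1$. If $\Delta_{1,2}=0$, then $\alpha_1=0$ for every admissible environment, and I would treat that degenerate case separately.

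Second, the claim follows by Abel summation once the coefficients are ordered as $g_1> g_2\ge g_3\ge\cdots\ge g_d$ for every admissible $i$ (those with $\yd{\varsigma-\mathbf{e}_i}$ a valid YD):
\[
\sum_i(\alpha_i(\yd{\mu})-\alpha_i(\yd{\lambda}))g_i=\sum_{j=1}^{d-1}(g_j-g_{j+1})\Bigl(\sum_{i\le j}\alpha_i(\yd{\mu})-\sum_{i\le j}\alpha_i(\yd{\lambda})\Bigr)\ge (g_1-g_2)\bigl(\alpha_1(\yd{\mu})-\alpha_1(\yd{\lambda})\bigr)>0.
\]
The inequality uses that every bracket is nonnegative, and the totals agree by \cref{rem:F_symbols_isometry}. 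Since only admissible indices carry weight, I would compare consecutive admissible indices only.

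The main obstacle is the ordering of the $g_i$, and this is where depolarization enters. Relabel with $\sigma$ so the target letter is $d$. Then $q_1=\cdots=q_{d-1}=\eta/d$ and $q_d=1-\eta\frac{d-1}{d}$, so the Weyl distribution weights a tableau only through $\#_d$. By \cref{cor:utility_component_single_row} with $m=1$, $g_i$ is the Weyl average of a rational function of the positions $\widetilde w_{j,d-1}$ of the row-$(d-1)$ GT pattern relative to row $d$. I would try to compute these averages in closed form. With uniform $q_1,\dots,q_{d-1}$, the weight factorizes over GT level $d-1$, and summing the branching from level $d-1$ down yields the Schur polynomial $s^{\yd{\nu}}(1,\dots,1)$, a Weyl dimension. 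So $g_i$ becomes a ratio of sums over interlacing $\yd{\nu}$ of $q_d^{|\yd{\varsigma}|-|\yd{\nu}|}\dim\mathbb{W}^{\yd{\nu}}_{d-1}$ times the single-row CG factor.

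An alternative route is to note that $g_i$ is the probability of outputting $\ket{d}$ when one box is removed from row $i$. That probability should be decreasing in $i$, because removing from a higher row correlates with larger $\#_d$ content there. I would establish this by a direct ratio comparison $g_i/g_{i+1}$, using the product formula and the monotonicity result \cref{thm:main_part_monotonicity_constrained_generalized_averages}: each rational factor is monotone in the GT entries, and the symmetric weights on letters $1,\dots,d-1$ remove any dependence on the spectrum ordering among the non-target letters. Strictness of $g_1>g_2$ for $0<\eta<1$ then comes from the strict positivity of $q_d-q_1$. If no clean ordering appears, I would fall back on checking $g_1\ge g_i$ together with $g_j\ge g_{j+1}$ for the specific indices where partial sums differ, namely rows $1,\dots,i^\ast$ and the boundary rows used in the deformation steps of \cref{lem:optimality_of_F_symbol}.
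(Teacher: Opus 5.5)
Your skeleton (the decomposition of \cref{lem:one_site_channel_decomposition}, the ordered-partial-sum domination of \cref{lem:optimality_of_F_symbol} with strictness at $i=1$, and Abel summation against a non-increasing sequence $g_i$) is exactly the paper's proof, and your restriction to admissible indices is if anything more careful. The gap is the step you yourself call the main obstacle: the ordering $g_1>g_2\ge\cdots\ge g_d$ on \emph{every} sector $\yd{\varsigma}$ and for every $n$. The paper does not derive this either. It imports the strict ordering ($g_j<g_i$ for $i<j$) from the earlier single-output QPA analysis, Ref.~[LFIC24], Appendix~S2B. Nothing proved inside this paper supplies it. The asymptotic expansions need a nondegenerate spectrum with all row gaps of order $n$, which fails for depolarized inputs since $p_2=\cdots=p_d$. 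The nonasymptotic bounds of \cref{cor:sector_wise_utility_simplified} separate $g_1$ from the other $g_\ell$ only above a gap threshold, and never order $g_2,\ldots,g_d$ among themselves.

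Neither of your sketched routes closes this. The reduction in your first route is correct: with $x\coloneqq q_1/q_d\in(0,1)$, you get $g_i=\sum_{\nu}x^{|\nu|}\dim\mathbb{W}^{\nu}_{d-1}\,f_i(\nu)\big/\sum_{\nu}x^{|\nu|}\dim\mathbb{W}^{\nu}_{d-1}$. Here the sums run over $\nu$ interlacing $\yd{\varsigma}$, and $f_i$ is the single-box component of \cref{cor:utility_component_single_row}. What must then be shown is $\sum_{\nu}x^{|\nu|}\dim\mathbb{W}^{\nu}_{d-1}\,\bigl(f_i(\nu)-f_{i+1}(\nu)\bigr)>0$ for all $x\in(0,1)$, and ``compute in closed form'' is not an argument for it.

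Your second route fails as stated. \cref{thm:monotonicity_constrained_generalized_averages} compares averages of \emph{one} non-decreasing function under two ordered constraints. You instead must compare two \emph{different} functions under the same distribution, and these are not pointwise ordered. The numerator of $f_i$ contains the factor $\nu_i-\varsigma_i$, so $f_i$ vanishes whenever row $i$ holds no target letter, while $f_{i+1}$ need not. For example, for $\yd{\varsigma}=\yd{3,1,0}$ (target letter $3$ after relabeling), the tableau with rows $1\,1\,2$ and $3$ has $f_1=0$ and $f_2=3/4$.

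Likewise, strictness of $g_1>g_2$ ``from $q_d>q_1$'' is asserted, not shown. You must either cite the ordering result or genuinely prove the displayed inequality. Your flag on $\Delta_{1,2}=0$ is apt: the paper's proof tacitly uses the lemma's hypothesis $\Delta_{k,k+1}\ge 1$ to get strictness at $j=1$. That is only bookkeeping, though, since strictness then has to come from the first admissible row.
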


\begin{proof}
    By~\cref{lem:one_site_channel_decomposition}, we have
    \begin{equation}\begin{aligned}
        \res{{^1f}}{\yd{\varsigma}}{\yd{m}}{\yd{\lambda}}_{\mathrm{one}}
        = \sum_{i:\yd{\varsigma-\mathbf{e}_i} \vdash_{d}(n,1)}\resSix{F}{\yd{\lambda}}{\yd{m-1}}{\yd{1}}{\yd{\varsigma-\mathbf{e}_i}}{\yd{m}}{\yd{\varsigma}}^2 \res{{^1f}}{\yd{\varsigma}}{\yd{1}}{\yd{\varsigma-\mathbf{e}_i}}_{\mathrm{all}} \, .
    \end{aligned}\end{equation}
   ~\cref{lem:optimality_of_F_symbol} implies $\yd{\mu}$ maximizes the expressions
    \begin{equation}\begin{aligned}
        \label{equ:proof_optimality_QPA_depolarizing_F_one_maximal}
        \sum_{i=1}^{j} \resSix{F}{\yd{\lambda}}{\yd{m-1}}{\yd{1}}{\yd{\varsigma-\mathbf{e}_i}}{\yd{m}}{\yd{\varsigma}}^2 \leq \sum_{i=1}^{j} \resSix{F}{\yd{\mu}}{\yd{m-1}}{\yd{1}}{\yd{\varsigma-\mathbf{e}_i}}{\yd{m}}{\yd{\varsigma}}^2
    \end{aligned}\end{equation}
    for all $1\leq j \leq d$, with equality only if both sides are \(1\) or \(\yd{\lambda} = \yd{\mu}\). From Ref.~\cite[Appendix~S2B]{LFIC24}, we know that for $i< j$, we have
    \begin{equation}\begin{aligned}
        \label{equ:proof_optimality_QPA_depolarizing_F_one_maximal_OQPA_result}
        \res{{^1f}}{\yd{\varsigma}}{\yd{1}}{\yd{\varsigma-\mathbf{e}_j}}_{\mathrm{all}} < \res{{^1f}}{\yd{\varsigma}}{\yd{1}}{\yd{\varsigma-\mathbf{e}_i}}_{\mathrm{all}} \, .
    \end{aligned}\end{equation}
    If all partial sums in~\cref{equ:proof_optimality_QPA_depolarizing_F_one_maximal} are equal to \(1\), this implies that $i^\ast=1$ where we remove \(m\) boxes from the first row. This forces $\yd{\lambda}=\yd{\mu}$. In any other case, at least for \(j=1\),~\cref{equ:proof_optimality_QPA_depolarizing_F_one_maximal} becomes strict.
    Together with~\cref{equ:proof_optimality_QPA_depolarizing_F_one_maximal_OQPA_result}, this proves the claim.
\end{proof}

\section{Implementation}
\label{sec:implementation}

For the implementation of QPA, we give two algorithms with different gate-complexity scalings in the regimes \(d\ll n\) and \(n\ll d\): one based on generalized quantum phase estimation (GQPE), following Ref.~\cite{LFIC24}, and one based on CG transforms, following Ref.~\cite{MT25S}. Both use the recent insight of Ref.~\cite{BFGL+25S} that we can perform the embedding $\intertwiner{W}{\yd{\varsigma}}{\yd{m}}{\yd{\mu}}{}$
efficiently on a quantum computer via a coherent calculation of the \(F\)-symbols, which translates to similar embeddings for the Specht modules via SW duality. In particular, we have the following result.
\begin{lemma}[Efficient initialization of resource state]
    \label{lem:implementation_resource_state}
    Let \(\yd{\varsigma}\vdash n\) be such that \(r\) is the number of rows in \(\yd{\varsigma}\). Let further \(m,\yd{\varrho}\) be such that \(\res{c}{\yd{m}}{\yd{\varrho}}{\yd{\varsigma}} = 1\) and \(\yd{\varrho}\vdash n-m\), and let \(\ket{\yt{T|_{1}^{n-m}}}\in\mathbb{V}^{\yd{\varrho}}\) be a GT-basis vector in the Specht module of \(\yd{\varrho}\) given the standard YT \(\yt{T|_{1}^{n-m}}\). Let finally \(\ket{T}\in \mathbb{V}^{\yd{\varsigma}}\) be such that
    \begin{equation}
        \ket{T}_{\mathbb{V}^{\yd{\varsigma}}} \simeq \ket{\yt{T|_{1}^{n-m}}}_{\mathbb{V}^{\yd{\varrho}}} \otimes \ket{\yt{m}}_{\mathbb{V}^{\yd{m}}} \, ,
    \end{equation}
    where \(\yt{m}\) is the only possible standard YT of \(\yd{m}\) and the action on the right-hand side is the action of the subgroup \(S_{n-m}\times S_{m}\). Then we can prepare the state in the standard encoding of Ref.~\cite{BFGL+25S} up to trace-norm error \(\varepsilon\) by applying \(T\) 1- or 2-qubit gates from a universal gate set, where
    \begin{equation}
        T = \bigO{r(n-m)+r^2m\ln^p(1/\varepsilon)} \, .
    \end{equation}
    Here, \(p\approx 1.44\). The encoding of Ref.~\cite{BFGL+25S} is given by the GT basis vector, which is stored as an increasing sequence of YDs \(\yd{\varsigma^{(1)}},\ldots,\yd{\varsigma^{(n)}}\) with \(\yd{\varsigma^{(n)}} = \yd{\varsigma}\). If we only care about the registers \(n-m+1,\ldots,n\) of \(\ket{T}\), we have
    \begin{equation}
        T = \bigO{r^2m\ln^p(r/\varepsilon)} \, .
    \end{equation}
    Finally, if we use the registers \(\yd{\varsigma^{(j)}}\) in a streaming manner, we only need \(M\) qubits of storage, where
    \begin{equation}
        M = \bigO{r^2\ln^p(nr/\varepsilon)}
    \end{equation}
\end{lemma}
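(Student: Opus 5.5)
The plan is to build $\ket{T}$ one box at a time, as a sequential Gel'fand--Tsetlin (Yamanouchi) path. In the encoding of Ref.~\cite{BFGL+25S}, a GT basis vector of $\mathbb{V}^{\yd{\varsigma}}$ is the chain of diagrams $\yd{\varsigma^{(1)}}\subset\cdots\subset\yd{\varsigma^{(n)}}=\yd{\varsigma}$.

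\emph{First $n-m$ registers.} The vector $\ket{\yt{T|_{1}^{n-m}}}$ is itself a single GT path ending at $\yd{\varrho}$. Writing its first $n-m$ registers is a classical operation: at each step we record which of the $r$ rows receives the next box. This costs $\bigO{r}$ gates per step, hence $\bigO{r(n-m)}$ in total.

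\emph{Last $m$ registers.} The isomorphism $\mathbb{V}^{\yd{\varsigma}}\simeq\operatorname{Ind}$-type restriction to $S_{n-m}\times S_m$ identifies $\ket{T}$ with the image of $\ket{\yt{T|_1^{n-m}}}\otimes\ket{\yt{m}}$ under the embedding dual, via SW duality, to $\intertwiner{W}{\yd{\varsigma}}{\yd{\varrho}}{\yd{m}}{}$. Expanding $\ket{\yt{m}}$ box by box gives an iterated recoupling. When we add the $j$-th of the $m$ boxes, we pass from a coupling of $\yd{\varsigma^{(n-m+j-1)}}$ with $\yd{m-j+1}$ to a coupling of $\yd{\varsigma^{(n-m+j)}}=\yd{\varsigma^{(n-m+j-1)}+\mathbf{e}_i}$ with $\yd{m-j}$. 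The amplitudes of this step are exactly the $F$-symbols of \cref{lem:one_site_channel_decomposition}, with the roles of $\yd{\lambda}$ and $\yd{\varsigma}$ set to the current intermediate diagrams.

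Their squares form a probability vector over at most $r$ rows (\cref{rem:F_symbol_isometry}). Each is a closed-form ratio of $\bigO{r}$ products of differences of row lengths. So each step is a controlled state preparation on an $r$-outcome register, controlled by the classical description of the current diagram.

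We implement each step by coherently evaluating the $r$ squared $F$-symbols and then doing the standard binary-tree amplitude loading. The arithmetic costs $\bigO{r}$ multiplications per amplitude, so $\bigO{r^2}$ per step, each carried out to precision $\ln(1/\varepsilon')$. Using the multiplication circuit that gives the $\ln^p$ factor with $p\approx1.44$ (Karatsuba-type, as in Ref.~\cite{BFGL+25S}), and uncomputing garbage, the cost per step is $\bigO{r^2\ln^p(1/\varepsilon')}$.

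\emph{Error budget.} Errors add in trace norm, so taking $\varepsilon'=\varepsilon/m$ suffices. The first bound then reads $\bigO{r(n-m)+r^2m\ln^p(m/\varepsilon)}$. I expect the $\ln m$ inside the logarithm can be absorbed either by a finer bookkeeping of the error or by the stated constant conventions; matching the displayed form $\ln^p(1/\varepsilon)$ exactly is a point to check.

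\emph{Only the last $m$ registers.} If only registers $n-m+1,\ldots,n$ are needed, the first $n-m$ steps need not be written explicitly. We load $\yd{\varrho}$ directly into a register of $\bigO{r\log n}$ bits, and then run the $m$ recoupling steps. The row lengths are at most $n$, but the stated precision is $\ln^p(r/\varepsilon)$. This indicates the arithmetic should be organized on the differences entering each $F$-symbol, normalized by $r$, rather than on the raw row lengths. The point to verify is that the relative precision needed per amplitude depends only on $r$ and $\varepsilon$.

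\emph{Streaming.} In the streaming mode, each $\yd{\varsigma^{(j)}}$ is output and then discarded. At any time we hold only the current diagram together with the workspace for one $F$-symbol evaluation, which is $\bigO{r^2}$ numbers of $\ln^p(nr/\varepsilon)$ bits. The $n$ in the logarithm comes from storing row lengths of size up to $n$. This gives $M$.

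\emph{Main obstacle.} The hard step is the precise identification of the per-box amplitudes with the closed-form $F$-symbols for the Specht-side recoupling, including signs and phase conventions. I would handle it by invoking \cref{lem:one_site_channel_decomposition} iteratively together with SW duality, and checking the $m=1$ base case against the explicit formula.
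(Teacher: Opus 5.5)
Your overall plan matches the paper's. You write the $\yd{\varrho}$-part of the path classically at $\bigO{r}$ gates per box. You then produce the last $m$ boxes by $m$ controlled state preparations whose amplitudes are coherently computed $F$-symbols.

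\textbf{The gap: the direction of the chain.} You add boxes to a growing prefix $\yd{\varsigma'}$ while the total $\yd{\varsigma}$ is held fixed. Your step coefficient is therefore the recoupling between $\yd{\varsigma'}\otimes(\yd{1}\otimes\yd{m'-1})$ and $(\yd{\varsigma'}\otimes\yd{1})\otimes\yd{m'-1}$ inside $\yd{\varsigma}$. The symbol in \cref{lem:one_site_channel_decomposition} is a different object. It splits off the \emph{last} symmetric box with the environment held fixed, and gives the law of $\yd{\varsigma-\mathbf{e}_i}$, i.e.\ of the row from which the last box is \emph{removed}.

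\textbf{Counterexample.} The two already disagree for $\yd{\varrho}=\yd{1}$, $m=3$, $\yd{\varsigma}=\yd{3,1}$.
\begin{itemize}
\item Here $\ket{T}$ is the vector of the standard representation of $S_4$ fixed by permutations of positions $2,3,4$.
\item Its squared Young--Yamanouchi amplitudes are $2/3$, $2/9$, $1/9$ on the paths $\yd{1}\subset\yd{1,1}\subset\yd{2,1}\subset\yd{3,1}$, $\yd{1}\subset\yd{2}\subset\yd{2,1}\subset\yd{3,1}$, and $\yd{1}\subset\yd{2}\subset\yd{3}\subset\yd{3,1}$.
\item So your first step must put the box in row $1$ with probability $1/3$ and in row $2$ with probability $2/3$.
\item The lemma's formula with $\yd{\lambda}=\yd{1}$, $\yd{\varsigma}=\yd{3,1}$, $m=3$ instead gives $8/9$ for $i=1$ and $1/9$ for $i=2$.
\end{itemize}
Substituting it into your chain therefore prepares the wrong state. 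The base-case check at $m=1$ that you propose cannot detect this, because that single step is deterministic.

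\textbf{The fix: run the chain top-down.} The paper starts from $\yd{\varsigma^{(n)}}=\yd{\varsigma}$. At a step with current diagram $\yd{\varsigma'}$ and $m'$ symmetric boxes remaining, it prepares
\[
\sum_{i}\resSix{F}{\yd{\varrho}}{\yd{m'-1}}{\yd{1}}{\yd{\varsigma'-\mathbf{e}_i}}{\yd{m'}}{\yd{\varsigma'}}\ket{\yd{\varsigma'-\mathbf{e}_i}} .
\]
This is the lemma verbatim, with $\yd{\lambda}=\yd{\varrho}$ fixed and $\yd{\varsigma}$ replaced by the current diagram. It recurses because the decomposition in the lemma's proof shows each $\yd{\varsigma'-\mathbf{e}_i}$ branch is again the $\yd{\varrho}\otimes\yd{m'-1}$ coupling. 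In the example it gives $\frac89\cdot\frac34=\frac23$, $\frac89\cdot\frac14=\frac29$, $\frac19\cdot 1=\frac19$, as required. After $m$ steps the register is forced onto $\yd{\varrho}$, and only then are the first $n-m$ registers written classically from $\yt{T|_{1}^{n-m}}$. The ``last $m$ registers only'' variant just drops that stage. If you keep your order, you need a separate closed form for the box-addition coefficient, e.g.\ by dualizing the lemma; you have not supplied it.

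\textbf{Cost accounting.}
\begin{itemize}
\item The paper's $\ln^p$ with $p\approx 1.44$ comes from Solovay--Kitaev synthesis, with the improved exponent, of the $r$ conditioned Givens rotations per step. It does not come from Karatsuba multiplication, whose exponent is $\log_2 3\approx 1.585$.
\item The $r$ in $\ln^p(r/\varepsilon)$ reflects splitting the per-step error among those $r$ rotations, not a normalization of row differences.
\item Your $\ln(m/\varepsilon)$ from the $\varepsilon/m$ budget is what an honest count gives. The paper's proof does not track the accumulation over the $m$ steps more carefully either.
\end{itemize}
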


\begin{proof}
    We proceed from \(n\) to \(1\) by initializing the YDs \(\yd{\varsigma^{(n-k)}}\) iteratively. We always have \(\yd{\varsigma^{(n)}}=\yd{\varsigma}\). For steps \(1\leq k \leq m-1\), we coherently calculate the \(F\)-symbols
    \begin{equation}
        \resSix{F}{\yd{\varrho}}{\yd{m-k-1}}{\yd{1}}{\yd{\yd{\varsigma^{(n-k+1)}}-\mathbf{e}_i}}{\yd{m-k}}{\yd{\varsigma^{(n-k+1)}}}
    \end{equation}
    for \(1\leq i \leq r\) in time \(\bigO{r^2}\). Then we use Givens rotations conditioned on these values to initialize the register for \(\yd{\varsigma^{(n-k)}}\) as
    \begin{equation}
        \ket{\yd{\varsigma^{(n-k)}}} = \sum_{i=1}^{r} \resSix{F}{\yd{\varrho}}{\yd{m-k-1}}{\yd{1}}{\yd{\yd{\varsigma^{(n-k+1)}}-\mathbf{e}_i}}{\yd{m-k}}{\yd{\varsigma^{(n-k+1)}}} \ket{\yd{\varsigma^{(n-k+1)}-\boldsymbol{e}_{i}}} \, .
    \end{equation}
    We can implement the Givens rotations via Solovay-Kitaev (cf. Ref.~\cite{CMT2024} for the analysis of the implementation and Ref.~\cite{K2025} for the exponent \(p\)), which means the \(k\)-th step takes \(\bigO{r^2\ln^p(r/\varepsilon)}\) gates, and we need \(\bigO{r\ln^p(r/\varepsilon)}\) qubits to calculate the \(F\)-symbols and \(r\log(n)\) qubits to store \(\yd{\varsigma^{(n-k)}}\). The construction further ensures that the action of \(S_{m}\) on the last \(m\) registers is the trivial action, i.e. that the registers are symmetric. If we also need the registers \(\yd{\varsigma^{(n-k)}}\) for \(m \leq k \leq n-1\), we initialize with the YT given by \(\yt{T|_{1}^{n-m}}\). This means that the vector has the correct behavior for the subgroup \(S_{n-m}\) which acts on the first \(n-m\) registers. Each step for \(m \leq k \leq n-1\) involves initializing up to \(r\) numbers, which means each step takes \(\bigO{r}\) gates, and needs \(r\log(n-m)\) qubits to store them.
\end{proof}

\subsection{GQPE-OQPA correctness and complexity}
\label{subsec:implementation_via_GPE}

This implementation follows the GQPE approach of Ref.~\cite{LFIC24}.

\begin{theorem}[GQPE-OQPA correctness and complexity]
    \label{thm:implementation_GPE}
    Take the input \(\rho^{\otimes n}\), where \(\rho\) has \(k\)-th eigenvalue \(p_{k}\), and let \(p_{k}>m/n\). Then we can perform \(k\)-th eigenstate QPA with probability \(1-\delta\) up to trace-norm error \(\varepsilon\) by applying \(T\) 1- or 2-qubit gates from a universal gate set, where
    \begin{equation}
        T = \bigO{n^4\ln^p(dn/\varepsilon)} \, , \quad \delta \leq 2\exp\left(-\,\frac{\,(\sqrt{n}\left(p_{k} - \frac{m}{n}\right) - 4)^2}{32}\right) \, .
    \end{equation}
    Here, \(p\approx 1.44\).
\end{theorem}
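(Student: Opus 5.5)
We analyze the GQPE-OQPA circuit of~\cref{fig:qpa_circuit}(a) step by step. Step~1 applies generalized quantum phase estimation to the data register $\rho^{\otimes n}$ with a control register holding a superposition of permutations, following Ref.~\cite{LFIC24}, and measures the YD $\yd{\varsigma}$. This realizes Schur sampling: after the measurement, the data register is projected onto $\mathbb{V}^{\yd{\varsigma}}\otimes\mathbb{W}^{\yd{\varsigma}}$, and $\yd{\varsigma}$ is distributed according to $\mathrm{SW}(n,\boldsymbol{p})$.

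The success event is $\varsigma_k-\varsigma_{k+1}\ge m$, or at least $\varsigma_k\geq m$. On this event the overhang removal rule of~\cref{def:overhang_removal_rule} prescribes a valid $\yd{\mu}$ with $c^{\yd{\varsigma}}_{\yd{\mu},\yd{m}}=1$. We bound the failure probability $\delta$ with~\cref{thm:upper_bound_probability_row_differences}, taking $\alpha=p_k-m/n$, after relating the event $\varsigma_k<m$ (or $\overline{\Delta}_{k,k+1}<m/n$) to a deviation of the normalized row quantity from its limit of at least $\alpha$. This yields $\delta\le 2\exp(-(\sqrt n(p_k-m/n)-4)^2/32)$. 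I expect some care is needed to phrase the event through a single row difference; the most natural route is via $\overline{\Delta}_{k,k+1}$ compared with $D_{k,k+1}$. If that is too weak, I would instead use the analogous Azuma bound on $\varsigma_k$ itself, which has the same bounded-differences constant.

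Step~2 resets the Specht part of the control register. On the success branch we prepare the Specht vector $\ket{T}\in\mathbb{V}^{\yd{\varsigma}}$ that restricts to $\ket{\yt{T|_1^{n-m}}}\otimes\ket{\yt{m}}$ under $S_{n-m}\times S_m$, using~\cref{lem:implementation_resource_state} with $\yd{\varrho}=\yd{\mu}$, at cost $O(r(n-m)+r^2m\ln^p(1/\varepsilon))\le O(n^3\ln^p(n/\varepsilon))$ since $r\le n$. Running inverse GQPE then injects this Specht state back. By SW duality, the data register is left in $\mathbb{W}^{\yd{\varsigma}}$ embedded, via the $S_{n-m}\times S_m$ branching, as the isometry $W^{\yd{\varsigma}}_{\yd{\mu},\yd{m}}$. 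Here we use that restricting $U(d)$-equivariance to the last $m$ factors being symmetric is exactly the inverse Clebsch--Gordan isometry, as in Ref.~\cite{LFIC24,MT25}. Tracing out the first $n-m$ registers then implements $\mathcal{T}^{\yd{\varsigma}}_{\yd{m},\yd{\mu}}$ exactly as in~\cref{thm:extremal_channel_form}. This identification of the reinjected state with the correct intertwiner is the main conceptual obstacle. I would verify it by checking equivariance under $S_{n-m}\times S_m\times U(d)$ and then invoking uniqueness, since there is no multiplicity.

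For complexity, GQPE on $n$ registers with the Fourier transform over $S_n$ (or the Schur transform of Ref.~\cite{BFGL+25}) costs $O(n^4\ln^p(dn/\varepsilon))$ gates when each register has $\log d$ qubits and the controlled permutations are synthesized by Solovay--Kitaev with exponent $p\approx1.44$. This dominates the resource-state preparation. The errors of the approximate gates add in trace norm by the triangle inequality, so each of the $O(1)$ stages is allotted precision $\varepsilon/3$, and we conclude.
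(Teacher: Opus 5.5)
Your plan is the paper's proof. The steps match: GQPE and measurement of $\yd{\varsigma}$, a concentration bound on the failure event $\varsigma_k<m$, the resource state of \cref{lem:implementation_resource_state} with $\yd{\varrho}=\yd{\mu}$ and $r\le n$, inverse GQPE, tracing out the first $n-m$ qudits, and GQPE dominating the cost. Your equivariance-plus-uniqueness check is a reasonable expansion of the paper's appeal to Ref.~\cite{LFIC24}; the multiplicity is one because $\yd{\varsigma}/\yd{\mu}$ is a horizontal strip.

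The one step where you must choose is the failure probability, and your doubt is justified. Since $\Delta_{k,k+1}\le\varsigma_k$, going through \cref{thm:upper_bound_probability_row_differences} only gives the bound with $\alpha=D_{k,k+1}-m/n$. That bound is vacuous when $p_k-p_{k+1}\le m/n<p_k$, and the lemma does not cover $k=d$. So commit to concentrating $\varsigma_k$ itself.

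Azuma alone does not finish this; you need two ingredients.
\begin{itemize}
\item Bounded differences for $\varsigma_k$. By Greene's theorem, changing one letter changes each partial sum $\varsigma_1+\cdots+\varsigma_j$ by at most $1$, hence $\varsigma_k$ by at most $2$. This is even better than the constant $4$ in \cref{lem:azuma_row_difference}.
\item A one-sided bias estimate $\mathbb{E}[\overline{\varsigma}_k]\ge p_k-4/\sqrt{n}$. It replaces the expectation bound on $\langle\overline{\Delta}_{k,k+1}\rangle$ used in \cref{thm:upper_bound_probability_row_differences}.
\end{itemize}
One-sided Azuma with $t=p_k-m/n-4/\sqrt{n}$ then yields the stated $\delta$. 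When $t\le 0$ the stated bound exceeds $1$ and is trivially true.

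The paper's own proof takes exactly the shortcut you were wary of: it cites the row-difference lemma directly with $\alpha=p_k-m/n$. On this point your proposal is more careful than the source, provided you drop the first route.
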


\begin{proof}
    The algorithm first applies GQPE for the permutation group and then measures the irrep label \(\yd{\varsigma}\), as in Ref.~\cite{LFIC24}. By~\cref{thm:upper_bound_probability_row_differences}, the probability of measuring \(\yd{\varsigma}\) so that \(\varsigma_{k}<m/n\) can be upper bounded by
    \begin{equation}
        \Pr[\varsigma_{k}<m] \leq 2\exp\left(-\,\frac{\,(\sqrt{n}\left(p_{k} - \frac{m}{n}\right) - 4)^2}{32}\right) \, .
    \end{equation}
    If \(\varsigma_{k} > m\), the irrep label \(\yd{\mu}\) given by the overhang removal rule satisfies \(\yd{\mu}\vdash n-m\). Therefore we can use~\cref{lem:implementation_resource_state} to initialize the resource state \(\ket{T}\) with \(\yd{\varrho}=\yd{\mu}\). After performing inverse GQPE and tracing out qudits \(1,\ldots,n-m\), we have implemented the channel. The argument for this is the same as in Ref.~\cite{LFIC24}. The gate complexity of GQPE is \(\bigO{n^4\ln^p(dn/\varepsilon)}\). For the resource state, it follows that \(\yd{\varsigma}\) has at most \(n\) rows, so we can set \(r=n\). Combining this with~\cref{lem:implementation_resource_state} gives the stated complexity.

\end{proof}

\subsection{CG-OQPA correctness and complexity}
\label{subsec:implementation_via_CG_transform}

This implementation follows the CG-transform approach of Ref.~\cite{MT25S}.

\begin{theorem}[CG-OQPA correctness and complexity]
    \label{thm:implementation_CG}
    For input \(\rho^{\otimes n}\), where \(\rho\) has rank \(r\), we can perform \(k\)-th eigenstate QPA up to trace-norm error \(\varepsilon\) by applying \(T\) 1- or 2-qubit gates from a universal gate set, where
    \begin{equation}
        T = \bigO{dr^3(m+n)\ln^p(d(m+n)/\varepsilon)} \, .
    \end{equation}
    Here, \(p\approx 1.44\). In addition, we can do this in a streaming manner, using only \(M\) qubits of memory, where
    \begin{equation}
        M = \bigO{dr\ln^p(d(m+n)/\varepsilon)} \, .
    \end{equation}
\end{theorem}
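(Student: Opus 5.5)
The plan follows the three-step structure of the protocol and realizes each step with Clebsch--Gordan transforms in the style of Ref.~\cite{MT25S}; the cost is then added up step by step.

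\emph{Step 1: Schur sampling.} We apply the sequential (streaming) Schur transform of Ref.~\cite{MT25S}. For $j=1,\ldots,n$ we fuse the $j$-th input qudit into the running irrep via the CG transform
\begin{equation}
\mathbb{W}^{\yd{\varsigma^{(j-1)}}}\otimes\mathbb{C}^d\to\bigoplus_i \mathbb{W}^{\yd{\varsigma^{(j-1)}+\mathbf{e}_i}} .
\end{equation}
The path of YDs is recorded in the Specht register, and then $\yd{\varsigma}$ is measured.

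Since $\rho$ has rank $r$, every YD appearing in the decomposition of $\rho^{\otimes n}$ has at most $r$ rows. We may therefore restrict each CG step to the relevant support, keeping only $r$ nonzero rows in the GT pattern of the current Weyl vector. Each step then costs $\bigO{dr^2}$ arithmetic on labels for computing the CG coefficients between adjacent GT rows, times an extra factor $r$ for the $r$ branches $\mathbf{e}_i$. The Givens rotations are synthesized by Solovay--Kitaev at precision $\varepsilon/(m+n)$, as in the proof of~\cref{lem:implementation_resource_state}. This gives $\bigO{dr^3\ln^p(d(m+n)/\varepsilon)}$ gates per step.

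\emph{Step 2: isometric embedding.} From the measured $\yd{\varsigma}$ we classically compute $\yd{\mu}$ by the overhang removal rule. We then prepare the Specht resource state via~\cref{lem:implementation_resource_state} with $\yd{\varrho}=\yd{\mu}$. Only the last $m$ registers matter, so this costs $\bigO{r^2m\ln^p(r/\varepsilon)}$, which is dominated by the transform cost. We then apply the inverse sequence of $n+m-\ldots$ CG transforms: unfusing sequentially, we peel off $m$ output qudits in the symmetric irrep and leave the environment in $\mathbb{W}^{\yd{\mu}}$. By Schur's lemma and~\cref{thm:extremal_channel_form}, this implements $\intertwiner{W}{\yd{\varsigma}}{\yd{\mu}}{\yd{m}}{}$ on the Weyl factor, as argued in Ref.~\cite{MT25S}.

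\emph{Step 3 and accounting.} We trace out the environment and Specht registers. Summing the roughly $n+m$ CG steps gives
\begin{equation}
T=\bigO{dr^3(m+n)\ln^p(d(m+n)/\varepsilon)} .
\end{equation}
Errors add by the triangle inequality in trace norm, which is why the per-step precision is set to $\varepsilon/(m+n)$. For the memory bound, each CG step only needs the current GT pattern restricted to $r$ nonzero rows, i.e.\ $\bigO{dr}$ integers of $\bigO{\ln(m+n)}$ bits, together with the ancillas for coherent coefficient computation. The Specht path registers are consumed in streaming fashion, exactly as in~\cref{lem:implementation_resource_state}. This yields $M=\bigO{dr\ln^p(d(m+n)/\varepsilon)}$.

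\emph{Main obstacle.} The hardest step is justifying the rank-$r$ truncation inside the CG transforms with uniform cost. I would first argue that the input support lies in $(\mathrm{span}\{\psi_1,\ldots,\psi_r\})^{\otimes n}$, so every Weyl vector has GT patterns with zeros beyond row $r$ at every level. Only the $\bigO{dr}$ nonzero entries of the pattern then need to be updated per step. Whether the ambient dimension $d$ truly enters only linearly requires checking the CG coefficient formula of~\cref{thm:utility_component} against the pattern-update cost.
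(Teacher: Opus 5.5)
Your route is the paper's: sequential CG Schur sampling, the Specht resource state of \cref{lem:implementation_resource_state} with $\varrho=\mu$ (only its last $m$ registers), then $m$ inverse CG (unfusing) steps that peel off the symmetric outputs, and finally tracing out $\mathbb{W}^{\mu}$. The $n$ forward steps plus the $m$ inverse steps give the $n+m$ in the bound, so your ``$n+m-\ldots$'' should simply read $m$. The real difference is that the paper does no gate counting of its own. After its notation dictionary, it imports two results from Ref.~\cite{MT25S}, whose rank parameter is your $r$:
\begin{itemize}
\item Proposition~10: the intertwiner $\mathbb{W}^{\varsigma}\to\mathbb{W}^{\mu}\otimes\mathbb{W}^{[m]}$, given $\ket{T}$, costs $O(mdr^3\ln^p(dm/\varepsilon))$ gates;
\item Theorem~13: the full channel costs $O((n+m)dr^3\ln^p(d(n+m)/\varepsilon))$ gates, and it also yields the streaming memory bound.
\end{itemize}
What you flag as the main obstacle is exactly what these results supply, namely the $O(dr^3\ln^p)$ per-step cost under the rank-$r$ restriction and the $O(dr\ln^p)$ working memory. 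Your ``$O(dr^2)$ arithmetic times $r$ branches'' asserts this cost rather than deriving it. The check you propose would not settle it either. \cref{thm:utility_component} gives squared CG coefficients of $\mathbb{W}^{\mu}\otimes\mathbb{W}^{[m]}$ against $\ket{d\cdots d}$, which are used for fidelities. It says nothing about the reduced Wigner coefficients or the gate cost of the fundamental-representation transforms $\mathbb{W}^{\lambda}\otimes\mathbb{C}^d\to\bigoplus_i\mathbb{W}^{\lambda+\mathbf{e}_i}$. Replace that part of your accounting with the citation.

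There is one further point, which the paper's proof also leaves implicit. \cref{lem:implementation_resource_state} requires $\mu\vdash n-m$, and under \cref{def:overhang_removal_rule} this holds if and only if $\varsigma_k\geq m$. If $\varsigma_k<m$, then $\Delta_{k,i+1}\leq\varsigma_k<m$ for all $i<d$. Hence $i^\ast=d$ and $\mu_d=\varsigma_k-m<0$, so your Step~2 cannot be carried out as written on those sectors. As in \cref{thm:implementation_GPE}, you must either declare failure there or specify separately what the circuit does. Failure has exponentially small probability when $p_k>m/n$.
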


\begin{proof}
    We use the resource state from~\cref{lem:implementation_resource_state} with the implementation results of Ref.~\cite[Proposition~10, Theorem~13]{MT25S}. The translation from the notation of Ref.~\cite{MT25S} to ours is
    \begin{equation}
    \begin{alignedat}{2}
        \lambda &\rightarrow \varsigma \, , &\qquad m &\rightarrow n \, , \\
        n &\rightarrow m \, , & r &\rightarrow r \, , \\
        r^\prime &\rightarrow 1 \, , & k &\rightarrow m \, , \\
        l &\rightarrow 0 \, , & \iota_{p_{\mu\rightarrow\lambda}^{k,l,d}} &\rightarrow \intertwiner{W}{\yd{\varsigma}}{\yd{m}}{\yd{\mu}}{} \, , \\
        \ket{p_{\mu\rightarrow\lambda}^{k,l,d}} &\rightarrow \ket{T} \, . & &
    \end{alignedat}
    \end{equation}
    Combining Ref.~\cite[Proposition~10]{MT25S} with~\cref{lem:implementation_resource_state}, we can implement \(\intertwiner{W}{\yd{\varsigma}}{\yd{m}}{\yd{\mu}}{}\) with \(\bigO{mdr^3\ln^p(dm/\varepsilon)}\) gates. Ref.~\cite[Theorem~13]{MT25S} then implies that we can implement the whole QPA channel with \(\bigO{(n+m)dr^3\ln^p(d(n+m)/\varepsilon)}\) gates. Since the resource-state initialization of~\cref{lem:implementation_resource_state} can be performed in a streaming manner, the streaming memory complexity follows by combining it with Ref.~\cite{MT25S}.
\end{proof}

\section{Asymptotic analysis of sector-wise utilities}
\label{sec:asymptotic_supp}

It is generally hard to directly evaluate the average over CGCs in~\cref{equ:sector_wise_fidelity}. However, most qualitative behaviors can be extracted by studying it in the asymptotic regime of large $n$.
To do so, we build on a generalized version of the argument in Ref.~\cite{LFIC24} to evaluate the sector-wise utility as a quotient of weight-normalized terms
\begin{equation}
    ^k\res{f_\mathrm{all}}{\yd{\varsigma}}{\yd{m}}{\yd{\mu}}=
  \frac{\widetilde{\res{f_\mathrm{all}}{\yd{\varsigma}}{\yd{m}}{\yd{\mu}}}}{\widetilde{s^{\yd{\varsigma}}}}.
\end{equation}
For some unsorted spectrum \(\boldsymbol{q}\dot{=}(q_{1},\ldots,q_{d})\), let us define the sorting permutation \(\sigma\in S_{d}\) such that $\sigma^{-1}(\boldsymbol{q})\dot{=}(q_{\sigma(1)},\ldots,q_{\sigma(d)})$ is sorted in non-increasing order, i.e.,
\begin{equation}
     q_{\sigma(1)} \geq \ldots \geq q_{\sigma(d)}.
\end{equation}
The weight-normalized utility and Schur polynomial are given by
\begin{equation}
    \widetilde{\res{f_\mathrm{all}}{\yd{\varsigma}}{\yd{m}}{\yd{\mu}}}\coloneqq \frac{1}{\sigma^{-1}(\boldsymbol{q})^{\#\mleft(\wt{\lw}\mright)}}\sum_{\wt{w}\vdash\yd{\varsigma}}\res{f_\mathrm{all}}{\yd{\varsigma}}{\yd{m}}{\yd{\mu}}\mleft(\wt{w}\mright)\prod_{i=1}^{d} {q_i}^{\#_{i}\mleft(\wt{w}\mright)} \quad \text{ and} \quad \widetilde{s^{\yd{\varsigma}}}\coloneqq \frac{s^{\yd{\varsigma}}}{\sigma^{-1}(\boldsymbol{q})^{\#\mleft(\wt{\lw}\mright)}} \, .
\end{equation}
The normalization is the largest term in the Schur polynomial, corresponding to the probability weight of the lowest weight WT with the sorted probability vector \(\sigma^{-1}(\boldsymbol{q})\). We have
\begin{equation}
    \sigma^{-1}(\boldsymbol{q})^{\#\mleft(\wt{\lw}\mright)} = \prod_{i=1}^{d}q_{\sigma(i)}^{\varsigma_{i}} \, .
\end{equation}
We will now reparametrize the GT patterns $\wt{w}$ in a way that is compatible with the ordering \(\sigma\) of the spectrum \(\boldsymbol{q}\). For this purpose, we introduce path graphs.

\subsection{GT parametrization with path graphs}
\label{subsec:gt_param_supp}
We introduce the following notion of GT lattices. Let \(V = \{(i,j) \mid 1 \le i \le j \le d\}\) denote the vertex set corresponding to the index set of the GT pattern. For each \((i,j) \in V\) with \(j < d\), we define two edges: \((i,j)\ue(i,j+1)\) and \((i,j) \ue (i+1,j+1)\), which we label by $\setminus$ and $/$ respectively.

\begin{definition}
    A \emph{path graph} is a collection of \(d\) disjoint directed paths \(\boldsymbol{\gamma}=\{\boldsymbol{\gamma}_a\}\). For each \(a\in\{1,\ldots,d-1\}\), the path $\boldsymbol{\gamma}_a$ is an ordered sequence of vertices from level $a$ to level $d$:
    \begin{equation}
    \begin{aligned}
        \boldsymbol{\gamma}_a\,\dot{=}\,&\boldsymbol{\gamma}_{a,a}\ue\boldsymbol{\gamma}_{a,a+1}\ue\cdots\ue\boldsymbol{\gamma}_{a,d}\\
        \,=\,&        ((\gamma_{a,a})_{1},a)\ue((\gamma_{a,a+1})_{1},a+1)\ue\cdots\ue((\gamma_{a,d})_{1},d) \, ,
    \end{aligned}
    \end{equation}
    and we denote the edge \(\gamma_{a,b-1}\ue\gamma_{a,b}\) as \(\graphpath{a}{b}\). We denote the set of all path graphs as $\Gamma$.
\end{definition}

Since the paths of a path graph are disjoint, they collectively tile the vertex set $V$, together with $\boldsymbol{\gamma}_{d,d}$ defined as the remaining terminal vertex. This way, we can express $V$ alternatively as $V=\{\boldsymbol\gamma_{a,b}\mid 1\le a\le b\le d\}$.
\begin{figure}[H]
    \centering
    \includegraphics[width=0.3\textwidth]{figures_qpa/proof_outline/path_parametrization_d5.pdf}
    \caption{
        A path graph with \(d=5\).
        Each dot represents a vertex \(w_{i,j}\) arranged on a triangular lattice.
        Circled dots indicate a path, with each edge labeled by an index pair \(\graphpath{a}{b}\).
        }
    \label{fig:gt_path_param}
\end{figure}
We introduce an alternative parametrization of GT patterns via these path graphs.

\begin{definition}
    \label{def:path_graph_parametrization}
    For a given path graph \(\boldsymbol{\gamma}\) and GT pattern \(\wt{w}\) we define for each edge \(\graphpath{a}{b}\) of \(\boldsymbol{\gamma}\) the quantity
    \begin{equation}
        t^\prime_{a,b} \coloneqq w_{\gamma_{a,b-1}}- w_{\gamma_{a,b}} \, ,
    \end{equation}
    where we write \(w_{\gamma_{a,b}} = w_{(\gamma_{a,b})_{1},(\gamma_{a,b})_{2}}\). The signed tuple \(\boldsymbol{t}^\prime\dot{=}\{t^\prime_{a,b}\}_{a,b}\) is the \emph{path-graph parametrization} of \(\wt{w}\) corresponding to the path graph \(\boldsymbol{\gamma}\). Since the signs are fixed by the path graph, we write \(t_{a,b}\coloneqq |t^\prime_{a,b}|\) for the corresponding nonnegative edge variables. The \emph{admissible domain} is the set of all such nonnegative tuples for the fixed path graph under consideration and YD $\yd{\varsigma}$, denoted by $\mathcal{W}^{\yd{\varsigma}} \subseteq \mathbb{Z}_{\geq0}^{d(d-1)/2}$.
\end{definition}

\begin{remark}
    We can easily see that \(\boldsymbol{t}\) uniquely determines a given \(\wt{w}\) and vice versa: Given the top row \(\yd{\varsigma}=\yd{w_{1,d},\ldots,w_{d,d}}\), the parametrization determines all remaining entries \(w_{i,j}\) by propagating downward along the paths. Namely,
    \begin{equation}
    \label{equ:path_graph_w_value}
        w_{\gamma_{a,b}}=w_{\gamma_{a,b+1}}+t^\prime_{a,b+1}=\cdots=\varsigma_{(\gamma_{a,d})_{1}}+\sum_{s=b+1}^dt^\prime_{a,s}
    \end{equation}
Therefore, we view \(\boldsymbol{t}\) as a reparametrization of \(\wt{w}\), and write functions of GT patterns as functions of \(\boldsymbol{t}\). Similarly, sums over GT patterns can be written as sums over all parametrizations in the admissible domain:
    \begin{equation}
        \sum_{\wt{w}\vdash_{d} \yd{\varsigma}} f\mleft(\wt{w}\mright) = \sum_{\boldsymbol{t} \in \mathcal{W}^{\yd{\varsigma}}} f(\boldsymbol{t}) \, .
    \end{equation}
\end{remark}

We seek to link the path graphs to the unsorted spectrum \(\boldsymbol{q}\). In particular, we would like the path starting in the first bottom row to end up at the place \(\sigma(1)\) in the last row. As an example,~\cref{fig:gt_path_param} would correspond to the spectrum
\begin{equation}
    q_{4} \geq q_{2} \geq q_{5} \geq q_{1} \geq q_{3} \, .
\end{equation}
To this end, we go derive the following lemma.

\begin{lemma}[Path-graphs of permutations]
\label{lem:ordering_path_graph}
Let $\pi\in S_d$ be a permutation.
Then there exists a unique path graph \(\boldsymbol{\gamma}\) whose vertices satisfy $
    \boldsymbol{\gamma}_{a,d} = (\pi^{-1}(a),d) \, ,$
or, equivalently, whose edges satisfy
\begin{equation}
\text{$\graphpath{a}{b}$ is of type } \setminus\text{ if }\ \pi^{-1}(a)<\pi^{-1}(b),
\qquad
\text{$\graphpath{a}{b}$ is of type } /\text{ if }\pi^{-1}(a)>\pi^{-1}(b).
\end{equation}
\end{lemma}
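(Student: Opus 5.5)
The plan is to pin down the path graph completely by an explicit formula for where each path sits at each level, and then check that this formula is forced. For $a\le j\le d$, define
\[
\mathrm{pos}_j(a) \coloneqq \#\{\, c\le j : \pi^{-1}(c)\le \pi^{-1}(a)\,\},
\]
which is the rank of $\pi^{-1}(a)$ among $\{\pi^{-1}(1),\ldots,\pi^{-1}(j)\}$. The candidate path graph is $\boldsymbol{\gamma}_{a,j}\coloneqq(\mathrm{pos}_j(a),j)$. At $j=d$ the ranks are taken over all of $\{1,\ldots,d\}$, so $\mathrm{pos}_d(a)=\pi^{-1}(a)$ and the boundary condition $\boldsymbol{\gamma}_{a,d}=(\pi^{-1}(a),d)$ holds.

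\emph{Existence.} I first check that this is a valid path graph, and then read off the edge types.
\begin{enumerate}
\item At a fixed level $j$, the map $a\mapsto \mathrm{pos}_j(a)$ on $\{1,\ldots,j\}$ is a bijection onto $\{1,\ldots,j\}$, since ranks are distinct. Hence the paths are vertex-disjoint and together tile $V$; the terminal vertex $\boldsymbol{\gamma}_{d,d}$ is the one left over.
\item Passing from level $b-1$ to level $b$ adds only the element $\pi^{-1}(b)$ to the ranked set. So for $a<b$,
\[
\mathrm{pos}_b(a)-\mathrm{pos}_{b-1}(a)=\mathbf{1}[\pi^{-1}(b)<\pi^{-1}(a)]\in\{0,1\}.
\]
Each step is therefore either a $\setminus$ edge (increment $0$) or a $/$ edge (increment $1$), so these are genuine paths in the GT lattice.
\item The same formula gives the edge characterization in the statement: the edge $\graphpath{a}{b}$ is of type $/$ exactly when $\pi^{-1}(a)>\pi^{-1}(b)$, and of type $\setminus$ otherwise.
\end{enumerate}
This also shows that the vertex condition and the edge condition are equivalent. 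The edge types determine every increment, and summing the increments from the starting level, together with the fact that path $b$ must begin at the unique vertex of level $b$ not yet used, recovers the positions.

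\emph{Uniqueness.} This is the step I expect to need the most care, and the key is a non-crossing argument. Take any path graph $\boldsymbol{\gamma}'$. Suppose two paths sit at positions $x<y$ on level $b-1$. Each moves by $0$ or $+1$, and their images on level $b$ must be distinct, so the only way the order could flip is $x+1>y$. That is impossible for integers with $x<y$. Hence the relative order of the existing paths is preserved from level to level, and by induction down from level $d$, the relative order of paths $1,\ldots,j$ on level $j$ equals their order at level $d$, which is the order of $\pi^{-1}(1),\ldots,\pi^{-1}(j)$. Since paths $1,\ldots,j$ occupy all $j$ vertices of level $j$, each path's position equals its rank in that order, which is exactly $\mathrm{pos}_j(a)$. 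So $\boldsymbol{\gamma}'=\boldsymbol{\gamma}$.

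The one point to handle carefully is the start of each path. Path $b$ begins at level $b$ in the unique slot left free by the $b-1$ continuing paths. One must check that this slot agrees with the rank formula, i.e.\ $\mathrm{pos}_b(b)$ is the missing value, and this holds because of the bijection in item 1.
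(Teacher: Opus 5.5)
Your proof is correct, but it takes a different route from the paper's.

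\textbf{The paper's route.} The paper argues by induction on $d$. It asserts, with a ``one checks'', that the prescribed level-$d$ positions are equivalent to the prescribed types of the top edges $\graphpath{a}{d}$, and fixes those edges. It then reads off the level-$(d-1)$ positions as $\tilde\pi^{-1}(a)$, where $\pi^{-1}$ is written as a cycle on $\{\pi^{-1}(d),\ldots,d\}$ composed with $\tilde\pi^{-1}$, $\tilde\pi\in S_{d-1}$. Finally it applies the induction hypothesis to $\tilde\pi$ on the lower triangle, transferring the edge rule via $\pi^{-1}(a)<\pi^{-1}(b)\Leftrightarrow\tilde\pi^{-1}(a)<\tilde\pi^{-1}(b)$.

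\textbf{Your route.} You write down every level at once through the rank formula $\mathrm{pos}_j(a)$. You get existence and the edge rule from the one-line increment identity, and uniqueness from the fact that $0/{+}1$ steps with disjoint images cannot cross.

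\textbf{How they relate.} The two constructions coincide. The paper's $\tilde\pi^{-1}(a)$ is exactly your $\mathrm{pos}_{d-1}(a)$, and iterating the recursion reproduces $\mathrm{pos}_j(a)$ at every level, because rank-taking is order-preserving. What your version buys is an explicit description of all paths, and an actual justification of the step the paper leaves to the reader. That fixed positions at level $d$ force the top-layer edge types is precisely your order-preservation argument applied to the last layer: the paths at level $d-1$ must map order-preservingly onto $\{1,\ldots,d\}\setminus\{\pi^{-1}(d)\}$.

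You also handle the ``equivalently'' clause cleanly. The edge types fix all increments, and path $b$ is forced into the single vacant slot at level $b$, so a path graph with the prescribed edge types is unique and must be yours. The paper's induction is shorter to state and mirrors the recursive peeling used later with Gel'fand--Tsetlin patterns, but your argument is self-contained.
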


\begin{proof}
We proceed inductively. For \(d=1\), the result is trivial, as there exists only a single path graph. For \(d>1\), one checks that the requirement
$\boldsymbol{\gamma}_{a,d} = (\pi^{-1}(a),d)$
is equivalent to
\begin{equation}
    \text{$\graphpath{a}{d}$ is of type } \setminus\text{ if }\ \pi^{-1}(a)<\pi^{-1}(d),
    \qquad
    \text{$\graphpath{a}{d}$ is of type } /\text{ if }\pi^{-1}(a)>\pi^{-1}(d).
\end{equation}
We can first fix these edges. Now let
\begin{equation}
    \pi^{-1} = \sigma_{(\pi^{-1}(d)\cdots d)} \circ \tilde{\pi}^{-1} \, ,
\end{equation}
where \(\sigma_{(a\cdots b)}\in S_{d}\) is the cyclic permutation on the entries \(a,\ldots,b\), and \(\tilde{\pi} \in S_{d-1}\). It follows that with our previously fixed edges from level \(d\) to \(d-1\), we have
\begin{equation}
    \boldsymbol{\gamma}_{a,d-1} = (\tilde{\pi}^{-1}(a),d-1) \, .
\end{equation}
According to the induction hypothesis, there is a unique path graph \(\widetilde{\boldsymbol{\gamma}}\) corresponding to \(\tilde{\pi}\) on the subset of nodes
\begin{equation}
    \{(i,j) \mid 1 \le i \le j \le d-1\} \subseteq \{(i,j) \mid 1 \le i \le j \le d\} \, .
\end{equation}
We define \(\boldsymbol{\gamma}\) by gluing \(\widetilde{\boldsymbol{\gamma}}\) to the connections from level \(d-1\) to level \(d\) that we defined previously. For \(1\leq a < b \leq d-1\) we have
\begin{equation}
    \pi^{-1}(a)<\pi^{-1}(b) \quad \Leftrightarrow \quad \tilde{\pi}^{-1}(a)<\tilde{\pi}^{-1}(b) \, ,
\end{equation}
Together with
\begin{equation}
    \text{$\graphpath{a}{b}$ is of type } \setminus\text{ if }\ \tilde{\pi}^{-1}(a)<\tilde{\pi}^{-1}(b),
    \qquad
    \text{$\graphpath{a}{b}$ is of type } /\text{ if }\tilde{\pi}^{-1}(a)>\tilde{\pi}^{-1}(b)
\end{equation}
we obtain the lemma.
\end{proof}

We illustrate this correspondence in the following figure.

\begin{figure}[H]
    \centering
    \includegraphics[width=0.6\textwidth]{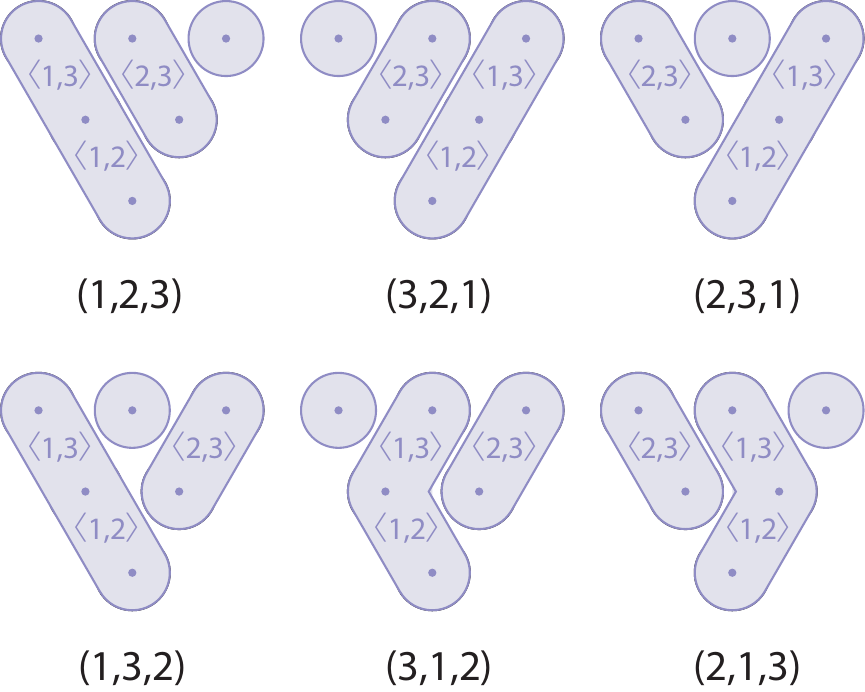}
    \caption{
        Correspondence between path graphs and permutations for $d=3$. The permutations are denoted by their cycles.
        }
    \label{fig:gt_path_param_permutations}
\end{figure}

We translate the path graph for permutations to arbitrarily ordered spectra.

\begin{corollary}[Path-graphs induced by spectrum ordering]
\label{cor:ordering_path_graph}
Fix $d\ge 1$ and a spectrum $\boldsymbol{q}\dot{=}(q_1,\ldots,q_d)$ with all entries distinct. There exists a unique tiling path graph $\boldsymbol{\gamma}$ such that the vertices at level \(d\) fulfill
\begin{equation}
    (\gamma_{a,d})_{1} < (\gamma_{b,d})_{1} \quad \Leftrightarrow \quad q_{a} > q_{b} \, .
\end{equation}
In addition, for every edge labeled $\graphpath{a}{b}$,
\begin{equation}
\graphpath{a}{b}\text{ is of type } \setminus \;\;\text{if}\;\; q_a>q_b,
\qquad
\graphpath{a}{b}\text{ is of type } / \;\;\text{if}\;\; q_a<q_b .
\end{equation}
\end{corollary}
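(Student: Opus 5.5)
This corollary should follow directly from \cref{lem:ordering_path_graph} once we pick the right permutation. Since the entries of \(\boldsymbol{q}\) are distinct, there is a unique sorting permutation. We define \(\pi\in S_d\) by letting \(\pi^{-1}(a)\) be the rank of \(q_a\) in decreasing order, so that \(\pi^{-1}(a)=\#\{c: q_c\geq q_a\}\). This is exactly the sorting permutation \(\sigma\) introduced before the corollary, taken up to inversion convention. Concretely,
\begin{equation}
\pi^{-1}(a)<\pi^{-1}(b)\quad\Leftrightarrow\quad q_a>q_b .
\end{equation}

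For existence, we apply \cref{lem:ordering_path_graph} to this \(\pi\). It gives a path graph with \(\boldsymbol{\gamma}_{a,d}=(\pi^{-1}(a),d)\), so \((\gamma_{a,d})_1=\pi^{-1}(a)\). The level-\(d\) condition in the corollary is then precisely the displayed equivalence. The edge-type statement follows by substituting the same equivalence into the edge characterization of \cref{lem:ordering_path_graph}: \(\pi^{-1}(a)<\pi^{-1}(b)\) becomes \(q_a>q_b\), which forces type \(\setminus\), and the reverse inequality forces type \(/\).

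For uniqueness, suppose a path graph \(\boldsymbol{\gamma}'\) satisfies the ordering condition at level \(d\). The map \(a\mapsto(\gamma'_{a,d})_1\) is a bijection onto \(\{1,\ldots,d\}\), because the \(d\) paths are disjoint and tile \(V\), and level \(d\) has exactly \(d\) vertices. The condition says this bijection is strictly order-reversing relative to the values \(q_a\). A bijection onto \(\{1,\ldots,d\}\) with that property must send the largest \(q_a\) to position \(1\), the next largest to position \(2\), and so on, so it equals \(\pi^{-1}\). Hence \(\boldsymbol{\gamma}'_{a,d}=(\pi^{-1}(a),d)\), and the uniqueness part of \cref{lem:ordering_path_graph} gives \(\boldsymbol{\gamma}'=\boldsymbol{\gamma}\).

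The only step needing care is the edge labeling convention. We must confirm that the edge \(\graphpath{a}{b}\) in the lemma is indexed so that "\(a\) versus \(b\)" in the lemma's condition refers to the same path labels \(a,b\) as in the spectrum comparison \(q_a\) versus \(q_b\). We must also check that the tie case cannot occur, which holds because the entries of \(\boldsymbol{q}\) are distinct. If the lemma's convention differs by an inversion of \(\pi\), we would instead define \(\pi\) as the inverse of the rank map; since everything else is purely a relabeling, the argument goes through unchanged.
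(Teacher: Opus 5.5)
Your proposal is correct and follows the paper's route. The paper proves the corollary in one line as a direct consequence of \cref{lem:ordering_path_graph}. Your argument supplies what that line leaves implicit: the choice of $\pi$, which coincides with the sorting permutation $\sigma$ so that $\pi^{-1}(a)<\pi^{-1}(b)\Leftrightarrow q_a>q_b$, and the bijection argument that reduces uniqueness to the level-$d$ prescription $\boldsymbol{\gamma}_{a,d}=(\pi^{-1}(a),d)$. The convention worry in your last paragraph does not arise, since the lemma's edge criterion uses the same path labels $a,b$ and is phrased through $\pi^{-1}$, exactly as in your definition.
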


\begin{proof}
This is a direct consequence of~\cref{lem:ordering_path_graph}.
\end{proof}

For a spectrum \(\boldsymbol{q}\) as given in~\cref{cor:ordering_path_graph}, the path-graph parametrization allows us to write the occupancy in a simple fashion. Let \(\boldsymbol{\gamma}\) be the corresponding path graph from~\cref{cor:ordering_path_graph}. For each \(a\in\{1,\ldots,d\}\), the occupancy reads
\begin{equation}
\begin{aligned}
    \#_a\mleft(\wt{w}\mright)
    =&
    \sum_{i=1}^{a} w_{i,a}
    -
    \sum_{i=1}^{a-1} w_{i,a-1}\\
    =&w_{(\gamma_{a,a})_{1},a}+\sum_{b=1}^{a-1} w_{(\gamma_{b,a})_{1},a}-\sum_{b=1}^{a-1} w_{(\gamma_{b,a-1})_{1},a-1}\\
    =&\varsigma_{(\gamma_{a,d})_{1}}+
    \sum_{b=a+1}^{d} t^\prime_{a,b}
    \;-\;
    \sum_{b=1}^{a-1} t^\prime_{b,a},
\end{aligned}
\end{equation}
where in the second equality we rewrite the summands in terms of the path-graph parametrization, and in the third equality we invoke~\cref{equ:path_graph_w_value}. The normalized weight of a diagram \(\wt{w}\) is therefore:
\begin{equation}
\label{eq:monomial_weight}
\begin{aligned}
    &\frac{\prod_{a=1}^d q_a^{\#_a\mleft(\wt{w}\mright)}}{\sigma^{-1}(\boldsymbol{q})^{\#\mleft(\wt{\lw}\mright)}}=\frac{\prod_{a=1}^d q_a^{\#_a}}{\prod_{a=1}^d q_a^{\varsigma_{(\gamma_{a,d})_{1}}}}=
    \prod_{a=1}^d q_a^{\#_a-\varsigma_{(\gamma_{a,d})_{1}}}
    =\prod_{1\le a<b\le d}\mleft(\frac{q_a}{q_b}\mright)^{t^\prime_{a,b}} \, .
\end{aligned}
\end{equation}
We see the advantage of this parametrization, as each term in the product is \(< 1\). This is due to~\cref{def:path_graph_parametrization} and the second result of~\cref{cor:ordering_path_graph} we have
\begin{equation}
\begin{aligned}
    t^\prime_{a,b}\leq0 \quad \Leftrightarrow \quad \graphpath{a}{b}\text{ is of type } \setminus \quad \Leftrightarrow \quad q_{a} > q_{b} \, , \\
    t^\prime_{a,b}\geq0 \quad \Leftrightarrow \quad \graphpath{a}{b}\text{ is of type } / \quad \Leftrightarrow \quad q_{a} < q_{b} \, .
\end{aligned}
\end{equation}
We alternatively denote the monomial in~\cref{eq:monomial_weight} as
\begin{equation}
\label{equ:path_graph_monomial}
    \prod_{1\le a<b\le d}\mleft(\frac{q_a}{q_b}\mright)^{t^\prime_{a,b}}=\prod_{a,b}{r_{a,b}^{t_{a,b}}},
\end{equation}
where $t_{a,b} \coloneqq |t^\prime_{a,b}|$ and
\begin{equation}
    r_{a,b} \coloneqq \mleft\{\begin{aligned}
    &\frac{q_{a}}{q_{b}}, && q_{a}<q_{b}, \\
    &\frac{q_{b}}{q_{a}}, && q_{b}<q_{a}.
    \end{aligned}\mright.
\end{equation}
We determine the admissible domain of the nonnegative edge variables $\boldsymbol{t}$. For the path graph determined by the ordering of $\boldsymbol{q}$ via~\cref{cor:ordering_path_graph}, we denote this domain by $\mathcal{W}^{\yd{\varsigma}}$.

To enforce the GT betweenness condition, the allowed values of $\boldsymbol{t}$ are specified recursively by constraints that propagate along the two adjacent paths in the path graph. At each level $\ell$, the admissible range depends on the edge type.
For a $\setminus$-type edge the corresponding variable satisfies
\begin{equation}
\begin{aligned}
    t_{a,\ell}&\in\{0,\ldots,\, w_{(\gamma_{a,\ell})_{1},\,\ell}-w_{(\gamma_{a,\ell})_{1}+1,\,\ell}\},
\end{aligned}
\end{equation}
whereas for a $/$-type edge it satisfies
\begin{equation}
\begin{aligned}
    t_{a,\ell}&\in\{0,\ldots,\, w_{(\gamma_{a,\ell})_{1}-1,\,\ell}-w_{(\gamma_{a,\ell})_{1},\,\ell}\}
\end{aligned}
\end{equation}
These bounds propagate recursively: the admissible values of the variables
$t_{a,\ell-1}$ at level $\ell-1$ are obtained from the values of the variables $t_{a,\ell}$ at level $\ell-1$. We take the index $b(a,\ell)$ as the path index that corresponds to these restrictions. More precisely, we take $b(a,\ell)$ such that
\begin{subequations}
\begin{align}
    (\gamma_{b(a,\ell),\ell})_{1} &= (\gamma_{a,\ell})_{1} + 1, \text{ if }\graphpath{a}{\ell}\text{ is of type } \setminus, \\
    (\gamma_{b(a,\ell),\ell})_{1} &= (\gamma_{a,\ell})_{1} - 1, \text{ if }\graphpath{a}{\ell}\text{ is of type } /.
\end{align}
\end{subequations}
If we insert this definition into the expressions above, we can write $t_{a,\ell} \in \{0,\ldots,\Delta_{\mathrm{eff}}(a,\ell)\}$ with
\begin{equation}
    \label{equ:effective_constraint_a_l}
    \Delta_{\mathrm{eff}}(a,\ell) \coloneqq
    \mleft\{
    \begin{aligned}
    &\varsigma_{(\gamma_{a,d})_{1}}-\varsigma_{(\gamma_{b(a,\ell),d})_{1}}+\sum_{s=\ell+1}^{d}t_{a,s}-\sum_{s=\ell+1}^{d}t_{b(a,\ell),s}, \text{ if } \graphpath{a}{\ell}\text{ is of type } \setminus \, , \\
    &\varsigma_{(\gamma_{b(a,\ell),d})_{1}}-\varsigma_{(\gamma_{a,d})_{1}}+\sum_{s=\ell+1}^{d}t_{b(a,\ell),s}-\sum_{s=\ell+1}^{d}t_{a,s}, \text{ if } \graphpath{a}{\ell}\text{ is of type } / \, .
    \end{aligned}
    \mright.
\end{equation}
Since the paths are all disjoint, we have $(\gamma_{a,\ell})_{1}\lessgtr(\gamma_{b,\ell})_{1}$ if and only if $(\gamma_{a,d})_{1}\lessgtr(\gamma_{b,d})_{1}$, which ensures that
\begin{equation}
\begin{aligned}
    &\varsigma_{(\gamma_{a,d})_{1}}-\varsigma_{(\gamma_{b(a,\ell),d})_{1}} >0, \text{ if } \graphpath{a}{\ell}\text{ is of type } \setminus \, , \\
    &\varsigma_{(\gamma_{b(a,\ell),d})_{1}}-\varsigma_{(\gamma_{a,d})_{1}} > 0, \text{ if }\graphpath{a}{\ell}\text{ is of type } / \, .
\end{aligned}
\end{equation}
We further know that the asymptotic scaling is
\begin{equation}
    \varsigma_{i} - \varsigma_{j} = n(q_{\sigma(i)}-q_{\sigma(j)}) + \lilo{n} \, ,
\end{equation}
and therefore the constant contribution of $\Delta_{\mathrm{eff}}(a,\ell)$ scales as $\bigTheta{n}$. In particular, any given $\boldsymbol{t} \in \mathbb{Z}_{\geq0}^{d(d-1)/2}$ will become valid for sufficiently large $n$, and intuitively, we can let $0\leq t_{a,\ell} < \infty$ in the asymptotic limit. This allows us to evaluate summations similar to those developed in Ref.~\cite{LFIC24}. We will see that the normalization factor of Weyl averages is given by the following geometric sum
\begin{equation}
    \Lambda(\mathbf{r}) \coloneqq \prod_{a,b}\frac{1}{1-r_{a,b}} =  \prod_{a,b} \sum_{t_{a,b}=0}^{\infty} {r_{a,b}^{t_{a,b}}} \, .
\end{equation}
This is precisely the normalization factor of the unnormalized geometric distribution $\mathrm{G}(\boldsymbol{r})$ for $\boldsymbol{t} \in \mathbb{Z}_{\geq0}^{d(d-1)/2}$, where $\boldsymbol{r}\coloneqq \{r_{a,b}\}_{a,b}$ denotes the tuple of edge ratios.

\begin{theorem}[Uniform geometric approximation of Weyl averages]
\label{thm:uniform_asymptotics_weyl_distribution}
\label{thm:asymptotics_weyl_distribution}
Let $\mathsf{R}_\epsilon$ be the gap-separated typical region defined in~\cref{eq:gap_separated_typical_region}.
Let $D=d(d-1)/2$ be the number of edge variables, and let $\mathcal{W}^{\yd{\varsigma}}\subseteq \mathbb{Z}_{\geq0}^{D}$ be the admissible domain of the variables $\boldsymbol{t}$ for the YD $\yd{\varsigma}$.
Let $f_n:\mathbb{Z}_{\geq0}^{D}\to\mathbb{R}$ be a function satisfying, for some fixed $s>0$,
\begin{equation}
f_n(\boldsymbol{t})
=
\bigO[d,\epsilon][n]{s^{\|\boldsymbol{t}\|_1}}.
\end{equation}
Let $r\coloneqq \max_{a,b}r_{a,b}$ and assume $sr<1$.
Then, uniformly for all $\yd{\varsigma}$ with $\overline{\yd{\varsigma}}\in\mathsf{R}_\epsilon$,
\begin{equation}
\weylavg{f_n(\boldsymbol{t})}{\boldsymbol{q}}{\yd{\varsigma}}
=
\left\langle
f_n(\boldsymbol{t})
\right\rangle_{\mathrm{G}(\boldsymbol{r})}
+
\bigO[d,\mathbf{r},\epsilon,s][n]{\exp(-n)}.
\end{equation}
\end{theorem}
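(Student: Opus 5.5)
We will write the Weyl average as a ratio of two weight-normalized sums over the admissible domain,
\begin{equation}
\weylavg{f_n(\boldsymbol{t})}{\boldsymbol{q}}{\yd{\varsigma}}
=\frac{\sum_{\boldsymbol{t}\in\mathcal{W}^{\yd{\varsigma}}} f_n(\boldsymbol{t})\prod_{a,b}r_{a,b}^{t_{a,b}}}{\sum_{\boldsymbol{t}\in\mathcal{W}^{\yd{\varsigma}}}\prod_{a,b}r_{a,b}^{t_{a,b}}},
\end{equation}
using \cref{eq:monomial_weight,equ:path_graph_monomial} for the path graph of \cref{cor:ordering_path_graph}. The geometric average is the same ratio with $\mathcal{W}^{\yd{\varsigma}}$ replaced by $\mathbb{Z}_{\geq0}^{D}$, whose denominator is $\Lambda(\boldsymbol{r})$. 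The plan is to show that both the numerator and the denominator differ from their full-orthant counterparts by $\bigO{\exp(-n)}$, uniformly on $\mathsf{R}_\epsilon$. The ratio statement then follows, since the denominators are at least $1$ (the term $\boldsymbol{t}=0$ always lies in $\mathcal{W}^{\yd{\varsigma}}$, as it corresponds to $\wt{\lw_{\pi}}$) and at most $\Lambda(\boldsymbol{r})$.

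The key step is an inner box. We will show there is a constant $c=c(d,\epsilon)>0$ such that every $\boldsymbol{t}$ with $\|\boldsymbol{t}\|_1\le cn$ is admissible whenever $\overline{\yd{\varsigma}}\in\mathsf{R}_\epsilon$. This follows from the recursive description \cref{equ:effective_constraint_a_l}. The constant part of $\Delta_{\mathrm{eff}}(a,\ell)$ is a difference $|\varsigma_i-\varsigma_j|$ with $i\neq j$, which is at least $n\epsilon$ on $\mathsf{R}_\epsilon$. The correction terms are signed sums of the $t$'s along two paths, so they are bounded by $\|\boldsymbol{t}\|_1$. Hence $\Delta_{\mathrm{eff}}(a,\ell)\ge n\epsilon-\|\boldsymbol{t}\|_1\ge t_{a,\ell}$ as soon as $2\|\boldsymbol{t}\|_1\le n\epsilon$, so $c=\epsilon/2$ works. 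Here we must check that admissibility is fully captured by the betweenness constraints $0\le t_{a,\ell}\le\Delta_{\mathrm{eff}}(a,\ell)$ propagated level by level. I expect this bookkeeping to be the main obstacle, in particular confirming that the signs of $t'$ match the edge types of \cref{cor:ordering_path_graph}, so that every constraint has the stated lower bound $0$. The rest is routine.

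Given the inner box, the symmetric difference between $\mathcal{W}^{\yd{\varsigma}}$ and $\mathbb{Z}_{\geq0}^D$ lies inside $\{\|\boldsymbol{t}\|_1>cn\}$. We bound the tail using the hypothesis $|f_n(\boldsymbol{t})|\le C_{d,\epsilon}s^{\|\boldsymbol{t}\|_1}$ together with $r_{a,b}\le r$:
\begin{equation}
\sum_{\|\boldsymbol{t}\|_1>cn}|f_n(\boldsymbol{t})|\prod_{a,b}r_{a,b}^{t_{a,b}}\le C_{d,\epsilon}\sum_{N>cn}\binom{N+D-1}{D-1}(sr)^N=\bigO[d,\mathbf{r},\epsilon,s][n]{\exp(-n)},
\end{equation}
since $sr<1$ and the polynomial factor is absorbed into a slightly smaller exponential rate. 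The same bound with $s=1$ (note $r<1$ because the spectrum is nondegenerate) controls the denominators.

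Finally, write $N_W,Z_W$ for the numerator and denominator over $\mathcal{W}^{\yd{\varsigma}}$, and $N_G,Z_G=\Lambda(\boldsymbol{r})$ for those over the full orthant. Then
\begin{equation}
\left|\frac{N_W}{Z_W}-\frac{N_G}{Z_G}\right|\le\frac{|N_W-N_G|}{Z_W}+\frac{|N_G|\,|Z_G-Z_W|}{Z_WZ_G}.
\end{equation}
Each term on the right is exponentially small because $Z_W\ge1$ and $|N_G|\le C\sum_{\boldsymbol{t}}(sr)^{\|\boldsymbol{t}\|_1}<\infty$. All constants depend only on $d,\boldsymbol{r},\epsilon,s$, which gives uniformity over $\mathsf{R}_\epsilon$.
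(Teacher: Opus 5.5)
Your proposal is correct and follows essentially the paper's argument. A neighborhood of $\boldsymbol{t}=0$ of size linear in $n$ is admissible because the row-gap part of $\Delta_{\mathrm{eff}}(a,\ell)$ is at least $n\epsilon$ on $\mathsf{R}_\epsilon$; the complement is an exponentially small geometric tail since $sr<1$; and the ratio statement follows. The differences are cosmetic: you use an $\ell^1$ ball of radius $\epsilon n/2$ instead of the paper's cube $[0,\lfloor \epsilon n/(2D)\rfloor]^D$, and you control the normalization via $Z_W\ge 1$ with $s=1$ for the denominator. That last point is slightly cleaner than the paper, which divides by $\Lambda(\boldsymbol{r})$ plus an exponentially small error and implicitly needs $s\ge 1$ to apply its estimate to $f_n\equiv 1$.
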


\begin{proof}
On $\mathsf{R}_\epsilon$, all relevant normalized gaps satisfy $\overline\Delta_{i,j}\geq \epsilon$.
Set $L\coloneqq \floor{\epsilon n/(2D)}$. We claim that, for every $\yd{\varsigma}\vdash_d n$ with $\overline{\yd{\varsigma}}\in\mathsf{R}_\epsilon$, any $\boldsymbol{t}\in[0,L]^D$ is admissible, i.e. $[0,L]^D\subseteq\mathcal{W}^{\yd{\varsigma}}$.
Indeed, for each edge variable $t_{a,\ell}\in[0,L]$, the admissibility upper bound is $\Delta_{\mathrm{eff}}(a,\ell)$. Its row-difference part is at least $n\epsilon$, and the remaining terms involve at most $D$ other edge variables, each bounded by $L$. Thus $\Delta_{\mathrm{eff}}(a,\ell)\geq n\epsilon-DL\geq L$, so $\boldsymbol{t}\in\mathcal{W}^{\yd{\varsigma}}$.
We then have
\begin{equation}
\begin{aligned}
\left|
\sum_{\boldsymbol{t}\in\mathcal{W}^{\yd{\varsigma}}}
f_n(\boldsymbol{t})
\boldsymbol{r}^{\boldsymbol{t}}
-
\sum_{\boldsymbol{t}\in\mathbb{Z}_{\geq0}^{D}}
f_n(\boldsymbol{t})
\boldsymbol{r}^{\boldsymbol{t}}
\right|
&\leq
\sum_{\boldsymbol{t}\notin\mathcal{W}^{\yd{\varsigma}}}
|f_n(\boldsymbol{t})|
\boldsymbol{r}^{\boldsymbol{t}} \\
&\leq
\sum_{\boldsymbol{t}\notin[0,L]^D}
|f_n(\boldsymbol{t})|
\boldsymbol{r}^{\boldsymbol{t}} \\
&\leq
\sum_{\boldsymbol{t}\notin[0,L]^D}
C_{d,\epsilon} s^{\|\boldsymbol{t}\|_1}r^{\|\boldsymbol{t}\|_1}.
\end{aligned}
\end{equation}
Since $sr<1$,
\begin{equation}
\sum_{\boldsymbol{t}\notin[0,L]^D}
(sr)^{\|\boldsymbol{t}\|_1}
\leq
D
\left(\sum_{t>L}(sr)^t\right)
\left(\sum_{t\geq0}(sr)^t\right)^{D-1}
=
\bigO[d,\mathbf{r},\epsilon,s][n]{\exp(-n)}.
\end{equation}
The constants in this estimate have the dependence specified in the theorem statement and are uniform over $\yd{\varsigma}$ with $\overline{\yd{\varsigma}}\in\mathsf{R}_\epsilon$.
This proves the unnormalized estimate.

Applying the same estimate to $f_n\equiv 1$ gives
\begin{equation}
\widetilde{s^{\yd{\varsigma}}}
=
\Lambda(\boldsymbol{r}) + \bigO[d,\mathbf{r},\epsilon,s][n]{\exp(-n)}.
\end{equation}
Since $\Lambda(\boldsymbol{r})>0$, dividing the unnormalized estimate by the normalization gives the result.
\end{proof}

\begin{corollary}[Weyl average with a controlled remainder]
\label{cor:weyl_average_main_term_remainder}
Let $\boldsymbol{q}$, $\boldsymbol{r}$, and $\mathsf{R}_{\epsilon}$ be as in~\cref{thm:asymptotics_weyl_distribution}.
Let $f_n:\mathbb{Z}_{\geq0}^{D}\to\mathbb{R}$ admit a decomposition $f_n(\boldsymbol{t})=g_n(\boldsymbol{t})+R_n(\boldsymbol{t})$.
Assume that there is $s>1$ with $sr<1$ such that, uniformly in $n$, $\boldsymbol{t}$, and $\overline{\yd{\varsigma}}\in\mathsf{R}_{\epsilon}$,
\begin{equation}
|g_n(\boldsymbol{t})|
=
\bigO[d,\epsilon][n]{s^{\|\boldsymbol{t}\|_1}}.
\end{equation}
Assume moreover that, for some sequence $\alpha_n\to0$ and some $N_{d,\epsilon}$, we have $\alpha_n\geq\exp(-n)$ for all $n\geq N_{d,\epsilon}$, and
\begin{equation}
|R_n(\boldsymbol{t})|
=
\bigO[d,\epsilon][n]{\alpha_n s^{\|\boldsymbol{t}\|_1}}.
\end{equation}
uniformly in $\boldsymbol{t}$ and $\overline{\yd{\varsigma}}\in\mathsf{R}_{\epsilon}$. Then, uniformly for all $\yd{\varsigma}$ with $\overline{\yd{\varsigma}}\in\mathsf{R}_\epsilon$,
\begin{equation}
    \weylavg{f_n(\boldsymbol{t})}{\boldsymbol{q}}{\yd{\varsigma}}
    =
    \left\langle
    g_n(\boldsymbol{t})
    \right\rangle_{\mathrm{G}(\boldsymbol{r})}
    +
    \bigO[d,\mathbf{r},\epsilon,s][n]{\alpha_n}.
\end{equation}
\end{corollary}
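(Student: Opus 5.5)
The plan is to reduce the corollary to \cref{thm:asymptotics_weyl_distribution} by linearity of the Weyl average, handling $g_n$ and $R_n$ separately. Write
\begin{equation}
\weylavg{f_n(\boldsymbol{t})}{\boldsymbol{q}}{\yd{\varsigma}}
=
\weylavg{g_n(\boldsymbol{t})}{\boldsymbol{q}}{\yd{\varsigma}}
+
\weylavg{R_n(\boldsymbol{t})}{\boldsymbol{q}}{\yd{\varsigma}} \, .
\end{equation}

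For the first term, the hypothesis $|g_n(\boldsymbol{t})|=\bigO[d,\epsilon][n]{s^{\|\boldsymbol{t}\|_1}}$ together with $sr<1$ is exactly the assumption of \cref{thm:asymptotics_weyl_distribution}. That theorem gives, uniformly on $\mathsf{R}_\epsilon$,
\begin{equation}
\weylavg{g_n(\boldsymbol{t})}{\boldsymbol{q}}{\yd{\varsigma}}=\langle g_n(\boldsymbol{t})\rangle_{\mathrm{G}(\boldsymbol{r})}+\bigO[d,\mathbf{r},\epsilon,s][n]{\exp(-n)} \, .
\end{equation}
Since $\alpha_n\geq \exp(-n)$ for $n\geq N_{d,\epsilon}$, this exponential error is absorbed into $\bigO[d,\mathbf{r},\epsilon,s][n]{\alpha_n}$. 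If the exponential in the theorem carries a rate $c<1$, then $\exp(-cn)$ is not automatically dominated by $\exp(-n)$. The clean way around this is to bound $\exp(-cn)$ directly by the same route used for the remainder term below; I expect this constant bookkeeping to be the only delicate point.

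For the remainder, apply \cref{thm:asymptotics_weyl_distribution} to the rescaled function $\widetilde R_n\coloneqq R_n/\alpha_n$. This function satisfies $|\widetilde R_n(\boldsymbol{t})|=\bigO[d,\epsilon][n]{s^{\|\boldsymbol{t}\|_1}}$ uniformly, and since $\alpha_n$ does not depend on $\boldsymbol{t}$, linearity gives
\begin{equation}
\weylavg{R_n}{\boldsymbol{q}}{\yd{\varsigma}}=\alpha_n\Bigl(\langle \widetilde R_n\rangle_{\mathrm{G}(\boldsymbol{r})}+\bigO{\exp(-n)}\Bigr) \, .
\end{equation}
The geometric average is bounded by
\begin{equation}
|\langle \widetilde R_n\rangle_{\mathrm{G}(\boldsymbol{r})}|\le C_{d,\epsilon}\,\Lambda(\boldsymbol{r})^{-1}\sum_{\boldsymbol{t}}(sr)^{\|\boldsymbol{t}\|_1}\le C_{d,\epsilon}\,\Lambda(\boldsymbol{r})^{-1}(1-sr)^{-D} \, ,
\end{equation}
which is a constant depending only on $d,\mathbf{r},\epsilon,s$. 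Hence the remainder contributes $\bigO[d,\mathbf{r},\epsilon,s][n]{\alpha_n}$.

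Summing the two contributions yields the claim. Uniformity over $\overline{\yd{\varsigma}}\in\mathsf{R}_\epsilon$ is inherited from the theorem, because every constant used depends only on $d,\mathbf{r},\epsilon,s$. As a more robust alternative that bypasses any rate issue, one can bound $|\weylavg{R_n}{\boldsymbol{q}}{\yd{\varsigma}}|$ directly. The unnormalized sum over $\mathcal{W}^{\yd{\varsigma}}$ is at most $C\alpha_n(1-sr)^{-D}$, and the normalization satisfies $\widetilde{s^{\yd{\varsigma}}}\ge 1$, since it contains the leading term $\boldsymbol{t}=0$. I would use this direct bound for the $R_n$ term in the write-up.
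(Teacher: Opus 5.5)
Your route is essentially the paper's: split by linearity, apply \cref{thm:asymptotics_weyl_distribution} to the main part, and absorb its exponential error into $\alpha_n$. The paper applies the theorem to $f_n$ as a whole and then bounds $\langle R_n\rangle_{\mathrm{G}(\boldsymbol r)}$ with the remainder envelope. Your direct bound on the Weyl average of $R_n$, via $\sum_{\boldsymbol t\in\mathcal W^{\yd{\varsigma}}}|R_n(\boldsymbol t)|\,\boldsymbol r^{\boldsymbol t}\le C\alpha_n(1-sr)^{-D}$ and $\widetilde{s^{\yd{\varsigma}}}\ge 1$, is correct and equivalent.

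\textbf{The gap.} The gap is exactly the step you flagged, and your patch does not close it. The error in \cref{thm:asymptotics_weyl_distribution} is $Ce^{-cn}$, with rate set by the cutoff $L=\lfloor\epsilon n/(2D)\rfloor$, roughly $c\approx\epsilon\ln(1/(sr))/(2D)$. This is not slack in the proof. It is the geometric mass lying beyond the admissible domain, which in general is genuinely of size $r^{\Theta(\epsilon n)}$. Your remainder route cannot bound it. That route only controls functions that are already pointwise $O(\alpha_n s^{\|\boldsymbol t\|_1})$, whereas $g_n$ is of order one and its Weyl-versus-geometric discrepancy is precisely this tail.

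\textbf{The statement needs a stronger hypothesis.} As literally stated, the corollary is false when $\alpha_n$ is exponentially small. Take $d=2$, so $D=1$ and the Weyl distribution is the ratio-$r$ geometric distribution truncated to $\mathcal W^{\yd{\varsigma}}=\{0,\ldots,\Delta_{1,2}\}$. Let $f_n(t)=g_n(t)=t$, $R_n\equiv 0$ and $\alpha_n=e^{-n}$. All hypotheses hold, yet
\[
\frac{r}{1-r}-\weylavg{t}{\boldsymbol{q}}{\yd{\varsigma}}=\frac{(\Delta_{1,2}+1)\,r^{\Delta_{1,2}+1}}{1-r^{\Delta_{1,2}+1}} .
\]
For every $n$ there is a sector in $\mathsf R_\epsilon$ with $\epsilon n\le\Delta_{1,2}\le\epsilon n+2$. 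So the uniform error is at least of order $n\,r^{\epsilon n}$, which is not $O(e^{-n})$ once $\epsilon\ln(1/r)<1$; that is the natural small-$\epsilon$ regime.

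The paper's one-line proof makes the same silent identification of the literal bound $\alpha_n\ge e^{-n}$ with the convention $\bigO{\exp(-n)}$, which really means $O(e^{-cn})$. The repair belongs in the hypothesis. Either require $\alpha_n\ge e^{-cn}$ for the rate $c=c_{d,\mathbf r,\epsilon,s}$ delivered by the theorem, or simply require $\alpha_n$ to decay subexponentially, as it does in every application ($\alpha_n=n^{-1}$ or $n^{-2}$). With that change the absorption is immediate for $n$ beyond a threshold depending on $d,\mathbf r,\epsilon,s$, and the rest of your argument goes through unchanged.
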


\begin{proof}
For all sufficiently large $n$, the assumptions give $|f_n(\boldsymbol{t})|=\bigO[d,\epsilon][n]{s^{\|\boldsymbol{t}\|_1}}$.
Thus~\cref{thm:asymptotics_weyl_distribution}, linearity, the remainder bound, and the assumption $\alpha_n\geq\exp(-n)$ give the claimed estimate directly.
\end{proof}

In order to make use of~\cref{thm:asymptotics_weyl_distribution,cor:weyl_average_main_term_remainder}, we need to reparametrize the expression for the fidelity given in~\cref{thm:utility_component}. To this end, we first discuss the path-graph structure specific to our problem, see~\cref{equ:permutation}. Thus,
\begin{equation}
\label{equ:spectrum_ordering_q}
    q_{1} > \ldots > q_{k-1} > q_{d} > q_{k} > \ldots > q_{d-1} \geq 0 \, ,
\end{equation}
An example for a path graph of such an ordering is given in~\cref{fig:path_graph_fidelity}. With this choice, the GT lattice naturally decomposes into two disjoint regions, corresponding to edges of type \(/\) and \(\setminus\).
\begin{figure}[H]
    \centering
    \centering\includegraphics[scale=0.6]{"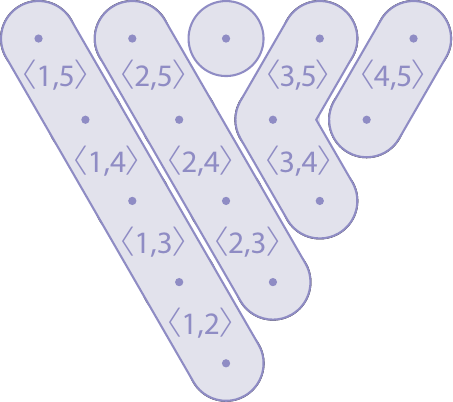"}
    \caption{Illustration of the path graph used to parametrize the summation variables for the case of $k=3$ and $d=5$.
    Edges of type \(/\) and \(\setminus\) define two summation regions that are evaluated separately.}
    \label{fig:path_graph_fidelity}
\end{figure}
According to this split, we separate the GT variables into the two regions depending on whether $i$ lies below or above $k$. We can rewrite the expression from~\cref{thm:utility_component} as
{\footnotesize\begin{equation}
\label{equ:nm_factorization}
\begin{aligned}
    &\res{f}{\yd{\mu}}{\yd{m}}{\yd{\varsigma}}_\mathrm{all}\mleft(\wt{w}\mright)\\
    =&\binom{m}{\mathbf m}\prod_{\ell=1}^d
    \frac{
      \prod_{i=1,i\neq\ell}^{k-1}
      \rpoch{\widetilde w_{i,d-1}-\widetilde w_{\ell,d}}{m_\ell}
      \;\;
      \prod_{i=k,i\neq\ell-1}^{d-1}
      \rpoch{\widetilde w_{i,d-1}-\widetilde w_{\ell,d}}{m_\ell}
    }{
      \prod_{i=1,i\neq \ell}^{k-1}
      \rpoch{\widetilde w_{i,d}-\widetilde w_{\ell,d}+1}{m_\ell}
      \prod_{i=k+1,i\neq \ell}^{d}
      \rpoch{\widetilde w_{i,d}-\widetilde w_{\ell,d}+1}{m_\ell}
    }\mleft\{\begin{aligned}
      &\frac{\rpoch{\widetilde w_{\ell,d-1}-\widetilde w_{\ell,d}}{m_\ell}}
      {\rpoch{\widetilde w_{k,d}-\widetilde w_{\ell,d}+1}{m_\ell}},
      && \ell<k,\\
      &\frac{\rpoch{\widetilde w_{\ell-1,d-1}-\widetilde w_{\ell,d}}{m_\ell}}
      {\rpoch{\widetilde w_{k,d}-\widetilde w_{\ell,d}+1}{m_\ell}},
      && \ell>k,\\
      &1, && \ell=k,
    \end{aligned}\mright.
    \\[10pt]
    =&
    \binom{m}{\mathbf m}\prod_{\ell=1}^d
      \prod_{i=1,i\neq\ell}^{k-1}
      \prod_{r=1}^{m_\ell}
      \mleft(1-\frac{t_{i,d}+1}{\Delta_{i,\ell}-i+\ell+r}\mright)
      \prod_{i=k,i\neq \ell-1}^{d-1}
      \prod_{r=1}^{m_\ell}
      \mleft(1+\frac{t_{i,d}}{\Delta_{i+1,\ell}-i+\ell+r-1}\mright)
      \mleft\{\begin{aligned}
      &\frac{\rpoch{-t_{\ell,d}}{m_\ell}}{\rpoch{\Delta_{k,\ell}-k+\ell+1}{m_\ell}},
      && \ell<k,\\
      &\frac{\rpoch{t_{\ell-1,d}+1}{m_\ell}}{\rpoch{\Delta_{k,\ell}-k+\ell+1\,}{m_\ell}},
      && \ell>k,\\
      &1, && \ell=k,
    \end{aligned}\mright..
\end{aligned}\normalsize
\end{equation}
}
In the final step, we expanded the components of $\wt{w}$ using the path-graph parametrization, expressing the result in terms of the edge variables $t_{i,d}$ and the YD row gaps.

\subsection{Intensive outputs}
\label{subsubsec:intensive_supp}

We first consider intensive outputs where $m$ is constant. To clarify the asymptotic behavior, we begin with the case $m=1$, where a single box is removed and~\cref{equ:nm_factorization} simplifies according to the choice of the removed box.
We start with the removal of one box from the target row, i.e., the $k$-th row. In this case, we have

\begin{equation}\begin{aligned}
    \res{f}{\yd{\varsigma - \boldsymbol{e}_{k}}}{\yd{1}}{\yd{\varsigma}}_\mathrm{all}(\boldsymbol{t})
    &=
    \prod_{i=1}^{k-1}\mleft(1-\frac{t_{i,d}+1}{\Delta_{i,k}+k-i+1}\mright)
    \prod_{i=k}^{d-1}\mleft(1-\frac{t_{i,d}}{\Delta_{k,i+1}+i-k}\mright) \\
    &=1
    -\sum_{i=1}^{k-1}\frac{t_{i,d}+1}{\Delta_{i,k}}
    -\sum_{i=k}^{d-1}\frac{t_{i,d}}{\Delta_{k,i+1}} + \bigO[d,\epsilon,s][n]{n^{-2}s^{\|\boldsymbol{t}\|_1}} \, .
\end{aligned}\end{equation}

\begin{lemma}[Uniform expansion of shifted gap denominators]
\label{lem:uniform_shifted_gap_expansion}
Fix $\epsilon>0$, $\eta\in\mathbb{Z}_{\geq0}$, and $c>0$.
Let $n\geq 1$, let $\Delta$ be a positive integer, and write $\overline\Delta\coloneqq \Delta/n$.
Assume that $\overline\Delta\geq \epsilon$.
Then
\begin{equation}
\frac{1}{n\overline\Delta+c}
=
\sum_{i=0}^{\eta}
\frac{(-c)^i}{n^{i+1}\overline\Delta^{i+1}}
+
\bigO[\epsilon,c,\eta][n]{n^{-(\eta+2)}}.
\end{equation}
\end{lemma}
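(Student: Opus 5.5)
We will prove the statement with the exact finite geometric-series identity, so that the remainder is an explicit closed-form term that we then bound using only $\overline\Delta\geq\epsilon$. Set $x\coloneqq n\overline\Delta=\Delta>0$ and $u\coloneqq c/x$. The identity
\begin{equation}
\frac{1}{1+u}=\sum_{i=0}^{\eta}(-u)^i+\frac{(-u)^{\eta+1}}{1+u}
\end{equation}
holds for every $u\neq -1$; in particular it holds for every $u>0$, which is our situation since $c>0$ and $x>0$. Dividing by $x$ gives
\begin{equation}
\frac{1}{x+c}=\sum_{i=0}^{\eta}\frac{(-c)^i}{x^{i+1}}+\frac{(-c)^{\eta+1}}{x^{\eta+1}(x+c)} .
\end{equation}
Substituting $x=n\overline\Delta$ turns the finite sum into exactly the main term of the claim, $\sum_{i=0}^{\eta}(-c)^i/(n^{i+1}\overline\Delta^{i+1})$. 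It therefore remains only to bound the last term.

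For the remainder we use $x+c>x$ (because $c>0$) together with $x=n\overline\Delta\geq n\epsilon$. This gives
\begin{equation}
\left|\frac{(-c)^{\eta+1}}{x^{\eta+1}(x+c)}\right|\leq\frac{c^{\eta+1}}{x^{\eta+2}}\leq\frac{c^{\eta+1}}{\epsilon^{\eta+2}}\,n^{-(\eta+2)} .
\end{equation}
The constant $C=c^{\eta+1}\epsilon^{-(\eta+2)}$ depends only on $\epsilon,c,\eta$, and the bound holds for every $n\geq1$, so we may take the threshold $N=1$. In the notation of the paper, this is $\bigO[\epsilon,c,\eta][n]{n^{-(\eta+2)}}$. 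The bound is also uniform in $\Delta$, which is the property needed when the lemma is applied over the typical region $\mathsf{R}_\epsilon$.

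No step here is a real obstacle, because the proof uses no Taylor expansion and no smallness assumption on $c/x$. The one thing to check is that the prefactor is a genuine constant, with no hidden dependence on $\Delta$ or $n$. That holds because the bound depends on the ratio $c/x$ only through the lower bound $x\geq n\epsilon$. The lemma would need $|c|$ and a largeness condition if $c$ were allowed to be negative, but here $c>0$ by hypothesis.
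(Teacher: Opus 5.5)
Your proof is correct and is essentially the paper's argument: the same finite geometric identity in $c/\Delta$, followed by bounding the explicit remainder $(-c)^{\eta+1}/(\Delta^{\eta+1}(\Delta+c))$ using $\Delta\geq n\epsilon$. The paper differs only in taking $n$ large enough that $n\epsilon\geq 2c$ and then using $\Delta+c\geq\Delta/2$, which gives the constant $2c^{\eta+1}\epsilon^{-(\eta+2)}$. That step is superfluous under the stated hypothesis $c>0$, but, as you remark, it is what lets the lemma cover the signed shifts with $|c_i|\leq c_d$ to which it is applied later in the product-expansion theorem.
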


\begin{proof}
Choose $N_{\epsilon,c}$ so that $n\epsilon\geq 2c$ for all $n\geq N_{\epsilon,c}$.
Then $\Delta\geq n\epsilon$ implies $\Delta+c\geq\Delta/2>0$.
Since $\Delta=n\overline\Delta$, we have $1/(n\overline\Delta+c)=1/(\Delta+c)=\Delta^{-1}(1+c/\Delta)^{-1}$.
Using the finite geometric identity $(1+x)^{-1}=\sum_{i=0}^{\eta}(-x)^i+(-x)^{\eta+1}/(1+x)$ with $x=c/\Delta$, we get
\begin{equation}
\frac{1}{\Delta+c}
=
\sum_{i=0}^{\eta}
\frac{(-c)^i}{\Delta^{i+1}}
+
\frac{(-c)^{\eta+1}}{\Delta^{\eta+1}(\Delta+c)}.
\end{equation}
Thus $E^{(\eta)}_n=(-c)^{\eta+1}/(\Delta^{\eta+1}(\Delta+c))$.
Using $\Delta+c\geq \Delta/2$ and $\Delta=n\overline\Delta\geq n\epsilon$, we obtain
\begin{equation}
|E^{(\eta)}_n|
\leq
\frac{2c^{\eta+1}}{\Delta^{\eta+2}}
\leq
2c^{\eta+1}\epsilon^{-(\eta+2)}
n^{-(\eta+2)}.
\end{equation}
This proves the claim.
\end{proof}

\begin{theorem}[Uniform expansion bounds for finite products]
\label{thm:uniform_product_expansion_bounds}
Let $\mathsf{R}_{\epsilon}$ be the gap-separated typical region defined in~\cref{eq:gap_separated_typical_region}. Fix $\iota\in\mathbb{Z}_{\geq0}$.
For each $\overline{\yd{\varsigma}}\in\mathsf{R}_{\epsilon}$, consider a product
\begin{equation}
    f_n(\boldsymbol{t})
    =
    \prod_{i=1}^{\eta_{d,m}}
    \mleft(
    1\pm
    \frac{P_{i}(\boldsymbol{t})}
    {n\overline\Delta_{i}+c_{i}}
    \mright),
\end{equation}
where the signs may be chosen independently, $\eta_{d,m}$ depends on $d$ and $m$, each $\overline\Delta_i\geq\epsilon$, $c_i=\bigO[d]{1}$, and each $P_i(\boldsymbol{t})$ is a polynomial satisfying $P_i(\boldsymbol{t})=\bigO[d,\epsilon]{(1+\|\boldsymbol{t}\|_1)^{\mathfrak{q}}}$.
Then, for all sufficiently large $n$, the denominators $n\overline\Delta_i+c_i$ are positive. There are coefficient functions $f^{(0)},\ldots,f^{(\iota)}$ such that, for any fixed $s>1$, uniformly for $\overline{\yd{\varsigma}}\in\mathsf{R}_{\epsilon}$,
\begin{equation}
    f_n(\boldsymbol{t})
    =
    \sum_{j=0}^{\iota}
    \frac{1}{n^j}f^{(j)}(\boldsymbol{t})
    +
    \bigO[d,\epsilon,\iota][n]{n^{-(\iota+1)}s^{\|\boldsymbol{t}\|_1}}.
\end{equation}
Moreover,
\begin{equation}
    f^{(j)}(\boldsymbol{t})
    =
    \bigO[d,\epsilon,\iota]{s^{\|\boldsymbol{t}\|_1}},
    \qquad
    0\leq j\leq \iota.
\end{equation}
\end{theorem}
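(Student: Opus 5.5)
The plan is to expand each factor separately with \cref{lem:uniform_shifted_gap_expansion}, multiply the truncated expansions, and absorb everything polynomial in $\boldsymbol{t}$ into $s^{\|\boldsymbol{t}\|_1}$. Positivity of the denominators is immediate: $n\overline\Delta_i+c_i\geq n\epsilon-|c_i|>0$ once $n\geq 2\max_i|c_i|/\epsilon$, and this threshold depends only on $d$ and $\epsilon$.

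\textbf{Step 1: single factors.} For each factor, \cref{lem:uniform_shifted_gap_expansion} with $\eta=\iota$ gives
\begin{equation}
\frac{1}{n\overline\Delta_i+c_i}=\sum_{j=0}^{\iota}\frac{(-c_i)^j}{n^{j+1}\overline\Delta_i^{j+1}}+E_i,\qquad |E_i|\leq C_{\epsilon,\iota}\,n^{-(\iota+2)},
\end{equation}
uniformly in $\overline\Delta_i\geq\epsilon$. Hence the $i$-th factor can be written as $1\pm P_i(\boldsymbol t)\bigl(\sum_{j=1}^{\iota+1} a_{i,j}n^{-j}+E_i\bigr)$. The coefficients satisfy $|a_{i,j}|\leq C\epsilon^{-j}$, so they are bounded uniformly on $\mathsf{R}_\epsilon$.

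\textbf{Step 2: collecting orders.} Multiply the $\eta_{d,m}$ factors and collect powers of $n^{-1}$. Define $f^{(j)}(\boldsymbol t)$ as the coefficient of $n^{-j}$ among all products of the truncated pieces, keeping only terms with total order $\leq\iota$. Each $f^{(j)}$ is a finite sum, with at most $(\iota+2)^{\eta_{d,m}}$ terms, of products of at most $\eta_{d,m}$ polynomials $P_i$ times bounded coefficients. So $|f^{(j)}(\boldsymbol t)|\leq C(1+\|\boldsymbol t\|_1)^{\mathfrak q\eta_{d,m}}$. The remainder consists of two kinds of term. First, truncated terms of total order $\geq\iota+1$, each carrying at least $n^{-(\iota+1)}$. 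Second, terms containing some $E_i$, each carrying at least $n^{-(\iota+2)}$. Every remainder term also has a polynomial prefactor of degree at most $\mathfrak q\eta_{d,m}$ in $\|\boldsymbol t\|_1$ and bounded coefficients, so the remainder is $O(n^{-(\iota+1)}(1+\|\boldsymbol t\|_1)^{\mathfrak q\eta_{d,m}})$.

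\textbf{Step 3: polynomial to exponential.} For fixed $s>1$,
\begin{equation}
(1+x)^{N}\leq C_{N,s}\,s^{x}\quad\text{for all } x\geq0 .
\end{equation}
This holds because $(1+x)^N s^{-x}$ is continuous and tends to $0$. Applying it with $x=\|\boldsymbol t\|_1$ converts both the coefficient bounds and the remainder bound into the stated $s^{\|\boldsymbol t\|_1}$ form.

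The main obstacle is bookkeeping of dependencies. The constants must depend only on $(d,\epsilon,\iota)$, and on $m$ through $\eta_{d,m}$, with $s$ treated as fixed, uniformly over $\overline{\yd{\varsigma}}\in\mathsf{R}_\epsilon$. This requires the number of factors, the degrees $\mathfrak q$, the implicit constants of the $P_i$, and the bounds on the $c_i$ to be fixed independently of $\boldsymbol{\varsigma}$ and $n$. I would make this explicit by recording the combinatorial count $(\iota+2)^{\eta_{d,m}}$ and the degree bound $\mathfrak q\eta_{d,m}$. A secondary concern is that the statement allows the signs to be chosen independently. This does not affect the argument, since only absolute values enter the bounds.
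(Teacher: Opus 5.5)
Your proposal is correct and follows essentially the same route as the paper: expand each reciprocal $1/(n\overline\Delta_i+c_i)$ via \cref{lem:uniform_shifted_gap_expansion}, multiply the finitely many factors and collect powers of $n^{-1}$, bound the coefficients and the remainder by $(1+\|\boldsymbol t\|_1)^{\mathfrak q\eta_{d,m}}$, and absorb this into $s^{\|\boldsymbol t\|_1}$; the paper truncates the reciprocal one order earlier than you do, which already suffices. Your remark that the constants also depend on $m$ through $\eta_{d,m}$ is a fair sharpening of the paper's bookkeeping, and there are two small fixes: your $C_{\epsilon,\iota}$ should read $C_{d,\epsilon,\iota}$ because of the bound on $|c_i|$, and the lemma should be applied with $|c_i|$, since it is stated only for $c>0$ while the $c_i$ here may be negative.
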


\begin{proof}
Since $|c_i|\leq c_d$ and $\overline\Delta_{i}\geq\epsilon$, the threshold in~\cref{lem:uniform_shifted_gap_expansion} can be chosen uniformly in $i$ and $\overline{\yd{\varsigma}}\in\mathsf{R}_{\epsilon}$.
\begin{equation}
\frac{1}{n\overline\Delta_{i}+c_{i}}
=
\sum_{j=0}^{\iota-1}
\frac{(-c_{i})^j}{n^{j+1}\overline\Delta_{i}^{j+1}}
+
\bigO[d,\epsilon,\iota][n]{n^{-(\iota+1)}}.
\end{equation}
Multiplying by the signed polynomial $\pm P_{i}(\boldsymbol{t})$ gives
\begin{equation}
1\pm
\frac{P_{i}(\boldsymbol{t})}
{n\overline\Delta_{i}+c_{i}}
=
1+
\sum_{j=1}^{\iota}
\frac{Q_{i,j}(\boldsymbol{t})}{n^j}
 + \bigO[d,\epsilon,\iota][n]{n^{-(\iota+1)}(1+\|\boldsymbol{t}\|_1)^{\mathfrak{q}}},
\end{equation}
where $Q_{i,j}(\boldsymbol{t})
    =
    \bigO[d,\epsilon,\iota]{(1+\|\boldsymbol{t}\|_1)^{\mathfrak{q}}}.$
Set $\mathfrak{p}_{d,m}\coloneqq \eta_{d,m}\mathfrak{q}$.
Since the number of factors is $\eta_{d,m}$, multiplying them gives only finitely many terms depending on $d$, $m$, and $\iota$.
Collecting terms of total order $n^{-j}$ for $0\leq j\leq \iota$ defines $f^{(j)}(\boldsymbol{t})$.
Each coefficient is a finite sum of products of the polynomials $Q_{i,j}$, so
\begin{equation}
    f^{(j)}(\boldsymbol{t})
    =
    \bigO[d,\epsilon,\iota]{(1+\|\boldsymbol{t}\|_1)^{\mathfrak{p}_{d,m}}}.
\end{equation}
The discarded terms have total order at least $n^{-(\iota+1)}$ and contribute
\begin{equation}
    \bigO[d,\epsilon,\iota][n]{n^{-(\iota+1)}(1+\|\boldsymbol{t}\|_1)^{\mathfrak{p}_{d,m}}}.
\end{equation}
Finally, exponential growth dominates polynomial growth, so
\begin{equation}
    (1+\|\boldsymbol{t}\|_1)^{\mathfrak{p}_{d,m}}
    \leq
    C_{d,\epsilon,\iota}s^{\|\boldsymbol{t}\|_1}
\end{equation}
for all $\boldsymbol{t}\in\mathbb{Z}_{\geq0}^{D}$.
This gives the claimed constants, uniformly over $\overline{\yd{\varsigma}}\in\mathsf{R}_{\epsilon}$.
\end{proof}
\begin{theorem}[Single-output intensive all-site asymptotics]
\label{thm:single_output_intensive_all_site_asymptotics}
For \(m=1\), the all-site utility of the overhang removal protocol satisfies
\begin{equation}
\label{equ:intensive_asymptotic_overall_fidelity}
     ^k\mathcal{F}_\mathrm{all}=1-\frac{1}{n}\sum_{i=1,i\neq k}^{d}\frac{p_i}{D_{k,i}^2}+\lilo[d,\mathbf{r}]{n^{-1}} \, .
\end{equation}
\end{theorem}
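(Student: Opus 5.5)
The plan is to combine the sector-wise expansion just displayed for $\res{f}{\yd{\varsigma - \boldsymbol{e}_{k}}}{\yd{1}}{\yd{\varsigma}}_\mathrm{all}(\boldsymbol{t})$ with the geometric approximation of Weyl averages (\cref{thm:asymptotics_weyl_distribution}, \cref{cor:weyl_average_main_term_remainder}). The resulting uniform sector-wise expansion is then averaged over the SW distribution via \cref{lem:sw_averaging_uniform_sector_expansion}.

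\textbf{Step 1: sector-wise expansion.} For $m=1$ the overhang removal rule removes one box from row $k$ whenever $\Delta_{k,k+1}\geq1$, which holds on $\mathsf{R}_\epsilon$. The utility component is the product displayed above. We apply \cref{thm:uniform_product_expansion_bounds} with $\iota=1$, fixing some $s>1$ with $sr<1$ (possible since $r=\max r_{a,b}<1$ for a nondegenerate spectrum). This gives
\[
f(\boldsymbol{t})=g_n(\boldsymbol{t})+R_n(\boldsymbol{t}),\qquad g_n(\boldsymbol{t})=1-\sum_{i=1}^{k-1}\frac{t_{i,d}+1}{\Delta_{i,k}}-\sum_{i=k}^{d-1}\frac{t_{i,d}}{\Delta_{k,i+1}},
\]
with $|R_n|=\bigO[d,\epsilon]{n^{-2}s^{\|\boldsymbol{t}\|_1}}$ uniformly on $\mathsf{R}_\epsilon$. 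Since $\Delta\ge n\epsilon$, the linear terms in $g_n$ are also $\bigO{s^{\|\boldsymbol{t}\|_1}}$. \cref{cor:weyl_average_main_term_remainder} with $\alpha_n=n^{-2}$ then reduces the Weyl average to $\langle g_n\rangle_{\mathrm{G}(\boldsymbol{r})}+\bigO{n^{-2}}$.

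\textbf{Step 2: geometric average.} Under $\mathrm{G}(\boldsymbol{r})$ the coordinates $t_{a,b}$ are independent geometric variables with $\langle t_{a,b}\rangle=r_{a,b}/(1-r_{a,b})$. We use the ordering \cref{equ:spectrum_ordering_q}: $q_i=p_i$ for $i<k$, $q_d=p_k$, and $q_i=p_{i+1}$ for $k\le i\le d-1$.

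For $i<k$ the edge $\graphpath{i}{d}$ satisfies $q_i>q_d$, so $r_{i,d}=p_k/p_i$. Then $\langle t_{i,d}\rangle+1=1/(1-p_k/p_i)=p_i/D_{i,k}$, which gives the term $\frac{p_i}{\Delta_{i,k}D_{i,k}}$.

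For $k\le i\le d-1$ we have $r_{i,d}=p_{i+1}/p_k$, so $\langle t_{i,d}\rangle=p_{i+1}/D_{k,i+1}$, which gives $\frac{p_{i+1}}{\Delta_{k,i+1}D_{k,i+1}}$.

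I expect the only subtle point to be the bookkeeping of the path-graph edge types and the identification of the $+1$ shift for rows above $k$. Reindexing, this yields
\[
{}^k f_{\rm all}^{\yd{\varsigma}}=1-\sum_{i\neq k}\frac{1}{|\Delta_{k,i}|}\frac{p_i}{|D_{k,i}|}+\bigO[d,\mathbf r,\epsilon]{n^{-2}},
\]
uniformly on $\mathsf{R}_\epsilon$.

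\textbf{Step 3: SW averaging.} Write $1/|\Delta_{k,i}|=1/(n|\overline\Delta_{k,i}|)$, so the sector-wise utility has the form $1-c(\overline{\yd{\varsigma}})/n+\lilo{n^{-1}}$ with
\[
c(\boldsymbol{x})=\sum_{i\ne k}\frac{p_i}{|D_{k,i}|\,|x_k-x_i|}.
\]
This $c$ is Lipschitz on $\mathsf{R}_\epsilon$ because the gaps there are bounded below by $\epsilon$. Applying \cref{lem:sw_averaging_uniform_sector_expansion} gives $c(\boldsymbol{p})=\sum_{i\neq k}p_i/D_{k,i}^2$, which is the claimed formula. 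The exponentially small atypical mass, together with $0\le f\le1$, is absorbed into the $\lilo[d,\mathbf{r}]{n^{-1}}$ term.
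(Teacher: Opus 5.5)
Your proposal is correct and follows the paper's proof essentially step for step. You expand the $m=1$ product via \cref{thm:uniform_product_expansion_bounds} with $\iota=1$, reduce to the geometric average via \cref{cor:weyl_average_main_term_remainder} using $\langle t_{i,d}\rangle_{\mathrm{G}(\boldsymbol{r})}=r_{i,d}/(1-r_{i,d})$ together with the ordering \cref{equ:spectrum_ordering_q}, and then average the uniform $O(n^{-2})$ sector-wise expansion over the SW distribution with \cref{lem:sw_averaging_uniform_sector_expansion}, including explicit hypothesis checks (the choice $s>1$ with $sr<1$, the $s^{\|\boldsymbol{t}\|_1}$ envelope for $g_n$, and the Lipschitz property of $c$ on $\mathsf{R}_\epsilon$) that the paper leaves implicit.
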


\begin{proof}
Applying~\cref{thm:uniform_product_expansion_bounds} with $\iota=1$, the product admits an expansion whose coefficients and tail term satisfy the hypotheses of~\cref{cor:weyl_average_main_term_remainder}.
This controls the tail by $\bigO[d,\mathbf{r},\epsilon][n]{n^{-2}}$.
In addition, we have
\begin{equation}
    \mleft\langle t_{i,d} \mright\rangle_{\mathrm{G}(\boldsymbol{r})} = \frac{r_{i,d}}{1-r_{i,d}} \, .
\end{equation}
Using~\cref{equ:spectrum_ordering_q} gives
\begin{equation}
    ^k\res{f_\mathrm{all}}{\yd{\varsigma-\mathbf{e}_k}}{\yd{m}}{\yd{\varsigma}} = \weylavg{\res{f}{\yd{\varsigma - \boldsymbol{e}_{k}}}{\yd{1}}{\yd{\varsigma}}_\mathrm{all}\mleft(\wt{w}\mright)}{\boldsymbol{q}}{\yd{\varsigma}}\!\!\! = 1 - \sum_{i=1}^{k-1}\frac{1}{\Delta_{i,k}}\frac{q_{i}}{q_i-q_d} - \sum_{i=k}^{d-1}\frac{1}{\Delta_{k,i+1}}\frac{q_{i}}{q_d-q_{i}}
    + \bigO[d,\mathbf{r},\epsilon][n]{n^{-2}} \, .
\end{equation}
Substituting the sorted spectrum $\boldsymbol{p}$ from~\cref{equ:permutation}, we get
\begin{equation}
\label{equ:intensive_asymptotic_sector_wise_fidelity}
    ^k\res{f_\mathrm{all}}{\yd{\varsigma-\mathbf{e}_k}}{\yd{m}}{\yd{\varsigma}} = 1 - \sum_{i\neq k}\frac{1}{\Delta_{k,i}}\frac{p_{i}}{D_{k,i}}
    + \bigO[d,\mathbf{r},\epsilon][n]{n^{-2}} \, .
\end{equation}
Applying~\cref{lem:sw_averaging_uniform_sector_expansion} to~\cref{equ:intensive_asymptotic_sector_wise_fidelity} proves~\cref{equ:intensive_asymptotic_overall_fidelity}.
\end{proof}
We consider the case where an off-target box on some $\ell$-th row is removed, i.e., $\ell\neq k$. In this case, the fidelity function is given by
\begin{equation}
\label{equ:off_target_removal_asymptotic_one}
\begin{aligned}
    \res{f}{\yd{\varsigma - \boldsymbol{e}_{\ell}}}{\yd{1}}{\yd{\varsigma}}_\mathrm{all}(\boldsymbol{t})
    &=
      \prod_{i=1,i\neq\ell}^{k-1}
    \mleft(1-\frac{t_{i,d}+1}{\Delta_{i,\ell}+\ell-i+1}\mright)
      \prod_{i=k,i\neq \ell-1}^{d-1}
    \mleft(1+\frac{t_{i,d}}{\Delta_{i+1,\ell}+\ell-i}\mright) \mleft\{\begin{aligned}
      &\frac{-t_{\ell,d}}{\Delta_{k,\ell}+\ell-k+1},
      && \ell<k,
      \\[4pt]
      &\frac{t_{\ell-1,d}+1}{\Delta_{k,\ell}+\ell-k+1},
      && \ell>k,
    \end{aligned}\mright.\\
    &= 0+ \mleft\{
    \begin{aligned}
    &\frac{t_{\ell,d}}{\Delta_{\ell,k}},
    && \ell < k,\\\\[6pt]
    &\frac{t_{\ell-1,d}+1}{\Delta_{k,\ell}},
    && \ell > k.
    \end{aligned}
    \mright.
    + \bigO[d,\epsilon,s][n]{n^{-2}s^{\|\boldsymbol{t}\|_1}} \, .
\end{aligned}
\end{equation}
By similar arguments as above, we obtain for the sorted spectrum $\boldsymbol{p}$ that
\begin{equation}
    ^k\res{f_\mathrm{all}}{\yd{\varsigma-\mathbf{e}_\ell}}{\yd{m}}{\yd{\varsigma}}
    =
    \frac{1}{\Delta_{k,\ell}} \frac{p_k}{D_{k,\ell}}
    + \bigO[d,\mathbf{r},\epsilon][n]{n^{-2}} \, .
\end{equation}
The sector-wise utilities we have calculated here give us the overlap of the output with the $k$-th eigenstate of the input. In addition, the output $\rho^\prime$ is diagonal in this eigenbasis, see Ref.~\cite[Lemma~S1.11]{CI26S}. Therefore the leading-order structure of the output state is immediately fixed:
\begin{equation}
  \rho^\prime=\psi_k+\sum_{i\neq k}\frac{1}{\Delta_{k,i}}\frac{p_{i}}{D_{k,i}}\mleft(\psi_i-\psi_k\mright) + \bigO[d,\mathbf{r},\epsilon][n]{n^{-2}} \, .
\end{equation}
This expansion also allows other figures of merit to be evaluated directly. Following the discussion in the main text on robustness (see~\cref{sec:general_QPA}), we obtain e.g. the trace distance asymptotics
\begin{equation}
\frac{1}{2}\|\rho^\prime-\psi_k\|_1
 =
 \sum_{i\neq k}\frac{1}{\Delta_{k,i}}\frac{p_{i}}{D_{k,i}} + \bigO[d,\mathbf{r},\epsilon][n]{n^{-2}} \, .
\end{equation}

With this picture in place, we turn to the case of constant $m\geq 1$, as characterized by the following theorem.
\begin{theorem}[Constant-output intensive sector-wise all-site expansion]
\label{thm:constant_output_intensive_sector_all_site_expansion}
For all sufficiently large \(n\), the overhang removal irrep \(\yd{\mu}\) removes all boxes from row \(k\), and
\begin{equation}
    \res{^kf_\mathrm{all}}{\yd{\varsigma}}{\yd{m}}{\yd{\mu}} = 1 - \sum_{i\neq k}\frac{m}{\Delta_{k,i}}\frac{p_i}{D_{k,i}} + \bigO[d,\mathbf{r},\epsilon][n]{n^{-2}} \, .
\end{equation}
If the \(m\) boxes are removed from an off-target row \(\ell\neq k\), the contribution is of order \(\bigO[d,\mathbf{r},\epsilon][n]{n^{-2}}\). If the \(m\) boxes are removed from multiple rows, the contribution is of order \(\bigO[d,\mathbf{r},\epsilon][n]{n^{-1}}\).
\end{theorem}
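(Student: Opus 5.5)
The argument mirrors the $m=1$ case. We specialise the factorised form \cref{equ:nm_factorization} to each removal pattern, expand it with \cref{thm:uniform_product_expansion_bounds}, and then average the expansion over the geometric distribution using \cref{thm:asymptotics_weyl_distribution,cor:weyl_average_main_term_remainder}.

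\textbf{Step 1: which rows the overhang removal rule touches.} On $\mathsf{R}_\epsilon$ we have $\Delta_{k,k+1}\ge n\epsilon$, so for $n> m/\epsilon$ the terminal index is $i^\ast=k$. Hence $\yd{\mu}=\yd{\varsigma-m\mathbf{e}_k}$, which is the first claim.

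\textbf{Step 2: removal of all $m$ boxes from row $k$.} By \cref{cor:utility_component_single_row}, with $\ell=k$ the multinomial is $1$ and the $\ell=k$ bracket equals $1$. What remains is the product over $r=1,\ldots,m$ of
\[
\Bigl(1-\tfrac{t_{i,d}+1}{\Delta_{i,k}-i+k+r}\Bigr)\ \text{for } i<k,
\qquad
\Bigl(1-\tfrac{t_{i,d}}{\Delta_{k,i+1}+\cdots}\Bigr)\ \text{for } i\ge k .
\]
For $i\ge k$ the sign is negative after rewriting, as in the $m=1$ display.

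This is a finite product of the form required by \cref{thm:uniform_product_expansion_bounds}: there are $\eta_{d,m}=m(d-1)$ factors, with linear polynomials in $\boldsymbol{t}$ and $c_i=\bigO[d]{1}$ since $m$ is fixed. Taking $\iota=1$ gives
\[
f=1-\sum_{r=1}^m\Bigl(\sum_{i<k}\tfrac{t_{i,d}+1}{\Delta_{i,k}}+\sum_{i\ge k}\tfrac{t_{i,d}}{\Delta_{k,i+1}}\Bigr)+\bigO[][n]{n^{-2}s^{\|\boldsymbol t\|_1}}.
\]
The $r$-dependence of the denominators only enters at order $n^{-2}$.

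Averaging with $\langle t_{i,d}\rangle_{\mathrm G}=r_{i,d}/(1-r_{i,d})$ and choosing $s>1$ with $sr<1$, \cref{cor:weyl_average_main_term_remainder} produces $m$ copies of the $m=1$ correction. Undoing the relabelling \cref{equ:permutation} exactly as in \cref{thm:single_output_intensive_all_site_asymptotics} then gives
\[
1-\sum_{i\neq k}\frac{m}{\Delta_{k,i}}\frac{p_i}{D_{k,i}}+\bigO[d,\mathbf r,\epsilon][n]{n^{-2}}.
\]

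\textbf{Step 3: removal of all $m$ boxes from an off-target row $\ell\neq k$.} The $\ell$-bracket in \cref{equ:nm_factorization} is a ratio of rising factorials. Its numerator, $\rpoch{-t_{\ell,d}}{m}$ or $\rpoch{t_{\ell-1,d}+1}{m}$, is a polynomial of degree $m$ in $\boldsymbol{t}$. Its denominator is a product of $m$ factors, each $\Delta_{k,\ell}+\bigO{1}\ge n\epsilon/2$.

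So this factor is $\bigO[][n]{n^{-m}s^{\|\boldsymbol t\|_1}}$, while every other factor is $\bigO[][n]{s^{\|\boldsymbol t\|_1}}$ by the same expansion. After Weyl averaging via \cref{thm:asymptotics_weyl_distribution} the contribution is $\bigO[][n]{n^{-m}}$.

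This is $\bigO[][n]{n^{-2}}$ for $m\ge2$. For $m=1$ it is the $\bigO{n^{-1}}$ term already computed in \cref{equ:off_target_removal_asymptotic_one}, so the $\bigO[][n]{n^{-2}}$ claim should be read as applying to $m\ge 2$; I would flag this in the write-up.

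\textbf{Step 4: removal of boxes from several rows.} In the mixed case at least one off-target row $\ell$ has $1\le m_\ell$, and it contributes a factor $\rpoch{\cdot}{m_\ell}/\rpoch{\Delta_{k,\ell}+\cdots}{m_\ell}=\bigO[][n]{n^{-m_\ell}s^{\|\boldsymbol t\|_1}}$. The multinomial is bounded by $m!$, a constant.

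The expected main obstacle is sign and polynomial-growth bookkeeping for this $\ell$-bracket. When $\ell<k$, the factor $\rpoch{-t_{\ell,d}}{m_\ell}$ vanishes or changes sign, so one bounds its absolute value by $(t_{\ell,d}+m)^{m}$ and absorbs it into $s^{\|\boldsymbol t\|_1}$. It must also be checked that the remaining factors stay uniformly bounded by $s^{\|\boldsymbol t\|_1}$ on $\mathsf R_\epsilon$, which again follows from \cref{thm:uniform_product_expansion_bounds} with $\iota=0$. The total is then $\bigO[d,\mathbf r,\epsilon][n]{n^{-1}}$, uniformly on $\mathsf R_\epsilon$.
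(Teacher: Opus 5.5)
Your proposal is correct and follows the paper's own route. You specialize \cref{equ:nm_factorization} to each removal pattern, expand with \cref{thm:uniform_product_expansion_bounds}, and average over $\mathrm{G}(\boldsymbol{r})$ via \cref{cor:weyl_average_main_term_remainder}, with more explicit bookkeeping than the paper's sketch. Your flag is also right: removing all boxes from a single off-target row gives $O(n^{-m})$, so the stated $O(n^{-2})$ holds only for $m\geq 2$, since for $m=1$ the paper's own \cref{equ:off_target_removal_asymptotic_one} yields a term of exact order $n^{-1}$.
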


\begin{proof}
For all sufficiently large $n$, we have $m\leq \Delta_{k,k+1}$, and for $\yd{\mu}$ given by the row-removal rule all boxes are removed from row $k$. This gives
\begin{equation}\begin{aligned}
    \res{f}{\yd{\mu}}{\yd{m}}{\yd{\varsigma}}_\mathrm{all}(\boldsymbol{t})
    &=
    \prod_{i=1}^{k-1}\prod_{r=1}^m\mleft(1-\frac{t_{i,d}+1}{\Delta_{i,k}+k-i+r}\mright)
    \prod_{i=k}^{d-1}\prod_{r=1}^m\mleft(1-\frac{t_{i,d}}{\Delta_{k,i+1}+i-k-r+1}\mright) \\
    &=1
    -m\mleft(\sum_{i=1}^{k-1}\frac{t_{i,d}+1}{\Delta_{i,k}}
    +\sum_{i=k}^{d-1}\frac{t_{i,d}}{\Delta_{k,i+1}}\mright) + \bigO[d,\epsilon,s][n]{n^{-2}s^{\|\boldsymbol{t}\|_1}} \, .
\end{aligned}\end{equation}
By using the same techniques as above, this proves the overhang removal expansion in the theorem statement.
When $m$ boxes are removed from $\ell\neq k$, there will be higher-order intensive terms with an intensive numerator and extensive denominator, resulting in a contribution of order $\bigO[d,\mathbf{r},\epsilon][n]{n^{-2}}$.

Finally, when $m$ boxes are removed from multiple rows (i.e., a mix of the two cases above), the resulting contribution will be a combination of these cases and therefore will be of order $\bigO[d,\mathbf{r},\epsilon][n]{n^{-1}}$.
\end{proof}

\subsection{Extensive removal}
\label{subsubsec:extensive_supp}
We consider extensive removals with $m_n=\Theta(n)$. For a removal rule given by sequences $\{m_{\ell,n}\}_{\ell=1}^d$, we define its limiting ray $\{R_\ell^\infty\}_{\ell=1}^d$ by $m_{\ell,n}/n\to R_\ell^\infty$ whenever these limits exist for all $\ell=1,\ldots,d$. We show that the overhang rule attains the limiting value in~\cref{equ:extensive_asymptote}, whereas distinct-ray competitors give exponentially suppressed contributions.

\subsubsection{Overhang Removal Ray}
\label{subsubsec:extensive_overhang_ray_supp}

For the overhang removal rule, only rows $k,\ldots,i^\ast$ are used. Along a sequence of outputs $m_n$, we write $m_n=\sum_{\ell=k}^{i^\ast}m_{\ell,n}$ and use the exact ratios $R_n=m_n/n$ and $R_{\ell,n}=m_{\ell,n}/n$, so that $R_n=\sum_{\ell=k}^{i^\ast}R_{\ell,n}$. Here $i^\ast$, $m_{\ell,n}$, and $R_{\ell,n}$ are determined by $m_n$ through the overhang removal rule, while the limiting ray is denoted by $R_\ell^\infty$.

With this notation,~\cref{equ:nm_factorization} reduces to
\begin{equation}
\label{equ:sector_wise_utility_incorporated_conditions}
  \begin{aligned}
\res{f}{\yd{\mu}}{\yd{m}}{\yd{\varsigma}}_\mathrm{all}(\boldsymbol{t})=&\binom{m}{\mathbf m}\;
  \prod_{\ell=k+1}^{{i^\ast}}
  \frac{\rpoch{t_{\ell-1,d}+1}{m_\ell}}{\rpoch{\Delta_{k,\ell}-k+\ell+1}{m_\ell}}\;
    \prod_{\substack{1<i\le k-1\\k\le \ell \le {i^\ast}-1}}
    \prod_{r=1}^{m_\ell}
    \mleft(1-\frac{t_{i,d}+1}{\Delta_{i,\ell}-i+\ell+r}\mright)\\
&\prod_{\Big\{\substack{k\le \ell \le {i^\ast}-1\\ k\le i\le d-1\\ i\neq \ell-1,\ell}\Big\}\ \bigcup\ \Big\{\substack{\ell={i^\ast}\\ k\le i\le d-1\\ i\neq {i^\ast}-1}\Big\}\ \bigcup\ \Big\{\substack{k\le \ell \le {i^\ast}-1\\ i=\ell}\Big\}}
\ \prod_{r=1}^{m_\ell}
\mleft(1-\frac{t_{i,d}}{\Delta_{\ell,i+1}+i-\ell-r+1}\mright).
  \end{aligned}
\end{equation}
We split the analysis into three parts: extensive products, delta terms, and intensive numerators.\\

\paragraph{Extensive products}
We first record the endpoint estimate used for the extensive products. These estimates are applied after the delta-function reduction in part (b).

\begin{lemma}[Geometric-type approximation for endpoint products]
\label{cor:signed_endpoint_products}
\label{lem:consecutive_endpoint_product}
Fix $\epsilon>0$, $c>0$, and $s>1$.
Let $R\geq0$, $\delta\geq\epsilon$, and assume $m=nR\in\mathbb{Z}_{\geq0}$.

For the two endpoint products, define positive integers
\begin{subequations}
\begin{align}
A_n^+&\coloneqq n\delta+c+1,
&
B_n^+&\coloneqq n(\delta+R)+c+1,\\
A_n^-&\coloneqq n(\delta-R)+c,
&
B_n^-&\coloneqq n\delta+c.
\end{align}
\end{subequations}
where the decreasing case assumes further $0\leq R\leq\delta$.
For $t\in\mathbb{Z}_{\geq0}$, set
\begin{subequations}
\begin{align}
P_n^+(t)
&\coloneqq
\mleft\{
\begin{aligned}
&\prod_{r=1}^{m_n}
\mleft(
1-\frac{t}{n\delta+r+c}
\mright), && 0\leq t<A_n^+,\\[8pt]
&0, && t\geq A_n^+,
\end{aligned}
\mright.
\\
P_n^-(t)
&\coloneqq
\mleft\{
\begin{aligned}
&\prod_{r=1}^{m_n}
\mleft(
1-\frac{t}{n\delta-r+c}
\mright), && 0\leq t<A_n^-,\\[8pt]
&0, && t\geq A_n^-.
\end{aligned}
\mright.
\label{equ:pminus_definition}
\end{align}
\end{subequations}
Then
\begin{equation}
P_n^\pm(t)
=
\mleft(\frac{A_n^\pm}{B_n^\pm}\mright)^t
+
\bigO[c,\epsilon,s][n]{n^{-1}s^t},
\end{equation}
uniformly over the corresponding allowed range of $R$ and $\delta$, and uniformly in $t$.
Equivalently,
\begin{subequations}
\begin{align}
P_n^+(t)
&=
\mleft(
\frac{n\delta+c+1}{n(\delta+R)+c+1}
\mright)^t
+
\bigO[c,\epsilon,s][n]{n^{-1}s^t},\\
P_n^-(t)
&=
\mleft(
\frac{n(\delta-R)+c}{n\delta+c}
\mright)^t
+
\bigO[c,\epsilon,s][n]{n^{-1}s^t}.
\end{align}
\end{subequations}
\end{lemma}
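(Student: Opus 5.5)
The plan is to compare each factor of $P_n^\pm(t)$ with the constant ratio $A_n^\pm/B_n^\pm$ and use a telescoping identity for the product. I treat the increasing case first. For $0\le t<A_n^+$, write
\begin{equation}
1-\frac{t}{n\delta+r+c}=\frac{n\delta+c+r-t}{n\delta+c+r}.
\end{equation}
The product over $r=1,\ldots,m$ is then the ratio of Pochhammer symbols $\rpoch{n\delta+c+1-t}{m}/\rpoch{n\delta+c+1}{m}$. By the standard swap of Pochhammer arguments this equals $\rpoch{A_n^+-t}{t}/\rpoch{B_n^+-t}{t}$, where $A_n^+=n\delta+c+1$ and $B_n^+=A_n^++m$. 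In other words,
\begin{equation}
P_n^+(t)=\prod_{j=1}^{t}\frac{A_n^+-j}{B_n^+-j}.
\end{equation}
The decreasing case works the same way: $\prod_{r=1}^m (n\delta-r+c-t)/(n\delta-r+c)$ becomes $\prod_{j=0}^{t-1}(A_n^--j)/(B_n^--j)$, where $A_n^-=n\delta-m+c$ and $B_n^-=n\delta+c$. The product vanishes automatically once $t$ reaches $A_n^\pm$, which is consistent with the zero branch in the definition. This reduces the problem from a product of length $m$, which can be of order $n$, to one of length $t$, and is why the error can be made uniform in $R$.

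Next, for $t\le A/2$ I compare termwise. Each factor $(A-j)/(B-j)$ is at most $A/B\le 1$, and since $B\ge A$,
\begin{equation}
\frac{A}{B}-\frac{A-j}{B-j}=\frac{j(B-A)}{B(B-j)}\le \frac{j}{B-j}\le \frac{2j}{A}.
\end{equation}
Writing $a_j=(A-j)/(B-j)$ and $b=A/B$, I use $|\prod a_j-b^t|\le \sum_j |a_j-b|\,b^{t-1}$, which holds because all factors lie in $[0,1]$. This gives an error of at most $t^2/A\le t^2/(n\epsilon)$, using $A\ge n\epsilon$ up to constants depending on $c$. For the decreasing case I need $A_n^-=n(\delta-R)+c$ to be of order $n$. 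This is where I expect the main obstacle: when $R$ is close to $\delta$, $A_n^-$ can be $O(1)$. In that regime, however, $A^-/B^-=O(1/n)$, and $P_n^-(t)\le (A^-/B^-)^{\,\cdot}$ is small for $t\ge 1$ by the same termwise bound. So I will split into the cases $A\ge n\epsilon/2$ and $A< n\epsilon/2$. In the second case both $P_n^-(t)$ and $(A/B)^t$ are $O(1/n)$ for $t\ge1$, with $B\ge n\epsilon$, and they agree exactly at $t=0$.

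Finally, the polynomial factor $t^2$ and the large-$t$ regime are absorbed by $s>1$. For $t> A/2\ge cn$, both terms are bounded by $1\le s^t$, and $s^t\ge s^{cn}$ dominates $n$, so the difference is $O(n^{-1}s^t)$ trivially. For smaller $t$, I use $t^2\le C_s s^t$. The constants depend only on $c,\epsilon,s$, which gives the claimed uniformity.
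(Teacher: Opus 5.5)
\textbf{There is a genuine gap, in the case split, though it is easily repaired.} Your reduction of both products to $\prod_{j=1}^{t}(A-j)/(B-j)$, and the comparison with $(A/B)^t$ via the product-difference identity, follow the paper's route. The trouble is that your termwise estimate is too lossy, and the case split you add to compensate does not work.

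\emph{Where the argument fails.} You bound $b-a_j\le j/(B-j)\le 2j/A$, which puts $A$ in the denominator. For $P_n^-$ the quantity $A_n^-=n(\delta-R)+c$ ranges from $c$ up to order $n$. Your claim that for $A<n\epsilon/2$ ``both $P_n^-(t)$ and $(A/B)^t$ are $O(1/n)$ for $t\ge1$'' holds only when $A_n^-=O(1)$.

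Take $\delta-R\asymp n^{-1/2}$, so that $A_n^-\asymp\sqrt n$, and take $t=1$.
\begin{itemize}
\item $P_n^-(1)=(A-1)/(B-1)$ and $A/B$ are both of order $n^{-1/2}$. Their difference is $O(1/n)$, but your argument never shows this.
\item Your termwise bound gives only $t(t+1)/A\asymp n^{-1/2}$.
\item Your large-$t$ step (``$t>A/2\ge cn$'') requires $A\gtrsim n$, so it does not cover this regime either.
\end{itemize}
So for $1\ll A_n^-\ll n$ the argument as written does not give $O(n^{-1}s^t)$ uniformly in $R$. (For $A_n^-$ a constant fraction of $n$ the claim ``$A/B=O(1/n)$'' is also false, though there your first-case bound would still work with worse constants.)

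\emph{The fix.} Keep the exact difference,
$b-a_j=\frac{j(B-A)}{B(B-j)}$. For $j\le t<A$ you have $B-j\ge B-A$, hence $0\le b-a_j\le j/B$. Both $B_n^+=n(\delta+R)+c+1$ and $B_n^-=n\delta+c$ are at least $n\epsilon$. This gives $|P_n^\pm(t)-(A/B)^t|\le t^2/(n\epsilon)$ for every $t<A$, with no case distinction in $R$ and no need to split at $t=A/2$.

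For $t\ge A$, where $P_n^\pm(t)=0$, you still need $(A/B)^t=O(n^{-1}s^t)$ uniformly in $A$. The paper obtains this from $(A/B)^t\le s^t\,(A/(sB))^{A}$ together with $\sup_{1\le a\le B}(a/(sB))^a=O_s(1/B)$. The latter holds because $a\mapsto a\ln(a/(sB))$ is convex, so the supremum is attained at an endpoint.

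\emph{Two minor slips.}
\begin{itemize}
\item The decreasing product is $\prod_{j=1}^{t}(A_n^--j)/(B_n^--j)$, not $\prod_{j=0}^{t-1}$. Check $m=t=1$: $1-1/(n\delta+c-1)=(A-1)/(B-1)$.
\item Using $c$ as a generic constant in ``$A/2\ge cn$'' clashes with the $c$ of the statement.
\end{itemize}
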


\begin{proof}
For $P_n^+$, the denominator labels run from $A_n^+$ to $B_n^+ -1$; for $P_n^-$, they run from $A_n^-$ to $B_n^- -1$.
Thus either endpoint product has the form
\begin{equation}
Q_n(t)
\coloneqq
\mleft\{
\begin{aligned}
&1, && A_n=B_n,\\[4pt]
&\prod_{q=A_n}^{B_n-1}\mleft(1-\frac{t}{q}\mright), && A_n<B_n,\ 0\leq t<A_n,\\[8pt]
&0, && A_n<B_n,\ t\geq A_n,
\end{aligned}
\mright.
\end{equation}
with $A_n=A_n^\pm$ and $B_n=B_n^\pm$.
Since $\delta\geq\epsilon$ and $c>0$, we have $B_n=\bigOm[\epsilon][n]{n}$.
If $A_n=B_n$, then $Q_n(t)=1=(A_n/B_n)^t$, so the claim is exact.
Assume $A_n<B_n$.
For $t=0$, both sides are equal to $1$.
For $1\leq t<A_n$, we have
\begin{equation}
Q_n(t)
=
\prod_{q=A_n}^{B_n-1}\frac{q-t}{q}
=
\prod_{i=1}^{t}\frac{A_n-i}{B_n-i}.
\end{equation}
Set $a_i\coloneqq (A_n-i)/(B_n-i)$ and $b\coloneqq A_n/B_n$.
Then $0\leq a_i,b\leq1$ and
\begin{equation}
a_i-b
=
\frac{i(A_n-B_n)}{B_n(B_n-i)}.
\end{equation}
Because $i<A_n$, we have $B_n-i\geq B_n-A_n$, and hence $|a_i-b|\leq i/B_n$.
Using the product-difference identity and the fact that all factors have absolute value at most $1$,
\begin{equation}
\mleft|Q_n(t)-\mleft(\frac{A_n}{B_n}\mright)^t\mright|
\leq
\bigO[c,\epsilon][n]{n^{-1}t^2}
\leq
\bigO[c,\epsilon,s][n]{n^{-1}s^t}.
\end{equation}
This gives the required bound for $t<A_n$.

For $t\geq A_n$, $Q_n(t)=0$ by definition, and therefore
\begin{equation}
\mleft|Q_n(t)-\mleft(\frac{A_n}{B_n}\mright)^t\mright|
=
\mleft(\frac{A_n}{B_n}\mright)^t.
\end{equation}
Since $t\geq A_n$,
\begin{equation}
\mleft(\frac{A_n}{B_n}\mright)^t
\leq
s^t\mleft(\frac{A_n}{sB_n}\mright)^{A_n}.
\end{equation}
The elementary bound
\begin{equation}
\sup_{1\leq a\leq B}\mleft(\frac{a}{sB}\mright)^a
\leq
\bigO[s][B]{B^{-1}}
\end{equation}
follows because $a\ln(a/(sB))$ is convex on $[1,B]$, so the maximum occurs at an endpoint, and exponential decay is faster than polynomial decay.
Applying this with $a=A_n$ and $B=B_n$, and using $B_n=\bigOm[\epsilon][n]{n}$, gives the same bound for $t\geq A_n$.
This proves the claim.
\end{proof}

\begin{corollary}[Normalized endpoint-product bases]
\label{cor:normalized_endpoint_product_bases}
In the setting of~\cref{cor:signed_endpoint_products}, assume additionally that the normalized quantities are defined by
\begin{equation}
R\coloneqq \frac{m}{n},
\qquad
\delta\geq\epsilon.
\end{equation}
Then, uniformly over the corresponding allowed range of $R$ and $\delta$, and uniformly in $t$,
\begin{subequations}
\begin{align}
P_n^+(t)
&=
\mleft(
\frac{\delta}{\delta+R}
\mright)^t
+
\bigO[c,\epsilon,s][n]{n^{-1}s^t},\\
P_n^-(t)
&=
\mleft(
\frac{\delta-R}{\delta}
\mright)^t
+
\bigO[c,\epsilon,s][n]{n^{-1}s^t}.
\end{align}
\end{subequations}
Here we use the convention $0^0=1$, so that if the normalized base vanishes, its $t$-th power is interpreted as $\delta_t$.
\end{corollary}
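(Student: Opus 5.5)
The plan is to deduce this directly from \cref{cor:signed_endpoint_products} by replacing the ratio $A_n^\pm/B_n^\pm$ with its normalized limit and controlling the difference of $t$-th powers uniformly in $t$. With $R=m/n$, we have
\begin{equation}
\frac{A_n^+}{B_n^+}=\frac{\delta+(c+1)/n}{\delta+R+(c+1)/n},\qquad \frac{A_n^-}{B_n^-}=\frac{\delta-R+c/n}{\delta+c/n}.
\end{equation}
Call these ratios $b_n$, and write $b$ for the normalized bases $\delta/(\delta+R)$ and $(\delta-R)/\delta$. By the triangle inequality, it suffices to show $|b_n^t-b^t|=\bigO[c,\epsilon,s][n]{n^{-1}s^t}$ uniformly in $t$ and in the allowed range of $R$ and $\delta$.

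First I bound $|b_n-b|$. For the $+$ case, write $b_n-b$ over a common denominator:
\begin{equation}
b_n-b=\frac{(c+1)R/n}{(\delta+R+(c+1)/n)(\delta+R)}.
\end{equation}
Since $\delta\ge\epsilon$ and $R/(\delta+R)\le 1$, this is at most $(c+1)/(n\epsilon)$. For the $-$ case, the analogous computation gives
\begin{equation}
b_n-b=\frac{cR/n}{\delta(\delta+c/n)},
\end{equation}
which is at most $c/(n\epsilon)$ because $R\le\delta$. Both $b_n$ and $b$ lie in $[0,1]$.

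Next I use the product-difference (mean value) bound $|b_n^t-b^t|\le t\,|b_n-b|\max(b_n,b)^{t-1}\le t\,|b_n-b|$. Then $t\le C_s s^t$ for $s>1$ gives the $\bigO[c,\epsilon,s][n]{n^{-1}s^t}$ bound. The case $t=0$ is trivially exact, and the $0^0=1$ convention matters only when $b=0$, i.e.\ $R=\delta$ in the decreasing case. There $b^t=\delta_t$, and for $t\ge1$ we have $b_n^t\le b_n\le c/(n\epsilon)$, which is consistent with the bound above.

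The only mildly delicate point is uniformity: the constants must not depend on $R$ or $\delta$. This is handled by the bounds $R/(\delta+R)\le1$ and $R/\delta\le1$ together with $\delta\ge\epsilon$, so nothing beyond \cref{cor:signed_endpoint_products} is needed.
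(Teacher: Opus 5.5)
Your proof is correct and follows the paper's argument. You combine \cref{cor:signed_endpoint_products} with the triangle inequality, bound the difference of bases by $O(n^{-1})$ uniformly in $R,\delta$, and use $|b_n^t-b^t|\le t|b_n-b|\le C_s s^t|b_n-b|$ for bases in $[0,1]$. Your explicit common-denominator computations, using $R/(\delta+R)\le 1$ and $R\le\delta$, make the uniformity in $R$ more transparent than the paper's one-line assertion that $|G_n^\pm-G^\pm|=O(n^{-1})$.
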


\begin{proof}
For the increasing product, set
\begin{equation}
G_n^+\coloneqq \frac{n\delta+c+1}{n(\delta+R)+c+1},
\qquad
G^+\coloneqq \frac{\delta}{\delta+R}.
\end{equation}
Since $\delta\geq\epsilon$ and $c>0$, we have $|G_n^+ - G^+|=\bigO[c,\epsilon][n]{n^{-1}}$, and $0\leq G_n^+,G^+\leq1$.
Therefore
\begin{equation}
\mleft|(G_n^+)^t-(G^+)^t\mright|
\leq
t|G_n^+ - G^+|
\leq
\bigO[c,\epsilon,s][n]{n^{-1}s^t}.
\end{equation}
Combining this with~\cref{cor:signed_endpoint_products} gives the first claim.

For the decreasing product, set
\begin{equation}
G_n^-\coloneqq \frac{n(\delta-R)+c}{n\delta+c},
\qquad
G^-\coloneqq \frac{\delta-R}{\delta}.
\end{equation}
Since $\delta\geq\epsilon$ and $c>0$, we have $|G_n^- - G^-|=\bigO[c,\epsilon][n]{n^{-1}}$.
Moreover, $0\leq G_n^-,G^-\leq1$ on the allowed range $0\leq R\leq\delta$.
Hence
\begin{equation}
\mleft|(G_n^-)^t-(G^-)^t\mright|
\leq
t|G_n^- - G^-|
\leq
\bigO[c,\epsilon,s][n]{n^{-1}s^t}.
\end{equation}
Combining this with~\cref{cor:signed_endpoint_products} proves the second claim.
\end{proof}

\paragraph{Delta terms}
When $c=1$ in the $P_n^-(t)$ expressions above, we obtain the following exact identity:
\begin{corollary}[Vanishing product for full removal]
\label{lem:full_removal_product}
Let $\Delta\in\mathbb{Z}_{\ge 1}$ and let $t\in\mathbb{Z}$ with $0\le t\le \Delta$. Then
\begin{equation}
\prod_{r=1}^{\Delta}\mleft(1-\frac{t}{\Delta-r+1}\mright)
=
\prod_{q=1}^{\Delta}\mleft(1-\frac{t}{q}\mright)
=
\frac{\rpoch{-t}{\Delta}}{\Delta!}
=
\delta_t,
\end{equation}
\end{corollary}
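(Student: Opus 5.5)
The claim is elementary, so I will check the three equalities in turn, directly from the definitions, without using any earlier machinery.

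\emph{First equality.} Reindex the product with $q\coloneqq \Delta-r+1$. As $r$ runs over $1,\ldots,\Delta$, $q$ runs over $\Delta,\ldots,1$. Since the product is finite and commutative, the order of the factors does not matter, so
\begin{equation}
\prod_{r=1}^{\Delta}\mleft(1-\frac{t}{\Delta-r+1}\mright)=\prod_{q=1}^{\Delta}\mleft(1-\frac{t}{q}\mright).
\end{equation}

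\emph{Second equality.} Put each factor over the common denominator $q$, giving $(q-t)/q$. The product of the denominators is $\Delta!$. For the numerator, write $\prod_{q=1}^{\Delta}(q-t)=\prod_{j=0}^{\Delta-1}(-t+1+j)$. This is the rising factorial of $1-t$ of length $\Delta$; equivalently, after pulling out signs, it is $(-1)^\Delta$ times the falling factorial of $t-1$. The main point needing care is the paper's Pochhammer convention for $\rpoch{-t}{\Delta}$. The computation above gives $\rpoch{1-t}{\Delta}$ if $\rpoch{x}{k}=x(x+1)\cdots(x+k-1)$. To match the displayed $\rpoch{-t}{\Delta}$, I would either read the shift off the convention used in \cref{equ:nm_factorization}, or simply note that the two expressions agree as functions on $0\le t\le\Delta$. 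The second route is the robust one: the final identity only requires the value at integers $t\in\{0,\ldots,\Delta\}$.

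\emph{Third equality (the real content).} For $t=0$, every factor equals $1$, so the product is $1$. For $1\le t\le\Delta$, the index $q=t$ lies in the range $1,\ldots,\Delta$. Its factor is $1-t/t=0$, so the whole product vanishes. Together these give $\delta_t$ on the stated range.

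The only step where anything could go wrong is the Pochhammer-sign bookkeeping in the middle expression. The endpoints of the chain, which are all the later extensive analysis uses through $P_n^-$ with $c=1$ and $R=\delta$, follow from the one-line zero-factor argument.
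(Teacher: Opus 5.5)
Your reindexing and the zero-factor argument for the outer identity $\prod_{r=1}^{\Delta}\bigl(1-\tfrac{t}{\Delta-r+1}\bigr)=\prod_{q=1}^{\Delta}\bigl(1-\tfrac{t}{q}\bigr)=\delta_t$ are correct. They are also what the paper relies on. The paper gives no standalone proof; it reads the identity off the $c=1$, $R=\delta$ case of $P_n^-$. There $m=\Delta$ and $A_n^-=1$, so every $1\le t\le\Delta$ meets the vanishing factor $q=t$, while $t=0$ gives $1$.

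The step that fails is how you reconcile the middle term. You correctly obtain $\prod_{q=1}^{\Delta}(1-t/q)=\rpoch{1-t}{\Delta}/\Delta!$. You then assert that this agrees with $\rpoch{-t}{\Delta}/\Delta!$ on $\{0,\ldots,\Delta\}$, and it does not.

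Indeed, $\rpoch{-t}{\Delta}=(-t)(1-t)\cdots(\Delta-1-t)=(-1)^{\Delta}\fpoch{t}{\Delta}$, so $\rpoch{-t}{\Delta}/\Delta!=(-1)^{\Delta}\binom{t}{\Delta}$. This equals $0$ at $t=0$ rather than $\delta_0=1$, and $(-1)^{\Delta}$ at $t=\Delta$ rather than $0$. By contrast, $\rpoch{1-t}{\Delta}/\Delta!=\binom{\Delta-t}{\Delta}$ really is $\delta_t$ on that range.

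Appealing to the convention does not rescue it. The paper uses the standard rising factorial throughout: for example $\rpoch{K}{M}=\fpoch{K+M-1}{M}$ in \cref{lem:taylor_pochhammer_symbol}, and $\rpoch{1}{m_\ell}=m_\ell!$ for the intensive numerators. The factor $\rpoch{-t_{\ell,d}}{m_\ell}$ in \cref{equ:nm_factorization} is read the same way, since it vanishes for $0\le t_{\ell,d}<m_\ell$.

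So the printed middle expression is a misprint and cannot be proved. The correct chain is $\prod_{q=1}^{\Delta}(1-t/q)=\rpoch{1-t}{\Delta}/\Delta!=\delta_t$. You should have flagged and corrected it rather than claimed agreement; with that correction, your argument is complete.
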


By~\cref{lem:full_removal_product}, the factors with the constraint $k\le \ell \le {i^\ast}-1,\, i=\ell$ reduce to $\delta_{t_{\ell,d}}$. This allows us to set all edge variables $t_{\ell,d}$ with $k\le\ell\le {i^\ast}-1$ to zero and restrict attention to the remaining variables. Consequently, the remaining extensive constraints reduce to $k\le \ell\le {i^\ast}$ and ${i^\ast}\le i \le d-1$.

We apply~\cref{cor:signed_endpoint_products} to the remaining extensive products.
Fix \(s>1\) such that \(sr_\ast<1\), where \(r_\ast=\max_{a,b}r_{a,b}\).
For the increasing-denominator factors, we obtain, uniformly over the allowed gaps and removal rates,
\begin{equation}
\begin{aligned}
&\prod_{r=1}^{m_\ell}
\left(
1-\frac{t_{i,d}+1}{\Delta_{i,\ell}-i+\ell+r}
\right) \\
&\qquad =
\left(
\frac{\Delta_{i,\ell}-i+\ell+1}
{\Delta_{i,\ell}+m_\ell-i+\ell+1}
\right)^{t_{i,d}+1}
+
\bigO[d,\epsilon,s][n]{n^{-1}s^{t_{i,d}+1}} \\
&\qquad =
\left(
\frac{\overline{\Delta}_{i,\ell}}
{\overline{\Delta}_{i,\ell}+R_\ell}
\right)^{t_{i,d}+1}
+
\bigO[d,\epsilon,s][n]{n^{-1}s^{t_{i,d}+1}} .
\end{aligned}
\end{equation}
For the decreasing-denominator factors, the product contains a zero factor whenever $t_{i,d}\geq A_n$. Thus it agrees with the cutoff convention in~\cref{equ:pminus_definition}, and the endpoint estimate applies uniformly in $t_{i,d}$.
\begin{equation}
\begin{aligned}
&\prod_{r=1}^{m_\ell}
\left(
1-\frac{t_{i,d}}{\Delta_{\ell,i+1}+i-\ell-r+1}
\right) \\
&\qquad =
\left(
\frac{\Delta_{\ell,i+1}-m_\ell+i-\ell+1}
{\Delta_{\ell,i+1}+i-\ell+1}
\right)^{t_{i,d}}
+
\bigO[d,\epsilon,s][n]{n^{-1}s^{t_{i,d}}} \\
&\qquad =
\left(
\frac{\overline{\Delta}_{\ell,i+1}-R_\ell}
{\overline{\Delta}_{\ell,i+1}}
\right)^{t_{i,d}}
+
\bigO[d,\epsilon,s][n]{n^{-1}s^{t_{i,d}}}.
\end{aligned}
\end{equation}
Here the second equality in each equation uses that all additive shifts are \(O_d(1)\) and all normalized gaps are bounded below by \(\epsilon\).
The pointwise remainders have the envelope \(\bigO[d,\epsilon,s][n]{n^{-1}s^{\|\boldsymbol{t}\|_1}}\).
Choosing \(s>1\) with \(sr_\ast<1\), their total contribution to the Weyl average is \(\bigO[d,\boldsymbol r,\epsilon,s][n]{n^{-1}}\) by~\cref{cor:weyl_average_main_term_remainder}.

\paragraph{Intensive numerators}
We first focus on the factors with intensive numerators, denoted by $I_n(\boldsymbol{t})$, namely
\begin{equation}
\binom{m}{\mathbf m}\;
  \prod_{\ell=k+1}^{{i^\ast}}
  \frac{\rpoch{t_{\ell-1,d}+1}{m_\ell}}{\rpoch{\Delta_{k,\ell}-k+\ell+1}{m_\ell}}.
\end{equation}
By part (b), the delta constraints set $t_{j,d}=0$ for $j\in\{k,\ldots,i^\ast-1\}$.
On their support, each intensive numerator satisfies $\rpoch{t_{\ell-1,d}+1}{m_\ell}=m_\ell!$, so these factors cancel the corresponding factors in the multinomial denominator.
Thus
\begin{equation}
I_n(\boldsymbol{t})\prod_{j=k}^{i^\ast-1}\delta_{t_{j,d}}
=
C_n\prod_{j=k}^{i^\ast-1}\delta_{t_{j,d}},
\end{equation}
where
\begin{equation}
C_n
=
\frac{m!}{m_k!}
\prod_{\ell=k+1}^{i^\ast}
\frac{1}{\rpoch{\Delta_{k,\ell}-k+\ell+1}{m_\ell}}.
\end{equation}
To evaluate $C_n$, we first record a uniform Stirling estimate for factorials with bounded shifts.

\begin{lemma}[Uniform Stirling expansion with bounded products and shifts]
\label{lem:uniform_stirling_bounded_shifts}
Fix $d\in\mathbb{N}$ and $\epsilon>0$.
There exists $N_{d,\epsilon}\in\mathbb{N}$ such that, for all $n\geq N_{d,\epsilon}$, all $\delta\geq\epsilon$ with $\delta=\bigO[d][n]{1}$, and all shifts $a=\bigO[d][n]{1}$ with $n\delta+a\in\mathbb Z_{\geq0}$, one has
\begin{equation}
(n\delta+a)!
=
\sqrt{2\pi}\,(n\delta)^{n\delta+a+\frac12}e^{-n\delta}
\mleft(1 + \bigO[d,\epsilon][n]{n^{-1}}\mright),
\end{equation}
Equivalently,
\begin{equation}
\begin{aligned}
\ln\bigl((n\delta+a)!\bigr)
&=
\mleft(n\delta+a+\frac12\mright)\ln(n\delta)-n\delta+\frac12\ln(2\pi)+\bigO[d,\epsilon][n]{n^{-1}} \\
&=
n\delta\ln n+n\delta\ln\delta-n\delta+\bigO[d,\epsilon][n]{\ln n}.
\end{aligned}
\end{equation}
uniformly in $\delta\geq\epsilon$ with $\delta=\bigO[d][n]{1}$ and $a=\bigO[d][n]{1}$.
\end{lemma}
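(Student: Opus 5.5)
The plan is to reduce the statement to the classical Stirling formula with an explicit remainder and then absorb the bounded shift $a$. Set $N\coloneqq n\delta$. Since $\delta\geq\epsilon$, we have $N\geq n\epsilon$; since $\delta=\bigO[d][n]{1}$, we also have $N=\bigO[d][n]{n}$. Let $|a|\le A_d$ be the uniform bound on the shift, and choose $N_{d,\epsilon}$ so that $n\epsilon\ge 2A_d+2$. Then $N+a\ge N/2\ge 1$ for all admissible parameters. First I would apply Stirling with a Robbins-type bound to $M\coloneqq N+a$:
\begin{equation}
M! = \sqrt{2\pi}\,M^{M+\frac12}e^{-M}\bigl(1+\theta_M\bigr),\qquad |\theta_M|\le \frac{C}{M}\le \frac{2C}{n\epsilon}.
\end{equation}
This factor is already $1+\bigO[d,\epsilon][n]{n^{-1}}$, uniformly in $\delta$ and $a$.

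Next I would compare $M^{M+\frac12}e^{-M}$ with $N^{N+a+\frac12}e^{-N}$. Writing $M=N(1+a/N)$ gives
\begin{equation}
\ln\frac{M^{M+\frac12}e^{-M}}{N^{N+a+\frac12}e^{-N}} = \Bigl(N+a+\tfrac12\Bigr)\ln\Bigl(1+\frac aN\Bigr)-a .
\end{equation}
Because $|a/N|\le A_d/(n\epsilon)\le\tfrac12$, I can expand $\ln(1+x)=x-x^2/2+\bigO{x^3}$. This yields $a + a^2/N + a/(2N) - a^2/(2N) - a + \bigO{A_d^3/N^2}$, which equals $\bigO[d,\epsilon][n]{n^{-1}}$. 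So the log ratio is $\bigO[d,\epsilon][n]{n^{-1}}$, and exponentiating gives the multiplicative factor $1+\bigO[d,\epsilon][n]{n^{-1}}$. Combined with the first step, this proves the first display. The only care needed is to check that every constant depends only on $A_d$ and $\epsilon$, and not on $\delta$ or $n$.

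Taking logarithms immediately gives the first line of the equivalent form, using $\ln(1+\bigO{n^{-1}})=\bigO{n^{-1}}$. For the second line, I would expand $(N+a+\tfrac12)\ln(n\delta)$ as $n\delta\ln n+n\delta\ln\delta+(a+\tfrac12)\ln(n\delta)$. Since $\epsilon\le\delta\le C_d$, the term $(a+\tfrac12)\ln(n\delta)$ is $\bigO[d,\epsilon][n]{\ln n}$, and the constant $\tfrac12\ln 2\pi$ is absorbed there as well.

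I do not expect a real obstacle here. The only point requiring attention is the uniformity bookkeeping: the threshold on $n$ must guarantee $N+a\ge N/2$ and $|a/N|\le\tfrac12$ simultaneously for all allowed $\delta$ and $a$. The lower bound $\delta\ge\epsilon$ secures this, while the upper bound $\delta=\bigO[d][n]{1}$ is used only to control the $\ln\delta$ term in the $\bigO{\ln n}$ remainder.
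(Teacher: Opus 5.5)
Your proposal is correct and follows essentially the same route as the paper: Stirling with an $O(1/y)$ remainder at $y=n\delta+a\ge n\epsilon/2$, then absorbing the shift via $(1+a/(n\delta))^{n\delta+a+1/2}e^{-a}=1+O(n^{-1})$, and finally taking logarithms. Your explicit threshold $n\epsilon\ge 2A_d+2$ and your second-order expansion of the log ratio make the paper's uniformity bookkeeping more explicit than the paper does.
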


\begin{proof}
Stirling's formula gives
\begin{equation}
y!
=
\sqrt{2\pi}\,y^{y+\frac12}e^{-y}\mleft(1 + \bigO[][y]{y^{-1}}\mright).
\end{equation}
Set $y=n\delta+a$.
Since $\delta\geq\epsilon$ and $a=\bigO[d][n]{1}$, for all $n\geq N_{d,\epsilon}$ we have $y\geq n\epsilon/2$; hence the Stirling remainder is $\bigO[d,\epsilon][n]{n^{-1}}$.
Writing $u=a/(n\delta)$, we also have
\begin{equation}
(n\delta+a)^{n\delta+a+\frac12}e^{-n\delta-a}
=
(n\delta)^{n\delta+a+\frac12}(1+u)^{n\delta+a+\frac12}e^{-n\delta-a}.
\end{equation}
Since $|u|\leq C_{d,\epsilon}/n$, the expansion $\ln(1+u)=u + \bigO[d,\epsilon][u]{u^2}$ gives $(1+u)^{n\delta+a+\frac12}e^{-a}=1 + \bigO[d,\epsilon][n]{n^{-1}}$.
Combining this with the Stirling remainder proves the multiplicative claim. Taking logarithms gives the first line of the logarithmic form; the second line follows from \(\delta=\bigO[d][n]{1}\) and \(a=\bigO[d][n]{1}\).
\end{proof}

\begin{lemma}[Intensive prefactor under overhang removal]
\label{lem:intensive_numerators_overhang}
Fix $d$ and $k$.
Let $m$ be the total number of removed boxes, let $i^\ast$ be the terminal row, and let $m_k,m_{k+1},\ldots,m_{i^\ast}$ be prescribed by the overhang removal rule.
Assume $\overline{\yd{\varsigma}}\in\mathsf{R}_\epsilon$ and write $m=nR$ and $\overline{\Delta}_{k,\ell}\coloneqq \Delta_{k,\ell}/n$.
Then
\begin{equation}
\label{eq:intensive_numerator_cn_asymptotic}
C_n
=
\prod_{\ell=k+1}^{i^\ast}\frac{\overline{\Delta}_{k,\ell}}{R}
+
\bigO[d,\epsilon][n]{n^{-1}}.
\end{equation}
\end{lemma}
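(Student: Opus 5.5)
We will compute $C_n$ by taking logarithms and using \cref{lem:uniform_stirling_bounded_shifts}, exploiting that under the overhang removal rule the removal counts are rigidly tied to the row gaps. Concretely, the rule gives $m_\ell=\Delta_{\ell,\ell+1}$ for $k\le\ell<i^\ast$ and $m_{i^\ast}=m-\Delta_{k,i^\ast}$. Hence the partial sums satisfy $m_k+\cdots+m_{\ell-1}=\Delta_{k,\ell}$ for $k<\ell\le i^\ast$.

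The key algebraic step is to rewrite each rising factorial $\rpoch{\Delta_{k,\ell}-k+\ell+1}{m_\ell}=\frac{(\Delta_{k,\ell}+m_\ell+\ell-k)!}{(\Delta_{k,\ell}+\ell-k)!}$. Using the partial-sum identity, $\Delta_{k,\ell}+m_\ell=\Delta_{k,\ell+1}$ for $\ell<i^\ast$ and $\Delta_{k,i^\ast}+m_{i^\ast}=m$. The product over $\ell$ therefore telescopes up to bounded shifts:
\begin{equation*}
\prod_{\ell=k+1}^{i^\ast}\rpoch{\Delta_{k,\ell}-k+\ell+1}{m_\ell}
=\frac{(m+i^\ast-k)!}{(\Delta_{k,k+1}+1)!}\prod_{\ell=k+2}^{i^\ast}\frac{(\Delta_{k,\ell}+\ell-k-1)!}{(\Delta_{k,\ell}+\ell-k)!},
\end{equation*}
where each interior ratio equals $1/(\Delta_{k,\ell}+\ell-k)$.

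Since $m_k=\Delta_{k,k+1}$ (or $m_k=m$ if $i^\ast=k$, in which case the claim is trivially $C_n=1$), we get
\begin{equation*}
C_n=\frac{m!\,(\Delta_{k,k+1}+1)}{(m+i^\ast-k)!}\prod_{\ell=k+2}^{i^\ast}(\Delta_{k,\ell}+\ell-k).
\end{equation*}
The factorial quotient $m!/(m+i^\ast-k)!$ is a product of $i^\ast-k$ reciprocal linear factors $1/(m+j)$. Thus $C_n$ is a ratio of $i^\ast-k$ linear factors, $\prod_{\ell=k+1}^{i^\ast}(\Delta_{k,\ell}+O_d(1))/(m+O_d(1))$. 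In fact Stirling is not even needed here, only this elementary telescoping, though \cref{lem:uniform_stirling_bounded_shifts} could be used for the factorial quotient instead.

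Dividing numerator and denominator by $n^{i^\ast-k}$, each factor becomes $(\overline\Delta_{k,\ell}+O_d(1/n))/(R+O_d(1/n))$.

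The main obstacle is uniformity, in particular when $R$ is small. For $\ell\le i^\ast$ we have $\overline\Delta_{k,\ell}<R$ by the definition of $i^\ast$, and on $\mathsf{R}_\epsilon$ also $\overline\Delta_{k,\ell}\ge\epsilon$. Hence $R>\epsilon$ whenever $i^\ast>k$, so the denominators are bounded below by $\epsilon$ and each factor lies in $[0,1]$. Expanding each ratio gives an error $O_{d,\epsilon}(1/n)$ per factor. A product of at most $d$ such factors, each bounded by $1+O(1/n)$, then yields \cref{eq:intensive_numerator_cn_asymptotic} uniformly.
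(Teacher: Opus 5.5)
Your proposal is correct and is essentially the paper's proof: the overhang identities $\Delta_{k,\ell}+m_\ell=\Delta_{k,\ell+1}$ for $\ell<i^\ast$ and $\Delta_{k,i^\ast}+m_{i^\ast}=m$ telescope $C_n$ to $\prod_{\ell=k+1}^{i^\ast}(\Delta_{k,\ell}+\ell-k)/(m+\ell-k)$. Uniformity then follows from $\overline{\Delta}_{k,\ell}\geq\epsilon$ together with $R>\overline{\Delta}_{k,k+1}\geq\epsilon$ whenever $i^\ast>k$. As you note, the logarithm/Stirling route announced in your first sentence is never needed, and the paper does not use it for this lemma either.
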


\begin{proof}
Writing the denominator products as ratios of factorials gives
\begin{equation}
C_n
=
\frac{m!}{m_k!}
\prod_{\ell=k+1}^{i^\ast}
\frac{(\Delta_{k,\ell}-k+\ell)!}
{(\Delta_{k,\ell}-k+\ell+m_\ell)!}.
\end{equation}
If $m_k=m$, then $i^\ast=k$ and the product is empty, therefore, $C_n=1$.
Thus the claim is immediate.

Assume now that $m>m_k$; then $i^\ast>k$. By the overhang removal rule, $m_k=\Delta_{k,k+1}$, $m_\ell=\Delta_{\ell,\ell+1}$ for $k+1\leq \ell<i^\ast$, and $m_{i^\ast}=m-\Delta_{k,i^\ast}$. Hence, $\Delta_{k,\ell}+m_\ell=\Delta_{k,\ell+1}$ for $k+1\leq \ell<i^\ast$, while the terminal row gives $\Delta_{k,i^\ast}+m_{i^\ast}=m$.
Therefore
\begin{equation}
\begin{aligned}
C_n
&=
\frac{m!}{\Delta_{k,k+1}!}
\frac{(\Delta_{k,k+1}+1)!}{(\Delta_{k,k+2}+1)!}
\frac{(\Delta_{k,k+2}+2)!}{(\Delta_{k,k+3}+2)!}
\cdots
\frac{(\Delta_{k,i^\ast}+i^\ast-k)!}{(m+i^\ast-k)!} \\
&=
\frac{
(\Delta_{k,k+1}+1)
(\Delta_{k,k+2}+2)
\cdots
(\Delta_{k,i^\ast}+i^\ast-k)
}{
(m+1)(m+2)\cdots(m+i^\ast-k)
}\\
&=
\prod_{\ell=k+1}^{i^\ast}
\frac{\Delta_{k,\ell}-k+\ell}
{m-k+\ell}
\end{aligned}
\end{equation}
Since $\overline{\yd{\varsigma}}\in\mathsf{R}_\epsilon$ gives $\overline{\Delta}_{k,\ell}\geq\epsilon$, while $i^\ast>k$ gives $R=m/n\geq m_k/n=\overline{\Delta}_{k,k+1}\geq\epsilon$, the first-order expansion is uniform, proving~\cref{eq:intensive_numerator_cn_asymptotic}.
\end{proof}

Assembling the pieces, we obtain the final expression for the utility component:
\begin{equation}
\label{equ:sector_wise_utility_assembled}
\begin{aligned}
    \res{f}{\yd{\mu}}{\yd{m}}{\yd{\varsigma}}_\mathrm{all}(\boldsymbol{t})
    =&\prod_{i=k}^{i^\ast-1}\frac{\overline\Delta_{k,i+1}\delta_{t_{i,d}}}{R}
    \prod_{\substack{k\le \ell \le {i^\ast}-1\\
    1<i\le k-1}}
    \Bigl(\frac{\overline{\Delta}_{i,\ell}}{\overline{\Delta}_{i,\ell}+R_\ell}\Bigr)^{t_{i,d}+1}
    \prod_{\substack{k\le \ell\le {i^\ast}\\i^\ast\le i \le d-1}}
    \Bigl(\frac{\overline{\Delta}_{\ell,i+1}-R_\ell}{\overline{\Delta}_{\ell,i+1}}\Bigr)^{t_{i,d}} + \bigO[d,\epsilon,s][n]{n^{-1}s^{\|\boldsymbol{t}\|_1}}.
\end{aligned}
\end{equation}

We remark here that
\begin{equation}
    \mleft\langle c^{t_{i,d}} \mright\rangle_{\mathrm{G}(\boldsymbol{r})} = \frac{1-r_{i,d}}{1-cr_{i,d}} \, , \quad \mleft\langle \delta_{t_{i,d}} \mright\rangle_{\mathrm{G}(\boldsymbol{r})} = 1-r_{i,d}
\end{equation}
for $c$ such that $|cr_{i,d}|<1$, and that the individual terms in the product above are independent under the distribution $\mathrm{G}(\boldsymbol{r})$. Thus,
\begin{equation}
\begin{aligned}
    &\mleft\langle \prod_{i=k}^{i^\ast-1}\delta_{t_{i,d}} \prod_{\substack{k\le \ell \le {i^\ast}-1\\
    1<i\le k-1}}
    \Bigl(\frac{\overline{\Delta}_{i,\ell}}{\overline{\Delta}_{i,\ell}+R_\ell}\Bigr)^{t_{i,d}+1}
    \prod_{\substack{k\le \ell\le {i^\ast}\\i^\ast\le i \le d-1}}
    \Bigl(\frac{\overline{\Delta}_{\ell,i+1}-R_\ell}{\overline{\Delta}_{\ell,i+1}}\Bigr)^{t_{i,d}} \mright\rangle_{\mathrm{G}(\boldsymbol{r})} \\
    =& \prod_{i=k}^{i^\ast-1}(1-r_{i,d}) \prod_{i=1}^{k-1}
  \mleft\langle\Bigl(\prod_{\ell = k}^{i^\ast}\frac{\overline{\Delta}_{i,\ell}}{\overline{\Delta}_{i,\ell}+R_\ell}\Bigr)^{t_{i,d}+1}\mright\rangle_{\mathrm{G}(\boldsymbol{r})}
    \prod_{i=i^\ast}^{d-1}
    \mleft\langle\Bigl(\prod_{\ell=k}^{i^\ast} \frac{\overline{\Delta}_{\ell,i+1}-R_\ell}{\overline{\Delta}_{\ell,i+1}}\Bigr)^{t_{i,d}}\mright\rangle_{\mathrm{G}(\boldsymbol{r})} \\
    = &\prod_{i=k}^{i^\ast-1}(1-r_{i,d}) \prod_{i=1}^{k-1}
    \frac{(1-r_{i,d})\prod_{\ell = k}^{i^\ast}\frac{\overline{\Delta}_{i,\ell}}{\overline{\Delta}_{i,\ell}+R_\ell}}{
    1-r_{i,d}\prod_{\ell = k}^{i^\ast}\frac{\overline{\Delta}_{i,\ell}}{\overline{\Delta}_{i,\ell}+R_\ell}}
    \prod_{i={i^\ast}}^{d-1}
    \frac{1-r_{i,d}}{
    1-r_{i-1,d}\prod_{\ell=k}^{i^\ast} \frac{\overline{\Delta}_{\ell,i+1}-R_\ell}{\overline{\Delta}_{\ell,i+1}}}\, .
\end{aligned}
\end{equation}

The main term in~\cref{equ:sector_wise_utility_assembled} is bounded by \(\bigO[d,\epsilon,s]{s^{\|\boldsymbol{t}\|_1}}\).
Together with the displayed remainder bound, this verifies the hypotheses of~\cref{cor:weyl_average_main_term_remainder} with \(\alpha_n=n^{-1}\).
\begin{equation}
\begin{aligned}
    ^k\res{f_\mathrm{all}}{\yd{\varsigma}}{\yd{m}}{\yd{\mu}}
    =\prod_{i=k}^{i^\ast-1}\frac{\overline\Delta_{k,i+1}(1-r_{i,d})}{R}
    \prod_{i=1}^{k-1}
    \frac{1-r_{i,d}}{
    \prod_{\ell=k}^{i^\ast}\mleft(1+\frac{R_\ell}{\overline{\Delta}_{i,\ell}}\mright)-r_{i,d}}
    \prod_{i={i^\ast+1}}^{d}
    \frac{1-r_{i-1,d}}{
    1-\prod_{\ell=k}^{{i^\ast}}\mleft(1-\frac{R_\ell}{\overline{\Delta}_{\ell,i}}\mright)\,r_{i-1,d}}+\bigO[d,\boldsymbol{r},\epsilon][n]{n^{-1}}.
\end{aligned}
\end{equation}
To express this in terms of the sorted spectrum $\boldsymbol{p}$, we remark that
\begin{equation}
    r_{i,d} =
    \mleft\{\begin{aligned}
        &\frac{p_{k}}{p_{i}}, && 1\leq i\leq k-1, \\
        &\frac{p_{i+1}}{p_{k}}, && k\leq i \leq d-1.
    \end{aligned}\mright.
\end{equation}
Thus,
\begin{equation}
    ^k\res{f_\mathrm{all}}{\yd{\varsigma}}{\yd{m}}{\yd{\mu}} =
    \prod_{i=k}^{i^\ast-1}
    \frac{\overline\Delta_{k,i+1}D_{k,i+1}}{p_{k}R}
    \prod_{i\notin \{k,\ldots,i^\ast\}} \frac{D_{k,i}}{p_{k} - \prod_{\ell=k}^{i^\ast}\mleft(1-\frac{R_\ell}{\overline{\Delta}_{\ell,i}}\mright)p_{i}}
    +\bigO[d,\boldsymbol{p},\epsilon][n]{n^{-1}}.
\end{equation}
We insert the overhang removal rules from~\cref{def:overhang_removal_rule}, which gives
\begin{equation}
    R_{\ell} =
    \mleft\{\begin{aligned}
        &0, && \ell<k, \\
        &\overline{\Delta}_{\ell,\ell+1}, && k\leq \ell <i^\ast, \\
        &R-\overline{\Delta}_{k,i^\ast}, && \ell = i^\ast, \\
        &0, && \ell>i^\ast.
    \end{aligned}\mright.
\end{equation}
Using the telescopic product in
\begin{equation}
    \prod_{\ell=k}^{i^\ast}\mleft(1-\frac{R_\ell}{\overline{\Delta}_{\ell,i}}\mright) = \frac{\overline{\Delta}_{i^\ast,i} - R + \overline{\Delta}_{k,i^\ast}}{\overline{\Delta}_{i^\ast,i}} \prod_{\ell=k}^{i^\ast-1}\mleft(\frac{\overline{\Delta}_{\ell,i} - \overline{\Delta}_{\ell,\ell+1}}{\overline{\Delta}_{\ell,i}}\mright) = \frac{\overline{\Delta}_{k,i} - R}{\overline{\Delta}_{i^\ast,i}} \prod_{\ell=k}^{i^\ast-1}\mleft(\frac{\overline{\Delta}_{\ell+1,i}}{\overline{\Delta}_{\ell,i}}\mright) = \frac{\overline{\Delta}_{k,i} - R}{\overline{\Delta}_{k,i}} \, .
\end{equation}
This gives
\begin{equation}
\label{eq:overhang_leading_branch}
^k\res{f_\mathrm{all}}{\yd{\varsigma}}{\yd{m}}{\yd{\mu}} =
    \prod_{i=k}^{i^\ast-1}
    \frac{\overline\Delta_{k,i+1}D_{k,i+1}}{p_{k}R}
    \prod_{i\notin \{k,\ldots,i^\ast\}} \frac{D_{k,i}}{D_{k,i}+\frac{p_iR}{\overline\Delta_{k,i}}}
    +\bigO[d,\boldsymbol{p},\epsilon][n]{n^{-1}} \, ,
\end{equation}

\begin{lemma}[Continuity of the overhang leading term]
\label{lem:overhang_leading_continuity}
Let $C^{\overline{\yd{\varsigma}}}(R)$ be the leading term in~\cref{eq:overhang_leading_branch}, then $C^{\overline{\yd{\varsigma}}}(R)$ extends continuously to all $R\geq0$, uniformly over $\overline{\yd{\varsigma}}\in\mathsf{R}_\epsilon$.
\end{lemma}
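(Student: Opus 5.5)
The leading term in \cref{eq:overhang_leading_branch} depends on $R$ only through the terminal index $i^\ast(R)=\min\{i:\overline{\Delta}_{k,i+1}\ge R\}$ and through the explicit rational factors. The plan is to establish continuity on each interval where $i^\ast$ is constant, check that the one-sided values agree at the breakpoints $R=\overline{\Delta}_{k,j+1}$, and then make the bounds uniform over $\mathsf{R}_\epsilon$.

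On a fixed phase $\overline{\Delta}_{k,i^\ast}< R\le \overline{\Delta}_{k,i^\ast+1}$ (with $\overline{\Delta}_{k,k}:=0$), write
\begin{equation}
C^{\overline{\yd{\varsigma}}}(R)=\prod_{i=k}^{i^\ast-1}\frac{\overline\Delta_{k,i+1}D_{k,i+1}}{p_kR}\prod_{i\notin\{k,\ldots,i^\ast\}}\frac{D_{k,i}\,\overline\Delta_{k,i}}{D_{k,i}\overline\Delta_{k,i}+p_iR}.
\end{equation}
Each factor is a rational function of $R$. The first product has no singularity, because $R>\overline{\Delta}_{k,i^\ast}\ge\epsilon$ whenever that product is nonempty. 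For $i$ in the second product, consider the denominator $D_{k,i}\overline\Delta_{k,i}+p_iR$:
\begin{itemize}
\item If $i<k$, then $D_{k,i}<0$ and $\overline\Delta_{k,i}<0$, so the product $D_{k,i}\overline\Delta_{k,i}$ is positive and the denominator is at least $|D_{k,i}|\epsilon$.
\item If $i>i^\ast$, both $D_{k,i}$ and $\overline\Delta_{k,i}$ are positive, so the same lower bound holds.
\end{itemize}
In both cases the denominator stays bounded away from zero. The first product is empty when $i^\ast=k$, so at $R=0$ the function reduces to $\prod_{i\neq k}1=1$ and is continuous there. This gives continuity inside each phase. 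For $R$ beyond the last gap, $i^\ast=d$ and the formula is a single rational function, so the range is covered up to $R=\infty$.

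The breakpoints are the main step. At $R=\overline{\Delta}_{k,j+1}$ the index $i^\ast$ jumps from $j$ to $j+1$. Two things change in the formula:
\begin{itemize}
\item The first product gains the factor $\overline\Delta_{k,j+1}D_{k,j+1}/(p_kR)$.
\item The second product loses the $i=j+1$ factor $D_{k,j+1}\overline\Delta_{k,j+1}/(D_{k,j+1}\overline\Delta_{k,j+1}+p_{j+1}R)$.
\end{itemize}
At $R=\overline\Delta_{k,j+1}$ the gained factor equals $D_{k,j+1}/p_k$, while the lost factor equals $D_{k,j+1}/(D_{k,j+1}+p_{j+1})=D_{k,j+1}/p_k$. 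Since the two agree and all remaining factors are identical, the left and right limits coincide. If I find that the printed formula in \cref{eq:overhang_leading_branch} is inconsistent with this check, I will fall back to the pre-telescoped form $\prod_{\ell}(1-R_\ell/\overline{\Delta}_{\ell,i})$ with $R_\ell$ piecewise linear and continuous in $R$. In that form continuity is immediate, because every $R_\ell$ is continuous across the breakpoints.

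For uniformity I will use the following facts on $\mathsf{R}_\epsilon$:
\begin{itemize}
\item Every factor lies in $[0,1]$.
\item The denominators are bounded below by $\epsilon\,D_{k,\min}>0$.
\item Numerators and $R$-derivatives are bounded on compact $R$-sets by constants depending only on $d,\boldsymbol p,\epsilon$.
\end{itemize}
These give a uniform Lipschitz bound on each phase. Combined with the breakpoint matching and a finite number ($\le d$) of phases, this yields equicontinuity in $R$ uniformly over $\overline{\yd\varsigma}$. For large $R$ all factors are bounded by $C/R$, which gives uniformity on the unbounded range.
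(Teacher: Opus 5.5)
Your argument is correct and is essentially the paper's proof: continuity inside each phase, the value $1$ at $R=0$ (empty prefactor, every remaining factor equal to $1$), and at each breakpoint $R=\overline{\Delta}_{k,j+1}$ the gained prefactor and the dropped $i=j+1$ factor both equal $D_{k,j+1}/p_k$. Your explicit uniform Lipschitz bounds on each phase go slightly beyond what the paper writes here, since it defers that to \cref{lem:overhang_leading_lipschitz}. The fallback to the pre-telescoped form is unnecessary, because the printed formula passes the breakpoint check.
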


\begin{proof}
Continuity inside each overhang is immediate.
At $R=0$, the terminal index is $k$, the prefactor is empty, and each remaining factor equals $D_{k,i}/(p_k-p_i)=1$.
At a breakpoint $R=\overline\Delta_{k,i}$, the right branch with $i^*=i$ gains the prefactor $\overline\Delta_{k,i}D_{k,i}/(p_kR)$, which equals $D_{k,i}/p_k$ at the breakpoint.
The left branch with $i^*=i-1$ contains the corresponding remaining factor
$\frac{D_{k,i}}{D_{k,i}+\frac{p_iR}{\overline\Delta_{k,i}}}$,
which has the same limit as $R\to\overline\Delta_{k,i}$.
\end{proof}

\begin{lemma}[Uniform Lipschitz continuity of the overhang leading term]
\label{lem:overhang_leading_lipschitz}
Let $C^{\overline{\yd{\varsigma}}}(R)$ be as in~\cref{lem:overhang_leading_continuity}.
The function $C^{\overline{\yd{\varsigma}}}(R)$ is Lipschitz in $R\in[0,\infty)$, uniformly over $\overline{\yd{\varsigma}}\in\mathsf{R}_\epsilon$, with constant depending only on $d$, $\epsilon$, and $\boldsymbol{p}$.
\end{lemma}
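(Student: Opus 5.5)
The plan is to write $C^{\overline{\yd{\varsigma}}}(R)$ on each branch $R\in[\overline\Delta_{k,i^\ast},\overline\Delta_{k,i^\ast+1}]$ (with $\overline\Delta_{k,k}\coloneqq0$ and $\overline\Delta_{k,d+1}\coloneqq\infty$) as the explicit product in~\cref{eq:overhang_leading_branch}, and to bound its derivative in $R$ uniformly. Lipschitz continuity on $[0,\infty)$ then follows from piecewise Lipschitz bounds together with continuity at the breakpoints (\cref{lem:overhang_leading_continuity}). On a compact interval, a continuous function that is Lipschitz on each of finitely many closed pieces with a common constant $L$ is globally $L$-Lipschitz: for $R<R'$ one inserts the intermediate breakpoints and uses the triangle inequality. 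There are at most $d$ branches, so it suffices to produce one constant valid on every branch.

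On a fixed branch, every factor of $C^{\overline{\yd{\varsigma}}}$ lies in $[0,1]$. This holds for the right-hand factors, which have the form $D_{k,i}/(D_{k,i}+p_iR/\overline\Delta_{k,i})$. Note that for $i<k$ both $D_{k,i}$ and $\overline\Delta_{k,i}$ are negative, so the quotient is again of the form $x/(x+y)$ with $x,y\ge0$ after sign bookkeeping. It also holds for the prefactors $\overline\Delta_{k,i+1}D_{k,i+1}/(p_kR)$ for $i<i^\ast$, because $R\ge\overline\Delta_{k,i^\ast}\ge\overline\Delta_{k,i+1}$ and $D_{k,i+1}\le p_k$. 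The derivative of a product of at most $d$ factors in $[0,1]$ is therefore bounded by the sum of the absolute derivatives of the factors. For a prefactor, $|\partial_R(aR^{-1})|=a/R^2\le 1/R$, and on branches with $i^\ast>k$ we have $R\ge\overline\Delta_{k,k+1}\ge\epsilon$, giving the bound $1/\epsilon$. For a right-hand factor $x/(x+cR)$ with $x=|D_{k,i}|$ and $c=p_i/|\overline\Delta_{k,i}|\le 1/\epsilon$, the derivative is bounded by $c/x\le 1/(\epsilon\,|D_{k,i}|)$. Summing gives a Lipschitz constant of order $d/\epsilon+\sum_{i\ne k}1/(\epsilon |D_{k,i}|)$, which depends only on $d$, $\epsilon$, and $\boldsymbol p$, and which is finite since the spectrum is nondegenerate.

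The main obstacle is making the branch-$i^\ast=k$ case and the unbounded last branch ($R\to\infty$, where $i^\ast=d$) work uniformly. For the first, the prefactor is empty and only right-hand factors appear, so the bound above still applies. For the second, I would check that the factors stay in $[0,1]$ and that the $1/R^2$ decay keeps the prefactor derivatives bounded. I would also verify the sign conventions carefully for $i<k$, where $D_{k,i}<0$, $\overline\Delta_{k,i}<0$, and $p_iR/\overline\Delta_{k,i}<0$, so that the denominator $D_{k,i}+p_iR/\overline\Delta_{k,i}$ does not vanish. Uniformity over $\overline{\yd{\varsigma}}\in\mathsf R_\epsilon$ enters only through the bounds $|\overline\Delta_{k,i}|\ge\epsilon$.
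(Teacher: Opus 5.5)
Your proposal is correct and follows essentially the same route as the paper. Both arguments bound the derivative uniformly on each branch and then glue the finitely many branches using the continuity lemma. The paper writes the logarithmic derivative $-C(R)\bigl(\tfrac{i^\ast-k}{R}+\sum_{i\notin\{k,\ldots,i^\ast\}}\tfrac{p_i}{\overline\Delta_{k,i}D_{k,i}+p_iR}\bigr)$, while you use the product rule; in both, $R\geq\overline\Delta_{k,k+1}\geq\epsilon$ controls the $1/R$ terms when $i^\ast>k$. Your explicit check that every factor lies in $[0,1]$, including the sign bookkeeping for $i<k$, makes precise two facts the paper's argument uses without comment: that $C$ is bounded and that the denominators never vanish.
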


\begin{proof}
On the branch with terminal index $i^\ast$, differentiating the leading term gives
\begin{equation}
\frac{d}{dR}C^{\overline{\yd{\varsigma}}}(R)
=
-C^{\overline{\yd{\varsigma}}}(R)
\mleft(
\frac{i^\ast-k}{R}
+
\sum_{i\notin\{k,\ldots,i^\ast\}}
\frac{p_i}{\overline\Delta_{k,i}D_{k,i}+p_iR}
\mright).
\end{equation}
If $i^\ast=k$, the first term vanishes, so there is no singularity at $R=0$.
If $i^\ast>k$, then the branch has $R\geq\overline\Delta_{k,k+1}\geq\epsilon$, so $1/R$ is uniformly bounded.
The remaining denominators are bounded away from zero on $\mathsf{R}_\epsilon$ by gap separation and nondegeneracy of $\boldsymbol{p}$, so each branch has a uniformly bounded $R$-derivative.
The branches glue continuously by~\cref{lem:overhang_leading_continuity}.
Since there are finitely many branches, the uniform derivative bounds give the claimed Lipschitz estimate.
\end{proof}

\begin{theorem}[Extensive overhang all-site utility]
\label{thm:extensive_overhang_all_site_fidelity}
Let $m=R^\infty n+\lilo{n}$, and let $I^\ast$ be determined by
$D_{k,{I^\ast}-1}\leq R^\infty<D_{k,{I^\ast}}$.
For the overhang removal rule,
\begin{equation}
\label{eq:sectorwise_fidelity_plug_defs}
    ^k\mathcal{F}_\mathrm{all} =
    \prod_{i=k}^{I^\ast-1}\frac{D_{k,i+1}^2}{p_{k}R^\infty}
    \prod_{i\notin \{k,\ldots,I^\ast\}} \frac{D_{k,i}^2}{D_{k,i}^2 + p_{i}R^\infty}+\lilo[d,\boldsymbol{p}][n]{1} \, .
\end{equation}
\end{theorem}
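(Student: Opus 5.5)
The proof averages the sector-wise leading term \eqref{eq:overhang_leading_branch} over the SW distribution, using the extensive half of \cref{lem:sw_averaging_uniform_sector_expansion}.

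\textbf{Step 1: uniform sector-wise expansion.} Fix $\epsilon>0$ small enough that $\boldsymbol{p}$ lies in the interior of $\mathsf{R}_\epsilon$. The preceding derivation, which combines \cref{cor:weyl_average_main_term_remainder} with the endpoint products, the delta terms and \cref{lem:intensive_numerators_overhang}, gives, uniformly over $\overline{\yd{\varsigma}}\in\mathsf{R}_\epsilon$,
\begin{equation*}
{}^k\res{f_\mathrm{all}}{\yd{\varsigma}}{\yd{m}}{\yd{\mu}} = C^{\overline{\yd{\varsigma}}}(R_n) + \bigO[d,\boldsymbol{p},\epsilon][n]{n^{-1}},
\qquad R_n=m_n/n .
\end{equation*}
We then replace $R_n$ by $R^\infty$. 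By \cref{lem:overhang_leading_lipschitz}, $C^{\overline{\yd{\varsigma}}}$ is Lipschitz in $R$, uniformly on $\mathsf{R}_\epsilon$. Since $|R_n-R^\infty|=\lilo{1}$, this gives
\begin{equation*}
f_n(\overline{\yd{\varsigma}}) = C(\overline{\yd{\varsigma}})+\lilo[d,\boldsymbol{p},\epsilon]{1}, \qquad C(\boldsymbol{x})\coloneqq C^{\boldsymbol{x}}(R^\infty).
\end{equation*}

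\textbf{Step 2: Lipschitz continuity of $C$ in $\boldsymbol{x}$.} This is the hypothesis the lemma needs, and I expect it to be the main obstacle. The branch index $i^\ast(\boldsymbol{x})=\min\{i:\overline\Delta_{k,i+1}\ge R^\infty\}$ depends on $\boldsymbol{x}$ itself, so $C$ is only piecewise smooth in $\boldsymbol{x}$. Within a fixed branch, each factor
\begin{equation*}
\frac{\overline\Delta_{k,i+1}D_{k,i+1}}{p_kR^\infty}
\quad\text{and}\quad
\frac{D_{k,i}}{D_{k,i}+p_iR^\infty/\overline\Delta_{k,i}}
\end{equation*}
is smooth in $\boldsymbol{x}$ with bounded gradient, because $\overline\Delta_{k,i}\ge\epsilon$ on $\mathsf{R}_\epsilon$ and $R^\infty>0$. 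Across a breakpoint $\overline\Delta_{k,i}=R^\infty$, the two branches agree, by the same computation as in \cref{lem:overhang_leading_continuity} with the roles of $R$ and $\overline\Delta_{k,i}$ exchanged. A function that is continuous and piecewise Lipschitz on finitely many regions is Lipschitz along segments, so $C$ is Lipschitz on the convex set $\mathsf{R}_\epsilon$.

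If this continuity argument proves delicate, there is a fallback. Shrink to a small neighbourhood of $\boldsymbol{p}$ inside $\mathsf{R}_\epsilon$. By \cref{thm:upper_bound_probability_row_differences} the complement still has probability $\lilo{1}$, and only the branches adjacent to $\boldsymbol{p}$ can occur there, so it suffices to control those.

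\textbf{Step 3: averaging and evaluation at $\boldsymbol{p}$.} The extensive statement of \cref{lem:sw_averaging_uniform_sector_expansion} now gives ${}^k\mathcal{F}_\mathrm{all}=C(\boldsymbol{p})+\lilo[d,\boldsymbol{p}]{1}$. At $\overline{\yd{\varsigma}}=\boldsymbol{p}$ we have $\overline\Delta_{k,i}=D_{k,i}$, so the terminal index becomes the macroscopic index $I^\ast$. The prefactors become $D_{k,i+1}^2/(p_kR^\infty)$, and the remaining factors become $D_{k,i}^2/(D_{k,i}^2+p_iR^\infty)$, using $\overline\Delta_{k,i}D_{k,i}=D_{k,i}^2$. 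This is exactly \eqref{eq:sectorwise_fidelity_plug_defs}.

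Two edge cases need checking. For $i<k$, both $\overline\Delta_{k,i}$ and $D_{k,i}$ are negative, so the sign convention in \eqref{eq:overhang_leading_branch} must give the same squared form $D_{k,i}^2/(D_{k,i}^2+p_iR^\infty)$. When $R^\infty$ sits exactly on a breakpoint, continuity from Step 2 makes the choice of $I^\ast$ immaterial.
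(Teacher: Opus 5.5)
Your proposal is correct and follows essentially the same route as the paper's proof. The paper likewise replaces $m/n$ by $R^\infty$ via \cref{lem:overhang_leading_lipschitz}, asserts that the leading term is uniformly Lipschitz in $\overline{\yd{\varsigma}}$ (without the piecewise-gluing argument you give in Step~2), then applies the extensive case of \cref{lem:sw_averaging_uniform_sector_expansion} and evaluates at $\boldsymbol{p}$.

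One point should be made explicit: which $I^\ast$ your formula refers to. The index you actually obtain, $i^\ast(\boldsymbol{p})=\min\{i: D_{k,i+1}\ge R^\infty\}$, is the right one; it agrees with \cref{def:overhang_removal_rule}, \cref{lem:overhang_leading_continuity} and the main-text definition of $I^\ast$. The bracketing $D_{k,I^\ast-1}\le R^\infty<D_{k,I^\ast}$ written in the statement, and repeated in the paper's proof, is off by one: it can never give $I^\ast=k$ when $R^\infty>0$. So state your $I^\ast$ rather than silently identifying the two.
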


\begin{proof}
For any sequence $m=R^\infty n+\lilo{n}$, the leading term in~\cref{eq:overhang_leading_branch} converges uniformly to the same expression with $R$ replaced by $R^\infty$:
\begin{equation}
^k\res{f_\mathrm{all}}{\yd{\varsigma}}{\yd{m}}{\yd{\mu}}
=
\prod_{i=k}^{i^\ast-1}
\frac{\overline\Delta_{k,i+1}D_{k,i+1}}{p_kR^\infty}
\prod_{i\notin \{k,\ldots,i^\ast\}}
\frac{D_{k,i}}{D_{k,i}+\frac{p_iR^\infty}{\overline\Delta_{k,i}}}
+
\lilo[d,\boldsymbol{p},\epsilon][n]{1}.
\end{equation}
Here $i^\ast$ is the sector-wise terminal index, determined by $\overline\Delta_{k,{i^\ast}-1}\leq R^\infty<\overline\Delta_{k,{i^\ast}}$.
Under Schur--Weyl concentration, this becomes the macroscopic terminal index $I^\ast$ determined by $D_{k,{I^\ast}-1}\leq R^\infty<D_{k,{I^\ast}}$.
Similarly, the leading branch expression is uniformly Lipschitz in $\overline{\yd{\varsigma}}$ on the gap-separated region, so the sector-wise expression may be evaluated at the limiting spectrum $\boldsymbol{p}$ after averaging.
Applying the concentration argument in~\cref{subsec:concentration_supp} gives~\cref{eq:sectorwise_fidelity_plug_defs}.
\end{proof}

\subsubsection{Distinct-ray competitors}
\label{subsubsec:extensive_distinct_ray_supp}

We study removal rules with limiting rays different from the overhang removal ray, and show that they result in sector-wise utility negligible in the large-$n$ limit. First, consider any removal rule that removes \(\bigTheta{n}\) boxes from some row \(\ell<k\), so that \(R^\infty_\ell>0\). Then, the utility component contains the factor \(\rpoch{-t_{\ell,d}}{m_\ell}\), which vanishes for every fixed \(t_{\ell,d}\) once \(m_\ell>t_{\ell,d}\). Since \(m_\ell\to\infty\), the terms $\res{f}{\yd{\mu}}{\yd{m}}{\yd{\varsigma}}_\mathrm{all}(\boldsymbol{t})$ tend pointwise to zero. By the dominated convergence theorem (DCT), the sector-wise utility also tends to zero.

We focus on removal rules that remove $\bigTheta{n}$ boxes only from rows $\ell\ge k$. Define the cumulative limiting removals $S^\infty_\ell\coloneqq \sum_{j=k}^{\ell}R^\infty_j$.
\begin{lemma}[Polynomial bound for signed extensive products]
\label{lem:polynomial_bound_signed_extensive_products}
Let \(m_n\in\mathbb N\) satisfy \(m_n=R^\infty n+\lilo[][n]{n}\) for some fixed \(R^\infty>0\).
Let \(t,c\in\mathbb Z\).
Assume that \(n\delta+c\in\mathbb Z\), and define
\begin{equation}
P_n
\coloneqq
\prod_{r=1}^{m_n}
\left(
1-\frac{t}{n\delta+c+r}
\right).
\end{equation}
If \(n\delta+c+r\neq0\) for all \(1\le r\le m_n\), then $|P_n|
=
\bigO[t,R^\infty][n]{n^{2|t|}}.$
\end{lemma}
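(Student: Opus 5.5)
The plan is to evaluate \(P_n\) exactly by telescoping and then bound a product of only \(|t|\) ratios. Write \(A\coloneqq n\delta+c+1\) and \(B\coloneqq n\delta+c+m_n+1\). Then \(P_n=\prod_{q=A}^{B-1}(q-t)/q\), where every \(q\) in this range is a nonzero integer by hypothesis, so \(|q|\ge 1\).

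First I will dispose of the degenerate regime \(m_n<|t|\). Each factor satisfies \(|1-t/q|\le 1+|t|\), so \(|P_n|\le(1+|t|)^{|t|}\), a constant depending only on \(t\). Since \(m_n=R^\infty n+\lilo{n}\) with \(R^\infty>0\), this regime occurs only for finitely many \(n\). For the remaining \(n\), I also use \(m_n\le 2R^\infty n\) for \(n\) large, which holds with a threshold depending only on the sequence \(m_n\).

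Next, assume \(m_n\ge|t|\) and consider \(t\ge0\). Shifting the index in the numerator gives the rational identity
\begin{equation}
P_n=\frac{\prod_{q=A-t}^{B-1-t}q}{\prod_{q=A}^{B-1}q}=\prod_{j=1}^{t}\frac{A-j}{B-j}.
\end{equation}
This is justified because the common middle range \(A,\ldots,B-1-t\) cancels, and those values are nonzero. For \(1\le j\le t\le m_n\), the denominator value \(B-j\) lies in \([A,B-1]\), so it is a nonzero integer. Since \((A-j)-(B-j)=-m_n\), each ratio satisfies
\begin{equation}
\left|\frac{A-j}{B-j}\right|\le 1+\frac{m_n}{|B-j|}\le 1+m_n .
\end{equation}
Hence \(|P_n|\le(1+m_n)^{t}\).

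For \(t=-s<0\), the same shift gives \(P_n=\prod_{j=0}^{s-1}(B+j)/(A+j)\). Here \(A+j\in[A,B-1]\) because \(s\le m_n\), so \(A+j\) is a nonzero integer. The bound \(|B+j|/|A+j|\le 1+m_n\) follows as before, giving \(|P_n|\le(1+m_n)^{s}\).

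Combining the two cases with \(m_n\le 2R^\infty n\) yields \(|P_n|\le(1+2R^\infty n)^{|t|}=\bigO[t,R^\infty][n]{n^{|t|}}\). This is stronger than the claimed \(n^{2|t|}\). The bound is uniform in \(\delta\) and \(c\), which matters because \(A\) may be negative or near zero.

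The only delicate point is checking that every denominator appearing after telescoping is one of the original (nonzero) \(q\)'s. This is exactly where \(m_n\ge|t|\) is needed, and it is why the small-\(m_n\) case is handled separately by the crude per-factor bound.
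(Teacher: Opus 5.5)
Your proposal is correct, and it takes a genuinely different route from the paper. The paper does not telescope. It bounds $\ln|P_n|$ directly in three steps. First, it isolates the $O_t(1)$ factors with $|q_r|\le 2|t|+1$, each of modulus at most $1+|t|$. Second, it controls the remaining factors via $\ln(1+|t|/|q_r|)\le |t|/|q_r|$. Third, it sums the reciprocals as a harmonic sum over consecutive integers, taken with a factor $2$ for both signs of $q_r$. This gives $\ln|P_n|\le 2|t|\ln n+O(1)$.

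You instead use the exact cancellation $\prod_{q=A}^{B-1}(q-t)/q=\prod_{j=1}^{t}(A-j)/(B-j)$ and its analogue for $t<0$. This is the same identity the paper uses for $Q_n(t)$ in its endpoint-product lemma. It reduces the problem to $|t|$ ratios, each of modulus at most $1+m_n$.

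Your route buys an explicit bound $|P_n|\le(1+|t|+m_n)^{|t|}$, valid for every $n$, and hence the sharper exponent $|t|$. The factor $2$ in the paper is in fact unnecessary: $m_n$ consecutive nonzero integers cannot straddle $0$, so they all have one sign. Your check that every leftover denominator is one of the original $q$'s, which needs $m_n\ge|t|$, is exactly right.

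What the paper's argument buys in return is that it never uses $t\in\mathbb{Z}$. The logarithmic estimate works verbatim for real shifts $t$, whereas your cancellation relies on the integrality of $t$, which the lemma does assume.

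Both arguments convert $m_n$ into a power of $n$ using the particular sequence $m_n=R^\infty n+o(n)$. The threshold is therefore not determined by $(t,R^\infty)$ alone, and your explicit form isolates this dependence cleanly.
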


\begin{proof}
If one factor is $0$, the result is immediate.
Otherwise set \(q_r=n\delta+c+r\).
Since \(n\delta+c\in\mathbb Z\), the numbers \(q_1,\ldots,q_{m_n}\) are consecutive nonzero integers.
Let \(B=2|t|+1\).
There are at most \(2B+1=\bigO[t][n]{1}\) indices \(r\) for which \(|q_r|\le B\).
For these indices, since \(q_r\neq0\), we have $\left|
1-\frac{t}{q_r}
\right|
\le
1+|t|,$
so all small-denominator factors contribute at most \(\bigO[t][n]{1}\).

For the remaining factors, \(|q_r|>B\), and therefore
\begin{equation}
\ln\left|
1-\frac{t}{q_r}
\right|
\le
\ln\left(
1+\frac{|t|}{|q_r|}
\right)
\le
\frac{|t|}{|q_r|}.
\end{equation}
Because the \(q_r\)'s are consecutive integers, the harmonic sum is bounded by
\begin{equation}
\sum_{r:\,|q_r|>B}\frac{1}{|q_r|}
\le
2\sum_{s=1}^{m+B}\frac1s
\le
2\ln n+\bigO[t,R^\infty][n]{1},
\end{equation}
where we used \(m=R^\infty n+\lilo[][n]{n}\).
It follows that
$\ln |P_n|
\le
2|t|\ln n+\bigO[t,R^\infty][n]{1}.$
Exponentiating gives \(|P_n|=\bigO[t,R^\infty][n]{n^{2|t|}}\).
\end{proof}

\begin{lemma}[Exponential rate of the multinomial block]
\label{lem:multinomial_block_exponential_rate}
Let \(m_\ell\) be nonnegative integers for \(k\le \ell\le d\), and set \(M=\sum_{\ell=k}^{d}m_\ell\).
Fix \(\boldsymbol{R}^\infty\dot{=}(R^\infty_k,\ldots,R^\infty_d)\).
Assume that, for fixed rates \(R^\infty_\ell\ge0\) with \(k\le \ell\le d\),
\begin{equation}
m_\ell=nR^\infty_\ell+\bigO[][n]{1}.
\end{equation}
For \(\ell=k+1,\ldots,d\), assume moreover that
\begin{equation}
\Delta_\ell=n\delta^\infty_\ell+\lilo[][n]{n},
\qquad
\delta^\infty_\ell\ge\epsilon
\end{equation}
for some fixed \(\epsilon>0\).
Let \(t_\ell,\tau_\ell\) be fixed nonnegative integers for \(\ell=k+1,\ldots,d\), and assume that \(\tau_\ell\le T_d\). Furthermore, denote $\boldsymbol{t}=\{t_\ell\}_\ell$.
Then
\begin{equation}
I_n=\binom{M}{m_k,m_{k+1},\ldots,m_d}
\prod_{\ell=k+1}^{d}
\frac{\rpoch{t_\ell+1}{m_\ell}}
{\rpoch{\Delta_\ell+\tau_\ell+1}{m_\ell}}=e^{n\Phi+\lilo[\mathbf t,d,\epsilon][n]{n}},
\end{equation}
where \(\Phi\) is given by
\begin{equation}
\label{eq:Phi_general_eta}
\Phi
=
g(R^\infty)-g(R^\infty_k)
+
\sum_{\ell=k+1}^{d}
\mleft(
g(\delta^\infty_\ell)-g(\delta^\infty_\ell+R^\infty_\ell)
\mright),
\end{equation}
with $g(x)=x\ln x$, \(g(0)=0\), and \(R^\infty=\sum_{\ell=k}^{d}R^\infty_\ell\).
\end{lemma}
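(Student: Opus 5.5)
We take logarithms and treat the two pieces of \(I_n\) separately: the multinomial coefficient and the product of rising-factorial ratios. Every factorial that appears either has an argument of the form \(n x+O(1)\) with \(x\) fixed, or has a bounded argument. In the first case we apply \cref{lem:uniform_stirling_bounded_shifts}, whose second logarithmic form gives \(\ln((nx+a)!)=nx\ln n+g(x)n-nx+O(\ln n)\). The \(\ln n\) and linear terms cancel across each ratio because the arguments balance, leaving \(n\) times a combination of \(g\)'s.

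For the multinomial we write \(\ln\binom{M}{m_k,\ldots,m_d}=\ln M!-\sum_\ell \ln m_\ell!\). Since \(M=nR^\infty+O_d(1)\) and \(m_\ell=nR^\infty_\ell+O(1)\), this equals \(n\bigl(g(R^\infty)-\sum_{\ell\ge k} g(R^\infty_\ell)\bigr)+o(n)\). The total \(n\ln n\) and \(-n\) contributions cancel because \(\sum_\ell R^\infty_\ell=R^\infty\). Rows with \(R^\infty_\ell=0\) need care, since \(m_\ell=O(1)\) and the Stirling lemma does not apply. We handle them directly: \(\ln m_\ell!=O(1)\), which matches \(g(0)=0\). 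The same applies if \(R^\infty=0\).

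For each \(\ell\ge k+1\) we rewrite \(\rpoch{t_\ell+1}{m_\ell}=(t_\ell+m_\ell)!/t_\ell!\). Because \(t_\ell\) is fixed, \(\ln\bigl((t_\ell+m_\ell)!/t_\ell!\bigr)=\ln m_\ell!+O(t_\ell\ln n)\). This contributes \(+n g(R^\infty_\ell)\), which cancels the corresponding \(-n g(R^\infty_\ell)\) from the multinomial and leaves only \(-g(R^\infty_k)\). Here the \(O(t_\ell\ln n)\) error is the reason the little-\(o\) depends on \(\mathbf t\). The denominator is \(\rpoch{\Delta_\ell+\tau_\ell+1}{m_\ell}=(\Delta_\ell+\tau_\ell+m_\ell)!/(\Delta_\ell+\tau_\ell)!\). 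Both arguments are \(n\delta^\infty_\ell+o(n)\) and \(n(\delta^\infty_\ell+R^\infty_\ell)+o(n)\), with \(\delta^\infty_\ell\ge\epsilon\) and the shift \(\tau_\ell\le T_d\) bounded. The log of this ratio is therefore \(n\bigl(g(\delta^\infty_\ell+R^\infty_\ell)-g(\delta^\infty_\ell)\bigr)+o(n)\), with the \(n\ln n\) terms again balancing. Summing all contributions gives \(n\Phi\) with \(\Phi\) exactly as in \cref{eq:Phi_general_eta}.

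The main obstacle is that \(\Delta_\ell\) is only \(n\delta^\infty_\ell+o(n)\), not \(+O(1)\), so \cref{lem:uniform_stirling_bounded_shifts} cannot be applied verbatim. We get around this by applying it with the exact ratio \(\delta=\Delta_\ell/n\), which tends to \(\delta^\infty_\ell\). We then use continuity of \(g\) on compact sets: \(g(\Delta_\ell/n)\to g(\delta^\infty_\ell)\) uniformly, so \(n\bigl(g(\Delta_\ell/n)-g(\delta^\infty_\ell)\bigr)=o(n)\). The same argument applies to \(g\bigl((\Delta_\ell+m_\ell)/n\bigr)\). Every other error is either \(O(\ln n)\), with constants depending on \(d\), \(\epsilon\) and \(\mathbf t\), or \(O(1)\), so all of them are absorbed into \(o_{\mathbf t,d,\epsilon}(n)\).
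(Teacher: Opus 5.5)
Your proposal is correct and follows essentially the same route as the paper: Stirling via \cref{lem:uniform_stirling_bounded_shifts} on the positive-rate factorials, separate handling of zero-rate rows, and cancellation of the $n\ln n$ and linear terms; your exact-ratio-plus-continuity-of-$g$ argument also makes explicit the $\Delta_\ell=n\delta^\infty_\ell+o(n)$ step that the paper only asserts. One bookkeeping slip: on its own, $\ln\bigl[(\Delta_\ell+\tau_\ell+m_\ell)!/(\Delta_\ell+\tau_\ell)!\bigr]=m_\ell\ln n-m_\ell+n\bigl(g(\delta^\infty_\ell+R^\infty_\ell)-g(\delta^\infty_\ell)\bigr)+o(n)$, so its $n\ln n$ terms do not balance internally, and likewise the numerator's $\ln m_\ell!$ is $m_\ell\ln n-m_\ell+n\,g(R^\infty_\ell)+O(\ln n)$ rather than just $n\,g(R^\infty_\ell)$; these extra terms cancel within the ratio $(t_\ell+1)^{\overline{m_\ell}}/(\Delta_\ell+\tau_\ell+1)^{\overline{m_\ell}}$, exactly as in the paper's computation, so your final $n\Phi$ is unaffected.
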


\begin{proof}
Let
\begin{equation}
L=\{\ell\in\{k,\ldots,d\}:R^\infty_\ell>0\},
\qquad
L_{\neq k}=L\cap\{k+1,\ldots,d\}.
\end{equation}
Rows outside \(L\) satisfy \(m_\ell=\bigO[d][n]{1}\).
Let \(M_0=\sum_{\ell\notin L}m_\ell\), so \(M_0=\bigO[d][n]{1}\).
The multinomial coefficient can be split as
\begin{equation}
\binom{M}{m_k,\ldots,m_d}
=
\binom{M}{M_0}
\binom{M_0}{\{m_\ell\}_{\ell\notin L}}
\binom{M-M_0}{\{m_\ell\}_{\ell\in L}}.
\end{equation}
Since \(M=\bigO[d][n]{n}\) and \(M_0=\bigO[d][n]{1}\), the first two factors are \(n^{\bigO[d][n]{1}}\).
For every \(\ell\notin L\), the ratio \(\rpoch{t_\ell+1}{m_\ell}/\rpoch{\Delta_\ell+\tau_\ell+1}{m_\ell}\) contains only \(\bigO[d][n]{1}\) factors.
Because the shifts \(t_\ell,\tau_\ell\) are fixed, with \(\tau_\ell\le T_d\), these ratios contribute at most \(n^{\bigO[\mathbf t,d][n]{1}}\).
Thus all zero-rate rows contribute only a polynomial factor.

It remains to compute the exponential rate from the positive-rate rows.
We apply~\cref{lem:uniform_stirling_bounded_shifts} to all \(m_\ell\) with \(\ell\in L\), since \(m_\ell=nR^\infty_\ell+\bigO[d][n]{1}\) and \(R^\infty_\ell>0\).
For \(\Delta_\ell+\tau_\ell\) and \(\Delta_\ell+\tau_\ell+m_\ell\) with \(\ell\in L_{\neq k}\), the same logarithmic Stirling expansion and \(\Delta_\ell=n\delta^\infty_\ell+\lilo[][n]{n}\) give \(\lilo[d,\epsilon][n]{n}\) errors.
The extensive multinomial factor gives
\begin{equation}
\ln
\binom{M-M_0}{\{m_\ell\}_{\ell\in L}}
=
n g(R^\infty)
-
n\sum_{\ell\in L}g(R^\infty_\ell)
+
\bigO[\mathbf t,d][n]{\ln n}.
\end{equation}
For each \(\ell\in L_{\neq k}\),
\begin{equation}
\ln \rpoch{t_\ell+1}{m_\ell}
=
nR^\infty_\ell\ln n+n g(R^\infty_\ell)-nR^\infty_\ell
+
\bigO[\mathbf t,d][n]{\ln n},
\end{equation}
and
\begin{equation}
\ln\rpoch{\Delta_\ell+\tau_\ell+1}{m_\ell}
=
nR^\infty_\ell\ln n
+
n g(\delta^\infty_\ell+R^\infty_\ell)
-
n g(\delta^\infty_\ell)
-
nR^\infty_\ell
+
\lilo[d,\epsilon][n]{n}.
\end{equation}
Therefore
\begin{equation}
\ln
\frac{\rpoch{t_\ell+1}{m_\ell}}
{\rpoch{\Delta_\ell+\tau_\ell+1}{m_\ell}}
=
n g(R^\infty_\ell)
+
n g(\delta^\infty_\ell)
-
n g(\delta^\infty_\ell+R^\infty_\ell)
+
\lilo[\mathbf t,d,\epsilon][n]{n}.
\end{equation}
Now summing this identity over \(\ell\in L_{\neq k}\) and exponentiating gives \(I_n=e^{n\Phi+\lilo[\mathbf t,d,\epsilon][n]{n}}\), where the vanishing contributions from $\ell\notin L$ have already been accounted for since \(g(0)=0\).
\end{proof}

\begin{corollary}[Extensive deviations from overhang removal]
\label{cor:extensive_deviation_overhang_suppressed}
Suppose \(\overline\Delta_{i,j}=\delta_{i,j}+\lilo{1}\) for the limiting normalized row gaps.
For all admissible extensive removal rates
\begin{equation}
\label{eq:extensive_admissible_cumulative}
0\leq R^\infty_{\ell}\le \delta_{\ell,\ell+1},
\qquad
\ell=k,\ldots,d,
\end{equation}
with the convention $\delta_{\ell,\ell+1}=\infty$.
Consider the intensive-numerator terms
\begin{equation}
\begin{aligned}I_n=\binom{m}{\mathbf m}\prod_{\ell=k+1}^{d}\frac{\rpoch{t_{\ell-1,d}+1}{m_\ell}}{\rpoch{\Delta_{k,\ell}-k+\ell+1\,}{m_\ell}}
\end{aligned}
\end{equation}
Then $I_n=e^{n\Phi+\lilo[\mathbf t,d,\epsilon][n]{n}}$ and the exponential rate \(\Phi\) satisfies \(\Phi\le0\).
Moreover, \(\Phi=0\) if and only if the limiting rates are those of the saturated asymptotic overhang removal rule from~\cref{def:overhang_removal_rule}: for the terminal row \(i^\ast\),
\begin{equation}
R^\infty_\ell
=
\mleft\{\begin{aligned}
&0, && \ell<k,\\
&\delta_{\ell,\ell+1}, && k\le \ell<i^\ast,\\
&R^\infty-\delta_{k,i^\ast}, && \ell=i^\ast,\\
&0, && \ell>i^\ast.
\end{aligned}\mright.
\end{equation}
Consequently, any extensive deviation from overhang removal has \(\Phi<0\) and is exponentially suppressed up to polynomial prefactors.
\end{corollary}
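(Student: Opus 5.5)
The plan is to reduce the first claim to \cref{lem:multinomial_block_exponential_rate}, and then to treat $\Phi$ as an explicit function of the rates and show by a concavity/exchange argument that its maximum over admissible rates at fixed total $R^\infty$ is attained exactly at the overhang ray, where it equals $0$.

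\emph{Reduction to the lemma.} I match the displayed $I_n$ to \cref{lem:multinomial_block_exponential_rate} by taking $\Delta_\ell\coloneqq\Delta_{k,\ell}$, $\tau_\ell\coloneqq \ell-k\le d$ (so $T_d=d$) and $t_\ell\coloneqq t_{\ell-1,d}$. Then $\delta^\infty_\ell=\delta_{k,\ell}\ge\epsilon$ by gap separation. Rows $\ell<k$ do not enter, since that case was already disposed of by the dominated-convergence argument preceding the lemma. The lemma gives $I_n=e^{n\Phi+o(n)}$ with
\begin{equation*}
\Phi(\boldsymbol R)=g(R^\infty)-g(R_k)+\sum_{\ell=k+1}^{d}\bigl(g(\delta_{k,\ell})-g(\delta_{k,\ell}+R_\ell)\bigr),\qquad g(x)=x\ln x ,
\end{equation*}
where I write $R_\ell$ for $R^\infty_\ell$. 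It remains to study $\Phi$ on the polytope $P=\{0\le R_\ell\le\delta_{\ell,\ell+1},\ \sum_{\ell=k}^d R_\ell=R^\infty\}$.

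\emph{Value at the overhang ray.} Substituting $R_k=\delta_{k,k+1}$, $R_\ell=\delta_{\ell,\ell+1}$ for $k<\ell<i^\ast$, $R_{i^\ast}=R^\infty-\delta_{k,i^\ast}$ and $R_\ell=0$ otherwise, I use the identities $\delta_{k,\ell}+R_\ell=\delta_{k,\ell+1}$ for $k<\ell<i^\ast$ and $\delta_{k,i^\ast}+R_{i^\ast}=R^\infty$. The sum then telescopes:
\begin{equation*}
g(R^\infty)-g(\delta_{k,k+1})+\sum_{\ell=k+1}^{i^\ast-1}\bigl(g(\delta_{k,\ell})-g(\delta_{k,\ell+1})\bigr)+g(\delta_{k,i^\ast})-g(R^\infty)=0 .
\end{equation*}
This mirrors the exact telescoping already seen in \cref{lem:intensive_numerators_overhang}.

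\emph{Maximality and uniqueness.} For the remaining claims I argue as follows.
\begin{enumerate}
\item On $P$ the term $g(R^\infty)$ is constant. The functions $-g(R_k)$ and $-g(\delta_{k,\ell}+R_\ell)$ are strictly concave, so $\Phi$ is strictly concave on the convex set $P$. Hence it has a unique maximizer, characterized by the first-order (KKT) conditions.
\item Transferring mass $\eta$ from row $\ell'$ to row $\ell$ changes $\Phi$ at rate $\ln x_{\ell'}-\ln x_\ell$. Here $x_k\coloneqq R_k$ and $x_\ell\coloneqq\delta_{k,\ell}+R_\ell$ for $\ell>k$.
\item Admissibility gives $x_\ell\le\delta_{k,\ell+1}\le x_{\ell'}$ whenever $\ell<\ell'$. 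Using $\delta_{k,\ell}<\delta_{k,\ell'}$ for $\ell<\ell'$ from gap separation, the inequality is strict unless $\ell'=\ell+1$, $R_{\ell'}=0$ and $R_\ell$ is saturated.
\item Consequently any point with an unsaturated row $\ell$ and positive mass in some later row $\ell'>\ell$ admits a strictly improving transfer. So the only point satisfying the KKT conditions is the one filling rows $k,k+1,\dots$ in order up to capacity, namely the overhang ray.
\end{enumerate}
It follows that $\Phi\le0$ on $P$, with equality only at that ray. For any other admissible ray $\Phi<0$, so $I_n$ decays like $e^{n\Phi}$ up to the polynomial prefactors already isolated in the lemma.

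The main obstacle I expect is the boundary of $P$. At $R_k=0$ the derivative $-g'(R_k)=-\ln R_k-1$ diverges; I would handle this by noting that it only makes the transfer into row $k$ more favourable. The last row carries the convention $\delta_{d,d+1}=\infty$, which needs separate care. I would also have to state the equality case carefully when $R^\infty$ lands exactly on a breakpoint $\delta_{k,i}$, where two descriptions of the terminal row coincide.
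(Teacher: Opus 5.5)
Your proposal is correct, but it takes a different route from the paper. You evaluate $\Phi$ at the overhang ray, where it telescopes to $0$, and then argue globally. $\Phi$ is strictly concave on the admissible polytope $P$, and the marginal exchange rates $\ln x_{\ell'}-\ln x_\ell$ show that every non-overhang point admits a strictly improving transfer, so the ray is the unique maximizer.

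The paper never optimizes. It uses $g(R^\infty)-g(R^\infty_k)=\sum_{\ell=k+1}^{d}\bigl(g(S^\infty_\ell)-g(S^\infty_{\ell-1})\bigr)$, with $S^\infty_\ell=\sum_{j=k}^{\ell}R^\infty_j$, to write $\Phi=\sum_{\ell>k}\Phi_\ell$, where $\Phi_\ell=\bigl(g(S^\infty_{\ell-1}+R^\infty_\ell)-g(S^\infty_{\ell-1})\bigr)-\bigl(g(\delta_{k,\ell}+R^\infty_\ell)-g(\delta_{k,\ell})\bigr)$. Since $x\mapsto g(x+r)-g(x)$ is strictly increasing and admissibility gives $S^\infty_{\ell-1}\le\delta_{k,\ell}$, every $\Phi_\ell\le0$. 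Equality holds iff $R^\infty_\ell=0$ or rows $k,\dots,\ell-1$ are saturated.

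This termwise certificate gives $\Phi\le0$ and the equality case at once. It needs no separate evaluation at the ray and no existence of a maximizer. It also avoids the boundary care you anticipate: the increment is continuous at $x=0$, so $R^\infty_k=0$, the convention $\delta_{d,d+1}=\infty$, and breakpoints cause no trouble. It also localizes the loss row by row: $\Phi_\ell=-\int_{S^\infty_{\ell-1}}^{\delta_{k,\ell}}\ln\bigl(1+R^\infty_\ell/x\bigr)\,dx$ is the cost of the cumulative removal $S^\infty_{\ell-1}$ falling short of the capacity $\delta_{k,\ell}$. That is the same shortfall your exchange step detects infinitesimally.

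Your variational picture buys something different. It exhibits overhang removal as the unique maximizer of a strictly concave rate function, in direct parallel with the discrete exchange arguments for the $F$-symbols in \cref{lem:ordering_of_F_symbols}. Strict concavity is also a natural starting point for quantitative bounds on how negative $\Phi$ is away from the ray.

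For the qualitative claim you need neither concavity nor KKT. $P$ is compact, since the sum constraint bounds $R^\infty_d$ even when $\delta_{d,d+1}=\infty$, so a maximizer exists. Your strict-improvement step then rules out every other point; at $R^\infty_k=0$ the one-sided rate into row $k$ is simply $+\infty$.
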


\begin{proof}
The exponential rate in~\cref{lem:multinomial_block_exponential_rate} can be rewritten as
\begin{equation}
\label{eq:Phi_cumulative_form}
\Phi
=
\sum_{\ell=k+1}^{d}\Phi_\ell
=
\sum_{\ell=k+1}^{d}
\mleft(
g(S^\infty_\ell)-g(S^\infty_{\ell-1})
-
g(\delta_{k,\ell}+R^\infty_\ell)
+
g(\delta_{k,\ell})
\mright).
\end{equation}
For each \(\ell\), write \(S^\infty_\ell=S^\infty_{\ell-1}+R^\infty_\ell\). Then
\begin{equation}
\Phi_\ell
=
\mleft(g(S^\infty_{\ell-1}+R^\infty_\ell)-g(S^\infty_{\ell-1})\mright)
-
\mleft(g(\delta_{k,\ell}+R^\infty_\ell)-g(\delta_{k,\ell})\mright).
\end{equation}
For fixed \(r>0\), the increment \(g(x+r)-g(x)\) is strictly increasing for \(x\ge0\), since for \(x>0\)
\begin{equation}
\frac{d}{dx}\mleft(g(x+r)-g(x)\mright)
=
\ln\mleft(1+\frac{r}{x}\mright)>0,
\end{equation}
with the value at \(x=0\) understood by continuity.
Thus~\cref{eq:extensive_admissible_cumulative} gives \(\Phi_\ell\le0\) for every \(\ell\). If \(R^\infty_\ell=0\), then \(\Phi_\ell=0\). Strict monotonicity gives \(\Phi_\ell=0\) if and only if \(R^\infty_\ell=0\) or \(S^\infty_{\ell-1}=\delta_{k,\ell}\).
Therefore, $\Phi=0$ forces the active rows to be contiguous, which is precisely the saturated asymptotic overhang removal rule.
\end{proof}
Thus, if the removal rule deviates extensively from overhang removal, then \(\Phi<0\). By~\cref{lem:multinomial_block_exponential_rate}, the corresponding multinomial block is $e^{n\Phi+\lilo[\mathbf t,d,\epsilon][n]{n}}$,
which is exponentially small. All remaining extensive products have zero exponential rate by~\cref{lem:polynomial_bound_signed_extensive_products}, so they only contribute polynomial prefactors. Applying DCT then shows that the utility contribution from these extensively deviating rules vanishes.

\subsection{Asymptotic analysis of the one-site utility}
\label{subsec:asymptotic_one_site_supp}

\begin{corollary}
    \label{cor:sector_wise_one_site_fidelity}
    Let $\yd{\varsigma}\dot{=}[\varsigma_{1},\ldots,\varsigma_{d}]$ and let $\yd{\mu}$ be given by the overhang removal rule, with $i^\ast$ the terminal index. Then we have the asymptotic expansion
    \begin{equation}\begin{aligned}
        \res{{^kf}}{\yd{\varsigma}}{\yd{m}}{\yd{\mu}}_{\mathrm{one}}
        = 1 - \frac{1}{n}\mleft(\sum_{i\neq k}\frac{p_{i}}{\overline{\Delta}_{k,i}D_{k,i}} + \sum_{i=k+1}^{i^\ast} \mleft(\frac{1}{\overline{\Delta}_{k,i}} - \frac{1}{R}\mright)\mright) + \lilo[d,\mathbf{r},\epsilon]{n^{-1}} \, ,
    \end{aligned}\end{equation}
    uniformly in $R$.
\end{corollary}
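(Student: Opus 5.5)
The plan is to start from the decomposition \cref{equ:one_site_decomposition}, given by \cref{lem:one_site_channel_decomposition}:
\begin{equation}
\res{{^kf}}{\yd{\mu}}{\yd{m}}{\yd{\varsigma}}_\mathrm{one}=\sum_{i}\alpha_i\,\res{{^kf}}{\yd{\varsigma-\mathbf{e}_i}}{\yd{1}}{\yd{\varsigma}}_\mathrm{all},\qquad \alpha_i\coloneqq\resSix{F}{\yd{\mu}}{\yd{m-1}}{\yd{1}}{\yd{\varsigma-\mathbf{e}_i}}{\yt{1\cdots m}}{\yd{\varsigma}}^2 .
\end{equation}
Both ingredients are then inserted separately.

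For the single-box all-site utilities we use the $m=1$ expansions from \cref{subsubsec:intensive_supp}. From \cref{equ:intensive_asymptotic_sector_wise_fidelity},
\begin{equation}
\res{{^kf}}{\yd{\varsigma-\mathbf{e}_k}}{\yd{1}}{\yd{\varsigma}}_\mathrm{all}=1-\frac1n\sum_{i\neq k}\frac{p_i}{\overline\Delta_{k,i}D_{k,i}}+\bigO{n^{-2}} .
\end{equation}
For $\ell\neq k$, the off-target expansion gives $\res{{^kf}}{\yd{\varsigma-\mathbf{e}_\ell}}{\yd{1}}{\yd{\varsigma}}_\mathrm{all}=\bigO[d,\mathbf r,\epsilon][n]{n^{-1}}$. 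All of these hold uniformly on $\mathsf{R}_\epsilon$.

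For the weights we use \cref{rmk:F_symbol_row_removal}. Under overhang removal, $\alpha_i=0$ unless $k\le i\le i^\ast$, and $\sum_i\alpha_i=1$ by \cref{rem:F_symbols_isometry}. It follows that
\begin{equation}
\res{{^kf}}{\yd{\mu}}{\yd{m}}{\yd{\varsigma}}_\mathrm{one}=1-\frac1n\sum_{i\neq k}\frac{p_i}{\overline\Delta_{k,i}D_{k,i}}-\sum_{i=k+1}^{i^\ast}\alpha_i\Bigl(1-\res{{^kf}}{\yd{\varsigma-\mathbf{e}_i}}{\yd{1}}{\yd{\varsigma}}_\mathrm{all}\Bigr)+\bigO{n^{-2}} .
\end{equation}
Here we used that $\alpha_k\cdot\bigO{n^{-1}}$ differs from $\bigO{n^{-1}}$ only by $(1-\alpha_k)\bigO{n^{-1}}$.

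The $1/n$ term is then absorbed as follows. By \cref{rmk:F_symbol_row_removal} and \cref{lem:asymptotic_behavior_of_F_symbol}, each $\alpha_i$ with $i>k$ carries the factor $C_i=1/(\Delta_{k,i}+i-k)=\bigO{n^{-1}}$. Hence $\alpha_i\cdot\res{{^kf}}{\yd{\varsigma-\mathbf{e}_i}}{\yd{1}}{\yd{\varsigma}}_\mathrm{all}=\bigO{n^{-2}}$, and the correction reduces to $-\sum_{i=k+1}^{i^\ast}\alpha_i$ up to $\lilo{n^{-1}}$.

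The main computation, and the step I expect to be the obstacle, is the $1/n$ expansion of $\alpha_i$ for $k<i\le i^\ast$, uniformly in $R$. This covers both the intensive regime, where $i^\ast=k$ and the claim is immediate since the second sum is empty, and the extensive regime. From the formula in \cref{rmk:F_symbol_row_removal},
\begin{equation}
\alpha_i=\frac{1}{\Delta_{k,i}}\prod_{j=k+1,j\ne i}^{i^\ast}\frac{\Delta_{i,j}+j-i-1}{\Delta_{i,j}+j-i}\cdot\frac{m-\Delta_{k,i}+(i^\ast-i)}{m}+\lilo{n^{-1}} .
\end{equation}
The product equals $1+\bigO{n^{-1}}$ on $\mathsf{R}_\epsilon$. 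The last factor equals $1-\overline\Delta_{k,i}/R+\bigO{n^{-1}}$; the division by $R$ is harmless because $i^\ast>k$ forces $R\ge\overline\Delta_{k,k+1}\ge\epsilon$, exactly as in the proof of \cref{lem:intensive_numerators_overhang}. Therefore
\begin{equation}
\alpha_i=\frac1n\Bigl(\frac{1}{\overline\Delta_{k,i}}-\frac1R\Bigr)+\bigO[d,\epsilon][n]{n^{-2}} .
\end{equation}

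To make the bound uniform in $R$, I would check that the shifts $(i^\ast-i)$ and $(j-i)$ are $\bigO[d]{1}$. The $\bigO{n^{-2}}$ constants then depend only on $d$ and $\epsilon$, and the number of terms is at most $d$. Summing over $i$ gives the claimed second sum. The remaining delicacy is near the breakpoints $R=\overline\Delta_{k,i}$, where $i^\ast$ jumps. There the newly added term $\frac1n\bigl(\frac1{\overline\Delta_{k,i}}-\frac1R\bigr)$ vanishes, so the expansion is continuous across branches and the error stays uniform.
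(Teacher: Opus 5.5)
Your proposal is correct and follows essentially the paper's route: you decompose via \cref{lem:one_site_channel_decomposition}, discard the off-target terms as $O(n^{-1})\cdot O(n^{-1})$, and expand the overhang-removal $F$-symbols of \cref{rmk:F_symbol_row_removal} to first order in $1/n$. The only cosmetic difference is that the paper expands $\alpha_k$ directly from the explicit product, whereas you expand the weights $\alpha_i$ for $k<i\le i^\ast$ and recover $\alpha_k$ through $\sum_i\alpha_i=1$; the two expansions agree.
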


\begin{proof}
    From~\cref{lem:one_site_channel_decomposition}, we know that
    \begin{equation}\begin{aligned} \Tr_{2,\ldots,m}\mleft(\extChannel{\mathcal{T}}{\yd{\varsigma}}{\yt{1\cdots m}}{\yd{\mu}}(\cdot)\mright) = \sum_{i=1}^{d}\resSix{F}{\yd{\mu}}{\yd{m-1}}{\yd{1}}{\yd{\varsigma-\mathbf{e}_i}}{\yt{1\cdots m}}{\yd{\varsigma}}^2 \extChannel{\mathcal{T}}{\yd{\varsigma}}{\yd{1}}{\yd{\varsigma-\mathbf{e}_i}}(\cdot) \, ,
    \end{aligned}\end{equation}
    From~\cref{thm:constant_output_intensive_sector_all_site_expansion}, we know that the off-target contributions for \(i\neq k\) are \(\lilo{1}\). On the other hand, we see from~\cref{rmk:F_symbol_row_removal} that they pick up a factor \(1/n\), so they become \(\lilo{1/n}\). Therefore we only need to be concerned about
    \begin{equation}
    \begin{aligned}
        \resSix{F}{\yd{\mu}}{\yd{m-1}}{\yd{1}}{\yd{\varsigma-\mathbf{e}_k}}{\yd{m}}{\yd{\varsigma}}^2 = 1 + \frac{1}{n}\mleft(\frac{i^\ast-k}{R} - \sum_{i=k+1}^{i^{*}}\frac{1}{\overline{\Delta}_{k,i}}\mright) + \lilo[d,\epsilon]{n^{-1}} \, .
    \end{aligned}
    \end{equation}
    Combining this with the asymptotics for the overhang removal protocol proves the claim.

\end{proof}

\begin{theorem}[Overall one-site asymptotics]
\label{thm:asymptotics_one_fidelity}
Let \(m=m_n\) satisfy \(m=Rn+\lilo[][n]{n}\) for fixed \(R>0\), and let \(I^\ast\) be the macroscopic terminal index of the overhang removal rule, equivalently \(D_{k,I^\ast-1}\leq R<D_{k,I^\ast}\). Then the one-site risk of the overhang removal protocol satisfies
\begin{equation}
\label{eq:overall_one_site_asymptotics_supp}
{}^k\mathcal{L}_{\mathrm{one}}
=
\frac{1}{n}
\mleft(
\sum_{i\neq k}\frac{p_i}{D_{k,i}^2}
+
\sum_{i=k+1}^{I^\ast}
\mleft(D_{k,i}^{-1}-R^{-1}\mright)
\mright)
+
\lilo[d,\boldsymbol{p}]{n^{-1}} \, .
\end{equation}
\end{theorem}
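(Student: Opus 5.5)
The plan is to follow the same route as \cref{thm:extensive_overhang_all_site_fidelity}, now starting from the sector-wise one-site expansion of \cref{cor:sector_wise_one_site_fidelity}. The overall one-site utility is the SW average of the sector-wise utility $\res{{^kf}}{\yd{\varsigma}}{\yd{m}}{\yd{\mu}}_{\mathrm{one}}$. I would split this average into the typical region $\mathsf{R}_\epsilon$, with $\epsilon$ chosen so that $\boldsymbol{p}$ lies in its interior, and its complement. The complement carries probability $\bigO[d,\boldsymbol{p},\epsilon][n]{\exp(-n)}$, and since $0\le f\le 1$ its contribution is negligible. On $\mathsf{R}_\epsilon$, \cref{cor:sector_wise_one_site_fidelity} gives
\begin{equation}
f_n=1-\frac{c(\overline{\yd{\varsigma}})}{n}+\lilo[d,\mathbf{r},\epsilon]{n^{-1}},
\qquad
c(\boldsymbol{x})=\sum_{i\neq k}\frac{p_i}{\Delta_{k,i}(\boldsymbol{x})D_{k,i}}+\sum_{i=k+1}^{i^\ast(\boldsymbol{x})}\mleft(\frac{1}{\Delta_{k,i}(\boldsymbol{x})}-\frac{1}{R_n}\mright),
\end{equation}
with $R_n=m/n$ and the remainder uniform in $R$. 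Given that, \cref{lem:sw_averaging_uniform_sector_expansion} yields $1-c(\boldsymbol{p})/n+\lilo{n^{-1}}$ once $c$ is shown to be uniformly Lipschitz on $\mathsf{R}_\epsilon$. At $\boldsymbol{x}=\boldsymbol{p}$ we have $\Delta_{k,i}(\boldsymbol{p})=D_{k,i}$, so the first sum becomes $\sum_{i\ne k}p_i/D_{k,i}^2$. Finally, $R_n\to R$ with $R>0$, so replacing $1/R_n$ by $1/R$ costs only $o(1)$ inside the bracket.

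The main obstacle is the Lipschitz property of the second sum, because the terminal index $i^\ast(\boldsymbol{x})=\min\{i:\Delta_{k,i+1}(\boldsymbol{x})\ge R_n\}$ jumps as $\boldsymbol{x}$ varies. I would handle this as in \cref{lem:overhang_leading_continuity}. Write the sum as $\sum_{i=k+1}^{d}\bigl(1/\Delta_{k,i}(\boldsymbol{x})-1/R_n\bigr)_+$, where the index $i$ belongs to the sum exactly when $\Delta_{k,i}(\boldsymbol{x})<R_n$, i.e.\ $i\le i^\ast$. Each term $x\mapsto(1/x-1/R_n)_+$ is continuous and vanishes precisely at the breakpoint $x=R_n$. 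On $x\ge\epsilon$ its Lipschitz constant is at most $\epsilon^{-2}$, independent of $R_n$. The sum is therefore uniformly Lipschitz with constant $\bigO[d,\epsilon]{1}$. The first sum is smooth on $\mathsf{R}_\epsilon$ since $\Delta_{k,i}\ge\epsilon$ there.

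A secondary point is nondegeneracy at the macroscopic level. If $R$ coincides exactly with some $D_{k,i}$, the term for that $i$ vanishes at $\boldsymbol{p}$ anyway. So the positive-part form makes the statement $D_{k,I^\ast-1}\le R<D_{k,I^\ast}$ consistent without any boundary case analysis, and identifies $c(\boldsymbol{p})$ with the stated expression. Taking $\mathcal{L}=1-\mathcal{F}$ then gives \cref{eq:overall_one_site_asymptotics_supp}, with the $o$-term depending only on $d,\boldsymbol{p}$ after fixing $\epsilon$ as a function of $\boldsymbol{p}$.
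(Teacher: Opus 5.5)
Your route is the same as the paper's: the uniform sector-wise expansion of \cref{cor:sector_wise_one_site_fidelity}, Lipschitz control of the $n^{-1}$ coefficient on $\mathsf{R}_\epsilon$, then \cref{lem:sw_averaging_uniform_sector_expansion}, and finally $m/n\to R$. Your positive-part representation $\sum_{i=k+1}^{d}\bigl(1/\Delta_{k,i}(\boldsymbol{x})-1/R_n\bigr)_+$ has a Lipschitz constant $\epsilon^{-2}$ per term, independent of $R_n$. That is a more direct justification than the paper's, which cites \cref{lem:overhang_leading_continuity,lem:overhang_leading_lipschitz}; those lemmas are proved for the all-site leading term, not for this coefficient.

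The step that is wrong is your last paragraph. The positive part keeps exactly the indices $i>k$ with $D_{k,i}<R$. These are $k+1\le i\le I^\ast$ for $I^\ast=\min\{i:D_{k,i+1}\ge R\}$, equivalently $D_{k,I^\ast}<R\le D_{k,I^\ast+1}$. This is the main-text definition, and it matches the sector-wise rule used in the corollary. It is \emph{not} consistent with the clause $D_{k,I^\ast-1}\le R<D_{k,I^\ast}$ in the statement. Under that clause the term $i=I^\ast$ of the stated sum is $D_{k,I^\ast}^{-1}-R^{-1}<0$, while its positive part vanishes.

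For example, if $R<D_{k,k+1}$, the clause gives $I^\ast=k+1$ and a strictly negative correction. The correct answer has no correction: on typical sectors all $m$ boxes come from row $k$, so the $i=k$ $F$-symbol equals $1$. Your $c(\boldsymbol{p})$ gives that correct answer. So state explicitly that you use $I^\ast=\min\{i:D_{k,i+1}\ge R\}$ and that the ``equivalently'' clause is off by one, rather than claiming the positive-part form reconciles it. With that fix, the rest of your argument goes through.
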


\begin{proof}
By~\cref{cor:sector_wise_one_site_fidelity}, the sector-wise utility has a \(\overline{\yd{\varsigma}}\)-uniform expansion on \(\mathsf{R}_\epsilon\). The coefficient of \(n^{-1}\) is a function of the normalized gaps and of the finite removal rate \(m/n\). The continuity and uniform Lipschitz estimates in~\cref{lem:overhang_leading_continuity,lem:overhang_leading_lipschitz} allow us to replace \(m/n\) by its limit \(R\) and, after SW averaging, to replace \(\overline{\yd{\varsigma}}\) by the spectrum \(\boldsymbol{p}\). Applying~\cref{lem:sw_averaging_uniform_sector_expansion} to this uniform sector-wise expansion gives
\begin{equation}
{}^k\mathcal{F}_{\mathrm{one}}
=
1-
\frac{1}{n}
\mleft(
\sum_{i\neq k}\frac{p_i}{D_{k,i}^2}
+
\sum_{i=k+1}^{I^\ast}
\mleft(D_{k,i}^{-1}-R^{-1}\mright)
\mright)
+
\lilo[d,\boldsymbol{p}]{n^{-1}} \, .
\end{equation}
Passing from utility to risk proves~\cref{eq:overall_one_site_asymptotics_supp}.
\end{proof}

\section{Nonasymptotic analysis of the sector-wise utility}
\label{sec:qpa_irrep_level_all_copy}

\subsection{gYDs with constraints}
\label{subsec:constrained_yd_supp}
\subsubsection{Definitions and monotonicity of averages under constraints}
\label{subsubsec:definitions_monotonicity_averages}

For the lower bounds on Clebsch-Gordan coefficients, we need better control over the averages of functions over the Weyl-distribution. In particular, we want to understand what happens if we add / remove boxes, and how functions behave, which only depend on a subset of the boxes in the YD. To this end, we introduce \emph{gYDs} and \emph{constraints} on them.

\begin{definition}
    Let $\yd{\lambda}_{\mathrm{g}} = \{(i_{1},j_{1}),\ldots,(i_{n},j_{n})\}$  be a finite subset of  $ \mathbb{Z}^2$. We call $\yd{\lambda}_{\mathrm{g}}$ a \emph{gYD}, and we refer to its elements $(i_{k},j_{k})$ as its boxes. A map $\wt{w}_{\mathrm{g}}:\yd{\lambda}_{\mathrm{g}} \rightarrow \{1,\ldots,d\}$, interpreted as a filling of the boxes of \(\yd{\lambda}_{\mathrm{g}}\), is called a \emph{generalized WT of $\yd{\lambda}_{\mathrm{g}}$} (gWT) if it is non-decreasing to the right and increasing downwards, i.e.,
    \begin{equation}\begin{aligned}
        \wt{w}_{\mathrm{g}}(i,j) \leq \wt{w}_{\mathrm{g}}(i,j+1) \quad \text{for} \quad (i,j),(i,j+1) \in \yd{\lambda}_{\mathrm{g}} \, , \\
        \wt{w}_{\mathrm{g}}(i,j) < \wt{w}_{\mathrm{g}}(i+1,j) \quad \text{for} \quad (i,j),(i+1,j) \in \yd{\lambda}_{\mathrm{g}} \, .
    \end{aligned}\end{equation}
    We denote this by $\wt{w}_{\mathrm{g}} \vdash_{d} \yd{\lambda}_{\mathrm{g}}$, and write  \(\mathcal{W}_{\yd{\lambda}_{\mathrm{g}}}^{d}\) as the set of all such possible gWTs.
\end{definition}

\begin{remark}
    If $\yd{\lambda}_{\mathrm{g}}$ corresponds to a (skew) YD, then $\wt{w}_{\mathrm{g}} \vdash_{d} \yd{\lambda}_{\mathrm{g}}$ just denotes a (skew) SSYT. Therefore, the definition above is a generalization of the usual definition. In this case, we use the following interchangeable descriptions
    \begin{equation}
\yd{\lambda}\dot{=}\yd{\lambda_{1},\ldots,\lambda_{d}} \quad \sim \quad \yd{\lambda}_{\mathrm{g}} = \left\{(i,j) \, \middle| \, 1\leq i \leq d \land 1\leq j \leq \lambda_{i}\right\} \, .
    \end{equation}
\end{remark}

\begin{remark}
    The definitions in this section are invariant under the translations $(i,j)\rightarrow (i+k,j+l)$ for $k,l\in \mathbb{Z}^2$, as there is a natural bijection between the respective gWTs.
\end{remark}

\begin{example}
    Panel (a) of~\cref{fig:gYT} shows the gYD
    \begin{equation}\begin{aligned}
        \yd{\lambda}_{\mathrm{g}} \dot{=} &\{(0,4),(0,5),(1,1),(1,2),(1,5),(2,2),(2,4),(4,3),(4,4)\}\subset \mathbb{Z}^{2}.
    \end{aligned}\end{equation}
    Translating every box by the same vector gives an equivalent gYD. For example,
    \begin{equation}
        \yd{\lambda^{\prime}}_{\mathrm{g}}\dot{=}\{(i+2,j-1):(i,j)\in \yd{\lambda}_{\mathrm{g}}\}
    \end{equation}
    defines the same gYD.
\end{example}

We will also need a version of the concepts above with constraints.

\begin{definition}
    Let \(\yd{\lambda}_{\mathrm{g}}\) be a gYD. We define \(\boldsymbol{x}\) to be a \emph{constraint of \(\yd{\lambda}_{\mathrm{g}}\)} if it is a map that assigns two values to each box in \(\yd{\lambda}_{\mathrm{g}}\). Equivalently, we have
    \begin{equation}
        \boldsymbol{x}: \yd{\lambda}_{\mathrm{g}} \mapsto \{1,\ldots,d\}^2 \, , \quad \boldsymbol x(i,j) = (x_{l}(i,j),x_{u}(i,j)) \, .
    \end{equation}
    We further say that \(\wt{w}_{\mathrm{g}}\vdash_{d} \yd{\lambda}_{\mathrm{g}}|\boldsymbol{x}\) is a \emph{gWT with constraints} if
    \begin{equation}
        x_{l}(i,j) \leq \wt{w}_{\mathrm{g}}(i,j) \leq x_{u}(i,j) \quad \text{for all} \quad (i,j)\in \yd{\lambda}_{\mathrm{g}} \, ,
    \end{equation}
    and we write  \(\mathcal{W}_{\yd{\lambda}_{\mathrm{g}}|\boldsymbol{x}}^{d}\) for the set of all such possible gWTs.
\end{definition}
\begin{example}
    For the gYD in panel (b) of~\cref{fig:gYT}, the displayed constraint is
    \begin{equation}\begin{aligned}
        \boldsymbol{x}(0,4)&=(1,4), & \boldsymbol{x}(1,1)&=(1,4), & \boldsymbol{x}(1,2)&=(1,5), &
        \boldsymbol{x}(1,4)&=(2,4), \\ \boldsymbol{x}(1,5)&=(1,4), & \boldsymbol{x}(2,2)&=(1,4),&
        \boldsymbol{x}(2,4)&=(2,4), & \boldsymbol{x}(2,5)&=(1,4), \\
        \boldsymbol{x}(4,3)&=(1,2), & \boldsymbol{x}(4,4)&=(2,3).
    \end{aligned}\end{equation}
    One compatible gYT is the map
    \begin{equation}\begin{aligned}
        \wt{w}_{\mathrm{g}}(0,4)&=1,
        &\wt{w}_{\mathrm{g}}(1,1)&=3,
        &\wt{w}_{\mathrm{g}}(1,2)&=3,
        &\wt{w}_{\mathrm{g}}(1,4)&=2, \\
        \wt{w}_{\mathrm{g}}(1,5)&=2,
        &\wt{w}_{\mathrm{g}}(2,2)&=4,
        &\wt{w}_{\mathrm{g}}(2,4)&=3,
        &\wt{w}_{\mathrm{g}}(2,5)&=4, \\
        \wt{w}_{\mathrm{g}}(4,3)&=1,
        &\wt{w}_{\mathrm{g}}(4,4)&=2.
    \end{aligned}\end{equation}
    Thus \(\wt{w}_{\mathrm{g}}\vdash_{d}\yd{\lambda}_{\mathrm{g}}|\boldsymbol{x}\) for \(d\geq5\). By contrast, the filling \(\wt{w'}_{\mathrm{g}}\) obtained from \(\wt{w}_{\mathrm{g}}\) by replacing only \(\wt{w}_{\mathrm{g}}(2,2)=4\) with \(\wt{w'}_{\mathrm{g}}(2,2)=5\) is infeasible, since \(5>x_u(2,2)=4\).
\end{example}

\begin{remark}
    The trivial constraint given by \(x_{l}(i,j)=1\) and \(x_{u}(i,j)=d\) is equivalent to having no constraint on the box \((i,j)\).
\end{remark}

We define the weights of a gWT, and generalized (weighted) Schur polynomials.

\begin{definition}
    Let $\wt{w}_{\mathrm{g}}$ be a gWT. Then we define the \emph{row weight $\#_{k,\ell}\mleft(\wt{w}_{\mathrm{g}}\mright)$} as
    \begin{equation}\begin{aligned}
        \#_{k,\ell}\mleft(\wt{w}_{\mathrm{g}}\mright) \coloneqq \mleft|\{(\ell,j):\wt{w}_{\mathrm{g}}(\ell,j) = k\}\mright| \, .
    \end{aligned}\end{equation}
    We interpret $\#_{kl}\mleft(\wt{w}_{\mathrm{g}}\mright)$ as the number of entries \(k\) in the \(\ell\)-th row. We further define the
    \emph{weight $\#_{k}\mleft(\wt{w}_{\mathrm{g}}\mright)$} as
    \begin{equation}\begin{aligned}
        \#_{k}\mleft(\wt{w}_{\mathrm{g}}\mright) \coloneqq \sum_{\ell} \#_{k,\ell}\mleft(\wt{w}_{\mathrm{g}}\mright) = \mleft|\{(i,j):\wt{w}_{\mathrm{g}}(i,j) = k\}\mright| \, .
    \end{aligned}\end{equation}
    We interpret it as the total number of entries $k$. In addition, for some
    \begin{equation}
        \boldsymbol{p}\dot{=}(p_{1},\ldots,p_{d}) \in \mathbb{R}_{\geq 0}^{d} \, ,
    \end{equation}
    we write
    \begin{equation}
        \boldsymbol{p}^{\#\mleft(\wt{w}_{\mathrm{g}}\mright)} \coloneqq \prod_{i=1}^{d}p_{i}^{\#_{i}\mleft(\wt{w}_{\mathrm{g}}\mright)} \, .
    \end{equation}
\end{definition}

\begin{definition}
    Let $\yd{\lambda}_{\mathrm{g}}$ be a gYD, and let $\boldsymbol{x}$ be a constraint of $\yd{\lambda}_{\mathrm{g}}$. Let further $F:\mathcal{W}_{\yd{\lambda}_{\mathrm{g}}}^{d} \rightarrow \mathbb{R}$ be a real-valued function, and let $\boldsymbol{p}\dot{=}(p_{1},\ldots,p_{d})\in \mathbb{R}_{+}^{d}$. Then we define the \emph{generalized Schur polynomial of $F$ with constraints} as
    \begin{equation}\begin{aligned}
        \schur[F]{\yd{\lambda}_{\mathrm{g}} | \boldsymbol{x}}[\boldsymbol{p}] = \sum_{\wt{w}_{\mathrm{g}}\vdash_{d}\yd{\lambda}_{\mathrm{g}}|\boldsymbol{x}} F\mleft(\wt{w}_{\mathrm{g}}\mright) \boldsymbol{p}^{\#\mleft(\wt{w}_{\mathrm{g}}\mright)} \, .
    \end{aligned}\end{equation}
    For $F\equiv 1$, we call $\schur[1]{\yd{\lambda}_{\mathrm{g}} | \boldsymbol{x}}[\boldsymbol{p}]$ the \emph{generalized Schur polynomial with constraints}, and we write
    \begin{equation}\begin{aligned}
        \schur{\yd{\lambda}_{\mathrm{g}} | \boldsymbol{x}}[\boldsymbol{p}] = \schur[1]{\yd{\lambda}_{\mathrm{g}} | \boldsymbol{x}}[\boldsymbol{p}] \, .
    \end{aligned}\end{equation}
    We can further define the \emph{Weyl average of $F$ on $\yd{\lambda}_{\mathrm{g}}| \boldsymbol{x}$ with constraints}. If $\schur{\yd{\lambda}_{\mathrm{g}} | \boldsymbol{x}}[\boldsymbol{p}]>0$, we set
    \begin{equation}
        \weylavg{F}{\boldsymbol{p}}{\yd{\lambda}_{\mathrm{g}} | \boldsymbol{x}} = \frac{\schur[F]{\yd{\lambda}_{\mathrm{g}} | \boldsymbol{x}}[\boldsymbol{p}]}{\schur{\yd{\lambda}_{\mathrm{g}} | \boldsymbol{x}}[\boldsymbol{p}]} \, .
\end{equation}
     For the case $\schur{\yd{\lambda}_{\mathrm{g}} | \boldsymbol{x}}[\boldsymbol{p}]=0$, we set $\weylavg{F}{\boldsymbol{p}}{\yd{\lambda}_{\mathrm{g}} | \boldsymbol{x}}=0$.
    In the case of a trivial constraint that is always fulfilled, we omit the \(\boldsymbol{x}\).
\end{definition}

\begin{remark}
    For a YD \(\yd{\lambda}\dot{=}\yd{\lambda_{1},\ldots,\lambda_{d}}\), we have
    \begin{equation}
        \schur[F]{\yd{\lambda}_{\mathrm{g}}}[\boldsymbol{p}] = \schur[F]{\yd{\lambda}}[\boldsymbol{p}] \, , \quad \weylavg{F}{\boldsymbol{p}}{\yd{\lambda}_{\mathrm{g}}} = \weylavg{F}{\boldsymbol{p}}{\yd{\lambda}} \, .
\end{equation}
Thus, these notions are also generalizations of the respective objects defined for YDs.
\end{remark}

Before we can state the main theorem of this section, we need two preparatory lemmas:~\cref{lem:constraints_incremental_sequence,lem:differences_inequalities}.

\begin{lemma}[Incremental constraint connections]
    \label{lem:constraints_incremental_sequence}
    Let $\boldsymbol{x},\boldsymbol{x}'$ be two constraints such that
    \begin{equation}\begin{aligned}
        x_{l}(i,j) \leq x'_{l}(i,j) \, , \quad x_{u}(i,j) \leq x'_{u}(i,j) \quad \text{for} \quad (i,j)\in \yd{\lambda}_{\mathrm{g}} \, ,
    \end{aligned}\end{equation}
    and let \(\schur{\yd{\lambda}_{\mathrm{g}} | \boldsymbol{x}}[\boldsymbol{p}]\,,\, \schur{\yd{\lambda}_{\mathrm{g}} | \boldsymbol{x}'}[\boldsymbol{p}] > 0\). Then there exists a sequence of $\boldsymbol{y}_{k}$ for $0\leq k \leq N$ such that
    \begin{equation}\begin{aligned}
        \boldsymbol{y}_{0} = \boldsymbol{x} \, , \quad \boldsymbol{y}_{N} = \boldsymbol{x}' \, ,
    \end{aligned}\end{equation}
    and there exist some $(i_{k},j_{k})$ and $t_{k}\in\{l,u\}$ such that
    \begin{equation}\begin{aligned}
        (y_{k})_{t+1}(i,j) = (y_{k})_{t}(i,j) + \delta_{i,i_{k}}\delta_{j,j_{k}}\delta_{t,t_{k}} \, ,
    \end{aligned}\end{equation}
    with the property that
    \begin{equation}\begin{aligned}
        \schur{\yd{\lambda}_{\mathrm{g}} | \boldsymbol{y}_{k}}[\boldsymbol{p}]>0 \quad \text{for all} \quad 0\leq k \leq N \, .
    \end{aligned}\end{equation}
\end{lemma}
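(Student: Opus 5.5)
The plan is to reduce positivity of the constrained Schur polynomials to nonemptiness of the constrained tableau sets, and then exhibit one explicit monotone path of constraints along which a single fixed gWT stays admissible. Since \(\boldsymbol{p}\in\mathbb{R}_{+}^{d}\) has strictly positive entries, every monomial \(\boldsymbol{p}^{\#(\wt{w}_{\mathrm{g}})}\) is positive. Hence \(\schur{\yd{\lambda}_{\mathrm{g}} | \boldsymbol{y}}[\boldsymbol{p}]>0\) holds if and only if \(\mathcal{W}_{\yd{\lambda}_{\mathrm{g}}|\boldsymbol{y}}^{d}\neq\emptyset\). So it suffices to build unit-step sequences \(\boldsymbol{y}_{0},\ldots,\boldsymbol{y}_{N}\), each changing one bound of one box by \(+1\), such that every \(\boldsymbol{y}_{k}\) admits some gWT. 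By hypothesis we may fix \(\wt{w}_{\mathrm{g}}\vdash_{d}\yd{\lambda}_{\mathrm{g}}|\boldsymbol{x}\) and \(\wt{w'}_{\mathrm{g}}\vdash_{d}\yd{\lambda}_{\mathrm{g}}|\boldsymbol{x}'\).

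The key structural step is that the pointwise maximum \(\wt{v}_{\mathrm{g}}\coloneqq\max(\wt{w}_{\mathrm{g}},\wt{w'}_{\mathrm{g}})\) is again a gWT of \(\yd{\lambda}_{\mathrm{g}}\).
\begin{itemize}
\item Along rows: if \(a\leq b\) and \(a'\leq b'\), then \(\max(a,a')\leq\max(b,b')\).
\item Along columns: if \(a<b\) and \(a'<b'\), then \(\max(a,a')<\max(b,b')\), since the maximum on the left equals one of \(a,a'\) and is therefore strictly below \(b\) or \(b'\).
\end{itemize}
Moreover, \(\wt{v}_{\mathrm{g}}\) satisfies the bounds \(x'_{l}\leq\wt{v}_{\mathrm{g}}\leq x'_{u}\). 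The lower bound holds because \(\wt{v}_{\mathrm{g}}\geq\wt{w'}_{\mathrm{g}}\geq x'_{l}\). The upper bound holds because \(\wt{w}_{\mathrm{g}}\leq x_{u}\leq x'_{u}\) and \(\wt{w'}_{\mathrm{g}}\leq x'_{u}\).

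The path is built in two phases.
\begin{itemize}
\item \emph{Phase 1.} Raise the upper bounds one box and one unit at a time, in any order, from \(x_{u}\) to \(x'_{u}\), keeping the lower bounds at \(x_{l}\). Every intermediate constraint has the form \((x_{l},y_{u})\) with \(x_{u}\leq y_{u}\leq x'_{u}\). The fixed tableau \(\wt{w}_{\mathrm{g}}\) remains admissible throughout, since \(x_{l}\leq\wt{w}_{\mathrm{g}}\leq x_{u}\leq y_{u}\).
\item \emph{Phase 2.} Raise the lower bounds one unit at a time, again in any order, from \(x_{l}\) to \(x'_{l}\), now with upper bounds fixed at \(x'_{u}\). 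Every intermediate constraint \((y_{l},x'_{u})\) with \(x_{l}\leq y_{l}\leq x'_{l}\) admits \(\wt{v}_{\mathrm{g}}\), because \(y_{l}\leq x'_{l}\leq\wt{v}_{\mathrm{g}}\leq x'_{u}\).
\end{itemize}
Concatenating the two phases gives the required sequence. Its length is \(N=\sum_{(i,j)}\bigl[(x'_{u}-x_{u})+(x'_{l}-x_{l})\bigr](i,j)\), and it ends at \(\boldsymbol{y}_{N}=\boldsymbol{x}'\).

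The only nontrivial point is the lattice property, namely that the maximum of two gWTs is a gWT, and in particular that the strict column inequality is preserved. The argument above handles this. Beyond that, the one thing to note is that the statement's index convention, \((y_{k})_{t+1}\) versus \(\boldsymbol{y}_{k+1}\), should be read as \(\boldsymbol{y}_{k+1}=\boldsymbol{y}_{k}+\mathbf{e}_{(i_{k},j_{k}),t_{k}}\).
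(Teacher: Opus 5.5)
Your proof is correct and follows the same two-phase path as the paper. First you raise the upper bounds with the $\boldsymbol{x}$-tableau as witness, then you raise the lower bounds. The lattice step (that the entrywise maximum of two gWTs is a gWT) is not needed, since every phase-2 constraint $(y_l, x'_u)$ with $y_l \le x'_l$ already admits $\wt{w'}_{\mathrm{g}}$ itself, which is the witness the paper uses. The paper also chooses $\wt{w}_{\mathrm{g}},\wt{w'}_{\mathrm{g}}$ with $\boldsymbol{p}^{\#(\cdot)}>0$ rather than assuming all $p_i>0$, so its argument still holds when some $p_i$ vanish, a case the paper does consider.
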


\begin{proof}
    The fact that
    \begin{equation}
        \schur{\yd{\lambda}_{\mathrm{g}} | \boldsymbol{x}}[\boldsymbol{p}]\,,\, \schur{\yd{\lambda}_{\mathrm{g}} | \boldsymbol{x}'}[\boldsymbol{p}] > 0
    \end{equation}
    means that there exist
    \begin{equation}
        \wt{w}_{\mathrm{g}}\vdash_{d}\yd{\lambda}_{\mathrm{g}}|\boldsymbol{x} \, , \quad \wt{w'}_{\mathrm{g}}\vdash_{d}\yd{\lambda}_{\mathrm{g}}|\boldsymbol{x}'
    \end{equation}
    so that
    \begin{equation}
        \boldsymbol{p}^{\#\mleft(\wt{w}_{\mathrm{g}}\mright)}, \boldsymbol{p}^{\#\mleft(\wt{w'}_{\mathrm{g}}\mright)} > 0 \, .
    \end{equation}
    Increasing \(x_{u}(i,j)\) for any \((i,j)\in \yd{\lambda}_{\mathrm{g}}\) increases the number of allowed gWTs, so we can stepwise increase the upper bounds until we arrive at \(\boldsymbol{y}_{k^\ast}\) with
    \begin{equation}
    \begin{aligned}
        (y_{k^\ast})_{l}(i,j) = x_{l}(i,j) \, , \quad (y_{k^\ast})_{u}(i,j) = x'_{u}(i,j) \quad \text{for} \quad (i,j)\in \yd{\lambda}_{\mathrm{g}} \, ,
    \end{aligned}
    \end{equation}
    and the argument above shows
    \begin{equation}
        \wt{w}_{\mathrm{g}}\vdash_{d}\yd{\lambda}_{\mathrm{g}}|\boldsymbol{y}_{k} \quad \text{for} \quad 0\leq k \leq k^\ast
    \end{equation}
    and therefore
    \begin{equation}
    \begin{aligned}
        \schur{\yd{\lambda}_{\mathrm{g}} | \boldsymbol{y}_{k}}[\boldsymbol{p}]>0 \quad \text{for} \quad 0\leq k \leq k^\ast \, .
    \end{aligned}
    \end{equation}
    In addition, lowering \(x'_{l}(i,j)\) for any \((i,j)\in \yd{\lambda}_{\mathrm{g}}\) increases the number of allowed gWTs as well. We then increase the lower bounds until reaching \(\boldsymbol{y}_{N} = \boldsymbol{x}'\). By the argument about lowering \(x'_{l}(i,j)\), we again have
    \begin{equation}
        \wt{w'}_{\mathrm{g}}\vdash_{d}\yd{\lambda}_{\mathrm{g}}|\boldsymbol{y}_{k} \quad \text{for} \quad k^\ast\leq k \leq N
    \end{equation}
    and therefore
    \begin{equation}
    \begin{aligned}
        \schur{\yd{\lambda}_{\mathrm{g}} | \boldsymbol{y}_{k}}[\boldsymbol{p}]>0 \quad \text{for} \quad k^\ast\leq k \leq N \, .
    \end{aligned}
    \end{equation}
\end{proof}

\begin{lemma}[Relative difference inequalities]
\label{lem:differences_inequalities}
Let $0<A<B$, and let $a,b \in \mathbb{R}$. Then the following statements are equivalent:
\begin{equation}\begin{aligned}
    \frac{a}{A} \leq \frac{b}{B} \quad \Leftrightarrow \quad \frac{a}{A} \leq \frac{b-a}{B-A} \quad \Leftrightarrow \quad \frac{b}{B} \leq \frac{b-a}{B-A} \, .
\end{aligned}\end{equation}
The reversed inequalities are equivalent as well.
\end{lemma}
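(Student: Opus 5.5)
We prove the chain of equivalences by clearing the positive denominators $A$, $B$, and $B-A$ and reducing each inequality to the single polynomial inequality $aB \le bA$.

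For the first inequality, multiplying $\frac{a}{A}\le\frac{b}{B}$ by $AB>0$ gives exactly $aB\le bA$.

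For the second, since $B-A>0$, the inequality $\frac{a}{A}\le\frac{b-a}{B-A}$ is equivalent to $a(B-A)\le A(b-a)$. Expanding both sides gives $aB-aA\le Ab-Aa$, and cancelling $-aA$ leaves $aB\le bA$.

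For the third, $\frac{b}{B}\le\frac{b-a}{B-A}$ is equivalent to $b(B-A)\le B(b-a)$, that is $bB-bA\le bB-aB$. This simplifies to $-bA\le -aB$, i.e.\ $aB\le bA$.

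All three statements are therefore equivalent to $aB\le bA$, and hence to each other. Every step above is a multiplication by a positive quantity or an addition of the same term to both sides. The same manipulations with $\le$ replaced by $\ge$ (or by strict inequalities) show that the reversed versions are also equivalent. Positivity of $A$, $B$, and $B-A$ is all that is needed, and it holds because $0<A<B$.

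Conceptually, $\frac{b-a}{B-A}$ is the slope of the increment from $(A,a)$ to $(B,b)$. The mediant-type fact is that if one ratio is at most the other, the ratio of the increment lies on the same side; the algebra above makes this precise.
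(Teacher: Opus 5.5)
Your proof is correct and is the same as the paper's. Both clear the positive denominators $A$, $B$, and $B-A$, and reduce each of the three inequalities to $aB \le bA$. The reversed (and strict) versions follow from the same manipulations.
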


\begin{proof}
    Multiplying out, we have
    \begin{equation}\begin{aligned}
    aB \leq bA \quad \Leftrightarrow \quad a(B-A) \leq (b-a)A \quad \Leftrightarrow \quad b(B-A) \leq (b-a)B \, ,
\end{aligned}\end{equation}
which is easily verified.
\end{proof}

The following theorem is the main workhorse for gYDs with constraints, describing how the constraints modify the corresponding Weyl averages.

\begin{theorem}[Monotonicity of Weyl averages with constraints]
    \label{thm:monotonicity_constrained_generalized_averages}
    Let $\boldsymbol{x},\boldsymbol{x}'$ be two constraints such that
    \begin{equation}\begin{aligned}
        x_{l}(i,j) \leq x'_{l}(i,j) \, , \quad x_{u}(i,j) \leq x'_{u}(i,j) \quad \text{for} \quad (i,j)\in \yd{\lambda}_{\mathrm{g}} \, ,
    \end{aligned}\end{equation}
    and let
    \begin{equation}\begin{aligned}
        \schur{\yd{\lambda}_{\mathrm{g}} | \boldsymbol{x}}[\boldsymbol{p}]\,,\, \schur{\yd{\lambda}_{\mathrm{g}} | \boldsymbol{x}'}[\boldsymbol{p}] \, > 0.
    \end{aligned}\end{equation}
    Let further $F:\mathcal{W}_{\yd{\lambda}_{\mathrm{g}}}^{d} \rightarrow \mathbb{R}$ be non-decreasing in every argument, that is for \(\wt{w}_{\mathrm{g}}\leq \wt{w'}_{\mathrm{g}}\) pointwise we have
    \begin{equation}
        F\mleft(\wt{w}_{\mathrm{g}}\mright) \leq F\mleft(\wt{w'}_{\mathrm{g}}\mright) \, .
    \end{equation}
    Then we get
    \begin{equation}
        \weylavg{F}{\boldsymbol{p}}{\yd{\lambda}_{\mathrm{g}} | \boldsymbol{x}} \leq \weylavg{F}{\boldsymbol{p}}{\yd{\lambda}_{\mathrm{g}} | \boldsymbol{x}'} \, .
\end{equation}
\end{theorem}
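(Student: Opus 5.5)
The plan is to reduce the statement to a single elementary step and then iterate. By \cref{lem:constraints_incremental_sequence}, there is a chain $\boldsymbol{x}=\boldsymbol{y}_0,\ldots,\boldsymbol{y}_N=\boldsymbol{x}'$ of constraints. Consecutive constraints differ by raising exactly one bound, $x_l$ or $x_u$, at one box $(i_k,j_k)$ by one, and every generalized Schur polynomial along the chain stays positive. Monotonicity of the average then follows by transitivity once we show
\begin{equation}
\weylavg{F}{\boldsymbol{p}}{\yd{\lambda}_{\mathrm{g}}|\boldsymbol{y}_k}\le\weylavg{F}{\boldsymbol{p}}{\yd{\lambda}_{\mathrm{g}}|\boldsymbol{y}_{k+1}}
\end{equation}
for a single such step.

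For a single step, I separate the two cases. If the upper bound at box $b=(i_k,j_k)$ is raised from $u$ to $u+1$, then $\mathcal{W}_{\yd{\lambda}_{\mathrm{g}}|\boldsymbol{y}_{k+1}}^d$ is the disjoint union of $\mathcal{W}_{\yd{\lambda}_{\mathrm{g}}|\boldsymbol{y}_k}^d$ and the set $S$ of gWTs with $\wt{w}_{\mathrm{g}}(b)=u+1$. Write $A=\schur{\yd{\lambda}_{\mathrm{g}}|\boldsymbol{y}_k}$, $B=A+\schur{S}$, $a=\schur[F]{\yd{\lambda}_{\mathrm{g}}|\boldsymbol{y}_k}$ and $b=a+\schur[F]{S}$. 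By \cref{lem:differences_inequalities}, the claim $a/A\le b/B$ is equivalent to
\begin{equation}
\weylavg{F}{\boldsymbol{p}}{\yd{\lambda}_{\mathrm{g}}|\boldsymbol{y}_k}\le\weylavg{F}{\boldsymbol{p}}{S},
\end{equation}
which says that pinning box $b$ at the new top value raises the average. (If $\schur{S}=0$ there is nothing to show.) The lower-bound case is symmetric. There the set is shrunk by removing the gWTs with $\wt{w}_{\mathrm{g}}(b)=l$, and the same lemma reduces the claim to showing that the average over those removed gWTs is at most the average over the larger set.

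To prove these pinned inequalities, I will induct on the total slack $\sum_{(i,j)}(x_u-x_l)$, possibly combined with the number of boxes. The average over $\boldsymbol{y}_k$ is a convex combination over the value $v\le u$ of box $b$, with weights $p_v\,\schur{\cdot|\boldsymbol{y}_k,\,w(b)=v}$. Each conditional piece is a Weyl average on $\yd{\lambda}_{\mathrm{g}}\setminus\{b\}$ with induced constraints, which I fold into neighbouring boxes: the box to the right gets lower bound $\ge v$, the box below gets lower bound $\ge v+1$, the box to the left gets upper bound $\le v$, and the box above gets upper bound $\le v-1$. These induced constraints are entrywise non-decreasing in $v$. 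The restricted function $F(\cdot,v)$ is non-decreasing in the remaining entries and also in $v$, and the reduced problem has strictly fewer boxes. The induction hypothesis applied to the smaller gYD therefore gives that each conditional average is at most the conditional average at $v=u+1$, which is the average over $S$. It follows that the convex combination is at most $\weylavg{F}{\boldsymbol{p}}{S}$.

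The main obstacle is making this induction well founded and uniform in two respects. First, the hypothesis must be applied to a constraint pair on $\yd{\lambda}_{\mathrm{g}}\setminus\{b\}$ in which both members have positive Schur polynomial. I will handle this by discarding values $v$ whose weight vanishes, since those contribute zero to the convex combination, and by invoking \cref{lem:constraints_incremental_sequence} again on the smaller diagram. Second, $F(\cdot,v)$ and $F(\cdot,u+1)$ are different functions. The inequality must therefore be chained: first compare $\langle F(\cdot,v)\rangle_{\boldsymbol{x}(v)}$ with $\langle F(\cdot,v)\rangle_{\boldsymbol{x}(u+1)}$ using the induction hypothesis, then bound the latter pointwise by $\langle F(\cdot,u+1)\rangle_{\boldsymbol{x}(u+1)}$ using the monotonicity of $F$. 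The base case is a single box. There the constrained average of a non-decreasing function over an interval $[l,u]$ with weights $p_v$ is plainly monotone in both endpoints, which again follows from \cref{lem:differences_inequalities}. Throughout, the argument never uses an ordering of $\boldsymbol{p}$, consistent with the remark that the theorem is combinatorial.
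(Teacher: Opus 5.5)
Your proposal is correct and follows essentially the same route as the paper's proof. That proof uses \cref{lem:constraints_incremental_sequence} to reduce to raising one bound at one box by one unit, and then splits by the value at that box. This produces induced constraints on $\yd{\lambda}_{\mathrm{g}}\setminus\{b\}$ that are entrywise monotone in that value. The induction hypothesis on the number of boxes is then chained with monotonicity of $F$ in the pinned entry, and \cref{lem:differences_inequalities} finishes the argument. Two of your choices differ from the paper only cosmetically and are equivalent by that lemma. In the lower-bound case you compare the removed slice with the larger set rather than the smaller one. You also apply the two chained inequalities in the opposite order.

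One technicality is shared with the paper. $F(\cdot,v)$ gets evaluated on fillings of $\yd{\lambda}_{\mathrm{g}}\setminus\{b\}$ that are incompatible with value $v$ at $b$, for example a left neighbour exceeding $v$. So $F$ should first be extended to a non-decreasing function on all fillings. One way is to take the maximum of $F$ over the gWTs lying pointwise below the filling, and $\min F$ if there are none.
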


\begin{remark}
    The content is that larger lower constraints \(x_{l}\) force the gWTs to have higher entries, while larger upper constraints allow the gWTs to have higher entries.
\end{remark}

\begin{proof}
    We use induction on the number of boxes $n$. Assume the statement holds for all gYD $\yd{\lambda}_{\mathrm{g}}$ with $n$ boxes and for all constraints. We consider $\mleft|\yd{\lambda}_{\mathrm{g}}\mright|=n+1$. By~\cref{lem:constraints_incremental_sequence}, we can assume that \(\boldsymbol{x},\boldsymbol{x}'\) differ only on one site \((i^\ast,j^\ast)\), and we have
    \(x'_{t^\ast}(i^\ast,j^\ast) = x_{t^\ast}(i^\ast,j^\ast) + 1\)
    for \(t^\ast \in \{l,u\}\).

    (I) We first treat the case where \(t^\ast = l\). Let
    \begin{equation}
        \yd{\nu}_{\mathrm{g}} \coloneqq \yd{\lambda}_{\mathrm{g}} \setminus (i^\ast,j^\ast) \, .
    \end{equation}
    For \(m\in\{1,\ldots,d\}\), we define the constraints \(\boldsymbol{z}_{m}\) on \(\yd{\nu}_{\mathrm{g}}\) as
    \begin{subequations}
    \begin{align}
        (z_{m})_{l}(i,j)
        &\coloneqq
        \mleft\{\begin{alignedat}{2}
            &\max (m+1,x_{l}(i,j)), && (i,j) = (i^\ast,j^\ast + 1), \\
            &\max (m,x_{l}(i,j)), && (i,j) = (i^\ast + 1, j^\ast), \\
            &x_{l}(i,j), && \text{otherwise},
        \end{alignedat}\mright. \\
        (z_{m})_{u}(i,j)
        &\coloneqq
        \mleft\{\begin{alignedat}{2}
            &\min (m-1,x_{u}(i,j)), && (i,j) = (i^\ast,j^\ast - 1), \\
            &\min (m,x_{u}(i,j)), && (i,j) = (i^\ast - 1, j^\ast), \\
            &x_{u}(i,j), && \text{otherwise}.
        \end{alignedat}\mright.
    \end{align}
    \end{subequations}
    A direct calculation gives
    \begin{equation}
        \schur{\yd{\lambda}_{\mathrm{g}} | \boldsymbol{x}}[\boldsymbol{p}] = \sum_{m=x_{l}(i^\ast,j^\ast)}^{x_{u}(i^\ast,j^\ast)} p_{m}\schur{\yd{\nu}_{\mathrm{g}} | \boldsymbol{z}_{m}}[\boldsymbol{p}] \, ,
    \end{equation}
    where we can interpret the term \(p_{m}\schur{\yd{\nu}_{\mathrm{g}} | \boldsymbol{z}_{m}}[\boldsymbol{p}]\) as the weighted sum over all gWTs of \(\yd{\lambda}_{\mathrm{g}}| \boldsymbol{x}\) with entry \(m\) in the box \((i^\ast,j^\ast)\). More generally, let \(F_m^\ast:\mathcal{W}_{\yd{\nu}_{\mathrm{g}}}^d\to\mathbb{R}\) be defined as \(F\) but with the entry on \((i^\ast,j^\ast)\) replaced by \(m\). Then we have
    \begin{equation}
        \label{equ:proof_monotonicity_constrained_generalized_averages_sum_decomposition_lower}
        \schur[F]{\yd{\lambda}_{\mathrm{g}} | \boldsymbol{x}}[\boldsymbol{p}] = \sum_{m=x_{l}(i^\ast,j^\ast)}^{x_{u}(i^\ast,j^\ast)} p_{m}\schur[F_{m}^\ast]{\yd{\nu}_{\mathrm{g}} | \boldsymbol{z}_{m}}[\boldsymbol{p}] \, .
    \end{equation}
    Now let \(x_{l}(i^\ast,j^\ast) = L\). By the equation above, we get
    \begin{equation}\begin{aligned}
        \label{equ:proof_monotonicity_constrained_generalized_averages_decomposition_lower}
        \schur[F]{\yd{\lambda}_{\mathrm{g}} | \boldsymbol{x}}[\boldsymbol{p}] - \schur[F]{\yd{\lambda}_{\mathrm{g}} | \boldsymbol{x}'}[\boldsymbol{p}] = p_{L}\schur[F_{L}^\ast]{\yd{\nu}_{\mathrm{g}} | \boldsymbol{z}_{L}}[\boldsymbol{p}].
    \end{aligned}\end{equation}
    We first consider the case where \(p_{L}\schur{\yd{\nu}_{\mathrm{g}} | \boldsymbol{z}_{L}}[\boldsymbol{p}] = 0\), so either \(p_L=0\) or \(\schur{\yd{\nu}_{\mathrm{g}} | \boldsymbol{z}_{L}}[\boldsymbol{p}]=0\). In either case, \(p_{L}\schur[F_{L}^\ast]{\yd{\nu}_{\mathrm{g}} | \boldsymbol{z}_{L}}[\boldsymbol{p}] = 0\), and therefore we get by~\cref{equ:proof_monotonicity_constrained_generalized_averages_decomposition_lower} that
    \begin{equation}
            \weylavg{F}{\boldsymbol{p}}{\yd{\lambda}_{\mathrm{g}} | \boldsymbol{x}} = \weylavg{F}{\boldsymbol{p}}{\yd{\lambda}_{\mathrm{g}} | \boldsymbol{x}'} \, .
\end{equation}
    It remains to consider \(p_{L}\schur{\yd{\nu}_{\mathrm{g}} | \boldsymbol{z}_{L}}[\boldsymbol{p}] > 0\), equivalently \(p_L>0\) and \(\schur{\yd{\nu}_{\mathrm{g}} | \boldsymbol{z}_{L}}[\boldsymbol{p}]>0\). We have by~\cref{equ:proof_monotonicity_constrained_generalized_averages_sum_decomposition_lower} that
    \begin{equation}
    \begin{aligned}
    \label{equ:proof_monotonicity_constrained_generalized_averages_decomposition_inequality_chain_lower}
        &\schur[F]{\yd{\lambda}_{\mathrm{g}} | \boldsymbol{x}'}[\boldsymbol{p}] = \sum_{m=L+1}^{x_{u}(i^\ast,j^\ast)} p_{m} \schur[F_{m}^\ast]{\yd{\nu}_{\mathrm{g}} | \boldsymbol{z}_{m}}[\boldsymbol{p}]\\ =& \sum_{m=L+1}^{x_{u}(i^\ast,j^\ast)} p_{m} \schur{\yd{\nu}_{\mathrm{g}} | \boldsymbol{z}_{m}}[\boldsymbol{p}] \weylavg{F_{m}^\ast}{\boldsymbol{p}}{\yd{\nu}_{\mathrm{g}} | \boldsymbol{z}_{m}}
        \geq \sum_{m=L+1}^{x_{u}(i^\ast,j^\ast)} p_{m} \schur{\yd{\nu}_{\mathrm{g}} | \boldsymbol{z}_{m}}[\boldsymbol{p}] \weylavg{F_{L}^\ast}{\boldsymbol{p}}{\yd{\nu}_{\mathrm{g}} | \boldsymbol{z}_{m}}\\ \geq &\weylavg{F_{L}^\ast}{\boldsymbol{p}}{\yd{\nu}_{\mathrm{g}} | \boldsymbol{z}_{L}} \sum_{m=L+1}^{x_{u}(i^\ast,j^\ast)} p_{m} \schur{\yd{\nu}_{\mathrm{g}} | \boldsymbol{z}_{m}}[\boldsymbol{p}]
        = \weylavg{F_{L}^\ast}{\boldsymbol{p}}{\yd{\nu}_{\mathrm{g}} | \boldsymbol{z}_{L}} \schur{\yd{\lambda}_{\mathrm{g}} | \boldsymbol{x}'}[\boldsymbol{p}] \, ,
    \end{aligned}
\end{equation}
    where we used in the first inequality that $F_{m}^\ast \geq F_{L}^\ast$ pointwise for $m> L$, since $F$ is non-decreasing in every argument. For the cases where $\schur{\yd{\nu}_{\mathrm{g}} | \boldsymbol{z}_{m}}[\boldsymbol{p}]=0$, the second inequality is trivial. For the other cases, we used the induction hypothesis. Since $\schur{\yd{\lambda}_{\mathrm{g}} | \boldsymbol{x}'}[\boldsymbol{p}]>0$ by assumption, we have
    \begin{equation}
            \frac{\schur[F]{\yd{\nu}_{\mathrm{g}} | \boldsymbol{z}_{L}}[\boldsymbol{p}]}{\schur{\yd{\nu}_{\mathrm{g}} | \boldsymbol{z}_{L}}[\boldsymbol{p}]} = \weylavg{F_{L}^\ast}{\boldsymbol{p}}{\yd{\nu}_{\mathrm{g}} | \boldsymbol{z}_{L}} \leq \frac{\schur[F]{\yd{\lambda}_{\mathrm{g}} | \boldsymbol{x}'}[\boldsymbol{p}]}{\schur{\yd{\lambda}_{\mathrm{g}} | \boldsymbol{x}'}[\boldsymbol{p}]} = \weylavg{F}{\boldsymbol{p}}{\yd{\lambda}_{\mathrm{g}} | \boldsymbol{x}'} \, .
\end{equation}
Here, we used that \(\schur{\yd{\nu}_{\mathrm{g}} | \boldsymbol{z}_{L}}[\boldsymbol{p}] > 0\). From~\cref{equ:proof_monotonicity_constrained_generalized_averages_decomposition_lower}, setting \(F\equiv 1\) and \((p_{L}\schur{\yd{\nu}_{\mathrm{g}} | \boldsymbol{z}_{L}}[\boldsymbol{p}] > 0)\), we obtain $\schur{\yd{\lambda}_{\mathrm{g}} | \boldsymbol{x}}[\boldsymbol{p}] - \schur{\yd{\lambda}_{\mathrm{g}} | \boldsymbol{x}'}[\boldsymbol{p}] > 0$, and hence
    \begin{equation}
    \label{equ:proof_monotonicity_constrained_generalized_averages_decomposition_difference_lower}
            \frac{\schur[F]{\yd{\lambda}_{\mathrm{g}} | \boldsymbol{x}}[\boldsymbol{p}] - \schur[F]{\yd{\lambda}_{\mathrm{g}} | \boldsymbol{x}'}[\boldsymbol{p}]}{\schur{\yd{\lambda}_{\mathrm{g}} | \boldsymbol{x}}[\boldsymbol{p}] - \schur{\yd{\lambda}_{\mathrm{g}} | \boldsymbol{x}'}[\boldsymbol{p}]} = \frac{p_{L}\schur[F_{L}^\ast]{\yd{\nu}_{\mathrm{g}} | \boldsymbol{z}_{L}}[\boldsymbol{p}]}{p_{L}\schur{\yd{\nu}_{\mathrm{g}} | \boldsymbol{z}_{L}}[\boldsymbol{p}]} = \weylavg{F_{L}^\ast}{\boldsymbol{p}}{\yd{\nu}_{\mathrm{g}} | \boldsymbol{z}_{L}} \leq \frac{\schur[F]{\yd{\lambda}_{\mathrm{g}} | \boldsymbol{x}'}[\boldsymbol{p}]}{\schur{\yd{\lambda}_{\mathrm{g}} | \boldsymbol{x}'}[\boldsymbol{p}]} = \weylavg{F}{\boldsymbol{p}}{\yd{\lambda}_{\mathrm{g}} | \boldsymbol{x}'} \, .
\end{equation}
~\cref{lem:differences_inequalities} then implies
    \begin{equation}
            \weylavg{F}{\boldsymbol{p}}{\yd{\lambda}_{\mathrm{g}} | \boldsymbol{x}} = \frac{\schur[F]{\yd{\lambda}_{\mathrm{g}} | \boldsymbol{x}}[\boldsymbol{p}]}{\schur{\yd{\lambda}_{\mathrm{g}} | \boldsymbol{x}}[\boldsymbol{p}]} \leq \frac{\schur[F]{\yd{\lambda}_{\mathrm{g}} | \boldsymbol{x}'}[\boldsymbol{p}]}{\schur{\yd{\lambda}_{\mathrm{g}} | \boldsymbol{x}'}[\boldsymbol{p}]} = \weylavg{F}{\boldsymbol{p}}{\yd{\lambda}_{\mathrm{g}} | \boldsymbol{x}'} \, .
\end{equation}
    \noindent (II) We now consider the case $t^{*}=u$. We take \(U = x'_{u}(i^\ast,j^\ast)\), and define \(\yd{\nu}_{\mathrm{g}}, \boldsymbol{z}_{m}\) and \(F_{m}^\ast\) as before. Similarly, we get
    \begin{equation}\begin{aligned}
        \label{equ:proof_monotonicity_constrained_generalized_averages_decomposition_upper}
        \schur[F]{\yd{\lambda}_{\mathrm{g}} | \boldsymbol{x}'}[\boldsymbol{p}] - \schur[F]{\yd{\lambda}_{\mathrm{g}} | \boldsymbol{x}}[\boldsymbol{p}] = p_{U}\schur[F_{U}^\ast]{\yd{\nu}_{\mathrm{g}} | \boldsymbol{z}_{U}}[\boldsymbol{p}] \, .
    \end{aligned}\end{equation}
    If \(p_{U}\schur{\yd{\nu}_{\mathrm{g}} | \boldsymbol{z}_{U}}[\boldsymbol{p}] = 0\), we again obtain
    \begin{equation}
            \weylavg{F}{\boldsymbol{p}}{\yd{\lambda}_{\mathrm{g}} | \boldsymbol{x}} = \weylavg{F}{\boldsymbol{p}}{\yd{\lambda}_{\mathrm{g}} | \boldsymbol{x}'} \, .
\end{equation}
For the case \(p_{U}\schur{\yd{\nu}_{\mathrm{g}} | \boldsymbol{z}_{U}}[\boldsymbol{p}] > 0\) we have
    \begin{equation}
    \begin{aligned}
        \schur[F]{\yd{\lambda}_{\mathrm{g}} | \boldsymbol{x}}[\boldsymbol{p}] = &\sum_{m=x_{l}(i^\ast,j^\ast)}^{U-1} p_{m} \schur[F_{m}^\ast]{\yd{\nu}_{\mathrm{g}} | \boldsymbol{z}_{m}}[\boldsymbol{p}] = \sum_{m=x_{l}(i^\ast,j^\ast)}^{U-1} p_{m} \schur{\yd{\nu}_{\mathrm{g}} | \boldsymbol{z}_{m}}[\boldsymbol{p}] \weylavg{F_{m}^\ast}{\boldsymbol{p}}{\yd{\nu}_{\mathrm{g}} | \boldsymbol{z}_{m}} \\
        \leq& \sum_{m=x_{l}(i^\ast,j^\ast)}^{U-1} p_{m} \schur{\yd{\nu}_{\mathrm{g}} | \boldsymbol{z}_{m}}[\boldsymbol{p}] \weylavg{F_{U}^\ast}{\boldsymbol{p}}{\yd{\nu}_{\mathrm{g}} | \boldsymbol{z}_{m}} \leq \weylavg{F_{U}^\ast}{\boldsymbol{p}}{\yd{\nu}_{\mathrm{g}} | \boldsymbol{z}_{U}} \sum_{m=x_{l}(i^\ast,j^\ast)}^{U-1} p_{m} \schur{\yd{\nu}_{\mathrm{g}} | \boldsymbol{z}_{m}}[\boldsymbol{p}] \\
        = &\weylavg{F_{U}^\ast}{\boldsymbol{p}}{\yd{\nu}_{\mathrm{g}} | \boldsymbol{z}_{U}} \schur{\yd{\lambda}_{\mathrm{g}} | \boldsymbol{x}}[\boldsymbol{p}] \, ,
    \end{aligned}
\end{equation}
    where we proceeded similarly as in~\cref{equ:proof_monotonicity_constrained_generalized_averages_decomposition_inequality_chain_lower}. Since $\schur{\yd{\lambda}_{\mathrm{g}} | \boldsymbol{x}}[\boldsymbol{p}]>0$ by assumption, we have
    \begin{equation}
            \frac{\schur[F]{\yd{\nu}_{\mathrm{g}} | \boldsymbol{z}_{U}}[\boldsymbol{p}]}{\schur{\yd{\nu}_{\mathrm{g}} | \boldsymbol{z}_{U}}[\boldsymbol{p}]} = \weylavg{F_{U}^\ast}{\boldsymbol{p}}{\yd{\nu}_{\mathrm{g}} | \boldsymbol{z}_{U}} \geq \frac{\schur[F]{\yd{\lambda}_{\mathrm{g}} | \boldsymbol{x}}[\boldsymbol{p}]}{\schur{\yd{\lambda}_{\mathrm{g}} | \boldsymbol{x}}[\boldsymbol{p}]} = \weylavg{F}{\boldsymbol{p}}{\yd{\lambda}_{\mathrm{g}} | \boldsymbol{x}} \, .
\end{equation}
    where we used \(\schur{\yd{\nu}_{\mathrm{g}} | \boldsymbol{z}_{U}}[\boldsymbol{p}] > 0\). As before, we see from~\cref{equ:proof_monotonicity_constrained_generalized_averages_decomposition_upper}, setting \(F\equiv 1\) and \(p_{U}\schur{\yd{\nu}_{\mathrm{g}} | \boldsymbol{z}_{U}}[\boldsymbol{p}] > 0\) that $\schur{\yd{\lambda}_{\mathrm{g}} | \boldsymbol{x}'}[\boldsymbol{p}] - \schur{\yd{\lambda}_{\mathrm{g}} | \boldsymbol{x}}[\boldsymbol{p}] > 0$, so we get
    \begin{equation}
            \frac{\schur[F]{\yd{\lambda}_{\mathrm{g}} | \boldsymbol{x}'}[\boldsymbol{p}] - \schur[F]{\yd{\lambda}_{\mathrm{g}} | \boldsymbol{x}}[\boldsymbol{p}]}{\schur{\yd{\lambda}_{\mathrm{g}} | \boldsymbol{x}'}[\boldsymbol{p}] - \schur{\yd{\lambda}_{\mathrm{g}} | \boldsymbol{x}}[\boldsymbol{p}]} = \frac{p_{U}\schur[F_{U}^\ast]{\yd{\nu}_{\mathrm{g}} | \boldsymbol{z}_{U}}[\boldsymbol{p}]}{p_{U}\schur{\yd{\nu}_{\mathrm{g}} | \boldsymbol{z}_{U}}[\boldsymbol{p}]} = \weylavg{F_{U}^\ast}{\boldsymbol{p}}{\yd{\nu}_{\mathrm{g}} | \boldsymbol{z}_{U}} \geq \frac{\schur[F]{\yd{\lambda}_{\mathrm{g}} | \boldsymbol{x}}[\boldsymbol{p}]}{\schur{\yd{\lambda}_{\mathrm{g}} | \boldsymbol{x}}[\boldsymbol{p}]} = \weylavg{F}{\boldsymbol{p}}{\yd{\lambda}_{\mathrm{g}} | \boldsymbol{x}} \, .
\end{equation}
Again, it follows from~\cref{lem:differences_inequalities} that
    \begin{equation}
            \weylavg{F}{\boldsymbol{p}}{\yd{\lambda}_{\mathrm{g}} | \boldsymbol{x}} = \frac{\schur[F]{\yd{\lambda}_{\mathrm{g}} | \boldsymbol{x}}[\boldsymbol{p}]}{\schur{\yd{\lambda}_{\mathrm{g}} | \boldsymbol{x}}[\boldsymbol{p}]} \leq \frac{\schur[F]{\yd{\lambda}_{\mathrm{g}} | \boldsymbol{x}'}[\boldsymbol{p}]}{\schur{\yd{\lambda}_{\mathrm{g}} | \boldsymbol{x}'}[\boldsymbol{p}]} = \weylavg{F}{\boldsymbol{p}}{\yd{\lambda}_{\mathrm{g}} | \boldsymbol{x}'} \, .
\end{equation}
\end{proof}

\subsubsection{Restrictions and Schur monotonicity}
\label{subsubsec:log_convexity_Schur_polynomials}

We investigate the implications that~\cref{thm:main_part_monotonicity_constrained_generalized_averages} has on the relations between Schur polynomials. We first introduce some notation.

\begin{definition}
    Let \(\yd{\nu}_{\mathrm{g}} \subseteq \yd{\lambda}_{\mathrm{g}}\), and let \(\yd{\mu}_{\mathrm{g}} = \yd{\lambda}_{\mathrm{g}} \setminus \yd{\nu}_{\mathrm{g}}\). For \(\wt{w}_{\mathrm{g}}\vdash_{d} \yd{\lambda}_{\mathrm{g}}\), we define \(\wt{w|_{\yd{\nu}_{\mathrm{g}}}}_{\mathrm{g}}\) as the \emph{restriction of \(\wt{w}_{\mathrm{g}}\) onto \(\yd{\nu}_{\mathrm{g}}\)}, with
    \begin{equation}
        \wt{w|_{\yd{\nu}_{\mathrm{g}}}}_{\mathrm{g}}(i,j) = \wt{w}_{\mathrm{g}}(i,j) \quad \text{for} \quad (i,j)\in \yd{\nu}_{\mathrm{g}} \, .
    \end{equation}
    For \(\wt{w'}_{\mathrm{g}}\vdash_{d} \yd{\nu}_{\mathrm{g}}\) and \(\wt{w''}_{\mathrm{g}}\vdash_{d} \yd{\mu}_{\mathrm{g}}\), we define their \emph{union} as
    \begin{equation}
        \wt{w'}_{\mathrm{g}} \cup \wt{w''}_{\mathrm{g}} = \wt{w} \Leftrightarrow \wt{w|_{\yd{\nu}_{\mathrm{g}}}}_{\mathrm{g}} = \wt{w'}_{\mathrm{g}} \quad \text{and} \quad \wt{w|_{\yd{\mu}_{\mathrm{g}}}}_{\mathrm{g}} = \wt{w''}_{\mathrm{g}} \, .
    \end{equation}
    For \(\wt{w''}_{\mathrm{g}}\vdash_{d} \yd{\mu}_{\mathrm{g}}\), we define a \emph{corresponding restriction} \(\boldsymbol{x}\mleft(\wt{w''}_{\mathrm{g}}\mright)\) as
    \begin{equation}
    \begin{aligned}
        \label{equ:definition_corresponding_restrictions}
        \mleft(x\mleft(\wt{w''}_{\mathrm{g}}\mright)\mright)_{l}(i,j) \coloneqq \mleft\{\begin{aligned}
                        &\wt{w''}_{\mathrm{g}}(i,j-1), && (i,j-1)\in \yd{\mu}_{\mathrm{g}}, \\
                        &\wt{w''}_{\mathrm{g}}(i-1,j) + 1, && (i-1,j)\in \yd{\mu}_{\mathrm{g}}, \\
                        &\max\mleft(\wt{w''}_{\mathrm{g}}(i,j-1),\wt{w''}_{\mathrm{g}}(i-1,j) + 1\mright), && (i,j-1),(i-1,j)\in \yd{\mu}_{\mathrm{g}},\\
                        &1, && \text{otherwise},
                    \end{aligned}\mright.
                    \\
                    \mleft(x\mleft(\wt{w''}_{\mathrm{g}}\mright)\mright)_{u}(i,j) \coloneqq
                    \mleft\{\begin{aligned}
                        &\wt{w''}_{\mathrm{g}}(i,j+1), && (i,j+1)\in \yd{\mu}_{\mathrm{g}}, \\
                        &\wt{w''}_{\mathrm{g}}(i+1,j) - 1, && (i+1,j)\in \yd{\mu}_{\mathrm{g}}, \\
                        &\min\mleft(\wt{w''}_{\mathrm{g}}(i,j+1),\wt{w''}_{\mathrm{g}}(i+1,j)-1\mright), && (i,j+1),(i+1,j)\in \yd{\mu}_{\mathrm{g}},\\
                        &d, && \text{otherwise}.
                    \end{aligned}\mright.
            \end{aligned}
        \end{equation}
    The entry in every box is constrained between \(1\) and \(d\), i.e. is left unconstrained, unless a box directly to the left, above, to the right or below belongs to \(\yd{\mu}_{\mathrm{g}}\).
In those cases, we take the filling rules for gWTs -- non-decreasing to the right and increasing toward the bottom -- to obtain constraints between the neighboring boxes.
\end{definition}

\begin{remark}
    One checks that
    \begin{equation}
        \wt{w'} \vdash_{d} \yd{\nu}_{\mathrm{g}}|\boldsymbol{x}\mleft(\wt{w''}_{\mathrm{g}}\mright) \Leftrightarrow \wt{w'} \cup \wt{w''} \vdash_{d} \yd{\lambda}_{\mathrm{g}} \, ,
    \end{equation}
    that is the corresponding constraint limits the gWTs of \(\yd{\nu}_{\mathrm{g}}\) to those that form a larger gWT of \(\yd{\lambda}_{\mathrm{g}}\) when combined with \(\wt{w''}_{\mathrm{g}}\).
\end{remark}

\begin{example}
    Panel (f) of~\cref{fig:gYT} illustrates the corresponding restriction for the same gYD as in panels (b)--(d). Let
    \begin{equation}
    \begin{aligned}
        &\yd{\lambda}_{\mathrm{g}}\dot{=}\{(0,4),(1,1),(1,2),(1,4),(1,5),(2,2),(2,4),(2,5),(4,3),(4,4)\}, \\
        &\yd{\nu}_{\mathrm{g}}\dot{=}\{(0,4),(1,1),(1,4),(1,5),(4,4)\},
        \quad
        \yd{\mu}_{\mathrm{g}}\dot{=}\{(1,2),(2,2),(2,4),(2,5),(4,3)\}.
    \end{aligned}
    \end{equation}
    For \(d\geq3\), fix the filling \(\wt{u}_{\mathrm{g}}\vdash_{d}\yd{\mu}_{\mathrm{g}}\) by
    \begin{equation}\begin{aligned}
        \wt{u}_{\mathrm{g}}(0,4)&=1,
        &\wt{u}_{\mathrm{g}}(1,1)&=1,
        &\wt{u}_{\mathrm{g}}(1,4)&=2,\\
        \wt{u}_{\mathrm{g}}(1,5)&=1,
        &\wt{u}_{\mathrm{g}}(4,4)&=2.
    \end{aligned}\end{equation}
    Then the corresponding restriction on \(\yd{\nu}_{\mathrm{g}}\) is
    \begin{equation}\begin{aligned}
        \boldsymbol{x}\mleft(\wt{u}_{\mathrm{g}}\mright)(1,2)&=(1,d),
        &\boldsymbol{x}\mleft(\wt{u}_{\mathrm{g}}\mright)(2,2)&=(1,d),\\
        \boldsymbol{x}\mleft(\wt{u}_{\mathrm{g}}\mright)(2,4)&=(3,d),
        &\boldsymbol{x}\mleft(\wt{u}_{\mathrm{g}}\mright)(2,5)&=(3,d),\\
        \boldsymbol{x}\mleft(\wt{u}_{\mathrm{g}}\mright)(4,3)&=(1,1).
    \end{aligned}\end{equation}
    For instance, the filling \(\wt{v}_{\mathrm{g}}\) of \(\yd{\nu}_{\mathrm{g}}\) defined by
    \begin{equation}\begin{aligned}
        \wt{v}_{\mathrm{g}}(1,1)&=1,
        &\wt{v}_{\mathrm{g}}(0,4)&=1,
        &\wt{v}_{\mathrm{g}}(1,4)&=2,\\
        \wt{v}_{\mathrm{g}}(1,5)&=2,
        &\wt{v}_{\mathrm{g}}(4,4)&=1
    \end{aligned}\end{equation}
    satisfies \(\wt{v}_{\mathrm{g}}\vdash_{d}\yd{\nu}_{\mathrm{g}}|\boldsymbol{x}\mleft(\wt{u}_{\mathrm{g}}\mright)\) for \(d\geq3\), and \(\wt{v}_{\mathrm{g}}\cup\wt{u}_{\mathrm{g}}\vdash_{d}\yd{\lambda}_{\mathrm{g}}\).
\end{example}

The following theorem was already proven in Ref.~\cite[Theorem~5]{LPP05S}. We provide an alternative proof, highlighting the utility of~\cref{thm:monotonicity_constrained_generalized_averages} while keeping the argument more self-contained.

\begin{theorem}[log-convexity of Schur polynomials]
\label{thm:inequality_Schur_polynomials_products}
    Let \(\yd{\lambda_{1}},\yd{\lambda_{2}},\yd{\nu_{1}},\yd{\nu_{2}}\) be YDs with
    \begin{equation}
        (\nu_{1})_{i} \leq (\lambda_{1})_{i} \, ,\quad (\nu_{2})_{i} \leq (\lambda_{2})_{i}
    \end{equation}
    for all \(1\leq i \leq d\). Let further \(\yd{\lambda_{\min}},\yd{\lambda_{\max}}\) be given by
    \begin{equation}
    \begin{aligned}
        &(\lambda_{\min})_{i} \coloneqq \min\mleft((\lambda_{1})_{i},(\lambda_{2})_{i}\mright) \, ,\quad &(\nu_{\min})_{i} \coloneqq \min\mleft((\nu_{1})_{i},(\nu_{2})_{i}\mright) \, , \\
        &(\lambda_{\max})_{i} \coloneqq \max\mleft((\lambda_{1})_{i},(\lambda_{2})_{i}\mright) \, ,\quad &(\nu_{\max})_{i} \coloneqq \max\mleft((\nu_{1})_{i},(\nu_{2})_{i}\mright) \, ,
    \end{aligned}
    \end{equation}
    for \(1\leq i \leq d\). Then we have
    \begin{equation}
        \schur{\yd{\lambda_{\max}}\setminus \yd{\nu_{\max}}}[\boldsymbol{p}] \schur{\yd{\lambda_{\min}}\setminus \yd{\nu_{\min}}}[\boldsymbol{p}] \geq \schur{\yd{\lambda_{1}}\setminus \yd{\nu_{1}}}[\boldsymbol{p}] \schur{\yd{\lambda_{2}}\setminus \yd{\nu_{2}}}[\boldsymbol{p}] \, .
    \end{equation}
    Here we interpret \(\yd{\lambda}\setminus\yd{\nu}\) as the skew YD given by removing the smaller diagram from the larger one.
\end{theorem}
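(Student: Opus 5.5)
The plan is to realise both products as Schur polynomials of constrained gYDs on a single two-layer generalized diagram, and then apply \cref{thm:monotonicity_constrained_generalized_averages} with a suitable monotone function $F$.

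\emph{The gYD and its constraints.} Place two copies of the skew shapes side by side in disjoint column blocks, or on disjoint row blocks far apart, so that there is no adjacency between them. Take $\yd{\lambda}_{\mathrm{g}}$ to be the union, row by row, of the boxes of $\yd{\lambda_{\max}}\setminus\yd{\nu_{\min}}$. The goal is to encode the product $\schur{\yd{\lambda_{1}}\setminus\yd{\nu_{1}}}\schur{\yd{\lambda_{2}}\setminus\yd{\nu_{2}}}$ as a single constrained generalized Schur polynomial. The cleaner route is the standard lattice-path / Lindström--Gessel--Viennot picture translated into constraints. A skew SSYT of $\yd{\lambda}\setminus\yd{\nu}$ is encoded by the boxes present in row $i$, namely columns $\nu_i+1,\ldots,\lambda_i$. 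I would therefore use the full rectangle-like gYD $\yd{\Lambda}_{\mathrm{g}}$ with row $i$ spanning columns $(\nu_{\min})_i+1,\ldots,(\lambda_{\max})_i$. Boxes outside a given skew shape are forced, by constraints, to a dummy extreme value: entry $1$ to the left and entry $d$ to the right. To make this possible, enlarge the alphabet to $\{0,\ldots,d+1\}$ with weights $p_0=p_{d+1}=1$; with the earlier convention, shift labels instead. The weight of such forced boxes is $1$. Boxes left of $\nu$ receive the constraint $(0,0)$ and boxes right of $\lambda$ receive $(d+1,d+1)$, so that $\schur{\yd{\Lambda}_{\mathrm{g}}|\boldsymbol{x}(\nu,\lambda)}=\schur{\yd{\lambda}\setminus\yd{\nu}}$. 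I still need to check that column strictness holds trivially across dummy boxes. This fails for stacked $0$'s, so I would instead shear each row by its index, which makes dummy columns non-adjacent; this shearing is the technical point to get right.

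\emph{The ratio inequality.} The inequality to prove is equivalent to
\begin{equation}
\frac{\schur{\yd{\lambda_{\min}}\setminus\yd{\nu_{\min}}}}{\schur{\yd{\lambda_{2}}\setminus\yd{\nu_{2}}}}\ \ge\ \frac{\schur{\yd{\lambda_{1}}\setminus\yd{\nu_{1}}}}{\schur{\yd{\lambda_{\max}}\setminus\yd{\nu_{\max}}}} \, .
\end{equation}
I would write each side as a Weyl average of an indicator, following the proof outline in the main text. Take $F$ to be the indicator that certain boxes carry non-dummy (or non-extreme) values, so that $\schur[F]{\cdot|\boldsymbol{x}}$ restricts to the smaller shape. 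The constraints are then chosen as follows: $\boldsymbol{x}$ encodes $(\lambda_2,\nu_2)$ with $F$ cutting it to $(\lambda_{\min},\nu_{\min})$, while $\boldsymbol{x}'$ encodes $(\lambda_{\max},\nu_{\max})$ with $F$ cutting it to $(\lambda_1,\nu_1)$. The required ordering $\boldsymbol{x}\ge\boldsymbol{x}'$ entrywise follows because raising $\nu$ raises lower constraints to the dummy value and raising $\lambda$ raises upper constraints. A decreasing-indicator version of $F$, for example ``box value $\le d$'', is monotone in the needed direction. The theorem then gives the averaged inequality, and cross-multiplying yields the claim.

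\emph{Main obstacle.} The main obstacle is simultaneously arranging three things: that the maps $F$ and the two constraints make the four identifications of the numerators and denominators exact; that $F$ has the right monotonicity; and that the positivity hypotheses hold. Degenerate zero cases can be handled separately, since they make the inequality trivial. I would first try this with the single-row case, $d$ rows of length one, to calibrate the shear and the dummy values before treating general shapes.
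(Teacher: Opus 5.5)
There is a genuine gap in the ratio step. An indicator $F$ only selects a sub-sum of the monomials of $s^{\Lambda|\boldsymbol x}$. It can therefore delete boxes from the base skew shape, but it can never add any. That is not enough here, because in general $\lambda_{\min}\setminus\nu_{\min}\not\subseteq\lambda_2\setminus\nu_2$ and $\lambda_1\setminus\nu_1\not\subseteq\lambda_{\max}\setminus\nu_{\max}$. Both numerators contain the boxes of $(\nu_2\setminus\nu_1)\cap\lambda_1$, and your encodings of $(\lambda_2,\nu_2)$ and $(\lambda_{\max},\nu_{\max})$ pin exactly those boxes to the dummy $0$.

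A concrete failure: take $\lambda_1=\lambda_2=[1]$, $\nu_1=\varnothing$, $\nu_2=[1]$. The single box of $\Lambda$ is pinned under both $\boldsymbol x$ and $\boldsymbol x'$, so $s^{\Lambda|\boldsymbol x}=s^{\Lambda|\boldsymbol x'}=1$ and every indicator gives $F$-weighted sums in $\{0,1\}$. Both numerators, however, equal $p_1+\cdots+p_d$.

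Your sample $F$ (``box value $\le d$'') also has the wrong monotonicity. It is non-increasing, so with $\boldsymbol x'\le\boldsymbol x$ the monotonicity theorem gives $\langle F\rangle_{\boldsymbol x}\le\langle F\rangle_{\boldsymbol x'}$, which is the opposite of what you need.

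The missing idea, which is the core of the paper's proof, is a non-indicator $F$ that \emph{adds} boxes:
\begin{itemize}
\item Take the gYD $\lambda_{\max}\setminus\nu_2$, so that $\nu_2\setminus\nu_1$ lies outside it.
\item In $\boldsymbol x$, pin $\lambda_1\setminus\lambda_2$ to a high dummy; in $\boldsymbol x'$, pin $\nu_1\setminus\nu_2$ to a low dummy.
\item Let $F(w)$ be the indicator that the boxes of $(\lambda_2\setminus\lambda_1)\setminus\nu_2$ carry the top letter, times the weighted number of fillings of $(\nu_2\setminus\nu_1)\cap\lambda_1$ with letters in $\{1,\ldots,d\}$ that are compatible with $w$.
\end{itemize}
That region lies weakly north-west of the gYD, so $w$ imposes only upper bounds on it, and the count is non-decreasing in $w$. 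The same $F$ converts $\lambda_2\setminus\nu_2$ into $\lambda_{\min}\setminus\nu_{\min}$, and $\lambda_{\max}\setminus\nu_{\max}$ into $\lambda_1\setminus\nu_1$.

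You are right that pinning vertically adjacent boxes to a single dummy letter violates column strictness, for example with $\nu_1=[1,1]$, $\nu_2=\varnothing$. The paper's write-up, which pins $\lambda_1\setminus\lambda_2$ to the single letter $d+1$ and $\nu_1\setminus\nu_2$ to the single letter $0$, needs the same repair. Shearing is not that repair. It replaces the column relation between $(i,j)$ and $(i+1,j)$ by one between diagonal neighbours, so even the unconstrained part no longer enumerates skew semistandard tableaux.

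Use row-dependent dummies instead: a letter $T_i$ for pinned-high boxes in row $i$ and a letter $L_i$ for pinned-low boxes, with $L_1<\cdots<L_d<1$ and $d<T_1<\cdots<T_d$. This keeps every adjacency, makes pinned columns strictly increasing, and preserves $\boldsymbol x'\le\boldsymbol x$.

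One further point: $F$ demands the top letter on boxes that are free in both base configurations, so that letter must lie in their allowed range. Give the letters $T_i$ weight $t$ and apply the theorem for $t>0$. Then both sides of $s^{\Lambda|\boldsymbol x'}_F\, s^{\Lambda|\boldsymbol x}\le s^{\Lambda|\boldsymbol x}_F\, s^{\Lambda|\boldsymbol x'}$ have the same lowest power of $t$. Their leading coefficients are exactly $s^{\lambda_1\setminus\nu_1}s^{\lambda_2\setminus\nu_2}$ and $s^{\lambda_{\min}\setminus\nu_{\min}}s^{\lambda_{\max}\setminus\nu_{\max}}$, and letting $t\to0^+$ gives the claim.
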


\begin{proof}
    We define the following gYDs
    \begin{equation}
    \begin{aligned}
        \yd{\nu_{1\Delta2}}_{\mathrm{g}} \coloneqq \yd{\nu_{1}}_{\mathrm{g}} \setminus \yd{\nu_{2}}_{\mathrm{g}} \, ,\quad \yd{\nu_{2\Delta1}}_{\mathrm{g}} \coloneqq \yd{\nu_{2}}_{\mathrm{g}} \setminus \yd{\nu_{1}}_{\mathrm{g}} \, ,\quad \yd{\nu_{1\cap2}}_{\mathrm{g}} \coloneqq \yd{\nu_{1}}_{\mathrm{g}} \cap \yd{\nu_{2}}_{\mathrm{g}} \, , \\
        \yd{\lambda_{1\Delta2}}_{\mathrm{g}} \coloneqq \yd{\lambda_{1}}_{\mathrm{g}} \setminus \yd{\lambda_{2}}_{\mathrm{g}} \, ,\quad \yd{\lambda_{2\Delta1}}_{\mathrm{g}} \coloneqq \yd{\lambda_{2}}_{\mathrm{g}} \setminus \yd{\lambda_{1}}_{\mathrm{g}} \, ,\quad \yd{\lambda_{1\cap2}}_{\mathrm{g}} \coloneqq \yd{\lambda_{1}}_{\mathrm{g}} \cap \yd{\lambda_{2}}_{\mathrm{g}} \, .
    \end{aligned}
    \end{equation}
    Instead of the alphabet \(\{1,\ldots,d\}\), we work with the extended alphabet \(\{0,\ldots,d+1\}\). We further define
    \begin{equation}
        \yd{\gamma}_{\mathrm{g}} \coloneqq \yd{\lambda_{\max}}_{\mathrm{g}} \setminus \yd{\nu_{2}}_{\mathrm{g}} \, ,
    \end{equation}
    and we remark here that
    \begin{align}
        \mleft(\yd{\lambda_{\max}}_{\mathrm{g}} \setminus \yd{\nu_{\max}}_{\mathrm{g}} \mright) \cup \yd{\nu_{1\Delta2}}_{\mathrm{g}} = \yd{\gamma}_{\mathrm{g}} = \mleft(\yd{\lambda_{2}}_{\mathrm{g}} \setminus \yd{\nu_{2}}_{\mathrm{g}} \mright) \cup \yd{\lambda_{1\Delta2}}_{\mathrm{g}} \, .
    \end{align}
    We define the constraints \(\boldsymbol{x},\boldsymbol{x}'\) on \(\yd{\gamma}_{\mathrm{g}}\). The constraint \(\boldsymbol{x}\) is trivial on \(\yd{\lambda_{2}}_{\mathrm{g}} \setminus \yd{\nu_{2}}_{\mathrm{g}}\) and fixes the boxes of \(\yd{\lambda_{1\Delta2}}_{\mathrm{g}}\) to \(d+1\), that is
    \begin{equation}
        x_{l}(i,j)=x_{u}(i,j)\coloneqq d+1 \quad \text{for} \quad (i,j)\in\yd{\lambda_{1\Delta2}}_{\mathrm{g}} \, .
    \end{equation}
    Similarly, \(\boldsymbol{x}'\) is trivial on \(\yd{\lambda_{\max}}_{\mathrm{g}} \setminus \yd{\nu_{\max}}_{\mathrm{g}}\) and fixes the boxes of \(\yd{\nu_{1\Delta2}}_{\mathrm{g}}\) to \(0\), that is
    \begin{equation}
        x'_{l}(i,j)=x'_{u}(i,j)\coloneqq 0 \quad \text{for} \quad (i,j)\in\yd{\nu_{1\Delta2}}_{\mathrm{g}} \, .
    \end{equation}
    It follows that \(\boldsymbol{x}' \leq \boldsymbol{x}\) entrywise. In addition, we have
    \begin{equation}
        \wt{w}_{\mathrm{g}}\vdash_{0}^{d+1}\yd{\gamma}_{\mathrm{g}}|\boldsymbol{x} \quad \Leftrightarrow \quad \wt{w|_{\yd{\lambda_{2}}_{\mathrm{g}} \setminus \yd{\nu_{2}}_{\mathrm{g}}}}_{\mathrm{g}} \vdash_{d} \mleft(\yd{\lambda_{2}}_{\mathrm{g}} \setminus \yd{\nu_{2}}_{\mathrm{g}}\mright) \, ,
    \end{equation}
    where \(\vdash_{0}^{d+1}\) indicates the extended alphabet \(\{0,\ldots,d+1\}\). Similarly, we have
    \begin{equation}
        \wt{w}_{\mathrm{g}}\vdash_{0}^{d+1}\yd{\gamma}_{\mathrm{g}}|\boldsymbol{x}' \quad \Leftrightarrow \quad \wt{w|_{\yd{\lambda_{\max}}_{\mathrm{g}} \setminus \yd{\nu_{\max}}_{\mathrm{g}}}}_{\mathrm{g}} \vdash_{d} \mleft(\yd{\lambda_{\max}}_{\mathrm{g}} \setminus \yd{\nu_{\max}}_{\mathrm{g}}\mright) \, .
    \end{equation}
    The reasoning behind this is that \(\boldsymbol{x}\)  fixes the values on \(\yd{\lambda_{1\Delta2}}_{\mathrm{g}}\) to be \(d+1\), and \(\boldsymbol{x}'\) fixes the values on \(\yd{\nu_{1\Delta2}}_{\mathrm{g}}\) to be \(0\), while the rest of the gWT can have values in \(1,\ldots,d\). The above equivalences tell us in particular that, with \(n_{1}=|\yd{\lambda_{1\Delta2}}_{\mathrm{g}}|\) and \(n_{2}=|\yd{\nu_{1\Delta2}}_{\mathrm{g}}|\),
    \begin{equation}
    \begin{aligned}
        \label{equ:proof_inequality_Schur_polynomials_products_Schur_polynomials}
        \schur{\yd{\gamma}_{\mathrm{g}}|\boldsymbol{x}}[\boldsymbol{p}] = (p_{d+1})^{n_{1}} \, \schur{\yd{\lambda_{2}}_{\mathrm{g}}\setminus\yd{\nu_{2}}_{\mathrm{g}}}[\boldsymbol{p}|_{1}^{d}] \, , \\
        \schur{\yd{\gamma}_{\mathrm{g}}|\boldsymbol{x}'}[\boldsymbol{p}] = (p_{0})^{n_{2}} \, \schur{\yd{\lambda_{\max}}_{\mathrm{g}}\setminus\yd{\nu_{\max}}_{\mathrm{g}}}[\boldsymbol{p}|_{1}^{d}] \, .
    \end{aligned}
    \end{equation}
    where we take \(\boldsymbol{p}\dot{=}(p_{0},p_{1},\ldots,p_{d},p_{d+1})\) and \(\boldsymbol{p}|_{1}^{d}\dot{=}(p_{1},\ldots,p_{d})\). We introduce a slightly modified version of the corresponding restrictions. Recall that
    \begin{equation}
        \mleft(\yd{\lambda_{\max}}_{\mathrm{g}} \setminus \yd{\lambda_{\min}}_{\mathrm{g}}\mright) \setminus \yd{\nu_{2\Delta1}}_{\mathrm{g}} = \yd{\gamma}_{\mathrm{g}} \, ,
    \end{equation}
    so for \(\wt{w}_{\mathrm{g}}\vdash_{0}^{d+1}\yd{\gamma}_{\mathrm{g}}\), we can talk about the corresponding restriction \(\boldsymbol{x}\mleft(\wt{w}_{\mathrm{g}}\mright)\) on \(\yd{\nu_{2\Delta1}}_{\mathrm{g}}\). We modify this restriction as follows
    \begin{equation}
    \begin{aligned}
        \mleft(\boldsymbol{x}_{1}^{d}\mleft(\wt{w}_{\mathrm{g}}\mright)\mright)_{l}(i,j) \coloneqq \max\mleft(\mleft(\boldsymbol{x}\mleft(\wt{w}_{\mathrm{g}}\mright)\mright)_{l}(i,j) \, , \, 1\mright) \, , \\
        \mleft(\boldsymbol{x}_{1}^{d}\mleft(\wt{w}_{\mathrm{g}}\mright)\mright)_{u}(i,j) \coloneqq \min\mleft(\mleft(\boldsymbol{x}\mleft(\wt{w}_{\mathrm{g}}\mright)\mright)_{u}(i,j) \, , \, d\mright) \, .
    \end{aligned}
    \end{equation}
    The construction enforces the restriction \(\boldsymbol{x}\mleft(\wt{w}_{\mathrm{g}}\mright)\), and additionally we require that every entry is in \(\{1,\ldots,d\}\). We finally define the function
    \begin{equation}
    \begin{aligned}
        &F:\mathcal{W}_{\yd{\gamma}_{\mathrm{g}}}^{d+2} \rightarrow \mathbb{R}_{\geq 0} \, , \\
        &F\mleft(\wt{w}_{\mathrm{g}}\mright) \coloneqq
        \mleft\{\begin{aligned}
            &0, && \wt{w}_{\mathrm{g}}(i,j) \neq d+1 \text{ for some } (i,j) \in \yd{\lambda_{2\Delta1}}_{\mathrm{g}} \cap \yd{\gamma}_{\mathrm{g}}, \\
            &\schur{\yd{\nu_{2\Delta1}}_{\mathrm{g}}|\boldsymbol{x}_{1}^{d}\mleft(\wt{w}_{\mathrm{g}}\mright)}[\boldsymbol{p}], && \text{otherwise}.
        \end{aligned}\mright.
    \end{aligned}
    \end{equation}
    Thus, \(F\) is nonzero only if \(\yd{\lambda_{2\Delta1}}_{\mathrm{g}} \cap \yd{\gamma}_{\mathrm{g}}\) is filled with \(d+1\)'s. In this case, it evaluates to the sum over all possible extensions of \(\wt{w}_{\mathrm{g}}\) with entries strictly in \(\{1,\ldots,d\}\). We notice two facts about \(F\). First, it is non-decreasing in every entry. This is on the one hand due to the fact that it only takes non-zero values for a certain subset of boxes being filled with \(d+1\). On the other hand, the gYD \(\yd{\nu_{2\Delta1}}_{\mathrm{g}}\) is only to the left and above \(\yd{\gamma}_{\mathrm{g}}\), so any increase in the entries of \(\wt{w}_{\mathrm{g}}\) only leads to more potential extensions
    \begin{equation}
        \wt{w'}_{\mathrm{g}} \vdash_{d} \yd{\nu_{2\Delta1}}_{\mathrm{g}}|\boldsymbol{x}_{1}^{d}\mleft(\wt{w}_{\mathrm{g}}\mright) \, ,
    \end{equation}
    and therefore an increase in the number of terms of the sum \(\schur{\yd{\nu_{2\Delta1}}_{\mathrm{g}}|\boldsymbol{x}_{1}^{d}\mleft(\wt{w}_{\mathrm{g}}\mright)}[\boldsymbol{p}]\), which are all positive. The second fact we can verify about \(F\) is the following, where \(n_{1}=|\yd{\lambda_{1\Delta2}}_{\mathrm{g}}|\), \(n_{2}=|\yd{\nu_{1\Delta2}}_{\mathrm{g}}|\), and \(n_{3}=|\yd{\lambda_{2\Delta1}}_{\mathrm{g}}\cap \yd{\gamma}_{\mathrm{g}}|\):
    \begin{equation}
    \begin{aligned}
        \label{equ:proof_inequality_Schur_polynomials_products_weighted_Schur_polynomials}
        \schur[F]{\yd{\gamma}_{\mathrm{g}}|\boldsymbol{x}}[\boldsymbol{p}] = (p_{d+1})^{n_{1} + n_{3}} \, \schur{\yd{\lambda_{\min}}_{\mathrm{g}}\setminus\yd{\nu_{\min}}_{\mathrm{g}}}[\boldsymbol{p}|_{1}^{d}] \, , \\
        \schur[F]{\yd{\gamma}_{\mathrm{g}}|\boldsymbol{x}'}[\boldsymbol{p}] = (p_{0})^{n_{2}} (p_{d+1})^{n_{3}} \, \schur{\yd{\lambda_{1}}_{\mathrm{g}}\setminus\yd{\nu_{1}}_{\mathrm{g}}}[\boldsymbol{p}|_{1}^{d}] \, .
    \end{aligned}
    \end{equation}
    To see the first equality, we remark that
    \begin{equation}
        \yd{\lambda_{\min}}_{\mathrm{g}} \setminus \yd{\nu_{2}} = \yd{\gamma}_{\mathrm{g}} \setminus \mleft(\yd{\lambda_{1\Delta2}}_{\mathrm{g}} \cup \yd{\lambda_{2\Delta1}}_{\mathrm{g}}\mright) \, ,
    \end{equation}
    and we define
    \begin{equation}
    \begin{aligned}
        &F|_{\yd{\lambda_{\min}}_{\mathrm{g}} \setminus \yd{\nu_{2}}} :\mathcal{W}_{\yd{\lambda_{\min}}_{\mathrm{g}} \setminus \yd{\nu_{2}}}^{d+2} \rightarrow \mathbb{R}_{\geq 0} \, , \\
        &F|_{\yd{\lambda_{\min}}_{\mathrm{g}} \setminus \yd{\nu_{2}}}\mleft(\wt{w}_{\mathrm{g}}\mright) \coloneqq
        \schur{\yd{\nu_{2\Delta1}}_{\mathrm{g}}|\boldsymbol{x}_{1}^{d}\mleft(\wt{w}_{\mathrm{g}}\mright)}[\boldsymbol{p}] \, .
    \end{aligned}
    \end{equation}
    This can be interpreted as a version of \(F\) where we already fixed the entries on \(\yd{\lambda_{1\Delta2}}_{\mathrm{g}}\) to be equal to \(d+1\) due to \(\boldsymbol{x}\), and the entries on \(\yd{\lambda_{2\Delta1}}_{\mathrm{g}} \cap \yd{\gamma}_{\mathrm{g}}\) to be equal to \(d+1\) so that \(F\) is non-zero. This gives
    \begin{equation}
    \begin{aligned}
        \schur[F]{\yd{\gamma}_{\mathrm{g}}|\boldsymbol{x}}[\boldsymbol{p}] &= (p_{d+1})^{n_{1} + n_{3}} \sum_{\wt{w}_{\mathrm{g}}\vdash_{0}^{d+1} \yd{\lambda_{\min}}_{\mathrm{g}}\setminus\yd{\nu_{2}}_{\mathrm{g}}|\boldsymbol{x}} F|_{\yd{\lambda_{\min}}_{\mathrm{g}} \setminus \yd{\nu_{2}}}\mleft(\wt{w}_{\mathrm{g}}\mright) \\
        &= (p_{d+1})^{n_{1} + n_{3}} \sum_{\wt{w}_{\mathrm{g}}\vdash_{d} \yd{\lambda_{\min}}_{\mathrm{g}}\setminus\yd{\nu_{2}}_{\mathrm{g}}} \schur{\yd{\nu_{2\Delta1}}_{\mathrm{g}}|\boldsymbol{x}_{1}^{d}\mleft(\wt{w}_{\mathrm{g}}\mright)}[\boldsymbol{p}|_{1}^{d}] \\
        &= (p_{d+1})^{n_{1} + n_{3}} \, \schur{\yd{\lambda_{\min}}_{\mathrm{g}}\setminus\yd{\nu_{\min}}_{\mathrm{g}}}[\boldsymbol{p}|_{1}^{d}] \, .
    \end{aligned}
    \end{equation}
    Here we used in the second row that \(\boldsymbol{x}\) restricts the gWTs to entries in \(\{1,\ldots,d\}\) on \(\yd{\lambda_{\min}}_{\mathrm{g}}\setminus\yd{\nu_{2}}_{\mathrm{g}}\). For the second equality of~\cref{equ:proof_inequality_Schur_polynomials_products_weighted_Schur_polynomials}, we similarly remark that
    \begin{equation}
        \yd{\lambda_{1}}_{\mathrm{g}} \setminus \yd{\nu_{\max}}_{\mathrm{g}} = \yd{\gamma}_{\mathrm{g}} \setminus \mleft(\yd{\nu_{1\Delta2}}_{\mathrm{g}} \cup \yd{\lambda_{2\Delta1}}_{\mathrm{g}}\mright) \, ,
    \end{equation}
    and we define in the same way
    \begin{equation}
    \begin{aligned}
        &F|_{\yd{\lambda_{1}}_{\mathrm{g}} \setminus \yd{\nu_{\max}}_{\mathrm{g}}} :\mathcal{W}_{\yd{\lambda_{1}}_{\mathrm{g}} \setminus \yd{\nu_{\max}}_{\mathrm{g}}}^{d+2} \rightarrow \mathbb{R}_{\geq 0} \, , \\
        &F|_{\yd{\lambda_{1}}_{\mathrm{g}} \setminus \yd{\nu_{\max}}_{\mathrm{g}}}\mleft(\wt{w}_{\mathrm{g}}\mright) \coloneqq
        \schur{\yd{\nu_{2\Delta1}}_{\mathrm{g}}|\boldsymbol{x}_{1}^{d}\mleft(\wt{w}_{\mathrm{g}}\mright)}[\boldsymbol{p}] \, .
    \end{aligned}
    \end{equation}
    The same mechanism applies, with \(\boldsymbol{x}'\) forcing the gWTs to have value \(0\) on \(\yd{\nu_{1\Delta2}}_{\mathrm{g}}\) instead of \(d+1\) on \(\yd{\lambda_{1\Delta2}}_{\mathrm{g}}\). Again, it follows
    \begin{equation}
    \begin{aligned}
        \schur[F]{\yd{\gamma}_{\mathrm{g}}|\boldsymbol{x}'}[\boldsymbol{p}] &= (p_{0})^{n_{2}} (p_{d+1})^{n_{3}} \sum_{\wt{w}_{\mathrm{g}}\vdash_{0}^{d+1} \yd{\lambda_{1}}_{\mathrm{g}} \setminus \yd{\nu_{\max}}_{\mathrm{g}}|\boldsymbol{x}'} F|_{\yd{\lambda_{1}}_{\mathrm{g}} \setminus \yd{\nu_{\max}}_{\mathrm{g}}}\mleft(\wt{w}_{\mathrm{g}}\mright) \\
        &= (p_{0})^{n_{2}} (p_{d+1})^{n_{3}} \sum_{\wt{w}_{\mathrm{g}}\vdash_{d} \yd{\lambda_{1}}_{\mathrm{g}} \setminus \yd{\nu_{\max}}_{\mathrm{g}}} \schur{\yd{\nu_{2\Delta1}}_{\mathrm{g}}|\boldsymbol{x}_{1}^{d}\mleft(\wt{w}_{\mathrm{g}}\mright)}[\boldsymbol{p}|_{1}^{d}] \\
        &= (p_{0})^{n_{2}} (p_{d+1})^{n_{3}} \, \schur{\yd{\lambda_{1}}_{\mathrm{g}} \setminus \yd{\nu_{1}}_{\mathrm{g}}}[\boldsymbol{p}|_{1}^{d}] \, .
    \end{aligned}
    \end{equation}
    Using~\cref{thm:monotonicity_constrained_generalized_averages,equ:proof_inequality_Schur_polynomials_products_Schur_polynomials,equ:proof_inequality_Schur_polynomials_products_weighted_Schur_polynomials}, we see
    \begin{equation}
    \begin{aligned}
        &(p_{d+1})^{n_{1}} \frac{\schur{\yd{\lambda_{\min}}_{\mathrm{g}}\setminus\yd{\nu_{\min}}_{\mathrm{g}}}[\boldsymbol{p}|_{1}^{d}]}{\schur{\yd{\lambda_{2}}_{\mathrm{g}}\setminus\yd{\nu_{2}}_{\mathrm{g}}}[\boldsymbol{p}|_{1}^{d}]}
        = \frac{\schur[F]{\yd{\gamma}_{\mathrm{g}}|\boldsymbol{x}}[\boldsymbol{p}]}{\schur{\yd{\gamma}_{\mathrm{g}}|\boldsymbol{x}}[\boldsymbol{p}]}
        = \weylavg{F\mleft(\wt{w}_{\mathrm{g}}\mright)}{\boldsymbol{p}}{\yd{\gamma}_{\mathrm{g}}|\boldsymbol{x}}\\
        \geq &\weylavg{F\mleft(\wt{w}_{\mathrm{g}}\mright)}{\boldsymbol{p}}{\yd{\gamma}_{\mathrm{g}}|\boldsymbol{x}'}
        =\frac{\schur[F]{\yd{\gamma}_{\mathrm{g}}|\boldsymbol{x}'}[\boldsymbol{p}]}{\schur{\yd{\gamma}_{\mathrm{g}}|\boldsymbol{x}'}[\boldsymbol{p}]}
        = (p_{d+1})^{n_{1}} \frac{\schur{\yd{\lambda_{1}}_{\mathrm{g}}\setminus\yd{\nu_{1}}_{\mathrm{g}}}[\boldsymbol{p}|_{1}^{d}]}{\schur{\yd{\lambda_{\max}}_{\mathrm{g}}\setminus\yd{\nu_{\max}}_{\mathrm{g}}}[\boldsymbol{p}|_{1}^{d}]} \, .
    \end{aligned}
    \end{equation}
    Multiplying the by denominator on both sides and choosing \(p_{d+1}=1\) gives the result.
\end{proof}

Following the approach in Ref.~\cite[Proposition~8.1]{KT21S}, we derive the next result.

\begin{corollary}[Monotonicity of Occupancy Observables]
\label{cor:monotonicity_functions_young_diagrams}
Let $\yd{\varsigma}$ and $\yd{\varrho}$ be two YDs such that $\yd{\varrho}$ is obtained from $\yd{\varsigma}$ by adding boxes, i.e. $\yd{\varrho} \ge \yd{\varsigma}$. Let further $F:\mathbb{Z}_{\geq0}\mapsto \mathbb{R}$ be a non-decreasing function. Then, for any entry $k \in \{1, \ldots, d\}$ and spectrum $\boldsymbol{p} \in \mathbb{R}_{>0}^d$:
\begin{equation}
    \weylavg{F(\#_k)}{\boldsymbol{p}}{\yd{\varsigma}} \le \weylavg{F(\#_k)}{\boldsymbol{p}}{\yd{\varrho}} \, .
\end{equation}
\end{corollary}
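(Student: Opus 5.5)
It suffices to treat the case where \(\yd{\varrho}=\yd{\varsigma}+\mathbf{e}_r\) differs from \(\yd{\varsigma}\) by a single box, since any \(\yd{\varrho}\ge\yd{\varsigma}\) is reached by a chain of single-box additions through valid YDs, and the inequality then follows by transitivity. Following Ref.~\cite[Proposition~8.1]{KT21S}, the plan is to recast the claim as an inequality between Schur-type quantities. Abel summation reduces it to tail probabilities. Every non-decreasing \(F\) on \(\mathbb{Z}_{\geq0}\) can be written as \(F(a)=F(0)+\sum_{c\geq1}(F(c)-F(c-1))\mathbf{1}[a\geq c]\) with nonnegative coefficients. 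It is therefore enough to show, for every threshold \(c\),
\begin{equation}
\Pr_{\mathrm{W}(\boldsymbol{p},\yd{\varsigma})}(\#_k\geq c)\le \Pr_{\mathrm{W}(\boldsymbol{p},\yd{\varrho})}(\#_k\geq c)\,,
\end{equation}
which is first-order stochastic dominance of the occupancy of the letter \(k\).

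To obtain this, I would use a generating-function variable. Replace \(p_k\) by \(x p_k\) and set \(\boldsymbol{p}_x\) accordingly. Then the Weyl distribution of \(\#_k\) under \(\yd{\lambda}\) is encoded in \(\schur{\yd{\lambda}}[\boldsymbol{p}_x]=\sum_a x^a N_{\yd{\lambda}}(a)\), where \(N_{\yd{\lambda}}(a)\) is the weighted count of tableaux with exactly \(a\) letters \(k\). The monotone-likelihood-ratio property (the ratio \(N_{\yd{\varrho}}(a)/N_{\yd{\varsigma}}(a)\) being non-decreasing in \(a\)) implies stochastic dominance. This ratio property is equivalent to the cross inequality
\begin{equation}
N_{\yd{\varrho}}(a+1)\,N_{\yd{\varsigma}}(a)\ \ge\ N_{\yd{\varrho}}(a)\,N_{\yd{\varsigma}}(a+1)\,.
\end{equation}
The key step is to express each \(N_{\yd{\lambda}}(a)\) as a sum over intermediate shapes \(\yd{\alpha}\subseteq\yd{\beta}\subseteq\yd{\lambda}\). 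Here \(\yd{\alpha}\) is filled with letters \(<k\), the horizontal strip \(\yd{\beta}\setminus\yd{\alpha}\) has \(a\) boxes of letter \(k\), and \(\yd{\lambda}\setminus\yd{\beta}\) carries letters \(>k\). This gives products of skew Schur polynomials in the reduced alphabets. The cross inequality should then follow from \cref{thm:inequality_Schur_polynomials_products}, applied to the skew shapes \(\yd{\lambda}\setminus\yd{\beta}\) with the min/max construction. The interlacing (horizontal strip) conditions are preserved under entrywise min and max of \(\yd{\beta}\), \(\yd{\beta}'\) and of \(\yd{\alpha}\), \(\yd{\alpha}'\), which is exactly the lattice structure that \cref{thm:inequality_Schur_polynomials_products} exploits.

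An alternative, possibly cleaner, route avoids the likelihood-ratio step entirely and uses \cref{thm:monotonicity_constrained_generalized_averages} directly. View \(\yd{\varrho}\) as the gYD \(\yd{\varsigma}_{\mathrm{g}}\cup\{b\}\) with the new box \(b\). Write the average over \(\yd{\varrho}\) as a mixture, over the entry \(v\) placed in \(b\), of averages over \(\yd{\varsigma}_{\mathrm{g}}\) with the corresponding restriction \(\boldsymbol{x}(v)\), which only constrains the left and upper neighbours of \(b\) from above. Since \(\#_k\) is not monotone in the entries, I would instead apply \cref{thm:monotonicity_constrained_generalized_averages} to \(\mathbf{1}[\#_{\ge k}\ge c]\) and \(\mathbf{1}[\#_{\ge k+1}\ge c]\) and combine the results. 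I expect this to be the main obstacle: occupancy of a single middle letter is not a monotone function of the tableau, so neither route applies verbatim. I would first try the skew-Schur factorization with \cref{thm:inequality_Schur_polynomials_products}, checking carefully that the min/max of the intermediate-shape pairs stay admissible and correctly pair the four terms of the cross inequality.
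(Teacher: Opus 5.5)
Your reduction is sound: Abel summation turns the claim into first-order stochastic dominance of $\#_k$, which follows from the cross inequality $N_{\yd{\varrho}}(b)N_{\yd{\varsigma}}(a)\ge N_{\yd{\varrho}}(a)N_{\yd{\varsigma}}(b)$ for $a\le b$. The gap is that, as you suspected, neither of your two routes establishes this cross inequality for a general letter $k$.

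In the first route, the intermediate-shape expansion writes $N_{\yd{\lambda}}(a)$ as a sum over pairs $x=(\yd{\alpha},\yd{\beta})$. Applied termwise, \cref{thm:inequality_Schur_polynomials_products} only yields $\phi(x)\psi(y)\le\phi(x\vee y)\psi(x\wedge y)$ on the lattice of such pairs. That does not pass to the graded sums. The strip sizes of $x\vee y$ and $x\wedge y$ are not $a+1$ and $a$; only their total is conserved. Moreover, the strip size $|\beta|-|\alpha|$ is not monotone in the lattice order, so neither a direct pairing nor a four-functions or Holley argument gives the cross inequality. In the second route, the tails of $\#_{\ge k}$ and $\#_{\ge k+1}$ do not determine the tails of their difference $\#_k$, so combining them gives nothing.

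The missing idea is the one-line reduction the paper uses. Your own generating function $\schur{\yd{\lambda}}[\boldsymbol{p}_x]$ is symmetric in its arguments. Hence the law of $\#_k$ under $\mathrm{W}(\boldsymbol{p},\yd{\lambda})$ equals the law of $\#_1$ under $\mathrm{W}(\tilde{\boldsymbol{p}},\yd{\lambda})$, where $\tilde{\boldsymbol{p}}$ is a permutation of $\boldsymbol{p}$ that puts $p_k$ first. Since the statement holds for every positive spectrum, you may take $k=1$.

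Once $k=1$, all $1$'s sit at the start of the first row. So $N_{\yd{\lambda}}(i)=p_1^{i}\,\schur{\yd{\lambda}\setminus\yd{i}}[\boldsymbol{p}|_{2}^{d}]$ is a single term. Then \cref{thm:inequality_Schur_polynomials_products}, with $\nu_1=\yd{j}$, $\nu_2=\yd{i}$, $i\ge j$, and $\yd{\varrho}\ge\yd{\varsigma}$, is exactly the cross inequality for all $j\le i$, with no single-box induction needed. The paper finishes by pairing the $(i,j)$ and $(j,i)$ terms of the double sum for the difference of averages. The range $i>\varsigma_1$ is trivially nonnegative there.

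After the reduction, your second route also works verbatim, with one box at a time:
\begin{itemize}
\item $F(\#_1)$ is non-increasing in the entries.
\item The new corner box only lowers the upper constraints on its left and upper neighbours. By \cref{thm:monotonicity_constrained_generalized_averages}, each constrained average of $F(\#_1)$ can only increase.
\item The new box's entry $v$ adds $\mathbf{1}[v=1]\ge 0$ to $\#_1$.
\end{itemize}
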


\begin{proof}
    Because of the symmetry of the distribution \(\mathrm{W}(\boldsymbol{p},\yd{\varrho})\) it suffices to consider the case \(k=1\). For a YD \(\yd{\lambda}\) and \(i\leq \lambda_{1}\), let \(\yd{\lambda}\setminus\yd{i}\) denote the skew YD with the first \(i\) boxes removed from the first row. We note that
    \begin{subequations}
    \begin{align}
        \schur{\yd{\lambda}}[\boldsymbol{p}] &= \sum_{i=0}^{\lambda_{1}} p_{1}^{i}\schur{\yd{\lambda}\setminus\yd{i}}[\boldsymbol{p}|_{2}^{d}] \, , \\
        \weylavg{F(\#_1)}{\boldsymbol{p}}{\yd{\lambda}} &= \frac{\sum_{i=0}^{\lambda_{1}} F(i)p_{1}^{i}\schur{\yd{\lambda}\setminus\yd{i}}[\boldsymbol{p}|_{2}^{d}]}{\schur{\yd{\lambda}}[\boldsymbol{p}]} \, .
    \end{align}
    \end{subequations}

    This follows by ordering the WTs according to the number of \(1\)'s (which all are at the beginning of the first row). We proceed in a similar manner as in Ref.~\cite[Proposition~8.1]{KT21S}, using the inequality from Ref.~\cite[Theorem~5]{LPP05S}, which gives

    \begin{equation}
        \label{equ:proof_monotonicity_functions_young_diagrams_Schur_polynomials_positive}
        \schur{\yd{\varrho}\setminus\yd{i}}[\boldsymbol{p}|_{2}^{d}]\schur{\yd{\varsigma}\setminus\yd{j}}[\boldsymbol{p}|_{2}^{d}] - \schur{\yd{\varrho}\setminus\yd{j}}[\boldsymbol{p}|_{2}^{d}]\schur{\yd{\varsigma}\setminus\yd{i}}[\boldsymbol{p}|_{2}^{d}] \geq 0 \, ,
    \end{equation}
    for \(\varrho \geq \varsigma\) and \(i\geq j\). We have
    \begin{equation}
    \begin{aligned}
        \label{equ:proof_monotonicity_functions_young_diagrams_difference}
        \schur{\yd{\varrho}}[\boldsymbol{p}]\schur{\yd{\varsigma}}[\boldsymbol{p}]\mleft(\weylavg{F(\#_k)}{\boldsymbol{p}}{\yd{\varrho}} - \weylavg{F(\#_k)}{\boldsymbol{p}}{\yd{\varsigma}}\mright) \\
        = \sum_{i=0}^{\varrho_{1}}\sum_{j=0}^{\varsigma_{1}} p_{1}^{i+j}(F(i)-F(j))\schur{\yd{\varrho}\setminus\yd{i}}[\boldsymbol{p}|_{2}^{d}]\schur{\yd{\varsigma}\setminus\yd{j}}[\boldsymbol{p}|_{2}^{d}] \, .
    \end{aligned}
    \end{equation}
    The diagonal part of the right-hand side of~\cref{equ:proof_monotonicity_functions_young_diagrams_difference}, with \(i=j\), cancels out, and by virtue of \(F\) being non-decreasing and \(\schur{\yd{\varrho}}[\boldsymbol{p}]\schur{\yd{\varsigma}}[\boldsymbol{p}] > 0\), the part for \(i>\varsigma_{1}\) is \(\geq 0\). Omitting these parts, we can sum pairs of similar indices to get
    \begin{equation}
    \begin{aligned}
        \sum_{0\leq j<i \leq \varsigma_{1}} p_{1}^{i+j}(F(i)-F(j))\mleft(\schur{\yd{\varrho}\setminus\yd{i}}[\boldsymbol{p}|_{2}^{d}]\schur{\yd{\varsigma}\setminus\yd{j}}[\boldsymbol{p}|_{2}^{d}] - \schur{\yd{\varrho}\setminus\yd{j}}[\boldsymbol{p}|_{2}^{d}]\schur{\yd{\varsigma}\setminus\yd{i}}[\boldsymbol{p}|_{2}^{d}]\mright) \geq 0 \, .
    \end{aligned}
    \end{equation}
    Here, we used~\cref{equ:proof_monotonicity_functions_young_diagrams_Schur_polynomials_positive} and the fact that \(F\) is non-decreasing. Since \(\schur{\yd{\varrho}}[\boldsymbol{p}]\schur{\yd{\varsigma}}[\boldsymbol{p}] > 0\) we can divide by this term and we obtain the theorem.
\end{proof}

\subsubsection{Splitting averages}
\label{subsubsec:splitting_averages}

Finally, we want to understand how we can split Weyl averages of products of functions, which only depend on the WT values of complementary subsets of the gYD. To this end, we introduce the following notion.

\begin{definition}
    For a gYD \(\yd{\lambda}_{\mathrm{g}}\) we define the \emph{highest weight gWT \(\wt{\hw}_{\mathrm{g}}\)} and the \emph{lowest weight gWT \(\wt{\lw}_{\mathrm{g}}\)} as the gWTs such that for all \(\wt{w}_{\mathrm{g}} \vdash_{d} \yd{\lambda}_{\mathrm{g}}\) we have
    \begin{equation}
        \label{equ:definition_lowest_highest_weight_gWTs}
        \wt{\lw}_{\mathrm{g}}(i,j) \leq \wt{w}_{\mathrm{g}}(i,j) \leq \wt{\hw}_{\mathrm{g}}(i,j) \quad \text{for} \quad (i,j)\in \yd{\lambda}_{\mathrm{g}} \, .
    \end{equation}
\end{definition}

\begin{remark}
    For the case where \(\yd{\lambda}_{\mathrm{g}}\) is just given by \(\yd{\lambda}\dot{=}\yd{\lambda_{1},\ldots,\lambda_{d}}\), the lowest and highest weight gWTs just correspond to the lowest and highest weight SSYTs with weight ordering \((1,\ldots,d)\). Equivalently, the lowest weight is given by
    \begin{equation}
        \wt{\lw}_{\mathrm{g}}(i,j) = i \quad \text{for} \quad 1\leq i \leq d \, , \quad 1\leq j \leq \lambda_{i} \, ,
    \end{equation}
    and the highest weight is given by
    \begin{equation}
        \wt{\hw}_{\mathrm{g}}(i,j) = d - \ell + 1 \quad \text{for} \quad 1\leq i \leq d \, , \quad \max(1,\lambda_{\ell+1}) \leq j \leq \lambda_{\ell} \, , \quad i \leq \ell \leq d \, ,
    \end{equation}
    with the convention that \(\lambda_{d+1}=-\infty\).
\end{remark}

\begin{example}
    For the gYD in panels (c,d) of~\cref{fig:gYT}, take \(d=5\).
    The lowest weight gWT is
    \begin{equation}\begin{aligned}
        \wt{\lw}_{\mathrm{g}}(0,4)&=1,
        &\wt{\lw}_{\mathrm{g}}(1,1)&=1,
        &\wt{\lw}_{\mathrm{g}}(1,2)&=1,
        &\wt{\lw}_{\mathrm{g}}(1,4)&=2, \\
        \wt{\lw}_{\mathrm{g}}(1,5)&=2,
        &\wt{\lw}_{\mathrm{g}}(2,2)&=2,
        &\wt{\lw}_{\mathrm{g}}(2,4)&=3,
        &\wt{\lw}_{\mathrm{g}}(2,5)&=3, \\
        \wt{\lw}_{\mathrm{g}}(4,3)&=1,
        &\wt{\lw}_{\mathrm{g}}(4,4)&=1.
    \end{aligned}\end{equation}
    The highest weight gWT is
    \begin{equation}\begin{aligned}
        \wt{\hw}_{\mathrm{g}}(0,4)&=3,
        &\wt{\hw}_{\mathrm{g}}(1,1)&=4,
        &\wt{\hw}_{\mathrm{g}}(1,2)&=4,
        &\wt{\hw}_{\mathrm{g}}(1,4)&=4, \\
        \wt{\hw}_{\mathrm{g}}(1,5)&=4,
        &\wt{\hw}_{\mathrm{g}}(2,2)&=5,
        &\wt{\hw}_{\mathrm{g}}(2,4)&=5,
        &\wt{\hw}_{\mathrm{g}}(2,5)&=5, \\
        \wt{\hw}_{\mathrm{g}}(4,3)&=5,
        &\wt{\hw}_{\mathrm{g}}(4,4)&=5.
    \end{aligned}\end{equation}
    Thus every \(\wt{w}_{\mathrm{g}}\vdash_{5}\yd{\lambda}_{\mathrm{g}}\) satisfies \(\wt{\lw}_{\mathrm{g}}(i,j)\leq \wt{w}_{\mathrm{g}}(i,j)\leq \wt{\hw}_{\mathrm{g}}(i,j)\) for all \((i,j)\in\yd{\lambda}_{\mathrm{g}}\).
\end{example}

We can show that there always exist lowest and highest weights.

\begin{lemma}
    \label{lem:existence_lowest_highest_weights}
    If there exists a unique \(\wt{w}\vdash_{d}\yd{\lambda}_{\mathrm{g}}\), then there exist lowest and highest weight gWTs.
\end{lemma}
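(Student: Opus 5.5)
The lemma says ``unique'' but presumably means ``some'': if $\mathcal{W}^{d}_{\yd{\lambda}_{\mathrm{g}}}$ is nonempty, then lowest and highest weight gWTs exist. My approach is lattice-theoretic. I would show that the set of gWTs of $\yd{\lambda}_{\mathrm{g}}$ is closed under the pointwise minimum and the pointwise maximum. Since the set is finite, the pointwise minimum (resp.\ maximum) over all gWTs is then itself a gWT, and it satisfies \cref{equ:definition_lowest_highest_weight_gWTs} by construction. Uniqueness is immediate, because two such fillings would each bound the other pointwise.

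The key step is closure. Let $\wt{w_1}_{\mathrm{g}},\wt{w_2}_{\mathrm{g}}\vdash_d\yd{\lambda}_{\mathrm{g}}$ and set $\wt{v}_{\mathrm{g}}(i,j)\coloneqq\min(\wt{w_1}_{\mathrm{g}}(i,j),\wt{w_2}_{\mathrm{g}}(i,j))$. Values clearly stay in $\{1,\ldots,d\}$. For a horizontal pair $(i,j),(i,j+1)\in\yd{\lambda}_{\mathrm{g}}$, the inequality $w_a(i,j)\le w_a(i,j+1)$ for $a=1,2$ gives $\min_a w_a(i,j)\le \min_a w_a(i,j+1)$. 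The reason is that $\min_a w_a(i,j)\le w_b(i,j)\le w_b(i,j+1)$ for each $b$, and we then take the minimum over $b$. For a vertical pair the same argument works with strict inequalities: $\min_a w_a(i,j)\le w_b(i,j)< w_b(i+1,j)$ for each $b$, hence $\min_a w_a(i,j)<\min_b w_b(i+1,j)$. The maximum is handled symmetrically, using $w_b(i,j)\le\max_a w_a(i,j)$. Both filling rules are of the form ``entry at one box $\le$ (resp.\ $<$) entry at a neighbouring box'', and such constraints are preserved under pointwise min and max. Only the $\mathbb{Z}^2$ adjacency of boxes inside $\yd{\lambda}_{\mathrm{g}}$ matters, so gaps in a gYD cause no trouble.

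The final step is to iterate. I would define $\wt{\lw}_{\mathrm{g}}$ as the pointwise minimum over the finite nonempty set $\mathcal{W}^{d}_{\yd{\lambda}_{\mathrm{g}}}$, obtained by repeated binary minima. By induction on the number of elements combined, this is a gWT, and by definition it lies below every gWT. The same construction with maxima gives $\wt{\hw}_{\mathrm{g}}$.

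I expect no real obstacle. The only delicate point is the wording of the hypothesis, which should be read as existence of at least one gWT. Without that, the set is empty and the statement is vacuous or false. As a remark, the same argument applies verbatim to gWTs with constraints $\boldsymbol{x}$, because box-wise bounds $x_l\le\cdot\le x_u$ are also preserved under min and max. This is presumably the form used later when defining $\boldsymbol{x}_{\min},\boldsymbol{x}_{\max}$.
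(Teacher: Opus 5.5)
Your proof is correct, and it follows a different route from the paper's. You show that $\mathcal{W}^{d}_{\yd{\lambda}_{\mathrm{g}}}$ is closed under pointwise $\min$ and $\max$. Each filling rule is an inequality $\le$ or $<$ between two neighbouring boxes, and such inequalities survive pointwise minima and maxima. Hence the pointwise extremes of this finite nonempty set are themselves gWTs. Reading the hypothesis as mere existence of a gWT matches how the paper's proof uses it.

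The paper proceeds differently. It first reduces the highest weight to the lowest weight using the order-reversing bijection $I$, which is rotation of $\mathbb{Z}^2$ by $180$ degrees composed with the relabeling $i\mapsto d-i+1$. It then builds $\wt{\lw}_{\mathrm{g}}$ explicitly. Starting from an arbitrary gWT, it repeatedly peels off the top-left border $\mathrm{Top}(\cdot)$ of the remaining boxes and assigns the value $k$ to the $k$-th peeled layer. By induction, every intermediate filling is a gWT lying pointwise below the starting one. The existence of a gWT forces all boxes to be exhausted after at most $d$ layers, so the final filling does not depend on the starting gWT.

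Your route is shorter and handles $\wt{\lw}_{\mathrm{g}}$ and $\wt{\hw}_{\mathrm{g}}$ symmetrically, without the duality $I$. It also carries over verbatim to gWTs with constraints $\boldsymbol{x}$. The paper's route is constructive: each box's entry in $\wt{\lw}_{\mathrm{g}}$ is the index of the layer containing it. This makes explicit forms easy to read off, such as row $i$ filled with $i$ for an ordinary YD, and likewise the constraints $\boldsymbol{x}_{\min},\boldsymbol{x}_{\max}$ used in the splitting corollary.

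One small correction to your closing remark. The paper defines $\boldsymbol{x}_{\min},\boldsymbol{x}_{\max}$ from the lowest and highest weight gWTs of the unconstrained $\yd{\lambda}_{\mathrm{g}}$, restricted to a subdiagram. So the constrained version of the lemma is a genuine bonus of your argument, but it is not what is used there.
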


\begin{proof}
    Let \(R_{\pi}\) be the rotation of \(\mathbb{Z}^2\) by \(180\) degrees. We define
    \begin{equation}
        \yd{\lambda_{\mathrm{rot}}}_{\mathrm{g}} \coloneqq R_{\pi}\mleft(\yd{\lambda}_{\mathrm{g}}\mright) \, ,
    \end{equation}
    that is
    \begin{equation}
        (i,j) \in \yd{\lambda}_{\mathrm{g}} \quad \Leftrightarrow \quad (-i,-j) \in \yd{\lambda_{\mathrm{rot}}}_{\mathrm{g}} \, .
    \end{equation}
    Let further \(\iota\) be the inverse labeling of \(\{1,\ldots,d\}\) given by
    \begin{align}
        \iota(i) \coloneqq d - i + 1\, .
    \end{align}
    Let finally \(I:\mathcal{W}_{\yd{\lambda}_{\mathrm{g}}}^{d} \rightarrow \mathcal{W}_{\yd{\lambda_{\mathrm{rot}}}_{\mathrm{g}}}^{d}\) be the map
    \begin{equation}
        I\mleft(\wt{w}\mright) \coloneqq \iota \circ \wt{w} \circ R_{\pi}^{-1} \, .
    \end{equation}
    The map \(I\) takes a gWT, rotates it by \(180\) degrees, and then applies the relabeling \(\iota\). One verifies that this again defines a valid gWT of \(\yd{\lambda_{\mathrm{rot}}}_{\mathrm{g}}\) and that it is a bijection. In addition, \(I\) inverts entrywise ordering, that is for \(\wt{w}_{\mathrm{g}},\wt{w'}_{\mathrm{g}}\vdash_{d}\yd{\lambda}_{\mathrm{g}}\), and for any \((i,j)\in \yd{\lambda}_{\mathrm{g}}\) we have
    \begin{equation}
        \wt{w}_{\mathrm{g}}(i,j) \leq \wt{w'}_{\mathrm{g}}(i,j) \quad \Leftrightarrow \quad I\mleft(\wt{w}\mright) \geq I\mleft(\wt{w'}\mright) \, .
    \end{equation}
    In particular,
    \begin{equation}
        I^{-1}\mleft(\wt{w_{\mathrm{lw}}^{\yd{\lambda_{\mathrm{rot}}}_{\mathrm{g}}}}_{\mathrm{g}}\mright) = \wt{\hw}_{\mathrm{g}} \, .
    \end{equation}
    Therefore it suffices to show that the existence of some \(\wt{w}\vdash_{d}\yd{\lambda}_{\mathrm{g}}\) implies the existence of the lowest weight gWT.

    We now prove the existence of the lowest weight gWT. For any \(\yd{\nu}_{\mathrm{g}}\), let
    \begin{equation}
    \begin{aligned}
        \mathrm{Top}\mleft(\yd{\nu}_{\mathrm{g}}\mright)
        \coloneqq \left\{(i,j)\in \yd{\nu}_{\mathrm{g}} \, \middle|   \left((i,j-1) \notin \yd{\nu}_{\mathrm{g}}
        \lor(i,j-1)\in \mathrm{Top}\mleft(\yd{\nu}_{\mathrm{g}}\mright)\right) \land (i-1,j) \notin \yd{\nu}_{\mathrm{g}}\right\} \, .
    \end{aligned}
    \end{equation}
    The construction first takes all boxes with no boxes to their left or above, and then iteratively takes all boxes with a box already chosen to the left and no box above. This produces the top-left border of the diagram. We iteratively remove this top-left border, and for the step \(k\) we define the upper leftover part \(\yd{\mu_{k}}_{\mathrm{g}}\), the removed row \(\yd{\tau_{k}}_{\mathrm{g}}\) and the lower leftover part \(\yd{\nu_{k}}_{\mathrm{g}}\) as
    \begin{equation}
    \begin{aligned}
        &\yd{\mu_{0}}_{\mathrm{g}} \coloneqq \varnothing \, , \quad &\yd{\mu_{k+1}}_{\mathrm{g}} \coloneqq \yd{\mu_{k}}_{\mathrm{g}} \cup \mathrm{Top}\mleft(\yd{\nu_{k}}_{\mathrm{g}}\mright) \, , \\
        &\yd{\tau_{0}}_{\mathrm{g}} \coloneqq \varnothing \, , \quad &\yd{\tau_{k+1}}_{\mathrm{g}} \coloneqq \mathrm{Top}\mleft(\yd{\nu_{k}}_{\mathrm{g}}\mright) \, , \\
        &\yd{\nu_{0}}_{\mathrm{g}} \coloneqq \yd{\lambda}_{\mathrm{g}} \, , \quad &\yd{\nu_{k+1}}_{\mathrm{g}} \coloneqq \yd{\nu_{k}}_{\mathrm{g}}\setminus{\mathrm{Top}\mleft(\yd{\nu_{k}}_{\mathrm{g}}\mright)} \, .
    \end{aligned}
    \end{equation}
    We remark here that for every \(0\leq k\) we have
    \begin{equation}
        \yd{\mu_{k}}_{\mathrm{g}} \cup \yd{\tau_{k}}_{\mathrm{g}} \cup \yd{\nu_{k}}_{\mathrm{g}} = \yd{\lambda}_{\mathrm{g}} \, .
    \end{equation}
    Now let \(\wt{w}_{\mathrm{g}}\vdash_{d}\yd{\lambda}_{\mathrm{g}}\). We further define the gWTs
    \begin{equation}
    \begin{aligned}
        \label{equ:proof_existence_lowest_highest_weights_definition_v}
        &\wt{v_{0}}_{\mathrm{g}} \coloneqq \wt{w}_{\mathrm{g}} \, , \\
        &\wt{v_{k+1}}_{\mathrm{g}}(i,j) \coloneqq
        \mleft\{\begin{aligned}
            &\wt{v_{k}}_{\mathrm{g}}(i,j), && (i,j)\in \yd{\mu_{k+1}}_{\mathrm{g}}, \\
            &k + 1, && (i,j)\in \yd{\tau_{k+1}}_{\mathrm{g}}, \\
            &\wt{w}_{\mathrm{g}}(i,j), && (i,j)\in \yd{\nu_{k+1}}_{\mathrm{g}}.
        \end{aligned}\mright.
    \end{aligned}
    \end{equation}
    We prove by induction that \(\wt{v_{k}}_{\mathrm{g}}\vdash_{d}\yd{\lambda}_{\mathrm{g}}\) and that
    \begin{equation}
    \begin{aligned}
        \label{equ:proof_existence_lowest_highest_weights_inequality_chain}
        &\wt{v_{k}}_{\mathrm{g}}(i,j)
        \mleft\{\begin{aligned}
            &\leq k - 1, && (i,j) \in \yd{\mu_{k}}_{\mathrm{g}}, \\
            &= k, && (i,j) \in \yd{\tau_{k}}_{\mathrm{g}}, \\
            &\geq k + 1, && (i,j) \in \yd{\nu_{k}}_{\mathrm{g}}.
        \end{aligned}\mright.
    \end{aligned}
    \end{equation}
    These conditions follow by construction for \(k=0\). Assume that this holds for \(k-1\). We have
    \begin{equation}
        \wt{v_{k-1}}_{\mathrm{g}}(i,j) \leq k-1 \quad \text{for} \quad (i,j) \in \yd{\mu_{k-1}}_{\mathrm{g}} \cup \yd{\tau_{k-1}}_{\mathrm{g}} = \yd{\mu_{k}}_{\mathrm{g}} \, .
    \end{equation}
    Therefore, the first line of~\cref{equ:proof_existence_lowest_highest_weights_inequality_chain} immediately follows, and the second line follows directly from the definition. By induction hypothesis, we also know that
    \begin{equation}
        \wt{w}_{\mathrm{g}}(i,j) = \wt{v_{k-1}}_{\mathrm{g}}(i,j) \geq k \quad \text{for} \quad (i,j) \in \yd{\nu_{k-1}}_{\mathrm{g}} \, .
    \end{equation}
    Now it follows that equality can only hold for
    \begin{equation}
        (i,j) \in \yd{\tau_{k}}_{\mathrm{g}} = \mathrm{Top}\mleft(\yd{\nu_{k-1}}_{\mathrm{g}}\mright) \, .
    \end{equation}
    All other boxes have either a box above them, or are to the right of a box with another one on top of it. The condition that gWTs are weakly increasing to the right and strongly increasing downwards now requires their values to be \(\geq k+1\).

    It remains to show that \(\wt{v_{k}}_{\mathrm{g}}\vdash_{d}\yd{\lambda}_{\mathrm{g}}\), which means that the row-wise weakly increasing and column-wise strongly increasing conditions are fulfilled. By~\cref{equ:proof_existence_lowest_highest_weights_definition_v} and the definition of \(\mathrm{Top}\mleft(.\mright)\) we know that this is true on the sets
    \begin{equation}
        \yd{\mu_{k}}_{\mathrm{g}} \, , \quad \yd{\tau_{k}}_{\mathrm{g}} \, , \quad \yd{\nu_{k}}_{\mathrm{g}}
    \end{equation}
    individually. We prove that the conditions also hold between these sets.\newline

    \noindent\(\yd{\mu_{k}}_{\mathrm{g}} \leftrightarrow \yd{\nu_{k}}_{\mathrm{g}}:\) We have
    \begin{equation}
        (i,j) \in \yd{\mu_{k}}_{\mathrm{g}} = \yd{\mu_{k-1}}_{\mathrm{g}} \cup \yd{\tau_{k-1}}_{\mathrm{g}} \, , \quad (i^\ast,j^\ast) \in \yd{\nu_{k}}_{\mathrm{g}} \subseteq \yd{\nu_{k-1}}_{\mathrm{g}} \, ,
    \end{equation}
    and by induction hypothesis the conditions between
    \begin{equation}
        \wt{v_{k}}_{\mathrm{g}}(i,j) = \wt{v_{k-1}}_{\mathrm{g}}(i,j) \quad \text{and} \quad \wt{v_{k}}_{\mathrm{g}}(i^\ast,j^\ast) = \wt{v_{k-1}}_{\mathrm{g}}(i^\ast,j^\ast)
    \end{equation}
    are fulfilled.\newline

    \noindent\(\yd{\mu_{k}}_{\mathrm{g}} \leftrightarrow \yd{\tau_{k}}_{\mathrm{g}}:\) We argue that by the induction hypothesiss that for
    \begin{equation}
        (i,j)\in\yd{\mu_{k}}_{\mathrm{g}} = \yd{\mu_{k-1}}_{\mathrm{g}} \cup \yd{\tau_{k-1}}_{\mathrm{g}} \, , \quad (i^\ast,j^\ast)\in\yd{\tau_{k}}_{\mathrm{g}}\subseteq \yd{\nu_{k-1}}_{\mathrm{g}} \, ,
    \end{equation}
    we have
    \begin{equation}
        \wt{v_{k-1}}_{\mathrm{g}}(i,j) \leq k-1 < k \leq \wt{v_{k-1}}_{\mathrm{g}}(i^\ast,j^\ast) \, .
    \end{equation}
    Therefore \((i^\ast,j^\ast)\) cannot be to the left or above \((i,j)\). We additionally know from the previous calculations that
    \begin{equation}
        \wt{v_{k}}_{\mathrm{g}}(i,j) \leq k-1 < k = \wt{v_{k}}_{\mathrm{g}}(i^\ast,j^\ast) \, ,
    \end{equation}
    and so the conditions are fulfilled.\newline

    \noindent\(\yd{\tau_{k}}_{\mathrm{g}} \leftrightarrow \yd{\mu_{k}}_{\mathrm{g}}:\) By the definition of \(\mathrm{Top}\mleft(.\mright)\), we know that for
    \begin{equation}
        (i,j) \in \yd{\tau_{k}}_{\mathrm{g}} = \mathrm{Top}\mleft(\yd{\nu_{k-1}}_{\mathrm{g}}\mright) \, , \quad (i^\ast,j^\ast) \in \yd{\nu_{k}}_{\mathrm{g}} = \yd{\nu_{k-1}}_{\mathrm{g}} \setminus{\mathrm{Top}\mleft(\yd{\nu_{k-1}}_{\mathrm{g}}\mright)} \, ,
    \end{equation}
    the box \((i,j)\) cannot be to the right or below \((i^\ast,j^\ast)\). In addition, we know that
    \begin{equation}
        \wt{v_{k}}_{\mathrm{g}}(i,j) = k < k+1 \leq \wt{v_{k}}_{\mathrm{g}}(i^\ast,j^\ast) \, ,
    \end{equation}
    which again means that the conditions are fulfilled.\newline

    This concludes the induction. It follows that \(\yd{\nu_{d}}_{\mathrm{g}}=\varnothing\), since we know that \(\wt{w}_{\mathrm{g}}\vdash_{d}\yd{\lambda}_{\mathrm{g}}\), but also from~\cref{equ:proof_existence_lowest_highest_weights_definition_v} and~\cref{equ:proof_existence_lowest_highest_weights_inequality_chain} that
    \begin{equation}
        d \geq \wt{w}_{\mathrm{g}}(i,j) = \wt{v_{d}}_{\mathrm{g}}(i,j) \geq d + 1 \quad \text{for} \quad (i,j) \in \yd{\nu_{d}}_{\mathrm{g}} \, .
    \end{equation}
    Thus, \(\bigcup_{k=1}^{d}\yd{\tau_{k}}_{\mathrm{g}}=\yd{\lambda}_{\mathrm{g}}\). Since we have \(\wt{v_{d}}_{\mathrm{g}}(i,j)=k\) for \((i,j)\in\yd{\tau_{k}}_{\mathrm{g}}\), it follows that \(\wt{v_{d}}_{\mathrm{g}}\) is independent of the specific choice of \(\wt{w}_{\mathrm{g}}\vdash_{d}\yd{\lambda}_{\mathrm{g}}\), and therefore unique. In addition, from the definition in~\cref{equ:proof_existence_lowest_highest_weights_definition_v} and~\cref{equ:proof_existence_lowest_highest_weights_inequality_chain}, we have that
    \begin{equation}
        \wt{w}_{\mathrm{g}}(i,j) = \wt{v_{0}}_{\mathrm{g}}(i,j) \geq \wt{v_{1}}_{\mathrm{g}}(i,j) \geq \ldots \geq \wt{v_{d}}_{\mathrm{g}}(i,j) \quad .
    \end{equation}
    We conclude the proof by setting \(\wt{\lw}_{\mathrm{g}} \coloneqq \wt{v_{d}}_{\mathrm{g}}\).
\end{proof}

We quickly state the following ordering induced by the lowest and highest weight.

\begin{lemma}
    \label{lem:monotonicity_constraints_lowest_highest_weight}
    Let \(\yd{\nu}_{\mathrm{g}} \subseteq \yd{\lambda}_{\mathrm{g}}\), and let \(\yd{\mu}_{\mathrm{g}}= \yd{\lambda}_{\mathrm{g}} \setminus \yd{\nu}_{\mathrm{g}}\). Then we have for any \(\wt{w}_{\mathrm{g}}\vdash_{d} \yd{\lambda}_{\mathrm{g}}\) that entrywise
    \begin{equation}
            \boldsymbol{x}\mleft(\wt{\lw|_{\yd{\mu}_{\mathrm{g}}}}_{\mathrm{g}}\mright) \leq \boldsymbol{x}\mleft(\wt{w|_{\yd{\mu}_{\mathrm{g}}}}_{\mathrm{g}}\mright) \leq \boldsymbol{x}\mleft(\wt{\hw|_{\yd{\mu}_{\mathrm{g}}}}_{\mathrm{g}}\mright)\, .
        \end{equation}
\end{lemma}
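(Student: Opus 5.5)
The plan is to reduce the statement to the defining extremality of \(\wt{\lw}_{\mathrm{g}}\) and \(\wt{\hw}_{\mathrm{g}}\) in~\cref{equ:definition_lowest_highest_weight_gWTs}, which exist by~\cref{lem:existence_lowest_highest_weights} as soon as some \(\wt{w}_{\mathrm{g}}\vdash_{d}\yd{\lambda}_{\mathrm{g}}\) exists. Restricting to \(\yd{\mu}_{\mathrm{g}}\) then gives, entrywise on \(\yd{\mu}_{\mathrm{g}}\),
\begin{equation}
\wt{\lw|_{\yd{\mu}_{\mathrm{g}}}}_{\mathrm{g}}(i,j) \leq \wt{w|_{\yd{\mu}_{\mathrm{g}}}}_{\mathrm{g}}(i,j) \leq \wt{\hw|_{\yd{\mu}_{\mathrm{g}}}}_{\mathrm{g}}(i,j) \, .
\end{equation}
Everything therefore comes down to showing that the map \(\wt{u}_{\mathrm{g}}\mapsto\boldsymbol{x}\mleft(\wt{u}_{\mathrm{g}}\mright)\) of~\cref{equ:definition_corresponding_restrictions} is monotone: if \(\wt{u}_{\mathrm{g}}\leq\wt{u'}_{\mathrm{g}}\) pointwise on \(\yd{\mu}_{\mathrm{g}}\), then both the lower and the upper components of the constraint on \(\yd{\nu}_{\mathrm{g}}\) satisfy \(x_{l}\leq x'_{l}\) and \(x_{u}\leq x'_{u}\).

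To show monotonicity, fix a box \((i,j)\in\yd{\nu}_{\mathrm{g}}\) and go through the cases of~\cref{equ:definition_corresponding_restrictions}. Which case applies depends only on which neighbouring boxes lie in \(\yd{\mu}_{\mathrm{g}}\), not on the filling, so \(\boldsymbol{x}\mleft(\wt{u}_{\mathrm{g}}\mright)(i,j)\) and \(\boldsymbol{x}\mleft(\wt{u'}_{\mathrm{g}}\mright)(i,j)\) always fall in the same case. In every case the lower bound is one of three things: a constant \(1\), a neighbouring entry \(\wt{u}_{\mathrm{g}}(i,j-1)\) or \(\wt{u}_{\mathrm{g}}(i-1,j)+1\), or a maximum of two such terms. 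Each of these is non-decreasing in the entries of \(\wt{u}_{\mathrm{g}}\), because translation by a constant and \(\max\) both preserve order. Likewise, the upper bound is \(d\), a term \(\wt{u}_{\mathrm{g}}(i,j+1)\) or \(\wt{u}_{\mathrm{g}}(i+1,j)-1\), or a \(\min\) of two such terms, and these are also non-decreasing. This proves monotonicity, and applying it to the two entrywise inequalities above yields the claim.

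No real obstacle is expected. The only point needing care is an implicit hypothesis: the restrictions \(\wt{\lw|_{\yd{\mu}_{\mathrm{g}}}}_{\mathrm{g}}\) and \(\wt{\hw|_{\yd{\mu}_{\mathrm{g}}}}_{\mathrm{g}}\) only make sense if \(\wt{\lw}_{\mathrm{g}}\) and \(\wt{\hw}_{\mathrm{g}}\) exist, and this follows from~\cref{lem:existence_lowest_highest_weights} because the given \(\wt{w}_{\mathrm{g}}\) is itself a gWT of \(\yd{\lambda}_{\mathrm{g}}\). It is also worth noting that the resulting constraints may have \(x_{l}>x_{u}\) at some boxes. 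This does not affect the statement, which is only an entrywise comparison of the maps \(\boldsymbol{x}\).
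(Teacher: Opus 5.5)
Your proposal is correct and is essentially the paper's argument. The paper simply asserts that the lemma follows from the extremality of the lowest and highest weight gWTs together with the definition of the corresponding restriction. You have made explicit the one hidden step: every case of that definition is built from constants, unit shifts, $\max$ and $\min$ of neighbouring entries, so the constraint map is entrywise monotone in the filling.
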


\begin{proof}
    This follows directly from the definition of the lowest / highest weight gWTs in~\cref{equ:definition_lowest_highest_weight_gWTs} and the corresponding restrictions in~\cref{equ:definition_corresponding_restrictions}.
\end{proof}

\begin{theorem}[Splitting averages for gYDs with constraints]
    \label{thm:restriction_averages_generalized_YD_monotone_functions}
    Let \(\yd{\nu}_{\mathrm{g}} \subseteq \yd{\lambda}_{\mathrm{g}}\), and let \(\yd{\mu}_{\mathrm{g}}= \yd{\lambda}_{\mathrm{g}} \setminus \yd{\nu}_{\mathrm{g}}\). Let further the functions
    \begin{equation}
        F:\mathcal{W}_{\yd{\lambda}_{\mathrm{g}}}^{d}\rightarrow \mathbb{R}_{\geq 0} \, , \quad F|_{\yd{\nu}_{\mathrm{g}}}:\mathcal{W}_{\yd{\nu}_{\mathrm{g}}}^{d}\rightarrow \mathbb{R}_{\geq 0} \, , \quad F|_{\yd{\mu}_{\mathrm{g}}}:\mathcal{W}_{\yd{\mu}_{\mathrm{g}}}^{d}\rightarrow \mathbb{R}_{\geq 0}
    \end{equation}
    be such that
    \begin{equation}
        F\mleft(\wt{w}_{\mathrm{g}}\mright) = F|_{\yd{\nu}_{\mathrm{g}}}\mleft(\wt{w|_{\yd{\nu}_{\mathrm{g}}}}_{\mathrm{g}}\mright) F|_{\yd{\mu}_{\mathrm{g}}}\mleft(\wt{w|_{\yd{\mu}_{\mathrm{g}}}}_{\mathrm{g}}\mright) \, .
    \end{equation}
    Let finally
    \begin{equation}
    \begin{aligned}
        \boldsymbol{x}_{\min} \coloneqq \boldsymbol{x}\mleft(\wt{\lw|_{\yd{\mu}_{\mathrm{g}}}}_{\mathrm{g}}\mright) \, , \quad \boldsymbol{x}_{\max} \coloneqq \boldsymbol{x}\mleft(\wt{\hw|_{\yd{\mu}_{\mathrm{g}}}}_{\mathrm{g}}\mright) \, .
    \end{aligned}
    \end{equation}
    If \(F|_{\yd{\nu}_{\mathrm{g}}}\) is non-decreasing in every entry, then we have
    \begin{equation}
    \begin{aligned}
        \weylavg{F\mleft(\wt{w}_{\mathrm{g}}\mright)}{\boldsymbol{p}}{\yd{\lambda}_{\mathrm{g}}} \geq \weylavg{F|_{\yd{\nu}_{\mathrm{g}}}\mleft(\wt{w'}_{\mathrm{g}}\mright)}{\boldsymbol{p}}{\yd{\nu}_{\mathrm{g}}|\boldsymbol{x}_{\min}} \weylavg{F|_{\yd{\mu}_{\mathrm{g}}}\mleft(\wt{w|_{\yd{\mu}_{\mathrm{g}}}}_{\mathrm{g}}\mright)}{\boldsymbol{p}}{\yd{\lambda}_{\mathrm{g}}} \, , \\
        \weylavg{F\mleft(\wt{w}_{\mathrm{g}}\mright)}{\boldsymbol{p}}{\yd{\lambda}_{\mathrm{g}}} \leq \weylavg{F|_{\yd{\nu}_{\mathrm{g}}}\mleft(\wt{w'}_{\mathrm{g}}\mright)}{\boldsymbol{p}}{\yd{\nu}_{\mathrm{g}}|\boldsymbol{x}_{\max}} \weylavg{F|_{\yd{\mu}_{\mathrm{g}}}\mleft(\wt{w|_{\yd{\mu}_{\mathrm{g}}}}_{\mathrm{g}}\mright)}{\boldsymbol{p}}{\yd{\lambda}_{\mathrm{g}}} \, .
    \end{aligned}
    \end{equation}
    On the other hand, if \(F|_{\yd{\nu}_{\mathrm{g}}}\) is non-increasing in every entry, then the inequality signs are reversed.
\end{theorem}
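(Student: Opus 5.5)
We decompose the weighted sum according to the filling on \(\yd{\mu}_{\mathrm{g}}\). Every \(\wt{w}_{\mathrm{g}}\vdash_{d}\yd{\lambda}_{\mathrm{g}}\) is uniquely a union \(\wt{w'}_{\mathrm{g}}\cup\wt{w''}_{\mathrm{g}}\) with \(\wt{w''}_{\mathrm{g}}=\wt{w|_{\yd{\mu}_{\mathrm{g}}}}_{\mathrm{g}}\) and \(\wt{w'}_{\mathrm{g}}\vdash_{d}\yd{\nu}_{\mathrm{g}}|\boldsymbol{x}\mleft(\wt{w''}_{\mathrm{g}}\mright)\), by the remark following the definition of corresponding restrictions. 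Since the monomial weight factorizes over the two parts, we obtain
\begin{equation*}
\schur[F]{\yd{\lambda}_{\mathrm{g}}}[\boldsymbol{p}]=\sum_{\wt{w''}_{\mathrm{g}}}\boldsymbol{p}^{\#\mleft(\wt{w''}_{\mathrm{g}}\mright)}F|_{\yd{\mu}_{\mathrm{g}}}\mleft(\wt{w''}_{\mathrm{g}}\mright)\,\schur{\yd{\nu}_{\mathrm{g}}|\boldsymbol{x}\mleft(\wt{w''}_{\mathrm{g}}\mright)}[\boldsymbol{p}]\,\weylavg{F|_{\yd{\nu}_{\mathrm{g}}}}{\boldsymbol{p}}{\yd{\nu}_{\mathrm{g}}|\boldsymbol{x}\mleft(\wt{w''}_{\mathrm{g}}\mright)},
\end{equation*}
where the sum runs over those \(\wt{w''}_{\mathrm{g}}\) that occur as restrictions of some gWT of \(\yd{\lambda}_{\mathrm{g}}\). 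Terms with vanishing inner Schur polynomial contribute zero, consistent with the convention that the average is \(0\) there. This is the same box-fixing decomposition used in the proof of \cref{thm:monotonicity_constrained_generalized_averages}, now fixing a whole subdiagram instead of a single box.

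Next we bound the inner average uniformly in \(\wt{w''}_{\mathrm{g}}\). Since \(\wt{w''}_{\mathrm{g}}\) is the restriction of a valid gWT of \(\yd{\lambda}_{\mathrm{g}}\), \cref{lem:monotonicity_constraints_lowest_highest_weight} gives \(\boldsymbol{x}_{\min}\leq\boldsymbol{x}\mleft(\wt{w''}_{\mathrm{g}}\mright)\leq\boldsymbol{x}_{\max}\) entrywise. If \(F|_{\yd{\nu}_{\mathrm{g}}}\) is non-decreasing, \cref{thm:monotonicity_constrained_generalized_averages} then yields
\begin{equation*}
\weylavg{F|_{\yd{\nu}_{\mathrm{g}}}}{\boldsymbol{p}}{\yd{\nu}_{\mathrm{g}}|\boldsymbol{x}_{\min}}\leq\weylavg{F|_{\yd{\nu}_{\mathrm{g}}}}{\boldsymbol{p}}{\yd{\nu}_{\mathrm{g}}|\boldsymbol{x}(\wt{w''}_{\mathrm{g}})}\leq\weylavg{F|_{\yd{\nu}_{\mathrm{g}}}}{\boldsymbol{p}}{\yd{\nu}_{\mathrm{g}}|\boldsymbol{x}_{\max}}.
\end{equation*}
The theorem requires both Schur polynomials to be positive. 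The extremal constraints are realized by \(\wt{\lw}_{\mathrm{g}}\) and \(\wt{\hw}_{\mathrm{g}}\), which exist by \cref{lem:existence_lowest_highest_weights}. Their restrictions to \(\yd{\nu}_{\mathrm{g}}\) are admissible fillings for \(\boldsymbol{x}_{\min}\) and \(\boldsymbol{x}_{\max}\) respectively, so these constrained sums are nonempty. For positivity of the weight we assume \(\boldsymbol{p}>0\), or else argue by continuity.

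Substituting this bound back into the sum, using \(F|_{\yd{\mu}_{\mathrm{g}}}\geq0\), and dividing by \(\schur{\yd{\lambda}_{\mathrm{g}}}[\boldsymbol{p}]\), the remaining sum reassembles into \(\weylavg{F|_{\yd{\mu}_{\mathrm{g}}}(\wt{w|_{\yd{\mu}_{\mathrm{g}}}}_{\mathrm{g}})}{\boldsymbol{p}}{\yd{\lambda}_{\mathrm{g}}}\). Both inequalities follow, and in the non-increasing case the roles of \(\boldsymbol{x}_{\min}\) and \(\boldsymbol{x}_{\max}\) are swapped.

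We expect the main obstacle to be the zero cases in \cref{thm:monotonicity_constrained_generalized_averages}. That theorem requires both constrained Schur polynomials to be positive, while some intermediate \(\boldsymbol{x}(\wt{w''}_{\mathrm{g}})\) may have zero weight for some \(\boldsymbol{p}\) (for example when some \(p_i=0\)). We would handle this by dropping such terms, which contribute zero to both sides, and treating boundary \(\boldsymbol{p}\) by a limit from \(\boldsymbol{p}>0\). A further point to check is the asymmetry in notation: the statement places the average of \(F|_{\yd{\nu}_{\mathrm{g}}}\) under the constraint \(\boldsymbol{x}(\cdot)\) induced from \(\yd{\mu}_{\mathrm{g}}\). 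We would verify that this matches the decomposition above, in which the \(\yd{\mu}_{\mathrm{g}}\)-filling is summed on the outside.
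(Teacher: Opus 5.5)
Your proposal is correct and follows the paper's proof step for step: you decompose $\schur[F]{\yd{\lambda}_{\mathrm{g}}}[\boldsymbol{p}]$ by the filling on $\yd{\mu}_{\mathrm{g}}$, bound each inner constrained average of $F|_{\yd{\nu}_{\mathrm{g}}}$ above and below by its $\boldsymbol{x}_{\min}$ and $\boldsymbol{x}_{\max}$ versions via \cref{lem:monotonicity_constraints_lowest_highest_weight} and \cref{thm:monotonicity_constrained_generalized_averages}, and reassemble using $F|_{\yd{\mu}_{\mathrm{g}}}\geq 0$; your check that the $\boldsymbol{x}_{\min}$- and $\boldsymbol{x}_{\max}$-constrained sums are nonempty, via the restrictions of the lowest and highest weight gWTs, fills in a step the paper leaves implicit. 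One caveat: drop the continuity fallback for boundary $\boldsymbol{p}$. With the convention that an average is $0$ when its Schur polynomial vanishes, the upper bound can genuinely fail there. For example, take one row of two boxes with $\yd{\nu}_{\mathrm{g}}$ the right box, $d=2$, $\boldsymbol{p}=(1,0)$, $F\equiv 1$: $\boldsymbol{x}_{\max}$ forces that box to be $2$, so the right-hand side is $0$ while the left-hand side is $1$. Hence $\boldsymbol{p}>0$ should simply be a standing assumption, as it implicitly is in the paper.
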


\begin{remark}
    This theorem means that we can split averages over monotone functions on disjoint smaller parts of a gYD into decoupled averages.
\end{remark}

\begin{proof}
    We have
    \begin{equation}
    \begin{aligned}
        \label{equ:proof_restriction_averages_generalized_YD_monotone_functions_decomposition_Schur_polynomial}
        \schur[F]{\yd{\lambda}_{\mathrm{g}}}[\boldsymbol{p}] &= \sum_{\wt{w''}_{\mathrm{g}}\vdash_{d}\yd{\mu}_{\mathrm{g}}} F|_{\yd{\mu}_{\mathrm{g}}}\mleft(\wt{w''}_{\mathrm{g}}\mright) \boldsymbol{p}^{\#\mleft(\wt{w''}_{\mathrm{g}}\mright)} \sum_{\wt{w'}_{\mathrm{g}}\vdash_{d}\yd{\nu}_{\mathrm{g}}|\boldsymbol{x}\mleft(\wt{w''}\mright)} F|_{\yd{\nu}_{\mathrm{g}}}\mleft(\wt{w'}_{\mathrm{g}}\mright) \boldsymbol{p}^{\#\mleft(\wt{w'}_{\mathrm{g}}\mright)} \\
        &= \sum_{\wt{w''}_{\mathrm{g}}\vdash_{d}\yd{\mu}_{\mathrm{g}}} F|_{\yd{\mu}_{\mathrm{g}}}\mleft(\wt{w''}_{\mathrm{g}}\mright) \boldsymbol{p}^{\#\mleft(\wt{w''}_{\mathrm{g}}\mright)} \schur[F|_{\yd{\nu}_{\mathrm{g}}}]{\yd{\nu}_{\mathrm{g}}|\boldsymbol{x}\mleft(\wt{w''}\mright)}[\boldsymbol{p}] \, .
    \end{aligned}
    \end{equation}
    Here we used the fact that we can decompose every \(\wt{w}_{\mathrm{g}}\vdash_{d} \yd{\lambda}_{\mathrm{g}}\) as
    \begin{equation}
        \wt{w}_{\mathrm{g}} = \wt{w|_{\yd{\nu}_{\mathrm{g}}}}_{\mathrm{g}} \cup \wt{w|_{\yd{\mu}_{\mathrm{g}}}}_{\mathrm{g}} \, ,
    \end{equation}
    and instead of summing over all \(\wt{w}_{\mathrm{g}}\) we first sum over the \(\wt{w|_{\yd{\nu}_{\mathrm{g}}}}_{\mathrm{g}}\) and then over all compatible \(\wt{w|_{\yd{\mu}_{\mathrm{g}}}}_{\mathrm{g}}\). In the second line, we used the definition of the generalized Schur polynomials with constraints. Recall that
    \begin{equation}
        \label{equ:proof_restriction_averages_generalized_YD_monotone_functions_average_ratio}
        \schur{\yd{\nu}_{\mathrm{g}}|\boldsymbol{x}\mleft(\wt{w''}\mright)}[\boldsymbol{p}] \weylavg{F|_{\yd{\nu}_{\mathrm{g}}}\mleft(\wt{w'}_{\mathrm{g}}\mright)}{\boldsymbol{p}}{\yd{\nu}_{\mathrm{g}}|\boldsymbol{x}\mleft(\wt{w''}\mright)} = \schur[F|_{\yd{\nu}_{\mathrm{g}}}]{\yd{\nu}_{\mathrm{g}}|\boldsymbol{x}\mleft(\wt{w''}\mright)}[\boldsymbol{p}] \, ,
    \end{equation}
    so we can multiply by the Schur polynomial and rewrite~\cref{equ:proof_restriction_averages_generalized_YD_monotone_functions_decomposition_Schur_polynomial} as
    \begin{equation}
    \begin{aligned}
        \schur[F]{\yd{\lambda}_{\mathrm{g}}}[\boldsymbol{p}] &= \sum_{\wt{w''}_{\mathrm{g}}\vdash_{d}\yd{\mu}_{\mathrm{g}}} F|_{\yd{\mu}_{\mathrm{g}}}\mleft(\wt{w''}_{\mathrm{g}}\mright) \boldsymbol{p}^{\#\mleft(\wt{w''}_{\mathrm{g}}\mright)} \schur{\yd{\nu}_{\mathrm{g}}|\boldsymbol{x}\mleft(\wt{w''}\mright)}[\boldsymbol{p}] \weylavg{F|_{\yd{\nu}_{\mathrm{g}}}\mleft(\wt{w'}_{\mathrm{g}}\mright)}{\boldsymbol{p}}{\yd{\nu}_{\mathrm{g}}|\boldsymbol{x}\mleft(\wt{w''}\mright)} \, .
        \end{aligned}
        \end{equation}
        For the case that
        \begin{equation}
            \schur{\yd{\nu}_{\mathrm{g}}|\boldsymbol{x}\mleft(\wt{w''}\mright)}[\boldsymbol{p}] > 0 \, ,
        \end{equation}
        we can find some \(\wt{w'}_{\mathrm{g}}\vdash_{d} \yd{\nu}_{\mathrm{g}}\) so that
        \begin{equation}
            \wt{w'}_{\mathrm{g}} \cup \wt{w''}_{\mathrm{g}} = \wt{w}_{\mathrm{g}} \vdash_{d} \yd{\lambda}_{\mathrm{g}} \, .
        \end{equation}
        Because of~\cref{lem:monotonicity_constraints_lowest_highest_weight} we have
        \begin{equation}
            \label{equ:proof_restriction_averages_generalized_YD_monotone_functions_lower_bound_x_min}
            \boldsymbol{x}_{\min} \leq \boldsymbol{x}\mleft(\wt{w|_{\yd{\mu}_{\mathrm{g}}}}_{\mathrm{g}}\mright) \leq \boldsymbol{x}_{\max} \, , \quad \boldsymbol{x}\mleft(\wt{w|_{\yd{\mu}_{\mathrm{g}}}}_{\mathrm{g}}\mright) = \boldsymbol{x}\mleft(\wt{w''}_{\mathrm{g}}\mright) \, .
        \end{equation}
        If we assume that \(F|_{\yd{\nu}_{\mathrm{g}}}\) is non-decreasing, we can use~\cref{thm:monotonicity_constrained_generalized_averages} to find that
        \begin{equation}
\label{equ:proof_restriction_averages_generalized_YD_monotone_functions_monotone_average}
            \schur{\yd{\nu}_{\mathrm{g}}|\boldsymbol{x}\mleft(\wt{w''}\mright)}[\boldsymbol{p}] \weylavg{F|_{\yd{\nu}_{\mathrm{g}}}\mleft(\wt{w'}_{\mathrm{g}}\mright)}{\boldsymbol{p}}{\yd{\nu}_{\mathrm{g}}|\boldsymbol{x}\mleft(\wt{w''}\mright)} \leq \schur{\yd{\nu}_{\mathrm{g}}|\boldsymbol{x}\mleft(\wt{w''}\mright)}[\boldsymbol{p}] \weylavg{F|_{\yd{\nu}_{\mathrm{g}}}\mleft(\wt{w'}_{\mathrm{g}}\mright)}{\boldsymbol{p}}{\yd{\nu}_{\mathrm{g}}|\boldsymbol{x}_{\min}} \, .
        \end{equation}
        Using the reverse decomposition of \(\wt{w}_{\mathrm{g}}\vdash_{d} \yd{\lambda}_{\mathrm{g}}\) as in~\cref{equ:proof_restriction_averages_generalized_YD_monotone_functions_decomposition_Schur_polynomial} together with a similar equality as in~\cref{equ:proof_restriction_averages_generalized_YD_monotone_functions_average_ratio}, we finally obtain
        \begin{equation}
        \begin{aligned}
            \schur[F]{\yd{\lambda}_{\mathrm{g}}}[\boldsymbol{p}] &\geq \weylavg{F|_{\yd{\nu}_{\mathrm{g}}}\mleft(\wt{w'}_{\mathrm{g}}\mright)}{\boldsymbol{p}}{\yd{\nu}_{\mathrm{g}}|\boldsymbol{x}_{\min}} \sum_{\wt{w''}_{\mathrm{g}}\vdash_{d}\yd{\mu}_{\mathrm{g}}} F|_{\yd{\mu}_{\mathrm{g}}}\mleft(\wt{w''}_{\mathrm{g}}\mright) \boldsymbol{p}^{\#\mleft(\wt{w''}_{\mathrm{g}}\mright)} s_{\yd{\nu}_{\mathrm{g}}|\boldsymbol{x}\mleft(\wt{w''}\mright)}(\boldsymbol{p}) \\
            &= \weylavg{F|_{\yd{\nu}_{\mathrm{g}}}\mleft(\wt{w'}_{\mathrm{g}}\mright)}{\boldsymbol{p}}{\yd{\nu}_{\mathrm{g}}|\boldsymbol{x}_{\min}} \sum_{\wt{w}_{\mathrm{g}}\vdash_{d}\yd{\lambda}_{\mathrm{g}}} F|_{\yd{\mu}_{\mathrm{g}}}\mleft(\wt{w|_{\yd{\mu}_{\mathrm{g}}}}_{\mathrm{g}}\mright) \boldsymbol{p}^{\#\mleft(\wt{w}_{\mathrm{g}}\mright)} \\
            &= \weylavg{F|_{\yd{\nu}_{\mathrm{g}}}\mleft(\wt{w'}_{\mathrm{g}}\mright)}{\boldsymbol{p}}{\yd{\nu}_{\mathrm{g}}|\boldsymbol{x}_{\min}} \weylavg{F|_{\yd{\mu}_{\mathrm{g}}}\mleft(\wt{w|_{\yd{\mu}_{\mathrm{g}}}}_{\mathrm{g}}\mright)}{\boldsymbol{p}}{\yd{\lambda}_{\mathrm{g}}} \schur{\yd{\lambda}_{\mathrm{g}}}[\boldsymbol{p}] \, .
    \end{aligned}
    \end{equation}
    It is important that we use the fact that \(F|_{\yd{\mu}_{\mathrm{g}}}\mleft(\wt{w''}_{\mathrm{g}}\mright) \geq 0\) in the first line to obtain the right inequality sign. Dividing both sides of this inequality chain by \(\schur{\yd{\lambda}_{\mathrm{g}}}[\boldsymbol{p}]\) gives the first inequality. The second inequality follows in the same way, except that \(\boldsymbol{x}_{\max}\) gives an entrywise upper bound in~\cref{equ:proof_restriction_averages_generalized_YD_monotone_functions_lower_bound_x_min}, and so the inequality sign of~\cref{equ:proof_restriction_averages_generalized_YD_monotone_functions_monotone_average} is reversed. For the case where \(F|_{\yd{\nu}_{\mathrm{g}}}\) is non-increasing, we substitute in the proof above with \(-F|_{\yd{\nu}_{\mathrm{g}}}\), and we note that everything works as well for negative \(F|_{\yd{\nu}_{\mathrm{g}}} \leq 0\).
\end{proof}

For the specific case of YDs and functions that only depend on the \(\#_{d,i}\mleft(\wt{w}\mright)\), we get the following corollary.

\begin{corollary}[Splitting averages for YDs]
    \label{cor:splitting_diagram_average_weight}
    Let $\yd{\varsigma}\dot{=}[\varsigma_{1},\ldots,\varsigma_{d}]$, let \(1\leq k \leq d\) and let
    \begin{equation}
        D|_{1}^{k-1}:\mathbb{Z}_{\geq0}^{k-1} \rightarrow \mathbb{R}_{\geq 0} \, , \quad I|_{k}^{d}:\mathbb{Z}_{\geq0}^{d-k+1} \rightarrow \mathbb{R}_{\geq 0}
    \end{equation}
    be two functions so that \(D|_{1}^{k-1}\) is non-increasing and \(I|_{k}^{d}\) is non-decreasing in each argument. Then we have
    \begin{equation}
    \begin{aligned}
        &\weylavg{D|_{1}^{k-1}(\#_{d,1},\ldots,\#_{d,k-1}) I|_{k}^{d}(\#_{d,k},\ldots,\#_{d,d})}{\boldsymbol{p}}{\yd{\varsigma}} \\
        \geq \, &\weylavg{D|_{1}^{k-1}(\#_{k,1},\ldots,\#_{k,k-1})}{(\boldsymbol{p}|_{1}^{k-1},p_{d})}{\yd{\Delta|_{1}^{k}}}  \weylavg{I|_{k}^{d}(\#_{d-k+1,1},\ldots,\#_{d-k+1,d-k+1})}{\boldsymbol{p}|_{k}^{d}}{\yd{\varsigma|_{k}^{d}}} \, .
    \end{aligned}
\end{equation}
In turn, let
    \begin{equation}
        I|_{1}^{k-1}:\mathbb{Z}_{\geq0}^{k-1} \rightarrow \mathbb{R}_{\geq 0} \, , \quad D|_{k}^{d}:\mathbb{Z}_{\geq0}^{d-k+1} \rightarrow \mathbb{R}_{\geq 0}
    \end{equation}
    be two functions so that \(I|_{1}^{k-1}\) is non-decreasing and \(D|_{k}^{d}\) is non-increasing in each argument. Then we have
    \begin{equation}
    \begin{aligned}
        &\weylavg{I|_{1}^{k-1}(\#_{d,1},\ldots,\#_{d,k-1}) D|_{k}^{d}(\#_{d,k},\ldots,\#_{d,d})}{\boldsymbol{p}}{\yd{\varsigma}} \\
        \leq \, &\weylavg{I|_{1}^{k-1}(\#_{k,1},\ldots,\#_{k,k-1})}{(\boldsymbol{p}|_{1}^{k-1},p_{d})}{\yd{\Delta|_{1}^{k}}} \weylavg{D|_{k}^{d}(\#_{d-k+1,1},\ldots,\#_{d-k+1,d-k+1})}{\boldsymbol{p}|_{k}^{d}}{\yd{\varsigma|_{k}^{d}}} \, .
    \end{aligned}
\end{equation}
Here \((\boldsymbol{p}|_{1}^{k-1},p_{d})\dot{=}(p_{1},\ldots,p_{k-1},p_{d})\), \(\yd{\Delta|_{1}^{k}}\dot{=}\yd{\Delta_{1,k},\ldots,\Delta_{k,k}}\), \(\boldsymbol{p}|_{k}^{d}\dot{=}(p_{k},\ldots,p_{d})\), and \(\yd{\varsigma|_{k}^{d}}\dot{=}\yd{\varsigma_{k},\ldots,\varsigma_{d}}\).
\end{corollary}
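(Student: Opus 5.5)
The plan is to obtain the corollary as a direct specialisation of \cref{thm:restriction_averages_generalized_YD_monotone_functions}. The gYD is \(\yd{\lambda}_{\mathrm{g}}=\yd{\varsigma}\). We split off the part of the tableau that can contain the letter \(d\) in rows \(1,\ldots,k-1\) from the part carrying rows \(k,\ldots,d\).

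Concretely, take \(\yd{\mu}_{\mathrm{g}}\) to be rows \(k,\ldots,d\) of \(\yd{\varsigma}\), and \(\yd{\nu}_{\mathrm{g}}\) to be rows \(1,\ldots,k-1\). Set \(F|_{\yd{\mu}_{\mathrm{g}}}=I|_{k}^{d}(\#_{d,k},\ldots,\#_{d,d})\) and \(F|_{\yd{\nu}_{\mathrm{g}}}=D|_{1}^{k-1}(\#_{d,1},\ldots,\#_{d,k-1})\). Both are nonnegative and the product form holds.

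Each \(\#_{d,i}\) is a non-decreasing function of the entries. Raising an entry can only create, never destroy, a letter \(d\) (the top letter). Hence \(D|_{1}^{k-1}\) composed with these counts is non-increasing in every entry of the \(\yd{\nu}_{\mathrm{g}}\)-filling. The reversed-sign case of \cref{thm:restriction_averages_generalized_YD_monotone_functions} (the \(\boldsymbol{x}_{\max}\) version, which gives a lower bound for non-increasing \(F|_{\yd{\nu}_{\mathrm{g}}}\)) then yields
\begin{equation*}
\weylavg{F}{\boldsymbol{p}}{\yd{\varsigma}}\geq \weylavg{D|_{1}^{k-1}}{\boldsymbol{p}}{\yd{\nu}_{\mathrm{g}}|\boldsymbol{x}_{\max}}\,\weylavg{I|_{k}^{d}}{\boldsymbol{p}}{\yd{\varsigma}} .
\end{equation*}

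Two identifications remain, and the first is the main obstacle.

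\textbf{First factor.} The highest-weight filling of rows \(k,\ldots,d\) puts in row \(k\) the letters \(d-k+1,\ldots\), ending with \(d\) only beyond column \(\varsigma_{k+1}\) or so. The constraint \(\boldsymbol{x}_{\max}\) therefore only caps the entries of row \(k-1\) that lie above row \(k\). Rows \(1,\ldots,k-1\) under this cap must be shown to contribute, in the variables \(\#_{d,i}\), exactly like a WT with letters \(\{1,\ldots,k-1,d\}\) on the shape \(\yd{\Delta|_1^k}\).

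The intended mechanism is the following. Columns \(1..\varsigma_k\) in rows \(<k\) are forced to avoid \(d\) (since row \(k\) below them exists), so only the overhang boxes, of lengths \(\Delta_{i,k}\), can carry \(d\). The distribution of \(\#_{d,i}\) then depends only on the overhangs and on the effective weights \(p_1,\ldots,p_{k-1},p_d\). I expect this to need care: the non-\(d\) letters \(k,\ldots,d-1\) also appear in rows \(<k\). I would try to handle this by noting the function only sees \(d\)-counts, and then applying \cref{thm:monotonicity_constrained_generalized_averages} once more to compare with the shape \(\yd{\Delta|_1^k}\) over the restricted alphabet, if exact equality fails. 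The relabeling \(d\mapsto k\) gives the stated \(\#_{k,i}\).

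\textbf{Second factor.} For \(\weylavg{I|_{k}^{d}}{\boldsymbol{p}}{\yd{\varsigma}}\), which depends only on the lower rows, apply \cref{thm:restriction_averages_generalized_YD_monotone_functions} again with roles swapped and a trivial function on the upper rows. Alternatively, argue via \cref{thm:monotonicity_constrained_generalized_averages}: the lower rows are lower-bounded by the restriction from the lowest-weight upper rows (rows \(i<k\) filled with \(i\)). That forces rows \(k,\ldots,d\) to use letters \(\{k,\ldots,d\}\), which is exactly the Weyl distribution on \(\yd{\varsigma|_k^d}\) with \(\boldsymbol{p}|_k^d\) after relabeling. Since \(I|_k^d\) is non-decreasing, this is a lower bound, as required.

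The second statement follows verbatim with all monotonicities and inequality signs reversed, using the \(\boldsymbol{x}_{\min}\)/\(\boldsymbol{x}_{\max}\) roles swapped.
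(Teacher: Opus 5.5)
Your overall architecture (two uses of \cref{thm:restriction_averages_generalized_YD_monotone_functions}) is sound, and so is your second-factor argument. The first factor, however, has a genuine gap.

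After your split, the remaining object is $\bigl\langle D|_{1}^{k-1}\bigr\rangle_{\mathrm{W}(\boldsymbol{p},\,\nu|\boldsymbol{x}_{\max})}$, taken over the upper rows with the flagged caps inherited from row $k$. This is not the target average, and the tools you cite cannot turn it into the target.

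The claimed exact identification is false. Besides the $d$'s, the overhang boxes can carry any of the letters $k,\ldots,d-1$. In addition, the variable block in columns $\le\varsigma_k$ couples to the overhang through the row conditions. Already for $k=2$, $d=3$, the law of $t=\#_{3,1}$ under your factor is
\[
P(t)\propto\Bigl(\tfrac{p_3}{p_1}\Bigr)^{t}\sum_{u=0}^{\varsigma_2-\varsigma_3+\Delta_{1,2}-t}\Bigl(\tfrac{p_2}{p_1}\Bigr)^{u},\qquad 0\le t\le\Delta_{1,2},
\]
whereas the target is $\propto(p_3/p_1)^{t}$. Your law depends on $p_2$ and on $\varsigma_2-\varsigma_3$; the target does not.

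The fallback via \cref{thm:monotonicity_constrained_generalized_averages} cannot repair this. That theorem keeps the gYD and $\boldsymbol{p}$ fixed and only moves interval constraints, while $\{1,\ldots,k-1,d\}$ is not an interval. Since $D\circ\#$ is non-increasing, you need $\boldsymbol{x}'\ge\boldsymbol{x}_{\max}$. Confining the overhang to $k$ letters in that direction means raising lower bounds, which lands on $\{d-k+1,\ldots,d\}$ with weights $(p_{d-k+1},\ldots,p_d)$. That gives a valid but different bound, not the one stated.

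The inequality you need may well hold. In the example, the extra factor is decreasing in $t$, so your law is stochastically smaller than the target. But proving this in general is a new stochastic-domination lemma that neither cited theorem supplies.

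The missing idea is a symmetry of the unconstrained distribution $\mathrm{W}(\boldsymbol{p},\yd{\varsigma})$. The joint law of $(\#_{d,1},\ldots,\#_{d,d})$ is symmetric in $p_1,\ldots,p_{d-1}$. To see this, fix the horizontal strip of $d$'s; the rest contributes the Schur polynomial of the remaining straight shape in $p_1,\ldots,p_{d-1}$. This lets you pass to $\widetilde{\boldsymbol{p}}=(p_k,\ldots,p_{d-1},p_1,\ldots,p_{k-1},p_d)$, under which the wanted letters become the top interval $\{d-k+1,\ldots,d\}$.

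Your cap $\boldsymbol{x}_{\max}$ destroys this symmetry, because your factor is a flagged sum. So the split must be done in the opposite order, which is what the paper does:
\begin{enumerate}
\item Take $\nu$ to be rows $k,\ldots,d$, with the non-decreasing $I|_k^d$ and $\boldsymbol{x}_{\min}$. This is exactly your second-factor identification, and it leaves the unconstrained average $\bigl\langle D|_1^{k-1}\bigr\rangle_{\mathrm{W}(\boldsymbol{p},\yd{\varsigma})}$.
\item Pass to $\widetilde{\boldsymbol{p}}$ using the symmetry.
\item Apply \cref{thm:restriction_averages_generalized_YD_monotone_functions} again, with $\nu$ only the overhang boxes (shape $\yd{\Delta|_1^k}$ after translation) and $\mu$ everything else, in the non-increasing case with $\boldsymbol{x}_{\max}$.
\end{enumerate}
In the last step, the highest-weight filling of column $\varsigma_k$ forces the overhang into $\{d-k+1,\ldots,d\}$. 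If $\varsigma_{k+1}=\varsigma_k$ the forced bounds are weaker, but one more monotonicity step goes the right way. Under $\widetilde{\boldsymbol{p}}$, these letters carry exactly the weights $(p_1,\ldots,p_{k-1},p_d)$.

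Your ``verbatim'' treatment of the second inequality inherits the same gap.
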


\begin{proof}
    We start with the first inequality, for which we set
    \begin{equation}
        \yd{\mu} \dot{=} \yd{\varsigma_{1},\ldots,\varsigma_{k-1}} \, , \quad \yd{\nu} \dot{=} \yd{\varsigma_{k},\ldots,\varsigma_{d}}
    \end{equation}
    in the context of~\cref{thm:restriction_averages_generalized_YD_monotone_functions}. This allows us to use the first inequality for non-decreasing functions there to get
    \begin{equation}
    \begin{aligned}
    \label{equ:proof_splitting_diagram_average_weight_nondecreasing_lower_bound}
        &\weylavg{D|_{1}^{k-1}(\#_{d,1},\ldots,\#_{d,k-1}) I|_{k}^{d}(\#_{d,k},\ldots,\#_{d,d})}{\boldsymbol{p}}{\yd{\varsigma}_{\mathrm{g}}} \\
        \geq \, &\weylavg{D|_{1}^{k-1}(\#_{d,1},\ldots,\#_{d,k-1})}{\boldsymbol{p}}{\yd{\varsigma}_{\mathrm{g}}} \weylavg{I|_{k}^{d}(\#_{d,k},\ldots,\#_{d,d})}{\boldsymbol{p}}{\yd{\nu}_{\mathrm{g}}|\boldsymbol{x}_{\min}} \, .
    \end{aligned}
\end{equation}
One checks that \(\boldsymbol{x}_{\min}\) restricts the gWTs precisely to those with entries in \(\{k,\ldots,d\}\), so we find
    \begin{equation}
    \label{equ:proof_splitting_diagram_average_weight_nondecreasing_restriction}
            \weylavg{I|_{k}^{d}(\#_{d,k},\ldots,\#_{d,d})}{\boldsymbol{p}}{\yd{\nu}_{\mathrm{g}}|\boldsymbol{x}_{\min}} = \weylavg{I|_{k}^{d}(\#_{d-k+1,1},\ldots,\#_{d-k+1,d-k+1})}{\boldsymbol{p}|_{k}^{d}}{\yd{\varsigma|_{k}^{d}}} \, .
\end{equation}
To lower bound with the other average
    \begin{equation}
        \mleft\langle D|_{1}^{k-1}(\#_{d,1},\ldots,\#_{d,k-1})\mright\rangle_{\mathrm{W}(\boldsymbol{p},\yd{\varsigma}_{\mathrm{g}})} \, ,
    \end{equation}
    we first define
    \begin{equation}
        \widetilde{\boldsymbol{p}}\dot{=}(p_{k},\ldots,p_{d-1},p_{1},\ldots,p_{k-1},p_{d}) \, .
    \end{equation}
    Since the expression is a symmetric polynomial in \(p_{1},\ldots,p_{d-1}\) (but not in \(p_{d}\)!), we have
    \begin{equation}
            \weylavg{D|_{1}^{k-1}(\#_{d,1},\ldots,\#_{d,k-1})}{\boldsymbol{p}}{\yd{\varsigma}_{\mathrm{g}}} = \weylavg{D|_{1}^{k-1}(\#_{d,1},\ldots,\#_{d,k-1})}{\widetilde{\boldsymbol{p}}}{\yd{\varsigma}_{\mathrm{g}}} \, .
\end{equation}
We seek to use~\cref{thm:restriction_averages_generalized_YD_monotone_functions} again and set
    \begin{equation}
        \yd{\mu} \dot{=} \yd{\varsigma_{k},\ldots,\varsigma_{k},\varsigma_{k+1},\ldots,\varsigma_{d}} \, , \quad \yd{\nu} \dot{=} \yd{\Delta_{1,k},\ldots,\Delta_{k-1,k}} \, .
    \end{equation}
    We can rewrite the function \(D|_{1}^{k-1}(\#_{d,1},\ldots,\#_{d,k-1})\) as
    \begin{equation}
        D|_{1}^{k-1}(\#_{d,1},\ldots,\#_{d,k-1}) = D|_{1}^{k-1}(\#_{d,1},\ldots,\#_{d,k-1}) 1(\#_{d,k},\ldots,\#_{d,d}) \, ,
    \end{equation}
    where \(1(\ldots)\) is the constant \(1\) function. From the case for a non-increasing function in~\cref{thm:restriction_averages_generalized_YD_monotone_functions}, we get the lower bound
    \begin{equation}
    \begin{aligned}
    \label{equ:proof_splitting_diagram_average_weight_nonincreasing_lower_bound}
        &\weylavg{D|_{1}^{k-1}(\#_{d,1},\ldots,\#_{d,k-1}) \, 1(\#_{d,k},\ldots,\#_{d,d})}{\widetilde{\boldsymbol{p}}}{\yd{\varsigma}_{\mathrm{g}}} \\
        \geq \, &\weylavg{D|_{1}^{k-1}(\#_{d,1},\ldots,\#_{d,k-1})}{\widetilde{\boldsymbol{p}}}{\yd{\nu}_{\mathrm{g}}|\boldsymbol{x}_{\max}} \weylavg{1(\#_{d,k},\ldots,\#_{d,d})}{\widetilde{\boldsymbol{p}}}{\yd{\varsigma}_{\mathrm{g}}} \\
        = \, &\weylavg{D|_{1}^{k-1}(\#_{d,1},\ldots,\#_{d,k-1})}{\widetilde{\boldsymbol{p}}}{\yd{\nu}_{\mathrm{g}}|\boldsymbol{x}_{\max}} \, .
    \end{aligned}
\end{equation}
In this case, one checks that \(\boldsymbol{x}_{\max}\) restricts the gWTs precisely to those with entries in \(\{d-k+1,\ldots,d\}\), so we have
    \begin{equation}
\label{equ:proof_splitting_diagram_average_weight_nonincreasing_restriction}
            \weylavg{D|_{1}^{k-1}(\#_{d,1},\ldots,\#_{d,k-1})}{\widetilde{\boldsymbol{p}}}{\yd{\nu}_{\mathrm{g}}|\boldsymbol{x}_{\max}} = \weylavg{D|_{1}^{k-1}(\#_{k,1},\ldots,\#_{k,k-1})}{\widetilde{\boldsymbol{p}}|_{d-k+1}^{d}}{\yd{\Delta|_{1}^{k}}_{\mathrm{g}}} \, .
\end{equation}
We remark here that \(\yd{\nu}\) has \(k-1\) entries, while \(\widetilde{\boldsymbol{p}}|_{d-k}^{d}\) has \(k\) entries. Therefore we pad with a \(0\) and get
    \begin{equation}
        \yd{\Delta|_{1}^{k}} \dot{=} \yd{\Delta_{1,k},\ldots,\Delta_{k-1,k},0} \, ,
    \end{equation}
    which does not change the allowed gWTs. Combining~\cref{equ:proof_splitting_diagram_average_weight_nondecreasing_lower_bound,equ:proof_splitting_diagram_average_weight_nondecreasing_restriction,equ:proof_splitting_diagram_average_weight_nonincreasing_lower_bound,equ:proof_splitting_diagram_average_weight_nonincreasing_restriction} and substituting \(\widetilde{\boldsymbol{p}}|_{d-k}^{d}\) gives the first inequality. For the second inequality, we proceed in the same way, only we use the upper bounds of~\cref{thm:restriction_averages_generalized_YD_monotone_functions}.
\end{proof}

\begin{example}
    Panel (g) of~\cref{fig:gYT} depicts the specialization used in the corollary. The YD is split into the rows above and below the cut at \(k\), so the decreasing factor depends only on the upper-row occupancies and the increasing factor only on the lower-row occupancies.
\end{example}

\subsection{Nonasymptotic analysis of the all-site utility}
\label{subsec:nonasymptotic_allcopy_supp}

\subsubsection{Miscellaneous lemmas}
\label{subsec:misc_lemmas_supp}

\begin{lemma}[Bounds for almost-telescopic products]
\label{lem:bounds_telescopic_products}
    Let \(0= A_{0}<A_{1} \leq \ldots \leq A_{d} \) and let \(0 \leq B_{i} \leq A_{i} - A_{i-1}\) for all \(1\leq i \leq d\). Then we have
    \begin{equation}
    \begin{aligned}
        \frac{A_{1} - \sum_{i=1}^{d}B_{i}}{A_{1}} \leq \frac{A_{1} - B_{1}}{A_{1} + \sum_{i=2}^{d} B_{i}} \leq \prod_{i=1}^{d}\frac{A_{i}-B_{i}}{A_{i}} \leq \frac{A_{d} - \sum_{i=1}^{d}B_{i}}{A_{d}} \leq \frac{A_{d} - B_{1}}{A_{d} + \sum_{i=2}^{d} B_{i}} \, .
    \end{aligned}
    \end{equation}
\end{lemma}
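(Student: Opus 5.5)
The plan is to split the chain into its four inequalities: the two outer ones are elementary rational-function inequalities, and the two inner ones are proved by induction on the number of factors. Throughout, write $T_j\coloneqq\sum_{i=1}^{j}B_i$, $S_j\coloneqq\sum_{i=2}^{j}B_i$ (so $S_1=0$), and $P_j\coloneqq\prod_{i=1}^{j}(A_i-B_i)/A_i$.

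The basic consequence of the hypothesis $0\le B_i\le A_i-A_{i-1}$ is obtained by telescoping, using $A_0=0$:
\begin{equation}
T_j\le A_j \quad\text{and}\quad A_1+S_j\le A_j \quad\text{for all } 1\le j\le d .
\end{equation}
In particular every factor $(A_i-B_i)/A_i$ lies in $[0,1]$. All denominators are positive, since $A_i\ge A_1>0$.

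\emph{Outer inequalities.} Set $S\coloneqq S_d$. For any $A>0$ with $A\ge B_1$, cross-multiplying gives
\begin{equation}
\frac{A-B_1-S}{A}\le\frac{A-B_1}{A+S}
\;\Leftrightarrow\;(A-B_1-S)(A+S)\le (A-B_1)A
\;\Leftrightarrow\;-S(B_1+S)\le 0 ,
\end{equation}
which always holds. Taking $A=A_1$ gives the first inequality of the chain, and taking $A=A_d$ gives the last one.

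\emph{Upper middle bound.} I will show by induction on $j$ that $P_j\le (A_j-T_j)/A_j$; the case $j=1$ is an equality. For the inductive step, the factor $(A_{j+1}-B_{j+1})/A_{j+1}$ is nonnegative, so multiplying the hypothesis by it is allowed. It then suffices to prove
\begin{equation}
(A_j-T_j)(A_{j+1}-B_{j+1})\le A_j(A_{j+1}-T_j-B_{j+1}) .
\end{equation}
After expanding, this is equivalent to $T_j\,(A_{j+1}-A_j-B_{j+1})\ge 0$. This holds because $T_j\ge0$ and $B_{j+1}\le A_{j+1}-A_j$.

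\emph{Lower middle bound.} I will show by induction that $P_j\ge (A_1-B_1)/(A_1+S_j)$; again $j=1$ is an equality. For the step, it suffices to show that the new factor satisfies
\begin{equation}
\frac{A_{j+1}-B_{j+1}}{A_{j+1}}\ge\frac{A_1+S_j}{A_1+S_{j+1}} ,
\end{equation}
since both sides are nonnegative and the ratios then multiply correctly. Cross-multiplying, this is equivalent to $B_{j+1}\,(A_{j+1}-A_1-S_{j+1})\ge0$. This follows from the telescoped bound $A_1+S_{j+1}\le A_{j+1}$.

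Every step is a short cross-multiplication, so I expect no real obstacle. The only thing to watch is that each cross-multiplication is by a positive quantity and each multiplied factor is nonnegative, which is what the telescoped bounds guarantee.
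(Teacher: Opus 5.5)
Your proof is correct. Every cross-multiplication is by a positive quantity, and the steps reduce exactly to $S(B_1+S)\ge 0$, $T_j(A_{j+1}-A_j-B_{j+1})\ge 0$ and $B_{j+1}(A_{j+1}-A_1-S_{j+1})\ge 0$, which your telescoped bounds supply.

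The paper handles the two middle inequalities differently. It notes that $\prod_i (A_i-B_i)/A_i$ is nondecreasing in each $A_i$, since each factor is $1-B_i/A_i$. It then evaluates the product at an extremal admissible configuration where it telescopes exactly. For the upper bound it raises $A_i$ to $\widetilde{A}_i=A_d-\sum_{j>i}B_j\ge A_i$. For the lower bound it lowers $A_i$ to $\widetilde{A}_i=A_1+\sum_{2\le j\le i}B_j\le A_i$. (The index range printed in the paper for this second substitution is off; the one written here is the one that telescopes.)

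Your lower-bound step is that same factorwise comparison in disguise. Indeed $(A_1+S_j)/(A_1+S_{j+1})=(\widetilde{A}_{j+1}-B_{j+1})/\widetilde{A}_{j+1}$, and your slack is just $B_{j+1}(A_{j+1}-\widetilde{A}_{j+1})$. Your upper bound is organized differently: it absorbs one factor at a time into the running ratio $(A_j-T_j)/A_j$, rather than substituting all $A_i$ at once.

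What each route buys:
\begin{itemize}
\item The paper's viewpoint explains where the bounds come from. With a single monotonicity principle it also shows that both middle bounds are attained by admissible configurations with the same $B_i$.
\item Your route is purely algebraic and records the exact slack lost at each step. It also avoids the paper's derivative formula, which involves $1/(A_1-B_1)$ and is singular when $B_1=A_1$.
\end{itemize}
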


\begin{proof}
    First, we note that for \(X>Y\geq 0\) we have
    \begin{equation}
        \frac{X-Y}{X} \leq \frac{X}{X+Y} \, .
    \end{equation}
    Thus, the two outer inequalities are trivial, and we only need to show
    \begin{equation}
        \frac{A_{1} - B_{1}}{A_{1} + \sum_{i=2}^{d} B_{i}} \leq \prod_{i=1}^{d}\frac{A_{i}-B_{i}}{A_{i}} \leq \frac{A_{d} - \sum_{i=1}^{d}B_{i}}{A_{d}} \, .
    \end{equation}
    Let
    \begin{equation}
    \begin{aligned}
        P(A_{1},\ldots,A_{d}) \coloneqq \prod_{i=1}^{d}\frac{A_{i}-B_{i}}{A_{i}} \, .
    \end{aligned}
    \end{equation}
    We find that
    \begin{equation}
    \begin{aligned}
        \label{equ:proof_bounds_telescopic_products_derivatives}
        \frac{\partial}{\partial A_{i}}P(A_{1},\ldots,B_{d}) = P(A_{1},\ldots,A_{d})\mleft(\frac{1}{A_{i} - B_{i}} - \frac{1}{A_{i}}\mright) \geq 0 \, .
    \end{aligned}
    \end{equation}
    We address the upper bound and define
    \begin{equation}
        \widetilde{A}_{i} \coloneqq A_{d} - \sum_{j=i+1}^{d} B_{j} \geq A_{i} \, .
    \end{equation}
    By~\cref{equ:proof_bounds_telescopic_products_derivatives}, it follows that
    \begin{equation}
        \frac{A_{d} - \sum_{i=1}^{d}B_{i}}{A_{d}} = P(\widetilde{A}_{1},\ldots,\widetilde{A}_{d}) \geq P(A_{1},\ldots,A_{d}) = \prod_{i=1}^{d}\frac{A_{i}-B_{i}}{A_{i}} \, .
    \end{equation}
    For the lower bound, we similarly define
    \begin{equation}
        \widetilde{A}_{i} \coloneqq A_{1} + \sum_{j=i+1}^{d} B_{j} \leq A_{i} \, ,
    \end{equation}
    and again we use~\cref{equ:proof_bounds_telescopic_products_derivatives} to see that
    \begin{equation}
        \frac{A_{1} - B_{1}}{A_{1} + \sum_{i=2}^{d} B_{i}} = P(\widetilde{A}_{1},\ldots,\widetilde{A}_{d}) \leq P(A_{1},\ldots,A_{d}) = \prod_{i=1}^{d}\frac{A_{i}-B_{i}}{A_{i}} \, .
    \end{equation}
    This concludes the proof.
\end{proof}

\begin{lemma}[Linearization of pochhammer symbols]
    \label{lem:taylor_pochhammer_symbol}
    For \(M\in\mathbb{Z}_{\geq0}\), \(A, B \geq 0\) and \(A\geq B+M\), we have
    \begin{equation}\begin{aligned}
        \fpoch{A-B}{M} \geq \fpoch{A}{M}\mleft(1 - \frac{MB}{A}\mright) \, , \quad \rpoch{A-B}{M} \geq \rpoch{A}{M}\mleft(1 - \frac{MB}{A+M-1}\mright) \, .
    \end{aligned}\end{equation}
\end{lemma}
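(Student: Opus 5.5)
The plan is to reduce both inequalities to a single-factor comparison plus a Weierstrass-type product inequality. Write the falling factorial as
\begin{equation}
\fpoch{A-B}{M}=\prod_{j=0}^{M-1}(A-B-j)=\fpoch{A}{M}\prod_{j=0}^{M-1}\mleft(1-\frac{B}{A-j}\mright),
\end{equation}
and the rising factorial as
\begin{equation}
\rpoch{A-B}{M}=\prod_{j=0}^{M-1}(A-B+j)=\rpoch{A}{M}\prod_{j=0}^{M-1}\mleft(1-\frac{B}{A+j}\mright).
\end{equation}
The hypothesis \(A\geq B+M\) guarantees that every factor \(A-B-j\) with \(0\le j\le M-1\) is at least \(1\), hence positive. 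It also guarantees that \(A-j>0\), so each ratio \(x_j\coloneqq B/(A-j)\) or \(B/(A+j)\) lies in \([0,1)\). The case \(M=0\) is trivial, since both sides equal \(1\).

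Next I would apply the elementary inequality \(\prod_j(1-x_j)\geq 1-\sum_j x_j\) for \(x_j\in[0,1]\), proved by induction on the number of factors. For the rising factorial this gives \(1-\sum_{j=0}^{M-1}B/(A+j)\ge 1-MB/A\). This is slightly weaker than the stated \(1-MB/(A+M-1)\), so a sharper argument is needed there; see below. For the falling factorial the sum \(\sum_j B/(A-j)\) exceeds \(MB/A\), so the naive bound points the wrong way. This is the main obstacle.

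For the falling case I would therefore argue differently. Pair the factors and use convexity, or more simply compare termwise with \(\fpoch{A}{M}\). Note that
\begin{equation}
\fpoch{A}{M}-\fpoch{A-B}{M}=\sum_{j}\Bigl[\prod_{i<j}(A-B-i)\Bigr]\,B\,\Bigl[\prod_{i>j}(A-i)\Bigr]
\end{equation}
by telescoping, changing one factor at a time. Each summand is at most \(B\,\fpoch{A}{M}/(A-j)\). That still leaves \(\sum_j 1/(A-j)\) rather than \(M/A\). The fix I would try is to telescope in the reverse order, so that the \(j\)-th summand is \(\bigl[\prod_{i<j}(A-i)\bigr]B\bigl[\prod_{i>j}(A-B-i)\bigr]\). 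Then bound \(A-B-i\le A-i-1\) for \(i>j\), which requires \(B\ge1\); the case \(0\le B<1\) would be handled separately by the same index-shift idea. This makes each summand at most \(B\,\fpoch{A}{M}/A\) after re-indexing the product \(\prod_{i<j}(A-i)\prod_{i>j}(A-1-i)\). Summing the \(M\) terms yields \(\fpoch{A}{M}-\fpoch{A-B}{M}\le MB\fpoch{A}{M}/A\), which is the claim.

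The rising case is symmetric. Telescope so that each summand is \(B\prod_{i<j}(A+i)\prod_{i>j}(A-B+i)\). Bound \(A-B+i\le A+i-1\) when \(B\ge1\), which shifts the product to one missing the top factor \(A+M-1\). Each summand is then at most \(B\,\rpoch{A}{M}/(A+M-1)\), and summing gives the stated bound. Before committing, I expect to verify the \(B<1\) cases directly, for example through the product form with \(x_j\) small, and I expect that check to be the fiddliest step.
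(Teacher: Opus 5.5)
The case you postpone, $0<B<1$, cannot be closed by any argument, because for $M\ge 2$ both inequalities are false there. With $M=2$, $A=3$, $B=\tfrac12$ (so $A\ge B+M$):
\[
\fpoch{\tfrac52}{2}=\tfrac{15}{4}<4=\fpoch{3}{2}\mleft(1-\tfrac13\mright),\qquad
\rpoch{\tfrac52}{2}=\tfrac{35}{4}<9=\rpoch{3}{2}\mleft(1-\tfrac14\mright).
\]
The failure is structural. For $0\le B\le A-M$, the function $g(B)\coloneqq\fpoch{A-B}{M}$ is a product of $M$ positive affine functions, all of slope $-1$. Hence $g''/g=2\sum_{i<j}\bigl[(A-B-i)(A-B-j)\bigr]^{-1}>0$, so $g$ is strictly convex. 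Since $\fpoch{A-1}{M}=\fpoch{A}{M}(1-M/A)$, the right-hand side is exactly the chord of $g$ through $B=0$ and $B=1$. The same holds for $\rpoch{A-B}{M}$, using $\rpoch{A-1}{M}=\rpoch{A}{M}\bigl(1-M/(A+M-1)\bigr)$.

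A strictly convex function lies strictly below its chord on $(0,1)$ and above the chord's extension for $B\ge 1$. So for $M\ge2$ the lemma holds exactly when $B=0$ or $B\ge1$ (in particular for integer $B$), and the hypothesis $B\ge0$ must be strengthened.

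The paper's proof has the same blind spot. It inducts on $M$ via $\fpoch{A}{M}-\fpoch{A-B}{M}=B\fpoch{A-1}{M-1}+(A-B)\bigl[\fpoch{A-1}{M-1}-\fpoch{A-1-B}{M-1}\bigr]$, and silently uses $B\ge1$ when it bounds $(A-B)\fpoch{A-2}{M-2}\le\fpoch{A-1}{M-1}$. This matters because the lemma is later applied with $B=C_{\geq k}$, which can lie in $(0,1)$. What does hold for every admissible $B\ge 0$ is the Weierstrass bound you already wrote, $\fpoch{A-B}{M}\ge\fpoch{A}{M}\bigl(1-\sum_{j=0}^{M-1}B/(A-j)\bigr)$.

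For $B\ge1$ your telescoping argument is correct and is essentially the paper's induction unrolled. The paper obtains the rising case by applying the falling bound to $A+M-1$ via $\rpoch{K}{M}=\fpoch{K+M-1}{M}$, rather than telescoping a second time.

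One claim needs repair. In your reverse-order falling telescoping, the $j$-th bound $\prod_{i<j}(A-i)\prod_{i>j}(A-1-i)$ is not termwise at most $\fpoch{A}{M}/A=\fpoch{A-1}{M-1}$. For $j=M-1$ it equals $\fpoch{A}{M-1}>\fpoch{A-1}{M-1}$. Your conclusion still holds, because these bounds sum to $\fpoch{A}{M}-\fpoch{A-1}{M}=M\fpoch{A}{M}/A$ by the same telescoping identity at $B=1$.

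A simpler fix is to keep your first (forward) ordering and use $A-B-i\le A-1-i$ for $i<j$. Then each summand is at most $B\fpoch{A-1}{j}\,\fpoch{A-1-j}{M-1-j}=B\fpoch{A-1}{M-1}$, which is precisely the paper's step. In the rising case your termwise claim is exact: each shifted product is $A(A+1)\cdots(A+M-2)=\rpoch{A}{M}/(A+M-1)$.
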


\begin{proof}
    We start with the first inequality. We want to prove inductively that
    \begin{equation}
        \label{equ:proof_taylor_pochhammer_symbol_inequality}
        \fpoch{A-B}{M} \geq \fpoch{A}{M} - BM\fpoch{A-1}{M-1} \, .
    \end{equation}
    For \(M=1\), we have equality. Let us now proceed to the induction step for \(M+1\). We have
    \begin{equation}
    \begin{aligned}
        &\fpoch{A}{M} - \fpoch{A-B}{M} \\
        = &(A-(A-B))\fpoch{A-1}{M-1} + (A-B)\mleft(\fpoch{A-1}{M-1} - \fpoch{A-B-1}{M-1}\mright) \\
        \leq &B\fpoch{A-1}{M-1} + (A-B)B(M-1)\fpoch{A-2}{M-2} \leq BM\fpoch{A-1}{M-1} \, ,
    \end{aligned}
    \end{equation}
    where we used the induction hypothesis between the second and third row. Rearranging terms, we arrive at~\cref{equ:proof_taylor_pochhammer_symbol_inequality}. We remark that
    \begin{equation}\begin{aligned}
        \frac{\fpoch{A-1}{M-1}}{\fpoch{A}{M}} = \frac{1}{A} \, ,
    \end{aligned}\end{equation}
    which proves the first inequality. The second inequality follows from the fact that
    \begin{equation}\begin{aligned}
        \rpoch{K}{M} = \fpoch{K+M-1}{M} \, .
    \end{aligned}\end{equation}
\end{proof}

    \begin{lemma}[Averages of occupation numbers on the (dual) symmetric subspace]
\label{lem:upper_bounds_expectation_value_powers_of_weights_symmetric_subspace}
    Let $\boldsymbol{p}\dot{=}(p_1,\ldots,p_d)$, and let \(m\in \mathbb{N}\) and \(A>0\). For $\yd{\nu} \dot{=} [\mu_{1},\ldots,\mu_{1},0]$ and $p_{i} >p_{k}\geq 0$ for \(i\neq k\) we have
    \begin{equation}
        \weylavg{\mleft(\#_{k} + A\mright)^{m}}{\boldsymbol{p}}{\yd{\nu}} \leq m!\mleft(A+m-1+\sum_{i\neq k}\frac{p_{k}}{D_{i,k}}\mright)^{m} \, .
\end{equation}
Now let $p_{k} >p_{i}$ for all $i\neq k$, and let $\yd{\mu}\dot{=}[\mu_{1},0,\ldots,0]$. Then we have
    \begin{equation}
            \weylavg{\mleft(\mu_{1} - \#_{k} + A\mright)^{m}}{\boldsymbol{p}}{\yd{\mu}} \leq m!\mleft(A+m-1+\sum_{i\neq k}\frac{p_{i}}{D_{k,i}}\mright)^{m} \, .
\end{equation}
\end{lemma}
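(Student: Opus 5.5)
Both inequalities reduce to the same one-parameter computation, because on a single-row (or dual single-row) diagram the Weyl distribution of the relevant occupancy is essentially a product of independent geometric counts. I will treat the second statement first, since it is the more transparent case.

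For \(\yd{\mu}\dot{=}[\mu_1,0,\ldots,0]\), a WT is a multiset of size \(\mu_1\), so the weight is \(\prod_i p_i^{\#_i}\) with \(\sum_i\#_i=\mu_1\). Set \(N\coloneqq\mu_1-\#_k=\sum_{i\neq k}\#_i\) and factor out \(p_k^{\mu_1}\). The weight of a configuration \((\#_i)_{i\neq k}\) then becomes \(\prod_{i\neq k}r_i^{\#_i}\) with \(r_i=p_i/p_k<1\). This is exactly the product of independent geometric variables \(G_i\sim\mathrm{Geo}(r_i)\), conditioned on \(\sum_{i\neq k}G_i\leq\mu_1\).

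The key step is a stochastic-domination claim: conditioning on \(\{\sum_i G_i\leq\mu_1\}\) can only decrease nondecreasing functions of \(\sum_i G_i\). This holds because conditioning on a down-set lowers expectations of increasing functions. Applied to the increasing function \(x\mapsto(x+A)^m\), it gives
\begin{equation}
\weylavg{(N+A)^m}{\boldsymbol{p}}{\yd{\mu}}\leq\mleft\langle\mleft(A+\textstyle\sum_{i\neq k}G_i\mright)^m\mright\rangle.
\end{equation}

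To bound the right-hand side, I will use the rising-factorial trick. Since \((x)^m\leq\rpoch{x}{m}\) for \(x\geq0\), it suffices to bound \(\langle\rpoch{A+S}{m}\rangle\), where \(S=\sum_{i\neq k}G_i\). The moment generating function is \(\langle z^{S}\rangle=\prod_{i\neq k}\frac{1-r_i}{1-r_iz}\), and multiplying by \(z^{A+m-1}\) and differentiating \(m\) times at \(z=1\) produces \(\langle\rpoch{A+S}{m}\rangle\).

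Rather than expanding that derivative, I will use the generating-function identity \(\rpoch{a}{m}=m!\binom{a+m-1}{m}\) together with log-convexity. An alternative is induction on \(m\), via \(\rpoch{A+S}{m}=\rpoch{A+S}{m-1}(A+S+m-1)\), combined with a size-biasing identity for geometric variables: \(\langle S f(S)\rangle\leq\langle f(S+1)\rangle\cdot\sum_i\frac{r_i}{1-r_i}\) for increasing \(f\). This follows because each \(G_i\) satisfies \(\langle G_if(G)\rangle=\frac{r_i}{1-r_i}\langle f(G+e_i)\rangle\). Iterating gives the bound
\begin{equation}
\langle\rpoch{A+S}{m}\rangle\leq\prod_{j=0}^{m-1}(A+j+\mu)\leq(A+m-1+\mu)^m,
\end{equation}
where \(\mu=\sum_{i\neq k}\frac{r_i}{1-r_i}=\sum_{i\neq k}\frac{p_i}{D_{k,i}}\). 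The factor \(m!\) in the statement leaves ample room: with the crude step \((x)^m\leq m!\binom{x+m-1}{m}\leq\rpoch{x}{m}\) the constant is even better, so the \(m!\) is slack.

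For the first statement, with \(\yd{\nu}\dot{=}[\mu_1,\ldots,\mu_1,0]\) and \(p_k\) the smallest entry, I will use duality. The WTs of \(\yd{\nu}\) are in bijection with multisets of size \(\mu_1\) over the complementary letters: each column misses exactly one letter, and the weight is \(\prod_i p_i^{\mu_1}\cdot\prod_i p_i^{-c_i}\), where \(c_i\) counts columns missing letter \(i\). Then \(\#_k=\mu_1-c_k=\sum_{i\neq k}c_i\), and the relative weights are \((p_k/p_i)^{c_i}\) with ratios \(p_k/p_i<1\). The same conditioned-geometric argument then applies with \(\sum_{i\neq k}\frac{p_k/p_i}{1-p_k/p_i}=\sum_{i\neq k}\frac{p_k}{D_{i,k}}\).

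The main obstacle I anticipate is verifying the column/multiset duality carefully, together with the conditioning-monotonicity step.
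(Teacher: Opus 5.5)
Your reductions are sound and essentially match the paper's. On the single row, $\mu_1-\#_k=S=\sum_{i\neq k}G_i$ with independent geometrics of ratios $r_i=p_i/p_k$, conditioned on $S\leq\mu_1$. Dropping the conditioning can only raise the average of the increasing function $(S+A)^m$; this is a one-variable statement in $S$, which the paper handles with \cref{lem:differences_inequalities}. Your column-complement bijection for $[\mu_1,\ldots,\mu_1,0]$ is the same duality the paper uses via $\boldsymbol p\mapsto\boldsymbol p^{-1}$.

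\textbf{The gap is in the moment bound.} The identity $\langle G_if(G)\rangle=\frac{r_i}{1-r_i}\langle f(G+\boldsymbol e_i)\rangle$ is the Poisson size-bias identity, not the geometric one. For $P(G_i=g)=(1-r_i)r_i^g$, memorylessness gives $\langle G_if(G)\rangle=\sum_{j\geq1}r_i^j\langle f(G+j\boldsymbol e_i)\rangle$; equivalently, the size-biased geometric is $1+G_i+G_i'$ with $G_i'$ an independent copy. For increasing $f$ this yields $\langle G_if(G)\rangle\geq\frac{r_i}{1-r_i}\langle f(G+\boldsymbol e_i)\rangle$, which is the \emph{opposite} direction. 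Accordingly, your bound $\langle\rpoch{A+S}{m}\rangle\leq\prod_{j=0}^{m-1}(A+j+\mu)$ is false. Already for a single geometric and $m=2$, the left side exceeds the right side by exactly $\mu+\mu^2$.

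Your claim that the $m!$ is slack is also wrong, and no variant of the iteration can avoid it. A geometric with mean $\mu$ has $\langle G^{\underline{m}}\rangle=m!\,\mu^m$, so $\langle (G+A)^m\rangle/(A+m-1+\mu)^m\to m!$ as $\mu\to\infty$. Taking $d=2$, $\boldsymbol p=(3/5,2/5)$, $k=1$ (so $\mu=2$) and $\mu_1\to\infty$ (so the truncation is immaterial), $m=2$ and small $A$ give $\langle G^2\rangle=10>9$. Hence the lemma without $m!$ is false, and your chain would have proved it.

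The repair is to compute factorial moments exactly. By independence and the Vandermonde identity for falling factorials, $\langle S^{\underline{j}}\rangle=j!\,h_j(a)\leq j!\,\mu^j$, where $a_i=r_i/(1-r_i)$ and $\mu=\sum_{i\neq k}a_i$. From there you have two options.
\begin{itemize}
\item Expand $x^m=\sum_\ell\mathcal S(m,\ell)\,x^{\underline{\ell}}$ with Stirling numbers $\mathcal S(m,\ell)$, then use $\ell!\leq m!$ and $\mu^\ell\leq(\mu+m-1)^{\underline{\ell}}$. This is the paper's route.
\item Apply Leibniz to $\partial_z^m\bigl[z^{A+m-1}\langle z^S\rangle\bigr]_{z=1}$. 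This gives $\langle\rpoch{A+S}{m}\rangle=\sum_{j=0}^{m}\binom{m}{j}(A+m-1)^{\underline{m-j}}\,j!\,h_j(a)\leq m!\,(A+m-1+\mu)^m$, and you then combine it with $x^m\leq\rpoch{x}{m}$.
\end{itemize}
In both routes the factor $m!$ is genuinely used.
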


\begin{proof}
    First, the two inequalities are related by the fact that the irreps labeled by \(\yd{\nu}\) and \(\yd{\mu}\) are dual, that is we have
    \begin{equation}
            \weylavg{\mleft(\#_{k} + A\mright)^{m}}{\boldsymbol{p}^{-1}}{\yd{\nu}} = \weylavg{\mleft(\mu_{1} - \#_{k} + A\mright)^{m}}{\boldsymbol{p}}{\yd{\mu}} \, ,
\end{equation}
    where \(\boldsymbol{p}^{-1}\dot{=}(p_{1}^{-1},\ldots,p_{d}^{-1})\). Therefore we can focus on the second inequality.

    Since all quantities are invariant under reordering of \(\boldsymbol{p}\), we can assume without loss of generality that \(k=d\). We will now prove that
    \begin{equation}
    \label{equ:proof_upper_bounds_expectation_value_powers_of_weights_symmetric_subspace_upper_bound_power}
            \weylavg{\mleft(\mu_{1} - \#_{d}\mright)^{m}}{\boldsymbol{p}}{\yd{\mu}} \leq m!\mleft(m-1+\sum_{i\neq d}\frac{p_{i}}{p_{d} - p_{i}}\mright)^{m} \, .
\end{equation}
We have
    \begin{equation}
            \weylavg{\mleft(\mu_{1} - \#_{k}\mright)^{m}}{\boldsymbol{p}}{\yd{\mu}} = \weylavg{\mleft(\sum_{i=1}^{d-1} \#_{i}\mright)^{m}}{\boldsymbol{p}}{\yd{\mu}} = \frac{\sum_{T=0}^{\mu_{1}}T^{m}h_{T}(r_{1},\ldots,r_{d-1})}{\sum_{T=0}^{\mu_{1}}h_{T}(r_{1},\ldots,r_{d-1})} \, ,
\end{equation}
    where \(h_{T}(r_{1},\ldots,r_{d-1})\) are the homogeneous symmetric polynomials of degree \(T\), that is
    \begin{equation}
        h_{T}(r_{1},\ldots,r_{d-1}) \coloneqq s_{\yd{T}}(r_{1},\ldots,r_{d-1}) \, ,
    \end{equation}
    and \(r_{i} \coloneqq \frac{p_{i}}{p_{d}}\). We know that
    \begin{equation}
        \frac{\sum_{T=0}^{\mu_{1}}T^{m}h_{T}(r_{1},\ldots,r_{d-1})}{\sum_{T=0}^{\mu_{1}}h_{T}(r_{1},\ldots,r_{d-1})} \leq \mu_{1}^{m} < \frac{\sum_{T=\mu_{1}+1}^{\infty}T^{m}h_{T}(r_{1},\ldots,r_{d-1})}{\sum_{T=\mu_{1}+1}^{\infty}h_{T}(r_{1},\ldots,r_{d-1})} \, ,
    \end{equation}
    so we can use~\cref{lem:differences_inequalities} to get
    \begin{equation}
    \label{equ:proof_upper_bounds_expectation_value_powers_of_weights_symmetric_subspace_upper_bound_geometric}
            \weylavg{\mleft(\mu_{1} - \#_{k}\mright)^{m}}{\boldsymbol{p}}{\yd{\mu}} \leq \frac{\sum_{T=0}^{\infty}T^{m}h_{T}(r_{1},\ldots,r_{d-1})}{\sum_{T=0}^{\infty}h_{T}(r_{1},\ldots,r_{d-1})} = \mleft\langle \mleft(\sum_{i=1}^{d-1}X_{i} \mright)^{m}\mright\rangle_{\mathrm{G}(r_{1},\ldots,r_{d-1})} \, .
\end{equation}
Here, \(\mathrm{G}(r_{1},\ldots,r_{d-1})\) is the geometric distribution of \(d-1\) independent variables \(X_{i}\), that is
    \begin{equation}
        P(X_{1},\ldots,X_{d-1}) = \prod_{i=1}^{d-1}(1-r_{i})r_{i}^{X_{i}} \, .
    \end{equation}
    Now let \(X\coloneqq \sum_{i=1}^{d-1}X_{i}\). We have
    \begin{equation}
        \mleft\langle S^{X_{i}} \mright\rangle_{\mathrm{G}(r_{1},\ldots,r_{d-1})} = \frac{1-r_{i}}{1-Sr_{i}} \, , \quad \mleft\langle S^{X} \mright\rangle_{\mathrm{G}(r_{1},\ldots,r_{d-1})} = \prod_{i=1}^{d-1}\frac{1-r_{i}}{1-Sr_{i}} \, ,
    \end{equation}
    and we can calculate
    \begin{equation}
        \mleft\langle \mleft(X_{i} \mright)^{\underline{m}} \mright\rangle_{\mathrm{G}(r_{1},\ldots,r_{d-1})} = \mleft.\mleft(\frac{d}{dS}\mright)^{m}\mright|_{S=1}\mleft( \mleft\langle S^{X} \mright\rangle_{\mathrm{G}(r_{1},\ldots,r_{d-1})} \mright) = m!\mleft(\frac{r_{i}}{1-r_{i}}\mright)^{m} \, .
    \end{equation}
    We set \(a_{i} \coloneqq \frac{r_{i}}{1-r_{i}}\), and we see similarly as above that
    \begin{equation}
    \begin{aligned}
        \mleft\langle \mleft(X \mright)^{\underline{m}} \mright\rangle_{\mathrm{G}(r_{1},\ldots,r_{d-1})} &= \sum_{m_{1},\ldots,m_{d-1}}\binom{m}{m_{1},\ldots,m_{d-1}}\prod_{i=1}^{d-1 }\mleft.\mleft(\frac{d}{dS}\mright)^{m}\mright|_{S=1}\mleft(\frac{1-r_{i}}{1-Sr_{i}}\mright) \\
        &= \sum_{m_{1},\ldots,m_{d-1}}\binom{m}{m_{1},\ldots,m_{d-1}}\prod_{i=1}^{d-1 }(m_{i})!(a_{i})^{m_{i}} \leq m!\mleft(\sum_{i=1}^{d-1}a_{i}\mright)^{m} \, .
    \end{aligned}
    \end{equation}
    We know that the falling factorials are a basis for the vector space of polynomials, so we have
    \begin{equation}
        \sum_{\ell=0}^{m}S(m,\ell)(x)^{\underline{\ell}} = x^{m} \, ,
    \end{equation}
    for some coefficients \(S(m,\ell)\) (in this case the Stirling numbers of second kind). Therefore we get
    \begin{equation}
        \mleft\langle \mleft(X \mright)^{m} \mright\rangle_{\mathrm{G}(r_{1},\ldots,r_{d-1})} = \sum_{\ell=0}^{m}S(m,\ell)\mleft\langle \mleft(X \mright)^{\underline{\ell}} \mright\rangle_{\mathrm{G}(r_{1},\ldots,r_{d-1})} \leq m! \sum_{\ell=0}^{m}S(m,\ell) \mleft(\sum_{i=1}^{d-1}a_{i}\mright)^{\ell} \leq m!\mleft(m - 1 + \sum_{i=1}^{d-1}a_{i}\mright)^{m} \, .
    \end{equation}
    Together with~\cref{equ:proof_upper_bounds_expectation_value_powers_of_weights_symmetric_subspace_upper_bound_geometric} this proves~\cref{equ:proof_upper_bounds_expectation_value_powers_of_weights_symmetric_subspace_upper_bound_power}.

    We finally have
    \begin{equation}
            \weylavg{\mleft(\mu_{1} - \#_{d} + A\mright)^{m}}{\boldsymbol{p}}{\yd{\mu}} = \sum_{\ell=0}^{m}\binom{m}{\ell}\weylavg{\mleft(\mu_{1} - \#_{d}\mright)^{\ell}}{\boldsymbol{p}}{\yd{\mu}}A^{m-\ell} \leq m!\mleft(A+m-1+\sum_{i\neq d}a_{i}\mright)^{m} \, .
\end{equation}
    Substituting the definitions of \(a_{i}\) gives the result. From the aforementioned duality, we directly get the other inequality.
\end{proof}

\subsubsection{Sector-wise utility bounds}
\label{subsubsec:sector_utility_bounds_supp}
We give bounds on the sector-wise utility. This subsection will be dedicated to proving the following theorem:

\begin{theorem}[Nonasymptotic bounds on optimal sector-wise utility]
    \label{thm:sector_wise_utility}
    Consider
    \begin{equation}\begin{aligned}
        {\mathbf m}\dot{=}(m_{1},\ldots,m_{d}) \, , \quad 0\leq m_{i} \leq \Delta_{i,i+1} \, , \quad m = \sum_{i=1}^{d} m_{i} \, ,
    \end{aligned}\end{equation}
    with the convention that \(\Delta_{d,d+1} = \infty\). Let finally
    \begin{equation}\begin{aligned}
        C_{<k} \coloneqq \sum_{i=1}^{k-1}\frac{q_i}{q_i-q_d} \, , \quad C_{\geq k} \coloneqq \sum_{i=k}^{d-1}\frac{q_i}{q_d-q_i} \, .
    \end{aligned}\end{equation}
    For the overhang removal protocol, \(m_{k}=m\) and
    \begin{equation}\begin{aligned}
        \yd{\mu} \dot{=} [\varsigma_{1},\ldots,\varsigma_{k}-m,\ldots,\varsigma_{d}] \, .
    \end{aligned}\end{equation}
    Then
    \begin{equation}\begin{aligned}
        \label{equ:thm_sector_wise_utility_lower_bound}
        \res{f}{\yd{\varsigma}}{\yd{m}}{\yd{\mu}}_{\mathrm{all}}\geq \frac{(\Delta_{k-1,k} + 2 - C_{<k})^{\overline{m}}}{(\Delta_{k-1,k} + 2)^{\overline{m}}} \frac{(\Delta_{k,k+1} - C_{\geq k})^{\underline{m}}}{(\Delta_{k,k+1})^{\underline{m}}} \, ,
    \end{aligned}\end{equation}
    as long as the expressions in the brackets are non-negative. In addition, for any off-target contributions, i.e., \(m_{k}<m\), set
    \begin{equation}\begin{aligned}
        \yd{\lambda} \dot{=} [\varsigma_{1} - m_{1},\ldots,\varsigma_{d} - m_{d}] \, .
    \end{aligned}\end{equation}
    Then
    \begin{equation}\begin{aligned}
        \label{equ:thm_sector_wise_utility_upper_bound}
        \res{f}{\yd{\varsigma}}{\yd{m}}{\yd{\lambda}}_{\mathrm{all}} \leq \mleft(\frac{m\, m_{<k} (m_{<k}-1+C_{<k})}{\Delta_{k-1,k}}\mright)^{m_{<k}}\mleft(\frac{m\, m_{>k}(2m_{>k} - 1 + C_{\geq k})}{\Delta_{k,k+1}+1+m_{>k}}\mright)^{m_{>k}} \, ,
    \end{aligned}\end{equation}
    where
    \begin{equation}\begin{aligned}
        m_{<k} = \sum_{i=1}^{k-1}m_{i} \, , \quad m_{>k} = \sum_{i=k+1}^{d}m_{i} \, .
    \end{aligned}\end{equation}
\end{theorem}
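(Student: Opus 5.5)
The plan follows the proof outline in the main text. We start from \cref{thm:utility_component} in the reindexed spectrum $\boldsymbol{q}$, where the target letter is $d$. The sector-wise utility is the Weyl average over $\mathrm{W}(\boldsymbol{q},\yd{\varsigma})$ of the explicit Pochhammer product. There are two cases: the lower bound for the overhang removal protocol with $m_k=m$, and the upper bound for off-target removals.

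\emph{Lower bound.} With $m_k=m$, \cref{cor:utility_component_single_row} leaves only factors indexed by $j\neq k$. Using the GT interlacing $\widetilde w_{j,d-1}\in[\widetilde w_{j+1,d}+1,\widetilde w_{j,d}]$ and rewriting everything in terms of the occupancies $\#_{d,j}$, we bound the $j<k$ factors below by a single rising-factorial ratio. This ratio depends only on $\#_d|_1^{k-1}$ and has denominator $(\Delta_{k-1,k}+2)^{\overline m}$. The $j\ge k$ factors are likewise bounded below by a falling-factorial ratio in $\#_d|_k^d-\varsigma_{k+1}$ with denominator $(\Delta_{k,k+1})^{\underline m}$. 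Both bounds come from the elementary telescoping estimates of \cref{lem:bounds_telescopic_products}, applied with $A_i$ the successive row gaps. This is \cref{equ:main_part_proof_outline_f_all_WT_level_lower_bound}.

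The first factor is non-increasing and the second non-decreasing in the relevant occupancies. So \cref{cor:splitting_diagram_average_weight} splits the average into a product of two averages, one over $\yd{\Delta|_1^k}$ and one over $\yd{\varsigma|_k^d}$. Next, \cref{cor:monotonicity_functions_young_diagrams} lets us enlarge $\yd{\Delta|_1^k}$ to the dual-symmetric diagram $\yd{\Delta_{1,k},\ldots,\Delta_{1,k},0}$ and shrink $\yd{\varsigma|_k^d}$ to $\yd{\varsigma_k,0,\ldots,0}$, since both moves go in the favorable direction. On these (dual) symmetric irreps the averages reduce to sums of complete homogeneous polynomials, that is, truncated geometric distributions, as in the proof of \cref{lem:upper_bounds_expectation_value_powers_of_weights_symmetric_subspace}. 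The Pochhammer functions are concave/convex enough to apply a Jensen-type step: we will show that $\langle (X)^{\underline m}\rangle\ge (\langle X\rangle)^{\underline m}$ for the relevant truncated geometric variable, and that truncation only helps, via \cref{lem:differences_inequalities}. The mean is bounded by $C_{<k}$ and $C_{\ge k}$ respectively. This yields \cref{equ:thm_sector_wise_utility_lower_bound}, provided the bracketed arguments are nonnegative.

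\emph{Upper bound.} For $m_k<m$, \cref{equ:nm_factorization} contains for each $\ell<k$ the numerator $\rpoch{-t_{\ell,d}}{m_\ell}$, and for each $\ell>k$ the numerator $\rpoch{t_{\ell-1,d}+1}{m_\ell}$. Their denominators carry the large gaps $\Delta_{k,\ell}$. We bound the remaining factors by $1$, and the multinomial by $m^{m_{<k}+m_{>k}}$. The intensive numerators are controlled by powers $(\#+A)^{m_{<k}}$ and $(\mu_1-\#+A)^{m_{>k}}$ of occupancies. The same splitting and monotonicity steps, now in the reversed direction (upper bounds of \cref{cor:splitting_diagram_average_weight}), reduce these to (dual) symmetric averages. \cref{lem:upper_bounds_expectation_value_powers_of_weights_symmetric_subspace} then gives $m_{<k}!\,(m_{<k}-1+C_{<k})^{m_{<k}}$ and the analogue with $2m_{>k}-1+C_{\ge k}$, where the extra $m_{>k}$ comes from the shift $A$. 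The factorials are absorbed into $m_{<k}^{m_{<k}}$ and $m_{>k}^{m_{>k}}$. Finally, dividing by the denominators $\ge \Delta_{k-1,k}^{m_{<k}}$ and $\ge(\Delta_{k,k+1}+1+m_{>k})^{m_{>k}}$ gives \cref{equ:thm_sector_wise_utility_upper_bound}.

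\emph{Main obstacle.} The hardest step is the first: the WT-level reduction to the two one-variable monotone functions. The product in \cref{thm:utility_component} couples the GT entries of rows $d-1$ and $d$, so it is not obvious that everything collapses to $\#_d|_1^{k-1}$ and $\#_d|_k^d$ with the stated shifts. I would attack this by substituting $\widetilde w_{j,d-1}=\widetilde w_{j,d}-\#_{d,j}$ factor by factor and applying \cref{lem:bounds_telescopic_products} with $A_i=\Delta_{k,i}$ and $B_i=\#_{d,i}$ along the chain.
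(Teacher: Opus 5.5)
Your lower-bound argument is essentially the paper's: telescoping via \cref{lem:bounds_telescopic_products}, splitting via \cref{cor:splitting_diagram_average_weight}, then \cref{cor:monotonicity_functions_young_diagrams}, Jensen, and the mean bounds of \cref{lem:upper_bounds_expectation_value_powers_of_weights_symmetric_subspace}. You apply the monotonicity corollary to the nonlinear functions before Jensen, as in the main-text outline; the written proof applies Jensen first and monotonicity only to the mean. Either order works.

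The upper bound has a genuine gap at the step ``bound the remaining factors by $1$''. In the grouping of \cref{equ:nm_factorization}, some of the remaining factors are \emph{larger} than $1$.
\begin{itemize}
\item For a removal row $\ell<k$ and $\ell<i\le k-1$, consider $\prod_{r=1}^{m_\ell}\bigl(1-\frac{t_{i,d}+1}{\Delta_{i,\ell}-i+\ell+r}\bigr)$. Its denominators equal $-(\Delta_{\ell,i}+i-\ell-r)\le -1$, because $r\le m_\ell\le\Delta_{\ell,\ell+1}\le\Delta_{\ell,i}$. Hence the factor equals $\fpoch{\Delta_{\ell,i}+i-\ell+\#_{d,i}}{m_\ell}/\fpoch{\Delta_{\ell,i}+i-\ell-1}{m_\ell}>1$.
\item For $\ell>k$ and $k\le i\le \ell-2$, the factor $\prod_{r=1}^{m_\ell}\bigl(1+\frac{t_{i,d}}{\Delta_{i+1,\ell}-i+\ell+r-1}\bigr)$ has positive denominators, so it is $\ge 1$.
\end{itemize}
Dropping these gives a quantity \emph{smaller} than the utility component, so it is not an upper bound, as soon as some $m_\ell>0$ with $|\ell-k|\ge 2$. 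For example, take $\ell=k+2$ with $\Delta_{k+1,k+2}=0$. Then $t_{k+1,d}=0$, so your kept numerator is the constant $m_\ell!$. The discarded factor $\rpoch{2+t_{k,d}}{m_\ell}/\rpoch{2}{m_\ell}$, however, grows like $t_{k,d}^{m_\ell}$, with $t_{k,d}$ ranging up to $\Delta_{k,k+1}$.

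The missing idea is the paper's pairing.
\begin{itemize}
\item For removal row $j$, pair numerator index $i$ of \cref{thm:utility_component} with denominator index $i$ when $i<j$, and with $i+1$ when $i\ge j$. Every resulting factor then lies in $[0,1]$.
\item Discard everything except the \emph{whole} block between $j$ and $k$: $i=j,\ldots,k-1$ if $j<k$, and $i=k,\ldots,j-1$ if $j>k$.
\item Bound that block from above with the upper estimate of \cref{lem:bounds_telescopic_products}.
\end{itemize}
This gives numerators $\fpoch{\#_{d}|_{j}^{k-1}+k-1-j}{m_j}$ and $\rpoch{\sum_{i=k}^{j-1}t_{i,d}+1}{m_j}$. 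These are controlled by $\#_{d}|_{1}^{k-1}+(k-1)$ and by $\Delta_{k,d}-\#_{d}|_{k}^{d-1}=\varsigma_k-\#_{d}|_{k}^{d}$, respectively. That is why the occupancies $\#+A$ and $\mu_1-\#+A$ you feed into the splitting corollary must involve all of $t_{k,d},\ldots,t_{d-1,d}$, not only $t_{\ell-1,d}$. With this replacement, the rest of your upper-bound plan (splitting, monotonicity, the power-moment lemma, and absorbing factorials and the multinomial) goes through.

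A smaller point, which the paper's write-up shares, concerns the lower bound. The per-shift telescoped factors $(A_1+r-\sum_i B_i)/(A_1+r)$ can be negative, e.g.\ when $\#_{d}|_{1}^{k-1}+k-1>\Delta_{k-1,k}+2$. A product of negative factorwise lower bounds need not be a lower bound. Moreover, Pochhammer symbols at negative integers are neither nonnegative nor monotone.
\begin{itemize}
\item Replace each factor by its positive part. This keeps a valid lower bound and makes $D$ and $I$ nonnegative, monotone and convex, so the splitting corollary and your Jensen step apply.
\item The resulting Jensen bound coincides with the Pochhammer expression only when the falling-factorial argument is at least $m-1$ (resp.\ the rising-factorial argument is at least $0$).
\item In particular, $\langle\fpoch{X}{m}\rangle\ge\fpoch{\langle X\rangle}{m}$ is not guaranteed below that threshold.
\end{itemize}
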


\begin{proof}

    We first prove~\cref{equ:thm_sector_wise_utility_lower_bound}, so we start with the overhang removal rule. Following from~\cref{cor:utility_component_single_row}, or alternatively by choosing a path graph with the all-$\setminus$ convention, we can identify the edge variables $t_{i,d}$ with the number of boxes filled with the letter $d$ in the $i$-th row, denoted alternatively as $\#_{d,i}$. We have
    \begin{equation}
        \label{equ:proof_fidelity_subcomponent}
        \res{f}{\yd{\varsigma}}{\yd{m}}{\yd{\mu}}_{\mathrm{all}}\mleft(\wt{w}\mright)=\prod_{i=1}^{d-1}f^{k,m}_{i}\mleft(\wt{w}\mright) \, ,
    \end{equation}
    where
    \begin{equation}\begin{aligned}
        f^{k,m}_{i}\mleft(\wt{w}\mright) \coloneqq \mleft\{\begin{aligned}
        &\dfrac{\mleft(\Delta_{i,k} + (k + 1 - i) - (\#_{d,i} + 1)\mright)^{\overline{m}}}{\mleft(\Delta_{i,k} + (k + 1 - i)\mright)^{\overline{m}}}, && i < k, \\[10pt]
        &\dfrac{\mleft((\Delta_{k,i+1} + (i - k)) - \mleft(\Delta_{i,i+1} - \#_{d,i}\mright)\mright)^{\underline{m}}}{\mleft(\Delta_{k,i+1} + (i - k)\mright)^{\underline{m}}}, && i \geq k.
        \end{aligned}\mright.
    \end{aligned}\end{equation}
    We define
    $\#_d|_{s}^{t} \coloneqq \sum_{i=s}^{t}\#_{d,i}$.
    For \(1\leq i \leq k-1\), we set
    \begin{equation}\begin{aligned}
        A_{i} = \Delta_{i,k}+(k+1-i) \, , \quad B_{i} = \#_{d,i}+1 \, .
    \end{aligned}\end{equation}
    By applying the left-most lower bound of~\cref{lem:bounds_telescopic_products} to each individual term in the rising factorial, we get
    \begin{equation}\begin{aligned}
        \label{equ:proof_fidelity_lower_bound_partial_product_less_than_k}
        \prod_{i=1}^{k-1}f^{k,m}_{i}\mleft(\wt{w}\mright) = \prod_{i=1}^{k-1} \frac{\mleft(A_{i} - B_{i}\mright)^{\overline{m}}}{\mleft(A_{i}\mright)^{\overline{m}}} \geq \frac{\mleft(A_{1} - \sum_{i=1}^{k-1}B_{i}\mright)^{\overline{m}}}{\mleft(A_{1}\mright)^{\overline{m}}} = \frac{\mleft(\Delta_{k-1,k} + 2 - (\#_{d}|_{1}^{k-1} + k - 1)\mright)^{\overline{m}}}{\mleft(\Delta_{k-1,k} + 2\mright)^{\overline{m}}} \, .
    \end{aligned}\end{equation}
    For $k\leq i\leq d$, we label
    \begin{equation}
        A_{i} = \Delta_{k,i+1} + (i - k) \, , \quad B_{i} = \Delta_{i,i+1} - \#_{d,i} \, ,
    \end{equation}
    and we get from~\cref{lem:bounds_telescopic_products} that
    \begin{equation}\begin{aligned}
        \label{equ:proof_fidelity_lower_bound_partial_product_bigger_than_k}
        \prod_{i=k}^{d-1}f^{k,m}_{i}\mleft(\wt{w}\mright) = \prod_{i=k}^{d-1} \frac{\mleft(A_{i} - B_{i}\mright)^{\underline{m}}}{\mleft(A_{i}\mright)^{\underline{m}}} \geq \frac{\mleft(A_{k} - \sum_{i=k}^{d-1}B_{i}\mright)^{\underline{m}}}{\mleft(A_{k}\mright)^{\underline{m}}} = \frac{\mleft(\#_{d}|_{k}^{d} -\varsigma_{k+1})\mright)^{\underline{m}}}{\mleft(\Delta_{k,k+1}\mright)^{\underline{m}}} \, .
    \end{aligned}\end{equation}
    For the last equality, we used that \(\#_{d,d}=\varsigma_{d}\). It follows that both functions on the right-hand side are convex in \(\#_{d}|_{1}^{k-1}\) and \(\#_{d}|_{k}^{d-1}\) respectively. We seek to use~\cref{cor:splitting_diagram_average_weight}, and we take
    \begin{equation}
        D|_{1}^{k-1}(\#_{d,1},\ldots,\#_{d,k-1}) = \mleft(\Delta_{k-1,k} + 2 - (\#_{d}|_{1}^{k-1} + k - 1)\mright)^{\overline{m}} \, , \quad I|_{k}^{d}(\#_{d,k},\ldots,\#_{d,d}) = \mleft(\#_{d}|_{k}^{d} -\varsigma_{k+1})\mright)^{\underline{m}} \, .
    \end{equation}
    Using the inequality from~\cref{cor:splitting_diagram_average_weight}, we find
    \begin{equation}
    \label{equ:proof_fidelity_on_target_split_bound}
            \weylavg{\res{f}{\yd{\varsigma}}{\yd{m}}{\yd{\mu}}_{\mathrm{all}}\mleft(\wt{w}\mright)}{\boldsymbol{q}}{\yd{\varsigma}} \geq \frac{\weylavg{D|_{1}^{k-1}(\#_{d,1},\ldots,\#_{d,k-1})}{(\boldsymbol{q}|_{1}^{k-1},q_{d})}{\yd{\Delta|_{1}^{k}}}}{\mleft(\Delta_{k-1,k} + 2\mright)^{\overline{m}}} \frac{\weylavg{I|_{k}^{d}(\#_{d,k},\ldots,\#_{d,d})}{\boldsymbol{q}|_{k}^{d}}{\yd{\varsigma|_{k}^{d}}}}{\mleft(\Delta_{k,k+1}\mright)^{\underline{m}}} \, .
\end{equation}
    Further applying Jensen's inequality allows us to pull the expectation value inside the brackets, and we get
    \begin{equation}
    \label{equ:proof_fidelity_on_target_lower_bound}
    \begin{aligned}
        &\weylavg{\res{f}{\yd{\varsigma}}{\yd{m}}{\yd{\mu}}_{\mathrm{all}}\mleft(\wt{w}\mright)}{\boldsymbol{q}}{\yd{\varsigma}}
\geq\frac{\mleft(\Delta_{k-1,k} + 2 - (\weylavg{\#_{k}}{(\boldsymbol{q}|_{1}^{k-1},q_{d})}{\yd{\Delta|_{1}^{k}}} + k - 1)\mright)^{\overline{m}}}{\mleft(\Delta_{k-1,k} + 2\mright)^{\overline{m}}} \frac{\mleft(\weylavg{\#_{d-k+1}}{\boldsymbol{q}|_{k}^{d}}{\yd{\varsigma|_{k}^{d}}} -\varsigma_{k+1}\mright)^{\underline{m}}}{\mleft(\Delta_{k,k+1}\mright)^{\underline{m}}} \, .
    \end{aligned}
\end{equation}
    By~\cref{cor:monotonicity_functions_young_diagrams} for the first and~\cref{lem:upper_bounds_expectation_value_powers_of_weights_symmetric_subspace} for the second inequality, we find that
   \begin{equation}
    \label{equ:proof_fidelity_1_to_k_upper_bound}
            \weylavg{\#_{k}}{(\boldsymbol{q}|_{1}^{k-1},q_{d})}{\yd{\Delta|_{1}^{k}}} \leq \weylavg{\#_{k}}{(\boldsymbol{q}|_{1}^{k-1},q_{d})}{[\Delta_{1,k},\ldots,\Delta_{1,k},0]} \leq \sum_{i=1}^{k-1}\frac{q_{d}}{q_{i} - q_{d}}\ \, .
\end{equation}
We remark here that
    \begin{equation}\begin{aligned}
        k - 1 + \sum_{i=1}^{k-1}\frac{q_{d}}{q_{i} - q_{d}} = \sum_{i=1}^{k-1} \frac{q_{i}}{q_{i} - q_{d}} \, .
    \end{aligned}\end{equation}
    In a similar fashion, we can use~\cref{cor:monotonicity_functions_young_diagrams} and~\cref{lem:upper_bounds_expectation_value_powers_of_weights_symmetric_subspace} to get
    \begin{equation}
    \label{equ:proof_fidelity_k_to_d_lower_bound}
            \weylavg{\#_{d-k+1}}{\boldsymbol{q}|_{k}^{d}}{\yd{\varsigma|_{k}^{d}}} \geq \weylavg{\#_{d-k+1}}{\boldsymbol{q}|_{k}^{d}}{[\varsigma_{k},0,\ldots,0]} \geq \varsigma_{k} - \sum_{i=k}^{d-1}\frac{q_{i}}{q_{d} - q_{i}} \, ,
\end{equation}
    where we used
    \begin{equation}
        \langle \varsigma_{k} - \#_{d-k+1}\rangle_{\mathrm{W}(\boldsymbol{q}|_{k}^{d}\yd{\varsigma_{k},0,\ldots,0})} \leq \sum_{i=k}^{d-1}\frac{q_{i}}{q_{d} - q_{i}} \, .
    \end{equation}
Inserting~\cref{equ:proof_fidelity_1_to_k_upper_bound} and~\cref{equ:proof_fidelity_k_to_d_lower_bound} into~\cref{equ:proof_fidelity_on_target_lower_bound} now gives~\cref{equ:thm_sector_wise_utility_lower_bound}.
    \[
        \weylavg{\varsigma_{k} - \#_{d-k+1}}{\boldsymbol{q}|_{k}^{d}}{\yd{\varsigma_{k},0,\ldots,0}} \leq \sum_{i=k}^{d-1}\frac{q_{i}}{q_{d} - q_{i}} \, .
    \]

    We now consider a general off-target removal with environment irrep \(\yd{\lambda}\) and prove~\cref{equ:thm_sector_wise_utility_upper_bound}. As above, we use~\cref{thm:utility_component} to get
    \begin{equation}\begin{aligned}
        \res{f}{\yd{\varsigma}}{\yd{m}}{\yd{\lambda}}_{\mathrm{all}}\mleft(\wt{w}\mright)=\binom{m}{\mathbf m}\prod_{j=1}^{d}\prod_{i=1}^{d-1}f^{j,m_{j}}_i\mleft(\wt{w}\mright) \, .
    \end{aligned}\end{equation}
    Since each \(f^{j,m_{j}}_i\mleft(\wt{w}\mright)\leq 1\), we have
    \begin{equation}\begin{aligned}
        \label{equ:proof_fidelity_upper_bounds_batched}
        \res{f}{\yd{\varsigma}}{\yd{m}}{\yd{\lambda}}_{\mathrm{all}}\mleft(\wt{w}\mright) \leq \binom{m}{\mathbf m}\mleft(\prod_{j=1}^{k-1}\prod_{i=j}^{k-1}f^{j,m_{j}}_i\mleft(\wt{w}\mright)\mright)\mleft(\prod_{j=k+1}^{d}\prod_{i=k}^{j-1}f^{j,m_{j}}_i\mleft(\wt{w}\mright)\mright) \, .
    \end{aligned}\end{equation}
    We use the left-most upper bound of~\cref{lem:bounds_telescopic_products} in a similar way as~\cref{equ:proof_fidelity_lower_bound_partial_product_bigger_than_k}, that is we take
    \begin{equation}
        A_{i} = \Delta_{j,i+1} + (i - j) \, , \quad B_{i} = \Delta_{i,i+1} - \#_{d,i} \, ,
    \end{equation}
    and we find that
    \begin{equation}
    \begin{aligned}
\label{equ:proof_fidelity_upper_bounds_index_wise_1}
        \prod_{i=j}^{k-1} f^{j,m_{j}}_i\mleft(\wt{w}\mright) &= \prod_{i=j}^{k-1} \frac{\mleft(A_{i} - B_{i}\mright)^{\underline{m}}}{\mleft(A_{i}\mright)^{\underline{m}}} \\
        &\leq \frac{\fpoch{A_{k-1} - \sum_{i=j}^{k-1}B_{i}}{m_{j}}}{\fpoch{A_{k-1}}{m_{j}}} \\
        &= \frac{(\#_{d}|_{j}^{k-1} + (k - 1 - j))^{\underline{m_{j}}}}{(\Delta_{j,k} + (k - 1 - j))^{\underline{m_{j}}}} \\
        &\leq \frac{(\#_{d}|_{1}^{k-1} + (k - 1))^{\underline{m_{j}}}}{(\Delta_{k-1,k})^{\underline{m_{j}}}} \, .
    \end{aligned}
    \end{equation}
    In the last inequality, we used trivial bounds on the numerator and denominator. For the complementary bound, following the approach of~\cref{equ:proof_fidelity_lower_bound_partial_product_less_than_k}, we can set
    \begin{equation}\begin{aligned}
        A_{i} = \Delta_{i,j}+(j+1-i) \, , \quad B_{i} = \#_{d,i}+1 \, ,
    \end{aligned}\end{equation}
    and we get
    \begin{equation}
    \label{equ:proof_fidelity_upper_bounds_index_wise_2}
    \begin{aligned}
        \prod_{i=k}^{j-1} f^{j,m_{j}}_i\mleft(\wt{w}\mright)
        &= \prod_{i=k}^{j-1} \frac{\mleft(A_{i} - B_{i}\mright)^{\overline{m}}}{\mleft(A_{i}\mright)^{\overline{m}}}
        \leq \frac{\rpoch{A_{k} - \sum_{i=k}^{j-1}B_{i}}{m_{j}}}{\rpoch{A_{k}}{m_{j}}} \\
        &= \frac{(\Delta_{k,j} - \#_{d}|_{k}^{j-1} + 1)^{\overline{m_{j}}}}{(\Delta_{k,j} + (j + 1 - k))^{\overline{m_{j}}}}
        \leq \frac{(\Delta_{k,d} - \#_{d}|_{k}^{d-1} + 1)^{\overline{m_{j}}}}{(\Delta_{k,k+1} + 2)^{\overline{m_{j}}}} \, .
    \end{aligned}
    \end{equation}
    We use the fact that for any \(A>B>0\) we have
    \begin{equation}\begin{aligned}
        \frac{B^{\underline{s}}}{A^{\underline{s}}} \leq \mleft(\frac{B}{A}\mright)^{s} \, , \quad \frac{B^{\overline{s}}}{A^{\overline{s}}} \leq \mleft(\frac{B+s-1}{A+s-1}\mright)^{s}
    \end{aligned}\end{equation}
    together with~\cref{equ:proof_fidelity_upper_bounds_batched,equ:proof_fidelity_upper_bounds_index_wise_1,equ:proof_fidelity_upper_bounds_index_wise_2} to see that
    \begin{equation}\begin{aligned}
        \res{f}{\yd{\varsigma}}{\yd{m}}{\yd{\lambda}}_{\mathrm{all}}\mleft(\wt{w}\mright) \leq \binom{m}{\mathbf m} \mleft(\frac{\#_{d}|_{1}^{k-1} + (k - 1)}{\Delta_{k-1,k}}\mright)^{m_{<k}} \mleft(\frac{\Delta_{k,d} - \#_{d}|_{k}^{d-1} + m_{>k}}{\Delta_{k,k+1} + 1 + m_{>k}}\mright)^{m_{>k}} \, .
    \end{aligned}\end{equation}
    We again want to use~\cref{cor:splitting_diagram_average_weight}, and we set
    \begin{equation}
        I|_{1}^{k-1}(\#_{d,1},\ldots,\#_{d,k-1}) = \mleft(\#_{d}|_{1}^{k-1} + (k - 1)\mright)^{m_{<k}} \, , \quad D|_{k}^{d}(\#_{d,k},\ldots,\#_{d,d}) = \mleft(\Delta_{k,d} - \#_{d}|_{k}^{d-1} + m_{>k}\mright)^{m_{>k}} \, .
    \end{equation}
    With the upper bound from~\cref{cor:splitting_diagram_average_weight}, we find
    \begin{equation}
    \begin{aligned}
        \weylavg{\res{f}{\yd{\varsigma}}{\yd{m}}{\yd{\lambda}}_{\mathrm{all}}\mleft(\wt{w}\mright)}{\boldsymbol{q}}{\yd{\varsigma}} \!\!\!\!\!\!\!\!\leq \binom{m}{\mathbf m} &\weylavg{\mleft(\frac{\#_{k} + (k - 1)}{\Delta_{k-1,k}}\mright)^{m_{<k}}}{(\boldsymbol{q}|_{1}^{k-1},q_{d})}{\yd{\Delta|_{1}^{k}}} \weylavg{\mleft(\frac{\Delta_{k,d} - \#_{d-k+1} + m_{>k}}{\Delta_{k,k+1} + 1 + m_{>k}}\mright)^{m_{>k}}}{\boldsymbol{q}|_{k}^{d}}{\yd{\varsigma|_{k}^{d}}} \, .
    \end{aligned}
\end{equation}
By~\cref{cor:monotonicity_functions_young_diagrams} for the first and~\cref{lem:upper_bounds_expectation_value_powers_of_weights_symmetric_subspace} for the second inequality, we find that
    \begin{equation}
    \begin{aligned}
        &\weylavg{\mleft(\#_{k} + (k - 1)\mright)^{m_{<k}}}{(\boldsymbol{q}|_{1}^{k-1},q_{d})}{\yd{\Delta|_{1}^{k}}} \\ \leq &\weylavg{\mleft(\#_{k} + (k - 1)\mright)^{m_{<k}}}{(\boldsymbol{q}|_{1}^{k-1},q_{d})}{[\Delta_{1,k},\ldots,\Delta_{1,k},0]}\\
        \leq& (m_{>k})!\mleft(m_{>k}-1+k-1 + \sum_{i=1}^{k-1}\frac{q_{d}}{q_{i} - q_{d}}\mright)^{m_{>k}} \, .
    \end{aligned}
\end{equation}
To get the result, we again invoke that
    \begin{equation}
        k-1 + \sum_{i=1}^{k-1}\frac{q_{d}}{q_{i} - q_{d}} = \sum_{i=1}^{k-1}\frac{q_{i}}{q_{i} - q_{d}} \, .
    \end{equation}
    In a similar fashion, we can use~\cref{cor:monotonicity_functions_young_diagrams} and~\cref{lem:upper_bounds_expectation_value_powers_of_weights_symmetric_subspace} to get
    \begin{equation}
    \begin{aligned}
        &\weylavg{\mleft(\varsigma_{k} - \#_{d-k+1} + m_{>k}\mright)^{m_{>k}}}{\boldsymbol{q}|_{k}^{d}}{\yd{\varsigma|_{k}^{d}}} \\ \leq &\weylavg{\mleft(\varsigma_{k} - \#_{d-k+1} + m_{>k}\mright)^{m_{>k}}}{\boldsymbol{q}|_{k}^{d}}{[\varsigma_{k},0,\ldots,0]} \\
        \leq& (m_{>k})!\mleft(2m_{>k} - 1 + \sum_{i=k}^{d-1}\frac{q_{i}}{q_{d} - q_{i}} \mright)^{m_{>k}} \, .
    \end{aligned}
\end{equation}
We finally remark that
$(m_{<k})! \leq (m_{<k})^{m_{<k}}$, $(m_{>k})! \leq (m_{>k})^{m_{>k}} $, and $ \binom{m}{\mathbf m} \leq m^{m_{<k}+m_{>k}}$.
Putting everything together, we arrive at~\cref{equ:thm_sector_wise_utility_upper_bound}.
\end{proof}

This implies that whenever $\Delta_{k-1,k}$ and $\Delta_{k,k+1}$ are larger than a threshold $\Delta^\ast(\boldsymbol{p})$, the optimal channel for maximizing the sector-wise utility is given by the overhang removal rule. Moreover, the threshold is independent of $m$ and $d$. The following corollary from~\cref{thm:sector_wise_utility} gives a simpler form of the lower and upper bounds.

\begin{corollary}
    \label{cor:sector_wise_utility_simplified}
    For $1 \leq m \leq \Delta_{k,k+1}$, we have
    \begin{equation}\begin{aligned}
        \res{^kf}{\yd{\varsigma}}{\yd{m}}{\yd{\varsigma-m\mathbf{e}_k}}_{\mathrm{all}}\geq 1 - \frac{m}{\Delta_{k-1,k}+m+1}\sum_{i=1}^{k-1}\frac{p_i}{p_i-p_k} - \frac{m}{\Delta_{k,k+1}}\sum_{i=k+1}^{d}\frac{p_i}{p_k-p_i} \, .
    \end{aligned}\end{equation}
    In addition, for the case \(m=1\) and any off-target contributions \(\ell\neq k\) there is
        \begin{equation}\begin{aligned}
            \label{equ:cor_sector_wise_utility_off_target_upper_bound}
            \res{^kf}{\yd{\varsigma}}{\yd{1}}{\yd{\varsigma-\mathbf{e}_\ell}}_{\mathrm{all}}\leq \mleft\{\begin{aligned}
            &\dfrac{1}{\Delta_{\ell,k}}\displaystyle\sum_{i=1}^{k-1}\dfrac{p_i}{p_i-p_k}, && \ell<k, \\[10pt]
            &\dfrac{1}{\Delta_{k,\ell} + 2}\mleft(1+\displaystyle\sum_{i=k+1}^{d}\dfrac{p_i}{p_k-p_i}\mright), && \ell>k.
            \end{aligned}\mright.
        \end{aligned}\end{equation}
\end{corollary}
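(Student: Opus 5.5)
The plan is to derive both parts of the corollary directly from \cref{thm:sector_wise_utility}: linearize the Pochhammer ratios there, then undo the relabeling $\boldsymbol{q}\to\boldsymbol{p}$. Under the permutation $\sigma$ we have $q_d=p_k$, $q_i=p_i$ for $i<k$, and $q_i=p_{i+1}$ for $k\le i\le d-1$. This gives
\begin{equation}
C_{<k}=\sum_{i=1}^{k-1}\frac{p_i}{p_i-p_k}, \qquad C_{\geq k}=\sum_{i=k+1}^{d}\frac{p_i}{p_k-p_i}.
\end{equation}

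\emph{Lower bound.} I start from \cref{equ:thm_sector_wise_utility_lower_bound}. To the rising factor I apply \cref{lem:taylor_pochhammer_symbol} with $A=\Delta_{k-1,k}+2$, $B=C_{<k}$, $M=m$:
\begin{equation}
\frac{\rpoch{A-B}{m}}{\rpoch{A}{m}}\ge 1-\frac{mC_{<k}}{\Delta_{k-1,k}+m+1}.
\end{equation}
To the falling factor I apply the same lemma with $A=\Delta_{k,k+1}$, $B=C_{\geq k}$, which gives
\begin{equation}
\frac{\fpoch{A-B}{m}}{\fpoch{A}{m}}\ge 1-\frac{mC_{\geq k}}{\Delta_{k,k+1}}.
\end{equation}
For two factors $x,y\in[0,1]$ the inequality $(1-a)(1-b)\ge 1-a-b$ holds whenever $a,b\ge0$. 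This yields the claimed sum.

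The main obstacle is the hypothesis of the lemma ($A\ge B+M$) and the non-negativity of the brackets in the theorem. When these fail, I will argue separately. If either linearized factor is negative, the right-hand side is $\le 0$ and the bound holds trivially because utilities are nonnegative. Otherwise I must check that positivity of the linear expression forces the lemma's hypothesis, or else extend \cref{lem:taylor_pochhammer_symbol} to real $B$ with $A-B\ge0$; that extension follows from the same induction, since each factor satisfies $(A-B-j)\ge (A-j)(1-B/(A-j))$. I also treat the empty cases $k=1$ and $k=d$: there $C_{<k}$ or $C_{\geq k}$ vanishes, the corresponding factor is $1$, and $\Delta_{0,1}$ is read as $\infty$.

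\emph{Off-target bound for $m=1$.} Here I will not use the crude estimate \cref{equ:thm_sector_wise_utility_upper_bound} with its $m$-powers. Instead I rerun its proof with $m_\ell=1$ and keep the sharper intermediate terms. For $\ell<k$, \cref{equ:proof_fidelity_upper_bounds_index_wise_1} with $j=\ell$ gives
\begin{equation}
\frac{\#_d|_\ell^{k-1}+k-1-\ell}{\Delta_{\ell,k}+k-1-\ell}.
\end{equation}
I bound this by $(\#_d|_1^{k-1}+k-1)/\Delta_{\ell,k}$, using that numerator and denominator share the shift. Then I average, split via \cref{cor:splitting_diagram_average_weight}, and use \cref{cor:monotonicity_functions_young_diagrams} together with \cref{lem:upper_bounds_expectation_value_powers_of_weights_symmetric_subspace} at exponent $1$. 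These bound the expectation by $k-1+\sum_{i<k}q_d/(q_i-q_d)=C_{<k}$, which gives $C_{<k}/\Delta_{\ell,k}$.

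For $\ell>k$, I take \cref{equ:proof_fidelity_upper_bounds_index_wise_2} with $j=\ell$:
\begin{equation}
\frac{\Delta_{k,\ell}-\#_d|_k^{\ell-1}+1}{\Delta_{k,\ell}+\ell+1-k}.
\end{equation}
I bound the denominator below by $\Delta_{k,\ell}+2$ and the numerator by $\varsigma_k-\#_{d}|_k^{d}+1$, i.e.\ $1$ plus the number of non-$d$ entries in rows $\ge k$. The expectation of that count is at most $C_{\geq k}$ by the same monotonicity reduction to $[\varsigma_k,0,\ldots,0]$. This produces $(1+C_{\geq k})/(\Delta_{k,\ell}+2)$.

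The step I would check most carefully is the numerator comparison. It relies on the GT interlacing identity $\Delta_{k,\ell}-\#_d|_k^{\ell-1}\le \varsigma_k-\#_d|_k^{d}$, which has to hold for every tableau, not just on average.
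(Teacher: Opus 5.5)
Your plan is the paper's own argument. For the lower bound you apply \cref{lem:taylor_pochhammer_symbol} to the two Pochhammer ratios of \cref{thm:sector_wise_utility}; for the off-target bound you use the $m=1$ case of that theorem's proof. The step that fails is the linearization of the falling ratio with $B=C_{\geq k}$.

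For $m\geq 2$ the map $B\mapsto\fpoch{A-B}{m}$ is strictly convex while its factors are positive. The expression $\fpoch{A}{m}(1-mB/A)$ is exactly its secant through $B=0$ and $B=1$ (both equal $\fpoch{A-1}{m}$ at $B=1$). Hence $\fpoch{A-B}{m}\geq\fpoch{A}{m}(1-mB/A)$ holds for $B=0$ and for $B\geq 1$, but fails for every $0<B<1$ once $A\geq m\geq 2$. For instance, $A=10$, $B=1/2$, $m=2$ gives $80.75/90<0.9$, even though $A\geq B+m$. In the lemma's induction, the last step amounts to $A-B\leq A-1$, so it silently needs $B\geq 1$.

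Your proposed extension does not rescue this. The relation $A-B-j=(A-j)\bigl(1-B/(A-j)\bigr)$ is an identity, and multiplying the factors out only yields $1-B\sum_{j=0}^{m-1}(A-j)^{-1}$, which is smaller than $1-mB/A$. Since $C_{\geq k}=\sum_{i>k}p_i/(p_k-p_i)$ can lie in $(0,1)$ (e.g.\ $k=1$, $d=2$, $\boldsymbol{p}=(3/4,1/4)$ gives $C_{\geq 1}=1/2$), for $m\geq 2$ the bound of \cref{thm:sector_wise_utility} is then strictly below the claimed right-hand side. So the corollary cannot be read off from the theorem, and the paper's one-line proof has the same gap. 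The rising factor is unaffected, because $C_{<k}$ is either $0$ or at least $k-1\geq 1$.

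To get the stated constant you must linearize before averaging. In the proof of \cref{thm:sector_wise_utility}, replace the Jensen step on the lower factor by the pointwise bound $\fpoch{\Delta_{k,k+1}-b}{m}\geq\fpoch{\Delta_{k,k+1}}{m}\max\bigl(0,1-mb/\Delta_{k,k+1}\bigr)$ at the integer $b=\varsigma_k-\#$, reading the left side as $0$ for $b>\Delta_{k,k+1}$. For $b\geq 1$ this is the secant argument. Whenever the left side vanishes one has $mb\geq\Delta_{k,k+1}$, because $m\leq\Delta_{k,k+1}$. The right-hand side is non-negative and non-decreasing in $\#$, so it is compatible with \cref{cor:splitting_diagram_average_weight,cor:monotonicity_functions_young_diagrams}. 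Linearity of the expectation together with $\langle\varsigma_k-\#\rangle\leq C_{\geq k}$ then gives $1-mC_{\geq k}/\Delta_{k,k+1}$. Alternatively, your factorwise argument does prove the weaker statement with $m/\Delta_{k,k+1}$ replaced by $\sum_{j=0}^{m-1}(\Delta_{k,k+1}-j)^{-1}$.

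Your off-target argument, by contrast, is correct and more careful than the paper's remark that it follows by inserting $m=1$. The final bound of \cref{thm:sector_wise_utility} at $m=1$ only gives the denominators $\Delta_{k-1,k}$ and $\Delta_{k,k+1}+2$; the sharper $\Delta_{\ell,k}$ and $\Delta_{k,\ell}+2$ need exactly the intermediate estimates you keep. The comparison you flag does hold, since it reduces to $\#_d|_\ell^{d}\leq\varsigma_\ell$: the $d$'s in rows $\geq\ell$ sit in distinct columns among the first $\varsigma_\ell$. One correction: $\varsigma_k-\#_d|_k^{d}$ is not the number of non-$d$ entries in rows $\geq k$. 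After the split it is $\varsigma_k$ minus the top-letter count, which is what the reduction to $[\varsigma_k,0,\ldots,0]$ controls.
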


\begin{proof}
    For the first inequality, we use~\cref{lem:taylor_pochhammer_symbol} on~\cref{thm:sector_wise_utility}. The second inequality follows directly from~\cref{thm:sector_wise_utility} after inserting \(m=1\). In both cases, we rewrite the reindexed spectrum using \(q_d=p_k\), \(q_i=p_i\) for \(i<k\), and \(q_i=p_{i+1}\) for \(k\leq i\leq d-1\).
\end{proof}

\subsection{Nonasymptotic analysis of the one-site utility}
\label{subsec:nonasymptotic_onecopy_supp}

\begin{theorem}
    \label{thm:bounds_one_utility}
    Let $\yd{\varsigma}\dot{=}[\varsigma_{1},\ldots,\varsigma_{d}]$, $m\geq 1$ and let $\yd{\mu}$ be defined by the overhang removal rule of~\cref{def:overhang_removal_rule}. Then the optimal sector-wise one-site utility satisfies the lower bound
    \begin{equation}\begin{aligned}
        \res{{^kf}}{\yd{\varsigma}}{\yd{m}}{\yd{\mu}}_{\mathrm{one}}
        \geq 1 - \frac{1}{\Delta_{k-1,k}+1}\sum_{i=1}^{k-1}\frac{p_{i}}{D_{i,k}} - \frac{1}{\Delta_{k,k+1}}\sum_{i=k+1}^{d}\frac{p_{i}}{D_{k,i}} - (i^\ast-k)\frac{m-\Delta_{k,k+1}}{m(\Delta_{k,k+1} + (i^\ast - k))} \, .
    \end{aligned}\end{equation}
    Now let $\yd{\lambda}$ be a valid environment irrep. Then we have the following upper bound
    \begin{equation}\begin{aligned}
        \res{{^kf}}{\yd{\varsigma}}{\yd{m}}{\yd{\lambda}}_{\mathrm{one}}
        \leq 1 - \frac{(i^\ast-k)}{\Delta_{k,i^\ast}+(i^\ast-k)}\frac{m-\Delta_{k,i^\ast}}{m}\mleft(1 - \frac{1}{\Delta_{k,k+1}}\mleft(1+\sum_{i=k+1}^{d}\frac{p_{i}}{D_{k,i}}\mright)\mright) \, .
    \end{aligned}\end{equation}
\end{theorem}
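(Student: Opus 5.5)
The plan is to start from the one-site decomposition \cref{equ:one_site_decomposition}, i.e.\ \cref{lem:one_site_channel_decomposition}. This writes the sector-wise one-site utility as a convex combination
\[
\sum_i \alpha_i\,\res{^kf}{\yd{\varsigma}}{\yd{1}}{\yd{\varsigma-\mathbf{e}_i}}_{\mathrm{all}},\qquad \alpha_i=\resSix{F}{\yd{\lambda}}{\yd{m-1}}{\yd{1}}{\yd{\varsigma-\mathbf{e}_i}}{\yd{m}}{\yd{\varsigma}}^2 .
\]
Both bounds then reduce to two ingredients: control of the $F$-symbol weights, and the $m=1$ all-site bounds of \cref{cor:sector_wise_utility_simplified}.

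\textbf{Lower bound.} For $\yd{\lambda}=\yd{\mu}$, I drop the off-target terms, which are nonnegative, and keep only $\alpha_k\,\res{^kf}{\yd{\varsigma}}{\yd{1}}{\yd{\varsigma-\mathbf{e}_k}}_{\mathrm{all}}$.

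The second factor is bounded below using \cref{cor:sector_wise_utility_simplified} with $m=1$. This gives $1-\frac{1}{\Delta_{k-1,k}+2}\sum_{i<k}\frac{p_i}{D_{i,k}}-\frac{1}{\Delta_{k,k+1}}\sum_{i>k}\frac{p_i}{D_{k,i}}$, which is at least the required expression with $\Delta_{k-1,k}+1$ in the first denominator.

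For $\alpha_k$, I use \cref{rmk:F_symbol_row_removal} with $i=k$:
\[
\alpha_k=\Bigl(\prod_{j=k+1}^{i^\ast}\frac{\Delta_{k,j}+j-k-1}{\Delta_{k,j}+j-k}\Bigr)\frac{m-\Delta_{k,k}+(i^\ast-k)}{m}.
\]
I would bound this by an almost-telescoping argument, either through \cref{lem:bounds_telescopic_products} or by direct term-by-term comparison. The aim is $\alpha_k\ge 1-(i^\ast-k)\frac{m-\Delta_{k,k+1}}{m(\Delta_{k,k+1}+i^\ast-k)}$.

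Concretely, each factor is $1-1/(\Delta_{k,j}+j-k)\ge 1-1/(\Delta_{k,k+1}+1)$, and the last factor is $1+(i^\ast-k)/m$. Combining these with $(1-x)^r\ge 1-rx$ and the fact that $\Delta_{k,j}\ge\Delta_{k,k+1}$ should give the stated form. Expanding the factors in $1/\Delta$ should then produce exactly $(i^\ast-k)(m-\Delta_{k,k+1})/(m(\Delta_{k,k+1}+i^\ast-k))$. Finally $(1-a)(1-b)\ge 1-a-b$ gives the lower bound.

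\textbf{Upper bound.} For an arbitrary $\yd{\lambda}$, I bound every term with $i\neq k$ by its off-target value from \cref{equ:cor_sector_wise_utility_off_target_upper_bound}. For the upper bound only the rows $i>k$ matter; the terms with $i<k$ are handled by the majorization.

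By \cref{lem:optimality_of_F_symbol}, the weights of $\yd{\lambda}$ restricted to $k,\dots,d$ are majorized by those of $\yd{\mu}$. Moreover, the on-target value exceeds all off-target values once $\Delta$ is large, and in any case $\res{^kf}{}{}{}_{\mathrm{all}}\le 1$. So I would bound
\[
f_{\mathrm{one}}\le 1-(1-\alpha_k(\yd{\mu}))\Bigl(1-\max_{i\ne k} f^{(i)}\Bigr),
\]
using $\sum_i\alpha_i=1$ and $\alpha_k(\yd{\lambda})\le\alpha_k(\yd{\mu})$. The off-target maximum is at most $\frac{1}{\Delta_{k,k+1}}\bigl(1+\sum_{i>k}\frac{p_i}{D_{k,i}}\bigr)$, since $\Delta_{k,\ell}+2\ge\Delta_{k,k+1}$ for $\ell>k$.

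It remains to upper-bound $\alpha_k(\yd{\mu})$, i.e.\ to show $1-\alpha_k(\yd{\mu})\ge\frac{(i^\ast-k)(m-\Delta_{k,i^\ast})}{(\Delta_{k,i^\ast}+i^\ast-k)m}$. This uses the opposite telescoping inequality, bounding each factor by $\frac{\Delta_{k,i^\ast}+i^\ast-k-1}{\Delta_{k,i^\ast}+i^\ast-k}$.

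\textbf{Main obstacle.} I expect the main difficulty to be the precise two-sided estimates on $\alpha_k$ from the product in \cref{rmk:F_symbol_row_removal}, that is, matching the exact constants claimed. The rows $i<k$ also need care in the upper bound: the majorization in \cref{lem:optimality_of_F_symbol} only covers indices $i\ge k$, so weights on rows $i<k$ must be handled by the crude bound $f\le 1$ through $\alpha_k$ monotonicity in case (I) of that lemma's proof.
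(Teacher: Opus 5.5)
Your architecture matches the paper's. You use \cref{lem:one_site_channel_decomposition}, then the lower bound $f_{\mathrm{one}}\ge\alpha_k f^{(k)}$ together with the $m=1$ case of \cref{cor:sector_wise_utility_simplified}. For the upper bound you use $f_{\mathrm{one}}\le\alpha_k+(1-\alpha_k)\max_{i\neq k}f^{(i)}$, which is monotone in $\alpha_k$, with $\alpha_k(\yd{\lambda})\le\alpha_k(\yd{\mu})$ from \cref{lem:optimality_of_F_symbol}.

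The gap is where you anticipated it: the two-sided bound on $\alpha_k(\yd{\mu})=\Pi\cdot\frac{m+r}{m}$, where $r\coloneqq i^\ast-k$ and $\Pi$ is the product in \cref{rmk:F_symbol_row_removal}. Because $1-\frac{X(m+r)}{(X+r)m}=\frac{r(m-X)}{m(X+r)}$, the two deficits in the statement are exactly what the sharp bounds $\Pi\ge\frac{\Delta_{k,k+1}}{\Delta_{k,k+1}+r}$ and $\Pi\le\frac{\Delta_{k,i^\ast}}{\Delta_{k,i^\ast}+r}$ produce. Your concrete estimates are strictly weaker than both once $r\ge2$.
\begin{itemize}
\item Lower bound: replacing every factor by $1-\frac{1}{\Delta_{k,k+1}+1}$ and using $(1-x)^r\ge1-rx$ gives only $\alpha_k\ge(1-\frac{r}{\Delta_{k,k+1}+1})(1+\frac rm)$. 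That is the larger deficit $\frac{r(m-\Delta_{k,k+1}-1+r)}{m(\Delta_{k,k+1}+1)}$, since $1-\frac{r}{X+1}\ge\frac{X}{X+r}$ holds iff $r\le1$. No expansion in $1/\Delta$ turns this into an exact inequality.
\item Upper bound: bounding every factor by the last one gives $\Pi\le(1-\frac1A)^r$ with $A=\Delta_{k,i^\ast}+r$. By Bernoulli, $(1-\frac1A)^r\ge1-\frac rA=\frac{\Delta_{k,i^\ast}}{\Delta_{k,i^\ast}+r}$, which is the wrong direction.
\end{itemize}

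The missing idea is to compare only the gaps while keeping the shifts $j-k$. Each factor $1-\frac{1}{\Delta_{k,j}+(j-k)}$ is increasing in $\Delta_{k,j}$, and $\Delta_{k,k+1}\le\Delta_{k,j}\le\Delta_{k,i^\ast}$ for $k<j\le i^\ast$. Hence, by exact telescoping,
\[
\frac{\Delta_{k,k+1}}{\Delta_{k,k+1}+r}=\prod_{s=1}^{r}\frac{\Delta_{k,k+1}+s-1}{\Delta_{k,k+1}+s}\le\Pi\le\prod_{s=1}^{r}\frac{\Delta_{k,i^\ast}+s-1}{\Delta_{k,i^\ast}+s}=\frac{\Delta_{k,i^\ast}}{\Delta_{k,i^\ast}+r}.
\]
This is what the paper extracts from \cref{lem:bounds_telescopic_products} with $A_j=\Delta_{k,j}+(j-k)$ and $B_j=1$. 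Since $m>\Delta_{k,i^\ast}\ge\Delta_{k,k+1}$ whenever $r\ge1$, both deficits are nonnegative, and the rest of your argument goes through as in the paper.

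Separately, your treatment of rows $i<k$ in the upper bound does not work as described. Using only $f^{(i)}\le1$ there gives $f_{\mathrm{one}}(\yd{\lambda})\le1-\bigl(\sum_{i>k}\alpha_i(\yd{\lambda})\bigr)\bigl(1-\max_{i>k}f^{(i)}\bigr)$. But \cref{lem:optimality_of_F_symbol} does not provide $\sum_{i>k}\alpha_i(\yd{\lambda})\ge1-\alpha_k(\yd{\mu})$ when $\yd{\lambda}$ removes boxes above row $k$. You must either include the $\ell<k$ bound of \cref{equ:cor_sector_wise_utility_off_target_upper_bound} in the maximum, or restrict to $\lambda_i=\varsigma_i$ for all $i<k$, in which case $\alpha_i(\yd{\lambda})=0$ for $i<k$. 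The paper's proof does not address this point either.
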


\begin{proof}
    We use~\cref{lem:one_site_channel_decomposition}, and we start by proving the lower and upper bounds. For any environment $\yd{\lambda}$, we have
    \begin{equation}\begin{aligned}
        &\resSix{F}{\yd{\lambda}}{\yd{m-1}}{\yd{1}}{\yd{\varsigma-\mathbf{e}_k}}{\yd{m}}{\yd{\varsigma}}^2 \res{{^kf}}{\yd{\varsigma}}{\yd{1}}{\yd{\varsigma-\mathbf{e}_k}}_{\mathrm{all}} \\
        \leq &\sum_{i=1}^{d}\resSix{F}{\yd{\lambda}}{\yd{m-1}}{\yd{1}}{\yd{\varsigma-\mathbf{e}_i}}{\yd{m}}{\yd{\varsigma}}^2 \res{{^kf}}{\yd{\varsigma}}{\yd{1}}{\yd{\varsigma-\mathbf{e}_i}}_{\mathrm{all}} = \res{{^kf}}{\yd{\varsigma}}{\yd{m}}{\yd{\lambda}}_{\mathrm{one}} \\
        \leq& \resSix{F}{\yd{\lambda}}{\yd{m-1}}{\yd{1}}{\yd{\varsigma-\mathbf{e}_k}}{\yd{m}}{\yd{\varsigma}}^2 + \mleft(1 - \resSix{F}{\yd{\lambda}}{\yd{m-1}}{\yd{1}}{\yd{\varsigma-\mathbf{e}_k}}{\yd{m}}{\yd{\varsigma}}^2\mright) \max_{\substack{\yd{\varsigma-\mathbf{e}_i}\vdash_d (n,1) \\ i \neq k}} \res{{^kf}}{\yd{\varsigma}}{\yd{1}}{\yd{\varsigma-\mathbf{e}_i}}_{\mathrm{all}} \, ,
        \label{equ:proof_bounds_one_fidelity_upper_bound}
    \end{aligned}\end{equation}
    where we used that any sector-wise utility is bounded above by $1$, and, for the second inequality, the $F$-symbols add up to $1$ (see~\cref{rem:F_symbols_isometry}). We set $i=k$, and recall~\cref{rmk:F_symbol_row_removal}, that is
    \begin{equation}\begin{aligned}
        \resSix{F}{\yd{\mu}}{\yd{m-1}}{\yd{1}}{\yd{\varsigma-\mathbf{e}_k}}{\yd{m}}{\yd{\varsigma}}^2 = \mleft(\prod_{j=k+1}^{i^\ast} \frac{\Delta_{k,j}+(j-k)-1}{\Delta_{k,j}+(j-k)}\mright)\mleft(\frac{m + (i^\ast-k)}{m}\mright) \, ,
    \end{aligned}\end{equation}
    and we use~\cref{lem:bounds_telescopic_products} for
    \begin{equation}
        A_{j} = \Delta_{k,j}+(j-k) \, , \quad B_{j} = 1 \, ,
    \end{equation}
    we find
    \begin{equation}
        \frac{\Delta_{k,k+1}}{\Delta_{k,k+1} + (i^{\ast}-k)} = \frac{A_{k+1} - B_{k+1}}{A_{k+1}+\sum_{j=k+2}^{i^{\ast}}B_{j}} \leq \prod_{j=k+1}^{i^\ast} \frac{\Delta_{k,j}+(j-k)-1}{\Delta_{k,j}+(j-k)} \leq \frac{A_{i^{\ast}} - B_{k+1}}{A_{i^{\ast}}+\sum_{j=k+2}^{i^{\ast}}B_{j}} = \frac{\Delta_{k,i^{\ast}}}{\Delta_{k,i^{\ast}} + (i^{\ast}-k)} \, .
    \end{equation}
    Thus, we have
    \begin{equation}\begin{aligned}
        \frac{\Delta_{k,k+1}}{\Delta_{k,k+1}+(i^\ast-k)}\frac{m+(i^\ast-k)}{m} \leq \resSix{F}{\yd{\mu}}{\yd{m-1}}{\yd{1}}{\yd{\varsigma-\mathbf{e}_k}}{\yd{m}}{\yd{\varsigma}}^2 \leq \frac{\Delta_{k,i^\ast}}{\Delta_{k,i^\ast}+(i^\ast-k)}\frac{m+(i^\ast-k)}{m} \, .
    \end{aligned}\end{equation}
    For the lower bound, it follows that
    \begin{equation}
        1 - (i^\ast-k)\frac{m-\Delta_{k,k+1}}{m(\Delta_{k,k+1} + (i^\ast - k))} \leq \frac{\Delta_{k,k+1}}{\Delta_{k,k+1}+(i^\ast-k)}\frac{m+(i^\ast-k)}{m} \, .
    \end{equation}
    For the upper bound, note that~\cref{equ:proof_bounds_one_fidelity_upper_bound} is monotone in $\resSix{F}{\yd{\lambda}}{\yd{m-1}}{\yd{1}}{\yd{\varsigma-\mathbf{e}_k}}{\yd{m}}{\yd{\varsigma}}^2$, since the maximum over the fidelities is always upper bounded by $1$. We can then apply~\cref{lem:optimality_of_F_symbol} together with~\cref{thm:sector_wise_utility} to obtain the result.
\end{proof}

\subsection{Nonasymptotic overall bounds}
\label{subsec:nonasymptotic_overall_bounds_supp}

\begin{corollary}[Polynomial tail bound]
\label{cor:poly_tail_bound}
Choose the parameter $\alpha = c\sqrt{\frac{\ln n}{n}}$ with $c = 12$. For $n \ge e$,
\begin{equation}
\Pr\mleft(\,|\overline{\Delta}_{k-1,k}-D_{k-1,k}|\ge \alpha\ \lor\ \ |\overline{\Delta}_{k,k+1}-D_{k,k+1}|\ge \alpha\,\mright)
\le
4n^{-3/2}.
\label{eq:tail_poly_bound_nminus2}
\end{equation}
For $k\!=\!1$ or $d$, only one row difference is needed and the factor $4$ becomes $2$.
\end{corollary}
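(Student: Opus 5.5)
The plan is to substitute the choice of $\alpha$ directly into \cref{lem:upper_bound_probability_two_row_differences}, which for $\alpha>4/\sqrt{n}$ gives the union bound
\begin{equation}
\Pr\mleft(\,|\overline{\Delta}_{k-1,k}-D_{k-1,k}|\ge \alpha\ \lor\ |\overline{\Delta}_{k,k+1}-D_{k,k+1}|\ge \alpha\,\mright)\le 4e^{-\frac{(\sqrt{n}\alpha-4)^2}{32}} \, .
\end{equation}
With $\alpha=12\sqrt{\ln n/n}$ we get $\sqrt{n}\alpha=12\sqrt{\ln n}$.

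The first step is to check the hypothesis $\alpha>4/\sqrt{n}$. This is equivalent to $12\sqrt{\ln n}>4$, i.e. $\ln n>1/9$, which holds for $n\ge e$ since then $\ln n\ge 1$.

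The second step is to show that the exponent is at least $\tfrac32\ln n$, i.e.
\begin{equation}
(12\sqrt{\ln n}-4)^2\ge 48\ln n \, .
\end{equation}
Set $x=\sqrt{\ln n}\ge 1$. The inequality becomes $144x^2-96x+16\ge 48x^2$, i.e. $96x^2-96x+16\ge 0$. This holds for $x\ge 1$, since $96x(x-1)\ge 0$ there. Hence $e^{-(\sqrt n\alpha-4)^2/32}\le e^{-\frac32\ln n}=n^{-3/2}$, and multiplying by the prefactor $4$ gives \cref{eq:tail_poly_bound_nminus2}.

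For $k=1$ or $k=d$, only one of the two row differences, $\Delta_{1,2}$ or $\Delta_{d-1,d}$, is relevant. In that case we apply the single-gap bound \cref{thm:upper_bound_probability_row_differences} directly instead of the union bound, which gives the prefactor $2$ with the same exponent estimate. The only step needing care is the quadratic inequality in $x$, and it holds on the whole range $x\ge1$.
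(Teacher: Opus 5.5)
Your proof is correct and takes the same route as the paper: substitute $\alpha$ into the union bound $4e^{-(\sqrt{n}\alpha-4)^2/32}$ and show the exponent is at least $\tfrac32\ln n$. The paper expands the square and bounds the cross term using $\sqrt{\ln n}\le\ln n$, which is your inequality $96x(x-1)+16\ge 0$ in different packaging; your explicit check of the hypothesis $\alpha>4/\sqrt{n}$ is a detail the paper leaves implicit.
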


\begin{proof}
\Cref{thm:upper_bound_probability_row_differences} gives
\begin{equation}
\begin{aligned}
\label{eq:tail_expanded_alpha_choice}
\Pr\mleft(\,|\overline{\Delta}_{k-1,k}-D_{k-1,k}|\ge \alpha\ \lor \ |\overline{\Delta}_{k,k+1}-D_{k,k+1}|\ge \alpha\,\mright)
&\le
4e^{-\frac{(c\sqrt{\ln n}-4)^2}{32}}\\
&=
4e^{-\frac{c^2}{32}\ln n+\frac{c}{4}\sqrt{\ln n}-\frac12}\\
&=
4e^{-1/2}e^{\frac{c}{4}\sqrt{\ln n}}\,n^{-c^2/32},
\end{aligned}
\end{equation}
For $n\ge e$ we have $\sqrt{\ln n}\le \ln n$, hence
\begin{equation}
e^{\frac{c}{4}\sqrt{\ln n}}\le e^{\frac{c}{4}\ln n}=n^{c/4},
\end{equation}
and therefore for all $n\ge e$,
\begin{equation}
\Pr\mleft(\,|\overline{\Delta}_{k-1,k}-D_{k-1,k}|\ge \alpha\ \lor \ |\overline{\Delta}_{k,k+1}-D_{k,k+1}|\ge \alpha\,\mright)
\le
4\,e^{-1/2}\,n^{-c^2/32+c/4}
\le
4\,n^{-c^2/32+c/4}.
\label{eq:tail_poly_bound_general_c}
\end{equation}
Choosing $c=12$ so that $-c^2/32+c/4=-3/2$ yields the estimate.
\end{proof}

For nonasymptotic analysis, we derive a more precise bound for the worst-case utility. We show that:

\begin{lemma}[Overall lower bound]
\label{lem:overall_utility_bound}
Let \(J\coloneqq \{k-1,k\}\). If the sector-wise utility has a lower bound of the form
$f^{\yd{\varsigma}}(\mathcal T)\geq 1-\sum_{j\in J}\frac{c_{j}}{n}\frac{1}{\overline{\Delta}_{j,j+1}}$, then the overall utility has a lower bound
\begin{equation}
\label{eq:F_collect_two_gaps}
\mathcal F\mleft(\mathcal T\mright)
\;\ge\;
1-\sum_{j\in J}\frac{c_{j}}{nD_{j,j+1}}
-
R,
\end{equation}
where
\begin{equation}
\label{eq:R_collect_two_gaps}
R
\le
\frac{4\sqrt{\ln n}}{n^{3/2}}
\mleft(
1+6\frac{\sum_{j\in J}c_j}{D_{k,\mathrm{min}}^2}
\mright).
\end{equation}
This holds whenever $n$ is large enough:
\begin{equation}
n \;\ge\; \frac{16c^2}{D_{k,\mathrm{min}}^2}\,\ln\!\Big(\frac{16c^2}{D_{k,\mathrm{min}}^2}\Big),
\qquad
D_{k,\mathrm{min}}=\min_{j\in J}D_{j,j+1},
\quad
c=12.
\end{equation}
\end{lemma}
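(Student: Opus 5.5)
We split the Schur--Weyl average $\mathcal F(\mathcal T)=\swavg{f^{\yd{\varsigma}}(\mathcal T)}{\boldsymbol p}{n}$ into a typical event and its complement. The typical event is $G\coloneqq\{|\overline{\Delta}_{j,j+1}-D_{j,j+1}|<\alpha \text{ for } j\in J\}$ with $\alpha=c\sqrt{\ln n/n}$ and $c=12$. On $G^\complement$ we bound the utility below by $0$. On $G$ we use the assumed sector-wise bound. Since $\mathcal F(\mathcal T)\ge\Pr(G)-\sum_{j}\frac{c_j}{n}\swavg{\mathbf 1_G/\overline\Delta_{j,j+1}}{\boldsymbol p}{n}$, we obtain
\begin{equation}
\mathcal F(\mathcal T)\ \ge\ 1-\Pr(G^\complement)-\sum_{j\in J}\frac{c_j}{n}\,\sup_{G}\frac{1}{\overline{\Delta}_{j,j+1}} .
\end{equation}
\cref{cor:poly_tail_bound} gives $\Pr(G^\complement)\le 4n^{-3/2}$. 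This is at most $\frac{4\sqrt{\ln n}}{n^{3/2}}$ for $n\ge e$, which accounts for the ``$1$'' term inside $R$.

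Next we control $1/\overline\Delta_{j,j+1}$ on $G$. There $\overline\Delta_{j,j+1}\ge D_{j,j+1}-\alpha$. The stated threshold on $n$ is chosen to guarantee $\alpha\le D_{k,\min}/4\le D_{j,j+1}/4$. Indeed, $n\ge x\ln x$ with $x=16c^2/D_{k,\min}^2$ implies $\ln n/n\le 1/x$, since $t\mapsto \ln t/t$ is decreasing for $t\ge e$ and $\ln(x\ln x)\le 2\ln x$ up to the harmless constant. For $\alpha\le D/4$ we use
\begin{equation}
\frac{1}{D-\alpha}=\frac1D+\frac{\alpha}{D(D-\alpha)}\le \frac1D+\frac{4\alpha}{3D^2}.
\end{equation}
This yields
\begin{equation}
\frac{c_j}{n}\frac{1}{\overline\Delta_{j,j+1}}\le \frac{c_j}{nD_{j,j+1}}+\frac{4\cdot 12\sqrt{\ln n}}{3\,n^{3/2}}\frac{c_j}{D_{k,\min}^2}.
\end{equation}
The last coefficient equals $\frac{4\sqrt{\ln n}}{n^{3/2}}\cdot\frac{4c_j}{D_{k,\min}^2}$, which is within the claimed factor $6$.

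Summing over $j\in J$ and adding the tail term produces \cref{eq:F_collect_two_gaps} with $R$ bounded as in \cref{eq:R_collect_two_gaps}. For $k=1$ or $k=d$ only one gap appears and the tail constant halves, which only helps. We interpret the sector-wise bound as vacuous where $\overline\Delta_{j,j+1}$ is small; this causes no difficulty because those sectors lie in $G^\complement$ and are already bounded by $0$.

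We expect the main obstacle to be the bookkeeping in the threshold condition. Specifically, one must verify that $n\ge x\ln x$ really implies $c\sqrt{\ln n/n}\le D_{k,\min}/4$ with the stated constants; if it is slightly off, we would absorb the slack into the factor $6$ by using the weaker inequality $\alpha\le D/2$, which gives $\frac{1}{D-\alpha}\le \frac1D+\frac{2\alpha}{D^2}$. A secondary point is the case $D_{j,j+1}$ with $j=k-1$ or $j=k$ where the indexing of $J$ matches the two target gaps. This is immediate from \cref{thm:upper_bound_probability_row_differences}, so no further work is needed there.
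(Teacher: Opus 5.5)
Your argument is correct and is essentially the paper's proof: restrict to the concentration event with $\alpha=12\sqrt{\ln n/n}$, charge $4n^{-3/2}\le 4\sqrt{\ln n}/n^{3/2}$ to its complement via \cref{cor:poly_tail_bound}, and expand $1/(D_{j,j+1}-\alpha)$ on the good event.

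One correction: the threshold does \emph{not} give $\alpha\le D_{k,\mathrm{min}}/4$. At $n=x\ln x$ one has $\ln n/n=\frac{1}{x}\bigl(1+\frac{\ln\ln x}{\ln x}\bigr)>\frac{1}{x}$. The threshold only gives $\ln n/n\le 2/x$, i.e.\ $\alpha\le D_{k,\mathrm{min}}/(2\sqrt{2})$. So your fallback $\alpha\le D/2$ with $\frac{1}{D-\alpha}\le\frac{1}{D}+\frac{2\alpha}{D^2}$ is the step actually needed. It is exactly the paper's route, and it produces the coefficient $2c/4=6$ on the nose.
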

\begin{proof}
We start by upper bounding the factor $\frac{1}{\overline{\Delta}_{i,i+1}}$ for $i\in J$. When the condition $\alpha\le\ \frac{D_{i,i+1}}{2}$ holds,
\begin{equation}
\frac{1}{D_{i,i+1}-\alpha}
\le
\frac{1}{D_{i,i+1}}+\frac{\alpha}{D_{i,i+1}(D_{i,i+1}-\alpha)}
\le
\frac{1}{D_{i,i+1}}+\frac{2\alpha}{D_{i,i+1}^2}
\le
\frac{1}{D_{i,i+1}}
+\frac{2c}{D_{i,i+1}^2}\sqrt{\frac{\ln n}{n}}
\end{equation}
To ensure \(\alpha\le D_{i,i+1}/2\) for all \(i\in J\), a simple sufficient condition is
\begin{equation}
n \;\ge\; \frac{16c^2}{D_{i,i+1}^2}\,\ln\!\Big(\frac{16c^2}{D_{i,i+1}^2}\Big),
\qquad i\in J.
\end{equation}
This follows from the fact that $n/\ln n$ is increasing for $n\ge e$ and
\(
\frac{\ln n}{n}\le \frac{D_{i,i+1}^2}{4c^2}.
\)
Collecting the pieces, we obtain
\begin{equation}
\label{eq:F_collect_two_gaps_proof}
\mathcal F\mleft(\mathcal T\mright)
\;\ge\;
\Bigl(1-\sum_{j\in J}\frac{c_{j}}{n}\frac{1}{D_{j,j+1}-\alpha}\Bigr)\Bigl(1-4n^{-3/2}\Bigr)
\;\ge\;
1-\sum_{j\in J}\frac{c_{j}}{nD_{j,j+1}}
-
R,
\end{equation}
where
\begin{equation}
\label{eq:R_collect_two_gaps_proof}
R
=
\frac{4}{n^{3/2}}
+\sum_{j\in J}\mleft(\frac{2\,c\,c_{j}}{D_{j,j+1}^2}\frac{\sqrt{\ln n}}{n^{3/2}}\mright)
\;\le\;
\frac{4}{n^{3/2}}
+\frac{24(\sum_{j\in J}c_j)}{D_{k,\mathrm{min}}^2}\frac{\sqrt{\ln n}}{n^{3/2}}
\;\le\;
\frac{4\sqrt{\ln n}}{n^{3/2}}
\mleft(
1+6\frac{\sum_{j\in J}c_j}{D_{k,\mathrm{min}}^2}
\mright),
\end{equation}
here the last inequality uses $1\le \sqrt{\ln n}$, which holds for $n\ge e$, which also holds for the overall general condition on $n$.
\end{proof}

\begin{theorem}[Nonasymptotic all-site utility bound]
\label{thm:nonasymptotic_all_site_fidelity_bound}
    Set \(D_{k,\mathrm{min}}\coloneqq \min\{D_{k-1,k},D_{k,k+1}\}\), \(c=12\), and \(N_0\coloneqq 16c^2D_{k,\mathrm{min}}^{-2}\ln(16c^2D_{k,\mathrm{min}}^{-2})\). Assume \(n\ge N_0\). Then
\begin{equation}
    {^k}\mathcal{F}_\mathrm{all}\geq 1 - \frac{m}{nD_{k,\mathrm{min}}}\sum_{i\neq k}\frac{p_i}{|D_{k,i}|}-R,
\end{equation}
where $R$ is bounded by~\cref{eq:R_collect_two_gaps} with
\begin{equation}
    c_{k-1}=m\sum_{i=1}^{k-1}\frac{p_i}{D_{i,k}},
    \qquad
    c_k=m\sum_{i=k+1}^{d}\frac{p_i}{D_{k,i}}.
\end{equation}
In particular, this proves~\cref{equ:utility_bound_non_asymptotic}.
\end{theorem}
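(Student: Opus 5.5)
The plan is to combine the sector-wise lower bound of \cref{cor:sector_wise_utility_simplified} with the overall averaging lemma \cref{lem:overall_utility_bound}. The overall all-site utility of the overhang removal protocol is the SW average of the sector-wise utilities. On sectors with $\Delta_{k,k+1}\ge m$, the overhang rule removes all $m$ boxes from row $k$, and \cref{cor:sector_wise_utility_simplified} gives
\begin{equation*}
{}^kf_{\mathrm{all}}\geq 1-\frac{m}{\Delta_{k-1,k}+m+1}\sum_{i<k}\frac{p_i}{D_{i,k}}-\frac{m}{\Delta_{k,k+1}}\sum_{i>k}\frac{p_i}{D_{k,i}}.
\end{equation*}
Dropping the $m+1$ in the first denominator only weakens the bound. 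Writing $\Delta_{j,j+1}=n\overline{\Delta}_{j,j+1}$, this becomes exactly the hypothesis of \cref{lem:overall_utility_bound}, namely $1-\sum_{j\in J}\frac{c_j}{n\overline{\Delta}_{j,j+1}}$ with $c_{k-1},c_k$ as in the statement. The edge cases $k=1$ and $k=d$ are handled by dropping the corresponding term, and there the tail factor $4$ becomes $2$.

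Next, apply \cref{lem:overall_utility_bound} on the typical event $\{|\overline{\Delta}_{j,j+1}-D_{j,j+1}|<\alpha,\ j\in J\}$ with $\alpha=12\sqrt{\ln n/n}$. On this event the assumption $n\ge N_0$ gives $\alpha\le D_{j,j+1}/2$. It remains to check that $\Delta_{k,k+1}\ge m$ there, so that the sector bound is applicable; otherwise the bound is trivial because the right-hand side of the theorem is already $\le 0$. Indeed, $\Delta_{k,k+1}\ge nD_{k,k+1}/2$, so whenever $m>nD_{k,k+1}/2$ we have
\begin{equation*}
\frac{m}{nD_{k,\min}}\sum_{i\neq k}\frac{p_i}{|D_{k,i}|}\ge \frac12\cdot\frac{p_{k+1}}{D_{k,k+1}}\cdots,
\end{equation*}
and the key step is to note that then the claimed lower bound is at most $0$. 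This is the main obstacle: verifying this cleanly. My first attempt is to use that ${}^kf\ge 0$ and that the statement's leading term exceeds $1$ in that regime. If that fails, I would add the condition $m\le nD_{k,k+1}/2$, as done for the one-site bound. On the atypical event, use $f\ge0$ together with the tail bound of \cref{cor:poly_tail_bound}, which is already built into $R$.

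Finally, bound $\sum_j c_j/(nD_{j,j+1})$ by $\frac{m}{nD_{k,\min}}\sum_{i\neq k}\frac{p_i}{|D_{k,i}|}$ using $D_{k-1,k},D_{k,k+1}\ge D_{k,\min}$. The remainder $R$ is exactly \cref{eq:R_collect_two_gaps}. Substituting $\sum_j c_j\le m\sum_{i\neq k}p_i/|D_{k,i}|\le m\sum_i p_i/(|D_{k,i}|D_{k,\min})$ yields \cref{equ:utility_remainder_bound_non_asymptotic}, and passing to the risk gives \cref{equ:utility_bound_non_asymptotic}.
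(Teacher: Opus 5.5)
Your route is the paper's. Its proof is the single remark that \cref{cor:sector_wise_utility_simplified} supplies the hypothesis of \cref{lem:overall_utility_bound} (drop the $m+1$, write $\Delta_{j,j+1}=n\overline{\Delta}_{j,j+1}$), followed by the same $D_{k,\mathrm{min}}$ and remainder bookkeeping you describe. For the remainder, simply use $\sum_j c_j=m\sum_{i\neq k}p_i/|D_{k,i}|$ and $m\ge1$. The issue you flag is genuine, and the paper's one-line proof passes over it. \cref{cor:sector_wise_utility_simplified} requires $m\le\Delta_{k,k+1}$, i.e.\ that the overhang rule is pure row-$k$ removal, and $n\ge N_0$ alone does not guarantee this on the typical event.

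Your first fix does not work. For $m>nD_{k,k+1}/2$ the subtracted term $\frac{m}{nD_{k,\mathrm{min}}}\sum_{i\neq k}p_i/|D_{k,i}|$ need not be anywhere near $1$. The sum $\sum_{i\neq k}p_i/|D_{k,i}|$ is small for weak noise and zero for a pure state, so the right-hand side stays close to $1$.

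In fact no argument can close this regime, because for $k<d$ the inequality is false there. Take $d=2$, $k=1$, $\boldsymbol{p}=(1,0)$, $m=2n$, $n\ge N_0$. Then $c_k=0$, and the statement asserts ${}^1\mathcal{F}_{\mathrm{all}}\ge 1-4\sqrt{\ln n}/n^{3/2}$. Werner's optimal-cloning bound (the benchmark listed in \cref{tab:qpa_existing_bounds}) gives ${}^1\mathcal{F}_{\mathrm{all}}\le (n+1)/(2n+1)$ for every protocol. By continuity the same failure occurs for $\boldsymbol{p}=(1-\eta,\eta)$ with small $\eta>0$.

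The paper's own \cref{equ:extensive_asymptote} agrees. For $\boldsymbol{p}=(0.99,0.01)$ and $m=n$ (so $R=1>D_{1,2}$) it gives ${}^1\mathcal{F}_{\mathrm{all}}\to D_{1,2}^2/p_1\approx 0.970$. The theorem would instead demand at least $1-p_2/D_{1,2}^2-o(1)\approx 0.990-o(1)$.

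So your fallback is not optional; it is the correct repair. Add a hypothesis such as $m\le nD_{k,k+1}/2$. This mirrors $n\ge 2m/D_{k,k+1}$ in \cref{thm:nonasymptotic_one_site_fidelity_bound} and the restriction $m<nD_{k,k+1}$ assumed in the main-text outline, and it is vacuous for $k=d$, where $\Delta_{d,d+1}=\infty$. With it, the typical event gives $\Delta_{k,k+1}>n(D_{k,k+1}-\alpha)\ge nD_{k,k+1}/2\ge m$, so the corollary applies. Atypical sectors are absorbed by $f\ge 0$ and \cref{cor:poly_tail_bound}, and your argument is then complete. That restricted statement is what the paper's proof actually establishes; the unrestricted one should not be claimed.
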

\begin{proof}
    The lower bound in~\cref{cor:sector_wise_utility_simplified} implies the hypothesis of~\cref{lem:overall_utility_bound} with the constants above, since $\Delta_{k-1,k}+m+1\geq \Delta_{k-1,k}$ and $\Delta_{j,j+1}=n\overline{\Delta}_{j,j+1}$.
\end{proof}

\begin{theorem}[Nonasymptotic one-site utility bound]
\label{thm:nonasymptotic_one_site_fidelity_bound}
Set \(D_{k,\mathrm{min}}\coloneqq \min\{D_{k-1,k},D_{k,k+1}\}\), \(c=12\), and
\(
N_0\coloneqq 16c^2D_{k,\mathrm{min}}^{-2}\ln(16c^2D_{k,\mathrm{min}}^{-2}).
\)
Assume
\(
n\ge N_0\), and \(
n\ge \frac{2m}{D_{k,k+1}}.
\)
Then the one-site utility satisfies
\begin{equation}
\label{equ:one_site_non_asymptotic_fidelity_bound}
{^k}\mathcal{F}_{\mathrm{one}}
\ge
1
-
\frac{1}{nD_{k-1,k}}
\sum_{i=1}^{k-1}\frac{p_i}{p_i-p_k}
-
\frac{1}{nD_{k,k+1}}
\sum_{i=k+1}^{d}\frac{p_i}{p_k-p_i}
-
R_{\mathrm{one}},
\end{equation}
where \(R_{\mathrm{one}}\) is bounded by~\cref{eq:R_collect_two_gaps} with \(c_{k-1}=\sum_{i=1}^{k-1}p_i/(p_i-p_k)\) and \(c_k=\sum_{i=k+1}^{d}p_i/(p_k-p_i)\). Equivalently,
\begin{equation}
\label{eq:one_site_nonasymptotic_remainder}
R_{\mathrm{one}}
\le
\frac{4\sqrt{\ln n}}{n^{3/2}}
\left(
1+
6\frac{1}{D_{k,\mathrm{min}}^2}
\sum_{i\neq k}\frac{p_i}{|D_{k,i}|}
\right).
\end{equation}
\end{theorem}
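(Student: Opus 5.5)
The proof mirrors that of \cref{thm:nonasymptotic_all_site_fidelity_bound}. We first obtain a sector-wise one-site lower bound of the form required by \cref{lem:overall_utility_bound}, on every sector that the concentration argument treats as typical, and then average over the SW distribution.

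\emph{Sector-wise bound.} By \cref{lem:one_site_channel_decomposition} we have
\[
\res{{^kf}}{\yd{\varsigma}}{\yd{m}}{\yd{\mu}}_{\mathrm{one}}\ \ge\ \resSix{F}{\yd{\mu}}{\yd{m-1}}{\yd{1}}{\yd{\varsigma-\mathbf{e}_k}}{\yd{m}}{\yd{\varsigma}}^2\,\res{{^kf}}{\yd{\varsigma}}{\yd{1}}{\yd{\varsigma-\mathbf{e}_k}}_{\mathrm{all}},
\]
because all summands are nonnegative. On sectors with $\Delta_{k,k+1}\ge m$ the overhang removal rule removes all $m$ boxes from row $k$. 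In that case $i^\ast=k$, and \cref{rmk:F_symbol_row_removal} gives $F^2=1$. Hence the one-site utility equals the $m=1$ all-site utility. \Cref{cor:sector_wise_utility_simplified} with $m=1$ then yields
\[
f\ \ge\ 1-\frac{1}{\Delta_{k-1,k}+2}\sum_{i<k}\frac{p_i}{D_{i,k}}-\frac{1}{\Delta_{k,k+1}}\sum_{i>k}\frac{p_i}{D_{k,i}}.
\]
Using $\Delta_{k-1,k}+2\ge\Delta_{k-1,k}=n\overline\Delta_{k-1,k}$, this is exactly the hypothesis of \cref{lem:overall_utility_bound} with $c_{k-1}=\sum_{i<k}p_i/(p_i-p_k)$ and $c_k=\sum_{i>k}p_i/(p_k-p_i)$.

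\emph{Checking that typical sectors satisfy $\Delta_{k,k+1}\ge m$.} In the proof of \cref{lem:overall_utility_bound}, the good event is $|\overline\Delta_{j,j+1}-D_{j,j+1}|<\alpha$ with $\alpha\le D_{j,j+1}/2$; the latter is guaranteed by $n\ge N_0$. On this event $\Delta_{k,k+1}\ge nD_{k,k+1}/2\ge m$, where the last step uses the extra assumption $n\ge 2m/D_{k,k+1}$. So the sector-wise bound above holds on every typical sector. On the atypical event we only use $f\ge0$, whose probability is at most $4n^{-3/2}$ by \cref{cor:poly_tail_bound}.

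\emph{Averaging.} Repeating the computation in the proof of \cref{lem:overall_utility_bound} gives \cref{equ:one_site_non_asymptotic_fidelity_bound}, with $R_{\mathrm{one}}$ bounded by \cref{eq:R_collect_two_gaps}. Finally, $c_{k-1}+c_k=\sum_{i\neq k}p_i/|D_{k,i}|$, which gives \cref{eq:one_site_nonasymptotic_remainder}.

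The main point needing care is the second step. The lemma as stated assumes the sector-wise bound on all typical sectors, but here it is only available when $\Delta_{k,k+1}\ge m$. One must confirm that the lemma's typical event already implies this, which is precisely what the hypothesis $n\ge 2m/D_{k,k+1}$ provides. If $k=1$ or $k=d$, the missing gap term is simply absent, and the factor $4$ in the tail bound becomes $2$.
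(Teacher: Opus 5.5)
Your proposal is correct and is essentially the paper's proof. The paper applies the sector-wise bound of \cref{thm:bounds_one_utility}, notes that on the good concentration event the conditions $n\ge N_0$ and $n\ge 2m/D_{k,k+1}$ force $\Delta_{k,k+1}\ge m$ (so its correction term drops out), and then invokes \cref{lem:overall_utility_bound}; you simply unpack \cref{thm:bounds_one_utility} in this regime, where the $i=k$ $F$-symbol equals $1$ and the bound reduces to the $m=1$ case of \cref{cor:sector_wise_utility_simplified}, which is exactly how that theorem's lower bound is proved.
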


\begin{proof}
On the good concentration event used in~\cref{lem:overall_utility_bound}, the condition \(n\ge N_0\) gives \(\overline{\Delta}_{k,k+1}\ge D_{k,k+1}-12\sqrt{\ln n/n}\ge D_{k,k+1}/2\). Together with \(n\ge 2m/D_{k,k+1}\), this implies \(\Delta_{k,k+1}=n\overline{\Delta}_{k,k+1}\ge m\). Hence the correction term being subtracted in~\cref{thm:bounds_one_utility} is nonpositive, so dropping it only weakens the lower bound. The remaining sector-wise lower bound satisfies the hypotheses of~\cref{lem:overall_utility_bound} with the constants above, so applying that lemma gives the stated utility bound.
\end{proof}

\putbib[qpa_sm.bib]
\end{bibunit}
\end{document}

%% file: notations.tex
\usepackage{mleftright}
\usepackage{xparse}

\ExplSyntaxOn
\cs_set_protected:Npn \coh_asymp_print:nnnn #1#2#3#4
  {
    #1
    \tl_if_blank:nF {#3} { \sp{(#3)} }
    \tl_if_blank:nF {#2} { \sb{#2} }
    \mleft( #4 \mright)
  }

\DeclareDocumentCommand{\bigO}{O{} O{} m}
  {
    \ensuremath{\coh_asymp_print:nnnn {O} {#1} {#2} {#3}}
  }
\DeclareDocumentCommand{\lilo}{O{} O{} m}
  {
    \ensuremath{\coh_asymp_print:nnnn {o} {#1} {#2} {#3}}
  }
\DeclareDocumentCommand{\bigOm}{O{} O{} m}
  {
    \ensuremath{\coh_asymp_print:nnnn {\Omega} {#1} {#2} {#3}}
  }
\DeclareDocumentCommand{\lilom}{O{} O{} m}
  {
    \ensuremath{\coh_asymp_print:nnnn {\omega} {#1} {#2} {#3}}
  }
\DeclareDocumentCommand{\bigTheta}{O{} O{} m}
  {
    \ensuremath{\coh_asymp_print:nnnn {\Theta} {#1} {#2} {#3}}
  }
\ExplSyntaxOff

\newcommand{\yd}[1]{{\mleft[#1\mright]}}
\newcommand{\yt}[1]{{\mleft(#1\mright)}}
\newcommand{\wt}[1]{{\mleft\{#1\mright\}}}

\newcommand{\graphpath}[2]{\langle #1,#2 \rangle}

\def\multiset#1#2{\ensuremath{\mleft(\kern-.3em\mleft(\genfrac{}{}{0pt}{}{#1}{#2}\mright)\kern-.3em\mright)}}

\DeclarePairedDelimiter\ceil{\lceil}{\rceil}
\DeclarePairedDelimiter\floor{\lfloor}{\rfloor}

\newcommand{\triple}[3]{%
  {\begin{array}{@{}c@{}c@{}} & #1\,#2\\ & #3 \end{array}}%
}

\newcommand{\res}[4]{%
  #1^{\scalebox{0.6}{$\triple{#2}{#3}{#4}$}}%
}

\newcommand{\tripleSix}[6]{%
  {\begin{array}{@{}c@{}c@{}}%
    & #1\quad #2\quad #3\\%
    & \quad #4\,#5\\%
    & \quad\quad #6%
  \end{array}}%
}

\newcommand{\intertwiner}[5]{#1^{#2\xrightarrow{#5}#3\,#4}}
\newcommand{\intertwineradj}[5]{#1^{#3\,#4\xrightarrow{#5}#2}}

\newcommand\ue{\mathrel{\bullet\mkern-3mu{-}\mkern-3mu\bullet}}

\NewDocumentCommand{\extChannel}{m m m m o}{%
  #1^{#2 \xrightarrow{\IfValueTF{#5}{#4,#5}{#4}} \,#3}%
}

\newcommand{\resSix}[7]{%
  #1\mleft[{\scalebox{0.6}{$\tripleSix{#2}{#3}{#4}{#5}{#6}{#7}$}}\mright]%
}

\newcommand{\rpoch}[2]{\mleft(#1\mright)^{\overline{#2}}} %
\newcommand{\fpoch}[2]{\mleft(#1\mright)^{\underline{#2}}} %

\newcommand{\lw}[0]{{\rm lw}}
\newcommand{\hw}[0]{{\rm hw}}

\NewDocumentCommand{\schur}{o m o}{%
  \ensuremath{s^{#2}\IfValueT{#1}{_{#1}}\IfValueT{#3}{\!\mleft(#3\mright)}}%
}

\NewDocumentCommand{\weylavg}{m m m}{%
  \ensuremath{\mleft\langle #1 \mright\rangle_{\mathrm{W}\mleft(#2,#3\mright)}}%
}
\NewDocumentCommand{\swavg}{m m m}{%
  \ensuremath{\mleft\langle #1 \mright\rangle_{\mathrm{SW}\mleft(#2,#3\mright)}}%
}

\NewDocumentCommand{\swprob}{m m m}{%
  \ensuremath{\Pr\nolimits_{\mathrm{SW}\mleft(#2,#3\mright)}\mleft[#1\mright]}%
}

\usepackage{xcolor}
\definecolor{lavender}{rgb}{0.64,0.47,0.87}
\definecolor{darklavender}{rgb}{0.48,0.35,0.65}
\definecolor{blue}{rgb}{0.12156862745098039,0.4666666666666667,0.7058823529411764}
\definecolor{paleorange}{rgb}{0.8431372549019608,0.5568627450980392,0.12549019607843137}
\definecolor{defgreen}{rgb}{0.28627450980392155,0.5490196078431373,0.2549019607843137}
\definecolor{jam}{rgb}{0.7686274509803921, 0.34901960784313724, 0.6313725490196078}